\newcommand{\txPlane}{\ensuremath{tx\text{-}\mathsf{Plane}}}
\newcommand{\enc}{\mathsf{Enc}}
\newcommand{\der}{\mathsf{Der}}
\newcommand{\conc}{\hat{\;}}
\newcommand{\wl}{\mathsf{wl}} \newcommand{\lc}{\mathsf{lc}}
\newcommand{\m}{\mathsf{m}} \newcommand{\cen}{\mathsf{cen}}
\newcommand{\coll}{\mathsf{coll}}
\newcommand{\inecoll}{\mathsf{inecoll}}
\newcommand{\convex}{\mathsf{Conv}}
\newcommand{\concave}{\mathsf{Conc}} \newcommand{\Bw}{\mathsf{Bw}}
\newcommand{\mtwp}{\mathsf{CP}}
\newcommand{\llangle}{\mbox{$\langle\hspace*{-3pt}\langle$}}
\newcommand{\rrangle}{\mbox{$\rangle\hspace*{-3pt}\rangle$}}
\renewcommand{\time}{\mathsf{time}} \newcommand{\spc}{\mathsf{space}}
\newcommand{\loc}{\mathsf{loc}} 
\newcommand{\Loc}{\mathsf{Loc}} 
\newcommand{\com}{\succ}
\newcommand{\parrow}{\xrightarrow{\resizebox{!}{3.5pt}{$\circ$}}}
\newcommand{\Ob}{\mathrm{Ob}} \newcommand{\Ph}{\mathrm{Ph}}
\newcommand{\vs}{\vec{s}}
\newcommand{\vp}{\vec{{p}}}
\newcommand{\vpp}{\vec{p}\,}
\newcommand{\vq}{\vec{q}}
\newcommand{\vqq}{\vec{q}\,}
\newcommand{\vr}{\vec{r}}
\newcommand{\vrr}{\vec{r}\,}
\newcommand{\vx}{\vec{x}}
\newcommand{\vxx}{\vec{x}\,}
\newcommand{\vy}{\vec{y}}
\newcommand{\vyy}{\vec{y}\,}
\newcommand{\va}{\vec{a}}
\newcommand{\vc}{\vec{c}}
\newcommand{\vb}{\vec{b}}
\newcommand{\vo}{\vec{o}}
\newcommand{\voo}{\vec{o}\,}
\newcommand{\vz}{\vec{z}}
\newcommand{\vzz}{\vec{z}\,}
\newcommand{\vehm}{1^{h}_m}
\newcommand{\vekm}{1^{k}_m}
\newcommand{\vet}{\vec{\mathsf{1}}_t}
\newcommand{\vex}{\vec{\mathsf{1}}_x}
\newcommand{\vey}{\vec{\mathsf{1}}_y}
\newcommand{\vez}{\vec{\mathsf{1}}_z}
\newcommand{\vei}{\vec{\mathsf{1}}_i}
\newcommand{\ve}{\vec{\mathsf{1}}_1}
\renewcommand{\vee}{\vec{\mathsf{1}}_2}
\newcommand{\veee}{\vec{\mathsf{1}}_3}
\newcommand{\veeee}{\vec{\mathsf{1}}_4}
\newcommand{\bv}{\mathbf{v}}
\newcommand{\vv}{\vec{v}}
\newcommand{\fvv}{\vec{\mathbf{v}}}
\newcommand{\fvp}{\vec{\mathbf{p}}}
\newcommand{\fvvkm}{\vec{\mathbf{v}}^{\,k}_m}
\newcommand{\fvakm}{\vec{\mathbf{a}}^{\,k}_m}
\newcommand{\fvvh}{\vec{\mathbf{v}}^{\,h}}
\newcommand{\fvah}{\vec{\mathbf{a}}^{\,h}}
\newcommand{\Q}{\mathrm{Q}}
\newcommand{\B}{\mathrm{B}}
\newcommand{\W}{\mathrm{W}}
\newcommand{\M}{\mathrm{M}}
\newcommand{\IB}{\mathrm{IB}}
\newcommand{\DB}{\mathrm{DB}}
\newcommand{\IOb}{\mathrm{IOb}}
\renewcommand{\time}{\mathsf{time}}
\newcommand{\dist}{\mathsf{dist}}
\newcommand{\leteq}{\mbox{$:=$}}
\newcommand{\mort}{\bot_\mu}
\newcommand{\up}{\uparrow}
\newcommand{\upp}{\uparrow\hspace*{-1pt}\uparrow\!}
\newcommand{\scom}{\succ\hspace{-9pt}\prec} 
\newcommand{\pheq}{\,\lambda\,}
\newcommand{\seq}{\,\sigma\,}
\newcommand{\teq}{\,\tau\,}
\newcommand{\simrad}{\thicksim^{rad}}
\newcommand{\simph}{\thicksim^{ph}}
\newcommand{\simmu}{\thicksim^\mu}
\newcommand{\then}{\enskip \rightarrow\ }
\newcommand{\Then}{\enskip \Longrightarrow\ }
\renewcommand{\iff}{\enskip \leftrightarrow\ }
\newcommand{\Iff}{\enskip \Longleftrightarrow\ }
\newcommand{\rship}{\mbox{$>\hspace{-6pt}\big|$}b,k,c\mbox{$\big>_{\!rad}$}}
\newcommand{\mship}{\mbox{$>\hspace{-6pt}\big|$}b,k,c \mbox{$\big>_{\!\!\mu}$}}
\newcommand{\ship}{\mbox{$>\hspace{-6pt}\big|$}b,k,c \big>}
\newcommand{\lland}{\;\land\;}
\newcommand{\llor}{\;\lor\;}
\newcommand{\setclose}{\}}
\newcommand{\Setclose}{\,\right\}}
\newcommand{\setmid}{\::\:}
\newcommand{\setopen}{\{}
\newcommand{\Setopen}{\left\{\,}
\newcommand{\phsum}{\rightthreetimes}
\newcommand{\dom}{Dom\,}
\renewcommand{\d}{\mathit{d}}
\newcommand{\ran}{Ran\,}
\newcommand{\R}{\mathbb{R}}
\newcommand{\N}{\mathbb{N}}
\newcommand{\Z}{\mathbb{Z}}
\renewcommand{\sup}{sup\,}
\renewcommand{\L}{\ensuremath{\mathcal{L}}}
\newcommand{\LL}{\ensuremath{\textsc{l}}}
\definecolor{thmcolor}{rgb}{0,0,.4} 
\definecolor{remarkcolor}{rgb}{0,.2,0} 
\definecolor{proofcolor}{rgb}{.4,0,0} 
\definecolor{quecolor}{rgb}{.2,.2,0} 
\definecolor{axcolor}{rgb}{.23,0,.23}
\definecolor{axbgcolor}{rgb}{1,.9,1} 
\definecolor{defbgcolor}{rgb}{1,1,0.8} 
\definecolor{thmbgcolor}{rgb}{0.9,0.9,1} 
\definecolor{rmbgcolor}{rgb}{0.9,1,0.9} 
\definecolor{proofbgcolor}{rgb}{1,0.9,0.9}
\renewcommand{\qedsymbol}{\colorbox{proofbgcolor}{\textcolor{proofcolor}{$\blacksquare$}}} 
\newcommand{\ax}[1]{\textcolor{axcolor}{\ensuremath{\mathsf{#1}}}} 
\newcommand{\Ax}[1]{\textcolor{axcolor}{\colorbox{axbgcolor}{\ensuremath{\mathsf{#1}}}}} 
\newcommand{\df}[1]{{\bf #1}}
\newcommand{\Df}[1]{\setlength{\fboxsep}{2pt}\colorbox{defbgcolor}{\ensuremath{#1}}\setlength{\fboxsep}{3pt}} 
\newcommand{\Dff}[1]{\setlength{\fboxsep}{0pt}\colorbox{defbgcolor}{\ensuremath{#1}}\setlength{\fboxsep}{3pt}}
\theoremstyle{definition} \newtheorem{thm}{\colorbox{thmbgcolor}{\textcolor{thmcolor}{Theorem}}}[section] 
\theoremstyle{definition} \newtheorem{cor}[thm]{\colorbox{thmbgcolor}{\textcolor{thmcolor}{Corollary}}} 
\theoremstyle{definition} \newtheorem{lem}[thm]{\colorbox{thmbgcolor}{\textcolor{thmcolor}{Lemma}}}
\theoremstyle{definition} \newtheorem{prop}[thm]{\colorbox{thmbgcolor}{\textcolor{thmcolor}{Proposition}}}
\theoremstyle{remark} \newtheorem{conv}[thm]{\colorbox{rmbgcolor}{\sc\textcolor{remarkcolor}{Convention}}} 
\theoremstyle{remark} 
\theoremstyle{definition} \newtheorem{que}[thm]{\colorbox{rmbgcolor}{\textcolor{quecolor}{Question}}} 
\theoremstyle{definition} \newtheorem{rem}[thm]{\colorbox{rmbgcolor}{\textcolor{remarkcolor}{Remark}}}
\theoremstyle{definition} \newtheorem{example}[thm]{\colorbox{rmbgcolor}{\textcolor{remarkcolor}{Example}}}
\begin{document}

\pagestyle{plain}
\begin{titlepage}


\centerline{\Large {PhD Thesis}}

\vskip42pt

\centerline{\Large \textbf{First-Order Logic Investigation of Relativity Theory}} 
\vskip4pt
\centerline{\Large \textbf{with an Emphasis on Accelerated Observers}}

\vskip42pt
\centerline{\Large \bf Gergely Sz\'ekely}

\vskip52pt

\centerline{
\begin{tabular}{ll}
\large Advisers:&  \Large Hajnal  Andr{\'e}ka,\\ &\large  head of department, D.Sc.\\ & \\
       &    \Large Judit X. Madar{\'a}sz,\\& \large research fellow, Ph.D.
\end{tabular}}

\vskip42pt

\centerline{\Large Mathematics Doctoral School}
\centerline{\Large Pure Mathematics Program}
\vskip12pt
\centerline{\large School Director: \Large Prof.\ \Large Mikl{\'o}s Laczkovich}
\centerline{\large Program Director: \Large Prof.\ \Large Andr{\'a}s Sz{\H u}cs}


\vfill 
\begin{figure}[h!btp]
\small
\begin{center}
\includegraphics[scale=0.50]{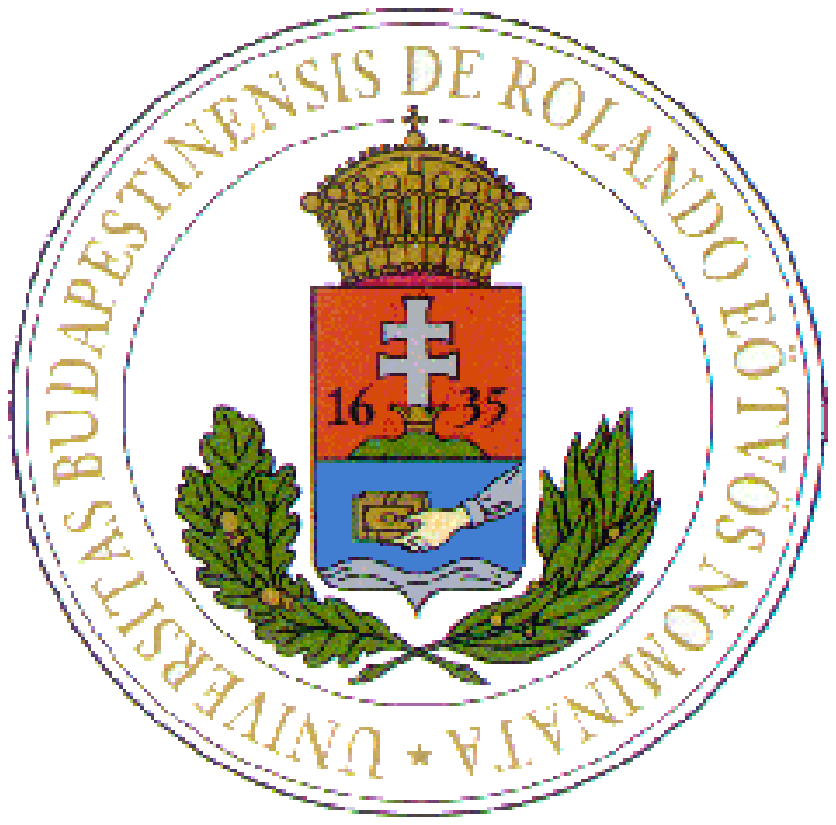}
\end{center}
\end{figure}

\centerline{E{\"o}tv{\"o}s Lor{\'a}nd University}
\centerline{Faculty of Sciences, Institute of Mathematics}
\centerline{2009}

\end{titlepage}

\setcounter{page}{1}
\tableofcontents

\chapter{Introduction}

This work is a continuation of the works by Andr{\'e}ka, Madar{\'a}sz,
N{\'e}meti and their coauthors, e.g., \cite{AAMN}, \cite{pezsgo},
\cite{AMNsamples}, \cite{logst}, \cite{Mphd}.  Our research is
directly related to Hilbert's sixth problem of axiomatization of
physics. Moreover, it goes beyond this problem since its general aim
is not only to axiomatize physical theories but to investigate the
relationship between basic assumptions (axioms) and predictions
(theorems). Another general aim of ours is to provide a foundation for
physics similar to that of mathematics.

For good reasons, the foundation of mathematics was performed strictly
within first-order logic (FOL).\index{FOL} A reason for this fact is that
staying within FOL helps to avoid tacit assumptions. Another reason is
that FOL has a complete inference system while second-order logic (and
thus any higher-order logic) cannot have one, see, e.g., \cite[\S
  IX. 1.6]{ETF}. For further reasons for staying within FOL, see,
e.g., Chap.~\ref{chp-whyfol} and \cite{ax}, \cite[\S Appendix: Why
  FOL?]{pezsgo}, \cite{vaananen}, \cite{wolenski}.

Why is it useful to apply the axiomatic method to relativity theory?
For one thing, this method makes it possible for us to understand the
role of any particular axiom. We can check what happens to our theory
if we drop, weaken or replace an axiom. For instance, it has been
shown by this method that the impossibility of faster than light
motion is not independent from other assumptions of special
relativity, see \cite[\S 3.4]{pezsgo}, \cite{AMNsamples}.  More
boldly: it is superfluous as an axiom because it is provable as a
theorem from much simpler and more convincing basic assumptions.  The
linearity of the transformations between observers (reference frames)
can also be proven from some plausible assumptions, therefore it need
not be assumed as an axiom, see Thm.~\ref{thm-poi} and \cite{pezsgo},
\cite{AMNsamples}. Getting rid of unnecessary axioms of a physical
theory is important because we do not know whether an axiom is true or
not, we just assume so. We can only be sure of experimental facts but
they typically correspond not to axioms but to (preferably
existentially quantified) intended theorems.

Not only can we get rid of superfluous assumptions by applying the
axiomatic method, but we can discover new, interesting and physically
relevant theories.  That happened in the case of the axiom of
parallels in Euclid's geometry; and this kind of investigation led to the
discovery of hyperbolic geometry. Our FOL theory of
accelerated observers ($\ax{AccRel}$), which nicely fills the gap
between special and general relativity theories, is also a good
example of such a theory.

\label{why-questions}
Moreover, if we have an axiom system, we can ask which axioms are
responsible for a certain prediction of our theory. This kind of
reverse thinking helps to answer the why-type questions of
relativity. For example, we can take the twin paradox theorem and
check which axiom of special relativity was and which one was not
needed to derive it. The weaker an axiom system is, the better answer
it offers to the question: ``Why is the twin paradox true?''. The twin
paradox is investigated in this manner in Chap.~\ref{chp-twp}, while
its inertial approximation (called the clock paradox) in
Chap.~\ref{chp-cp}.  For details on answering why-type questions of
relativity by the methodology of the present work, see \cite{wqp}. We
hope that we have given good reasons why we use the axiomatic method
in our research into spacetime theories. For more details and further
reasons, see, e.g., Guts~\cite{guts}, Schutz~\cite{schutz},
Suppes~\cite{suppes}.

This work is structured in the following way: in Chap.~\ref{chp-frame}
we introduce our FOL frame and our basic notation; then, in
Chap.~\ref{chp-sr}, we recall a FOL axiomatization of special
relativity by our research group. Based on this axiomatization first
we investigate the logical connection between the clock paradox
theorem and the axiom system in Chap.~\ref{chp-cp}. First we give a
geometrical characterization theorem for the clock paradox, see
Thm.~\ref{thm-clp}; then we prove some surprising consequences for
both Newtonian and relativistic kinematics. Thm.~\ref{thm-simdist}
answers Question 4.2.17 of Andr\'eka--Madar\'asz--N\'emeti
\cite{pezsgo}. 

In Chap.~\ref{chp-dyn} we extend our geometrical
approach to relativistic dynamics and investigate the relations
between our purely geometrical key axioms of dynamics and the
conservation postulates of the standard approaches. For example, we
show that the conservation postulates are not needed to prove the
relativistic mass increase theorem $m_0=\sqrt{1-v^2/c^2}\cdot m$,
which is the first step to capture Einstein's insight $E=mc^2$. 

In
Chap.~\ref{chp-accrel} we extend the theory introduced in
Chap.~\ref{chp-sr} to accelerated observers by introducing our
aforementioned theory $\ax{AccRel}$, which is the main subject of this
thesis. In Chap.~\ref{chp-twp} we investigate the twin paradox within
\ax{AccRel}; we show that a nontrivial assumption is required if we
want the twin paradox to be a consequence of our theory
$\ax{AccRel}$. 
In Chap.~\ref{chp-grav} we prove two formulations of
the gravitational time dilation from a streamlined and small set of axioms
(\ax{AccRel}), by using Einstein's
equivalence principle.  

In Chap.~\ref{chp-gr} we ``derive'' a
FOL axiom system of general relativity from our theory
\ax{AccRel} in one natural step. The technical parts of the proofs and the development of
the necessary tools are presented in Chap.~\ref{chp-a}. And 
in
Chap.~\ref{chp-whyfol} we go into the details of the reasons for 
choosing FOL in our investigation.

\begin{conv}
Throughout this work, there appear ``highlighted'' statements, such as
\ax{AxCenter^+} in Chap.~\ref{chp-dyn}, which associate the name
\ax{AxCenter^+} with a formula of our FOL language. It
is important to note that these formulas are not automatically
elevated to the rank of axiom. Instead, they serve as potential axioms
or even as potential statements to appear in theorems, hence they are
nothing more than formulas distinguished in our language.
\end{conv}

We try to be as self-contained as possible. First occurrences of
concepts used in this work are set in boldface to make them easier to
find. We also use 
colored text
and boxes to help the reader
to find the axioms, notations, etc. Throughout this work,
if-and-only-if is abbreviated as \df{iff}.\index{iff} We hope that the
Index at the end of this thesis also helps find the individual
definitions and notations introduced.

\section*{ACKNOWLEDGMENTS}
I wish to express my heartfelt thanks to my advisers Hajnal Andr\'eka,
Judit X.\ Madar\'asz and Istv\'an N\'emeti for the invaluable
inspiration and guidance I received from them for my work. I am also
grateful to Mike Stannett for his many helpful comments and
suggestions. My thanks also go to Zal\'an Gyenis, Ram\'on Horv\'ath,
Victor Pambuccian and Adrian Sfarti for our interesting discussions on
the subject. This research is supported by Hungarian National Foundation for Scientific Research
grants No T73601.

\begin{figure}[htp]
\small
\begin{center}
\psfrag{1}[t][t]{Chap.~1} 
\psfrag{2}[t][t]{Chap.~2} 
\psfrag{3}[t][t]{Chap.~3} 
\psfrag{4}[t][t]{Chap.~4} 
\psfrag{5}[t][t]{Chap.~5}
\psfrag{6}[t][t]{Chap.~6} 
\psfrag{7}[t][t]{Chap.~7} 
\psfrag{8}[t][t]{Chap.~9} 
\psfrag{9}[t][t]{Chap.~8} 
\psfrag{10}[t][t]{Chap.~10} 
\psfrag{11}[t][t]{Chap.~11}
\includegraphics[keepaspectratio, width=0.9\textwidth]{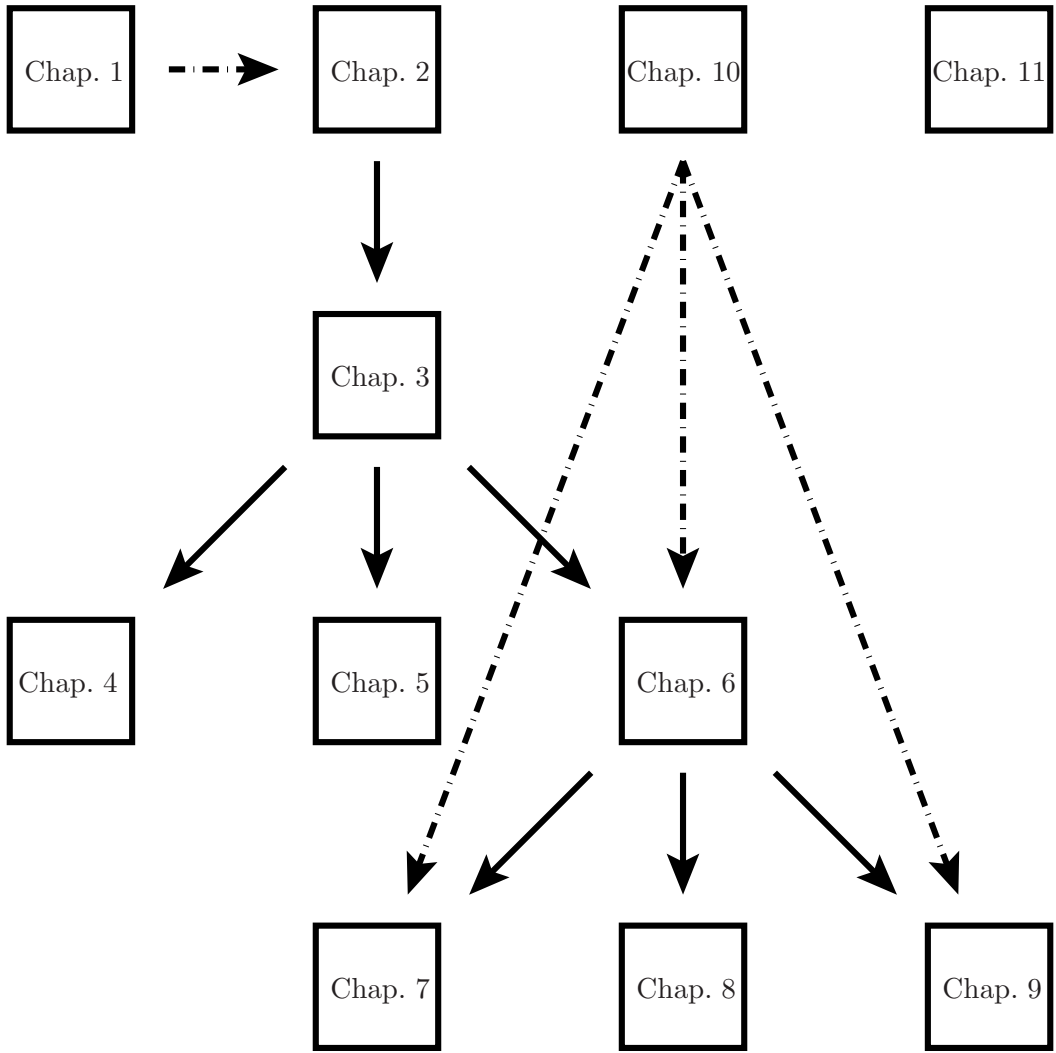}
\caption{\label{fig-secmap} Illustration of the connection of the Chapters}
\end{center}
\end{figure}

\chapter{FOL frame}
\label{chp-frame}

In this chapter we specify the FOL frame
within which we will work.

\section{Frame language}

Here we explain our basic concepts. This thesis mainly deals
with the kinematics of relativity, i.e., with the motion of {\it
  bodies} (things which can move, e.g., test-particles, reference
frames, center-lines). However, we briefly discuss dynamics in
Chap.~\ref{chp-dyn}, and our co-authored papers \cite{dyn},
\cite{dyn-studia} and \cite{msz-wku} are fully devoted to dynamics. We
represent motion as the changing of spatial location in time. Thus we
use reference frames for coordinatizing events (meetings of bodies). {\it
  Quantities} are used for marking time and space.  The structure of
quantities is assumed to be an ordered field in place of the field of
real numbers. For simplicity, we associate reference frames with
certain bodies called {\it observers}. The coordinatization of events
by observers is formulated by means of the \textit{worldview
  relation}. We visualize an observer as ``sitting'' at the origin of
the space part of its reference frame, or equivalently, ``living'' on
the time-axis of the reference frame. We distinguish {\it inertial}
and {\it noninertial} {observers}. For the time being, inertiality is
only a label on observers which will be defined later by our
axioms. We also use another special kind of bodies called {\it
  photons}. We use photons only for labeling light paths, so here we
do not consider any of their quantum dynamical properties.

In an axiomatic approach to relativity, it is more natural to take
bodies instead of events as basic concepts. This choice is not
uncommon in the literature, see, e.g., Ax~\cite{ax},
Benda~\cite{benda}. However, a large variety of choosing basic concepts occur
in the different axiomatizations of special relativity, see, e.g.,
Goldblatt~\cite{goldblatt}, Mundy~\cite{mundy-oaomstg, mundy-tpcomg},
Pambuccian~\cite{pambuccian}, Robb~\cite{Robb}, \cite{Robb2}
Suppes~\cite{suppes-sopitposat}, Schutz~\cite{schutz}, \cite{Schu},
\cite{schutz-aasfmst}.

\newpage
Allowing ordered fields in place of the field of real numbers
increases the flexibility of our theory and minimizes the amount of
our mathematical presuppositions. For further motivation in this
direction, see, e.g., Ax~\cite{ax}. Similar remarks apply to
our flexibility-oriented decisions below, e.g., to
treat the dimension of spacetime as a variable.

There are many reasons for using observers (or coordinate systems, or
reference frames) instead of a single observer-independent spacetime
structure.  One is that it helps to weed unnecessary axioms from our
theories.  Nevertheless, we state and emphasize the logical
equivalence\footnote{By logical equivalence, we mean definitional
  equivalence.} of observer-oriented and observer-independent
approaches to relativity theory, see \cite[\S 3.6]{logst}, \cite[\S
  4.5]{Mphd}.

Keeping the foregoing in mind, let us now set up the FOL
language of our axiom systems. First we fix a natural number
$\Df{d}\ge 2$\index{$d$} for the dimension of spacetime. Our language
contains the following nonlogical symbols:
\begin{itemize}
\item unary relation symbols \Df{\B}\index{$\B$}
  (\df{bodies}\index{bodies}), \Df{\Ob}\index{$\Ob$}
  (\df{observers}\index{observers}), \Df{\IOb}\index{$\IOb$}
  (\df{inertial observers}\index{inertial observers}), \Df{\Ph}
  (\df{photons}\index{photons}) and $\Df{\Q}$\index{$\Q$}
  (\df{quantities}\index{quantities});
\item binary function symbols \Df{+}\index{$+$},
 \Df{\cdot}\index{$\cdot$} and a binary relation symbol
 \Df{<}\index{$<$} (field operations and ordering on
 $\Q$); and
\item a $2+d$-ary relation symbol \Df{\W}\index{$\W$} (\df{worldview
  relation}\index{worldview relation}).
\end{itemize}
$\B(x)$, $\Ob(x)$, $\IOb(x)$, $\Ph(x)$ and $\Q(x)$ are translated as
``$x$ is a body,'' ``$x$ is an observer,'' ``$x$ is an inertial
observer,'' ``$x$ is a photon'' and ``$x$ is a quantity,'' respectively.
We use the worldview relation $\W$ to speak about coordinatization by
translating $\W(x,y,z_1,\ldots, z_d)$ as ``observer $x$ coordinatizes
body $y$ at spacetime location $\langle z_1,\ldots,z_d\rangle$,''
(i.e., at space location $\langle z_2,\ldots,z_d\rangle$ and at instant
$z_1$).

$\B(x)$, $\Ob(x)$, $\IOb(x)$, $\Ph(x)$, $\Q(x)$, $\W(x,y,z_1,\ldots,
z_d)$, $x=y$ and $x<y$ are the so-called {atomic formulas} of our
FOL language, where $x$, $y $, $z_1,\dots,z_d$ can be
arbitrary variables or terms built up from variables by using the
field operations. The \df{formulas}\index{formulas} of our
FOL language are built up from these atomic formulas by using
the logical connectives {\it not} (\Df{\lnot}), {\it and}
(\Df{\land}), {\it or} (\Df{\lor}), {\it implies}
(\Df{\rightarrow}), {\it if-and-only-if}
(\Df{\leftrightarrow}), and the quantifiers {\it exists} $x$
(\Df{\exists x}) and {\it for all $x$} (\Df{\forall x}) for every
variable $x$. To abbreviate formulas of FOL we often
omit parentheses according to the following convention: quantifiers
bind as long as they can, and $\land$ binds stronger than
$\rightarrow$. For example, we write $\forall x\enskip
\varphi\land\psi\rightarrow\exists y\enskip \delta\land\eta$ instead of
$\forall x\big((\varphi\land\psi)\rightarrow\exists
y(\delta\land\eta)\big)$.

 We use the notation $\Df{\Q^n}\leteq \Q\times\ldots\times
\Q$\index{$Q^n$} ($n$-times) for the set of all $n$-tuples of elements
of $\Q$. If $\vpp\in \Q^n$, we assume that $\Df{\vpp}=\langle
p_1,\ldots,p_n\rangle$\index{$\vpp$}, i.e., $p_i\in\Q$ denotes the
$i$-th component of the $n$-tuple $\vpp$. Specially, we write $\W(m,b,\vpp)$ in
place of $\W(m,b,p_1,\dots,p_d)$, and we write $\forall \vpp$ in place
of $\forall p_1\dots\forall p_d$, etc.

To abbreviate formulas, we also use bounded quantifiers in the
following way: $\exists x\; \varphi(x)\land \psi$ and $\forall x\;
\varphi(x)\rightarrow \psi$ are abbreviated to $\exists
x\in\varphi\enskip \psi$ and $\forall x\in\varphi\enskip \psi$,
respectively. For example, to formulate that every observer observes a
body somewhere, we write
\begin{equation*}\forall m\in \Ob\;\exists
b\in\B\;\exists\vp\in\Q^d\quad W(m,b,\vpp)
\end{equation*}
instead of 
\begin{equation*}\forall
m\; \Ob(m)\rightarrow \exists b\;\B(b)\land \exists \vp \enskip 
 \Q(p_1)\land\ldots\land\Q(p_d)\land W(m,b,\vpp).
\end{equation*}

We use FOL set theory as a metatheory to speak about model
theoretical concepts, such as models, validity, etc.

The \df{models}\index{models} of this language are of the form
\begin{equation*}
\Df{\mathfrak{M}} = \langle U; \B_\mathfrak{M}, \Ob_\mathfrak{M},
\IOb_\mathfrak{M}, \Ph_\mathfrak{M}, \Q_\mathfrak{M}, +_\mathfrak{M},
\cdot_\mathfrak{M}, <_\mathfrak{M},\W_\mathfrak{M}\rangle,
\end{equation*}
where $U$ is a nonempty set, and $\B_\mathfrak{M}$,
$\Ob_\mathfrak{M}$, $\IOb_\mathfrak{M}$, $\Ph_\mathfrak{M}$ and
$\Q_\mathfrak{M}$ are unary relations on $U$, etc. Formulas are
interpreted in $\mathfrak{M}$ in the usual way.

Let $\Sigma$ and $\Gamma$ be sets of formulas, and let $\varphi$ and
$\psi$ be formulas of our language. Then $\Sigma$ \df{logically
  implies}\index{logically implies} $\varphi$, in symbols
$\Sigma\Df{\models}\varphi$\index{$\models$}, iff $\varphi$ is true in
every model of $\Sigma$, (i.e., $\varphi$ is a logical consequence of
$\Sigma$). $\Sigma\not\models\varphi$ denotes that there is a model of
$\Sigma$ in which $\varphi$ is not true. To simplify our notations, we
use the plus sign between formulas and sets of formulas in the
following way: $\Df{\Sigma+\Gamma}\leteq \Sigma\cup\Gamma$,
$\Df{\varphi+\psi}\leteq \setopen\varphi,\psi\setclose$ and
$\Df{\Sigma+\varphi}\leteq
\Sigma\cup\setopen\varphi\setclose$.\index{$\Sigma+\Gamma$}\index{$\varphi+\psi$}\index{$\Sigma+\varphi$}
\begin{rem}
Let us note that the fewer axioms $\Sigma$ contains, the stronger
the logical implication  $\Sigma\models\varphi$ is, and similarly the more axioms $\Sigma$
contains the stronger the counterexample $\Sigma\not\models\varphi$ is.
\end{rem}

\begin{rem}
By G\"odel's completeness theorem, all the theorems of this thesis remain
valid if we replace the relation of logical consequence ($\models$) by
the deducibility relation of FOL ($\vdash$).\index{$\vdash$}
\end{rem}

\section{Frame axioms}

Here we introduce two axioms that are going to be treated as part of
our logic frame. Our first axiom expresses very basic assumptions,
such as: both photons and observers are bodies, inertial observers are
also observers, etc.

\begin{description}\index{\ax{AxFrame}}
\item[\Ax{AxFrame}] $\Ob\cup \Ph\subseteq \B$, $\IOb\subseteq \Ob$,
 $\W\subseteq \Ob \times \B\times \Q^d$, $\B\cap \Q=\emptyset$; $+$
 and $\cdot$ are binary operations, and $<$ is a binary relation on
 $\Q$\footnote{These statements can easily be translated to our FOL language, e.g., formula $\forall xy\enskip x<y\rightarrow \Q(x)\land\Q(y)$ means that ``$<$ is a binary relation on $\Q$.''}.
\end{description}
Instead of using this axiom we could also use many-sorted FOL
language as in \cite{pezsgo} and \cite{logst}, and only assume that
$\IOb\subseteq\Ob$.

To be able to add, multiply and compare measurements by observers, we
provide an algebraic structure for the set of quantities with the help
of our next axiom.

\begin{description}
\item[\Ax{AxEOF}]\index{\ax{AxEOF}} The \df{quantity
 part}\index{quantity part} $\left< \Q;+,\cdot, < \right>$ is a
 Euclidean ordered field, i.e., a linearly ordered field in which
 positive elements have square roots.
\end{description}
 For the FOL definition of linearly ordered field, see, e.g.,
\cite{chang-keisler}. We use the usual field operations \Dff{0, 1, -,
   /, \sqrt{\phantom{i}}} and binary relation $\Df{\le}$, definable
 within FOL. We also use the vector-space structure of
  $\Q^n$, i.e., $\Df{\vpp+\vqq, -\vpp, \lambda\cdot\vpp}\in \Q^n$ if
$\vpp,\vqq\in \Q^n$ and $\lambda\in \Q$; and $\Df{\vo}\,\leteq \langle
0,\ldots,0\rangle$\index{$\vo$} denotes the \df{origin}\index{origin}.

\begin{conv}\label{conv-frame}
We treat \ax{AxFrame} and \ax{AxEOF} as part of our logic frame.
Hence without any further mentioning, they are always assumed and will
be part of each axiom system we propose herein, except in some of the
theorems of Chap.~\ref{chp-a}.
\end{conv}

\section{Basic definitions and notations}
Let us collect here the basic definitions and notations that are going
to be used in the following chapters.
\begin{rem}
In our formulas we seek to use only FOL definable concepts. So we will
always warn the reader whenever we introduce a concept which is not
FOL definable in our language.
\end{rem}

The ordered field of {real numbers}, which is not FOL definable in our
language, is denoted by $\Df{\R}$\index{$\R$}.  The
\df{composition}\index{composition} of binary relations $R$ and $S$ is
defined as:
\begin{equation*}\index{$\circ$}
\Df{R \circ S}\leteq \Setopen\langle a,c\rangle: \exists b\enskip \langle a,b\rangle \in R \;\land\; \langle b,c\rangle \in S \Setclose.
\end{equation*}
The \df{domain}\index{domain} and the \df{range}\index{range} of a binary relation $R$ are denoted by 
\begin{equation*}\index{$\dom$}\index{$\ran$}
\Df{\dom R}\leteq \Setopen a \setmid \exists b\enskip \langle
a,b\rangle \in R \Setclose\quad \text{ and }\quad \Df{\ran R}\leteq
\Setopen b \setmid \exists a \enskip \langle a,b\rangle\in R
\Setclose,
\end{equation*}
 respectively.
$R^{-1}$ denotes the \df{inverse} of $R$, i.e., 
\begin{equation*}\index{$R^{-1}$}
\Df{R^{-1}}\leteq \Setopen\langle b,a\rangle \setmid \langle a,b\rangle \in R\Setclose.
\end{equation*}
\begin{rem}
We think of a \df{function}\index{function} as a special binary
relation. Notation $\Df{f:A\rightarrow B}$ denotes that $f$ is a
function from $A$ to $B$, i.e., $\dom f=A$ and $\ran f\subseteq B$.
Note that if $f$ and $g$ are functions, then
\begin{equation*}
\boxed{(f \circ g) (x)=g\big(f(x)\big)}
\end{equation*}
for all $x\in \dom f\circ g$. Notation $\Dff{f:A\parrow
 B}$\index{$\parrow$} denotes that $f$ is a partial function on $A$, i.e., $\dom
f\subseteq A$ and $\ran f\subseteq B$.
\end{rem}
\noindent
The {\bf identity map}\index{identity map} on $H\subseteq \Q^d$ is defined as:
\begin{equation*}\index{$Id_H$}
\Df{Id_H}\leteq \Setopen \langle\vp,\vpp\rangle\in \Q^d\times\Q^d
\setmid \vp\in H\Setclose,
\end{equation*}
and the \df{restriction}\index{restriction} of a function $f$ to a set $H$ is defined as: 
\begin{equation*}\index{$f\big\vert_H$}
\Df{f\big|_H}\leteq\Setopen \langle x,y\rangle \setmid x\in\dom f \cap H \lland f(x)=y\Setclose.
\end{equation*}
The set of positive elements of $\Q$ is denoted by 
\begin{equation*}
\Df{\Q^+}\leteq \setopen x\in \Q:0<x\setclose,\index{$\Q^+$}
\end{equation*} 
and the different kinds of interval between $x,y\in \Q$ are defined as:
\begin{equation*}
 \begin{aligned}
\Df{(x,y)}\leteq& \Setopen t\in\Q\setmid x<t<y \;\text{ or }\; y<t<x\Setclose,\\ 
\Df{[x,y]}\leteq& \Setopen t\in \Q\setmid x\le t \le y \;\text{ or }\; y\le t \le x \Setclose, \\
\Df{[x,y)}\leteq& \Setopen t\in \Q\setmid x\le t<y \;\text{ or }\; y<t\le x     \Setclose, \text{ and} \\ 
\Df{(x,y]}\leteq& \Setopen t\in \Q\setmid x<t\le y \;\text{ or }\; y\le t<x   \Setclose.
 \end{aligned}\index{$(x,y)$}\index{$[x,y]$}\index{$(x,y]$}\index{$[x,y)$}
\end{equation*}
We use this nonstandard but convenient notion of intervals to avoid
inconveniences of empty intervals, such as $(1,0)$ in the standard
notion. By our definition $(1,0)$ is not the empty set but the
interval $(0,1)$.

\noindent
For any $n\ge 1$, the \df{Euclidean length}\index{Euclidean length} of $\vpp\in \Q^n$ is defined as: 
\begin{equation*}
\Df{|\vpp|}\leteq \sqrt{p_1^2+\ldots+p_n^2}.\index{$\vert\vpp\vert$}
\end{equation*}
Hence $|x|$ is the absolute value of $x$ if $x\in\Q$.
The \df{(open) ball}\index{open ball} with center $\vp\in\Q^d$ and radius $r\in\Q^+$ is defined as: 
\begin{equation*}\index{$B_r(\vpp)$}
\Df{B_r(\vpp)}\leteq \Setopen \vqq\in\Q^n\setmid |\vp-\vqq|<r\Setclose, 
\end{equation*}
A set $G\subseteq \Q^n$ is called \df{open} \index{opens set} iff for
all $\vp\in G$ there is an $\varepsilon\in \Q^+$ such that
$B_\varepsilon(\vpp)\subset G$. A set $H\subseteq\Q$ is called \df{connected}\index{connected set} iff
$(x,y)\subseteq H$ for all $x,y\in H$.  We say that a function
$\gamma:\Q\parrow \Q^d$ is a \df{curve}\index{curve} if $\dom\gamma$
is connected and has at least two distinct elements.
The {\bf standard basis vectors}\index{standard basis vectors} of $\Q^d$ are denoted by $\Df{\vei}$, i.e.,
\begin{equation*}
\vei\leteq \langle0,\ldots,\stackrel{i}{1},\ldots,0\rangle\index{$\vei$}
\end{equation*}
for all
$1\le i\le d$. We also use notations $\vet$\index{$\vet$},
$\vex$\index{$\vex$}, $\vey$\index{$\vey$} and $\vez$\index{$\vez$}
instead of $\ve$, $\vee$, $\veee$, and $\veeee$, respectively.
The line passing through $\vp$ and $\vq$ is defined as:
\begin{equation*}\index{$line(\vpp,\vqq)$}
\Df{line(\vpp,\vqq)}\leteq \Setopen \vpp+\lambda(\vpp-\vqq) \setmid\lambda\in\Q\Setclose.
\end{equation*}
Let us note that $line(\vp,\vpp)=\setopen \vpp\setclose$ by this definition.
It is practical to introduce a notation for the $tx$-plane: 
\begin{equation*}\index{$\txPlane$}
\Df{\txPlane}\leteq \Setopen \vpp\in \Q^d\setmid p_3=\ldots=p_d=0\Setclose.
\end{equation*}

\section{Some fundamental concepts related to \\ relativity}

\begin{figure}[htp]
\small
\begin{center}
\psfrag{mm}[bl][bl]{$\wl_m(m)$} \psfrag{mk}[br][br]{$\wl_m(k)$}
\psfrag{mb}[t][t]{$\wl_m(b)$} \psfrag{mph}[tl][tl]{$\wl_m(ph)$}
\psfrag{kk}[br][br]{$\wl_k(k)$} \psfrag{km}[b][bl]{$\wl_k(m)$}
\psfrag{kb}[t][t]{$\wl_k(b)$} \psfrag{kph}[tl][tl]{$\wl_k(ph)$}
\psfrag{p}[r][r]{$\vpp$} \psfrag{k}[bl][bl]{$k$}
\psfrag{b}[tl][tl]{$b$} \psfrag{m}[bl][bl]{$m$}
\psfrag{ph}[tl][tl]{$ph$} \psfrag{evm}[tl][tl]{$ev_m$}
\psfrag{evk}[bl][bl]{$ev_k$} \psfrag{Evm}[r][r]{$Ev_m$}
\psfrag{Evk}[r][r]{$Ev_k$} \psfrag{Ev}[r][r]{$Ev$}
\psfrag{T}[l][l]{$e=ev_m(\vpp)=ev_k(\vqq)$} \psfrag{q}[r][r]{$\vqq$}
\psfrag{Cdk}[l][l]{$Cd_k$} \psfrag{Cdm}[l][l]{$Cd_m$}
\psfrag{Crdk}[l][l]{$\loc_k$} \psfrag{Crdm}[l][l]{$\loc_m$}
\psfrag{t}[lb][lb]{$$} \psfrag{o}[t][t]{$\vo$}
\psfrag{fkm}[t][t]{$w^k_m$} \psfrag*{text1}[cb][cb]{worldview of $k$}
\psfrag*{text2}[cb][cb]{worldview of $m$}
\includegraphics[keepaspectratio, width=\textwidth]{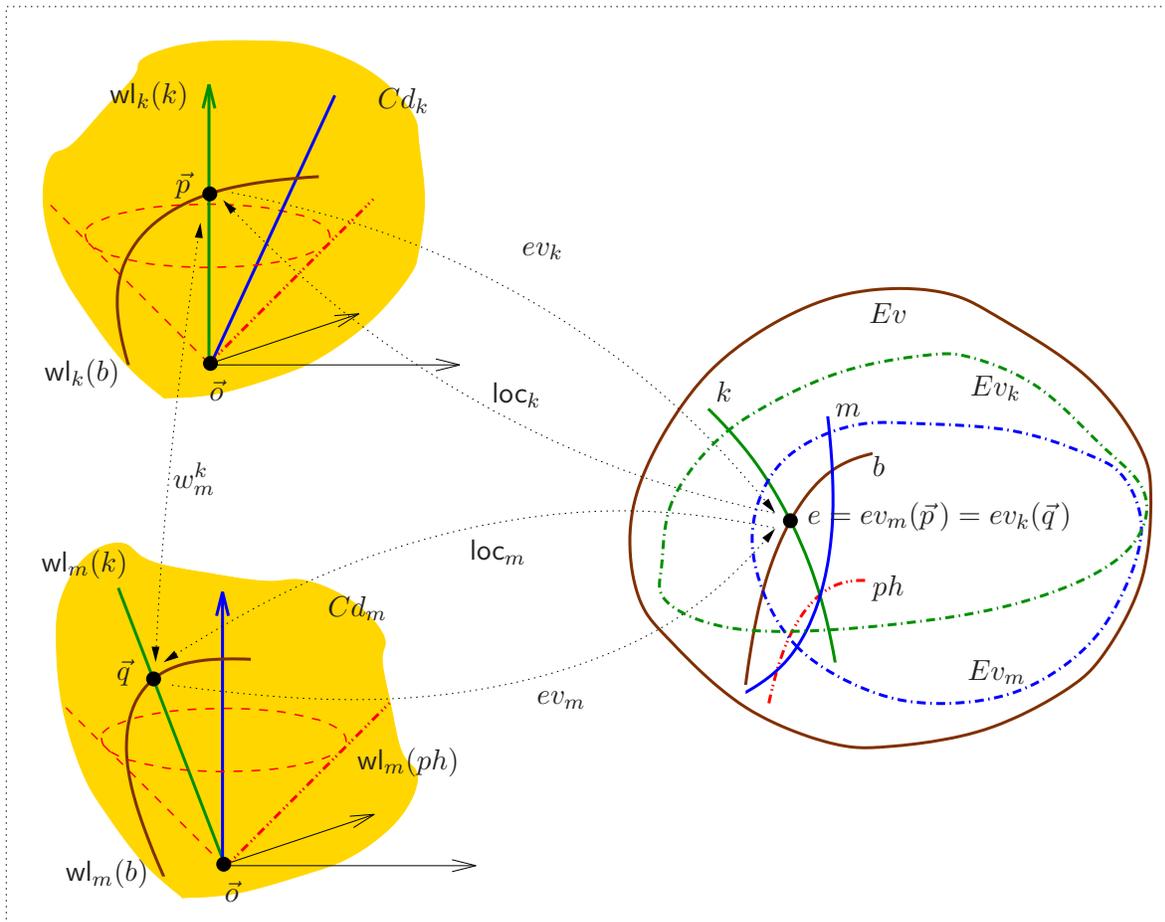}
\caption{\label{fig-fmk} Illustration of the basic definitions}
\end{center}
\end{figure}

Let us gather here some fundamental definitions and notations which
are used in the following chapters.  The set $\Q^d$ is
called the \df{coordinate system} and its elements are referred to as \df{
  coordinate points}\index{coordinate points}\index{coordinate
  system}\index{$\vpp_\sigma$}\index{$p_\tau$}.  We use the notations
\begin{equation*}\index{$\vpp_\sigma$}\index{$p_\tau$}
\Df{\vpp_\sigma}\leteq \langle p_2,\ldots, p_d\rangle \quad \text{ and }\quad \Df{p_\tau}\leteq p_1
\end{equation*} 
for the \df{space component}\index{space component} and the
\df{time component}\index{time component} of $\vpp\in\Q^d$,
respectively.  The \df{event}\index{event} $ev_m(\vpp)$ is defined as
the set of bodies observed by observer $m$ at coordinate point $\vpp$,
i.e.,
\begin{equation*}\index{$ev_m(\vpp)$}
\Df{ev_m(\vpp)}\leteq \Setopen b \setmid \W(m,b,\vpp)\Setclose.
\end{equation*}
The function that maps $\vp$ to $ev_m(\vpp)$ is also denoted by
$ev_m$.  Event $e$ is said to be \df{encountered}\index{encounter} by
observer $k$ if there is a coordinate point $\vq$ such that $k\in
ev_k(\vqq)=e$.  Let $Ev_m$ denote the set of nonempty events
coordinatized by observer $m$, i.e.,
\begin{equation*}\index{$Ev_m$}
\Df{Ev_m}\leteq \Setopen e \setmid \exists \vp\in\Q^d\enskip ev_m(\vpp)=e\neq\emptyset\Setclose,
\end{equation*}
and let $Ev$ denote the set of all observed events, i.e.,
\begin{equation*}\index{$Ev$}
\Df{Ev}\leteq \Setopen e \setmid \exists m\in \Ob\enskip e\in Ev_m
\Setclose.
\end{equation*}

We say that events $e_1$ and $e_2$ are \df{simultaneous}\label{sim}
 for observer $m$, in symbols
$e_1\Df{\!\rule{0pt}{8pt}\!\sim_m\!} e_2$, iff there are coordinate
points $\vp$ and $\vq$ such that $ev_m(\vpp)=e_1$, $ev_m(\vqq)=e_2$,
and $p_\tau=q_\tau$.\index{simultaneous}\index{$\sim$}
\begin{rem} 
It is easy to see that $\sim_m$ is a reflexive and symmetric relation
for every observer $m$; however, it is not an equivalence relation
unless we assume further axioms.
\end{rem}

The \df{coordinate-domain}\index{coordinate-domain} of observer $m$,
in symbols $Cd_m$, is the set of coordinate points where $m$ observes
something (a nonempty event), i.e.,
\begin{equation*}\index{$Cd_m$}
\Df{Cd_m}\leteq \Setopen \vpp \in \Q^d\setmid ev_m(\vpp)\neq \emptyset \Setclose.
\end{equation*}

The \df{worldview transformation}\index{worldview transformation}
between the coordinate-systems of observers $k$ and $m$ is defined as
the set of the pairs of coordinate points in which $k$ and $m$
coordinatize the same nonempty event, i.e.,
\begin{equation*}\index{$w^k_m$}
\Df{w^k_m}\leteq \Setopen\langle \vp,\vqq\rangle\in \Q^d\times
\Q^d\setmid ev_k(\vpp)=ev_m(\vqq)\neq\emptyset\Setclose.
\end{equation*}
Let us note that by this definition worldview transformations are
only binary relations but axiom \ax{AxPh_0}, defined below on p.\pageref{axph0}, turns them into functions, see
Prop.~\ref{prop-sr0}.

\begin{conv}\label{conv-wkm}
Whenever we write $w^k_m(\vpp)$, we mean there is a unique $\vq\in
\Q^d$ such that $\langle \vpp,\vqq\rangle \in w^k_m$, and
$w^k_m(\vpp)$ denotes this unique $\vq$. That is, if we talk about
the value $w^k_m(\vpp)$ of $w^k_m$ at $\vq$, we postulate that it
exists and is unique (by the present convention).
\end{conv}

Since in axiomatic approaches we only assume what is 
explicitly stated by the axioms, we have to prove every other
statement, even the plausible ones. So let us
prove a proposition here about the basic properties of worldview
transformations.

\begin{prop}\label{prop-wkm}
Let $m$ and $k$ be observers. Then 
\begin{enumerate}
\item \label{item-subwkk} $w^k_k\supseteq Id_{Cd_k}$, and
\item \label{item-eqwkk} $w^k_k=Id_{Cd_k}$ iff $k$ does not see any nonempty event twice, i.e., $w^k_k$ is a function. 
\item \label{item-subwkm} $w^k_m\circ w^m_k \supseteq Id_{\dom
 w^k_m}$, and
\item \label{item-eqwkm} $w^k_m\circ w^m_k=Id_{\dom w^k_m}$ iff
 $w^k_m$ is injective.
\item \label{item-subwcomp} $w^k_h\circ w^h_m\supseteq w^k_m$, and 
\item \label{item-eqwcomp} $w^k_h\circ w^h_m=w^k_m$ iff $Ev_k\cap Ev_m\subseteq Ev_h$.
\end{enumerate}
\end{prop}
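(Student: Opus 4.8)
The plan is to prove every item by unfolding the definitions of $w^k_m$, $Id_H$, $Cd_m$ and of the composition $\circ$, so that each clause reduces to a short set-theoretic check; the only things one must watch are that $\circ$ is taken in diagrammatic order ($\langle a,c\rangle\in R\circ S$ means some $b$ has $\langle a,b\rangle\in R$ and $\langle b,c\rangle\in S$) and that $w^k_m$ is here merely a binary relation, not yet a function. First I would record two bookkeeping facts for reuse: (i) $\langle\vpp,\vqq\rangle\in w^k_m$ iff $\langle\vqq,\vpp\rangle\in w^m_k$, since both say $ev_k(\vpp)=ev_m(\vqq)\neq\emptyset$ (so $w^m_k=(w^k_m)^{-1}$); and (ii) $\dom w^k_k=Cd_k$, where $\subseteq$ is immediate from the definition of $w^k_k$ and $\supseteq$ is item (1).

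For (1): if $\vpp\in Cd_k$ then $ev_k(\vpp)\neq\emptyset$, so $\langle\vpp,\vpp\rangle\in w^k_k$ by the definition of $w^k_k$. For (2): given (1), it suffices to observe that the three conditions ``$w^k_k=Id_{Cd_k}$'', ``$w^k_k$ is a function'' and ``$k$ never sees a nonempty event twice'' all amount to the single implication $ev_k(\vpp)=ev_k(\vqq)\neq\emptyset\Rightarrow\vpp=\vqq$; the function case is handled by comparing an arbitrary $\langle\vpp,\vqq\rangle\in w^k_k$ with $\langle\vpp,\vpp\rangle\in w^k_k$, which exists by (1). For (3): if $\vpp\in\dom w^k_m$, pick $\vqq$ with $\langle\vpp,\vqq\rangle\in w^k_m$; then $\langle\vqq,\vpp\rangle\in w^m_k$ by (i), so $\langle\vpp,\vpp\rangle\in w^k_m\circ w^m_k$. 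For (4): expanding the composition via (i) gives that $\langle\vpp,\vrr\rangle\in w^k_m\circ w^m_k$ iff some $\vqq$ has $\langle\vpp,\vqq\rangle,\langle\vrr,\vqq\rangle\in w^k_m$; comparing with $Id_{\dom w^k_m}$ (whose inclusion into the composition is (3)), the equality holds precisely when any two points with a common $w^k_m$-image coincide, i.e. when $w^k_m$ is injective.

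For (5)--(6): expanding once more, $\langle\vpp,\vrr\rangle\in w^k_h\circ w^h_m$ iff there is a $\vqq$ with $ev_k(\vpp)=ev_h(\vqq)=ev_m(\vrr)\neq\emptyset$; transitivity of equality then gives $ev_k(\vpp)=ev_m(\vrr)\neq\emptyset$, i.e. $\langle\vpp,\vrr\rangle\in w^k_m$, which settles the inclusion $w^k_h\circ w^h_m\subseteq w^k_m$ of (5) and shows it holds unconditionally. For (6), the reverse inclusion asks, for each $\langle\vpp,\vrr\rangle\in w^k_m$ with common event $e\leteq ev_k(\vpp)=ev_m(\vrr)\neq\emptyset$, for a $\vqq$ with $ev_h(\vqq)=e$, i.e. for $e\in Ev_h$; since $e\in Ev_k\cap Ev_m$ (witnessed by $\vpp$ and $\vrr$), the hypothesis $Ev_k\cap Ev_m\subseteq Ev_h$ supplies such a $\vqq$, giving $w^k_m\subseteq w^k_h\circ w^h_m$. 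Conversely, if $w^k_h\circ w^h_m=w^k_m$ and $e\in Ev_k\cap Ev_m$, write $e=ev_k(\vpp)=ev_m(\vrr)$, so that $\langle\vpp,\vrr\rangle\in w^k_m=w^k_h\circ w^h_m$; the witnessing $\vqq$ then satisfies $ev_h(\vqq)=e\neq\emptyset$, hence $e\in Ev_h$.

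I do not anticipate a genuine obstacle; the proposition is pure definition-chasing. The only care needed is the bookkeeping in (4) and in the forward direction of (6) — producing the witnessing coordinate points in the correct order — and reading ``injective'' in (4) as ``$w^k_m$ has a single-valued inverse'' rather than as a property of a function, since $w^k_m$ becomes a function only later, via \ax{AxPh_0}.
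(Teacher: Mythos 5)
Your argument is correct and is essentially the route the paper itself takes: the published proof writes out only Items (3) and (4) --- in substance exactly as you give them --- and dismisses (1), (2), (5), (6) as routine definition-checking, which is what your remaining paragraphs supply. Your preliminary observations that $w^m_k=(w^k_m)^{-1}$ and $\dom w^k_k=Cd_k$ are both right and are the only bookkeeping the paper leaves implicit.

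One point deserves to be made explicit rather than passed over. In Item (5) you prove $w^k_h\circ w^h_m\subseteq w^k_m$ and attribute that inclusion to the statement, but the proposition as printed asserts $w^k_h\circ w^h_m\supseteq w^k_m$. Your direction is the correct one: a pair $\langle\vp,\vrr\rangle$ lies in the composite only when some $\vq$ witnesses $ev_k(\vpp)=ev_h(\vqq)=ev_m(\vrr)\neq\emptyset$, which forces $\langle\vp,\vrr\rangle\in w^k_m$; conversely, a pair in $w^k_m$ whose common event does not belong to $Ev_h$ admits no such witness, so the printed $\supseteq$ fails in general --- and that failure is exactly what Item (6) quantifies via the hypothesis $Ev_k\cap Ev_m\subseteq Ev_h$. (Were (5) true as printed, Item (6) would force $Ev_k\cap Ev_m\subseteq Ev_h$ to hold always, which is absurd; compare also the correctly oriented analogue $\lc^h_m\supseteq\lc^h_k\circ w^k_m$ in Prop.~\ref{prop-lc}.) So the inclusion sign in Item (5) is a typo in the statement; your proof establishes the corrected version, but you should say so explicitly instead of silently swapping the symbol.
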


\begin{proof}
Items \eqref{item-subwkk} and \eqref{item-eqwkk} can be easily proved
by checking the respective definitions.

\smallskip
\noindent
To prove Item \eqref{item-subwkm}, let $\vp\in \dom w^k_m$. Then, by
our definitions, there is a $\vq\in\Q^d$ such that
$ev_k(\vpp)=ev_m(\vqq)\neq\emptyset$, i.e., $\langle
\vp,\vqq\rangle\in w^k_m$. Then, by our definition of worldview
transformation, $\langle \vq,\vpp\rangle \in w^m_k$. Consequently,
$\langle \vp,\vpp\rangle\in w^k_m\circ w^m_k$, which was to be proved.

\smallskip
\noindent 
Let us now prove Item \eqref{item-eqwkm}. If
$w^k_m$ is not injective, there are $\vp_1,\vp_2,\vq\in\Q^d$ such that
$\vp_1\neq \vp_2$ and $\langle \vp_1,\vqq\rangle, \langle
\vp_2,\vqq\rangle\in w^k_m$. Then $\langle \vp_1,\vp_2\rangle\in
w^k_m\circ w^m_k$. So $w^k_m$ has to be injective if $w^k_m\circ
w^m_k=Id_{\dom {w^k}_m}$.

\noindent
To prove the converse implication, let $\langle \vp,\vrr\rangle\in w^k_m\circ
w^m_k$. Then there is a $\vq\in\Q^d$ such that $\langle
\vp,\vqq\rangle\in w^k_m$ and $\langle \vq,\vrr\rangle\in w^m_k$. So
$\langle \vr,\vqq\rangle \in w^k_m$, and thus we get that $\vp=\vr$ by
the injectivity of $w^k_m$. So $\langle \vp,\vrr\rangle\in Id_{\dom
 w^k_m}$ as it was required.

\smallskip
\noindent
Items \eqref{item-subwcomp} and \eqref{item-eqwcomp} can be easily
proved by checking the respective definitions.
\end{proof}

The \df{world-line}\index{world-line} of body $b$ according to
observer $m$ is defined as the set of coordinate points where $b$ was
observed by $m$, i.e.,
\begin{equation*}\index{$\wl_m(b)$}
\Df{\wl_m(b)}\leteq \Setopen \vp\in \Q^d \setmid \W(m,b,\vpp)\Setclose.
\end{equation*}
Let us note here that both $\vp\in\wl_k(b)$ and $b\in ev_k(\vpp)$ represent the atomic formula $\W(k,b,\vpp)$, but from slightly different aspects.

The \df{location} \index{location}$\Df{\loc_m(e)}$\index{$\loc$} of
event $e$ according to observer $m$ is defined as $\vp$ if
$ev_m(\vpp)=e$ and there is only one such $\vp\in\Q^d$; otherwise
$\loc_m(e)$ is undefined.  Event $e$ is called
\df{localized}\index{localized event} by observer $m$ if it has a
unique coordinate according to $m$, i.e., $\loc_m(e)$ is defined. To
express that in our formulas, we use $\Df{\Loc_m(e)}$\index{$\Loc_m(e)$} as
an abbreviation for the following formula: 
\begin{equation*}\exists \vp\in\Q^d\quad ev_m(\vpp)=e \lland
\forall \vq\in\Q^d\quad ev_m(\vqq)=e \then \vp=\vqq.
\end{equation*}

\begin{conv}\label{conv-eq}
We use the equation sign ``='' in the sense of existential equality,
i.e., $\alpha=\beta$ denotes that both $\alpha$ and $\beta$ are
defined and they are equal. We also use the same convention for
 other relations (e.g., for ``$<$''). See \cite[Conv.2.3.10,
 p.31]{Mphd} and \cite[Conv.2.3.10, p.61]{pezsgo}.
\end{conv}

\begin{rem}
 Let us note that $\loc_k(e)=\vp$ means that $ev_k(\vpp)=e$ and
 $\vp=\vq$ for all $\vq$ for which $ev_k(\vqq)=e$ by Conv.~\ref{conv-eq}.
\end{rem}

\begin{rem} Let us note that
$\loc_m\big(ev_k(\vpp)\big)$ is defined iff $w^k_m(\vpp)$ is so and in this
 case they are the same, i.e.,
\begin{equation*}
w^k_m(\vpp)=\loc_m\big(ev_k(\vpp)\big).
\end{equation*}
\end{rem}
\noindent
The \df{time of event}\index{time} $e$ according to observer $m$ is
defined as:
\begin{equation*}\index{$\time$}
\Df{\time_m(e)}\leteq \loc_m(e)_\tau
\end{equation*}
if $e$ is localized by $m$; otherwise $\time_m(e)$ is undefined.
The \df{elapsed time}\index{elapsed time} between events $e_1$ and
$e_2$ measured by observer $m$ is defined as:
\begin{equation*}
\Df{\time_m(e_1,e_2)}\leteq |\time_m(e_1)-\time_m(e_2)|
\end{equation*}
if $e_1$ and $e_2$ are localized by $m$;  otherwise
$\time_m(e_1,e_2)$ is undefined. $\time_m(e_1,e_2)$ is
called the \df{proper time}\index{proper time}\label{propertime} measured by $m$
between $e_1$ and $e_2$ if $m\in e_1\cap e_2$. The \df{spatial
 location}\index{spatial location} of event $e$ according to observer
$m$ is defined as:
\begin{equation*}\index{$\spc$}
\Df{\spc_m(e)}\leteq \loc_m(e)_\sigma
\end{equation*}
if $e$ is localized by $m$;  otherwise $\spc_m(e)$ is undefined.
The \df{spatial distance}\index{spatial distance} between events $e_1$
and $e_2$ according to observer $m$ is defined as:
\begin{equation*}\index{$\dist$}
\Df{\dist_m(e_1,e_2)}\leteq |\spc_m(e_1)-\spc_m(e_2)|
\end{equation*}
if $e_1$ and $e_2$ are localized by $m$;  otherwise $\dist_m(e_1,e_2)$ is
undefined.

Spacetime vector $\vr\in\Q^d$ is called \df{spacelike}\index{spacelike vector} iff $|\vr_\sigma|>|r_\tau|$, 
\df{lightlike}\index{lightlike vector} iff $|\vr_\sigma|=|r_\tau|$, and
\df{timelike}\index{timelike vector} iff $|\vr_\sigma|<|r_\tau|$.
Spacetime vectors $\vpp$ and $\vqq$ are called
\df{spacelike-separated}\index{spacelike-separated}, in symbols
$\vpp\Df{\seq}\vqq$\index{$\seq$}, iff $\vp-\vq$ is a spacelike
vector; \df{lightlike-separated}\index{lightlike-separated}, in
symbols $\vpp\Df{\pheq}\vqq$\index{$\pheq$}, iff $\vp-\vq$ is a
lightlike vector; and \df{timelike-separated}\index{timelike-separated},
in symbols $\vpp\Df{\teq}\vqq$\index{$\teq$}, iff $\vp-\vq$ is a
timelike vector. 
Events $e_1$ and $e_2$ which are localized by every {\it inertial} observer are called
spacelike-separated (lightlike-separated; timelike-separated), in
symbols $e_1\seq e_2$ ($e_1\pheq e_2$; $e_1\teq e_2$), iff
$\loc_m(e_1)$ and $\loc_m(e_2)$ are such for every {\it inertial}
observer $m$.
 A curve $\gamma$
is called \df{timelike}\index{timelike curve} iff it is
differentiable (see p.\pageref{p-diff}), and $\gamma'(t)$ is timelike for all $t\in \dom
\gamma$.

Coordinate points $\vpp$ and $\vqq$ are called \df{Minkowski
  orthogonal}\index{Minkowski orthogonal}, in symbols
$\vpp\Df{\mort}\vqq$\index{$\mort$}, iff the following holds:
$p_\tau\cdot q_\tau=p_2\cdot
q_2+\ldots+p_d\cdot q_d.$
The (signed) \df{Minkowski length}\index{Minkowski length} of $\vpp\in \Q^d$ is
\begin{equation*}\index{$\mu(\vpp)$}
\Df{\mu(\vpp)}\leteq \left\{
\begin{array}{rll}
\sqrt{\rule{0pt}{11pt} p_\tau^2-|\vpp_\sigma|^2} & \text{ if}\quad p_\tau^2\ge|\vpp_\sigma|^2, \\
-\sqrt{\rule{0pt}{11pt}|\vpp_\sigma|^2-p_\tau^2} & \text{ in other cases, } 
\end{array}
\right.
\end{equation*}
and the \df{Minkowski distance}\index{Minkowski distance} between
$\vpp$ and $\vqq$ is
$\Df{\mu(\vpp,\vqq)}\leteq\mu(\vpp-\vqq).$\index{$\mu(\vpp,\vqq)$} We
use the signed version of the Minkowski length because it contains two
kinds of information: (i) the length of $\vpp$, and (ii) whether it is
spacelike, lightlike or timelike. A map $f:\Q^d\rightarrow \Q^d$ is
called a \df{Poincar\'e transformation}\index{Poincar\'e
  transformation} iff it is a transformation preserving the Minkowski
distance, i.e., $\mu\big(f(\vpp),f(\vqq)\big)=\mu(\vp,\vqq)$ for all
$\vp,\vq\in\Q^d$.  Like transformations preserving Euclidean distance,
Poincar\'e transformations are also affine ones.  Linear
transformations preserving the Minkowski distance are called
\df{Lorentz transformations}\index{Lorentz transformation}.

\chapter{Special relativity}
\label{chp-sr}

In this chapter we axiomatize special relativity within our FOL
frame. The axiom system \ax{SpecRel}, which we introduce in this
chapter, is a kind of basic axiom system that we will extend and
transform in the forthcoming chapters.  Here we also discuss some
important properties of \ax{SpecRel}, such as its completeness with
respect to Minkowskian geometries over Euclidean ordered fields or the
possible worldview transformations between {\it inertial} observers.

\section{Special relativity in four simple axioms}

Here we formulate four simple and plausible axioms that
capture special relativity.  We seek to formulate easily
understandable axioms in FOL.  We present each axiom at
two levels. First we give an intuitive formulation, then a precise
formalization using our logical notations. We give the pure
FOL version of the first three axioms only. However, all
the axioms in this thesis can also be translated easily into
FOL formulas by inserting the respective FOL
definitions into the formalizations of the axioms.

Let us now formulate our first axiom on observers.  Historically, this
natural axiom goes back to Galileo Galilei or even to d'Oresme of
around 1350, see, e.g., \cite[p.23, \S 5]{AMNsamples}, but it is very
probably a prehistoric assumption, see Rem.~\ref{rem-axself}.  It simply
states that each observer assumes that it rests at the origin of the
space part of its coordinate system.

\begin{description}\index{\ax{AxSelf_0}}\label{axself0}
\item[\Ax{AxSelf_0}] An observer coordinatizes itself at a coordinate
 point iff it is in the observer's coordinate domain and its space component is
 the origin:
\begin{equation*}
\forall m \in \Ob \;\forall \vpp\in \Q^d\quad m\in ev_m(\vpp) \iff
\vp\in Cd_m \lland \vpp_\sigma=\vo.
\end{equation*}
\end{description}
A purely FOL formula expressing \ax{AxSelf_0} is the following:
\begin{multline*}
\forall m\enskip \forall \vp\quad\Ob(m)\land \Q(p_1)\land\ldots\land \Q(p_d)\;
\rightarrow\\
\Big(\W(m,m,\vpp)\iff \exists b\ \B(b)\land\W(m,b,\vpp)\land p_2=0\land\ldots\lland
p_d=0\Big).
\end{multline*}

\noindent
Let us also introduce a strengthened version of axiom \ax{AxSelf_0}:

\begin{description}
\item[\Ax{AxSelf}]\index{\ax{AxSelf}} An {\it inertial} observer coordinatizes itself at
 a coordinate point iff its space component is the origin:
\begin{equation*}
\forall m \in \IOb \enskip \forall \vp\in \Q^d\quad W(m,m,\vpp) \iff \vp_\sigma=\vo.
\end{equation*}
\end{description}

\begin{rem}\label{rem-axself}
At first glance it is not clear why \ax{AxSelf_0} is so natural. As
an explanation, let us consider the following simple example. Let us
imagine that we are watching sunset. What do we see? We do not see
and feel that we are rotating with the Earth but that the Sun is
moving towards the horizon; and according to our (the Earth's)
reference system, we are absolutely right. But we learned at primary
school that ``the Earth rotates and goes around the Sun.'' So why
does not this (i.e., the adoption of the heliocentric system) mean
that \ax{AxSelf_0} and our impression above about the sunset  are
simply wrong? It is because the debate between geocentric and
heliocentric systems was not about \ax{AxSelf_0}, but about how to
choose the best reference frame if we want to study the
motions of planets in our solar system. See \cite{Russell}.%
\footnote{Here we consider only the basic idea of the two systems
  (i.e., whether the Earth or the Sun is stationary) and not their
  details (e.g., epicycles). Of course, Ptolemy's geocentric model was
  wrong in its details since even if we fix the Earth as a reference
  frame, the other planets will go around not the Earth but the Sun.
  It is interesting to note that Tycho Brahe worked out a correct
  geocentric system in which the Sun and the Moon move around the
  Earth and the other planets move around the Sun.} As reference
frames, those of the Earth, the Sun, and even the Moon are equally
good. However, if we would like to calculate the motions of the
planets, the Sun's reference frame is the most convenient.
\end{rem}

Now we formulate our next axiom on the constancy of the speed of
photons. For convenience, we choose $1$ for this speed. This choice
physically means using units of distance compatible with units of
time, such as light-year, light-second, etc.
\begin{description}
\item[\Ax{AxPh}]\index{\ax{AxPh}} For every \textit{inertial}
 observer, there is a photon through two coordinate points $\vpp$ and
 $\vqq$ iff the slope of $\vpp-\vqq$ is $1$:
\begin{equation*}
\forall m\in \IOb\enskip \forall \vpp,\vqq\in
\Q^d\quad |\vpp_\sigma-\vq_\sigma|=|p_\tau-q_\tau| \iff \Ph\cap
ev_m(\vpp)\cap ev_m(\vqq)\ne\emptyset.
\end{equation*}
\end{description}
A purely FOL formula expressing \ax{AxPh} is the following:
\begin{multline*}
\forall m\; \forall \vp\enskip \forall \vq \quad
\IOb(m)\land\Q(p_1)\land \Q(q_1)\land\ldots\land \Q(p_d)\land \Q(q_d)
\; \rightarrow\\
\Big((p_1-q_1)^2=(p_2-q_2)^2+\ldots+(p_d-q_d)^2
\\\iff
 \exists ph\ \Ph(ph)\land \W(m,ph,\vpp)\land\W(m,ph,\vqq)\Big).
\end{multline*}
Axiom \ax{AxPh} is a well-known assumption of Special Relativity, see,
e.g., \cite{logst}, \cite[\S 2.6]{dinverno}. We may weaken
\ax{AxPh} by allowing {\it inertial} observers to measure different but uniform speeds of light.
\begin{description}
\item[\Ax{AxPh_0}]\label{axph0}\index{\ax{AxPh_0}} For every
  \textit{inertial} observer, the speed of light is uniform and positive, and there
  can be a photon at any point and in any direction with this speed:
\begin{multline*}
\forall m\in \IOb\;\exists c_m\in\Q^+\enskip \forall \vpp,\vqq\in
\Q^d\quad |\vpp_\sigma-\vq_\sigma|=c_m \cdot|p_\tau-q_\tau| \\ \iff
\Ph\cap ev_m(\vpp)\cap ev_m(\vqq)\ne\emptyset.
\end{multline*}
\end{description}
The models of our theory \ax{SpecRel} (see p.\pageref{page-sr}) would
change to some extent if we replaced \ax{AxPh} by \ax{AxPh_0};
however, they would not be essentially different. We use \ax{AxPh} for
convenience only. Sfarti~\cite{sfarti} proves that the principle of
relativity and \ax{AxPh_0} imply \ax{AxPh}.

\begin{rem}
For convenience, we quantify over events, too. That does not mean
abandoning our FOL language. It is just simplifying the
formalization of our axioms. Instead of events we could speak about
observers and spacetime locations. For example, instead of $\forall
e\in Ev_m\; \phi$ we could write $\forall \vpp\in Cd_m\;
\phi[e\!\leadsto\! ev_m(\vpp)]$, where none of $p_1\ldots p_d$ occur
free in $\phi$, and $\phi[e\!\leadsto\!  ev_m(\vpp)]$ is the formula
obtained from $\phi$ by substituting $ev_m(\vpp)$ for $e$ in all free
occurrences. Similarly, we can replace $\forall e\in Ev\; \phi$ by $\forall
m\in\Ob\enskip \forall e\in Ev_m\; \phi$.
\end{rem}
\noindent
By our next axiom we assume that events observed by {\it inertial} observers are the same.
\begin{description}
\item[\Ax{AxEv}]\index{\ax{AxEv}} Every \textit{inertial} observer
 coordinatizes the very same set of events:
\begin{equation*}
\forall m,k\in \IOb \quad Ev_m=Ev_k.
\end{equation*}
\end{description}
A purely FOL formula expressing \ax{AxEv} is the following:
\begin{multline*}
\forall m\; \forall k\; \forall \vp\quad
\IOb(m)\land\IOb(k)\land\Q(p_1)\land\ldots\land\Q(p_d)\;
\then
\exists \vq\\
\Q(q_1)\land\ldots\land \Q(q_d)\land 
\Big(\forall b\quad \B(b)\; \rightarrow\; \big(\W(m,b,\vpp)\iff\W(k,b,\vqq)\big)\Big).
\end{multline*}
Let us now prove some consequences of the axioms introduced so far.
\begin{prop}\label{prop-sr0}
Let $h$ be an observer and let $m$ and $k$ be {\it inertial} observers.
Then
\begin{enumerate}
\item \label{item-crd} 
$Cd_m=\Q^d$ and $ev_m$ is injective if \ax{AxPh_0} is assumed.
\item \label{item-fun} $ev_m$ is a bijection from $Cd_m$ to $Ev_m$; $\loc_m$ is a bijection from $Ev_m$ to $Cd_m$; and $w^h_m$ is a function from $\Q^d$ to $\Q^d$ if $ev_m$ is injective on nonempty events.
\item \label{item-fmk} 
$w^k_m$ is a bijection from $\Q^d$ to $\Q^d$ if \ax{AxPh_0} and \ax{AxEv} are assumed.
\end{enumerate}
\end{prop}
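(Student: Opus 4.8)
The plan is to take the three items in order, since each one feeds into the next, and to isolate the single genuinely non-routine step — the injectivity claim in Item~\eqref{item-crd}; everything else is unwinding the definitions of $ev_m$, $\loc_m$, $w^k_m$ and composing bijections.

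\emph{Item~\eqref{item-crd}.} Fix the light speed $c_m\in\Q^+$ given by \ax{AxPh_0}. Applying \ax{AxPh_0} to the pair $\vpp,\vpp$ (for which $|\vpp_\sigma-\vpp_\sigma|=0=c_m\cdot|p_\tau-p_\tau|$) produces a photon in $ev_m(\vpp)$, so $ev_m(\vpp)\neq\emptyset$ for every $\vpp\in\Q^d$, i.e.\ $Cd_m=\Q^d$. For injectivity, suppose $ev_m(\vpp)=ev_m(\vqq)$. By \ax{AxPh_0}, for every $\vrr\in\Q^d$ the condition $|\vrr_\sigma-\vpp_\sigma|=c_m|r_\tau-p_\tau|$ is equivalent to $\Ph\cap ev_m(\vpp)\cap ev_m(\vrr)\neq\emptyset$, hence (using $ev_m(\vpp)=ev_m(\vqq)$) to $\Ph\cap ev_m(\vqq)\cap ev_m(\vrr)\neq\emptyset$, hence to $|\vrr_\sigma-\vqq_\sigma|=c_m|r_\tau-q_\tau|$; so the light cones with vertices $\vpp$ and $\vqq$ (with respect to the speed $c_m$) coincide. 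It then remains to check the purely algebraic fact that such a cone determines its vertex: with $Q(\vz)\leteq|\vz_\sigma|^2-c_m^2z_\tau^2$ for $\vz\in\Q^d$ and $\va\leteq\vpp-\vqq$, one has $Q(\va)=0$ (since $\vqq$ lies on the cone of $\vpp$), and for every lightlike $\vv$ the point $\vpp+\vv$ lies on both cones, so $Q(\va+\vv)=0$; cancelling the vanishing terms $Q(\va)$ and $Q(\vv)$ turns this into a linear condition on $\va$, and feeding in the lightlike vectors $\vet\pm c_m\vei$ for $2\le i\le d$, which span $\Q^d$, forces every coordinate of $\va$ to vanish, so $\vpp=\vqq$.

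\emph{Item~\eqref{item-fun}.} Assume $ev_m$ is injective on nonempty events. Surjectivity of $ev_m\colon Cd_m\to Ev_m$ is immediate from the definition of $Ev_m$, injectivity is the hypothesis (all events $ev_m(\vpp)$ with $\vpp\in Cd_m$ are nonempty), and for $\vpp\in Cd_m$ the value $ev_m(\vpp)$ is a nonempty event, so $ev_m\big|_{Cd_m}$ is a bijection onto $Ev_m$; consequently $\loc_m$ is defined on all of $Ev_m$ and equals $(ev_m\big|_{Cd_m})^{-1}$, hence is a bijection $Ev_m\to Cd_m$. Finally, if $\langle\vpp,\vqq_1\rangle,\langle\vpp,\vqq_2\rangle\in w^h_m$ then $ev_m(\vqq_1)=ev_h(\vpp)=ev_m(\vqq_2)\neq\emptyset$, whence $\vqq_1=\vqq_2$ by the hypothesis, so $w^h_m$ is a function, its domain and range lying in $\Q^d$ by definition. \emph{Item~\eqref{item-fmk}:} assuming \ax{AxPh_0} and \ax{AxEv}, Item~\eqref{item-crd} gives $Cd_m=Cd_k=\Q^d$ with $ev_m,ev_k$ injective (hence injective on nonempty events), so Item~\eqref{item-fun} makes $ev_k\colon\Q^d\to Ev_k$ and $\loc_m\colon Ev_m\to\Q^d$ bijections, while $Ev_k=Ev_m$ by \ax{AxEv}; using the recorded identity $w^k_m(\vpp)=\loc_m\big(ev_k(\vpp)\big)$ (valid here since every $ev_k(\vpp)$ is a nonempty event lying in $Ev_k=Ev_m$, hence localized by $m$) one gets $w^k_m=ev_k\circ\loc_m$ as relations, the composition of two bijections $\Q^d\to Ev_m\to\Q^d$, so $w^k_m$ is a bijection of $\Q^d$. (Alternatively, $w^k_m$ and $w^m_k$ are then total functions on $\Q^d$, and Prop.~\ref{prop-wkm}\eqref{item-eqwkm},\eqref{item-eqwcomp} exhibit them as mutual inverses.) The only point where real work is needed is the light-cone computation in Item~\eqref{item-crd}.
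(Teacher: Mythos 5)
Your proof is correct and follows essentially the same route as the paper: existence of photons at every point gives $Cd_m=\Q^d$, Items (2) and (3) are the same definitional unwinding of $\loc_m=ev_m^{-1}$ and $w^k_m=ev_k\circ\loc_m$ together with $Ev_k=Ev_m$. The only difference is that for injectivity in Item (1) the paper merely asserts that the photon can be chosen to miss $\vqq$, whereas you spell out the underlying fact (equal events force equal light cones, and a cone determines its vertex via the polarization identity on the spanning lightlike vectors $\vet\pm c_m\vei$) — a harmless, indeed welcome, elaboration of the same idea.
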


\begin{proof}   
To prove Item \eqref{item-crd}, let $\vpp\in\Q^d$. Then by
\ax{AxPh_0}, there is a photon $ph$ such that $ph\in ev_m(\vpp)\cap
ev_m(\vp+\langle 1,0,\ldots,0,c_m,0,\ldots,0\rangle)$.  Hence
$ev_m(\vpp)\neq\emptyset$ for all $\vpp\in\Q^d$. So
$Cd_m=\Q^d$. Moreover, if $\vqq \in \Q^d$ and $\vqq\neq\vpp$, then it
is possible to choose this $ph$ such that $ph\not\in ev_m(\vqq)$ also
holds. Thus $ev_m$ is injective.

\smallskip\noindent Item \eqref{item-fun} is clear since, if $ev_m$ is
injective on nonempty events, both $\loc_m\leteq ev_m^{-1}$ and
$w^h_m\leteq ev_h\circ \loc_m$ are functions.

\smallskip\noindent Let us now prove Item \eqref{item-fmk}. By Item
\eqref{item-fun}, we already have that $ev_k$ is a bijection from
$Cd_k$ to $Ev_k$, and $\loc_m$ is a bijection from $Ev_m$ to $Cd_m$.
By $\ax{AxEv}$, $Ev_k=Ev_m$. Thus $w^k_m=ev_k\circ \loc_m$ is a
bijection from $Cd_k$ to $Cd_m$. But by Item \eqref{item-crd}, we
also have that $Cd_k=Cd_m=\Q^d$. Hence $w^k_m$ is a bijection from
$\Q^d$ to $\Q^d$.
\end{proof}

Let us now introduce a symmetry axiom called the symmetric distance
axiom, by which we assume that \textit{inertial} observers use the
same units of measurement.

\begin{description}
\item[\Ax{AxSymDist}]\index{\ax{AxSymDist}} \textit{Inertial}
  observers $m$ and $k$ agree as to the spatial distance between
  events $e_1$ and $e_2$ if they are simultaneous for both of them:
\begin{multline*}
\forall m,k\in \IOb\enskip \forall e_1,e_2 \in Ev_m\cap Ev_k \quad\\
e_1\sim_m e_2\lland e_1\sim_k e_2
\then\dist_m(e_1,e_2)=\dist_k(e_1,e_2).
\end{multline*}
\end{description}

\noindent
Let us introduce the following axiom system:
\begin{equation*}\label{page-sr}\index{\ax{SpecRel}}
\boxed{\ax{SpecRel}\leteq \Setopen \ax{AxSelf_0}, \ax{AxPh}, \ax{AxEv},\ax{AxSymDist} \Setclose}
\end{equation*}
Now we have a FOL theory of Special Relativity for each
natural number $d\ge2$.

Our symmetry axiom \ax{AxSymDist} has many equivalent versions, see
\cite[\S 2.8, \S 3.9, \S 4.2]{pezsgo}. Let us introduce one of them
here.
\begin{description}
\item[\Ax{AxSymTime}]\index{\ax{AxSymTime}} Any two \textit{inertial}
  observers see each others' clocks behaving in the same way:
\begin{equation*}
\forall k,m\in \IOb\enskip\forall \lambda\in \Q \quad
\left|w^k_m(\lambda\cdot \vet)_\tau-w^k_m(\voo)_\tau
\right|=\left|w^m_k(\lambda\cdot \vet)_\tau-w^m_k(\voo)_\tau \right|.
\end{equation*}
\end{description}

To prove that \ax{AxSymTime} is equivalent to \ax{AxSymDist}, let us
introduce a version of \ax{SpecRel} without this axiom:
\begin{equation*}\index{\ax{SpecRel_0}}
\boxed{\ax{SpecRel_0}\leteq \Setopen \ax{AxSelf_0}, \ax{AxPh}, \ax{AxEv} \Setclose}
\end{equation*}

\begin{thm}\label{thm-tdeq}
Let $d\ge3$ and assume \ax{AxSpecRel_0}. Then the following three statements are equivalent:
\begin{enumerate}
\item \label{item-symd} \ax{AxSymDist},
\item \label{item-symt} \ax{AxSymTime} and
\item \label{item-poi} $\forall k,m\in \IOb\enskip w^k_m$ is a Poincar{\'e} transformation.
\end{enumerate}
\end{thm}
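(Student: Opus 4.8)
The plan is to isolate, in each worldview transformation between inertial observers, a single positive scaling parameter, and then show that each of the two symmetry axioms forces it to equal $1$ — which is precisely the condition for the transformation to be a Poincar\'e transformation. I would prove $(3)\Rightarrow(1)$, $(3)\Rightarrow(2)$ directly from the definition of Poincar\'e transformation, and the two converses through this structural description.

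\textbf{The structural step, which is the main obstacle.} First I would prove that, under $\ax{SpecRel_0}$ with $d\ge3$, every worldview transformation $w^k_m$ with $m,k$ inertial has the form $w^k_m(\vpp)=\lambda_{km}\cdot L_{km}(\vpp)+\vc_{km}$ for some Lorentz transformation $L_{km}$, some $\lambda_{km}\in\Q^+$ and some $\vc_{km}\in\Q^d$. Since $\ax{AxPh}$ is just $\ax{AxPh_0}$ with $c_m=1$, Prop.~\ref{prop-sr0} gives that $w^k_m$ is a bijection of $\Q^d$; applying the biconditional $\ax{AxPh}$ for $m$ and for $k$ together with the injectivity of $ev_m$ and $ev_k$, one sees that $\vpp$ and $\vqq$ are lightlike-separated iff $w^k_m(\vpp)$ and $w^k_m(\vqq)$ are. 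A bijection of $\Q^d$ with this property is, for $d\ge3$, affine and of the stated form; this is the Alexandrov--Zeeman type theorem over Euclidean ordered fields, and it is exactly here that the hypothesis $d\ge3$ enters. I expect this step to be the hard part; I would either reproduce that argument or invoke the linearity and light-cone-preservation results of \cite{pezsgo},\cite{AMNsamples} (cf.\ Thm.~\ref{thm-poi}). Granting it, uniqueness of the decomposition together with $w^m_k=(w^k_m)^{-1}$ yields $\lambda_{mk}=1/\lambda_{km}$ and $L_{mk}=L_{km}^{-1}$, and $w^k_m$ is a Poincar\'e transformation iff $\lambda_{km}=1$; so it remains only to link $\lambda_{km}$ to the two axioms.

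\textbf{From (3) to (1) and (2).} Assume every $w^k_m$ ($m,k$ inertial) is a Poincar\'e transformation, and write $w^k_m=L_{km}+\vc_{km}$ with $L_{km}$ Lorentz. For \ax{AxSymDist}: if $e_1,e_2\in Ev_m\cap Ev_k$ (hence localized by both, by injectivity of $ev_m$ and $ev_k$) satisfy $e_1\sim_m e_2$ and $e_1\sim_k e_2$, then, putting $\vpp_i\leteq\loc_k(e_i)$ so that $\loc_m(e_i)=w^k_m(\vpp_i)$, affinity and Minkowski-preservation give $\mu\big(\loc_m(e_1)-\loc_m(e_2)\big)=\mu(\vpp_1-\vpp_2)$; simultaneity for $k$ makes $\vpp_1-\vpp_2$ purely spatial, so this common value equals $-\dist_k(e_1,e_2)$, while simultaneity for $m$ makes it $-\dist_m(e_1,e_2)$, whence the two distances coincide. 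For \ax{AxSymTime}: from $w^k_m(\lambda\vet)-w^k_m(\vo)=\lambda\,L_{km}(\vet)$ we get $|w^k_m(\lambda\vet)_\tau-w^k_m(\vo)_\tau|=|\lambda|\,|L_{km}(\vet)_\tau|$, and symmetrically the other side equals $|\lambda|\,|L_{km}^{-1}(\vet)_\tau|$; since a Lorentz transformation is an isometry of the Minkowski bilinear form $B$ (for which $B(\vrr,\vet)=r_\tau$), we have $L_{km}(\vet)_\tau=B(L_{km}\vet,\vet)=B(\vet,L_{km}^{-1}\vet)=L_{km}^{-1}(\vet)_\tau$, so the two sides agree.

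\textbf{From (2) to (3) and from (1) to (3).} Here the structural step is essential. Taking $\lambda=1$ in \ax{AxSymTime} gives $\lambda_{km}\,|L_{km}(\vet)_\tau|=\lambda_{mk}\,|L_{mk}(\vet)_\tau|=|L_{km}(\vet)_\tau|/\lambda_{km}$, and $|L_{km}(\vet)_\tau|\neq0$ since $L_{km}(\vet)$ is timelike, so $\lambda_{km}^2=1$, i.e.\ $\lambda_{km}=1$. For \ax{AxSymDist}, using $d\ge3$: the linear functional $\vrr\mapsto L_{km}(\vrr)_\tau$ restricted to the $(d-1)$-dimensional subspace of vectors with $r_\tau=0$ has kernel of dimension $\ge d-2\ge1$, so choose $\vrr\neq\vo$ with $r_\tau=0$ and $L_{km}(\vrr)_\tau=0$, and set $e_1\leteq ev_k(\vo)$, $e_2\leteq ev_k(\vrr)$. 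By \ax{AxEv} and $Cd_k=\Q^d$ both events lie in $Ev_m\cap Ev_k$ and are localized by both; $e_1\sim_k e_2$ is immediate from $r_\tau=0$, and $\loc_m(e_1)-\loc_m(e_2)=-\lambda_{km}\,L_{km}(\vrr)$ is purely spatial, so $e_1\sim_m e_2$. Then \ax{AxSymDist} yields $|\vrr_\sigma|=\dist_k(e_1,e_2)=\dist_m(e_1,e_2)=\lambda_{km}\,|L_{km}(\vrr)_\sigma|=\lambda_{km}\,|\vrr_\sigma|$, the last equality because $L_{km}$ preserves the Minkowski length of the spacelike vector $\vrr$; since $|\vrr_\sigma|\neq0$, we conclude $\lambda_{km}=1$. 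In either case every $w^k_m$ between inertial observers is a Poincar\'e transformation, so $(2)\Rightarrow(3)$ and $(1)\Rightarrow(3)$, completing the chain $(1)\Leftrightarrow(3)\Leftrightarrow(2)$.
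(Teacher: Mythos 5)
Your forward directions $(3)\Rightarrow(1)$ and $(3)\Rightarrow(2)$ are fine and match what the paper sketches. The gap is in your structural step, and it propagates into both converse directions. Over an arbitrary Euclidean ordered field, the Alexandrov--Zeeman theorem (and Item (1) of Thm.~\ref{thm-poi}, which is what \ax{SpecRel_0} actually buys you) does \emph{not} say that $w^k_m$ has the affine form $\lambda_{km}\cdot L_{km}(\vpp)+\vc_{km}$; it says $w^k_m$ is a Poincar\'e transformation composed with a dilation \emph{and a field-automorphism-induced map} $\tilde\varphi$. Since Euclidean ordered fields other than $\R$ can have nontrivial order-preserving automorphisms, $\tilde\varphi$ is an additional, genuinely non-$\Q$-linear factor that your decomposition silently drops. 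Consequently your converse arguments only ever constrain the dilation parameter $\lambda_{km}$ and never rule out $\tilde\varphi$, so the conclusion ``$w^k_m$ is Poincar\'e'' does not follow. Worse, the presence of $\tilde\varphi$ contaminates the computations themselves: e.g.\ in the \ax{AxSymTime} direction the right-hand side becomes something like $\varphi^{-1}\!\big(|\lambda|\,|L^{-1}(\vet)_\tau|/\lambda_{km}\big)$ rather than $|\lambda|\,|L^{-1}(\vet)_\tau|/\lambda_{km}$, so even your $\lambda=1$ step no longer yields $\lambda_{km}^2=1$ as written.

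The paper's own (sketched) proof is explicit about this: for $(1)\Rightarrow(3)$ it invokes Item (2) of Thm.~\ref{thm-poi}, whose argument restricts $w^k_m$ to a whole line $l$ orthogonal to the time-axis whose image is also orthogonal to the time-axis; \ax{AxSymDist} then forces $w^k_m\big|_l$ to be distance-preserving, which kills \emph{both} the dilation and the automorphism. Your single-vector version of this (one $\vr$ with $r_\tau=0$ and $L_{km}(\vrr)_\tau=0$) gives only one scalar equation and cannot determine $\varphi$. The fix is to run your argument for all scalar multiples $x\vrr$: this yields $x=\lambda_{km}\varphi(x)$ for all $x\ge 0$ in the relevant range, and since $\varphi(1)=1$ this forces $\lambda_{km}=1$ and $\varphi=Id$ simultaneously. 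An analogous ``vary $\lambda$ over all of $\Q$'' argument is needed in the \ax{AxSymTime} direction (this is exactly what the paper means by proving that ``both the field-automorphism-induced maps and the dilations \dots are the identity map''). With that repair your proof becomes essentially the paper's.
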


\begin{proof}[\colorbox{proofbgcolor}{\textcolor{proofcolor}{On the proof}}]
By using the fact that every Poincar{\'e} transformation is the
composition of a translation, a space-isomorphism and a Lorentz boost,
it is not difficult to prove that Item \eqref{item-poi} implies Items
\eqref{item-symd} and \eqref{item-symt}.
\medskip

\noindent
Item (2) in Thm.~\ref{thm-poi} states that Item \eqref{item-symd}
implies Item \eqref{item-poi}.
\medskip

\noindent
Finally, the implication of Item \eqref{item-poi} by Item
\eqref{item-symt} can be proved analogously to Thm.~\ref{thm-poi},
i.e., by proving that both the field-automorphism-induced maps and the
dilations in the decomposition of $w^k_m$ and $w^m_k$ given by Item
(1) in Thm.~\ref{thm-poi} are the identity map.
\end{proof}

\section{worldview transformations in special relativity}
\label{sec-srwv}

To prove a theorem that characterizes the worldview
transformations between {\it inertial} observers if only \ax{AxPh} and
\ax{AxEv} are assumed, we need one more definition. A map
$\tilde\varphi:\Q^d\rightarrow \Q^d$ is called a 
\df{field-automorphism-induced map}\index{field-automorphism-induced
 map} iff there is an automorphism $\varphi$ of the field
$\langle\Q,\cdot,+\rangle$ such that $\tilde\varphi(\vpp)=\langle
\varphi(p_1),\ldots,\varphi(p_d)\rangle$ for every $\vp\in\Q^d$.
Now we can state the Alexandrov-Zeeman theorem generalized for fields.

\begin{thm}[Alexandrov-Zeeman]
 Let be $F$ a field and $d\ge3$.  Every bijection from $F^d$ to $F^d$
 that transforms lines of slope 1 to lines of slope 1 is a
 Poincar\'e transformation composed with a dilation and a
 field-automorphism-induced map.
\end{thm}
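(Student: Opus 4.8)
The plan is to reduce the general-field Alexandrov–Zeeman statement to its classical real-field content via an intermediate normalization, proceeding in four stages. First I would fix a bijection $f:F^d\to F^d$ carrying slope-$1$ lines to slope-$1$ lines, and normalize it: by composing with a translation we may assume $f(\vo)=\vo$. The light cone through a point $\vpp$ is the union of all slope-$1$ lines through $\vpp$; since $f$ sends slope-$1$ lines to slope-$1$ lines bijectively, $f$ maps the light cone at $\vpp$ onto the light cone at $f(\vpp)$, hence preserves the \emph{lightlike-separated} relation $\pheq$ in both directions (for $d\ge 3$ this uses that two points are lightlike-separated iff they lie on a common slope-$1$ line, and that the cone structure is rigid enough to recover the relation). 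So $f$ is a bijection of $F^d$ preserving $\pheq$ and fixing the origin.

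Second, I would run the standard incidence-geometric core of the Alexandrov–Zeeman argument, which is where the hypothesis $d\ge 3$ is essential. From preservation of $\pheq$ one recovers, purely combinatorially, the family of all lines of $F^d$ (not just slope-$1$ ones): e.g. a line is an intersection of suitable ``cone'' loci, or one characterizes collinearity of three points in terms of patterns of lightlike-separation using auxiliary points — this is exactly the place where $d=2$ fails and $d\ge 3$ is needed, since in two dimensions the light cone degenerates into two lines and there is not enough room to pin down arbitrary directions. Having shown $f$ maps lines to lines and is a bijection, the fundamental theorem of affine geometry (valid over any field, again for dimension $\ge 2$) gives that $f$ is a \emph{semi-affine} map: $f(\vpp)=A\,\tilde\varphi(\vpp)+\vb$ for some field automorphism $\varphi$ of $\langle F,+,\cdot\rangle$, the induced map $\tilde\varphi$, an invertible linear $A$, and translation $\vb$; after the earlier normalization $\vb=\vo$.

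Third, I would peel off $\tilde\varphi$ and analyze the residual linear map. Since $\tilde\varphi$ manifestly sends slope-$1$ lines to slope-$1$ lines (the slope condition $|\,\vpp_\sigma-\vqq_\sigma\,|=|p_\tau-q_\tau|$ is a polynomial identity with integer coefficients, preserved by any field automorphism), the linear map $A=f\circ\tilde\varphi^{-1}$ is a bijection preserving the cone $\{\,\vr : r_\tau^2=|\vr_\sigma|^2\,\}$. A linear bijection preserving this quadratic cone must, up to a nonzero scalar $\lambda$, preserve the Minkowski quadratic form itself — this is a short algebraic computation: the cone determines the form up to scalar because the associated symmetric bilinear form is recovered by polarization from its zero set together with linearity, and over a field where $2\neq 0$ this forces $A^{T}GA=\lambda^2 G$ for the Minkowski Gram matrix $G$. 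Thus $A=\lambda\cdot L$ with $L$ a Lorentz transformation and $\lambda\in F\setminus\{0\}$, i.e. $A$ is a dilation composed with a Lorentz transformation.

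Finally, assembling the pieces, $f = (\text{translation by }\vb)\circ(\text{dilation by }\lambda)\circ(\text{Lorentz }L)\circ\tilde\varphi$, which is a Poincaré transformation composed with a dilation and a field-automorphism-induced map, as claimed. The main obstacle, and the part deserving the most care, is the second stage: showing over an \emph{arbitrary} field $F$ (no order, no completeness, possibly positive characteristic considerations aside) that preservation of the slope-$1$/lightlike structure in dimension $\ge 3$ is rigid enough to recover all lines, so that the fundamental theorem of affine geometry applies — the real-number proofs of Alexandrov and Zeeman lean on order and topology, and one has to replace those with purely incidence-theoretic arguments (or cite the field-generalized version in the literature). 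The order-field case needed for our worldview transformations follows as an immediate corollary, since then every field automorphism of an ordered field fixing the rationals densely is forced to be order-preserving but need not be the identity in general, so the $\tilde\varphi$ term genuinely cannot be dropped without a further axiom.
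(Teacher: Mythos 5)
You should first be aware that the paper does not prove this theorem at all: immediately after the statement it says ``For the proof of this theorem, see, e.g., [vro], [VKK].'' So there is no in-paper argument to compare yours against; what you have written has to be judged against the literature proofs, and your four-stage outline (normalize, recover the lightlike-separation relation from the slope-$1$ line structure, recover \emph{all} lines and invoke the fundamental theorem of affine geometry to get a semi-affine map, then show the linear part preserves the cone and hence the Minkowski form up to a scalar) is indeed the standard route taken there. As a roadmap it is correct.

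As a proof, however, it has two genuine gaps. The first is the one you yourself flag: stage two, recovering arbitrary lines purely incidence-theoretically from the family of slope-$1$ lines over an arbitrary field with $d\ge 3$, is not a routine step one can wave at --- it is essentially the entire content of the theorem, and the real-field proofs of Alexandrov and Zeeman do lean on order and topology at exactly this point. Saying ``one characterizes collinearity in terms of patterns of lightlike-separation using auxiliary points'' names the strategy but does not carry it out, and this is where all the work in [vro] and [VKK] lives. The second gap is in stage three: the claim that a linear bijection preserving the zero set of the Minkowski form must satisfy $A^{T}GA=\lambda^2 G$ ``by polarization from the zero set'' is not a short computation as stated. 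The zero set of a quadratic form does \emph{not} determine the form up to scalar over a general field (an anisotropic form has only the trivial zero), so you must use that the Minkowski form in dimension $d\ge 3$ is isotropic and that $F$ is infinite of characteristic $\ne 2$, and then argue that the cone contains enough linearly independent vectors, and enough sums of pairs of them, to pin down the bilinear form up to scalar. Both points are fixable (and are fixed in the cited references), but as written the proposal defers rather than supplies the two steps on which the theorem actually rests. Finally, your closing remark about automorphisms of ordered fields is slightly off target: for \emph{Euclidean} ordered fields every field automorphism is automatically order-preserving (positives are squares), and the reason $\tilde\varphi$ cannot be dropped is simply that nontrivial automorphisms exist for some such fields; that is the content the paper later kills with \textsf{AxSymDist}, not anything about density of the rationals.
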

For the proof of this theorem, see, e.g., \cite{vro}, \cite{VKK}. From
this theorem we derive the following characterization of worldview
transformations.

\begin{thm}
\label{thm-poi}
Let $d\ge 3$.  Let $m$ and $k$ be {\it inertial} observers.  Then
\begin{enumerate}
\item if \ax{AxPh} and \ax{AxEv} are assumed, $w^k_m$ is a Poincar\'e
  transformation composed with a dilation $D$ and a
  field-automorphism-induced map $\tilde\varphi$;
\item if \ax{AxPh}, \ax{AxEv} and \ax{AxSymDist} are assumed, $w^k_m$
  is a Poincar\'e transformation.
\end{enumerate}
\end{thm}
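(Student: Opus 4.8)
The plan is to obtain both items from the Alexandrov--Zeeman theorem stated above. First note that \ax{AxPh} is exactly \ax{AxPh_0} with $c_m=1$, so \ax{AxPh} implies \ax{AxPh_0}; hence by Prop.~\ref{prop-sr0}\eqref{item-fmk} both $w^k_m$ and $w^m_k$ are bijections of $\Q^d$ onto itself, and by Prop.~\ref{prop-wkm}\eqref{item-eqwkm} each is the inverse of the other.

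For Item~(1) I would check that $w^k_m$ maps each line of slope $1$ onto a line of slope $1$, which is precisely the hypothesis of Alexandrov--Zeeman. So let $\ell$ be a slope-$1$ line in $k$'s coordinate system and take distinct $\vpp_1,\vpp_2\in\ell$. Since $\vpp_1-\vpp_2$ has slope $1$, \ax{AxPh} gives a photon $ph\in ev_k(\vpp_1)\cap ev_k(\vpp_2)$; as $ev_m\big(w^k_m(\vpp_i)\big)=ev_k(\vpp_i)\ni ph$, \ax{AxPh} applied to $m$ shows that $w^k_m(\vpp_1)-w^k_m(\vpp_2)$ again has slope $1$. Thus $w^k_m$ sends lightlike-separated pairs to lightlike-separated pairs; since $w^k_m$ is injective, and a short Cauchy--Schwarz argument (valid over any ordered field) shows that three pairwise lightlike-separated distinct coordinate points are collinear, the image of $\ell$ is contained in a slope-$1$ line. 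Running the same argument for $w^m_k$ upgrades this inclusion to an equality, so $w^k_m$ permutes the slope-$1$ lines and Alexandrov--Zeeman gives the claimed decomposition.

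For Item~(2) I would first rewrite the decomposition obtained in Item~(1) as an affine map whose linear part has the form $\vxx\mapsto\delta\cdot L\big(\tilde\varphi(\vxx)\big)$, where $\delta\in\Q^+$, $L$ is a Lorentz transformation and $\tilde\varphi$ is induced by a field automorphism $\varphi$; this rearrangement --- pushing translations outward, merging the scaling of the dilation with $L$, and commuting $\tilde\varphi$ to the right --- is routine, using that $\varphi$ fixes $\Z$ and, by \ax{AxEOF}, preserves positivity. It then suffices to show that $\delta=1$ and $\varphi=\mathrm{id}$. Since $d\ge 3$, the $(d-1)$-dimensional subspaces $H\leteq\{\,\vxx\in\Q^d : x_\tau=0\,\}$ and $\{\,\vxx\in\Q^d : L(\vxx)\in H\,\}$ of $\Q^d$ intersect in a subspace of dimension at least $d-2\ge1$, so we may fix a nonzero $\vec n\in H$ with $L(\vec n)\in H$ and, using \ax{AxEOF}, rescale it so that $|\vec n|=1$. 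Put $\vv\leteq\tilde\varphi^{-1}(\vec n)$, which is again purely spatial, and for each $a\in\Q^+$ consider the events $e_1\leteq ev_k(\voo)$ and $e_2\leteq ev_k(a\vv)$. By Prop.~\ref{prop-sr0}\eqref{item-crd} they are nonempty and localized by both $m$ and $k$; they are simultaneous for $k$, and since $\loc_m(e_2)-\loc_m(e_1)=w^k_m(a\vv)-w^k_m(\voo)=\delta\,\varphi(a)\,L(\vec n)$ is purely spatial, they are simultaneous for $m$ as well. Hence \ax{AxSymDist} yields $a\,|\vv|=\delta\,\varphi(a)\,|L(\vec n)|$. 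Moreover $L$ preserves the Minkowski length, so $\mu\big(L(\vec n)\big)=\mu(\vec n)=-1$, and as both $\vec n$ and $L(\vec n)$ are purely spatial this forces $|L(\vec n)|=1$. Taking $a=1$ now gives $|\vv|=\delta$, whence $a=\varphi(a)$ for every $a\in\Q^+$ and so $\varphi=\mathrm{id}$; then $\vv=\vec n$, $\delta=|\vv|=1$, and $w^k_m(\vxx)=L(\vxx)+w^k_m(\voo)$ is a Poincar\'e transformation.

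The step I expect to be the most delicate is the bookkeeping in Item~(2): putting the decomposition into the normal form $\delta\cdot L\circ\tilde\varphi$ by commuting a field-automorphism-induced map past Lorentz transformations and scalings, and verifying that the two chosen events really meet both simultaneity hypotheses of \ax{AxSymDist}. The appeal to Alexandrov--Zeeman in Item~(1) and the Cauchy--Schwarz collinearity fact underlying it are the only other non-routine ingredients.
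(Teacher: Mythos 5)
Your proposal follows essentially the same route as the paper's own (sketched) argument: Item~(1) by showing $w^k_m$ is a bijection of $\Q^d$ permuting the slope-$1$ lines and invoking the Alexandrov--Zeeman theorem, and Item~(2) by producing (using $d\ge3$) a line that is horizontal both before and after the transformation and applying \ax{AxSymDist} along it to kill the dilation and the field automorphism. The details you supply — the collinearity of pairwise lightlike-separated points, the order-preservation of $\varphi$ via \ax{AxEOF}, and the simultaneity checks — are correct fillings-in of the steps the paper leaves to the reader.
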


\begin{onproof}
It is not hard to see that \ax{AxPh} and \ax{AxEv} imply that $w^k_m$
is a bijection from $\Q^d$ to $\Q^d$ that preserves lines of slope 1,
see Prop.~\ref{prop-sr0}. Hence Item (1) is a consequence of
the Alexandrov-Zeeman theorem generalized for fields.

Now let us see why Item (2) is true.  By using Item (1), it is easy to
see that there is a line $l$ such that both $l$ and its $w^k_m$ image
are orthogonal to the time-axis.  Thus by \ax{AxSymDist}, $w^k_m$
restricted to $l$ is distance-preserving.  Consequently, both the
dilation $D$ and the field-automorphism-induced map $\tilde\varphi$ in
Item (1) have to be the identity map.  Hence $w^k_m$ is a Poincar\'e
transformation.\hfill\qed
\end{onproof}

Thm.~\ref{thm-poi} shows that \ax{SpecRel} is a good axiom system for
Special Relativity if we restrict our interest to {\it inertial}
motion.  It also implies that the most frequently quoted predictions
of Special Relativity are provable from \ax{SpecRel}:
\begin{itemize}
\item[(i)] ``moving clocks slow down,'' 
\item[(ii)] ``moving meter-rods shrink'' and 
\item[(iii)] ``moving pairs of clocks get out of synchronism.'' 
\end{itemize}
Even if we only assume \ax{AxPh} and \ax{AxEv}, we can prove 
qualitative versions of the predictions above; \ax{AxSymDist} is
needed if we want to prove the quantitative versions, too. And
\ax{AxSelf} is only a simplifying axiom; it makes formulating the above predictions easier.
For more detail. See, e.g., \cite[\S 2.5]{pezsgo}, \cite[\S 1]{AMNsamples}, \cite[\S 2]{logst}.

The following consequence of Thm.~\ref{thm-poi} is the starting point
for building Minkowski geometry, which is the ``geometrization'' of
Special Relativity. It shows how time and space are intertwined in
Special Relativity.

\begin{thm} \label{mink-thm}
Let $d\ge 3$. Assume $\ax{SpecRel}$.
Then
\begin{equation*}
\time_m(e_1,e_2)^2-\dist_m(e_1,e_2)^2 =
\time_k(e_1,e_2)^2-\dist_k(e_1,e_2)^2 
\end{equation*}
\noindent for any {\it inertial} observers
$m$ and $k$ and events $e_1$ and $e_2$ coordinatized by both of them.
\end{thm}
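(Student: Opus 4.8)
The plan is to deduce the identity directly from Thm.~\ref{thm-poi}, which tells us that under \ax{SpecRel} the worldview transformation $w^k_m$ between inertial observers is a Poincar\'e transformation, i.e.\ preserves the (signed) Minkowski distance $\mu$. First I would check that every quantity occurring in the statement is defined. Since \ax{SpecRel} contains \ax{AxPh}, which is just the case $c_m=1$ of \ax{AxPh_0}, Item~\eqref{item-crd} of Prop.~\ref{prop-sr0} applies and gives that $ev_m$ and $ev_k$ are injective; hence any nonempty event, in particular $e_1$ and $e_2$, which by hypothesis are coordinatized by both $m$ and $k$, is localized by $m$ and by $k$. So $\loc_m(e_i)$, $\loc_k(e_i)$, and therefore also $\time_m(e_1,e_2)$, $\dist_m(e_1,e_2)$ and their $k$-counterparts, are all defined. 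Recalling that $w^k_m(\vpp)=\loc_m\big(ev_k(\vpp)\big)$ whenever either side is defined, and that $ev_k\big(\loc_k(e_i)\big)=e_i$, we obtain $w^k_m\big(\loc_k(e_i)\big)=\loc_m(e_i)$ for $i=1,2$.

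Next I would invoke Item~(2) of Thm.~\ref{thm-poi}: $w^k_m$ is a Poincar\'e transformation, so $\mu\big(w^k_m(\vpp),w^k_m(\vqq)\big)=\mu(\vpp,\vqq)$ for all $\vpp,\vqq$; substituting $\vpp=\loc_k(e_1)$, $\vqq=\loc_k(e_2)$ and using the previous step gives
\[\mu\big(\loc_m(e_1),\loc_m(e_2)\big)=\mu\big(\loc_k(e_1),\loc_k(e_2)\big).\]
Now I would translate each side of the asserted identity into a value of $\mu$. A one-line inspection of the definition of $\mu$ shows that for every $\vv\in\Q^d$ one has $v_\tau^2-|\vv_\sigma|^2=\mu(\vv)\cdot|\mu(\vv)|$ (both branches of the definition yield this). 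Applying this with $\vv:=\loc_m(e_1)-\loc_m(e_2)$, and using the linearity of the projections $\vv\mapsto v_\tau$ and $\vv\mapsto\vv_\sigma$ together with $\loc_m(e_i)_\tau=\time_m(e_i)$, $\loc_m(e_i)_\sigma=\spc_m(e_i)$, $\time_m(e_1,e_2)=|\time_m(e_1)-\time_m(e_2)|$ and $\dist_m(e_1,e_2)=|\spc_m(e_1)-\spc_m(e_2)|$, we get
\[\time_m(e_1,e_2)^2-\dist_m(e_1,e_2)^2=\mu\big(\loc_m(e_1),\loc_m(e_2)\big)\cdot\big|\mu\big(\loc_m(e_1),\loc_m(e_2)\big)\big|,\]
and the same computation with $k$ in place of $m$. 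Combining the last three displays yields the theorem.

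I do not expect a genuine obstacle here: once Thm.~\ref{thm-poi} is available, the argument is essentially bookkeeping. The one point that needs a little care is the passage from ``$\mu$ is preserved'' to ``$\time^2-\dist^2$ is preserved'': one must use that $\mu$ is the \emph{signed} Minkowski length, so that the displayed equality of $\mu$-values records not only the magnitude but also the causal character (timelike, lightlike, or spacelike) of $\loc_m(e_1)-\loc_m(e_2)$ versus $\loc_k(e_1)-\loc_k(e_2)$ — and it is precisely this that makes the identity $v_\tau^2-|\vv_\sigma|^2=\mu(\vv)\cdot|\mu(\vv)|$ do the job; an unsigned Minkowski length would only give $|\time^2-\dist^2|$.
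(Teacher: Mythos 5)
Your proposal is correct and follows exactly the route the paper intends: the theorem is stated there as an immediate consequence of Thm.~\ref{thm-poi} with no written-out proof, and your argument supplies precisely the missing bookkeeping (localization via Prop.~\ref{prop-sr0}, the identity $w^k_m(\loc_k(e_i))=\loc_m(e_i)$, preservation of the signed $\mu$, and the observation that $v_\tau^2-|\vv_\sigma|^2=\mu(\vv)\cdot|\mu(\vv)|$). Your closing remark about why the \emph{signed} Minkowski length is essential is accurate and worth keeping.
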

\noindent
Let us finally state a corollary here about the slowing down of moving clocks.
\begin{cor} \label{srslow-cor}
Assume $\ax{SpecRel}$, $d\ge 3$.
Let $m,k\in\IOb$,
$e_1,e_2\in Ev_k$, and assume $k\in e_1\cap e_2$,
$\dist_m(e_1,e_2)\ne 0$.
Then
\begin{equation*}\time_m(e_1,e_2)>\time_k(e_1,e_2).\end{equation*}
\end{cor}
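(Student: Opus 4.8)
The plan is to read this corollary off the Minkowski-type identity of Theorem~\ref{mink-thm} together with the self-coordinatization axiom \ax{AxSelf_0}. First I would check that all the quantities occurring in the statement are actually defined, so that Theorem~\ref{mink-thm} applies: since $e_1,e_2\in Ev_k$ and $m,k$ are \textit{inertial}, \ax{AxEv} gives $e_1,e_2\in Ev_m$ as well; and since \ax{AxPh} implies \ax{AxPh_0}, Prop.~\ref{prop-sr0} tells us that $ev_m$ and $ev_k$ are injective, so $e_1$ and $e_2$ are localized by both $m$ and $k$. Hence $\time_m(e_1,e_2)$, $\dist_m(e_1,e_2)$, $\time_k(e_1,e_2)$ and $\dist_k(e_1,e_2)$ are all defined, and Theorem~\ref{mink-thm} yields
\begin{equation*}
\time_m(e_1,e_2)^2-\dist_m(e_1,e_2)^2=\time_k(e_1,e_2)^2-\dist_k(e_1,e_2)^2 .
\end{equation*}

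The key step is to observe that $\dist_k(e_1,e_2)=0$. Indeed, $k\in e_1\cap e_2$ means $k\in ev_k\big(\loc_k(e_1)\big)$ and $k\in ev_k\big(\loc_k(e_2)\big)$, so \ax{AxSelf_0} forces $\loc_k(e_1)_\sigma=\vo=\loc_k(e_2)_\sigma$; hence $\dist_k(e_1,e_2)=|\spc_k(e_1)-\spc_k(e_2)|=0$. Substituting this into the identity above gives $\time_m(e_1,e_2)^2=\time_k(e_1,e_2)^2+\dist_m(e_1,e_2)^2$. Since $\dist_m(e_1,e_2)\neq 0$ by hypothesis, its square is strictly positive in the ordered field $\Q$, so $\time_m(e_1,e_2)^2>\time_k(e_1,e_2)^2$; as both elapsed times are non-negative (being absolute values), taking square roots yields $\time_m(e_1,e_2)>\time_k(e_1,e_2)$, as required.

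There is essentially no hard part: the corollary is an immediate specialization of Theorem~\ref{mink-thm}. The only points needing a little care are the bookkeeping that guarantees all the $\time$ and $\dist$ values exist (so the existential-equality convention of Conv.~\ref{conv-eq} is met and Theorem~\ref{mink-thm} is genuinely applicable), and the appeal to \ax{AxSelf_0} to see that the two events lie on $k$'s time-axis and therefore have vanishing $k$-spatial distance.
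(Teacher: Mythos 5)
Your proof is correct and follows exactly the route the paper intends: the corollary is stated as an immediate consequence of Thm.~\ref{mink-thm}, and you obtain it by checking localizability via \ax{AxEv} and Prop.~\ref{prop-sr0}, using \ax{AxSelf_0} to get $\dist_k(e_1,e_2)=0$, and then reading off the strict inequality from $\dist_m(e_1,e_2)\ne 0$.
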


In the above corollary, a ``moving clock'' is represented by observer
$k$; the fact that it is moving relative to observer $m$ is expressed
by $\dist_m(e_1,e_2)\ne 0$ and $k\in e_1\cap e_2$; and that $k$'s time
is slowing down relative to $m$'s is expressed by
$\time_m(e_1,e_2)>\time_k(e_1,e_2)$. This ``clock slowing down'' is
only a relative effect, i.e., ``clocks moving relative to $m$ slow
down relative to $m$.'' But this relative effect leads to a new kind
of gravitation-oriented ``absolute slowing down of time'' effect, as
Chap.~\ref{chp-grav} will show.

We can summarize the results of this chapter (that standard special
relativity is provable from \ax{SpecRel}) as a kind of completeness
theorem of \ax{SpecRel} with respect to its ``intended models'':
 
\begin{cor}\label{cor-srcompl}
Assume $d\ge3$. Then \ax{SpecRel} is complete with respect to
Minkowskian geometries over Euclidean ordered fields.
\end{cor}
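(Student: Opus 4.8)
The plan is to read the statement as the conjunction of two halves and prove each: a \emph{soundness} half saying that the Minkowskian geometry canonically read off from any model of \ax{SpecRel} lies in the target class, and a \emph{representation} half saying that every member of the target class is read off from some model of \ax{SpecRel}. Concretely, fix the translation that assigns to a model $\mathfrak{M}$ the structure whose points are $\Q^d$, whose quantity part is the quantity part of $\mathfrak{M}$, and whose Minkowski primitives (the light cone, Minkowski betweenness, Minkowski congruence, Minkowski orthogonality, etc.) are interpreted by their standard first-order definitions from $+,\cdot,<$. The corollary then amounts to: (i) for every $\mathfrak{M}\models\ax{SpecRel}$ this translate is a Minkowskian geometry over a Euclidean ordered field; and (ii) every Minkowskian geometry over a Euclidean ordered field is such a translate. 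Given (i) and (ii), a geometry-language sentence $\sigma$ is provable from \ax{SpecRel} precisely when $\sigma$ holds in every Minkowskian geometry over a Euclidean ordered field, which is the asserted completeness.

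Part (i) is almost immediate: by \ax{AxEOF} the quantity part of $\mathfrak{M}$ is a Euclidean ordered field, and the Minkowski primitives on $\Q^d$ are exactly the standard ones by the choice of translation, so the translate is by definition a Minkowskian geometry over a Euclidean ordered field. What makes this translation the physically correct one --- i.e.\ the reason inertial observers of $\mathfrak{M}$ really are Minkowski inertial frames and the geometry does not depend on which observer we coordinatize through --- is Item~(2) of Thm.~\ref{thm-poi} (the worldview transformation between any two inertial observers is a Poincar\'e transformation) together with Thm.~\ref{mink-thm} (observer-independence of $\time_m(e_1,e_2)^2-\dist_m(e_1,e_2)^2$); both of these use $d\ge3$, via the Alexandrov--Zeeman theorem inside Thm.~\ref{thm-poi}, and it is exactly \ax{AxSymDist} there that kills the dilation and field-automorphism-induced factors.

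For part (ii), fix a Euclidean ordered field $\langle\Q;+,\cdot,<\rangle$ and the Minkowskian geometry it carries on $\Q^d$, and build a model $\mathfrak{M}$ of \ax{SpecRel} over it. Take the set of inertial observers to be a copy of the Poincar\'e group $P$ of $\Q^d$ (with $\IOb=\Ob$ and no noninertial observers), introduce one photon per lightlike line of $\Q^d$ so that every lightlike direction through every point is represented, put $\B=\Ob\cup\Ph$ (the labels chosen disjoint from $\Q$), and define $\W$ by letting the identity element of $P$ (the ``master'' observer) sit on the time-axis and see each photon exactly along its line, and letting observer $g\in P$ coordinatize a body $b$ at $\vpp$ iff the master observer coordinatizes $b$ at $g(\vpp)$. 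Then $Cd_m=\Q^d$ for every inertial $m$ (each $ev_m(\vpp)$ contains a photon), each worldview transformation between inertial observers is a composition of elements of $P$ and hence a Poincar\'e transformation, \ax{AxSelf_0} holds by construction, \ax{AxPh} holds because Poincar\'e transformations carry lightlike lines to lightlike lines and every lightlike line carries a photon, \ax{AxEv} holds because all the $ev_m$ sweep out the same set of nonempty events, and \ax{AxSymDist} holds because a Poincar\'e transformation restricted to a slice that is simultaneous for both observers is a Euclidean isometry (its linear part preserves Minkowski length, and on a purely spatial vector Minkowski length agrees up to sign with Euclidean length). Together with \ax{AxFrame} and \ax{AxEOF}, this gives $\mathfrak{M}\models\ax{SpecRel}$, and its translate is the geometry we started from.

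The main obstacle is part (ii): producing a single worldview relation that makes \ax{AxSelf_0}, \ax{AxPh}, \ax{AxEv} and \ax{AxSymDist} all true at once, which forces the family of realized worldview transformations to be closed under the operations that the definitions of $w^k_m$ impose (composition, inverse, identity on overlapping ranges) while never creating a pair that violates \ax{AxSymDist}; indexing the observers by the full Poincar\'e group is precisely the device that makes all four axioms fall out simultaneously. A secondary point requiring care is pinning down that ``Minkowskian geometry over a Euclidean ordered field'' is the class fixed earlier (so that (i) is a definitional triviality rather than a theorem) and that the hypothesis $d\ge3$ is genuinely needed --- for $d=2$ the Alexandrov--Zeeman step, and with it the corollary, fails.
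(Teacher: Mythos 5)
The paper offers no proof of this corollary: it is presented as a summary of the chapter (essentially of Thm.~\ref{thm-poi}), its formal meaning is explained only by analogy with Thm.~\ref{thm-grn}, and the details are delegated to the cited external sources. Your overall architecture --- a two-way correspondence between models of \ax{SpecRel} and Minkowskian geometries, with Thm.~\ref{thm-poi}(2) carrying one direction and a Poincar\'e-group-indexed model carrying the other --- is the architecture the paper sketches. The genuine gap is in your definition of the translate. You take its point set to be $\Q^d$ and interpret every Minkowski primitive ``by its standard first-order definition from $+,\cdot,<$''; that structure depends only on the quantity part of $\mathfrak{M}$ and not on $\W$ at all, so your part~(i) is indeed a definitional triviality --- but then the resulting completeness statement is equally trivial: it would hold verbatim with \ax{SpecRel} replaced by \ax{SpecRel_0}, or by \ax{AxEOF} alone, and for $d=2$ as well, since a geometry-language sentence true in all standard Minkowskian geometries over Euclidean ordered fields is already a consequence of the field axioms. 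That cannot be what the corollary asserts; it is meant to summarize ``that standard special relativity is provable from \ax{SpecRel}.'' The intended translation (as in the works the paper cites) takes the geometry's points to be the events, or the coordinate points of an inertial observer, and its primitives to be the \emph{observationally} defined relations --- lightlike-relatedness via shared photons, the quadratic form via $\time_m(e_1,e_2)^2-\dist_m(e_1,e_2)^2$, and so on. Under that translation, part~(i) becomes a substantive theorem whose entire content is Thm.~\ref{thm-poi}(2) together with Thm.~\ref{mink-thm} (this is exactly where $d\ge3$ and \ax{AxSymDist} enter), and the ``iff'' of completeness additionally needs the two translations to be mutually inverse up to isomorphism (a definitional equivalence), not merely that each class maps onto the other. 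You gesture at all of this in your ``physically correct'' remark, but as your proof is structured those facts carry no logical weight.

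A second, smaller defect lies in the model you build from a geometry. You specify $\wl_e(e)$ (the time axis) and $\wl_e(ph)$ (the photon's line) for the master observer $e$, and then set $\W(g,b,\vpp)\Leftrightarrow\W(e,b,g(\vpp))$; but you never say where $e$ coordinatizes the \emph{other} observers $g'\in P$, so $\W(g',g',\vpp)$ comes out false for every $\vpp$, and \ax{AxSelf_0} fails because $Cd_{g'}=\Q^d$ forces $g'$ to see itself on the time axis. The repair is to stipulate $\wl_e(g)=g[\{\vpp\in\Q^d:\vpp_\sigma=\vo\}]$ for every $g\in P$; with that addition your verifications of \ax{AxPh}, \ax{AxEv} and \ax{AxSymDist} (the last via the fact that the linear part of a Poincar\'e transformation restricted to purely spatial vectors is a Euclidean isometry) do go through.
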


The formal meaning of Cor.~\ref{cor-srcompl} is completely
analogous to that of Thm.~\ref{thm-grn} (about general relativity) and is
explained under Thm.~\ref{thm-grn}. For further details, see \cite[\S 4]{Mphd}, too. 

\chapter{Clock paradox}
\label{chp-cp}

As one of our main aims is to trace back the surprising predictions of
relativity to some convincing axioms, first we investigate an
axiomatic basis of the clock paradox\footnote{Unfortunately, it is
  still not uncommon for people who misinterpret the word `paradox' to
  try to find contradictions in relativity theory, that is why we think
  it important to note here that its original meaning is ``a statement
  that is seemingly contradictory and yet is actually true,'' i.e., it
  has nothing to do with logical contradiction.  With the nearly
  century long fruitless debate in view, perhaps it would be better to
  call the paradoxes of relativity theory simply effects, thus saying
  ``clock effect'' instead of ``clock paradox,'' but for the time
  being it appears to be a hopeless effort to have this idea
  generally accepted. Anyway, we would like to emphasize that it is
  absolutely pointless to try to find a logical contradiction in
  relativity theory, as its consistency has been proved, see
  \cite[p.77]{pezsgo}, \cite[Cor. 11.12 p.644]{logst}.}  (CP), which
is an inertial approximation of the famous twin paradox.  A similar
logical investigation of the twin paradox needs a more complex
mathematical apparatus, see \cite{twp}, \cite{mythes} and
Chap.~\ref{chp-twp}.  The results of this chapter are based on
 \cite{clp}, \cite{mytdk} and \cite{mythes}.

CP is one of the most famous predictions of special relativity. It
concerns three {\it inertial} observers: one of them is the
stay-at-home twin and the other two simulate the accelerated twin in
the twin paradox.  This simulation is done by
replacing the accelerated twin by a leaving {\it
 inertial} observer and a returning one that synchronizes its clock with the leaving one's
when they meet.

In this chapter we mainly concentrate on the relation of CP to the axioms
and other consequences of special relativity, but we also
formulate and characterize variants of CP: one where the stay-at-home twin will be
the younger one (Anti-CP) and another where no differential aging will take place
(No-CP).

In Section~\ref{ax-sec} we introduce a very basic axiom system
\ax{Kinem_0} of kinematics in which no relativistic effect is assumed.
\ax{Kinem_0} is a subtheory of Newtonian kinematics and special
relativity. In Section~\ref{thm-sec} we formulate and prove a
geometrical characterization of CP, Anti-CP and No-CP each within the
models of \ax{Kinem_0}, see Cor.~\ref{cor-twp} and
Thm.~\ref{thm-clp}. In Secs.\ \ref{conk-sec} and \ref{consr-sec} we
prove some surprising logical consequences of our characterization.
In Thm.~\ref{thm-univtime} we show that the absoluteness of time (in
the Newtonian sense) is not equivalent to the lack of the clock
paradox (No-CP) without assuming a strong theoretical
axiom. Similarly, in Thm.~\ref{thm-slowtime} we show that the slowing
down of moving clocks is not equivalent to CP. In
Thm.~\ref{thm-simdist} we show that a symmetry axiom of special
relativity is strictly stronger than CP.

\section{A FOL axiom system of kinematics}
\label{ax-sec}

We characterize the CP under some very mild assumptions about
kinematics. To introduce this weak axiom system (\ax{Kinem_0}) we
formulate some further axioms.  Let us recall that $\vet=\langle
1,0,\ldots,0\rangle$; and let us define the \df{time-unit
  vector}\index{time-unit vector} of $k$ according to $m$ to be
\begin{equation*}\index{$\vekm$}
\Df{\vekm}\leteq w^k_m(\vet)-w^k_m(\voo).
\end{equation*}

\begin{description}
\item[\Ax{AxLinTime}]\index{\ax{AxLinTime}} The world-lines of {\it inertial} observers are lines and time is elapsing uniformly on them:
\begin{multline*}
\forall m,k \in \IOb\enskip \wl_m(k)=\Setopen
w^k_m(\voo)+\lambda\cdot\vekm \setmid \lambda\in\Q\Setclose\, \land
\\ \forall \vp,\vq \in \wl_m(k) \quad
\time_k\big(ev_m(\vpp),ev_m(\vqq)\big)\cdot\left|\vekm\right|=|\vp-\vqq|.
\end{multline*}
\end{description}
\noindent
Let us now introduce the aforementioned axiom system of kinematics:
\begin{equation*}\index{\ax{Kinem_0}}
\boxed{\ax{Kinem_0}\leteq \Setopen \ax{AxSelf}, \ax{AxLinTime}, \ax{AxEv} \Setclose}
\end{equation*}
Let us note that \ax{Kinem_0} is a very weak axiom system of
kinematics.  By using Item \eqref{item-crd} of Prop.~\ref{prop-sr0} and
Thm.~\ref{thm-poi}, it not difficult to show that \ax{AxSelf} and
\ax{AxLinTime} are consequences of \ax{SpecRel}. So \ax{Kinem_0} is
weaker than \ax{SpecRel}.

\section{Formulating the clock paradox}

To formulate CP, first we formulate the situations in which it
can occur.  We say that {\it inertial} observer $m$ observes {\it inertial} observers $a$, $b$ and
$c$ in a \df{clock paradox situation}\index{clock paradox situation}
at events $e$, $e_a$ and $e_c$ iff $a\in e_a\cap e$, $b\in e_a\cap
e_c$, $c\in e\cap e_c$, $b\not\in e$ and
$\time_m(e_a)<\time_m(e)<\time_m(e_c)$ or
$\time_m(e_a)>\time_m(e)>\time_m(e_c)$, see Fig.~\ref{twpp}.  This
situation is denoted by
$\Df{\mtwp_m(\widehat{ac},b)}(e_a,e,e_c)$.\index{$\mtwp$}

\begin{figure}[h!t]
\begin{center}
\psfrag{m}{$m$}
\psfrag{a}{$a$}
\psfrag{b}[tl][tl]{$b$}
\psfrag{c}{$c$}
\psfrag{p}[tl][tl]{$\vp$}
\psfrag{s}{$s$}
\psfrag{q}[cl][cl]{$\vq$}
\psfrag{0}[tl][tl]{$\vo$}
\psfrag{r}[bl][bl]{$\vr$}
\psfrag{e}{$e$}
\psfrag{ea}{$e_a$}
\psfrag{ec}{$e_c$}
\psfrag{Ev}[tr][tr]{$Ev$}
\psfrag{Crdm}{$\loc_m$}
\psfrag{1b}[rb][rb]{${b}^\ddag$}
\psfrag{1a}[rb][rb]{${a}^\ddag$}
\psfrag{1c}[rb][rb]{${c}^\ddag$}
\includegraphics[keepaspectratio, width=0.8\textwidth]{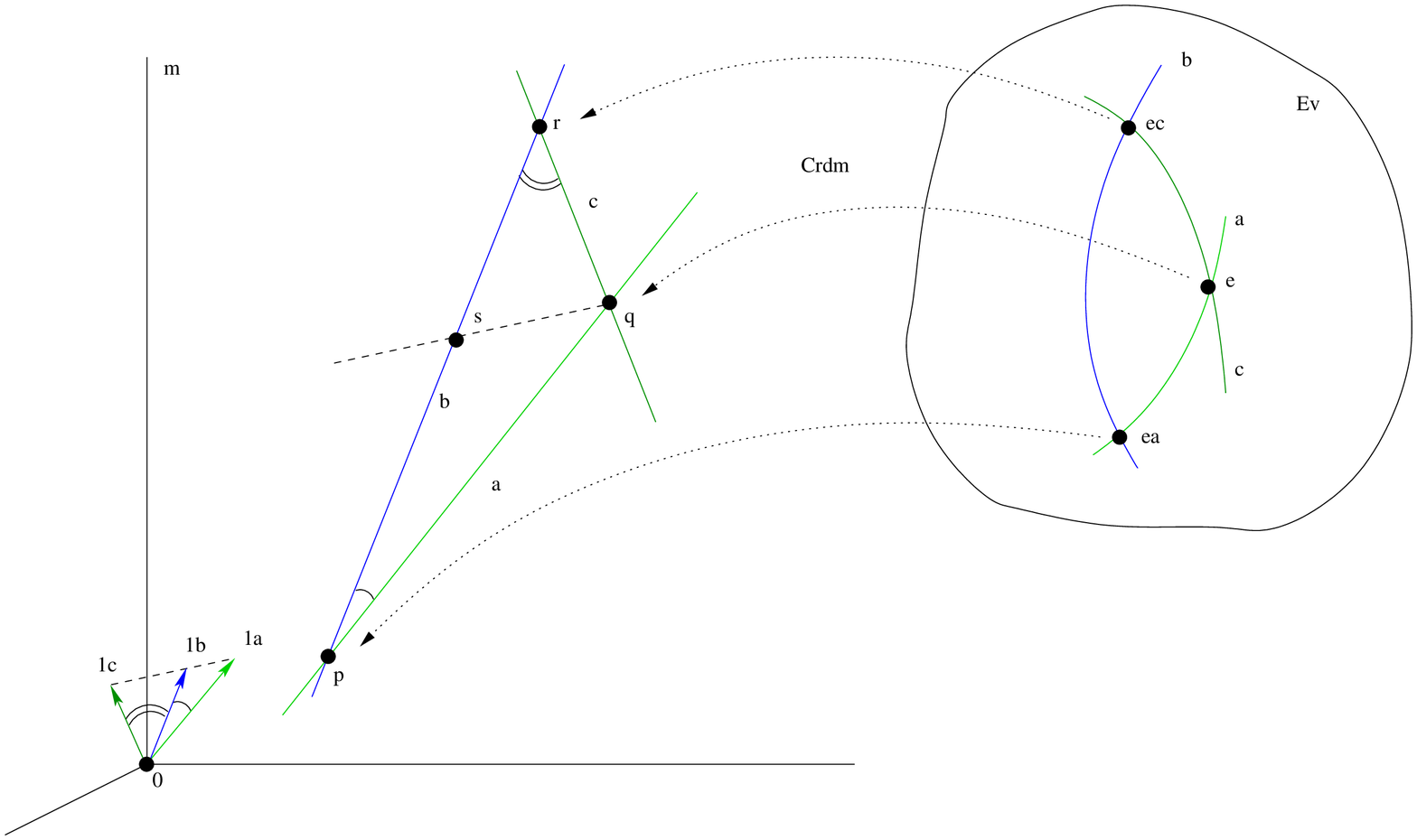}
\caption{\label{twpp} Illustration of relation
  $\mtwp_m(\widehat{ac},b)(e_a,e,e_c)$ and the proof of
  Prop.~\ref{prop-tp}}
\end{center}
\end{figure}

Let $a,b,c\in\IOb$ and $e_a,e,e_b\in Ev$.  Let
$\Df{\time(\widehat{ac}<b)}(e_a,e,e_b)$\index{$\time(\widehat{ac}<b)(e_a,e,e_b)$}
be an abbreviation for
$\time_a(e_a,e)+\time_c(e,e_c)<\time_b(e_a,e_c)$.  The definitions of
$\Df{\time(\widehat{ac}=b)}(e_a,e,e_b)$\index{$\time(\widehat{ac}=b)(e_a,e,e_b)$}
and
$\Df{\time(\widehat{ac}>b)}(e_a,e,e_b)$\index{$\time(\widehat{ac}>b)(e_a,e,e_b)$}
are analogous.
\noindent
Using this notation, we can formulate the clock paradox as follows:
\begin{description}
\item[\Ax{CP}]\index{\ax{CP}} Every {\it inertial} observer $m$ observes the clock paradox in every clock paradox situation:
\begin{equation*}
\forall m,c,a,b\in \IOb\enskip \forall e,e_a,e_c\in Ev_m\quad
\mtwp_m(\widehat{ac},b)(e_a,e,e_c) \then \time(\widehat{ac}<b)(e_a,e,e_c).
\end{equation*}
\end{description}
We define formulas \Ax{NoCP}\index{\ax{NoCP}} and \Ax{AntiCP}\index{\ax{AntiCP}} by replacing '$<$' by '$=$' and '$>$' in the formula \ax{CP}, respectively.

\section{Geometrical characterization of CP}
\label{thm-sec}

We say that $\vq\in\Q^d$ is (strictly) \df{between}\index{between}
$\vp\in\Q^d$ and $\vr\in\Q^d$ iff there is a $\lambda\in \Q$ such that
$\vq=\lambda\vp+(1-\lambda)\vr$ and $0<\lambda<1$.  This situation is
denoted by $\Df{\Bw}(\vp,\vq,\vrr)$.\index{$\Bw$}

Let $\vp,\vq,\vr \in \Q^d$ and $\mu\in\Q$ such that $\Bw(\vp,\mu\vq,
\vrr)$.  In this case we use notations
$\Df{\convex}(\vp,\vq,\vrr)$\index{$\convex$} and
$\Df{\concave}(\vp,\vq,\vrr)$\index{$\concave$} if $1<\mu$ and
$0<\mu<1$, respectively.

For convenience, we introduce the following notation:
\begin{equation*}\index{${}^\ddag p$}
\Df{{}^\ddag \vp}\leteq \left\{
\begin{array}{lll}
\phantom{-}\vp &\text{ if }& p_t \ge 0,\\
-\vp & \mbox{ if }& p_t < 0.
\end{array}
\right.
\end{equation*}

\begin{figure}[h!btp]
\begin{center}
\psfrag{p}{$\vp$}
\psfrag{q1}{$\vq_3$}
\psfrag{q2}{$\vq_2$}
\psfrag{q3}{$\vq_1$}
\psfrag{r}[bc][bc]{${}^\ddag \vr,\vr$}
\psfrag{p+}{${}^\ddag \vp$}
\psfrag{o}[tr][tr]{$\vo$}
\psfrag{convex}[bl][bl]{convex}
\psfrag{flat}[l][l]{flat}
\psfrag{concave}[bl][bl]{concave}
\includegraphics[keepaspectratio, width=0.6\textwidth]{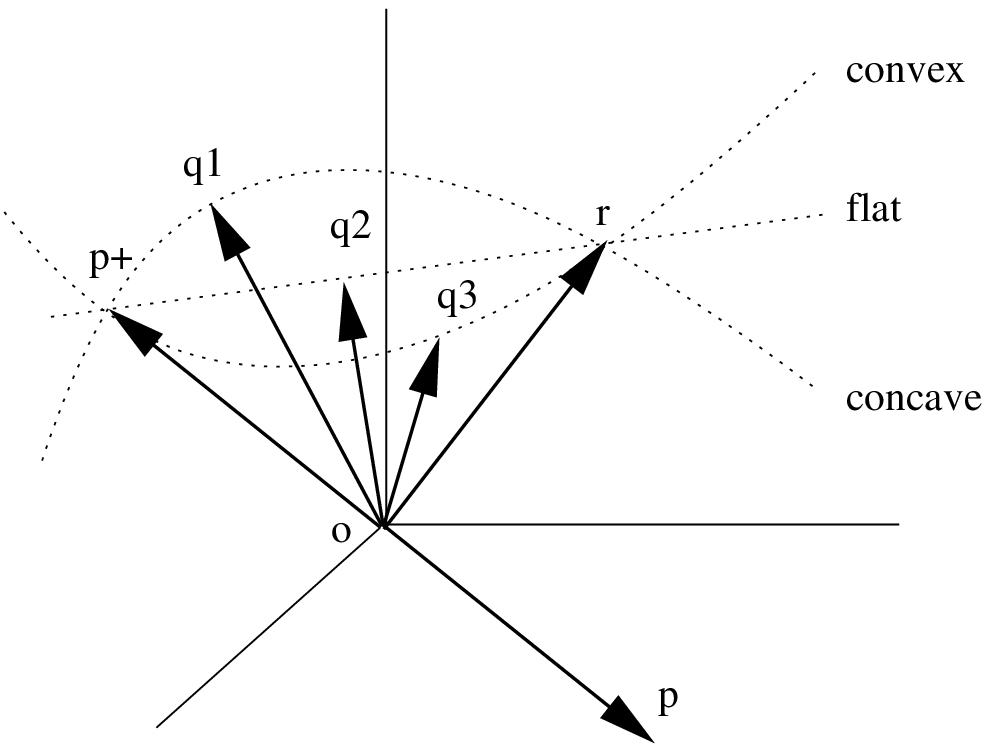}
\caption{\label{fig-conv} Illustration of relations $\convex({}^\ddag\vp,\vq_1,\vr)$, $\Bw({}^\ddag\vp,\vq_2,\vr)$ and $\concave({}^\ddag\vp,\vq_3,\vr)$}
\end{center}
\end{figure}

\begin{prop}\label{prop-tp}
Assume \ax{Kinem_0}.
Let $m$, $a$, $b$, and $c$ be {\it inertial} observers and $e$, $e_a$ and $e_c$ events such that $\mtwp_m(\widehat{ac},b)(e_a,e,e_c)$.
Then 
\begin{alignat*}{3}
&\time(\widehat{ac}<b)(e_a,e,e_c) &\quad &\Iff &\quad &\convex({}^\ddag1_m^a,{}^\ddag1_m^b,{}^\ddag1_m^c),\\
&\time(\widehat{ac}=b)(e_a,e,e_c) & &\Iff & &\Bw({}^\ddag1_m^a,{}^\ddag1_m^b,{}^\ddag1_m^c),\\
&\time(\widehat{ac}>b)(e_a,e,e_c) & &\Iff & &\concave({}^\ddag1_m^a,{}^\ddag1_m^b,{}^\ddag1_m^c).
\end{alignat*}
\end{prop}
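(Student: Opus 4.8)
The plan is to compress \ax{AxLinTime} into a single vector identity relating the three time-unit vectors to the three proper times, and then read off all three equivalences from it at once.

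First I would put $\vp\leteq\loc_m(e_a)$, $\vq\leteq\loc_m(e)$ and $\vr\leteq\loc_m(e_c)$ --- defined because $\mtwp_m(\widehat{ac},b)(e_a,e,e_c)$ presupposes that $e_a,e,e_c$ are localized by $m$ --- and note that $\vp,\vq,\vr$ are pairwise distinct since the time ordering built into $\mtwp_m$ is strict. Since $a\in e_a\cap e$, both $\vp$ and $\vq$ lie on $\wl_m(a)$, which by \ax{AxLinTime} is the line $\{w^a_m(\voo)+\lambda\cdot 1_m^a:\lambda\in\Q\}$; hence $\vq-\vp=\alpha\cdot 1_m^a$ for some $\alpha\in\Q$, with $1_m^a\neq\vo$ (otherwise $\wl_m(a)$ would be a single point), and the length clause of \ax{AxLinTime} gives $\time_a(e_a,e)=|\alpha|$ (all the times appearing here are defined, by \ax{AxLinTime} and our convention on existential equality). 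Comparing time coordinates and using (w.l.o.g.) $\time_m(e_a)<\time_m(e)<\time_m(e_c)$ we get $\alpha\cdot(1_m^a)_\tau>0$, so $\alpha$ and $(1_m^a)_\tau$ have the same nonzero sign; since ${}^\ddag1_m^a$ equals $1_m^a$ or $-1_m^a$ according to the sign of $(1_m^a)_\tau$, this yields $\vq-\vp=\hat\alpha\cdot{}^\ddag1_m^a$ with $\hat\alpha\leteq\time_a(e_a,e)>0$. The same argument applied to $c$ on $(e,e_c)$ and to $b$ on $(e_a,e_c)$ gives $\vr-\vq=\hat\gamma\cdot{}^\ddag1_m^c$ and $\vr-\vp=\hat\beta\cdot{}^\ddag1_m^b$ with $\hat\gamma\leteq\time_c(e,e_c)>0$ and $\hat\beta\leteq\time_b(e_a,e_c)>0$ (and ${}^\ddag1_m^a,{}^\ddag1_m^b,{}^\ddag1_m^c$ are all nonzero with positive time coordinate); in the reversed time ordering every sign merely flips, so the outcome is unchanged. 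Adding the first two relations and comparing with the third yields the key identity
\begin{equation*}
\hat\beta\cdot{}^\ddag1_m^b=\hat\alpha\cdot{}^\ddag1_m^a+\hat\gamma\cdot{}^\ddag1_m^c .
\end{equation*}

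Next I would read off the equivalences. Putting $\lambda\leteq\hat\alpha/(\hat\alpha+\hat\gamma)\in(0,1)$ and $\mu\leteq\hat\beta/(\hat\alpha+\hat\gamma)>0$, the key identity rewrites as $\mu\cdot{}^\ddag1_m^b=\lambda\cdot{}^\ddag1_m^a+(1-\lambda)\cdot{}^\ddag1_m^c$, i.e.\ $\Bw({}^\ddag1_m^a,\mu\cdot{}^\ddag1_m^b,{}^\ddag1_m^c)$ holds with this particular $\mu$. Since $\time(\widehat{ac}<b)(e_a,e,e_c)$, $\time(\widehat{ac}=b)(e_a,e,e_c)$ and $\time(\widehat{ac}>b)(e_a,e,e_c)$ say exactly that $\hat\alpha+\hat\gamma<\hat\beta$, $\hat\alpha+\hat\gamma=\hat\beta$ and $\hat\alpha+\hat\gamma>\hat\beta$, and these are equivalent to $\mu>1$, $\mu=1$ and $\mu<1$, the definitions of $\convex$, $\Bw$ and $\concave$ immediately give the ``$\Rightarrow$'' direction of each of the three claimed equivalences.

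For the converse directions I need the representation of a scalar multiple of ${}^\ddag1_m^b$ as an affine combination of ${}^\ddag1_m^a$ and ${}^\ddag1_m^c$ to be unique, i.e.\ I need ${}^\ddag1_m^a$ and ${}^\ddag1_m^c$ to be linearly independent --- and this is the only place where the hypothesis $b\notin e$ is used, and also the step I expect to be the real obstacle. If ${}^\ddag1_m^a$ and ${}^\ddag1_m^c$ were dependent, then --- each of ${}^\ddag1_m^a,{}^\ddag1_m^b,{}^\ddag1_m^c$ being nonzero with positive time coordinate --- we would have ${}^\ddag1_m^a=\kappa\cdot{}^\ddag1_m^c$ with $\kappa>0$, and the key identity would force ${}^\ddag1_m^b$ to be a positive multiple of ${}^\ddag1_m^c$ as well; then $\vq-\vp$ would be parallel to $\vr-\vp$, so $\vp,\vq,\vr$ would be collinear, hence $\vq=\loc_m(e)$ would lie on the line through $\vp=\loc_m(e_a)$ and $\vr=\loc_m(e_c)$, which is exactly $\wl_m(b)$ (both endpoints lie on it since $b\in e_a\cap e_c$, and it is a genuine line because $\vp\neq\vr$); so $b\in ev_m(\vq)=e$, contradicting $b\notin e$. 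With linear independence established, any witness $(\mu',\lambda')$ of $\convex$, $\Bw$ or $\concave$ must satisfy $\mu'\cdot{}^\ddag1_m^b=\lambda'\cdot{}^\ddag1_m^a+(1-\lambda')\cdot{}^\ddag1_m^c$, and expanding ${}^\ddag1_m^b$ in the basis $\{{}^\ddag1_m^a,{}^\ddag1_m^c\}$ forces $\mu'=\hat\beta/(\hat\alpha+\hat\gamma)=\mu$; this closes each of the three equivalences (and in passing shows that, in a clock paradox situation, exactly one of $\convex$, $\Bw$, $\concave$ holds, mirroring the trichotomy of $\hat\alpha+\hat\gamma$ against $\hat\beta$). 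Apart from the linear-independence step, everything is routine bookkeeping with \ax{AxLinTime} and the ${}^\ddag$ convention.
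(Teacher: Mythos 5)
Your proof is correct, and it takes a genuinely different route from the paper's. The paper argues synthetically: it introduces the auxiliary point $\vs$ where $line(\vp,\vrr)$ meets the parallel to $line({}^\ddag1_m^a,{}^\ddag1_m^c)$ through $\vq$, uses two pairs of similar triangles to show that $\Bw$ forces equality of the elapsed times, invokes the monotonicity ``a shorter time-unit vector means more measured time'' for the $\convex$ and $\concave$ cases, and then gets all three converses by a trichotomy argument (at least one of $\convex$, $\Bw$, $\concave$ holds, while at most one of $<$, $=$, $>$ can hold). You instead compress \ax{AxLinTime} into the single identity $\hat\beta\cdot{}^\ddag1_m^b=\hat\alpha\cdot{}^\ddag1_m^a+\hat\gamma\cdot{}^\ddag1_m^c$ and read off all six implications from the uniqueness of the barycentric representation, obtained from the linear independence of ${}^\ddag1_m^a$ and ${}^\ddag1_m^c$. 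Both proofs hinge on the same nondegeneracy fact ($b\notin e$ makes the triangle $\vpp\vqq\vrr$ nondegenerate), and your scalars $\hat\alpha,\hat\beta,\hat\gamma$ are exactly the ratios the paper extracts from its similar triangles; but your organization avoids the auxiliary construction, handles the two time orderings uniformly by a sign flip, and shows in passing that $\convex$, $\Bw$ and $\concave$ are mutually exclusive here, whereas the paper's trichotomy argument only needs that at least one of them holds. The one delicate step --- the sign bookkeeping that turns $1_m^a$ into ${}^\ddag1_m^a$ with a positive coefficient, which uses that the time components of the time-unit vectors are nonzero because the ordering in $\mtwp_m$ is strict --- is handled correctly.
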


\begin{proof}
Let $m$, $a$, $b$, and $c$ be {\it inertial} observers and $e$, $e_a$ and $e_c$
events such that $\mtwp_m(\widehat{ac},b)(e_a,e,e_c)$. Let us
abbreviate time-unit vectors ${}^\ddag \vekm$ as $k^\ddag$
throughout this proof. Let $\vp=\loc_m(e_a)$, $\vq=\loc_m(e)$ and
$\vr=\loc_m(e_c)$. We have that $\vp\neq \vr$ since $p_\tau<r_\tau$
or $r_\tau<p_\tau$. Therefore, by \ax{AxLinTime}, the triangle
$\vpp\vqq\vr$ is nondegenerate since $\vp,\vr\in \wl_m(b)$ but
$\vq\not\in \wl_m(b)$. Let us first show that $b$ measures the same
length of time between $e_a$ and $e_c$ as $a$ and $c$ together if
$\Bw(a^\ddag,b^\ddag,c^\ddag)$ holds. Let $\vs$ be the intersection
of $line(\vp,\vrr)$ and the line parallel to $line(a^\ddag,
c^\ddag)$ through $\vq$, see Fig.~\ref{twpp}. Then the triangles
$\voo a^\ddag b^\ddag$ and $\vpp\vqq\vs$ are similar; and the
triangles $\voo b^\ddag c^\ddag$ and $\vrr\vs\,\vqq$ are similar.
Thus
\begin{equation*}
\frac{|\vp-\vqq|}{|a^\ddag|}=\frac{|\vp-\vs\,|}{|b^\ddag|} \;\text{ and }\; \frac{|\vq-\vrr|}{|c^\ddag|}=\frac{|\vs-\vrr|}{|b^\ddag|}
\end{equation*}
hold.
From which, by \ax{AxLinTime}, it follows that
\begin{multline*}
\Big|\time_a(e_a,e)\Big|+\Big|\time_c(e,e_c)\Big| = \frac{|\vp-\vqq|}{|a^\ddag|}+\frac{|\vq-\vrr|}{|c^\ddag|}\\
=\frac{|\vp-\vs\,|+|\vs-\vrr|}{|b^\ddag|}=\frac{|\vr-\vpp|}{|b^\ddag|}= \Big|\time_c(e_a,e_c)\Big|.
\end{multline*}
Hence $\time(\widehat{ac}=b)(e_a,e,e_c)$ holds if
$\Bw(a^\ddag,b^\ddag,c^\ddag)$.  By \ax{AxLinTime}, $b$ measures more
(less) time between $e_a$ and $e_c$ iff its time-unit vector is
shorter (longer).  Thus we get that $\time(\widehat{ac}<b)(e_a,e,e_c)$
holds if $\convex(a^\ddag,b^\ddag,c^\ddag)$, and
$\time(\widehat{ac}>b)(e_a,e,e_c)$ holds if
$\concave(a^\ddag,b^\ddag,c^\ddag)$.  The converse implications also
hold since one of the relations $\convex$, $\Bw$ and $\concave$ holds
for $a^\ddag$, $b^\ddag$ and $c^\ddag$, and only one of the relations
$\time(\widehat{ac}<b)$, $\time(\widehat{ac}=b)$ and
$\time(\widehat{ac}>b)$ can hold for events $e_a$, $e$ and $e_c$.
This completes the proof.
\end{proof}

A set $H\subseteq \Q^d$ is called \df{convex}\index{convex} iff
$\convex(\vp,\vq,\vrr)$ for all $\vp,\vq,\vr\in H$ for which there is a $\mu\in
\Q^+$ such that $\Bw(\vp,\mu\vq,\vrr)$ holds.  We call $H$
\df{flat}\index{flat} or \df{concave}\index{concave} if
$\convex(\vp,\vq,\vrr)$ is replaced by $\Bw(\vq,\vr,\vpp)$ or
$\concave(\vr,\vp,\vqq)$, respectively.  
\begin{rem}\label{rem-conv}
If there are no $\vp,\vq,\vr\in H$ for which there is a $\mu\in \Q^+$
such that $\Bw(\vp,\mu\vq,\vrr)$ holds, then $H$ is convex, flat and
concave at the same time. To avoid
these undesired situations, let us call $H$
\df{nontrivial}\index{nontrivial convex set} if there are
$\vp,\vq,\vr\in H$ such that $\Bw(\vp,\mu\vq,\vrr)$ holds for a
$\mu\in \Q^+$. By the respective definitions, it is easy to see that
any nontrivial convex (flat, concave) set intersects a halfline at
most once. 
\end{rem}
\noindent 
Let us define the
\df{Minkowski sphere}\index{Minkowski sphere} as
\index{$MS^\ddag_m$}
$\Df{MS^\ddag_m}\leteq \Setopen {}^\ddag \vekm \setmid k\in\IOb\Setclose.$
\begin{rem}
Convexity as used here is not far from convexity as understood in
geometry or in the case of functions. For example, in the models of
$\ax{Kinem_0}+\ax{AxThExp^+}$ or $\ax{SpecRel_0}+\ax{AxThExp}$ the
Minkowski Sphere $MS^\ddag_m$ is convex in our sense iff the set of
points above it (i.e., $\setopen \vp\in\Q^d \setmid \exists \vq\in
MS^\ddag_m\enskip p_\tau\ge q_\tau\setclose$) is convex in the
geometrical sense. Axioms \ax{AxThExp^+} and \ax{AxThExp} are
introduced on pp.\ \pageref{axthex+} and \pageref{axthex},
respectively.
\end{rem}
\begin{rem}\label{rem-convMS}
By Rem.~\ref{rem-conv}, if $MS^\ddag_m$ is a nontrivial convex (flat,
concave) set, it intersects a line at most once.
\end{rem}

\noindent
Now we can state the following corollary of Prop.~\ref{prop-tp}.
\begin{cor} 
\label{cor-twp}
Assume \ax{Kinem_0}.
Then
\begin{alignat*}{3}
&\forall m\in\IOb\enskip MS^\ddag_m \text{ is convex} &\enskip &\Then &\enskip & \ax{CP}, \\
&\forall m\in\IOb\enskip MS^\ddag_m \text{ is flat} & &\Then & &\ax{NoCP}, \\
&\forall m\in\IOb\enskip MS^\ddag_m \text{ is concave}& &\Then & &\ax{AntiCP}.
\end{alignat*}
\end{cor}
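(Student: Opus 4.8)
The plan is to read the corollary off Prop.~\ref{prop-tp} together with the definition of a convex (flat, concave) set; the only point that needs care is a nontriviality check. So I would fix an {\it inertial} observer $m$, {\it inertial} observers $a,b,c$, and events $e_a,e,e_c\in Ev_m$ with $\mtwp_m(\widehat{ac},b)(e_a,e,e_c)$, and show $\time(\widehat{ac}<b)(e_a,e,e_c)$ under the first hypothesis (the other two cases being entirely parallel). The time-unit vectors ${}^\ddag1_m^a$, ${}^\ddag1_m^b$, ${}^\ddag1_m^c$ all lie in $MS^\ddag_m$ by definition of the Minkowski sphere, so the assumed convexity of $MS^\ddag_m$ can be applied to this triple once we know it is nontrivial in the sense of Rem.~\ref{rem-conv}, i.e.\ that there is a $\mu\in\Q^+$ with $\Bw({}^\ddag1_m^a,\mu\cdot{}^\ddag1_m^b,{}^\ddag1_m^c)$.

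For this nontriviality I would simply invoke the proof of Prop.~\ref{prop-tp}: as is shown there, in any clock paradox situation one of the relations $\convex$, $\Bw$, $\concave$ holds for the triple $({}^\ddag1_m^a,{}^\ddag1_m^b,{}^\ddag1_m^c)$---this is the place where \ax{AxLinTime} and the nondegeneracy of the triangle $\loc_m(e_a)\,\loc_m(e)\,\loc_m(e_c)$ (whose outer vertices lie on the line $\wl_m(b)$ and are distinct, while the middle one lies off it) are used---and each of these three relations is, by its own definition, witnessed by some $\mu\in\Q^+$ with $\Bw({}^\ddag1_m^a,\mu\cdot{}^\ddag1_m^b,{}^\ddag1_m^c)$. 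That is exactly the required nontriviality.

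Granting this, the first implication is immediate: since ${}^\ddag1_m^a,{}^\ddag1_m^b,{}^\ddag1_m^c\in MS^\ddag_m$ and this triple is nontrivial, the defining clause of ``$MS^\ddag_m$ is convex'' yields $\convex({}^\ddag1_m^a,{}^\ddag1_m^b,{}^\ddag1_m^c)$, and Prop.~\ref{prop-tp} then gives $\time(\widehat{ac}<b)(e_a,e,e_c)$; as $m,a,b,c,e_a,e,e_c$ were arbitrary subject to $\mtwp_m(\widehat{ac},b)(e_a,e,e_c)$, this is precisely \ax{CP}. The remaining two implications follow mutatis mutandis, with ``convex'' replaced by ``flat'' and ``concave'': for this same nontrivial triple the corresponding clause forces, respectively, the relation $\Bw$ and the relation $\concave$ on the right-hand side of the matching line of Prop.~\ref{prop-tp}, hence respectively $\time(\widehat{ac}=b)(e_a,e,e_c)$, i.e.\ \ax{NoCP}, and $\time(\widehat{ac}>b)(e_a,e,e_c)$, i.e.\ \ax{AntiCP}. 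I do not expect a genuine obstacle: essentially all the content sits in Prop.~\ref{prop-tp}, and the single delicate point is the nontriviality step above---without it the convexity (flatness, concavity) hypothesis on $MS^\ddag_m$ would be vacuous on the triple of time-unit vectors produced by the clock paradox situation, and the implication would collapse.
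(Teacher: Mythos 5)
Your argument is correct and is exactly the route the paper intends: the corollary is stated without proof as an immediate consequence of Prop.~\ref{prop-tp}, read against the definition of a convex (flat, concave) set. Your explicit nontriviality check—extracting from the trichotomy in the proof of Prop.~\ref{prop-tp} a $\mu\in\Q^+$ with $\Bw({}^\ddag1_m^a,\mu\cdot{}^\ddag1_m^b,{}^\ddag1_m^c)$ so that the convexity clause is not vacuous on this triple—is precisely the one point the paper leaves tacit, and you handle it correctly.
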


The implications in Cor.~\ref{cor-twp} cannot be reversed because
there may be {\it inertial} observers that are not part of any clock paradox
situation.  We can solve this problem by using the following axiom to
shift {\it inertial} observers in order to create clock paradox situations.
\begin{description}
\item[\Ax{AxShift}]\index{\ax{AxShift}} Any {\it inertial} observer observing another
  {\it inertial} observer with a certain time-unit vector also observes still another
  {\it inertial} observer, with the same time-unit vector, at each coordinate point of
  its coordinate domain:
\begin{equation*}
\forall m,k\in \IOb\enskip \forall \vp\in Cd_m\; \exists h \in\IOb \quad h\in ev_m(\vpp) \lland \vekm=\vehm.
\end{equation*} 
\end{description}
Now we can reverse the above implications.
\begin{thm}\label{thm-clp}
Assume \ax{Kinem_0} and \ax{AxShift}.
Then
\begin{alignat*}{3}
&\ax{CP} &\enskip &\Iff &\enskip & \forall m\in\IOb\enskip MS^\ddag_m \text{ is convex,}\\
&\ax{NoCP} & &\Iff & &\forall m\in\IOb\enskip MS^\ddag_m \text{ is flat,}\\
&\ax{AntiCP} & &\Iff & &\forall m\in\IOb\enskip MS^\ddag_m \text{ is concave}.
\end{alignat*}
\end{thm}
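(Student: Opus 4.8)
The plan is to split each biconditional into its two implications. The ``$\Longleftarrow$'' directions — that $\forall m\in\IOb$ the convexity (resp.\ flatness, concavity) of $MS^\ddag_m$ implies \ax{CP} (resp.\ \ax{NoCP}, \ax{AntiCP}) — are already in hand: they are exactly Cor.~\ref{cor-twp}, which needs only \ax{Kinem_0}. So all the work lies in the three ``$\Longrightarrow$'' directions, and that is where \ax{AxShift} enters. I carry out the implication from \ax{CP}; the other two are the same argument with the second, resp.\ third, equivalence of Prop.~\ref{prop-tp} in place of the first.

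Assume \ax{CP} and fix $m\in\IOb$. Unwinding the definition of convexity, I must show: if $\vp,\vq,\vr\in MS^\ddag_m$ and $\mu\in\Q^+$ satisfy $\Bw(\vp,\mu\vq,\vrr)$, then $\convex(\vp,\vq,\vrr)$. Pick inertial $a,b,c$ with ${}^\ddag 1^a_m=\vp$, ${}^\ddag 1^b_m=\vq$, ${}^\ddag 1^c_m=\vr$; by \ax{AxLinTime} each of $\wl_m(a),\wl_m(b),\wl_m(c)$ is a line, with direction vector parallel to $\vp$, $\vq$, $\vr$ respectively. The idea is to use \ax{AxShift} twice to translate $b$ and $c$ into positions where their world-lines, together with $\wl_m(a)$, bound a triangle realizing a clock paradox situation. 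Choose $\vz\in\wl_m(a)$, so $\wl_m(a)=\setopen\vz+t\vp\setmid t\in\Q\setclose$. Applying \ax{AxShift} to $m,b$ and $\vz\in Cd_m$ yields an inertial $b'$ with $b'\in ev_m(\vz)$ and $1^{b'}_m=1^b_m$, hence $\wl_m(b')=\setopen\vz+t\vq\setmid t\in\Q\setclose$; put $R:=\vz+\mu\vq$, which lies on $\wl_m(b')\subseteq Cd_m$. Applying \ax{AxShift} to $m,c$ and $R$ yields an inertial $c'$ with $c'\in ev_m(R)$ and $1^{c'}_m=1^c_m$, hence $\wl_m(c')=\setopen R+t\vr\setmid t\in\Q\setclose$. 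Writing $\mu\vq=\lambda\vp+(1-\lambda)\vr$ with $0<\lambda<1$ (the content of $\Bw(\vp,\mu\vq,\vrr)$) and setting $Q:=\vz+\lambda\vp$, the same identity gives $Q\in\wl_m(a)$ and $Q=R-(1-\lambda)\vr\in\wl_m(c')$.

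Now put $e_a:=ev_m(\vz)$, $e:=ev_m(Q)$, $e_c:=ev_m(R)$. From the memberships above, $a\in e_a\cap e$, $b'\in e_a\cap e_c$, $c'\in e\cap e_c$, and $b'\notin e$ (else $\lambda\vp=Q-\vz$ would be a multiple of $\vq$, forcing $\vp,\vq,\vr$ all parallel by the identity above, a degenerate case set aside below). Taking first (time) components of $\mu\vq=\lambda\vp+(1-\lambda)\vr$ gives $\time_m(e)-\time_m(e_a)=\lambda p_\tau>0$ and $\time_m(e_c)-\time_m(e)=(1-\lambda)r_\tau>0$, using $\vp,\vr\in MS^\ddag_m$ and (again outside the degenerate cases) $p_\tau,r_\tau>0$; so $\time_m(e_a)<\time_m(e)<\time_m(e_c)$. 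Hence $\mtwp_m(\widehat{ac'},b')(e_a,e,e_c)$ holds, so \ax{CP} gives $\time(\widehat{ac'}<b')(e_a,e,e_c)$, and Prop.~\ref{prop-tp} (applied to $a,b',c'$) turns this into $\convex({}^\ddag 1^a_m,{}^\ddag 1^{b'}_m,{}^\ddag 1^{c'}_m)$; since \ax{AxShift} preserved the time-unit vectors, this is $\convex(\vp,\vq,\vrr)$. As $\vp,\vq,\vr$ and $m$ were arbitrary, $MS^\ddag_m$ is convex for every $m\in\IOb$.

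The main obstacle is the construction in the second paragraph: arranging the three world-lines to close up into a non-degenerate triangle whose three vertex-events occur in the correct temporal order. This is precisely the service \ax{AxShift} performs, and the linear identity $\mu\vq=\lambda\vp+(1-\lambda)\vr$ is what makes both the closure and the time-ordering come out right. Two routine points should be discharged along the way: the degenerate configurations — a time-unit vector with vanishing time component, or $\vp,\vq,\vr$ collinear with $\vo$ — must be excluded or handled separately (they are exactly the ``trivial'' sets of Rem.~\ref{rem-conv}, which are convex, flat and concave at once); and one must check $e_a,e,e_c$ are localized by $m$ and by $a,b',c'$, which follows from the pairwise non-parallelism of $1^a_m,1^b_m,1^c_m$ via \ax{AxLinTime}, two distinct non-parallel world-lines of inertial observers meeting in at most one point.
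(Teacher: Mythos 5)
Your proof is correct and follows the paper's own argument: the ``$\Longleftarrow$'' directions are exactly Cor.~\ref{cor-twp}, and for ``$\Longrightarrow$'' you use \ax{AxShift} to produce inertial observers with the prescribed time-unit vectors in a clock paradox situation and then invoke Prop.~\ref{prop-tp}. The paper compresses this into one sentence; your two-fold application of \ax{AxShift} and the closure identity $\mu\vq=\lambda\vp+(1-\lambda)\vr$, together with the explicit handling of the degenerate cases, simply supplies the details the paper leaves implicit.
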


\begin{proof}
By Cor.~\ref{cor-twp}, we have to prove the ``$\Longrightarrow$''
part only.  For that, let us take three points $a'$, $b'$ and $c'$
from $MS^\ddag_m$ for which there is a $\mu\in \Q$ satisfying
$\Bw({}^\ddag a',\mu b', {}^\ddag c')$. If there are no such points,
$MS^\ddag_m$ is convex, flat and concave at the same time, see Rem.~\ref{rem-conv}. Otherwise,  by \ax{AxShift}, there are {\it inertial} observers $a$, $b$ and $c$
in a clock paradox situation such that $1^a_m=a'$, $1^b_m=b'$ and
$1^c_m=c'$.  Thus from Prop.~\ref{prop-tp} we get that
$MS^\ddag_m$ has the desired property.
\end{proof}

\section{Consequences for Newtonian kinematics}
\label{conk-sec}

Let us investigate the connection between No-CP and the Newtonian
assumption of the absoluteness of time.

\begin{description}
\item[\Ax{AbsTime}]\index{\ax{AbsTime}} All {\it inertial} observers measure the same
   elapsed time between any two events:
\begin{equation*}
\forall m,k\in \IOb\enskip \forall e_1,e_2\in Ev\quad \time_m(e_1,e_2)=\time_k(e_1,e_2).
\end{equation*}
\end{description}

To strengthen our axiom system, we introduce two axioms that ensure
the existence of several {\it inertial} observers.

\begin{description}
\item[\Ax{AxThExp^+}]\label{axthex+}\index{\ax{AxThExp^+}} {\it Inertial} observers
  can move in any direction at any finite speed:
\begin{equation*}
\forall m\in \IOb\enskip \forall \vp,\vq\in \Q^d\quad p_\tau\neq q_\tau 
\then \exists k\in\IOb\quad k\in ev_m(\vpp)\cap ev_m(\vqq).
\end{equation*}
\end{description}
Let us also introduce a less theoretical version of this axiom.
\begin{description}
\item[\Ax{AxThExp^*}]\label{axthex*}\index{\ax{AxThExp^*}} {\it Inertial} observers
  can move in any direction at a speed which is arbitrarily close to any finite speed:
\begin{multline*}
\forall m\in \IOb\enskip \forall \vp,\vq\in \Q^d\enskip\forall
\varepsilon\in \Q^+ \quad p_\tau\neq q_\tau \\\then
\exists k\in\IOb\enskip \exists {\vqq}'\in \Q^d\quad
|\vq-{\vqq}'|<\varepsilon \lland k\in ev_m(\vpp)\cap ev_m({\vqq}').
\end{multline*}
\end{description}

By the following theorem, \ax{NoCP} logically implies \ax{AbsTime} if
\ax{AxThExp^+} (and some auxiliary axioms) are assumed; however, if we
assume the more experimental axiom \ax{AxThExp^*} instead of \ax{AxThExp^+}, \ax{AbsTime}
does not follow from \ax{NoCP}, which is an astonishing fact since it
means that without the theoretical assumption \ax{AxThExp^+} we would
not be able to conclude that time is absolute in the Newtonian sense
even if there were no clock paradox in our world.

\begin{thm} 
\label{thm-univtime}
\begin{align}
\label{eq-notwp1}
\ax{AbsTime}&\models \ax{NoCP}, \text{ and}\\
\label{eq-notwp1.5}
\ax{Kinem_0}+\ax{AxShift}+\ax{AxThExp^+}+\ax{NoCP}&\models \ax{AbsTime}, \text{ but}\\
\label{eq-notwp2}
\ax{Kinem_0}+\ax{AxShift}+\ax{AxThExp^*}+\ax{NoCP}&\not\models \ax{AbsTime}.
\end{align}
\end{thm}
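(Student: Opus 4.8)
The statement has three parts, which I would prove separately.

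\textbf{Part \eqref{eq-notwp1}} is a one-line computation. In any clock paradox situation $\mtwp_m(\widehat{ac},b)(e_a,e,e_c)$ the events are linearly ordered in $m$'s time, say $\time_m(e_a)<\time_m(e)<\time_m(e_c)$; assuming \ax{AbsTime}, every inertial observer measures the same elapsed time between each pair, so
\begin{equation*}
\time_a(e_a,e)+\time_c(e,e_c)=\time_m(e_a,e)+\time_m(e,e_c)=\time_m(e_a,e_c)=\time_b(e_a,e_c),
\end{equation*}
which is $\time(\widehat{ac}=b)(e_a,e,e_c)$; the reversed order is symmetric. Hence $\ax{AbsTime}\models\ax{NoCP}$.

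\textbf{Part \eqref{eq-notwp1.5}.} Since \ax{Kinem_0}+\ax{AxShift} is assumed, Thm.~\ref{thm-clp} turns \ax{NoCP} into ``$MS^\ddag_m$ is flat for every $m\in\IOb$''. I would then use \ax{AxThExp^+} to pin down the geometry. It gives $Cd_m=\Q^d$ and, for every $m\in\IOb$ and every velocity $\vv$, an inertial observer whose worldline according to $m$ has velocity $\vv$; as that worldline is a line (\ax{AxLinTime}), its $\ddag$-time-unit vector is a positive multiple of $\langle1,\vv\rangle$, and since $MS^\ddag_m$ is flat and (as all velocities are realized) nontrivial, by Rem.~\ref{rem-conv} this multiple depends only on $\vv$. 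Thus $MS^\ddag_m$ is the ``graph'' $\{\,\rho(\vv)\langle1,\vv\rangle:\vv\in\Q^{d-1}\,\}$ for some $\rho\colon\Q^{d-1}\to\Q^+$; a componentwise computation shows that flatness of $MS^\ddag_m$ is equivalent to $1/\rho$ being an affine function on $\Q^{d-1}$, hence (being positive) constant, and since $1^m_m=\vet$ gives $\rho(\vo)=1$ we get $\rho\equiv1$, i.e.\ $|(\vekm)_\tau|=1$ for every $k\in\IOb$. By \ax{AxLinTime} it follows that any two events both containing a common inertial observer $j$ receive the same elapsed time from every inertial observer (each measures $|\lambda|$, where $\loc_m(e_1)-\loc_m(e_2)=\lambda\cdot1^j_m$). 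To finish, let $e_1,e_2$ be localized by inertial $m,k$: if $\time_k(e_1)\ne\time_k(e_2)$, then \ax{AxThExp^+} applied in $k$'s frame to $\loc_k(e_1),\loc_k(e_2)$ produces an inertial $j\in e_1\cap e_2$, whence $\time_m(e_1,e_2)=\time_j(e_1,e_2)=\time_k(e_1,e_2)$; the case $\time_k(e_1)=\time_k(e_2)$ but $\time_m(e_1)\ne\time_m(e_2)$ is impossible (the previous sub-case with $m,k$ swapped would give $0=\time_k(e_1,e_2)=\time_m(e_1,e_2)\ne0$); and if both pairs are simultaneous, both times are $0$. This is \ax{AbsTime}. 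The crux is the geometric equivalence ``$MS^\ddag_m$ flat and a graph over the whole velocity space $\iff$ $1/\rho$ affine (hence $\rho\equiv1$)''; the rest is routine bookkeeping with \ax{AxLinTime}.

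\textbf{Part \eqref{eq-notwp2}} needs a counter-model, which I would build for $d\ge3$ (which I take to be in force) over $\Q=\R$. Choose $V\subseteq\R^{d-1}$ dense, with $\vo\in V$ and \emph{no three collinear points} (routine transfinite recursion), and a map $g\colon V\to\R^+$ with $g(\vo)=1$ that is not constant. The universe is a single copy $\R^d$ of an abstract spacetime; for each $\vv\in V$, $\vb\in\R^d$ there is an inertial observer $O_{\vv,\vb}$ with abstract worldline $\{\vb+s\langle1,\vv\rangle:s\in\R\}$ and affine chart $\vp\mapsto M_\vv(\vp-\vb)$, where $M_\vv$ is the linear map fixing $\{p_\tau=0\}$ pointwise and sending $\langle1,\vv\rangle$ to $\frac1{g(\vv)}\vet$ (no photons, say). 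One computes $w^{O_{\vv_2,\vb_2}}_{O_{\vv_1,\vb_1}}$ to be affine with linear part $M_{\vv_1}M_{\vv_2}^{-1}$, so $O_{\vv_2,\vb_2}$ has velocity $g(\vv_1)(\vv_2-\vv_1)$ relative to $O_{\vv_1,\vb_1}$ and time-unit vector $\frac{g(\vv_2)}{g(\vv_1)}\langle1,g(\vv_1)(\vv_2-\vv_1)\rangle$ there. Hence the velocities realized in the frame of any inertial observer form a scaled translate of $V$, still with no three collinear points, so $MS^\ddag_m$ contains no non-degenerate triple $\vp,\vq,\vr$ with a positive multiple of $\vq$ strictly between $\vp$ and $\vr$; in particular $MS^\ddag_m$ is flat, and \ax{NoCP} holds by Thm.~\ref{thm-clp}. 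Denseness of $V$ yields \ax{AxThExp^*}; \ax{AxShift} holds because every velocity occurs at every position; \ax{AxSelf}, \ax{AxLinTime}, \ax{AxEv} and the frame axioms are read off from the charts (the scaling in $M_\vv$ is exactly what makes \ax{AxLinTime} hold, and each $ev_m$ is injective because through each abstract point pass observers of every $V$-velocity). Finally, with $m=O_{\vo,\vo}$ and $k=O_{\vv_0,\vo}$, $g(\vv_0)\ne1$, the time-unit vector of $k$ relative to $m$ is $\langle g(\vv_0),g(\vv_0)\vv_0\rangle$, so by \ax{AxLinTime} $m$ and $k$ disagree on the elapsed time between two events on $\wl_m(k)$: $\ax{AbsTime}$ fails. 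The main obstacle is the verification of the model — in particular checking, from ``no three collinear'', that $MS^\ddag_m$ really is flat (no clock paradox situation can arise), which is precisely the point at which \ax{AxThExp^+}, by forcing $V=\Q^{d-1}$, would collapse the construction, in agreement with \eqref{eq-notwp1.5}.
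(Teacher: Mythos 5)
Your parts \eqref{eq-notwp1} and \eqref{eq-notwp1.5} follow the paper's route: the paper also dismisses \eqref{eq-notwp1} as obvious and proves \eqref{eq-notwp1.5} by noting that Thm.~\ref{thm-clp} makes $MS^\ddag_m$ flat and that \ax{AxThExp^+} forces it to meet every nonhorizontal line, whence it is the horizontal hyperplane through $\vet$ and all time-unit vectors have time component $\pm1$; your ``$1/\rho$ is affine, positive, hence constant, and $\rho(\vo)=1$'' computation is just a more explicit version of that step, and your closing bookkeeping (using \ax{AxThExp^+} once more to route an inertial observer through any two non-simultaneous events) fills in a step the paper leaves implicit. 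Part \eqref{eq-notwp2} is where you genuinely diverge. The paper keeps the full affine structure and simply \emph{tilts} the Minkowski sphere: it takes $MS^\ddag_{\langle1,0\rangle}=\{x\in\Q^d: x_\tau-x_2=1,\ x_\tau>0\}$ and deletes exactly the observers whose direction is parallel to this hyperplane (the condition $p_\tau-q_\tau\neq p_2-q_2$ in the definition of $\IOb$), so $MS^\ddag_m$ is a \emph{nontrivially} flat set, \ax{NoCP} holds via Cor.~\ref{cor-twp} with genuine clock-paradox situations present, and \ax{AbsTime} fails because the tilted hyperplane has points with $x_\tau\neq1$. You instead thin the velocity set to a dense set with no three collinear velocities, so that no clock-paradox situation can arise at all and $MS^\ddag_m$ is flat only in the vacuous sense of Rem.~\ref{rem-conv}. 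Both are legitimate counter-models, and yours isolates a different (and instructive) failure mode: \ax{NoCP} can hold simply because \ax{AxThExp^*} never supplies the three coplanar observers needed to stage the paradox. The price is that the paper's model is markedly simpler to verify, works over an arbitrary Euclidean ordered field, and covers $d=2$, whereas yours needs $\Q=\R$ plus a transfinite-recursion construction of $V$.

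One caveat you should make explicit: the theorem is stated for every $d\ge2$ (unlike Thms.~\ref{thm-slowtime} and \ref{thm-simdist}, it carries no $d\ge3$ hypothesis), and your construction collapses for $d=2$, where the velocity space is one-dimensional and any three points of a dense $V\subseteq\Q$ are collinear. So as written your part \eqref{eq-notwp2} proves the non-implication only for $d\ge3$; to cover $d=2$ you would need a different model, e.g.\ the paper's tilted-line construction, which degrades gracefully to that case.
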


\begin{proof}
Item \eqref{eq-notwp1} is obvious.

To prove \eqref{eq-notwp1.5}, let us note that $MS^\ddag_m$ is flat by
Thm.~\ref{thm-clp} since \ax{Kinem_0}, \ax{AxShift} and \ax{NoCP} are
assumed. By axiom \ax{AxThExp^+}, $MS^\ddag_m$ intersects any
nonhorizontal line. So $MS^\ddag_m$ has to be a horizontal hyperplane
containing $\langle 1,0,\ldots,0 \rangle$. Hence the time components
of time-unit vectors are the same for every {\it inertial}
observer. So \ax{AbsTime} follows from the assumptions.

To prove \eqref{eq-notwp2}, we construct a model in which
\ax{Kinem_0}, \ax{AxShift}, \ax{AxThExp^*} and \ax{NoCP} hold, but
\ax{AbsTime} does not.  Let $\langle \Q;+,\cdot,<\rangle$ be any
Euclidean ordered field.  Let $B\leteq \Q^d\times\Q^d$.  Let
$\IOb\leteq \setopen \langle \vp,\vqq\rangle\in\B\setmid p_\tau\neq
q_\tau \lland p_\tau-q_\tau\neq p_2-q_2\setclose$.  Let
\begin{equation*}
MS^\ddag_{\langle 1,0\rangle}\leteq \Setopen x\in\Q^d\setmid x_\tau-x_2=1 \lland x_\tau>0\Setclose.
\end{equation*}
Let $W(\langle1,0\rangle,\langle \vp,\vqq\rangle,\vrr)$ hold iff $\vr$
is in $line(\vp,\vqq)$.  Now the worldview relation is given for {\it
  inertial} observer $\langle1,0\rangle$.  For any other {\it
  inertial} observer $\langle \vp,\vqq\rangle$, let $w^{\langle
  \vp,\vqq\rangle}_{\langle1,0\rangle}$ be an affine transformation
that takes $\vo$ to $\vp$ while its linear part takes $\vet$ to
$MS^\ddag_{\langle 1,0\rangle}\cap\setopen
\lambda(\vp-\vqq):\lambda\in \Q\setclose$, and leaves the other basis
vectors fixed.  From these worldview transformations, it is easy to define
the worldview relations of other {\it inertial} observers, hence our
model is given.  It is not difficult to see that \ax{Kinem_0},
\ax{AxShift} and \ax{AxThExp^*} are true in this model.  Since
$MS^\ddag_{\langle 1,0\rangle}$ is flat and the worldview
transformations are affine ones, it is clear that $MS^\ddag_m$ is flat
for all $m\in \IOb$.  Hence \ax{NoCP} is also true in this model by
Cor.~\ref{cor-twp}.  It is easy to see that \ax{AbsTime} implies that
$(\vekm)_\tau=\pm1$ for all $m,k\in\IOb$.  Hence \ax{AbsTime} is not
true in this model, as we claimed.
\end{proof}

\section{Consequences for special relativity theory}
\label{consr-sec}

Now we investigate the consequences of our characterization for
special relativity. To do so, let us first note that if $d\ge3$, our
theory \ax{SpecRel_0} is strong enough to prove the most important
predictions of special relativity, such as that moving clocks get out
of synchronism, see Section~\ref{sec-srwv}. At the same time, \ax{SpecRel_0} is
weak enough not to prove every prediction of special
relativity. For example, it does not entail CP or the slowing down of
relatively moving clocks. Thus it is possible to compare these
predictions within models of \ax{SpecRel_0}. To investigate the
logical connection between them, let us formulate the slowing down
effect on moving clocks within our FOL framework.

\begin{description}
\item[\Ax{SlowTime}]\index{\ax{SlowTime}} Relatively moving {\it inertial} observers' clocks slow down:
\begin{equation*}
\forall m,k\in \IOb\quad \wl_m(k)\neq \wl_m(m) \then \left|(\vekm)_\tau\right|>1.
\end{equation*}
\end{description}

To prove a theorem about the logical connection between \ax{SlowTime}
and \ax{CP}, we need the following lemma, which states that the fact
that three {\it inertial} observers are in a clock paradox situation does not depend on
the {\it inertial} observer that watches them.

\begin{lem}
\label{lem-mtwp}
Let $d\ge3$.
Assume \ax{AxPh}, \ax{AxEv} and \ax{AxLinTime}.
Let $m,a,b,c\in\IOb$ and let $e_a,e,e_b\in Ev$.
Then
\begin{equation*}
\mtwp_m(\widehat{ac},b)(e_a,e,e_c)\iff \mtwp_b(\widehat{ac},b)(e_a,e,e_c).
\end{equation*}
\end{lem}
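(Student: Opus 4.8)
The plan is as follows. First I would observe that the four membership clauses in the definition of $\mtwp$ — that $a\in e_a\cap e$, $b\in e_a\cap e_c$, $c\in e\cap e_c$ and $b\notin e$ — mention no observer's coordinatization, so they are literally the same condition for the index $m$ and for $b$; hence only the time-ordering clause ``$\time_m(e_a)<\time_m(e)<\time_m(e_c)$ or $\time_m(e_a)>\time_m(e)>\time_m(e_c)$'' needs to be compared with its $b$-analogue. I would dispose of definedness up front: by \ax{AxEv}, $Ev_n=Ev_b$ for all inertial $n$, and by \ax{AxPh} (hence \ax{AxPh_0}) and Prop.~\ref{prop-sr0}, $Cd_n=\Q^d$ and $ev_n$ is injective, so $\loc_n$ (hence $\time_n$) is defined on the whole of $Ev_n$; thus either $e_a,e,e_c\in Ev_m=Ev_b$, in which case $\loc_m,\loc_b,\time_m,\time_b$ are defined on them, or one of these events is not coordinatized by any inertial observer, and then, by Conv.~\ref{conv-eq}, both the $m$-clause and the $b$-clause are false. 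So I may assume they are localized by $m$ and by $b$, and write $P_a=\loc_m(e_a),\ P=\loc_m(e),\ P_c=\loc_m(e_c)$ and $Q_a=\loc_b(e_a),\ Q=\loc_b(e),\ Q_c=\loc_b(e_c)$.

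The core idea is to re-express the time-ordering clause purely in terms of the \emph{signs of the time-components of two timelike vectors}, because those signs are transported correctly by a worldview transformation even though such a transformation is in general a genuine boost that scrambles simultaneity. Concretely I would proceed in three substeps. (i) Since $d\ge3$ and \ax{AxPh},\ax{AxEv} hold, Thm.~\ref{thm-poi}(1) makes $w^a_m$, $w^c_m$ and $\varphi:=w^m_b$ each a Poincar\'e transformation composed with a dilation and a field-automorphism-induced map; so each of these and its inverse carries lines to lines and preserves the causal character of vectors — for the field-automorphism-induced factor this needs that an automorphism of the Euclidean ordered field $\Q$ preserves $<$, since $0<x\leftrightarrow\exists y\,(y\neq0\land x=y^2)$ — and each such map is ``orthochronous-consistent'': its linear part $L$ (a Lorentz map composed with a positive dilation and a field-automorphism-induced map) has a single $\varepsilon\in\{1,-1\}$ with $\mathrm{sign}\big([L\vec w]_\tau\big)=\varepsilon\cdot\mathrm{sign}(w_\tau)$ for all timelike $\vec w$. (ii) By \ax{AxLinTime} and $w^n_n=Id_{\Q^d}$ (Prop.~\ref{prop-wkm}, using $ev_n$ injective and $Cd_n=\Q^d$), $\wl_n(n)=line(\voo,\vet)$ is a timelike line for each inertial $n$; since $w^a_m$ maps this line $\wl_a(a)$ onto $\wl_m(a)$ and preserves causal character, $\wl_m(a)$ (and likewise $\wl_m(c)$) is a timelike line. (iii) Since $b\in e_a\cap e_c$ but $b\notin e$ we get $e\ne e_a$, $e\ne e_c$, so by injectivity of $ev_m$ the points $P_a,P,P_c$ are pairwise distinct with $P_a,P\in\wl_m(a)$ and $P,P_c\in\wl_m(c)$; hence $P-P_a$ and $P-P_c$ are nonzero vectors along timelike lines, so they are timelike.

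With all that set up, the conclusion is a one-line sign computation: the $m$-clause says exactly that $\time_m(e)$ lies strictly between $\time_m(e_a)$ and $\time_m(e_c)$, i.e. $(P-P_a)_\tau(P-P_c)_\tau<0$, which (as $(P-P_a)_\tau,(P-P_c)_\tau\ne0$ by timelikeness) means $(P-P_a)_\tau$ and $(P-P_c)_\tau$ have opposite signs; applying $\varphi$ gives $Q-Q_a=L(P-P_a)$ and $Q-Q_c=L(P-P_c)$ (the translation cancels and $L$ is additive), and orthochronous-consistency then shows $(P-P_a)_\tau,(P-P_c)_\tau$ have opposite signs iff $(Q-Q_a)_\tau,(Q-Q_c)_\tau$ do, i.e. iff $\time_b(e)$ lies strictly between $\time_b(e_a)$ and $\time_b(e_c)$, i.e. iff the $b$-clause holds. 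Together with the first step this gives the desired equivalence $\mtwp_m(\widehat{ac},b)(e_a,e,e_c)\leftrightarrow\mtwp_b(\widehat{ac},b)(e_a,e,e_c)$.

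I expect the main obstacle to be substeps (ii)--(iii): realizing that a clock-paradox configuration automatically forces the segments $P-P_a$, $P-P_c$ (and their $b$-frame counterparts) to be timelike, and that a Poincar\'e-plus-dilation-plus-field-automorphism-induced map transports the sign of a timelike vector's time-component by a single global $\pm1$. Both are standard facts about such maps, but they are precisely what makes the $m$-ordering and the $b$-ordering agree even though $w^m_b$ need not preserve horizontal (simultaneity) hyperplanes; the membership clauses and all definedness issues are just bookkeeping.
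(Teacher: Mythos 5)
Your proof is correct and follows the paper's own route: both reduce the claim to Thm.~\ref{thm-poi}(1) applied to the worldview transformation between $m$ and $b$, together with the observation that the membership clauses of $\mtwp$ are observer-independent, so only the time-betweenness clause needs transporting. The only divergence is at that last step, which the paper dispatches in one line (invoking \ax{AxLinTime} to make the field automorphism trivial and then asserting that betweenness of the time coordinates is preserved), whereas you justify it explicitly by showing $P-P_a$ and $P-P_c$ are timelike and that the transformation shifts the sign of the time component of timelike vectors by a single global $\pm1$ — a more careful spelling-out of the same argument that, as a bonus, needs only order-preservation of the field automorphism rather than its triviality.
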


\begin{proof}
By (1) of Thm.~\ref{thm-poi}, \ax{AxPh} and \ax{AxEv} imply that
$w^b_m$ is a composition of a Poincar\'e transformation, a dilation
and a field-automorphism-induced map.  By \ax{AxLinTime}, the
field-automorphism is trivial.  Hence $\time_m(e)$ is between
$\time_m(e_a)$ and $\time_m(e_c)$ iff $\time_b(e)$ is between
$\time_b(e_a)$ and $\time_b(e_c)$.  This completes the proof since the
other parts of our definition of $\mtwp$ do not depend on {\it
  inertial} observers $m$ and $b$.
\end{proof}

We cannot consistently extend \ax{SpecRel_0} by axiom \ax{AxThExp^+} since
\ax{SpecRel_0} implies the impossibility of faster than light motion
of {\it inertial} observers if $d\ge3$, see, e.g., \cite{AMNsamples}. That is why we have
to weaken this axiom.

\begin{description}
\item[\Ax{AxThExp}]\index{\ax{AxThExp}}\label{axthex} {\it Inertial} observers can
  move in any direction at any speed slower than 1, i.e., the speed of
  light:
\begin{equation*}
\forall m\in \IOb\enskip \forall \vp,\vq\in \Q^d\quad
|\vp_\sigma-\vq_\sigma|<|p_\tau-q_\tau| \then \exists
k\in\IOb\quad k\in ev_m(\vpp)\cap ev_m(\vqq).
\end{equation*}
\end{description}

The following theorem shows that the slowing down of moving clocks
(\ax{SlowTime}) is logically stronger than \ax{CP}.

\begin{thm}
\label{thm-slowtime} 
Let $d\ge3$. Then
\begin{align}
\label{eq-slowtime1}
\ax{SpecRel_0}+\ax{AxLinTime}+\ax{SlowTime}&\models \ax{CP},\text{ but}\\
\label{eq-slowtime2}
\ax{SpecRel_0}+\ax{AxLinTime}+\ax{AxShift}+\ax{AxThExp}+\ax{CP}&\not\models \ax{SlowTime}.
\end{align}
\end{thm}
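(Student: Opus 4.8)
\emph{Part \eqref{eq-slowtime1}.} The plan is to pass to $b$'s own frame and play \ax{SlowTime} off against \ax{AxLinTime}. First I would record that \ax{SpecRel_0}$+$\ax{AxLinTime}$\models$\ax{Kinem_0}: we have \ax{AxEv} and \ax{AxLinTime} already, and \ax{AxSelf} follows from \ax{AxSelf_0} and \ax{AxPh}, since \ax{AxPh} forces $Cd_m=\Q^d$ for inertial $m$ by Item~\eqref{item-crd} of Prop.~\ref{prop-sr0}. Hence Prop.~\ref{prop-sr0}, Prop.~\ref{prop-tp} and Lemma~\ref{lem-mtwp} are all available (we are in dimension $d\ge3$). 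Now let $\mtwp_m(\widehat{ac},b)(e_a,e,e_c)$ hold; by Lemma~\ref{lem-mtwp} I may take $m=b$, so $\mtwp_b(\widehat{ac},b)(e_a,e,e_c)$ holds, giving $b\notin e$ and $\time_b$ strictly monotone on $\{e_a,e,e_c\}$. The first step is that $a$ and $c$ move relative to $b$: since $a\in e_a\cap e$, the point $\loc_b(e)$ lies on $\wl_b(a)$; were $\wl_b(a)=\wl_b(b)$, which is the time-axis by \ax{AxSelf}, we would get $\loc_b(e)_\sigma=\vo$ and hence $b\in e$ by \ax{AxSelf}, a contradiction. So $\wl_b(a)\ne\wl_b(b)$, and similarly $\wl_b(c)\ne\wl_b(b)$.

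The second step is the clock comparison. With $\vp=\loc_b(e_a)$ and $\vq=\loc_b(e)$ both on $\wl_b(a)$, which is a line of direction $1^a_b$ by \ax{AxLinTime}, we have $\vq-\vp=\lambda\,1^a_b$ for some $\lambda$; comparing time-components gives $\time_b(e_a,e)=|\lambda|\cdot|(1^a_b)_\tau|$, and the uniform-time clause of \ax{AxLinTime} gives $\time_a(e_a,e)\cdot|1^a_b|=|\vp-\vq|=|\lambda|\cdot|1^a_b|$, so $\time_b(e_a,e)=|(1^a_b)_\tau|\cdot\time_a(e_a,e)$. Since $a$ moves relative to $b$, \ax{SlowTime} gives $|(1^a_b)_\tau|>1$, whence $\time_b(e_a,e)>\time_a(e_a,e)$ (and $\time_a(e_a,e)\ne0$, as $\loc_a(e_a),\loc_a(e)$ are distinct points of $a$'s time-axis). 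Symmetrically $\time_b(e,e_c)>\time_c(e,e_c)$. Adding, and using $\time_b(e_a,e)+\time_b(e,e_c)=\time_b(e_a,e_c)$ (strict monotonicity of $\time_b$), we obtain $\time_a(e_a,e)+\time_c(e,e_c)<\time_b(e_a,e_c)$, i.e.\ $\time(\widehat{ac}<b)(e_a,e,e_c)$. (Alternatively one finishes via Prop.~\ref{prop-tp}: the two inequalities say that ${}^\ddag 1^a_b$ and ${}^\ddag 1^c_b$ have time-component $>1$, and in a clock paradox situation their space-components are anti-parallel, which forces $\convex({}^\ddag 1^a_b,\vet,{}^\ddag 1^c_b)$.) The only delicate points here are orientation and existential-equality book-keeping; conceptually it is easy.

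\emph{Part \eqref{eq-slowtime2}.} The plan is to perturb the worldview transformations of the standard model by velocity-dependent dilations. Fix a Euclidean ordered field and $d\ge3$; distinguish one inertial observer $O$ with the standard Minkowski worldview: $Cd_O=\Q^d$, $O$ sees itself on the time-axis, there is a photon along every slope-$1$ line, and for every velocity $\vv\in\Q^{d-1}$ with $|\vv|<1$ and every coordinate point there is an inertial observer whom $O$ sees moving through that point with velocity $\vv$. For such an observer $k$ I declare $w^k_O$ to be an affine bijection of $\Q^d$ whose linear part is $\rho(\vv)\cdot B_\vv$, where $B_\vv$ is the Lorentz boost to $\vv$ and $\rho\colon\{\vv:|\vv|<1\}\to\Q^+$ is a dilation factor (to be chosen) with $\rho(\vo)=1$; then $ev_k:=ev_O\circ w^k_O$ defines $k$'s worldview and hence the whole model. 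For any choice of $\rho$ one checks---routinely---that \ax{AxFrame}, \ax{AxEOF}, \ax{AxSelf_0}, \ax{AxPh} (boosts and dilations preserve slope-$1$ lines), \ax{AxEv} and \ax{AxLinTime} hold, and that the abundance of observers yields \ax{AxShift} and \ax{AxThExp}; so the model satisfies \ax{SpecRel_0}$+$\ax{AxLinTime}$+$\ax{AxShift}$+$\ax{AxThExp}. Composing, $w^k_m$ acquires dilation factor $\rho(\vv_k)/\rho(\vv_m)$ (where $\vv_k,\vv_m$ are the velocities of $k,m$ relative to $O$), so $1^k_m$ is timelike and ${}^\ddag 1^k_m$ has time-component $r_m(\vec w)$ depending only on the velocity $\vec w$ of $k$ relative to $m$; thus $MS^\ddag_m=\{\,r_m(\vec w)\langle1,\vec w\rangle:|\vec w|<1\,\}$ is a radial graph over the velocity ball.

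It remains to choose $\rho$ so that \ax{SlowTime} fails while \ax{CP} holds. For the first, it is enough that $r_O(\vec w)\le1$ for some $\vec w\ne\vo$, i.e.\ that $\rho$ dips slightly below the reciprocal Lorentz factor $\sqrt{1-|\vec w|^2}$ somewhere; the observer moving with that velocity relative to $O$ then violates \ax{SlowTime}. For the second, putting $\phi_m:=1/r_m$ and $g(\vv):=\rho(\vv)/\sqrt{1-|\vv|^2}$ and using the velocity-addition identity $1-|\vv_k|^2=(1-|\vv_m|^2)(1-|\vec w|^2)/(1+\vv_m\cdot\vec w)^2$, I would derive the closed form
\begin{equation*}
\phi_m(\vec w)=\frac{g(\vv_m)\,(1+\vv_m\cdot\vec w)}{g(\vv_k)},
\end{equation*}
and show that $MS^\ddag_m$ is convex iff $\phi_m$, as a function of $\vec w$, is concave along every line through $\vo$ (measured against its value $\phi_m(\vo)=1$)---so \ax{CP} follows from Cor.~\ref{cor-twp} (or directly from Prop.~\ref{prop-tp} and Lemma~\ref{lem-mtwp}). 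I would then take $\rho$ to be a small perturbation of the constant cocycle of the standard model, concretely $\rho(\vv)=(1-\vc\cdot\vv)^{-1/2}$ for a sufficiently small fixed vector $\vc$ (so that $g(\vv)=\big((1-|\vv|^2)(1-\vc\cdot\vv)\big)^{-1/2}$ and $\phi_O=\sqrt{(1-|\vv|^2)(1-\vc\cdot\vv)}$ is the strictly concave geometric mean of a strictly concave and an affine function), and verify that $g$ drops below $1$ near $\vv=-\vc$---killing \ax{SlowTime}---while every $\phi_m$ stays concave at $\vo$ along lines. Carrying out this last verification uniformly in $m$, i.e.\ pushing the concavity computation through the relativistic velocity-addition formula for the boosted frames, is the main obstacle; with it in hand, \ax{CP} holds and \ax{SlowTime} fails, which gives \eqref{eq-slowtime2}.
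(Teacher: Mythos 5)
Your Part~(1) is correct and is exactly the paper's route: the paper disposes of \eqref{eq-slowtime1} in one line (``clear by Lem.~\ref{lem-mtwp}''), and what you wrote is the honest expansion of that line --- reduce to $m=b$ by Lem.~\ref{lem-mtwp}, note that $a$ and $c$ must move relative to $b$ (else $b\in e$ by \ax{AxSelf}), and convert the two clauses of \ax{AxLinTime} into $\time_b(e_a,e)=|(1^a_b)_\tau|\cdot\time_a(e_a,e)$ so that \ax{SlowTime} does the rest.

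For Part~(2) you follow the same strategy as the paper --- build a model in which every $MS^\ddag_m$ is nontrivially convex (so \ax{CP} holds by Cor.~\ref{cor-twp}) while some $MS^\ddag_m$ contains a point of time component $\le 1$ (so \ax{SlowTime} fails) --- but your countermodel is a concrete analytic perturbation where the paper's is abstract, and you leave the decisive step unfinished. The verification you flag as ``the main obstacle,'' namely convexity of $MS^\ddag_m$ \emph{uniformly in $m$} via the velocity-addition formula, is a genuine hole as you have set it up, but it is one you created for yourself: the relations $\Bw$, $\convex$, $\concave$ are defined purely in terms of betweenness and positive scaling, hence are preserved by linear bijections, and $MS^\ddag_m$ is (up to the ${}^\ddag$ normalization) the image of $MS^\ddag_O$ under the linear part of $w^{O}_{m}$. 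So convexity need only be checked for the single distinguished observer; this is exactly how the paper's proof closes the case (``since $MS^\ddag_{\langle1,0\rangle}$ is convex and the worldview transformations are affine ones, $MS^\ddag_m$ is convex for all $m$''). With that observation your remaining obligation is the convexity of the one radial graph $\Setopen g(\vv)\langle1,\vv\rangle \setmid |\vv|<1\Setclose$, which your concavity criterion for $\sqrt{(1-|\vv|^2)(1-\vc\cdot\vv)}$ does settle. Note finally that the paper avoids even this computation: it first picks an arbitrary nontrivial convex $M\subseteq\Q^d$ with $\vet\in M$ and some $\vp\in M$ with $|p_\tau|<1$, and only then defines the worldview transformations (a Lorentz transformation composed with a dilation and a translation, sending $\vet$ to the unique point of $M$ on the relevant line through the origin, using Rem.~\ref{rem-convMS} for uniqueness) so that $MS^\ddag_{\langle1,0\rangle}=M$. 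Your explicit cocycle $\rho(\vv)=(1-\vc\cdot\vv)^{-1/2}$ is a perfectly good instance of such an $M$, but the extra analysis buys nothing that the abstract choice does not already provide.
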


\begin{proof}
Item \eqref{eq-slowtime1} is clear by Lem.~\ref{lem-mtwp}.

To prove Item \eqref{eq-slowtime2}, let us construct a model in which
\ax{SpecRel_0}, \ax{AxLinTime}, \ax{AxShift}, \ax{AxThExp} and \ax{CP}
hold, but \ax{SlowTime} does not. Let $\langle \Q;+,\cdot,<\rangle$ be
any Euclidean ordered field. Let $B\leteq \Q^d\times\Q^d$. Let
$\IOb\leteq \setopen \langle \vp,\vqq\rangle\in\B\setmid
|\vp_\sigma-\vq_\sigma|<|p_\tau-q_\tau|\setclose$. It is easy to see
that there is a nontrivial convex subset $M$ of $\Q^d$ such that
$\vet\in M$ and $|p_\tau|<1$ for some $\vp\in M$. Let
$MS^\ddag_{\langle 1,0\rangle}$ be such a convex subset of $\Q^d$. Let
$W(\langle1,0\rangle,\langle \vp,\vqq\rangle,\vrr)$ hold iff $\vr$ is
in $line(\vp,\vqq)$. Now the worldview relation is given for {\it
  inertial} observer $\langle1,0\rangle$. By Rem.~\ref{rem-convMS},
$MS^\ddag_{\langle 1,0\rangle}$ intersects a line at most once. For
any other {\it inertial} observer $\langle \vp,\vqq\rangle$, let
$w^{\langle \vp,\vqq\rangle}_{\langle1,0\rangle}$ be such a
composition of a Lorentz transformation, a dilation and a translation
which takes $\vo$ to $\vp$ while its linear part takes $\vet$ to the
unique element of $MS^\ddag_{\langle 1,0\rangle}\cap\setopen
\lambda(\vp-\vqq):\lambda\in \Q\setclose$, and leaves the other basis
vectors fixed. It is easy to see that there is such a transformation. From
these worldview transformations, it is easy to define the worldview
relations of the other {\it inertial} observers. So the model is
given. It is not difficult to see that \ax{SpecRel_0}, \ax{AxShift},
\ax{AxLinTime} and \ax{AxThExp} are true in this model. Since
$MS^\ddag_{\langle 1,0\rangle}$ is convex and the worldview
transformations are affine ones, it is clear that $MS^\ddag_m$ is
convex for all $m\in \IOb$. Hence \ax{CP} is also true in this model
by Cor.~\ref{cor-twp}. It is clear that \ax{SlowTime} is not true
in this model since there is a $\vp\in MS^\ddag_{\langle 1,0\rangle}$
such that $|p_\tau|<1$ (i.e., there is $k\in\IOb$ such that
$|(1^k_{\langle 1,0\rangle})_\tau|<1$); and that completes the proof.
\end{proof}

Like the similar results of \cite{mytdk} and \cite{mythes},
Thm.~\ref{thm-simdist} answers Question 4.2.17 of
Andr\'eka--Madar\'asz--N\'emeti \cite{pezsgo}.  It shows that \ax{CP}
is logically weaker than the symmetric distance axiom of \ax{SpecRel}.

\begin{thm}
\label{thm-simdist}
Let $d\ge3$.
Then
\begin{align}
\label{eq-simdist1}
 \ax{SpecRel_0}+\ax{AxSymDist}&\models \ax{CP}, \text{ but}\\
\label{eq-simdist2}
 \ax{SpecRel_0}+\ax{AxLinTime}+\ax{AxShift}+\ax{AxThExp}+\ax{CP}&\not\models\ax{AxSymDist}.
\end{align}
\end{thm}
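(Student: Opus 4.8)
The plan is to prove the two halves separately; both are short once Thm.~\ref{thm-poi}, Cor.~\ref{cor-twp} and the model already built for Thm.~\ref{thm-slowtime} are available, so the work is mostly bookkeeping together with one genuine inequality.

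For \eqref{eq-simdist1}, note first that $\ax{SpecRel_0}+\ax{AxSymDist}=\ax{SpecRel}$ and $d\ge3$, so part~(2) of Thm.~\ref{thm-poi} gives that $w^k_m$ is a Poincar\'e transformation for all {\it inertial} $m,k$; in particular \ax{SpecRel} implies \ax{Kinem_0} (its consequences \ax{AxSelf} and \ax{AxLinTime} follow from \ax{SpecRel} as remarked after the definition of \ax{Kinem_0}, again using $d\ge3$, and \ax{AxEv}$\in\ax{SpecRel}$), so Cor.~\ref{cor-twp} applies and it suffices to show that $MS^\ddag_m$ is convex for every {\it inertial} $m$. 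Since the linear part of a Poincar\'e transformation is a Lorentz transformation and $\mu(\vet)=1$ with $\vet$ timelike, each time-unit vector $\vekm=w^k_m(\vet)-w^k_m(\voo)$ is a timelike vector of Minkowski length $1$, so $^\ddag\vekm$ lies on the upper unit hyperboloid $H\leteq\setopen\vpp\in\Q^d\setmid p_\tau^2-|\vpp_\sigma|^2=1\lland p_\tau>0\setclose$; hence $MS^\ddag_m\subseteq H$. As convexity in the sense of the paper is inherited by subsets, it is enough to see that $H$ is convex, and this is the single computational point: one checks by elementary field manipulations, valid over any Euclidean ordered field, that the ray from $\vo$ through any $\vqq\in H$ meets the convex set $\setopen\vpp\in\Q^d\setmid p_\tau\ge\sqrt{1+|\vpp_\sigma|^2}\setclose$ exactly in $\setopen s\vqq\setmid s\ge1\setclose$, and that $H$ contains no three distinct collinear points (a reverse Cauchy--Schwarz type inequality: $p_\tau q_\tau-\vpp_\sigma\cdot\vqq_\sigma>1$ for distinct $\vpp,\vqq\in H$). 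Together these give $1<\mu$ whenever $\Bw(\vpp,\mu\vqq,\vrr)$ holds with $\vpp,\vqq,\vrr\in H$ and $\mu\in\Q^+$, i.e., $H$ — and so $MS^\ddag_m$ — is convex, and Cor.~\ref{cor-twp} yields \ax{CP}. (Alternatively, \ax{SpecRel} implies \ax{AxLinTime} and \ax{SlowTime}, so \ax{CP} follows directly from Thm.~\ref{thm-slowtime}\eqref{eq-slowtime1}.)

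For \eqref{eq-simdist2}, I would recycle the model $\mathfrak M$ built in the proof of Thm.~\ref{thm-slowtime}\eqref{eq-slowtime2}: that proof shows $\mathfrak M\models\ax{SpecRel_0}+\ax{AxLinTime}+\ax{AxShift}+\ax{AxThExp}+\ax{CP}$, so $\mathfrak M$ already models the left-hand side of \eqref{eq-simdist2}, and it only remains to see that \ax{AxSymDist} fails in $\mathfrak M$. If it did not, then $\mathfrak M\models\ax{SpecRel_0}+\ax{AxSymDist}=\ax{SpecRel}$, and since $d\ge3$, Thm.~\ref{thm-poi}(2) would force every $w^k_m$ in $\mathfrak M$ to be a Poincar\'e transformation; but then, exactly as above, every time-unit vector would be timelike of Minkowski length $1$ and hence satisfy $|(\vekm)_\tau|\ge1$, contradicting the fact that the construction of $\mathfrak M$ places a point $\vpp$ with $|p_\tau|<1$ into $MS^\ddag_{\langle 1,0\rangle}$ (equivalently, provides a $k\in\IOb$ with $|(1^k_{\langle1,0\rangle})_\tau|<1$). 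Hence $\mathfrak M\not\models\ax{AxSymDist}$, which is \eqref{eq-simdist2}.

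The only step that is not routine is the purely field-theoretic verification in the second paragraph that the unit hyperboloid $H$ is convex in the slightly nonstandard sense used here; everything else is assembling results already proved and observing that the counterexample model of Thm.~\ref{thm-slowtime} can be reused verbatim.
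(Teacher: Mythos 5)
Your proof is correct and follows essentially the same route as the paper's: part \eqref{eq-simdist1} via Thm.~\ref{thm-poi}(2) placing $MS^\ddag_m$ on the upper unit hyperboloid and then invoking Cor.~\ref{cor-twp}, and part \eqref{eq-simdist2} by reusing the countermodel of Thm.~\ref{thm-slowtime}\eqref{eq-slowtime2}, where the paper's one-line reduction ``$\ax{SpecRel_0}+\ax{AxSymDist}\models\ax{SlowTime}$'' is exactly your unpacked Poincar\'e/time-unit-vector contradiction. The only addition is your explicit verification that the hyperboloid is convex in the paper's sense, a detail the paper simply asserts.
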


\begin{proof}
By (2) of Thm.~\ref{thm-poi}, \ax{SpecRel_0} and \ax{AxSymDist} imply
that $w^k_m$ is a Poincar\'e transformation for all $m,k\in\IOb$.
Hence
\begin{equation*}
 MS^\ddag_m\subseteq\Setopen \vp\in \Q^d\setmid p_\tau^2-
 {|\vp_\sigma|}^2 =1 \lland p_\tau>0\Setclose.
\end{equation*} 
Consequently, $MS^\ddag_m$ is convex.  So by Cor.~\ref{cor-twp},
\ax{CP} follows from \ax{SpecRel_0} and \ax{AxSymDist}.

Since \ax{SpecRel_0} and \ax{AxSymDist} imply \ax{SlowTime} if
$d\ge3$, Item \eqref{eq-simdist2} follows from Thm.~\ref{thm-slowtime}.
\end{proof}

It is interesting that, if the quantity part is the field of real
numbers, \ax{AxSymDist} and \ax{SlowTime} are equivalent in the models
of \ax{SpecRel_0} and some auxiliary axioms.  However, that the
quantity part is the field of real numbers
cannot be formulated in any FOL language of spacetime theories.  Consequently, nor can  Thm.~\ref{thm-eqv}, so it cannot be formulated and proved within our FOL frame
either.
\begin{thm}
\label{thm-eqv}
Assume \ax{SpecRel_0}, \ax{AxLinTime}, \ax{AxShift}, \ax{AxThExp}, and
that $\Q$ is the field of real numbers.  Then
\begin{align*}
 \ax{SlowTime}\Iff\ax{AxSymDist}.
\end{align*}
\end{thm}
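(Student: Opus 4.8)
The equivalence splits into two implications, and the implication that \ax{AxSymDist} implies \ax{SlowTime} is the routine one, already implicit in the proof of Thm.~\ref{thm-simdist}: by part~(2) of Thm.~\ref{thm-poi}, the axioms in \ax{SpecRel_0} together with \ax{AxSymDist} (and $d\ge3$) make every worldview transformation $w^k_m$ between {\it inertial} observers a Poincar\'e transformation, so $\mu(\vekm)=1$; hence whenever $k$ moves relative to $m$ the time-unit vector $\vekm$ is a unit timelike vector with nonzero space component, forcing $|(\vekm)_\tau|>1$. No appeal to the reals is needed for this direction.

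The content of the theorem is the converse, and the plan is to deduce from \ax{SlowTime} that every ``dilation factor'' occurring in the worldview transformations is $1$, which (by the first paragraph) is exactly \ax{AxSymDist}. First I would apply part~(1) of Thm.~\ref{thm-poi} together with \ax{AxLinTime} --- the latter killing the field-automorphism-induced map, exactly as in the proof of Lem.~\ref{lem-mtwp} --- to write each $w^k_m$ as an affine map whose linear part is $\lambda_{km}\cdot O_{km}$ with $O_{km}$ a Lorentz transformation and $\lambda_{km}>0$; then $\mu(\vekm)=\lambda_{km}$, and if $v$ is the speed of $k$ relative to $m$ one computes $|(\vekm)_\tau|=\lambda_{km}/\sqrt{1-v^2}$. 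Using Prop.~\ref{prop-wkm}(6) and \ax{AxEv} (so that $w^k_m=w^k_h\circ w^h_m$ for all triples, and $w^m_m$ is the identity), the scalar factors multiply along compositions, $\lambda_{km}=\lambda_{kh}\lambda_{hm}$ and $\lambda_{mm}=1$; fixing a reference observer $m_0$ and setting $\rho(k)\leteq\lambda_{km_0}$ this yields $\lambda_{km}=\rho(k)/\rho(m)$ for a positive ``clock-rate'' function $\rho$ on {\it inertial} observers. Since the speed of $m$ relative to $k$ also equals $v$, applying \ax{SlowTime} to both orderings gives $\sqrt{1-v^2}<\rho(k)/\rho(m)<1/\sqrt{1-v^2}$ whenever $k$ and $m$ are in relative motion.

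It remains to push this to $\rho\equiv\mathrm{const}$. A short sub-argument disposes of observers at rest relative to each other: using \ax{AxThExp} (and \ax{AxShift} where convenient) to produce, for every speed $u\in(0,1)$, an {\it inertial} observer moving at speed $u$ relative to both $m$ and $k$ when $m,k$ are mutually at rest, one obtains $\rho(k)/\rho(m)\in(1-u^2,1/(1-u^2))$ for all such $u$, hence $\rho(k)=\rho(m)$. Thus $\rho$ descends to a positive function $R$ on the open unit ball of velocities --- sending the velocity of an observer relative to $m_0$ to its clock-rate --- with $\sqrt{1-s^2}<R(\vec w)/R(\vec v)<1/\sqrt{1-s^2}$ whenever $\vec v,\vec w$ are realized by observers in relative motion at speed $s$. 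This is where the reals become indispensable: given a velocity $\vec v$ of rapidity $\phi$, break the segment from $\vec 0$ to $\vec v$ into $n$ equal rapidity-steps, each of speed $\tanh(\phi/n)$ --- these intermediate velocities lie in the field precisely because it is $\R$ --- and multiply the estimates to get $R(\vec v)\in\bigl(\cosh^{-n}(\phi/n),\,\cosh^{n}(\phi/n)\bigr)$; letting $n\to\infty$ and using $\cosh^{n}(\phi/n)\to1$ pins $R(\vec v)=1$. Hence every $\lambda_{km}=1$, every $w^k_m$ is a Poincar\'e transformation, and \ax{AxSymDist} follows, since a Poincar\'e transformation preserves Minkowski distance and therefore spatial distance on a common simultaneity hyperplane.

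I expect two steps to need care. One is the linear-algebraic bookkeeping behind the decomposition of $w^k_m$ into a dilation, a Lorentz transformation and a translation: that the scalar factor is multiplicative under composition, that it is inverted by $(w^k_m)^{-1}=w^m_k$, and that ``$k$ moves at speed $v$ relative to $m$'' is symmetric in $m$ and $k$. The other, genuinely analytic, obstacle is the final step over $\R$: one must make the chaining rigorous --- verifying that a fixed element of $\R$ trapped in the shrinking intervals $\bigl(\cosh^{-n}(\phi/n),\cosh^{n}(\phi/n)\bigr)$ can only be $1$ --- and this dependence on completeness of $\R$ is exactly why, as remarked before the statement, Thm.~\ref{thm-eqv} cannot be formulated inside our FOL frame.
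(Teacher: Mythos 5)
The paper gives no proof of Thm.~\ref{thm-eqv}: it only cites \cite[\S 3]{mythes}, so there is nothing in-paper to compare your argument against, and it has to be judged on its own. Judged that way, it is correct and complete, and it is the argument one would expect. The easy direction is the same observation the paper already uses inside the proof of Thm.~\ref{thm-simdist} (via Item (2) of Thm.~\ref{thm-poi}); the substantive direction correctly extracts from Item (1) of Thm.~\ref{thm-poi} (with \ax{AxLinTime} killing the field automorphism, exactly as in Lem.~\ref{lem-mtwp} --- though over $\R$ the automorphism is the identity anyway) a multiplicative dilation factor $\lambda_{km}$ with $\left|(\vekm)_\tau\right|=\lambda_{km}/\sqrt{1-v^2}$, traps $\rho(k)/\rho(m)$ between $\sqrt{1-v^2}$ and $1/\sqrt{1-v^2}$ by applying \ax{SlowTime} to both orderings, and then squeezes the ratio to $1$ by chaining $n$ equal-rapidity intermediate observers supplied by \ax{AxThExp} and letting $n\to\infty$; that last squeeze is precisely the non-first-order, Archimedean/completeness step that makes the hypothesis $\Q=\R$ indispensable, as you note. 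Two small caveats. First, the literal hypothesis of \ax{SlowTime} is $\wl_m(k)\neq\wl_m(m)$, which is also satisfied by a spatially displaced observer at rest relative to $m$, for which a Poincar\'e worldview transformation gives $\left|(\vekm)_\tau\right|=1$, not $>1$; this edge case afflicts the paper's own claim that \ax{AxSymDist} implies \ax{SlowTime} and is not a defect you introduced, but your paraphrase ``moves relative to'' silently repairs the formalization rather than matching it. Second, in the chaining step one should check that consecutive observers in the rapidity chain are in genuine relative motion (so \ax{SlowTime} applies at every link) and that relative speed is symmetric under the dilation-composed-with-Poincar\'e maps; you flag both, so the argument closes.
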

For proof of Thm.~\ref{thm-eqv}, see \cite[\S 3]{mythes}.  This theorem
is interesting because it shows that assuming only that all moving
clocks slow down to some degree implies the exact ratio of the slowing
down of moving clocks (since $\ax{SpecRel_0}+\ax{AxSymDist}$ implies
that the worldview transformations are Poincar\'e ones, see
Thm.~\ref{thm-poi}).
\begin{que}
Does Thm.~\ref{thm-eqv} retain its validity if the assumption that $\Q$
is the field of real numbers is removed? If not, is it still possible
to replace it by a FOL assumption, e.g., by axiom schema \ax{CONT}
used in \cite{twp}, \cite{folfrt}, \cite{mythes} and Chaps.\
\ref{chp-twp}, \ref{chp-grav} and \ref{chp-a}?
\end{que}

We have seen that (the inertial approximation of) CP can be
characterized geometrically within a weak axiom system of kinematics.
We have seen some consequences of this characterization; in
particular, CP is logically weaker than the assumption of the slowing
down of moving clocks or the \ax{AxSymDist} axiom of special
relativity. A future task is to explore the logical connections
between other assumptions and predictions of relativity theories. For
example, in \cite{twp}, \cite{mythes} and Chap.~\ref{chp-twp},
\ax{SpecRel_0^\d}+\ax{AxSymDist} is extended to an axiom system
\ax{AccRel} logically implying the twin paradox (the accelerated
version of CP), but the natural question below, raised by
Thm.~\ref{thm-simdist}, has not been answered yet.
\begin{que} 
Is it possible to weaken \ax{AxSymDist} to \ax{CP} in \ax{AccRel}
without losing the twin paradox as a consequence? See \cite[Que.3.8]{twp}.
\end{que}

\chapter{Extending the axioms of special relativity for dynamics}
\label{chp-dyn}

Another surprising prediction of relativity theory is the equivalence
of mass and energy. To find an axiomatic basis to this prediction, we
have to extend our approach to dynamics.  The results of this chapter
are based on \cite{dyn} and \cite{dyn-studia}.

The idea is that we use collisions for measuring relativistic mass. We
could say that the relativistic mass of a body is a quantity that
shows the magnitude of its influence on the state of motion of the
other bodies it collides with. The bigger the relativistic mass of a
body is, the more it changes the motion of the bodies colliding with
it. To be able to formulate this idea, let us extend our FOL
language by a new $(\d+3)$-ary relation $\Df{\M}$\index{$\M$} for
relativistic mass.  We use this relation to speak about the
relativistic masses of bodies according to observers by translating
$\M(b,\vp,x,k)$ as ``the relativistic mass of body $b$ at coordinate
point $\vp$ is $x$ according to observer $k$.'' Since there can be
more than one $x$ which is $\M$-related to $b$, $\vp$ and $k$, we
introduce the following definition: \df{the relativistic mass of body
  $b$ at $\vp\in\Q^d$ according to observer $k$}\index{relativistic
  mass}, in symbols $\Df{\m_k(b,\vpp)}$\index{$\m_k(b,\vpp)$}, is
defined as $x$ if $\M(b,\vp,x,k)$ holds and there is only one such
$x\in\Q$; otherwise $\m_k(b,\vpp)$ is undefined.

\section{Axioms of dynamics}

In this section we introduce a FOL axiomatic
theory of special relativistic dynamics. In our first axiom on
relativistic mass, we assume that it is positive in meaningful and
zero in meaningless situations.

\newpage
\begin{description}
\item[\Ax{AxMass}]\index{\ax{AxMass}} According to any observer, the
  relativistic mass of a body $b$ at any coordinate point $\vp$ is
  defined and nonnegative, and it is zero iff $b$ is not present at $\vpp$:
\begin{equation*}
\forall k \in \Ob \enskip \forall b\in \B\enskip \forall \vp\in\Q^d \quad \m_k(b,\vpp)\ge0 \lland \big(\, \m_k(b,\vpp)=0 \iff b\not\in ev_k(\vpp)\,\big).
\end{equation*}
\end{description}
In our co-authored papers \cite{dyn} and \cite{dyn-studia}, this axiom
was built into the logic frame.

To formulate our other axioms on relativistic mass, first we have to define
collisions. To do so, we introduce the following concepts: the set
 of incoming bodies $in_k(\vqq)$ and
that of outgoing bodies $out_k(\vqq)$ of a collision
at coordinate point $\vq$ according to observer $k$ are defined as
bodies whose world-lines ``end'' and ``start'' at $\vq$, respectively (see
Fig.~\ref{inecoll}):
\begin{eqnarray*}\index{$in_k(\vqq)$}\index{$out_k(\vqq)$} 
\Dff{in_k(\vqq)} & \leteq & \Setopen b\in\B\: :\: \vq\in \wl_k(b)\lland
\forall \vp\in
\wl_k(b)\quad p_\tau<q_\tau \llor \vp=\vqq\Setclose,\\
\Dff{out_k(\vqq)} & \leteq & \Setopen b\in\B\: :\: \vq\in \wl_k(b)\lland
\forall \vp\in \wl_k(b)\quad p_\tau>q_\tau\llor \vp=\vqq\Setclose.
\end{eqnarray*}

Bodies $b_1,\ldots, b_n$ collide originating bodies $d_1,\ldots, d_m$
according to observer $k$, in symbols {$\Df{\coll_k(b_1,\ldots,
  b_n:d_1,\ldots, d_m)}$\index{$\coll$}}, iff $b_i\neq b_j$ and
$d_i\neq d_j$ whenever $i\neq j$ and there is a coordinate point $\vq$ such
that $in_k(\vqq)=\{b_1,\ldots,b_n\}$ and
$out_k(\vqq)=\{d_1,\ldots,d_m\}$. Inelastic collisions are just
collisions in which only one body is originated.
So in this case, we write $\Df{\inecoll_k(b_1,\ldots,
 b_n:d)}$\index{$\inecoll$} in place of $\coll_k(b_1,\ldots, b_n:d)$
and say that bodies $b_1,\ldots, b_n$ \df{collide inelastically}
originating body $d$ according to observer $k$. For the illustration
of these concepts, see Fig.~\ref{inecoll}.

\begin{figure}[ht]
\small
\begin{center}
\psfrag*{q}[l][l]{$\vq$} 
\psfrag*{b}[t][t]{$b$} 
\psfrag*{c}[t][t]{$c$}
\psfrag*{d}[b][b]{$d$} 
\psfrag*{qt}[r][r]{$q_\tau$}
\psfrag*{out}[b][b]{$out_k(\vqq)$} 
\psfrag*{in}[t][t]{$in_k(\vqq)$}
\psfrag*{soft}[l][l]{$\inecoll_k(b,c:d)$}
\psfrag*{k}[l][l]{$k$}
\includegraphics[keepaspectratio, width=0.8\textwidth]{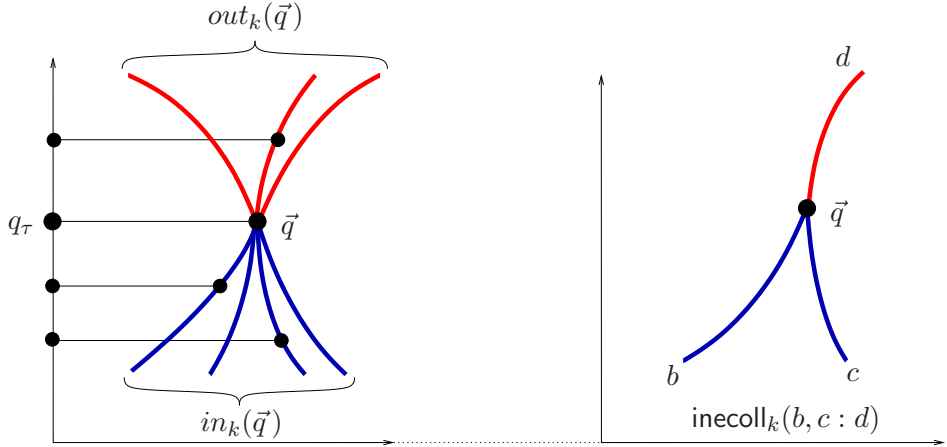}
\caption{\label{inecoll} Illustration of $in_k(\vqq)$, $out_k(\vqq)$ and $\inecoll_k(b,c:d)$ }
\end{center}
\end{figure}

The \df{spacetime location} {$\Dff{\loc_k^b(t)}$} 
of body $b$ at time instance $t\in \Q$ according to 
observer $k$ is defined as 
the coordinate point $\vp$ for which $\vp\in \wl_k(b)$ and $p_\tau=t$ hold if there is 
such a unique $\vp$;  otherwise $\loc_k^b(t)$ is undefined, see Fig.~\ref{med4}.

The \df{center of masses} $\Dff{\cen_k^{b_1,\ldots, b_n}(t)}$ of bodies $b_1,\ldots, b_n$ 
according to $k\in\Ob$ at time instance $t$ is defined by:
\begin{equation*}
\sum_{i=1}^{n} \m_k(b_i,\loc_k^{b_i}(t))\cdot\big(\cen_k^{b_1,\ldots, b_n}(t)-\loc_k^{b_i}(t)\big)=0
\end{equation*}
if $\loc_k^{b_i}(t)$ and $\m_k(b_i,\loc_k^{b_i}(t))$ are defined for all $1\le i\le n$;
  otherwise $\cen_k^{b_1,\ldots, b_n}(t)$ is undefined.
Let us note that the following is an explicit definition for
${\cen_k^{b_1,\ldots, b_n}(t)}$:
\begin{equation*}
 {\cen_k^{b_1,\ldots, b_n}(t)}=\sum^n_{i=1}\frac{\m_k(b_i,\loc^{b_i}_k(t))}{\m_k(b_1,\loc^{b_1}_k(t))+\ldots+\m_k(b_n,\loc^{b_n}_k(t))}\cdot\loc_k^{b_i}(t)
\end{equation*}
if $\loc_k^{b_i}(t)$ and $\m_k(b_i,\loc_k^{b_i}(t))$ are defined for all $1\le i\le n$.
The
\df{center-line of the masses} of bodies $b_1,\ldots, b_n$ according to
observer $k$ is defined as:
\begin{equation*}\index{$\cen_k({b_1,\ldots, b_n})$}
\Df{\cen_k({b_1,\ldots, b_n})}\leteq \Setopen \cen_k^{b_1,\ldots, b_n}(t)\: : \: t\in \Q\text{ and }
\cen_k^{b_1,\ldots, b_n}(t) \text{ is defined}\,\Setclose,
\end{equation*}
i.e., the center-line of mass is the world-line of the center
of mass.
\begin{rem}
Let us note that $\cen_k^b(t)=\loc_k^b(t)$ for all $k\in \Ob$, $b\in
\B$ and $t\in \Q$, and thus $\cen_k(b)=\wl_k(b)$ for every $k\in\Ob$
and $b\in\B$ if $\m_k(b,\loc_k^b(t))$ is defined and nonzero for all
$t\in \dom \loc_k^b$ (e.g., if \ax{AxMass} is assumed).
\end{rem}
\noindent
The segment determined by $\vp,\vq\in\Q^d$ is defined as:
\begin{equation*}\index{$[\vp,\vq\,]$}
\Dff{[\vp,\vq\,]}\leteq \Setopen
\lambda\vp+(1-\lambda)\vq\: : \: \lambda\in\Q,\
0\leq\lambda\leq 1\Setclose.
\end{equation*}
Let us call $H\subseteq \Q^d$ a \df{line segment} if
\begin{itemize} 
\item it is connected, (i.e., $[\vp,\vqq]\subseteq H$ for all $\vp,\vq \in H$),
\item it is a subset of a line, and
\item it has at least two elements.
\end{itemize}
Bodies whose world-lines are line segments according to every {\it inertial} observer are called \df{inertial bodies}, and their set is defined as:
\begin{equation*}\index{$\IB$}
\Dff{\IB}\leteq \setopen b\in \B\: :\: \forall k\in\IOb\quad \wl_k(b)
\text{ is a line segment}\setclose.
\end{equation*}

\begin{figure}[ht]
\small
\begin{center}
\psfrag*{t}[r][r]{$t$}
\psfrag*{b}[r][r]{$b$}
\psfrag*{ab}[r][r]{$\forall b$}
\psfrag*{ad}[l][l]{$\forall d$}
\psfrag*{c}[l][l]{$c$}
\psfrag*{ac}[lb][lb]{$\forall c$}
\psfrag*{k}[r][r]{$k$}
\psfrag*{ak}[r][r]{$\forall k$}
\psfrag*{c1}[t][t]{$\cen_k({b,c})$}
\psfrag*{lb}[rb][rb]{$\loc_k(b,t)$}
\psfrag*{lc}[lb][lb]{$\loc_k(c,t)$}
\psfrag*{mb}[lt][lt]{$\m_k(b)$}
\psfrag*{mc}[rt][rt]{$\m_k(c)$}
\psfrag*{mk}[b][b]{$\cen_k^{b,c}(t)$}
\psfrag*{m}[t][t]{$\cen_k({b,c})$}
\includegraphics[keepaspectratio, width=0.9\textwidth]{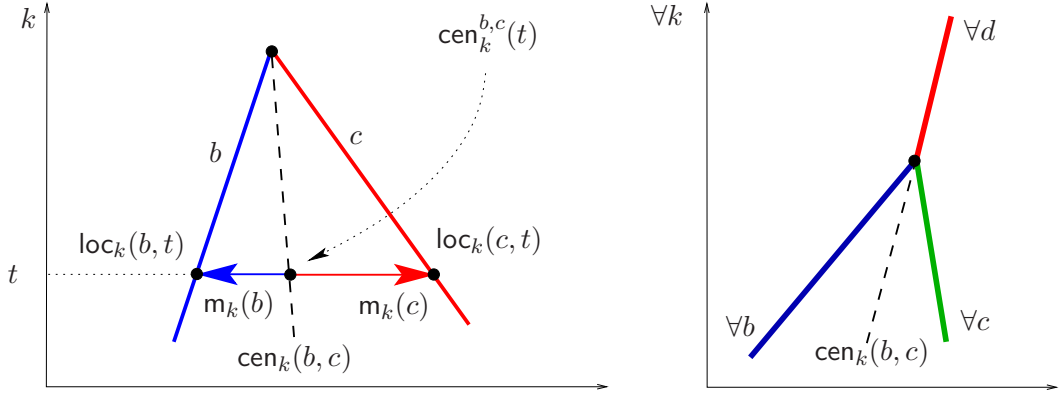}
\caption{\label{med4} Illustration of $\cen_k^{b,c}(t)$, $\cen_k({b,c})$ and of axiom \ax{AxCenter}}
\end{center}
\end{figure}

\begin{prop}
Let $k$ be an {\it inertial} observer and $b_1,\ldots,b_n$ {\it
  inertial} bodies such that, for all $1\le i\le n$, $\vp,\vq\in
\wl_k(b_i)$ imply $\m_k(b_i,\vpp)=\m_k(b_i,\vqq)>0$. Then the
following hold:
\label{prop-cen} 
\begin{enumerate}
\item $\cen_k({b_1,\ldots, b_n})$ is a line segment, a point or empty, 
\item $\cen_k({b_1,\ldots, b_n})$ is nonhorizontal, i.e., $\vr=\vs$ if
  $\vr,\vs\in\cen_k({b_1,\ldots, b_n})$ and $r_\tau=s_\tau$,
\item $\wl_k(b_1)\cap\ldots\cap \wl_k(b_n)\subseteq \cen_k({b_1,\ldots, b_n})$,
\item $\cen_k({b_1,\ldots, b_n})$ is a line segment if
  $\coll_k(b_1,\ldots,b_n:d_1,\ldots,d_m)$ or
  $\coll_k(d_1,\ldots,d_m:b_1,\ldots,b_n)$ for some (not necessarily
  {\it inertial}\,) bodies $d_1,\ldots,d_m$.
\end{enumerate}
\end{prop}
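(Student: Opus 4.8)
The whole proposition hinges on one observation. Suppose first --- I will reduce to this below --- that each $\wl_k(b_i)$ lies on a \emph{non-horizontal} line $\ell_i$; parametrizing $\ell_i$ by the time coordinate, the time map $\vp\mapsto p_\tau$ restricts to an affine bijection $\ell_i\to\Q$, whose inverse I write as $t\mapsto\vc_i+t\vec{w}_i$ with $\vec{w}_i$ of time component $1$. Then $\loc_k^{b_i}(t)$, the unique point of $\wl_k(b_i)$ at time $t$ whenever it exists, is exactly $\vc_i+t\vec{w}_i$, and $\dom\loc_k^{b_i}$ is the time-projection $I_i\leteq\{p_\tau:\vp\in\wl_k(b_i)\}$. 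Writing $m_i$ for the (by hypothesis well-defined and positive) constant value of $\m_k(b_i,\cdot)$ along $\wl_k(b_i)$ and $M\leteq m_1+\ldots+m_n$, the explicit formula for the center of masses gives, precisely on $I\leteq I_1\cap\ldots\cap I_n$,
\[
\cen_k^{b_1,\ldots,b_n}(t)=\sum_{i=1}^{n}\tfrac{m_i}{M}\bigl(\vc_i+t\vec{w}_i\bigr)=\vc+t\vec{w},\qquad \vec{w}\leteq\sum_{i=1}^{n}\tfrac{m_i}{M}\vec{w}_i,
\]
and $\vec{w}$ again has time component $1$, so $\vec{w}\neq\vo$; hence $\cen_k(b_1,\ldots,b_n)=\{\vc+t\vec{w}:t\in I\}$ sits on a genuine line, and everything reduces to the set $I\subseteq\Q$.

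For the reduction: if some $\wl_k(b_i)$ lies on a horizontal line then, having at least two elements, it has two points at the same time, so $\loc_k^{b_i}$ --- and hence $\cen_k^{b_1,\ldots,b_n}$ --- is nowhere defined and $\cen_k(b_1,\ldots,b_n)=\emptyset$; this settles (1) and (2) at once, and under the hypothesis of (4) it cannot happen, since a collision forces each $\wl_k(b_i)$ to contain the collision point together with points of strictly smaller (resp.\ larger) time. So for the rest I assume the $\wl_k(b_i)$ are non-horizontal (as the ambient discussion does, cf.\ the Remark preceding the Proposition). I would also record two facts about connectedness, in the sense of the definition for subsets of $\Q$: the time-projection $I_i$ of a connected line segment is connected with at least two elements (the affine bijection $\ell_i\to\Q$ above carries segments to intervals), and a finite intersection of connected subsets of $\Q$ is connected (immediate from the definition).

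The rest is short. Claim (2) is read off the formula, since $\cen_k^{b_1,\ldots,b_n}(t)$ has time component $(\vc)_\tau+t$ and is therefore injective in $t$, so distinct points of $\cen_k(b_1,\ldots,b_n)$ have distinct times. For (1): $\cen_k(b_1,\ldots,b_n)=\{\vc+t\vec{w}:t\in I\}$ is empty or a single point unless $|I|\ge 2$, and then it has at least two elements, lies on the line $\{\vc+t\vec{w}:t\in\Q\}$, and is closed under taking segments --- for $t_1,t_2\in I$, connectedness of $I$ gives $(t_1,t_2)\subseteq I$, hence $[t_1,t_2]\subseteq I$, and $[\vc+t_1\vec{w},\vc+t_2\vec{w}]=\{\vc+s\vec{w}:s\in[t_1,t_2]\}\subseteq\cen_k(b_1,\ldots,b_n)$ --- so it is a line segment in the precise sense in force here. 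For (3): if $\vp\in\wl_k(b_1)\cap\ldots\cap\wl_k(b_n)$, then for each $i$ the point $\vp$ lies on $\wl_k(b_i)$ at time $p_\tau$ and, by non-horizontality, is the only such point, so $\loc_k^{b_i}(p_\tau)=\vp$; hence $\cen_k^{b_1,\ldots,b_n}(p_\tau)=\sum_i\tfrac{m_i}{M}\vp=\vp$, i.e.\ $\vp\in\cen_k(b_1,\ldots,b_n)$.

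Finally (4): from $\coll_k(b_1,\ldots,b_n:d_1,\ldots,d_m)$ there is a point $\vq$ on every $\wl_k(b_i)$, and each $\wl_k(b_i)$, being a line segment with at least two elements all of whose points other than $\vq$ have time $<q_\tau$, contains a point of time strictly below $q_\tau$; so each $I_i$ contains $q_\tau$ and some $\beta_i<q_\tau$, whence $[\beta_i,q_\tau]\subseteq I_i$ by connectedness and $[\max_i\beta_i,q_\tau]\subseteq I$. Since $\max_i\beta_i<q_\tau$, $|I|\ge 2$, so $\cen_k(b_1,\ldots,b_n)$ is a genuine line segment by the analysis of (1); the case $\coll_k(d_1,\ldots,d_m:b_1,\ldots,b_n)$ is symmetric, with times $\ge q_\tau$. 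The only real work is the partial-function bookkeeping --- pinning $\dom\loc_k^{b_i}$ down as exactly $I_i$, and checking the closed form for $\cen_k^{b_1,\ldots,b_n}$ holds on exactly $I_1\cap\ldots\cap I_n$ --- together with being careful to extract the ``line segment'' conclusions in (1) and (4) in the exact sense defined here (connected, subset of a line, at least two elements) rather than through an unstated topology of $\Q^d$.
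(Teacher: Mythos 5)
The paper gives no argument here (``Here we omit the easy proof''), so there is nothing to compare against; your write-up supplies exactly the computation the authors presumably had in mind, and it is correct. The key points all check out: on a non-horizontal line the time coordinate is an affine bijection, so $\loc_k^{b_i}$ is totally defined on the time-projection $I_i$ of $\wl_k(b_i)$ and is affine there; the positivity hypothesis makes the explicit convex-combination formula for $\cen_k^{b_1,\ldots,b_n}$ available with domain exactly $I=\bigcap_i I_i$; the weights sum to $1$, so the direction vector again has time component $1$, which gives (2) and injectivity in $t$; connectedness of each $I_i$ (image of a connected set under an affine bijection) and of finite intersections gives (1); and the $in_k/out_k$ definitions force each $I_i$ to contain $q_\tau$ together with a point strictly before (resp.\ after) it, which gives (4). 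One presentational slip: you claim the horizontal case ``settles (1) and (2) at once'' and is excluded under (4), and then quietly fold (3) into ``the rest''. But (3) does \emph{not} hold vacuously when some $\wl_k(b_i)$ is horizontal: the center-line is then empty while $\bigcap_i\wl_k(b_i)$ can be a nonempty singleton, so (3) genuinely requires the tacit non-horizontality assumption rather than merely being reduced to it. Since the Remark preceding the Proposition (asserting $\cen_k(b)=\wl_k(b)$) already fails for horizontal world-lines, the paper is clearly making the same tacit assumption, so this is a fault of the statement rather than of your argument --- but you should say explicitly that for (3), unlike (1) and (2), non-horizontality is an honest hypothesis and not a harmless reduction.
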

\noindent
Here we omit the easy proof.

Now we are ready to formalize that the relativistic mass of a body is
a quantity that shows the magnitude of its influence on the state of
motion of any other body it collides with.
\begin{description}
\item[\Ax{AxCenter}]\index{\ax{AxCenter}} The world-line of the {\it
 inertial} body originated by an inelastic collision of two {\it
 inertial} bodies is the continuation of the center-line of the masses of the
 colliding {\it inertial} bodies according to every {\it inertial} observer (see
 Fig.~\ref{med4}):
\begin{equation*}
\forall k\in\IOb\; \forall b,c,d\in \IB\quad
\inecoll_k(b,c:d)
\then \cen_k({b,c})\cup \wl_k(d)\subseteq\ell \text{ for some line }\ell.
\end{equation*}
\end{description}

The main axiom of \ax{SpecRelDyn} is \ax{AxCenter} which, in a certain
sense, can be taken as a definition of relativistic mass.
The other axioms of our axiom system will be simplifying or auxiliary ones
 to make life simpler.
We could only get rid of them at the expense
of sacrificing the simplicity of expressions.

\ax{AxCenter} is an axiom in Newtonian Dynamics, too, where the mass
$\m_k(b,\vpp)$ of a body $b$ does not depend on observer $k$ and coordinate point $\vp$.
However, in special relativity,
\ax{AxCenter} implies that the mass of a body necessarily depends on
the observer.
The reason for this fact is that the simultaneities of 
different observers in special relativity may differ from one another,
and this implies that the proportions involved in \ax{AxCenter}
change, too. See \cite[Prop.4.1]{dyn-studia}.

The \df{velocity}\index{velocity} $\bv_k^b(t)$ and \df{speed} $v_k^b(t)$ of body $b$ at instant $t\in\Q$ according to observer $k$ are defined as:
\begin{equation*}\index{$\bv_k^b(t)$}\index{$v_k^b(t)$}
\Df{\bv_k^b(t)}\leteq \big((\loc_k^b)_\sigma\big)'(t)\qquad\text{and}\qquad \Df{v_k^b(t)}\leteq|\bv_k^b(t)|
\end{equation*}
if $\loc_k^b(t)$ is defined and $\loc_k^b$ is differentiable at $t$;  otherwise they are undefined.
(For the FOL definition of $f'(t)$, see Section~\ref{sec-diff}.)
Let us note that 
\begin{equation*}
 \big((\loc_k^b)_\sigma\big)'(t)=\big((\loc_k^b)'(t)\big)_\sigma\qquad\text{and}\qquad\big(\loc_k^b(t)_\tau\big)'=\big((\loc_k^b)'\big)_\tau(t)=1
\end{equation*}
 if $\loc_k^b(t)$ is defined and differentiable.

The \df{rest mass}\index{rest mass} {$\Dff{\m_0(b)}$}\index{$\m_0(b)$}
of body $b$ is defined as $\lambda\in \Q$ if (1) there is an observer
according to which $b$ is at rest and the relativistic mass of $b$ is
$\lambda$, and (2) the relativistic mass of $b$ is $\lambda$ for every
observer according to which $b$ is at rest. That is, $\m_0(b)=\lambda$
if
\begin{equation*}
\begin{split}
&\exists k\in\Ob\enskip \forall t\in\dom v_k^b \quad v_k^b(t)=0 \enskip\, \lland\enskip\, \forall \vp\in\wl_k(b)\quad \m_k(b,\vpp)=\lambda \\
\lland &\forall k\in\Ob\enskip \forall t\in\dom v_k^b \quad v_k^b(t)=0\then \forall \vp\in\wl_k(b)\quad \m_k(b,\vpp)=\lambda
\end{split}
\end{equation*}
if there is such $\lambda$;  otherwise $\m_0(b)$ is undefined.

We have seen that \ax{AxCenter} implies that the relativistic mass
depends on both $b$ and $k$.  Our next axiom states that the
relativistic mass of a body depends on its rest mass and velocity at
the most.

\begin{description}
\item[\Ax{AxSpeed}]\index{\ax{AxSpeed}} According to any {\it
  inertial} observer, the relativistic masses of two {\it inertial} bodies are the
  same if both of their rest masses and speeds are equal:
\begin{multline*}
\forall k\in\IOb\enskip \forall b,c\in\B\enskip\forall \vp,\vq\in\Q^d \quad 
b\in ev_k(\vpp)\lland c\in ev_k(\vqq)\\
\lland\m_0(b)=\m_0(c)\lland v_k^b(p_\tau)=v_k^c(q_\tau)\then \m_k(b,\vpp)=\m_k(c,\vqq).
\end{multline*}
\end{description}
Let ${\B_0}$ be the set of bodies having rest mass,
i.e.,\index{$\B_0$}\index{$\IB_0$}
$\Dff{\B_0}\leteq \Setopen b\in \B\: :\: \m_0(b) \text{ is defined}\Setclose$,
and let  $\IB_0$ be the set of {\it inertial} bodies having rest mass, i.e., $\Df{\IB_0}\leteq \IB\cap\B_0$.

By the following proposition, \ax{AxSpeed} implies that the
relativistic mass of an {\it inertial} body having rest mass does not
change in time according to {\it inertial} observers.
\begin{prop}\label{prop-mib}
\begin{equation*}
\ax{AxSpeed}\models \forall k\in\IOb\; \forall b\in \IB_0\enskip \forall \vp,\vq\in \wl_k(b)\quad \m_k(b,\vpp)=\m_k(b,\vqq).
\end{equation*}
\end{prop}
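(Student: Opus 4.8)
The plan is to exploit the fact that an \emph{inertial} body moves uniformly in the worldview of an \emph{inertial} observer, so that its speed takes the same value at the two instants $p_\tau$ and $q_\tau$; feeding the pair of bodies $b,b$ into \ax{AxSpeed} --- for which the two rest masses agree trivially and the two speeds agree --- would then force $\m_k(b,\vpp)=\m_k(b,\vqq)$.

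In more detail, fix $k\in\IOb$, $b\in\IB_0$ and $\vp,\vq\in\wl_k(b)$; note that $b\in\B$ since $\IB_0\subseteq\IB\subseteq\B$, and that $\m_0(b)$ is defined since $b\in\B_0$. Because $b\in\IB$, the set $\wl_k(b)$ is a line segment, hence a subset of a single line and with at least two elements. I would first treat the nondegenerate case, where $\wl_k(b)$ is not horizontal, i.e.\ distinct points of it have distinct time components. Then each time instant is attained by at most one point of $\wl_k(b)$, so $\loc_k^b$ is an affine partial function whose domain is a connected subset of $\Q$ containing both $p_\tau$ and $q_\tau$ (the latter because $\vp,\vq\in\wl_k(b)$). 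An affine map on a connected domain is differentiable there with constant derivative, so $\bv_k^b$, hence $v_k^b$, is constant on that domain; in particular $v_k^b(p_\tau)=v_k^b(q_\tau)$. Now \ax{AxSpeed} applies with $c:=b$: one has $b\in ev_k(\vpp)$ and $b\in ev_k(\vqq)$ since $\vp,\vq\in\wl_k(b)$, the equality $\m_0(b)=\m_0(b)$ holds, and $v_k^b(p_\tau)=v_k^b(q_\tau)$, so the axiom gives $\m_k(b,\vpp)=\m_k(b,\vqq)$.

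The point requiring care --- and essentially the only obstacle --- is the degenerate case, where $\wl_k(b)$ is horizontal, i.e.\ all its points share a single time component $t_0$; then $p_\tau=q_\tau=t_0$. Since $\wl_k(b)$ has at least two elements, no time instant is attained by a \emph{unique} point of it, so $\loc_k^b$ is undefined everywhere and $\dom v_k^b=\emptyset$; hence ``$v_k^b(t)=0$ for every $t\in\dom v_k^b$'' holds vacuously. Thus $k$ itself counts as an observer according to which $b$ is at rest in the sense of the definition of the rest mass, and since $b\in\B_0$ the last clause of that definition then yields $\m_k(b,\vr)=\m_0(b)$ for every $\vr\in\wl_k(b)$, in particular $\m_k(b,\vpp)=\m_0(b)=\m_k(b,\vqq)$. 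The two cases together prove the proposition. Throughout, the equalities in \ax{AxSpeed} are read with existential equality (Conv.~\ref{conv-eq}), which is exactly why one must check in the nondegenerate case that $v_k^b$ is genuinely defined at $p_\tau$ and $q_\tau$, and why the horizontal case must be argued directly from the definition of $\m_0$ rather than routed through \ax{AxSpeed}.
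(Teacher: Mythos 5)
Your proof is correct and follows essentially the same route as the paper's: the paper's own argument simply observes that $v_k^b(p_\tau)=v_k^b(q_\tau)$ for all $\vp,\vq\in\wl_k(b)$ ``by the respective definitions'' and then applies \ax{AxSpeed} with $c:=b$. Your extra care about the definedness of the speed (needed under the existential reading of ``$=$'') and your separate handling of the degenerate horizontal world-line via the definition of $\m_0$ merely fill in details the paper leaves implicit.
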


\begin{proof}
By the respective definitions, it is easy to see that
$v_k^b(p_\tau)=v_k^b(q_\tau)$ for all $\vp,\vq\in\wl_k(b)$.  Hence by
\ax{AxSpeed}, $\m_k(b,\vpp)=\m_k(b,\vqq)$ if $\vp,\vq\in\wl_k(b)$, $k$
is an {\it inertial} observer, and $b$ is an {\it inertial} body
having rest mass.
\end{proof}

Prop.~\ref{prop-mib} leads us to introduce the following definition:
$\Df{\m_k(b)}$ is defined as $\m_k(b,\vpp)$ if
$\m_k(b,\vpp)=\m_k(b,\vqq)$ for all $\vp,\vq\in \wl_k(b)$; otherwise
$\m_k(b)$ is undefined.  So by Prop.~\ref{prop-mib} $\m_k(b)$ is
defined if $b\in\IB_0$, $k\in\IOb$ and \ax{AxSpeed} is assumed.
Similarly, we use notations $\Df{\bv_k(b)}$\index{$\bv_k(b)$} and
$\Df{v_k(b)}$\index{$v_k(b)$} instead of $\bv_k^b(t)$ and $v_k^b(t)$
when $b$ and $k$ are {\it inertial}, as in this case
$\bv_k^b(t_1)=\bv_k^b(t_2)$ for all $t_1,t_2\in\dom \bv_k^b$.

Our last axiom on dynamics states that every observer can make
experiments in which they make {\it inertial} bodies of arbitrary rest
masses and velocities collide inelastically: 

\begin{description}
\item[\Ax{Ax\forall\inecoll}]\index{\ax{Ax\forall\inecoll}} For any
  {\it inertial} observer, any possible kind of inelastic collision of
  {\it inertial} bodies can be realized:
\begin{multline*}
 \forall k\in\Ob\enskip \forall \bv_1,\bv_2\in\Q^{d-1}\; \forall m_1, m_2\in \Q
\quad  |\bv_1|<1\lland |\bv_2|<1 \\
\land \bv_1\neq\bv_2 \lland m_1>0 \lland m_2>0 \then
\exists b,c,d\in\IB \quad \inecoll_k(b,c:d)\;\\
 \lland \bv_k(b)=\bv_1 \lland \bv_k(c)=\bv_2\lland
\m_0(b)=m_1\lland \m_0(c)=m_2.
\end{multline*}
\end{description}

We often add axioms to \ax{SpecRel} which do not change the spacetime
structure, but are useful as auxiliary axioms.  For example,
\ax{AxThExp^\up} below states that every observer can make thought
experiments in which they assume the existence of ``slowly moving''
observers (see, e.g., \cite[p.622 and Thm.11.10]{logst}):

\begin{description}
\item[\Ax{AxThExp^\up}]\index{\ax{AxThExp^\up}} For any {\it inertial}
  observer, in any spacetime location, in any direction, at any
  speed slower than that of light it is possible to ``send out'' an
 {\it inertial} observer whose time flows ``forwards:''
\begin{multline*}
\forall k\in\IOb\;\forall \vp,\vq\in\Q^d\quad 
|(\vp-\vqq)_{\sigma}|<(\vp-\vqq)_{\tau} \\
\then \exists h\in\IOb\quad  h\in ev_k(\vpp)\cap
ev_k(\vqq) \lland w^k_h(\vqq)_\tau<w^k_h(\vpp)_\tau.
\end{multline*}

\end{description}
Let us extend \ax{SpecRel} by \ax{AxThExp^\up} and the axioms of dynamics above:
\begin{equation*}\index{\ax{SpecRelDyn}}
 \boxed{\ax{SpecRelDyn} \leteq 
\Setopen \ax{AxMass}, \ax{AxCenter}, \ax{AxSpeed},\ax{Ax\forall
\inecoll},\ax{AxThExp^\up}\Setclose \cup \ax{SpecRel}}
\end{equation*}
Let us note that \ax{SpecRelDyn} is provably consistent.
Moreover, it has
nontrivial models, see Prop.~\ref{rem}.

The following theorem provides the connection between the rest mass and the
relativistic mass of an {\it inertial} body.
Its conclusion is a well-known result of
special relativity.
We will see that our theorem is stronger than the
corresponding result in the literature since it contains fewer assumptions.
\begin{thm}
\label{thm1}
Let $d\geq 3$. Assume \ax{SpecRelDyn} and let $k$ be an {\it inertial} observer and $b$ be an {\it inertial} body having rest mass.
Then
\begin{equation*}
\m_0(b)={\sqrt{1-v_k(b)^2}}\cdot\m_k(b).
\end{equation*}
\end{thm}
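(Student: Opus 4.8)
The plan is to use \ax{AxSpeed} to reduce the statement to computing a single number, and then to squeeze that number out of two well-chosen applications of \ax{AxCenter}. Put $v\leteq v_k(b)$ and $m_0\leteq\m_0(b)$; note that $\m_k(b)$ is defined by Prop.~\ref{prop-mib}. If $v=0$, then $k$ is an observer according to which $b$ is at rest, so $\m_k(b)=\m_0(b)$ directly by the definition of rest mass, and the claim holds since $\sqrt{1-0}=1$. From now on assume $v\in(0,1)$ and set $u\leteq\frac{1-\sqrt{1-v^2}}{v}$; a short computation gives $u\in(0,1)$ and $v=\frac{2u}{1+u^2}$, so $u$ is the ``relativistic half'' of $v$ — this is the key device of the argument.

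First I would build an auxiliary {\it inertial} observer $\hat k$ relative to which $k$ moves with speed exactly $u$ and with $w^{\hat k}_k$ orthochronous: apply \ax{AxThExp^\up} to $k$ at two coordinate points whose difference is $(1,u,0,\ldots,0)$ (timelike since $u<1$), and use that the relative speed of two {\it inertial} observers is symmetric (a consequence of Thm.~\ref{thm-poi}, by which $w^k_{\hat k}$ is a Poincar\'e transformation). Next, in $\hat k$'s worldview I set up a symmetric inelastic collision: by \ax{Ax\forall\inecoll} there are {\it inertial} bodies $b',c',d'$ with $\inecoll_{\hat k}(b',c':d')$, with $\bv_{\hat k}(b')$ equal to the velocity of $k$ according to $\hat k$, with $\bv_{\hat k}(c')=-\bv_{\hat k}(b')$, and with $\m_0(b')=\m_0(c')=m_0$. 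Then $b'$ and $c'$ have equal rest masses and equal speeds according to $\hat k$, so \ax{AxSpeed} gives $\m_{\hat k}(b')=\m_{\hat k}(c')$; hence a one-line computation of $\cen_{\hat k}(b',c')$ places it on the line through the collision point parallel to the time-axis, and by \ax{AxCenter} together with Prop.~\ref{prop-cen} the world-line $\wl_{\hat k}(d')$ lies on that same line — that is, $d'$ is at rest according to $\hat k$.

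Now I would transport the whole configuration into $k$'s worldview along $w^{\hat k}_k$, which is a Poincar\'e transformation (Thm.~\ref{thm-poi}) and, by the choice above, orthochronous; since the world-lines of $b',c',d'$ are timelike line segments, it carries the incoming/outgoing structure of the collision faithfully, so $\inecoll_k(b',c':d')$ holds. Reading off the data in $k$: $b'$ co-moves with $k$, hence is at rest according to $k$, so $\m_k(b')=m_0$; $c'$ has speed $\frac{2u}{1+u^2}=v$ according to $k$, so $\m_k(c')=\m_k(b)$ by \ax{AxSpeed}; and $d'$ has speed $u$ according to $k$. Applying \ax{AxCenter} a second time, now in $k$, the center-line of the masses of $b',c'$ — whose masses are the constants $m_0$ and $\m_k(b)$ by Prop.~\ref{prop-mib} — must contain $\wl_k(d')$; equating the direction vectors of these two lines yields
\begin{equation*}
\frac{v\cdot\m_k(b)}{m_0+\m_k(b)}=u .
\end{equation*}
Hence $\m_k(b)=\dfrac{u\,m_0}{v-u}$, and since $v-u=\dfrac{u(1-u^2)}{1+u^2}=u\sqrt{1-v^2}$ this collapses to $\m_k(b)=\dfrac{m_0}{\sqrt{1-v^2}}$, equivalently $\m_0(b)=\sqrt{1-v_k(b)^2}\cdot\m_k(b)$.

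The center-of-mass and relativistic velocity-addition computations here are routine. The real content is that \ax{AxCenter} used in a single frame only says the merged body inherits the center-of-mass velocity of the colliding pair — it says nothing about $\m_k(d')$ — so the substantive step is the choice of the two frames (the ``symmetric'' frame $\hat k$ and the rest frame of one incoming body) that makes two instances of \ax{AxCenter} together pin down the ratio $\sqrt{1-v^2}$ exactly; I expect this to be the crux. The one mildly delicate technical point is the frame-independence of $\inecoll$ among {\it inertial} observers, needed when passing from $\hat k$ to $k$; this rests on Thm.~\ref{thm-poi} and on the fact that \ax{AxThExp^\up} lets us pick $\hat k$ so that $w^{\hat k}_k$ preserves time orientation (otherwise incoming and outgoing bodies could swap roles).
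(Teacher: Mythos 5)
Your proof is correct and is essentially the argument the paper intends: the paper itself gives no in-text proof of Thm.~\ref{thm1} (it defers to \cite{dyn-studia}), but Rem.~\ref{discussion} --- which notes the theorem survives weakening $\ax{Ax\forall\inecoll}$ and $\ax{AxThExp^\up}$ to $\ax{Ax\exists\inecoll}$ and $\ax{AxMedian}$ --- points at exactly your two-frame symmetric-collision (Lewis--Tolman) argument, the only cosmetic difference being that you construct the ``median'' frame $\hat k$ explicitly from the half-velocity $u$ rather than postulating it. Your algebra ($v-u=u\sqrt{1-v^2}$) checks out, and you correctly flag the one delicate point, the invariance of $\inecoll$ under orthochronous worldview transformations between \emph{inertial} observers.
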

\noindent A purely geometrical proof of Thm.~\ref{thm1} can be found in
\cite{dyn-studia}.

\begin{rem}
Assuming \ax{AxPh}, photons cannot have rest masses since their speed
is 1 according to any {\it inertial} observer. However, by
Thm.~\ref{thm1}, it is natural to extend our rest mass concept for
photons as $\m_0(ph)\leteq 0$ for all $ph\in\Ph$. After this extension
photons may be regarded as ``pure energy'' as they have zero rest
masses.

\end{rem}

\begin{rem}\label{discussion}
The conclusion of Thm.~\ref{thm1} fails if we omit any of the axioms
\ax{AxMass}, \ax{AxCenter}, \ax{AxSpeed}, \ax{Ax\forall \inecoll},
\ax{AxThExp^\up} from \ax{SpecRelDyn}.  However, it remains true if we
weaken \ax{Ax\forall \inecoll} and \ax{AxThExp^\up} to the following
two axioms, respectively:
\begin{description}
\item[\Ax{Ax\exists \inecoll}]\index{\ax{Ax\exists \inecoll}} According to every observer,
for every {\it inertial} body $a$ having rest mass, there are {\it inertial}
bodies $b$ and $c$ colliding inelastically such that $a$, $b$ and $c$ have the same rest masses, $a$ and
$b$ have the same speeds and the speed of $c$ is $0$ (see the left-hand side of Fig.~\ref{pirt}):
\begin{multline*}
 \forall k\in\IOb\; \forall a\in \IB_0\enskip \exists b,c,d\in \IB\quad
 \m_0(a)=\m_0(b)=\m_0(c)\\
\lland v_k(b)=v_k(a)\lland v_k(c)=0 \lland \inecoll_k(b,c:d);
\end{multline*}
\item[\Ax{AxMedian}]\index{\ax{AxMedian}}
For every two {\it inertial} bodies colliding
inelastically, there is an {\it inertial} observer
for which these two {\it inertial} bodies have opposite velocities and
collide inelastically (see the right-hand side of Fig.~\ref{pirt}):
\begin{multline*}
\forall k\in\IOb\;\forall b,c,d\in \IB \quad \inecoll_k(b,c:d)\\ \then
\exists h\in\IOb\quad\bv_h(b)=-\bv_h(c)\lland \inecoll_h(b,c:d).
\end{multline*}
\end{description}
\end{rem}

\begin{figure}[ht]
\small
\begin{center}
\psfrag*{a}[br][br]{$\forall a$} \psfrag*{b}[t][t]{$b$}
\psfrag*{c}[t][t]{$c$} \psfrag*{eb}[br][br]{$\exists b$}
\psfrag*{ec}[lb][lb]{$\exists c$} \psfrag*{ed}[lt][lt]{$\exists d$}
\psfrag*{ab}[br][br]{$\forall b$} \psfrag*{ac}[t][t]{$\forall c$}
\psfrag*{ad}[lt][lt]{$\forall d$} \psfrag*{k}[r][r]{$\forall k$}
\psfrag*{d}[b][b]{$d$} \psfrag*{h}[r][r]{$\exists h$}
\psfrag*{ax1}[l][l]{\ax{Ax\exists \inecoll}}
\psfrag*{ax2}[l][l]{\ax{AxMedian}}
\psfrag*{text}[r][r]{$\m_0(a)=\m_0(b)=\m_0(c)$}
\includegraphics[keepaspectratio, width=0.9\textwidth]{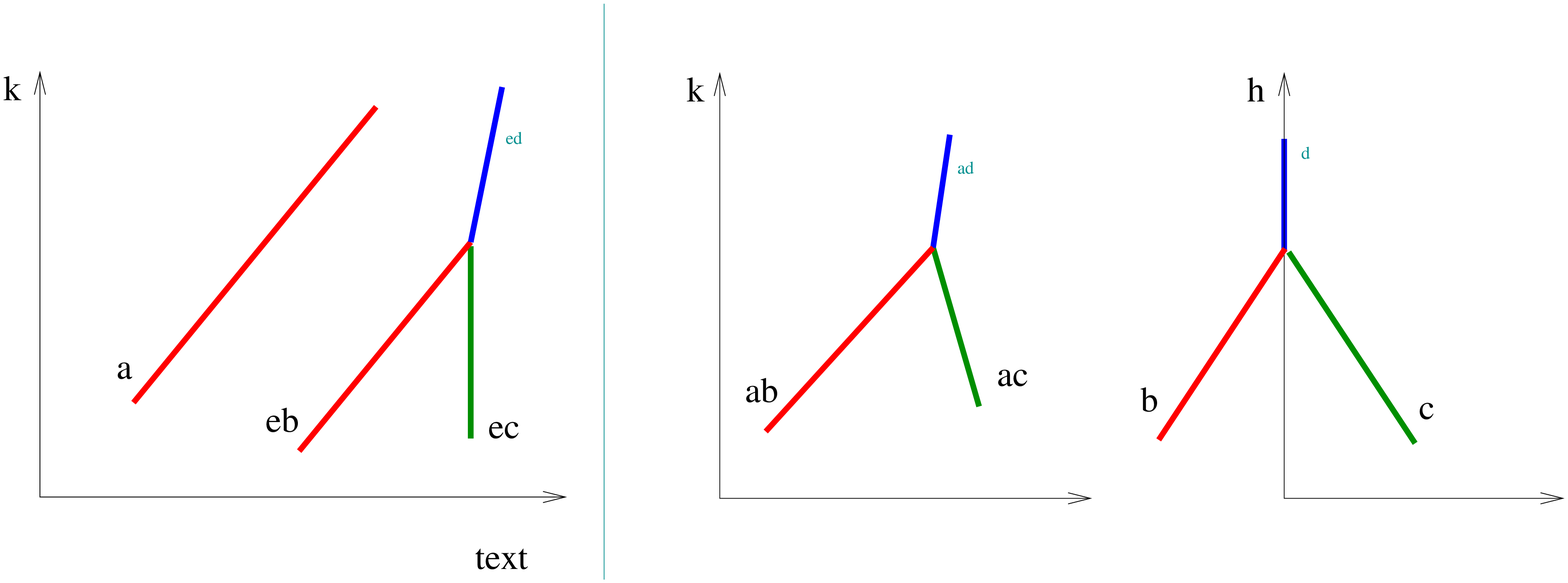}
\caption{\label{pirt} Illustration of axioms \ax{Ax\exists \inecoll}
and \ax{AxMedian}}
\end{center}
\end{figure}

\noindent {\it On Einstein's $E=mc^2$}: The conclusion
$\m_0(b)={\sqrt{1-v_k(b)^2}}\cdot\m_k(b)$ of our Thm.~\ref{thm1} above
is used in Rindler's relativity textbook \cite[pp.111-114]{Rin} to
explain the discovery and meaning of Einstein's famous insight
$\Dff{E=mc^2}$.  We could literally repeat this part of the text of
\cite{Rin} to arrive at $E=mc^2$ in the framework of our theory
\ax{SpecRelDyn} based on the axiom \ax{AxCenter}.  We postpone this to
a later point, because then we will have developed more
``ammunition,'' hence the didactics can be more inspiring.

\section{Conservation of relativistic mass and linear momentum}

In a certain sense \ax{AxCenter} states that the center of mass of an
isolated system consisting of two {\it inertial} bodies moves along a
line regardless whether the two bodies collide or not. It is natural
to generalize \ax{AxCenter} to more than two bodies (but permitting
only two-by-two inelastic collisions). Let $\ax{AxCenter_n}$ denote,
temporarily, a version of \ax{AxCenter} concerning any isolated system
consisting of $n$ bodies. Thus \ax{AxCenter} is just $\ax{AxCenter_2}$
in this series of increasingly stronger axioms. We will see that it
does not imply \ax{AxCenter_3}; thus \ax{AxCenter_3} is strictly
stronger than \ax{AxCenter} if certain auxiliary axioms are assumed,
see Cor.~\ref{cor-fmoment} and Prop.~\ref{prop-carcen}. However, it can
be proved that the rest of the axioms in this series are all
equivalent to $\ax{AxCenter_3}$ if \ax{AxCenter} is assumed, see
Cor.~\ref{cor-cenn}. That motivates us to introduce \ax{SpecRelDyn^+}
by replacing \ax{AxCenter} in \ax{SpecRelDyn} by the stronger
\ax{AxCenter_3}. Our theory \ax{SpecRelDyn^+} is still very geometric
and observation-oriented in spirit.  Let us now introduce
\ax{AxCenter_3}, and denote it as \ax{AxCenter^+}.

\newpage
\begin{description}
\item[\Ax{AxCenter^+}]\index{\ax{AxCenter^+}} If $a$ is an {\it inertial} body and {\it inertial}
bodies $b$ and $c$ collide inelastically originating {\it inertial} body
$d$, the center-line of the masses of $a$ and $d$ is the continuation of the
center-line of the masses of $a$, $b$ and $c$, see Fig.~\ref{centerplus}:
\begin{multline*}
\forall k\in\Ob\; \forall a,b,c,d\in \IB\quad\\
 \inecoll_k(b,c:d)\then
 \cen_k({a,b,c})\cup \cen_k({a,d})\subseteq\ell\text{ for some line
}\ell.
\end{multline*}
\end{description}
\begin{figure}[h!t]
\small
\begin{center}
\psfrag*{a}[br][br]{$\forall a$} \psfrag*{b}[br][br]{$\forall b$}
\psfrag*{c}[bl][bl]{$\forall c$} \psfrag*{d}[bl][bl]{$\forall d$}
\psfrag*{c1}[cb][cb]{$\cen_k({a,b,c})$}
\psfrag*{c2}[cb][cb]{$\cen_k({a,d})$} \psfrag*{k}[r][r]{$\forall k$}
\includegraphics[keepaspectratio, width=0.4\textwidth]{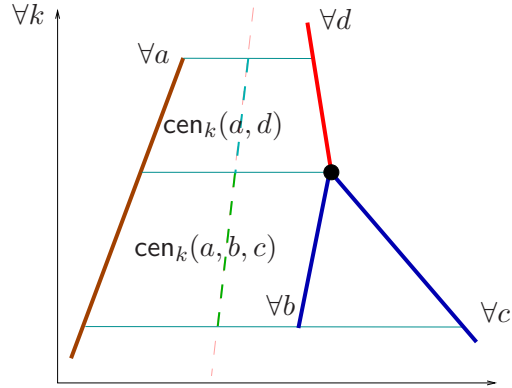}
\caption{\label{centerplus} Illustration of \ax{AxCenter^+}}
\end{center}
\end{figure}
Let us replace \ax{AxCenter} by \ax{AxCenter^+} in
\ax{SpecRelDyn}:
\begin{equation*}\index{\ax{SpecRelDyn^+}}
 \boxed{\ax{SpecRelDyn^+} \leteq 
\Setopen \ax{AxMass},\ax{AxCenter^+}, \ax{AxSpeed},\ax{Ax\forall
\inecoll},\ax{AxThExp^\up}\Setclose \cup \ax{SpecRel}}
\end{equation*}
Let us note that \ax{SpecRelDyn^+} is also consistent.
Moreover, it has
nontrivial models, see Prop.~\ref{rem}.

\ax{AxCenter} determines the velocity of the body emerging from an
inelastic collision, and we will see that \ax{AxCenter^+} also determines
the relativistic mass of the body emerging from the collision.

Let us now formulate the conservation of relativistic mass in our FOL language.
\begin{description}
\item[\Ax{ConsMass}]\index{\ax{ConsMass}} If {\it inertial} bodies $b$
 and $c$ collide inelastically originating {\it inertial} body $d$,
 the relativistic mass of $d$ is the sum of the relativistic masses
 of $b$ and $c$:
\begin{equation*}
\forall k\in\IOb\; \forall b,c,d\in \IB\quad \inecoll_k(b,c:d) 
\then \m_k(b)+\m_k(c)=\m_k(d).
\end{equation*}
\end{description}

The \df{linear momentum} of {\it inertial} body $b$ according to {\it
  inertial} observer $k$ is defined as $\m_k(b)\bv_k(b)$ if $\bv_k(b)$ and $\m_k(b)$ 
are defined, otherwise it is undefined. Now we can formulate the
conservation of linear momentum in our FOL language.
\begin{description}
\item[\Ax{ConsMomentum}]\index{\ax{ConsMomentum}} If {\it inertial} bodies
 $b$ and $c$ collide inelastically originating {\it inertial} body
 $d$, the linear momentum of $d$ is the sum of the linear momentum of
 $b$ and $c$:
\begin{equation*}
\forall k\in\IOb\; \forall b,c,d\in \IB\quad  
\inecoll_k(b,c:d)
\then \m_k(b)\bv_k(b) +\m_k(c)\bv_k(c) =\m_k(d)\bv_k(d).
\end{equation*}
\end{description}
To state a theorem on the connection of \ax{AxCenter}, \ax{ConsMass}
and \ax{ConsMomentum}, we need the following auxiliary axiom.
\begin{description}
\item[\Ax{AxInMass}] According to any {\it inertial} observer, the
  relativistic mass of every {\it inertial} body is constant:
\begin{equation*}
 \forall k\in\IOb\; \forall b\in \IB\enskip \forall \vp,\vq\in
 \wl_k(b)\quad \m_k(b,\vpp)=\m_k(b,\vqq).
\end{equation*}
\end{description}
That is a consequence of \ax{AxSpeed} for {\it inertial} bodies having
rest mass, see Prop.~\ref{prop-mib}.  In \cite{dyn} and
\cite{dyn-studia}, this axiom was also built in our logic
frame. 

The following theorem states that axiom \ax{AxCenter^+} is equivalent to the
conjunction of \ax{ConsMass} and either of the two formulas \ax{AxCenter}
and \ax{ConsMomentum} if certain auxiliary axioms are assumed.  That means
in a sense that \ax{ConsMass} represents the ``difference'' between
\ax{AxCenter} and \ax{AxCenter^+}, and the same holds if \ax{AxCenter} is replaced by
\ax{ConsMomentum}.

\begin{thm}
\label{thm2} Let us assume \ax{AxMass}, \ax{AxInMass}, \ax{AxSelf} and that $\IOb\subseteq\IB$. Then:
\begin{equation*}
\ax{AxCenter^+} \Iff
\ax{ConsMass}\lland\ax{ConsMomentum}\Iff
\ax{ConsMass}\lland\ax{AxCenter}.
\end{equation*}
\end{thm}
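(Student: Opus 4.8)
The plan is to convert each of the four highlighted formulas into an elementary algebraic condition on the constant masses and velocities attached to a single inelastic collision, after which the two equivalences become essentially bookkeeping. So fix an inertial observer $k$ and inertial bodies $b,c,d$ with $\inecoll_k(b,c:d)$, the collision event having coordinate point $\vq$. Since $k\in\IOb\subseteq\IB$ and $b,c,d$ are inertial, each of $\wl_k(b),\wl_k(c),\wl_k(d)$ — and $\wl_k(a)$ for any $a\in\IB$ — is a line segment along a non-horizontal line, so each of $\loc_k^b,\loc_k^c,\loc_k^d,\loc_k^a$ is affine in $t$ with $\tau$-component $t$ and constant velocity, and by \ax{AxMass} and \ax{AxInMass} the masses $B:=\m_k(b)$, $C:=\m_k(c)$, $D:=\m_k(d)$ (and $\m_k(a)$) are well-defined and positive. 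As $\vq$ lies on all three world-lines, $\loc_k^b(q_\tau)=\loc_k^c(q_\tau)=\loc_k^d(q_\tau)=\vq$; writing $\hat v_x$ for the direction vector of $\wl_k(x)$ normalized to $\tau$-component $1$ (so its $\sigma$-component is $\bv_k(x)$), one gets $\cen_k^{b,c}(t)=\vq+(t-q_\tau)\tfrac{B\hat v_b+C\hat v_c}{B+C}$ for $t\le q_\tau$, that $\wl_k(d)$ lies on the line $\{\vq+s\hat v_d:s\in\Q\}$, and that for any auxiliary $a\in\IB$ the curves $\cen_k^{a,b,c}(t)$ and $\cen_k^{a,d}(t)$ are the obvious affine averages, equal at $t=q_\tau$ to $\tfrac{\m_k(a)\loc_k^a(q_\tau)+(B+C)\vq}{\m_k(a)+B+C}$ and $\tfrac{\m_k(a)\loc_k^a(q_\tau)+D\vq}{\m_k(a)+D}$ respectively. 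Reading off directions (all of $\tau$-component $1$), \ax{AxCenter} for this quadruple becomes $(B+C)\bv_k(d)=B\bv_k(b)+C\bv_k(c)$, \ax{ConsMass} becomes $D=B+C$, and \ax{ConsMomentum} becomes $D\bv_k(d)=B\bv_k(b)+C\bv_k(c)$.

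Granting this dictionary, the conjunctions \ax{ConsMass}$\,\land\,$\ax{ConsMomentum} and \ax{ConsMass}$\,\land\,$\ax{AxCenter} are trivially interchangeable, since once $D=B+C$ is in force the two momentum-type identities coincide, uniformly in $k,b,c,d$. For the implication \ax{AxCenter^+}$\Rightarrow$\ax{ConsMass}$\,\land\,$\ax{ConsMomentum} I would apply \ax{AxCenter^+} with the auxiliary body $a:=k$ — permitted because $\IOb\subseteq\IB$ — and use \ax{AxSelf} to get $\loc_k^k(t)=t\vet$ and $\bv_k(k)=\vo$. The center-lines $\cen_k(k,b,c)$ (on $t\le q_\tau$) and $\cen_k(k,d)$ (on $t\ge q_\tau$) are non-degenerate segments whose union lies on one line $\ell$, forcing the line of each to equal $\ell$; hence their directions are parallel and $\cen_k^{k,b,c}(q_\tau)-\cen_k^{k,d}(q_\tau)$ points along $\ell$. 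That difference works out to $\tfrac{\m_k(k)(D-B-C)}{(\m_k(k)+B+C)(\m_k(k)+D)}(q_\tau\vet-\vq)$, and since $q_\tau\vet-\vq$ is purely spatial while the direction of $\ell$ has nonzero $\tau$-component, the coefficient must vanish: this is $D=B+C$, i.e.\ \ax{ConsMass}, and parallelism of the directions then yields $D\bv_k(d)=B\bv_k(b)+C\bv_k(c)$, i.e.\ \ax{ConsMomentum}. Conversely, under \ax{ConsMass} and \ax{ConsMomentum}, substituting $D=B+C$ and $D\bv_k(d)=B\bv_k(b)+C\bv_k(c)$ into the normal form makes $\cen_k^{a,b,c}(q_\tau)=\cen_k^{a,d}(q_\tau)$ and makes the two direction vectors equal for every $a$, so the two center-lines lie on a common line — which is \ax{AxCenter^+}.

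The delicate point is the step $a:=k$ when the collision event $\vq$ lies on $k$'s own world-line: then $q_\tau\vet-\vq=\vo$ and the argument degenerates, yielding only the weaker relation $D\bv_k(d)=\tfrac{\m_k(k)+D}{\m_k(k)+B+C}\bigl(B\bv_k(b)+C\bv_k(c)\bigr)$. One must instead instantiate \ax{AxCenter^+} with an auxiliary inertial body (or observer) whose $k$-world-line misses $\vq$; securing enough such bodies is precisely where the hypothesis $\IOb\subseteq\IB$ (together with \ax{AxSelf}) earns its keep. Closely related is the bookkeeping caused by \ax{AxCenter^+} being quantified over all observers whereas \ax{ConsMass}, \ax{ConsMomentum} and \ax{AxCenter} are quantified only over inertial ones, so the instances of \ax{AxCenter^+} involving non-inertial observers need a separate argument. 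Once those points are settled, the rest is routine affine geometry.
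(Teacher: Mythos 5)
The paper itself does not prove this theorem---it only cites \cite{dyn}---so I can only judge your argument on its own terms. Your reduction of all four formulas to linear conditions on the masses and the $\tau$-normalized direction vectors is correct and is clearly the intended route (it is consistent with Prop.~\ref{prop-carcen} and Lem.~\ref{lem-par} of the paper): with $B,C,D$ the constant masses, \ax{AxCenter} becomes $(B+C)\bv_k(d)=B\bv_k(b)+C\bv_k(c)$, \ax{ConsMass} becomes $D=B+C$, \ax{ConsMomentum} becomes $D\bv_k(d)=B\bv_k(b)+C\bv_k(c)$, and the middle equivalence is then immediate. Your computation of $\cen_k^{k,b,c}(q_\tau)-\cen_k^{k,d}(q_\tau)$ is also correct, and the non-degenerate case of $\ax{AxCenter^+}\Rightarrow\ax{ConsMass}\land\ax{ConsMomentum}$ goes through exactly as you say.

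The genuine gap is the degenerate case you flag, and your proposed repair does not work. When the collision point $\vq$ lies on $\wl_k(k)$, you say one should ``instantiate \ax{AxCenter^+} with an auxiliary inertial body whose $k$-world-line misses $\vq$,'' and that $\IOb\subseteq\IB$ together with \ax{AxSelf} ``secures enough such bodies.'' It does not: the hypotheses of the theorem contain no existence axioms whatsoever, and the only inertial body they guarantee is $k$ itself, whose world-line is exactly the time axis through $\vq$. Concretely, consider a model with $\Ob=\IOb=\{k\}$, $\IB=\{k,b,c,d\}$, the collision at $\vq=\vo$ on the time axis, constant masses $B,C,D>0$. The instances of \ax{AxCenter^+} with $a\in\{b,c,d\}$ are vacuous or trivial (e.g.\ $\cen_k(b,d)$ and $\cen_k(d,b,c)$ are single points), so the only nontrivial instance is $a=k$, which yields precisely your ``weaker relation'' $(\m_k(k)+D)\big(B\bv_k(b)+C\bv_k(c)\big)=(\m_k(k)+B+C)\,D\bv_k(d)$; choosing $D\neq B+C$ and solving for $\bv_k(d)$ gives a model of all the stated hypotheses plus \ax{AxCenter^+} in which \ax{ConsMass} fails. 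So either the theorem needs an extra existence assumption (such as the auxiliary axioms of \ax{SpecRelDyn}, or the frame conventions of \cite{dyn}) or a genuinely different idea is required for this case; your proof as written cannot close it from the listed hypotheses. The same remark applies to the $\Ob$-versus-$\IOb$ quantifier mismatch in \ax{AxCenter^+}, which you correctly note but likewise leave unresolved: for a non-inertial $k$ the conservation laws say nothing, and nothing in the hypotheses controls $\wl_k(b)$ for such $k$.
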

\noindent The proof of Thm.~\ref{thm2} is in \cite{dyn}.

\begin{cor}\label{cor}
Let us assume \ax{SpecRelDyn^+}.
Let $k$ be an {\it inertial} observer and $b$, $c$ and $d$ {\it inertial} bodies such that
$\inecoll_k(b,c:d)$ holds.
Then
\begin{eqnarray*}
\m_k(d)&=&\m_k(b)+\m_k(c),\quad \mbox{but}\\
\m_0(d)&>&\m_0(b)+\m_0(c),\quad \mbox{whenever } \bv_k(b)\ne
\bv_k(c).
\end{eqnarray*}
\end{cor}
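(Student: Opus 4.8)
The plan is to read off the first equality as an instance of the conservation of relativistic mass furnished by Thm.~\ref{thm2}, and to reduce the strict inequality for rest masses --- using Thm.~\ref{thm1} together with the conservation of momentum --- to a short, purely algebraic strict-concavity estimate for the map $\vv\mapsto\sqrt{1-|\vv|^2}$ on the open unit ball; being algebraic, that estimate is valid over an arbitrary Euclidean ordered field.

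First I would check that \ax{SpecRelDyn^+} supplies the hypotheses of Thm.~\ref{thm2}. It contains \ax{AxMass}, \ax{AxCenter^+} and \ax{SpecRel}; and \ax{SpecRel} implies \ax{AxSelf} (using Prop.~\ref{prop-sr0}) and, for $d\ge3$ (which we need anyway for Thm.~\ref{thm1}), $\IOb\subseteq\IB$ --- the latter because by Thm.~\ref{thm-poi} worldview transformations between inertial observers carry lines to lines, hence carry the time-axis worldline $\wl_k(k)$ (a line by \ax{AxSelf}) onto $\wl_m(k)$, again a line. Since the bodies $b,c,d$ occurring in the conclusion carry rest masses, Prop.~\ref{prop-mib} (via \ax{AxSpeed}) makes their relativistic masses constant along their worldlines, so \ax{AxInMass} holds for them. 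Thm.~\ref{thm2} then yields that, in the situation $\inecoll_k(b,c:d)$, \ax{SpecRelDyn^+} proves \ax{ConsMass}, \ax{ConsMomentum} and \ax{AxCenter}; in particular \ax{ConsMass} gives $\m_k(d)=\m_k(b)+\m_k(c)$, the first claim, and by \ax{AxMass} all three of $\m_k(b),\m_k(c),\m_k(d)$ are strictly positive.

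For the inequality, abbreviate $m_b=\m_k(b)$, $m_c=\m_k(c)$, $m_d=\m_k(d)=m_b+m_c$, $\vv_b=\bv_k(b)$, $\vv_c=\bv_k(c)$, and put $\alpha=m_b/m_d$, $\beta=m_c/m_d$, so $\alpha,\beta>0$ and $\alpha+\beta=1$. Conservation of mass and momentum give $\bv_k(d)=\alpha\vv_b+\beta\vv_c$, and Thm.~\ref{thm1} gives $\m_0(x)=\sqrt{1-v_k(x)^2}\cdot\m_k(x)$ for $x\in\{b,c,d\}$; since each $\m_0(x)$ is positive, $b$ and $c$ move slower than light, i.e. $|\vv_b|,|\vv_c|<1$, whence also $|\alpha\vv_b+\beta\vv_c|\le\alpha|\vv_b|+\beta|\vv_c|<1$. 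Dividing the target inequality $\m_0(d)>\m_0(b)+\m_0(c)$ by $m_d$, it becomes
\[
\sqrt{1-|\alpha\vv_b+\beta\vv_c|^2}>\alpha\sqrt{1-|\vv_b|^2}+\beta\sqrt{1-|\vv_c|^2}\qquad(\vv_b\ne\vv_c).
\]
Both sides are nonnegative, so I would square: using $\alpha^2+\beta^2=1-2\alpha\beta$ the difference of the squares simplifies to $2\alpha\beta\bigl(1-\vv_b\cdot\vv_c-\sqrt{(1-|\vv_b|^2)(1-|\vv_c|^2)}\bigr)$, so it remains to prove $1-\vv_b\cdot\vv_c>\sqrt{(1-|\vv_b|^2)(1-|\vv_c|^2)}$. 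Squaring once more (the left side is $\ge 1-|\vv_b||\vv_c|>0$ by Cauchy--Schwarz) this reduces to $|\vv_b-\vv_c|^2>|\vv_b|^2|\vv_c|^2-(\vv_b\cdot\vv_c)^2$, equivalently --- decomposing $\vv_c=\lambda\vv_b+\vec{w}$ with $\vec{w}$ orthogonal to $\vv_b$, the case $\vv_b=\vo$ being immediate --- to $(1-\lambda)^2|\vv_b|^2+(1-|\vv_b|^2)|\vec{w}|^2>0$, which holds precisely because $\vv_b\ne\vv_c$ forces $(1-\lambda)^2|\vv_b|^2>0$ or $|\vec{w}|>0$ (and $|\vv_b|<1$).

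I do not expect a genuine obstacle: the mathematical substance is the few lines of algebra above, all valid over any ordered field with the square roots at hand. What needs a little care is the housekeeping --- verifying that \ax{SpecRelDyn^+} delivers the hypotheses of Thm.~\ref{thm2} (in particular $\IOb\subseteq\IB$ and the restricted form of \ax{AxInMass}) and that the quantities $\m_k(b),\m_k(c),\m_k(d),\m_0(b),\m_0(c),\m_0(d)$ are all defined; for $\m_0(d)$ one uses that $\bv_k(d)=\alpha\vv_b+\beta\vv_c$ has length below $1$, so \ax{AxThExp^\up} produces an inertial observer comoving with $d$, endowing $d$ with a well-defined rest mass.
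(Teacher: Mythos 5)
Your proof is essentially correct and follows the route the paper itself indicates: the text does not actually print a proof of Cor.~\ref{cor} (it refers to \cite{dyn} for the details), but the surrounding discussion of $E=mc^2$ derives exactly these two claims from \ax{ConsMass} (obtained from \ax{AxCenter^+} via Thm.~\ref{thm2}) together with Thm.~\ref{thm1}. What you add is the general-velocity computation --- the strict concavity of $\vv\mapsto\sqrt{1-|\vv|^2}$ on the open unit ball --- in place of the symmetric special case ($d$ at rest, equal speeds) sketched in the text; your algebra is correct, and, as you note, it is purely polynomial-plus-square-roots and so valid over any Euclidean ordered field. Your verification of the hypotheses of Thm.~\ref{thm2} (in particular $\IOb\subseteq\IB$ from Thm.~\ref{thm-poi} and \ax{AxSelf}, and \ax{AxInMass} via Prop.~\ref{prop-mib}) is also the right bookkeeping.

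The one soft spot is the definedness of $\m_0(b)$ and $\m_0(c)$. You use it twice --- to invoke Thm.~\ref{thm1} for $b$ and $c$, and to deduce $|\vv_b|,|\vv_c|<1$ and hence that $\m_0(d)$ is defined --- but you only assert that ``the bodies occurring in the conclusion carry rest masses.'' Nothing in the hypothesis $\inecoll_k(b,c:d)$ by itself forces $b$ and $c$ to be slower than light or to admit co-moving observers; by Conv.~\ref{conv-eq} the conclusion of the corollary asserts definedness, so either this must be read as an implicit hypothesis (as in Thm.~\ref{thm1}, which explicitly assumes the body has rest mass) or it has to be established separately, and as written your argument is mildly circular on this point. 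The same caveat governs the definedness of $\m_k(b)$, $\m_k(c)$, $\m_k(d)$, which you correctly route through Prop.~\ref{prop-mib} --- again via the rest-mass assumption. Apart from this piece of housekeeping the argument is sound.
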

\noindent
The proof itself is in \cite{dyn}, here we are only concerned with the idea of the proof.
\smallskip

\noindent {\it Returning to $E=mc^2$}: Cor.~\ref{cor} above can be used
to arrive at Einstein's insight $E=mc^2$ in the same way as it is done
in Rindler's~\cite{Rin} and d'Inverno's~\cite{dinverno} relativity
textbooks.  Namely, we have seen above that rest mass can be created
under appropriate conditions.  Created from what? Well, from kinetic
energy (energy of motion).  That points in the direction of Einstein's
connecting mass with energy.  In more detail, let us start with two
bodies $b_1$ and $b_2$ of rest mass $\m_0$.  Let us accelerate the two
bodies towards each other and let them collide inelastically, so that
they stick together forming the new body ``$b_1+b_2$'' (deliberately
sloppy notation).  Let us assume $b_1+b_2$ is at rest relative to the
observer conducting the experiment.  Then the rest mass
$\m_0(b_1+b_2)$ is the sum of relativistic masses $\m_k(b_1)$ and
$\m_k(b_2)$.  Assuming that at collision the speed of both $b_1$ and
$b_2$ were $v$, we have
$\m_0(b_1+b_2)=\m_0(b_1)/\sqrt{1-v^2}+\m_0(b_2)/\sqrt{1-v^2}$, which
is definitely greater than $\m_0(b_1)+\m_0(b_2)$ if $v\ne 0$.  So,
rest mass was created from the kinetic energy supplied to our test
bodies $b_1$ and $b_2$ when they were accelerated towards each other.
So far, we have a qualitative argument (based on our
\ax{SpecRelDyn^+}) in the direction that energy (in our example
kinetic) can be ``transformed'' to ``create'' mass.  A quantitatively
(and physically) more detailed analysis of $E=mc^2$ in terms of
Thm.~\ref{thm1} is given in \cite[pp.111-114]{Rin} where we refer the
reader for more detail and for the ``second part'' of the argument.
The ``first part'' was provided by Thm.~\ref{thm1} and Cor.~\ref{cor}.

\begin{prop}\label{Prop1}
\begin{eqnarray*}
\ax{SpecRelDyn} &\not\models&\ax{ConsMass},\quad \text{ and} \\
\ax{SpecRelDyn} &\not\models&\ax{ConsMomentum}.
\end{eqnarray*}
\end{prop}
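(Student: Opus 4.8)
The plan is to construct a single model of \ax{SpecRelDyn} in which one
inelastic collision of inertial bodies violates both conservation laws,
which settles both non-implications at once. The guiding observation is
the one already stressed after Cor.~\ref{cor}: the axiom \ax{AxCenter}
fixes only the \emph{velocity} of the body $d$ emerging from
$\inecoll_k(b,c:d)$ --- it must run along the forward continuation of
the center-line $\cen_k({b,c})$ of the masses of $b$ and $c$ --- while
the relativistic mass of $d$ (equivalently, by Thm.~\ref{thm1}, its rest
mass) is left entirely free by \ax{SpecRelDyn}; only \ax{AxCenter^+}
would pin it down (cf.\ Thm.~\ref{thm2}). So I would build the standard
Minkowskian model and simply give one collision product the ``wrong''
mass.

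In detail I would start from the standard Minkowskian construction
underlying Prop.~\ref{rem}: the quantity part is any Euclidean ordered
field; observers are identified with the orthochronous Poincar\'e-related
copies of a fixed coordinate system (so \ax{SpecRel} holds, and with all
subluminal copies present also \ax{AxThExp^\up}); photons trace the
lightlike lines; and every body carries a future-pointing line-segment
world-line, with relativistic mass given by the usual Lorentz-factor
rule (and a fixed positive value for photons), which makes \ax{AxMass}
and \ax{AxSpeed} hold. For every inertial observer $k$ and every
admissible datum $\bv_1\ne\bv_2$, $m_1,m_2>0$, I throw in three fresh
inertial bodies $b,c,d$ meeting at a fresh spacetime event $\vq$ --- one
distinct from all other designated collision events and from all
world-line endpoints, which is possible since the field is infinite ---
with $b,c$ incoming at $\vq$ with the prescribed velocities and rest
masses and $d$ outgoing from $\vq$ along the center-line of the masses
of $b$ and $c$. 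Since the designated triples are then the only inelastic
collisions of inertial bodies in the model, \ax{Ax\forall\inecoll} and
\ax{AxCenter} both hold. Finally, for \emph{one} such triple I pick the
datum with $\bv_2=0$ and $\bv_1\ne 0$ (so that $\bv_k(d)\ne 0$) and
assign $d$ the rest mass
$\m_0(d)\leteq 2\sqrt{1-v_k(d)^2}\,\big(\m_k(b)+\m_k(c)\big)$ instead of
the conservation value, keeping the conservation value for all other
products; because products never serve as inputs to further collisions,
this local change leaves every axiom above intact.

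The failures are then immediate. At the altered triple
$\m_k(d)=\m_0(d)/\sqrt{1-v_k(d)^2}=2\big(\m_k(b)+\m_k(c)\big)$, which is
different from $\m_k(b)+\m_k(c)$, so \ax{ConsMass} fails; and since
\ax{AxCenter} forces
$\bv_k(d)=\big(\m_k(b)\bv_k(b)+\m_k(c)\bv_k(c)\big)/\big(\m_k(b)+\m_k(c)\big)$,
we get $\m_k(b)\bv_k(b)+\m_k(c)\bv_k(c)=\big(\m_k(b)+\m_k(c)\big)\bv_k(d)
\ne\m_k(d)\bv_k(d)$ because $\bv_k(d)\ne 0$, so \ax{ConsMomentum} fails
as well. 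Verifying \ax{AxEv}, \ax{AxSymDist}, \ax{AxPh} and the frame
axioms is exactly the routine bookkeeping of the reference model, and
one model settling both items is enough.

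The one point requiring genuine care is the verification of
\ax{AxCenter} for \emph{every} inertial observer at once: ``center-line
of the masses'' is defined through the observer's own simultaneity, so a
priori $\cen_k({b,c})$ is a frame-dependent line, yet the single
world-line of $d$ has to continue each of these lines. This works
because the center-of-mass velocity in a frame $k$ is the velocity of
the total four-momentum $\m_k(b)(1,\bv_k(b))+\m_k(c)(1,\bv_k(c))$, which
transforms covariantly; hence all observers' center-lines are one and
the same spacetime line, whose forward ray from $\vq$ we take to be $d$'s
world-line, and the rest mass of $d$ genuinely stays a free parameter.
That covariance check, together with the cosmetic housekeeping that
keeps the sets $in_k$ and $out_k$ clean at the designated events, is the
only nontrivial step; everything else is standard.
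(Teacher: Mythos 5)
Your construction is correct, and it supplies exactly what the thesis itself omits: the paper gives no proof of Prop.~\ref{Prop1}, deferring it to the cited dynamics paper \cite{dyn}, so the expected argument is precisely a counter-model of this kind. Your key observation --- that \ax{AxCenter} pins down only the direction of $d$'s world-line (the center-line through the collision point), leaving $\m_0(d)$ a free parameter that only \ax{AxCenter^+} would fix --- is the right one and is consistent with the surrounding discussion (Thm.~\ref{thm2}, Cor.~\ref{cor}); and the one genuinely delicate step you flag, the frame-independence of $\cen_k(b,c)$ as a spacetime line, is exactly the content of Lem.~\ref{lem-par} ($\cen_k(b_1,\ldots,b_n)$ is parallel to $\sum_i\fvp_k(b_i)$, which transforms covariantly under the Poincar\'e worldview transformations guaranteed by Thm.~\ref{thm-poi}), so your covariance check is sound. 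The derivation of the simultaneous failure of \ax{ConsMass} and \ax{ConsMomentum} from $\m_k(d)\neq\m_k(b)+\m_k(c)$ together with $\bv_k(d)\neq 0$ is also correct, since \ax{AxCenter} forces $\bv_k(d)$ to be the center-of-mass velocity.
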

\noindent
The proof of Prop.~\ref{Prop1} is in \cite{dyn}.

In the literature, the conservation of relativistic mass and that of
linear momentum are used to derive the conclusion of Thm.~\ref{thm1}.
By Prop.~\ref{Prop1} above, our axiom system \ax{SpecRelDyn} implies
neither \ax{ConsMass} nor \ax{ConsMomentum}.  By Thm.~\ref{thm2},
\ax{ConsMass} and \ax{ConsMomentum} together imply the key axiom
\ax{AxCenter} of \ax{SpecRelDyn}.  So Thm.~\ref{thm1} is stronger than
the corresponding result in the literature since it requires fewer
assumptions.

Thm.~\ref{thm2} also states that the conservation axioms can be
replaced by the natural, purely geometrical symmetry postulate
\ax{AxCenter^+} without loss of predictive power or expressive power.
Since the conservation axioms \ax{ConsMass} and \ax{ConsMomentum} are
not ``purely geometrical'' and they are less observation-oriented than
\ax{AxCenter^+}, we think that it may be more convincing to use
\ax{AxCenter} or \ax{AxCenter^+} in an axiom system when we introduce
the basics of relativistic dynamics. See \cite[p.22 footnote 22]{leszabo}.

\section{Four-momentum}

Neither relativistic mass nor linear momentum is
Lorentz-covariant.
However, they can be ``put together'' to obtain
a Lorentz-covariant quantity called four-momentum, as follows.
Let $k\in\Ob$ and $b\in\IB$.
The \df{four-momentum} $\Dff{\fvp_k(b)}$
of {\it inertial} body $b$ according to  {\it inertial} observer $k$ is defined as the
element of $\Q^d$ whose time component and space component are the
relativistic mass and linear momentum of $b$ according to
 $k$, respectively, see Fig.~\ref{fourmoment}.
i.e.,
\begin{equation*}
\fvp_k(b)_\tau=\m_k(b)\quad \text{ and }\quad
\fvp_k(b)_\sigma=\m_k(b)\bv_k(b).
\end{equation*}
It is not difficult to prove that $\fvp_k(b)$ is parallel to the
world-line of $b$ and its Minkowski length is $\m_0(b)$, see
Prop.~\ref{prop-fourmoment}. Hence, it is indeed a Lorentz-covariant
quantity. The \df{four-velocity} $\Dff{\fvv_k(b)}$ of {\it inertial}
body $b$ according to {\it inertial} observer $k$ is defined as
$\vq\in\Q^d$ if $q_\tau>0$, $\mu(\vqq)=1$ and $\vq$ is parallel to
$\wl_k(b)$. Let us note that $\fvv_k(b)$ is defined iff $\bv_k(b)$ is
defined; and
\begin{equation*}
\fvv_k(b)_\tau=\frac{1}{\sqrt{1-v_k(b)^2}}\quad \text{ and }\quad
\fvv_k(b)_\sigma=\frac{\bv_k(b)}{\sqrt{1-v_k(b)^2}},
\end{equation*}
see Fig.~\ref{fourmoment}.

\begin{figure}[ht]
\small
\begin{center}
\psfrag*{m}[r][r]{$\m_k(b)$} 
\psfrag*{b}[b][b]{$b$}
\psfrag*{mv}[t][t]{$\m_k(b)\bv_k(b)$}
\psfrag*{P}[lb][lb]{$\fvp_k(b)$} 
\psfrag*{b}[b][b]{$b$}
\psfrag*{V}[lt][lt]{$\fvv_k(b)$} 
\psfrag*{v}[rt][rt]{$\bv_k(b)$}
\psfrag*{1}[rt][rt]{$1$} 
\psfrag*{k}[rt][rt]{$k$}
\includegraphics[keepaspectratio, width=0.8\textwidth]{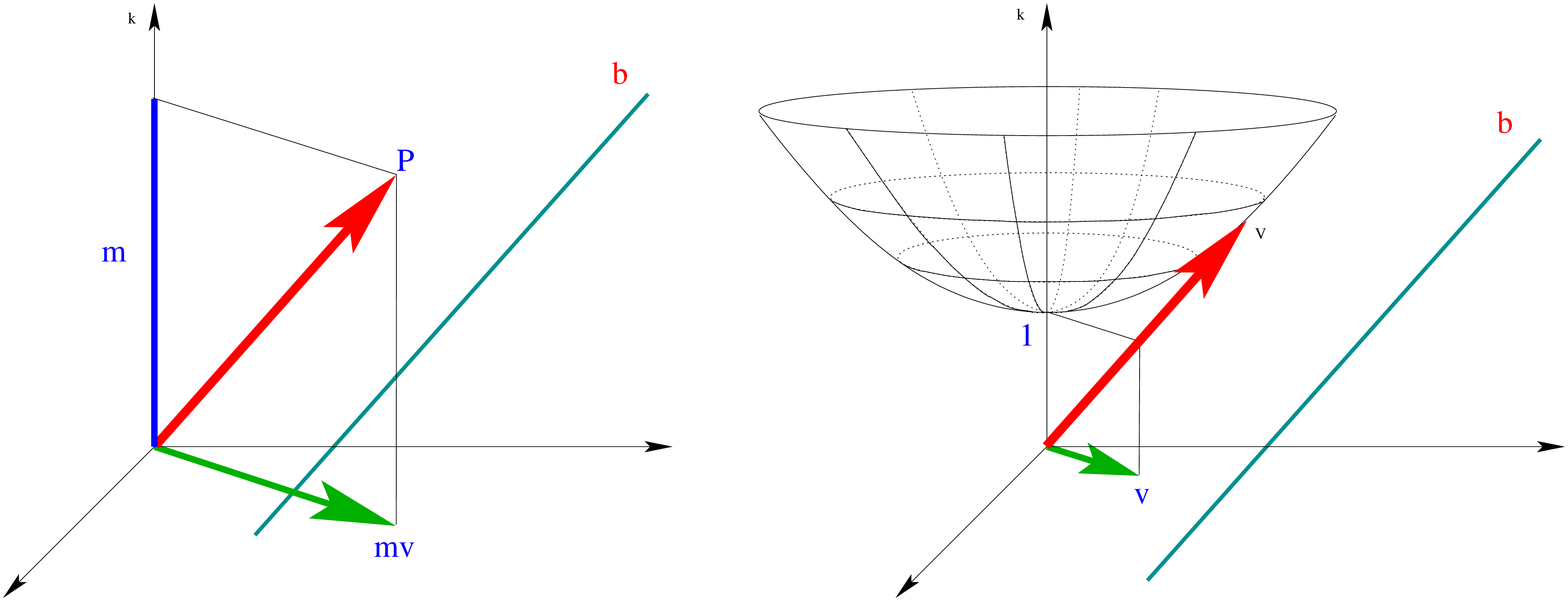}
\caption{\label{fourmoment} Illustration of four-momentum $\fvp_k(b)$
and four-velocity $\fvv_k(b)$} 
\end{center}
\end{figure}

\begin{prop} \label{prop-fourmoment} 
Let $d\geq 3$. Assume \ax{SpecRelDyn} and let $k$ be an {\it inertial}
observer and $b$ an {\it inertial} body having rest mass.  Then
\begin{equation*}
\fvp_k(b)=\m_0(b) \fvv_k(b).
\end{equation*}
\end{prop}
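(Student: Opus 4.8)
The plan is to reduce the asserted vector identity to its two coordinate components and then read it off from the mass--velocity relation of Thm.~\ref{thm1}. By the definitions of four-momentum and four-velocity recalled just above the statement, $\fvp_k(b)=\m_0(b)\fvv_k(b)$ is equivalent to the conjunction of $\fvp_k(b)_\tau=\m_0(b)\,\fvv_k(b)_\tau$ and $\fvp_k(b)_\sigma=\m_0(b)\,\fvv_k(b)_\sigma$, where $\fvp_k(b)_\tau=\m_k(b)$, $\fvp_k(b)_\sigma=\m_k(b)\bv_k(b)$, $\fvv_k(b)_\tau=1/\sqrt{1-v_k(b)^2}$ and $\fvv_k(b)_\sigma=\bv_k(b)/\sqrt{1-v_k(b)^2}$.

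Before computing I would check that all these quantities are defined under \ax{SpecRelDyn}. Since $k\in\IOb$ and $b\in\IB$, the world-line $\wl_k(b)$ is a line segment, so $\loc_k^b$ is affine in time with $(\loc_k^b(t))_\tau=t$; hence $\bv_k(b)$ is defined and constant, and $\m_k(b)$ is defined and constant by Prop.~\ref{prop-mib} (\ax{AxSpeed} being one of the axioms of \ax{SpecRelDyn}). By \ax{AxMass}, $\m_k(b,\vpp)>0$ for $\vpp\in\wl_k(b)$, so $\m_k(b)>0$, and likewise $\m_0(b)>0$ since $b$ has rest mass. Then Thm.~\ref{thm1} gives $\m_0(b)=\sqrt{1-v_k(b)^2}\cdot\m_k(b)$, whence $\sqrt{1-v_k(b)^2}=\m_0(b)/\m_k(b)>0$ and so $v_k(b)<1$; thus $\wl_k(b)$ is timelike and $\fvv_k(b)$ is defined. (The component formulas for $\fvv_k(b)$ are themselves immediate: the unique vector parallel to $\wl_k(b)$ with positive time component and Minkowski length $1$ is $\langle 1,\bv_k(b)\rangle/\sqrt{1-v_k(b)^2}$, since the direction of $\wl_k(b)$ is $\langle 1,\bv_k(b)\rangle$ and $\mu(\langle 1,\bv_k(b)\rangle)^2=1-v_k(b)^2$.)

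The remaining computation is then routine substitution of Thm.~\ref{thm1}. For the time component,
\begin{equation*}
\m_0(b)\,\fvv_k(b)_\tau=\sqrt{1-v_k(b)^2}\cdot\m_k(b)\cdot\frac{1}{\sqrt{1-v_k(b)^2}}=\m_k(b)=\fvp_k(b)_\tau,
\end{equation*}
and for the space component the same substitution (now carrying the extra factor $\bv_k(b)$) gives
\begin{equation*}
\m_0(b)\,\fvv_k(b)_\sigma=\sqrt{1-v_k(b)^2}\cdot\m_k(b)\cdot\frac{\bv_k(b)}{\sqrt{1-v_k(b)^2}}=\m_k(b)\bv_k(b)=\fvp_k(b)_\sigma.
\end{equation*}
Since $\fvp_k(b)$ and $\m_0(b)\fvv_k(b)$ coincide in both the time and the space component, they are equal, which is the claim.

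I do not anticipate a real obstacle: the proposition is essentially Thm.~\ref{thm1} repackaged in the four-vector formalism (its genuine, geometric content being proved in \cite{dyn-studia}), so the only care needed is the bookkeeping above --- definedness of $\m_k(b)$, $\bv_k(b)$, $v_k(b)$ and $\fvv_k(b)$, and the inequality $v_k(b)<1$ which makes the square roots meaningful --- all of which follow from \ax{AxMass}, Prop.~\ref{prop-mib} and Thm.~\ref{thm1}. Note that $d\ge 3$ enters only through the invocation of Thm.~\ref{thm1}.
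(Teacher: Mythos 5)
Your proof is correct, and it rests on the same engine as the paper's: everything reduces to the mass--velocity relation of Thm.~\ref{thm1}. The only difference is organizational. The paper computes the Minkowski length $\mu\big(\fvp_k(b)\big)=\sqrt{1-v_k(b)^2}\cdot\m_k(b)=\m_0(b)$ and then concludes $\fvp_k(b)=\m_0(b)\fvv_k(b)$ by appealing to the remark preceding the proposition that $\fvp_k(b)$ is parallel to $\wl_m(b)$ (so a vector of the right Minkowski length, direction and time-orientation must equal $\m_0(b)\fvv_k(b)$). You instead verify the identity componentwise from the explicit formulas for $\fvv_k(b)_\tau$ and $\fvv_k(b)_\sigma$, which makes the argument slightly more self-contained: you do not need the parallelism claim, and you explicitly discharge the definedness and $v_k(b)<1$ bookkeeping that the paper leaves tacit. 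Either route is fine; nothing is missing.
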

\begin{proof} 
By the definition of $\fvp_k(b)$, it is easy to see that
\begin{equation*}
\mu\big(\fvp_k(b)\big)=\sqrt{1-v_k(b)^2}\cdot \m_k(b).
\end{equation*}
Hence by Thm.~\ref{thm1}, $\mu\big(\fvp_k(b)\big)=\m_0(b)$.  Thus
$\fvp_k(b)=\m_0(b)\fvv_k(b)$.
\end{proof}

\begin{description}
\item[\Ax{ConsFourMoment}]\index{\ax{ConsFourMoment}} Conservation of four-momentum:
\begin{equation*}
\forall k\in\IOb\; \forall b,c,d\in \IB\quad \inecoll_k(b,c:d)\then \fvp_k(b) +\fvp_k(c) =\fvp_k(d).
\end{equation*}
\end{description}
The following can be easily proved from the definition of four-moment. 
\begin{prop}\label{prop-fourcor} 
$\ax{ConsFourMoment}\leftrightarrow\ax{ConsMass}\land\ax{ConsMoment}$.
\end{prop}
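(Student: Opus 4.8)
The plan is to observe that Prop.~\ref{prop-fourcor} is an immediate consequence of the definition of four-momentum together with the fact that a vector equation in $\Q^d$ is equivalent to the conjunction of its time-component equation and its space-component equation. All three formulas \ax{ConsFourMoment}, \ax{ConsMass} and \ax{ConsMomentum} carry the same quantifier prefix ($\forall k\in\IOb$, $\forall b,c,d\in\IB$) and the same antecedent ($\inecoll_k(b,c:d)$), so it suffices to fix $k\in\IOb$ and $b,c,d\in\IB$ with $\inecoll_k(b,c:d)$ and to show that the consequent of \ax{ConsFourMoment} is equivalent to the conjunction of the consequents of \ax{ConsMass} and \ax{ConsMomentum}.

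First I would recall that, by definition, $\fvp_k(x)_\tau=\m_k(x)$ and $\fvp_k(x)_\sigma=\m_k(x)\bv_k(x)$ for each $x\in\{b,c,d\}$; hence, using existential equality (Conv.~\ref{conv-eq}), $\fvp_k(x)$ is defined precisely when $\m_k(x)$ and $\bv_k(x)$ are both defined, and in that case $\m_k(x)\bv_k(x)$ is automatically defined as well. Consequently the equation $\fvp_k(b)+\fvp_k(c)=\fvp_k(d)$ asserts exactly that $\fvp_k(b)$, $\fvp_k(c)$, $\fvp_k(d)$ are all defined and that the two $\Q^d$-vectors $\fvp_k(b)+\fvp_k(c)$ and $\fvp_k(d)$ coincide.

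Now I would split this vector identity into components. Its $\tau$-component reads $\m_k(b)+\m_k(c)=\m_k(d)$, which --- again under existential equality, since definedness of the $\fvp$'s forces definedness of the $\m_k$'s --- is precisely the consequent of \ax{ConsMass}; its $\sigma$-components read $\m_k(b)\bv_k(b)+\m_k(c)\bv_k(c)=\m_k(d)\bv_k(d)$, i.e., the consequent of \ax{ConsMomentum}. For the converse direction one checks that if both the \ax{ConsMass}- and the \ax{ConsMomentum}-consequents hold, then $\m_k$ and $\bv_k$ are defined on $b,c,d$, hence the three four-momenta are defined, and reassembling the component equations yields $\fvp_k(b)+\fvp_k(c)=\fvp_k(d)$. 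This gives the equivalence, and, iterating over all $k,b,c,d$, the equivalence of the formulas themselves.

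The only point needing care --- and it is a minor one --- is the bookkeeping of definedness dictated by Conv.~\ref{conv-eq}: one must verify that the ``everything defined'' side-condition implicit in the four-momentum equation matches the ``everything defined'' side-conditions implicit in the mass and momentum equations. This is immediate from the componentwise description of $\fvp_k$ above, so there is no genuine obstacle; the proposition is indeed routine, as the surrounding text indicates.
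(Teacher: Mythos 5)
Your proof is correct and is exactly the argument the paper has in mind when it says the proposition ``can be easily proved from the definition of four-moment'' (the paper gives no further detail): the vector equation $\fvp_k(b)+\fvp_k(c)=\fvp_k(d)$ splits into its $\tau$-component, which is the \ax{ConsMass} consequent, and its $\sigma$-component, which is the \ax{ConsMomentum} consequent. Your extra care about matching the definedness side-conditions under Conv.~\ref{conv-eq} (noting that \ax{ConsMomentum}, not \ax{ConsMass}, supplies the definedness of $\bv_k$ needed for the four-momenta) is a worthwhile detail the paper leaves implicit.
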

\noindent Hence the following is an immediate corollary of
Thm.~\ref{thm2}.
\begin{cor}\label{cor-fmoment}
\begin{equation*}
\ax{AxMass}+\ax{AxInMass}+\ax{AxSelf}+ \IOb\subseteq\IB\models\ax{AxCenter^+}\leftrightarrow \ax{ConsFourMoment}.
\end{equation*}
\end{cor}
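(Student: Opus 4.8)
The plan is to obtain this purely by composing the two results stated immediately above it, so that no new spacetime-geometric argument is needed. First I would invoke Prop.~\ref{prop-fourcor}, which states (as a consequence of the definition of four-momentum alone) that $\ax{ConsFourMoment}\leftrightarrow\ax{ConsMass}\land\ax{ConsMomentum}$: since $\fvp_k(b)_\tau=\m_k(b)$ and $\fvp_k(b)_\sigma=\m_k(b)\bv_k(b)$, the $\Q^d$-valued equation $\fvp_k(b)+\fvp_k(c)=\fvp_k(d)$ decomposes coordinatewise into exactly the relativistic-mass-conservation equation and the linear-momentum-conservation equation, and the existential-equality reading of these equations (Conv.~\ref{conv-eq}) agrees on the two sides. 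This biconditional carries no side hypotheses, so it holds in particular in every model of $\ax{AxMass}+\ax{AxInMass}+\ax{AxSelf}+\IOb\subseteq\IB$.

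Next I would quote the first biconditional of Thm.~\ref{thm2}, which asserts under precisely these auxiliary assumptions that $\ax{AxCenter^+}\Iff\ax{ConsMass}\land\ax{ConsMomentum}$. Composing the two biconditionals then gives $\ax{AxCenter^+}\leftrightarrow\ax{ConsFourMoment}$ in every such model, which is the claim of the corollary.

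I do not expect any genuine obstacle here; the only point requiring care is the bookkeeping of hypotheses. One must check that the axiom set $\{\ax{AxMass},\ax{AxInMass},\ax{AxSelf},\IOb\subseteq\IB\}$ appearing in the corollary is exactly the hypothesis set of Thm.~\ref{thm2} (it is), and that Prop.~\ref{prop-fourcor} indeed needs no extra assumptions --- its proof uses only that $\m_k$ and $\bv_k$ are defined in the inelastic-collision situations quantified over, which the definedness conventions already guarantee. Thus the proof reduces to the two-line deduction ``by Prop.~\ref{prop-fourcor} and the first equivalence of Thm.~\ref{thm2}.'' One could equally route through the second biconditional of Thm.~\ref{thm2} ($\ax{ConsMass}\land\ax{AxCenter}$), but $\ax{ConsFourMoment}$ lines up most directly with $\ax{ConsMass}\land\ax{ConsMomentum}$, so that is the route I would take.
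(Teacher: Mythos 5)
Your proposal is correct and is exactly the paper's own route: the corollary is stated there as an immediate consequence of Prop.~\ref{prop-fourcor} (the coordinatewise decomposition of four-momentum conservation into mass and momentum conservation) combined with the first biconditional of Thm.~\ref{thm2}. Your bookkeeping of the hypotheses matches the paper's as well, so nothing further is needed.
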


Let us return to discussing the merits of using \ax{AxCenter^+} in
place of the more conventional preservation principles.
In the
context of Cor.~\ref{cor}, \ax{ConsFourMoment} has the advantage that
it is computationally direct and simple, while \ax{AxCenter^+} has
the advantage that it is more observational, more geometrical, and
more basic in some intuitive sense.

The following proposition shows the relation of \ax{ConsFourMoment} and \ax{AxCenter}.
By Cor.~\ref{cor-fmoment}, it also shows that \ax{AxCenter^+} is a strictly stronger axiom than \ax{AxCenter}. 
\begin{prop}
\label{prop-carcen} Assume \ax{AxMass} and \ax{AxInMass}. Then \ax{AxCenter} is equivalent to the following formula
\begin{equation*}
\forall k\in\Ob\enskip \forall b,c,d\in \IB\quad\inecoll_k(b,c:d)
\then \exists \lambda\in\Q\quad \fvp_k(b)+\fvp_k(c)=\lambda \fvp_k(d).
\end{equation*}
\end{prop}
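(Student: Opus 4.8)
The plan is to compute, in closed form, the direction along which the center-line $\cen_k({b,c})$ runs, recognize it as a positive scalar multiple of $\fvp_k(b)+\fvp_k(c)$, and then read off the equivalence: since $\cen_k({b,c})$ and $\wl_k(d)$ already pass through the collision point, the collinearity asserted by \ax{AxCenter} amounts exactly to saying that $\fvp_k(b)+\fvp_k(c)$ is parallel to $\fvp_k(d)$.

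Fix an observer $k$ and inertial bodies $b,c,d$ with $\inecoll_k(b,c:d)$, and let $\vq$ be the coordinate point witnessing the collision, so that $\vq$ lies on each of $\wl_k(b)$, $\wl_k(c)$, $\wl_k(d)$. When $k$ is inertial --- the only case in which \ax{AxCenter} has content --- \ax{AxMass} makes $\m_k(b),\m_k(c),\m_k(d)$ positive and \ax{AxInMass} makes each of them constant along the corresponding world-line, so $\fvp_k(b),\fvp_k(c),\fvp_k(d)$ are well defined, each with positive time component equal to the relativistic mass. Since the three masses are positive, the explicit formula for the center of mass gives $\cen_k^{b,c}(q_\tau)=\vq$, hence $\vq\in\cen_k({b,c})\cap\wl_k(d)$.

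For the key computation, note that $b,c\in\IB$ and $k\in\IOb$ make $\wl_k(b)$ and $\wl_k(c)$ line segments, so near $q_\tau$ one has $\loc_k^b(t)=\vq+(t-q_\tau)\langle 1,\bv_k(b)\rangle$ and likewise for $c$, with $\fvp_k(b)=\m_k(b)\langle 1,\bv_k(b)\rangle$. Substituting into the explicit formula for $\cen_k^{b,c}(t)$ and pulling the masses out (constant, by \ax{AxInMass}) yields
\[
\cen_k^{b,c}(t)\;=\;\vq+(t-q_\tau)\cdot\frac{\fvp_k(b)+\fvp_k(c)}{\m_k(b)+\m_k(c)} .
\]
Hence $\cen_k({b,c})$ is a nondegenerate subset (at least two points, since the time-domains of $\loc_k^b$ and $\loc_k^c$ overlap in a nondegenerate interval below $q_\tau$) of the line through $\vq$ in direction $\fvp_k(b)+\fvp_k(c)$, and $\wl_k(d)$ is a nondegenerate subset of the line through $\vq$ in direction $\langle 1,\bv_k(d)\rangle$, which is the direction of $\fvp_k(d)=\m_k(d)\langle 1,\bv_k(d)\rangle$ because $\m_k(d)>0$.

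The equivalence now follows. If \ax{AxCenter} holds, the common line $\ell\supseteq\cen_k({b,c})\cup\wl_k(d)$ must be the line through $\vq$ in direction $\fvp_k(b)+\fvp_k(c)$ (two of its points already lie on that line through $\vq$), so $\wl_k(d)\subseteq\ell$ forces $\fvp_k(d)$ parallel to $\fvp_k(b)+\fvp_k(c)$; comparing the positive time components gives $\fvp_k(b)+\fvp_k(c)=\lambda\fvp_k(d)$ with $\lambda=(\m_k(b)+\m_k(c))/\m_k(d)>0$. Conversely, from $\fvp_k(b)+\fvp_k(c)=\lambda\fvp_k(d)$ the time components force $\lambda>0$, so the lines through $\vq$ in directions $\fvp_k(b)+\fvp_k(c)$ and $\fvp_k(d)$ coincide, and that single line contains both $\cen_k({b,c})$ and $\wl_k(d)$, which is the conclusion of \ax{AxCenter}. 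I expect the only fussy part to be the bookkeeping of degenerate configurations --- ensuring $\cen_k({b,c})$ genuinely has two points, that no ``horizontal'' world-line spoils $\loc_k^b$, and that for non-inertial $k$ (or whenever a four-momentum is undefined) both sides of the displayed equivalence fail together --- which I would dispatch using the conventions on existential equality and definedness; the mathematical core is the one-line differentiation displayed above.
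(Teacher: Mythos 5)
Your proof is correct and follows essentially the same route as the paper: the paper reduces the claim to the fact that $\cen_k({b,c})$ is parallel to $\fvp_k(b)+\fvp_k(c)$ (its Lem.~\ref{lem-par}, proved by the same center-of-mass computation you display, just stated for general $n$), notes that this center-line and $\wl_k(d)$ meet at the collision point, and concludes collinearity holds iff the four-momenta are parallel. Your handling of the degenerate cases (positivity and constancy of the masses via \ax{AxMass} and \ax{AxInMass}, nondegeneracy of the center-line, definedness for non-inertial $k$) is at the same level of detail as the paper's.
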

\noindent Prop.~\ref{prop-extcarcen} on p.\pageref{prop-extcarcen} is an extension of this proposition.
The following is a consequence of Props.\ \ref{prop-fourcor} and \ref{prop-carcen}.
\begin{cor}
\begin{eqnarray*}
\ax{AxMass}+\ax{AxInMass}+\ax{ConsMass}&\models& \ax{AxCenter}\leftrightarrow\ax{ConsMoment}.\\
\ax{AxMass}+\ax{AxInMass}+\ax{ConsMoment}&\models& \ax{ConsMass}\rightarrow\ax{AxCenter}.\\
\ax{AxMass}+\ax{AxInMass}+ \ax{AxCenter}&\models& \ax{ConsMass}\rightarrow \ax{ConsMoment}.
\end{eqnarray*}
\end{cor}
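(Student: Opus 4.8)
The plan is to derive all three lines purely from Prop.~\ref{prop-carcen} and Prop.~\ref{prop-fourcor}, the only genuine computation being that, once \ax{ConsMass} is assumed, ``conservation of four--momentum up to a scalar'' is automatically conservation on the nose. Throughout one works under \ax{AxMass} and \ax{AxInMass}, so that for every $k\in\IOb$ and $b\in\IB$ the quantities $\m_k(b)$, $\bv_k(b)$, $\fvp_k(b)$ are defined, and $\fvp_k(b)_\tau=\m_k(b)>0$: positivity comes from \ax{AxMass} (a body present at a coordinate point has strictly positive relativistic mass there), and well--definedness from \ax{AxInMass}.

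First I would isolate the bridging fact: assuming \ax{AxMass}, \ax{AxInMass} and \ax{ConsMass}, the formula occurring in Prop.~\ref{prop-carcen},
\[
\forall k\in\Ob\;\forall b,c,d\in\IB\quad \inecoll_k(b,c:d)\then \exists\lambda\in\Q\quad \fvp_k(b)+\fvp_k(c)=\lambda\,\fvp_k(d),
\]
is equivalent to \ax{ConsFourMoment}. The implication ``$\Leftarrow$'' is the case $\lambda=1$. For ``$\Rightarrow$'', given an inelastic collision $\inecoll_k(b,c:d)$ and such a $\lambda$, comparing the time components of the two sides yields $\m_k(b)+\m_k(c)=\lambda\,\m_k(d)$, while \ax{ConsMass} yields $\m_k(b)+\m_k(c)=\m_k(d)$; since $\m_k(d)>0$ this forces $\lambda=1$, i.e.\ $\fvp_k(b)+\fvp_k(c)=\fvp_k(d)$. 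Combining this bridging fact with Prop.~\ref{prop-carcen} (which identifies \ax{AxCenter} with the displayed formula under \ax{AxMass}, \ax{AxInMass}) and Prop.~\ref{prop-fourcor} ($\ax{ConsFourMoment}\leftrightarrow\ax{ConsMass}\land\ax{ConsMoment}$) gives, under \ax{AxMass}, \ax{AxInMass}, \ax{ConsMass}, the chain $\ax{AxCenter}\leftrightarrow\ax{ConsFourMoment}\leftrightarrow\ax{ConsMass}\land\ax{ConsMoment}\leftrightarrow\ax{ConsMoment}$, which is the first line.

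The remaining two lines are the two implications one gets by dropping \ax{ConsMass} from the hypotheses and reinstating it as an antecedent. For the second line, assume \ax{AxMass}, \ax{AxInMass}, \ax{ConsMoment} and suppose \ax{ConsMass}; then \ax{ConsFourMoment} holds by Prop.~\ref{prop-fourcor}, hence the displayed formula holds (take $\lambda=1$), hence \ax{AxCenter} holds by Prop.~\ref{prop-carcen}. For the third line, assume \ax{AxMass}, \ax{AxInMass}, \ax{AxCenter} and suppose \ax{ConsMass}; then the displayed formula holds by Prop.~\ref{prop-carcen}, the bridging fact upgrades it to \ax{ConsFourMoment}, and \ax{ConsMoment} follows by Prop.~\ref{prop-fourcor}. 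I expect no real difficulty here; the single non--bookkeeping point is the scalar--elimination step, whose only content is that $\fvp_k(d)\neq 0$ because its time component $\m_k(d)$ is strictly positive --- which is exactly why \ax{AxMass}, and not merely \ax{AxInMass}, appears among the hypotheses.
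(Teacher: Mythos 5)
Your proof is correct and follows exactly the route the paper intends: the corollary is stated there as an immediate consequence of Prop.~\ref{prop-fourcor} and Prop.~\ref{prop-carcen}, and the only substantive step you supply --- eliminating the scalar $\lambda$ by comparing time components and using $\m_k(d)>0$ from \ax{AxMass} together with \ax{ConsMass} --- is precisely the bridge the paper leaves implicit. Nothing to add.
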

\begin{rem}
  Let us, however, note that the two implications in the corollary
  above cannot be reversed since it is possible to construct a model in which there are an
  {\it inertial} observer $k$ and {\it inertial} bodies $b$, $c$
  and $d$ such that $\inecoll_k(b,c:d)$, $\bv_k(b)=-\bv_k(c)$,
  $v_k(d)=0$ and $ m_k(b)=m_k(c)=m_k(d) $; and in this model both
  \ax{AxCenter} and \ax{ConsMoment} hold while \ax{ConsMass} does not
  hold.
\end{rem}

Let us finally state a theorem about the existence of nontrivial
models of our axiom systems. The proof of Thm.~\ref{konzisztencia} can be found in \cite{dyn}.
\begin{thm}
\label{konzisztencia} \label{rem}
$\ax{SpecRelDyn^+}\cup\setopen\IOb\neq\emptyset\setclose$ is consistent.
\end{thm}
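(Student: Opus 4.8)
The plan is to exhibit an explicit model. Take the quantity part to be the real field $\R$ (any Euclidean ordered field would do), fix the dimension $d\ge 2$, and let the inertial observers be indexed by the group $\mathcal P$ of orthochronous Poincar\'e transformations of $\R^d$, with the identity $\mathsf I$ serving as a reference frame. Bodies fall into three disjointly tagged classes: (i) the observers themselves, the worldline of $P\in\mathcal P$ according to $\mathsf I$ being declared to be the timelike line $P[\setopen\lambda\vet\setmid\lambda\in\R\setclose]$; (ii) one photon per slope-$1$ line, whose $\mathsf I$-worldline is that whole line; and (iii) for each quadruple $(\bv_1,\bv_2,m_1,m_2)$ with $|\bv_1|,|\bv_2|<1$, $\bv_1\ne\bv_2$ and $m_1,m_2>0$, three ``collision bodies'' realising an inelastic collision with these data in $\mathsf I$'s frame. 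Since there are continuum-many such quadruples, fix an injection assigning each one a distinct \emph{collision site} $\vqq\in\R^d$; the bodies $b,c$ are then the past half-lines terminating at $\vqq$ with $\mathsf I$-velocities $\bv_1,\bv_2$ (and rest-mass parameters $m_1,m_2$), while $d$ is the future half-line starting at $\vqq$ in the direction of the future-timelike vector $\fvp:=\tfrac{m_1}{\sqrt{1-|\bv_1|^2}}(1,\bv_1)+\tfrac{m_2}{\sqrt{1-|\bv_2|^2}}(1,\bv_2)$, with rest-mass parameter $\mu(\fvp)>0$. The worldview relation is forced by $\W(P,x,\vpp)\iff\W(\mathsf I,x,P(\vpp))$, so $w^P_{\mathsf I}=P$ and every worldview transformation is an orthochronous Poincar\'e transformation. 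Relativistic mass is defined Lorentz-covariantly: every observer is given rest-mass parameter $1$, and for $\vpp\in\wl_k(x)$ one sets $\m_k(x,\vpp):=m_x/\sqrt{1-v_k(x)^2}$ (and $\m_k(ph,\vpp):=1$ for a photon, where that formula degenerates), while $\m_k(x,\vpp):=0$ when $x\notin ev_k(\vpp)$; thus $\fvp_k(x)=m_x\cdot\fvv_k(x)$ for every mass-carrying $x$.

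The routine part is to check the axioms of \ax{SpecRelDyn^+}. The frame axioms \ax{AxFrame}, \ax{AxEOF} hold by construction. A suitable photon runs through every coordinate point, so $Cd_m=\Q^d$ for every inertial $m$; since the worldview transformations are bijections of $\Q^d$ carrying slope-$1$ lines to slope-$1$ lines and preserving Minkowski distance, \ax{AxPh}, \ax{AxEv} and \ax{AxSymDist} follow (for \ax{AxSymDist} note that two events simultaneous for both $m$ and $k$ are spacelike separated with the separation entirely spatial, so their Minkowski distance, preserved by $w^k_m$, is the negative of the spatial distance in either frame). \ax{AxSelf_0} (indeed \ax{AxSelf}) holds because $\wl_{\mathsf I}(P)=P[\text{time-axis}]$. \ax{AxMass} holds since $\m_k(x,\vpp)$ is everywhere defined, nonnegative, and zero exactly off the worldline; \ax{AxInMass} and \ax{AxSpeed} hold because a body's speed is constant along its straight worldline and $\m_k$ was tied to it by $\m_x/\sqrt{1-v^2}$, with $\m_0(x)=m_x$ for mass-carrying bodies while $\m_0(ph)$ is undefined. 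For \ax{AxThExp^\up}: given coordinate points $\vpp,\vqq$ with $\vpp-\vqq$ timelike and future-pointing in $k$'s frame, choose the orthochronous Poincar\'e transformation putting an observer on $line(\vpp,\vqq)$; orthochronicity yields the required agreement of time orientations.

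The delicate axiom is \ax{Ax\forall\inecoll}. The key observation is that at a collision site $\vqq$ the set $in_{\mathsf I}(\vqq)$ is \emph{exactly} the two designated past half-lines and $out_{\mathsf I}(\vqq)$ is \emph{exactly} the designated future half-line: any other collision body's half-line terminates or originates at its own (distinct) site, and every photon-worldline, every observer-worldline, and any half-line that merely passes through $\vqq$ contains coordinate points strictly on both temporal sides of $\vqq$, hence registers as neither incoming nor outgoing. So $\inecoll_{\mathsf I}(b,c:d)$ holds for each designed collision and these are the only inelastic collisions in $\mathsf I$'s frame. For an arbitrary inertial $k$, transport the requested data $(\bv_1,\bv_2,m_1,m_2)$ into $\mathsf I$'s frame by the Lorentz part of $w^k_{\mathsf I}$ — rest masses are frame-independent, and the relativistic velocity transformation is a bijection of the open unit ball, so $\bv_1\ne\bv_2$ is preserved — take the collision bodies built for the transported data, and transport back: orthochronicity of $w^k_{\mathsf I}$ carries $\inecoll_{\mathsf I}$ to $\inecoll_k$ and the velocities transform correctly, giving \ax{Ax\forall\inecoll}.

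Finally \ax{AxCenter^+}. Rather than computing with center-lines I would invoke Cor.~\ref{cor-fmoment}: since \ax{AxMass}, \ax{AxInMass}, \ax{AxSelf} hold and $\IOb\subseteq\IB$ (observer worldlines are full timelike lines, hence line segments in every inertial frame), \ax{AxCenter^+} is equivalent in the model to \ax{ConsFourMoment}. Every inelastic collision occurring in the model is an orthochronous image of a designed one, where by construction $\fvp_{\mathsf I}(b)+\fvp_{\mathsf I}(c)=\fvp=\fvp_{\mathsf I}(d)$; and since $\fvp_k(x)=m_x\fvv_k(x)$ transforms by the Lorentz part of the worldview transformation, linearity of the Lorentz action yields $\fvp_k(b)+\fvp_k(c)=\fvp_k(d)$ for every inertial $k$. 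Hence \ax{ConsFourMoment}, and therefore \ax{AxCenter^+}, holds, and as $\mathsf I\in\IOb$ the statement $\IOb\ne\emptyset$ holds too, so the model witnesses the consistency of $\ax{SpecRelDyn^+}\cup\setopen\IOb\ne\emptyset\setclose$. The main obstacle is not any single deep step but arranging the population of bodies so that every inelastic collision of prescribed type exists while no coordinate point becomes an unintended collision — which is precisely what the distinct-collision-sites device and the observation that full-line bodies never register as incoming or outgoing are there to secure.
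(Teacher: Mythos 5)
The paper itself contains no proof of Thm.~\ref{konzisztencia}: it only says ``The proof of Thm.~\ref{konzisztencia} can be found in \cite{dyn}.'' So there is nothing in this document to compare your argument against line by line. That said, your construction is the natural ``intended model'' over Minkowski spacetime and, as far as I can check against the definitions in this thesis, it is sound: the distinct-collision-sites device together with the observation that full-line bodies are never elements of $in_k(\vqq)$ or $out_k(\vqq)$ does correctly guarantee that the designed collisions are the \emph{only} inelastic collisions, which is indeed the delicate point in any such construction; the orthochronicity argument correctly transports $\inecoll$ and the velocity data between frames; and the four-momentum bookkeeping gives \ax{ConsMass} and \ax{ConsMomentum} at every designed collision. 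Two remarks. First, your route to \ax{AxCenter^+} goes through Cor.~\ref{cor-fmoment}, whose underlying Thm.~\ref{thm2} is itself only proved in \cite{dyn}; this is not circular, but it means your proof is not self-contained relative to this document, and a fully independent proof would have to verify the collinearity of $\cen_k(a,b,c)$ and $\cen_k(a,d)$ directly (e.g.\ via Lem.~\ref{lem-par}). Second, be a little careful with the photons: you assign them relativistic mass $1$ to satisfy \ax{AxMass}, and since a photon's world-line is a ``line segment'' in the paper's generalized sense, photons lie in $\IB$ and can therefore occur as the auxiliary body $a$ in \ax{AxCenter^+}; this case is covered by Cor.~\ref{cor-fmoment} (their four-momentum is a well-defined lightlike vector and \ax{AxInMass} holds for them), but it is exactly the kind of unintended instance of a universally quantified axiom that such constructions tend to miss, so it deserves an explicit sentence.
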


\section{Some possible generalizations}

Let us formulate $\ax{AxCenter}$ in a more general setting.
To do so, we introduce the set of bodies whose world-lines can be parametrized by differentiable curves according to any {\it inertial} observer:
\begin{equation*}\index{$\DB$}
\Dff{\DB}\leteq \Setopen b\in \B\: :\: \forall k\in\IOb\quad \loc^b_k
\text{ is a differentiable curve}\Setclose.
\end{equation*}
For the FOL definition of differentiability, see Section~\ref{sec-diff}.

\begin{description}
\item[\Ax{AxCenterDiff}]\index{\ax{AxCenterDiff}} If bodies
  $b,c\in\DB$ collide inelastically originating body $d\in\DB$, the
  world-line of $d$ is a differentiable continuation of the
  center-line of the masses of $b$ and $c$ according to any {\it
    inertial} observer (see Fig.~\ref{cendiff}):
\begin{multline*}
\forall k\in\IOb\enskip  \forall b,c,d\in \DB\enskip \forall t\in\Q \quad
\inecoll_k(b,c:d) \\
\lland \loc^b_k(t)=\loc^c_k(t)=\loc^d_k(t) \then (\cen_k^{b,c})'(t)=(\loc^d_k)'(t).
\end{multline*}
\end{description}

\begin{figure}[ht]
\small
\begin{center}
\psfrag*{cen}[ct][ct]{$\cen_k({b,c})$} 
\psfrag*{b}[rt][rt]{$\forall b$} 
\psfrag*{c}[lt][lt]{$\forall c$}
\psfrag*{d}[rb][rb]{$\forall d$} 
\psfrag*{t}[r][r]{$\forall t$}
\psfrag*{k}[r][r]{$\forall k$}
\includegraphics[keepaspectratio, width=0.5\textwidth]{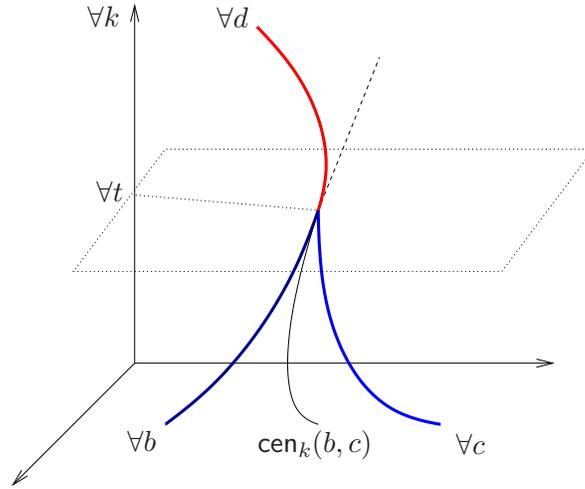}
\caption{\label{cendiff} Illustration of axiom \ax{AxCenterDiff}}
\end{center}
\end{figure}
\noindent
By the following proposition, \ax{AxCenterDiff} is an extension of \ax{AxCenter}.
\begin{prop}
$\ax{AxMass}+\ax{AxInMass}\models\ax{AxCenterDiff}\rightarrow\ax{AxCenter}$.
\end{prop}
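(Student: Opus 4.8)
The plan is to reduce the claim to a short affine computation at the vertex of the collision, using \ax{AxMass} and \ax{AxInMass} only to turn the relativistic masses into fixed positive constants. So work in a model of $\ax{AxMass}+\ax{AxInMass}+\ax{AxCenterDiff}$, fix an inertial observer $k$ and inertial bodies $b,c,d$ with $\inecoll_k(b,c:d)$, and let $\vq$ be a coordinate point witnessing this, so $in_k(\vqq)=\setopen b,c\setclose$ and $out_k(\vqq)=\setopen d\setclose$. By \ax{AxMass} the value $\m_k(b,\vpp)$ is positive for every $\vp\in\wl_k(b)$, and by \ax{AxInMass} it is constant there; write $\m_k(b),\m_k(c)>0$ for these constants. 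Hence on the set of times at which both $\loc_k^b$ and $\loc_k^c$ are defined the center of mass is the fixed convex combination
\begin{equation*}
\cen_k^{b,c}(t)=\frac{\m_k(b)\,\loc_k^b(t)+\m_k(c)\,\loc_k^c(t)}{\m_k(b)+\m_k(c)},
\end{equation*}
an affine --- hence differentiable --- function of $t$ whose image is exactly $\cen_k(b,c)$.

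Next I would pin down the vertex. Since $b,c\in in_k(\vqq)$, the point $\vq$ is the unique point of $\wl_k(b)$ (resp.\ of $\wl_k(c)$) with time coordinate $q_\tau$, every other point having strictly smaller time coordinate; as $b,c\in\IB$ each of $\wl_k(b),\wl_k(c)$ is a line segment, so it is non-horizontal, $\loc_k^b$ and $\loc_k^c$ are genuine differentiable curves on intervals ending at $q_\tau$, and $\loc_k^b(q_\tau)=\loc_k^c(q_\tau)=\vq$. Dually, $d\in out_k(\vqq)$ makes $\vq$ the unique earliest point of the line segment $\wl_k(d)$, so $\loc_k^d$ is a differentiable curve on an interval beginning at $q_\tau$ with $\loc_k^d(q_\tau)=\vq$. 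Consequently $\cen_k^{b,c}(q_\tau)=\vq=\loc_k^d(q_\tau)$, and at $t=q_\tau$ one has $\loc_k^b(q_\tau)=\loc_k^c(q_\tau)=\loc_k^d(q_\tau)$. Granting $b,c,d\in\DB$, \ax{AxCenterDiff} now applies at this $t$ and gives $(\cen_k^{b,c})'(q_\tau)=(\loc_k^d)'(q_\tau)=:\vec u$. Since $\cen_k^{b,c}$ and $\loc_k^d$ are affine and agree in value and in derivative at $q_\tau$, each equals $t\mapsto\vq+(t-q_\tau)\vec u$ on its domain, so both $\cen_k(b,c)$ and $\wl_k(d)$ are contained in the line $\setopen\vq+s\,\vec u\setmid s\in\Q\setclose$ --- which is the conclusion of \ax{AxCenter}.

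Everything above is routine manipulation of affine maps; the step I expect to be the real obstacle is the claim $b,c,d\in\DB$, i.e.\ that $\loc_h^b,\loc_h^c,\loc_h^d$ are differentiable curves for \emph{every} inertial observer $h$, not merely for the observer $k$ witnessing the collision. For $k$ this is already forced by the collision structure, exactly as used above: being an incoming (resp.\ outgoing) body endows the line segment $\wl_k(\cdot)$ with a distinguished top (resp.\ bottom) point, and a line segment having a unique point of maximal (minimal) time coordinate is non-horizontal, hence the graph over a nondegenerate time interval of an affine --- in particular differentiable --- curve. For a general inertial observer $h$ one still knows $\wl_h(b)$ is a line segment (because $b\in\IB$), and I would argue it must likewise be non-horizontal, so that $b\in\DB$ (and similarly for $c$ and $d$); pinning this down --- or, alternatively, isolating the exact instance of \ax{AxCenterDiff} that is actually needed --- is the one point that needs care. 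Once $b,c,d\in\DB$ is in hand, the rest is the computation already sketched: form the center-of-mass parametrization, observe that the collision vertex $\vq$ is its common value with $\loc_k^d$ at $t=q_\tau$, and read collinearity off from the equality of the two derivatives there.
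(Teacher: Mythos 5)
Your proof is correct and follows essentially the same route as the paper: the paper's own argument is only a two-line sketch resting on exactly the two facts you isolate, namely that $\IB\subseteq\DB$ and that \ax{AxMass}+\ax{AxInMass} make $\cen_k(b,c)$ a (nonhorizontal) line segment, after which \ax{AxCenterDiff} applied at the collision vertex forces the two affine curves onto a common line. Your closing caveat about verifying $b,c,d\in\DB$ for \emph{every} inertial observer is a fair one, but the paper simply asserts $\IB\subseteq\DB$ without further justification, so you have not deviated from — indeed you have elaborated — its intended proof.
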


\begin{proof}[\colorbox{proofbgcolor}{\textcolor{proofcolor}{On the proof}}]
The proof is based on the following two facts: (1)
$\IB\subseteq\DB$, and (2) \ax{AxMass} and \ax{AxInMass} imply that
$\cen_k({b,c})$ is a line segment if $k\in\IOb$, $b,c\in\IB$ such that
$\inecoll_k(b,c:d)$, see Prop.~\ref{prop-cen}.
\end{proof}

It is also natural to generalize \ax{AxCenter} to more than two bodies.
Now we formulate some of the possible generalizations.
\begin{description}
\item[\Ax{AxCenter_n}]\index{\ax{AxCenter_n}} After some two-by-two
  inelastic collisions of {\it inertial} bodies $b_1,\ldots, b_n$,
  their center-line of masses and the center-line of the masses of the
  last originated {\it inertial} body and the noncolliding bodies are
  in one line:
\begin{multline*}
 \forall k\in\Ob\enskip \forall b_1,\ldots,b_n\in
 \IB\quad\\ \bigwedge_{j=1}^{n-1} \Big[\;\forall d_1,\ldots,
   d_{j+1}\in\IB\enskip
   d_1=b_1\lland\bigwedge_{i=1}^{j}\inecoll_k(d_i,b_{i+1}:d_{i+1})\\ \then
   \cen_k({b_1,\ldots,b_n}) \cup
   \cen_k({d_{j+1},b_{j+2},\ldots,b_n})\subseteq\ell\text{ for some
     line }\ell\;\Big].
\end{multline*}
\end{description}

\begin{description}
\item[\Ax{AxCenter^*_n}]\index{\ax{AxCenter^*_n}} During all
  successive two-by-two inelastic collisions of {\it inertial} bodies
  $b_1,\ldots, b_n$, the center-line of mass after each collision is
  the continuation of the center-line of mass before the collision:
\begin{multline*}
\forall k\in\Ob\; \forall b_1,\ldots,b_n\in
\IB\quad\\ \bigwedge_{j=1}^{n-1} \Big[\,\forall d_1,\ldots,
 d_{j+1}\in\IB\enskip
 d_1=b_1\lland\bigwedge_{i=1}^{j}\inecoll_k(d_i,b_{i+1}:d_{i+1}) \then
 \\\cen_k({b_1,\ldots,b_n})\cup\bigcup_{s=1\le j}
 \cen_k({d_{s+1},b_{s+2},\ldots,b_n})\subseteq\ell\text{ for some
  line }\ell\,\Big].
\end{multline*}
\end{description}

\ax{AxCenter} is just \ax{AxCenter_2} or \ax{AxCenter^*_2} in these
two series of axioms.  Let us now decompose \ax{AxCenter_n} and
\ax{AxCenter^*_n} into the following fragments:

\begin{description}
\item[\Ax{AxCenter_{n,j}}]\index{\ax{AxCenter_{n,j}}} After $j$
  two-by-two inelastic collisions of {\it inertial} bodies
  $b_1,\ldots, b_n$, their center-line of masses and the center-line of
  the masses of the last originated {\it inertial} body and the
  noncolliding bodies are in one line:
\begin{multline*}
 \forall k\in\Ob\; \forall b_1,\ldots,b_n, d_1,\ldots, d_{j+1}\in\IB\quad d_1=b_1\lland\bigwedge_{i=1}^{j}\inecoll_k(d_i,b_{i+1}:d_{i+1})\\
 \then \cen_k({b_1,\ldots,b_n}) \cup \cen_k({d_{j+1},b_{j+2},\ldots,b_n})\subseteq\ell\text{ for some line
}\ell.
\end{multline*}
\end{description}
The definition of \ax{AxCenter^*_{n,j}} is analogous. Then \ax{AxCenter_n} and \ax{AxCenter^*_n} are equivalent to 
\begin{equation*}
\bigwedge_{j=1}^{n-1} \ax{AxCenter_{n,j}}\quad\text{ and }\quad\bigwedge_{j=1}^{n-1} \ax{AxCenter_{n,j}^*},
\end{equation*}
respectively. Let us now see some of the logical connections between the above two series of axioms.

\begin{prop} 
Let $x$, $y$, $n$ and $m$ be natural numbers such that $1\le x<y<n<m$.  Then
\begin{enumerate}
\item $\ax{AxCenter} \models \ax{AxCenter_{n,y}} \leftrightarrow \ax{AxCenter_{n,x}}$, and \\ 
$\ax{AxCenter} \models \ax{AxCenter^*_{n,y}} \leftrightarrow \ax{AxCenter^*_{n,x}}$.
\item a.) $\ax{AxCenter^*_{n,y}} \models \ax{AxCenter^*_{n,x}}$, but \\
 b.) $\ax{AxCenter_{n,y}} \not\models \ax{AxCenter_{n,x}}$.
\item a.) $\ax{AxCenter^*_{n,x}}\models \ax{AxCenter_{n,x}}$, but \\
b.) $\ax{AxCenter_{n,x}}\not\models \ax{AxCenter^*_{n,x}}$.
\item a.) $\ax{AxCenter_{n,x}}\leftrightarrow \ax{AxCenter_{m,x}}$, and \\
b.) $\ax{AxCenter^*_{n,x}}\leftrightarrow \ax{AxCenter^*_{m,x}}$.
\end{enumerate}
\end{prop}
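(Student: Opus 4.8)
\emph{On the proof.} I work, as is standard in this chapter (and as in Prop.~\ref{prop-carcen}), under $\ax{AxMass}$ and $\ax{AxInMass}$, so that by Prop.~\ref{prop-cen} every centre-line $\cen_k(\dots)$ occurring below is a line segment. The single tool behind all eight statements is a \emph{grouping identity} for centres of mass: if a family $B$ of inertial bodies is partitioned into blocks $B_1,\dots,B_r$, then $\cen_k(B)$ equals the centre-line of the ``bodies'' $\cen_k(B_1),\dots,\cen_k(B_r)$, the $\ell$-th one carrying weight $\sum_{b\in B_\ell}\m_k(b)$; since the weights are constant, $\cen_k(\dots)$ is an affine function, with constant coefficients, of the world-lines of its arguments. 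Two corollaries: (i) in any family one may replace a sub-family by a single inertial body whose world-line is the centre-line of that sub-family and whose mass is its total mass, without changing $\cen_k$ of the whole family; (ii) ``$\cen_k(X)\cup\cen_k(Y)$ lies on a straight line'' unfolds to a conjunction of polynomial equations in the velocities and the (constant) masses of the bodies in $X$ and $Y$. I would first record these.

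For Item 1, assuming $\ax{AxCenter}$, I would walk along a chain $d_1=b_1\to d_2\to\dots\to d_{j+1}$: the collision $\inecoll_k(d_s,b_{s+1}\colon d_{s+1})$ fixes, via $\ax{AxCenter}$, the \emph{velocity} of $d_{s+1}$, and pushing this through the grouping identity shows that the conclusion of $\ax{AxCenter_{n,j}}$ (and, verbatim, of $\ax{AxCenter^*_{n,j}}$) holds for that chain \emph{iff} every collision in it conserves relativistic mass, $\m_k(d_s)+\m_k(b_{s+1})=\m_k(d_{s+1})$. So, once $\ax{AxCenter}$ is assumed, each $\ax{AxCenter_{n,j}}$ says exactly that every two-body inelastic collision of inertial bodies sitting inside an $n$-body, $j$-step chain conserves mass; the equivalence of the $x$- and $y$-versions then reduces to showing these classes of constrained collisions coincide, which one gets by padding a shorter chain to a longer one using the same spectator bodies. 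I expect this padding — keeping track of which already-present inertial bodies may legitimately play spectator for a given sub-collision, hence of the need for enough inertial bodies to be available — to be the only real nuisance here.

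Items 2(a) and 3(a) are read off the shape of the starred formulas, whose conclusion ``all of $\cen_k(b_1,\dots,b_n),\cen_k(d_2,\dots),\dots,\cen_k(d_{j+1},\dots)$ lie on one common line'' literally contains the conclusion of $\ax{AxCenter_{n,j}}$ (keep only the first and last members), giving 3(a); and, applied to a $y$-chain, it forces the length-$x$ prefix of that chain to satisfy the $\ax{AxCenter^*_{n,x}}$ conclusion once one checks that the spectator list $b_{s+2},\dots,b_n$ at stage $s$ is the same whether read inside the $x$-chain or inside the longer $y$-chain, giving 2(a). (For $x=1$ the starred and unstarred formulas coincide, so both statements are of interest only for $x\ge 2$.)

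The non-implications 2(b), 3(b) are by explicit models over an arbitrary Euclidean ordered field, in the style of the countermodels elsewhere in this work, exploiting that $\ax{AxCenter}$ constrains only velocities of collision products, never their masses: for 2(b) take a model containing \emph{no} chain of more than $x$ successive inelastic collisions of inertial bodies, so $\ax{AxCenter_{n,y}}$ holds vacuously, while a single $x$-chain is wired so that the world-line of its final body, together with the spectators, misses the line through $\cen_k(b_1,\dots,b_n)$; for 3(b) with $x\ge 2$, instead keep $\ax{AxCenter_{n,x}}$ true (first and last centre-lines aligned for every $x$-chain) but give the collision products ``wrong'' masses, and the thereby-forced velocities, so that some intermediate $\cen_k(d_{s+1},\dots,b_n)$ leaves that common line. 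The genuinely delicate point is Item 4, the assertion that the number of \emph{spectators} is immaterial: by corollary (i) a block of spectators collapses to a single inertial body, trading $\ax{AxCenter_{m,x}}$ for $\ax{AxCenter_{n,x}}$ and back, but a given model need not contain such a body. I would sidestep this by proving both implications of Item 4 between the \emph{formulas}: by corollary (ii) the two conclusions unfold to the same polynomial identities, the extra spectators of the larger version entering only through their constant-coefficient affine contribution, so the $n$-version and the $m$-version are inter-derivable by elementary algebra with $\ax{AxMass}$ and $\ax{AxInMass}$ in force.
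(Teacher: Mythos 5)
Your handling of Items 2(a), 3(a) and 2(b) is essentially the paper's: the first two are read off the containment of the union over $s\le x$ in the union over $s\le y$ (respectively, of the single unstarred term in the starred union), and 2(b) is the same vacuity countermodel --- a model with too few inertial bodies for any $y$-chain to exist, containing one $x$-chain that violates $\ax{AxCenter_{n,x}}$. Where you genuinely diverge is 3(b): you propose building a second explicit countermodel (collision products with perturbed masses so that an intermediate center-line leaves the common line), whereas the paper gets 3(b) by pure contraposition --- if $\ax{AxCenter_{n,x+1}}\models \ax{AxCenter^*_{n,x+1}}$ held, then chaining with 2(a) and 3(a) would give $\ax{AxCenter_{n,x+1}}\models\ax{AxCenter^*_{n,x+1}}\models\ax{AxCenter^*_{n,x}}\models\ax{AxCenter_{n,x}}$, contradicting 2(b) for the pair $(x,x+1)$. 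The indirect route costs nothing and spares you the delicate check that your mass-perturbed model satisfies $\ax{AxCenter_{n,x}}$ for \emph{every} $x$-chain; note that both routes deliver 3(b) only for $x\ge 2$, consistent with your remark that the starred and unstarred formulas coincide at $j=1$. For Item 1 the paper says only ``induction on $y-x$'' and for Item 4 only that noncolliding bodies cannot affect the center-line of mass, so your mass-conservation characterization and grouping identity are supplying detail the paper omits; the two difficulties you flag there --- extending an $x$-chain to a $y$-chain inside a given model, and the need for the extra spectators of $\ax{AxCenter_{m,x}}$ to be available in a model of $\ax{AxCenter_{n,x}}$ --- are real and are not resolved by the paper's one-line justifications either. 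Finally, you work under $\ax{AxMass}$ and $\ax{AxInMass}$ throughout; the proposition does not list them, so you should either add them explicitly or confine their use to the places where $\cen_k$ genuinely needs constant positive masses.
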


\begin{proof} 
Item (1) can be proved by induction on $y-x$.

\noindent
Item (2a) is true since 
\begin{equation*}
\bigcup_{s=1\le x}\cen_k({d_{s+1},b_{s+2},\ldots,b_n})\subseteq\bigcup_{s=1\le y} \cen_k({d_{s+1},b_{s+2},\ldots,b_n}).
\end{equation*}
To prove (2b), let $\mathfrak{M}$ be a model such that $\IB\leteq
\setopen b_1,\ldots,b_n,d_1,\ldots d_{x+1}\setclose$, $\IOb\leteq
\setopen k\setclose$, $wl_k(k)=\emptyset$, $b_1=d_1$ and
\ax{AxCenter_{n,x}} is not valid. It is not difficult to see that there is
such a model $\mathfrak{M}$. Since there are no $n+y-1$ pieces of
distinct bodies in the required collision situation,
\ax{AxCenter_{n,y}} is valid (its condition is empty).
\medskip 

\noindent 
Item (3a) is true since 
\begin{equation*}
\cen_k({d_{x+1},b_{x+2},\ldots,b_n})\subseteq\bigcup_{s=1\le x} \cen_k({d_{s+1},b_{s+2},\ldots,b_n}).
\end{equation*}
Item (3b) is true since its opposite together with  (2a) and (3a) states that
$\ax{AxCenter_{n,x+1}}\models \ax{AxCenter^*_{n,x+1}}\models
\ax{AxCenter^*_{n,x}}\models \ax{AxCenter_{n,x}}$ which contradicts
  (2b).
\medskip 

\noindent
Item (4) is true since noncolliding bodies cannot affect the center-line of mass.
\end{proof}

Let us now introduce an axiom about general situations of two-by-two inelastic collisions.
Let $\Df{Sq_n}$ be the set of at most $n$-long sequences of natural numbers between $1$ and $n$.
Let us denote the concatenation of sequences $a,b\in Sq_n$ by $\Df{{a\conc b}}$.
We say that $\Df{GI_n}$ is a \df{generalized index set} of basis $n$ if
\begin{itemize}
\item $\setopen 1,\ldots,n\setclose\subseteq GI_n\subseteq Sq_n$;
\item $a_1<\ldots<a_k$ if $a_1\conc\ldots\conc a_k\in GI_n$ and
  $a_1,\ldots,a_k\in \setopen 1,\ldots,n\setclose$; and
\item For all $a\in GI_n\setminus\{1,\ldots,n\}$, there is a unique decomposition $a=b\conc c$ such that $b,c\in GI_n$.
\end{itemize}

\begin{description}
\item[\Ax{AxCenter^*GI_n}]\index{\ax{AxCenter^*GI_n}} During a collision situation of type $GI_n$ of {\it inertial} bodies $b_1,\ldots, b_n$, 
the center-line of mass after each collision is the continuation of the center-line of mass before the collision, see Fig.\ref{figgin}:
\begin{multline*}
 \forall k\in\Ob\enskip (\forall b_i\in\IB\enskip i\in GI_n)\quad
 \bigwedge_{i,j, i\conc j \in GI_n}\inecoll_k(b_i,b_j:b_{i\conc j}) \\
 \then \bigcup_{j_1,\ldots, j_m\in GI_n\ j_1\conc\ldots\conc j_m=1\conc\ldots\conc n} \cen_k({b_{j_1},b_{j_2},\ldots,b_{j_m}})\subseteq\ell\text{ for some line }\ell.
\end{multline*}
\end{description}

\begin{figure}[h!btp]
\small
\begin{center}
\psfrag{1}[rb][rb]{$b_1$}
\psfrag{2}{$b_2$}
\psfrag{3}[rb][rb]{$b_3$}
\psfrag{4}[rb][rb]{$b_4$}
\psfrag{5}{$b_5$}
\psfrag{12}[rb][rb]{$b_{1\conc2}$}
\psfrag{45}{$b_{4\conc5}$}
\psfrag{345}{$b_{3\conc4\conc5}$}
\psfrag{12345}{$b_{1\conc2\conc3\conc4\conc5}$}
\includegraphics[keepaspectratio, width=0.7\textwidth]{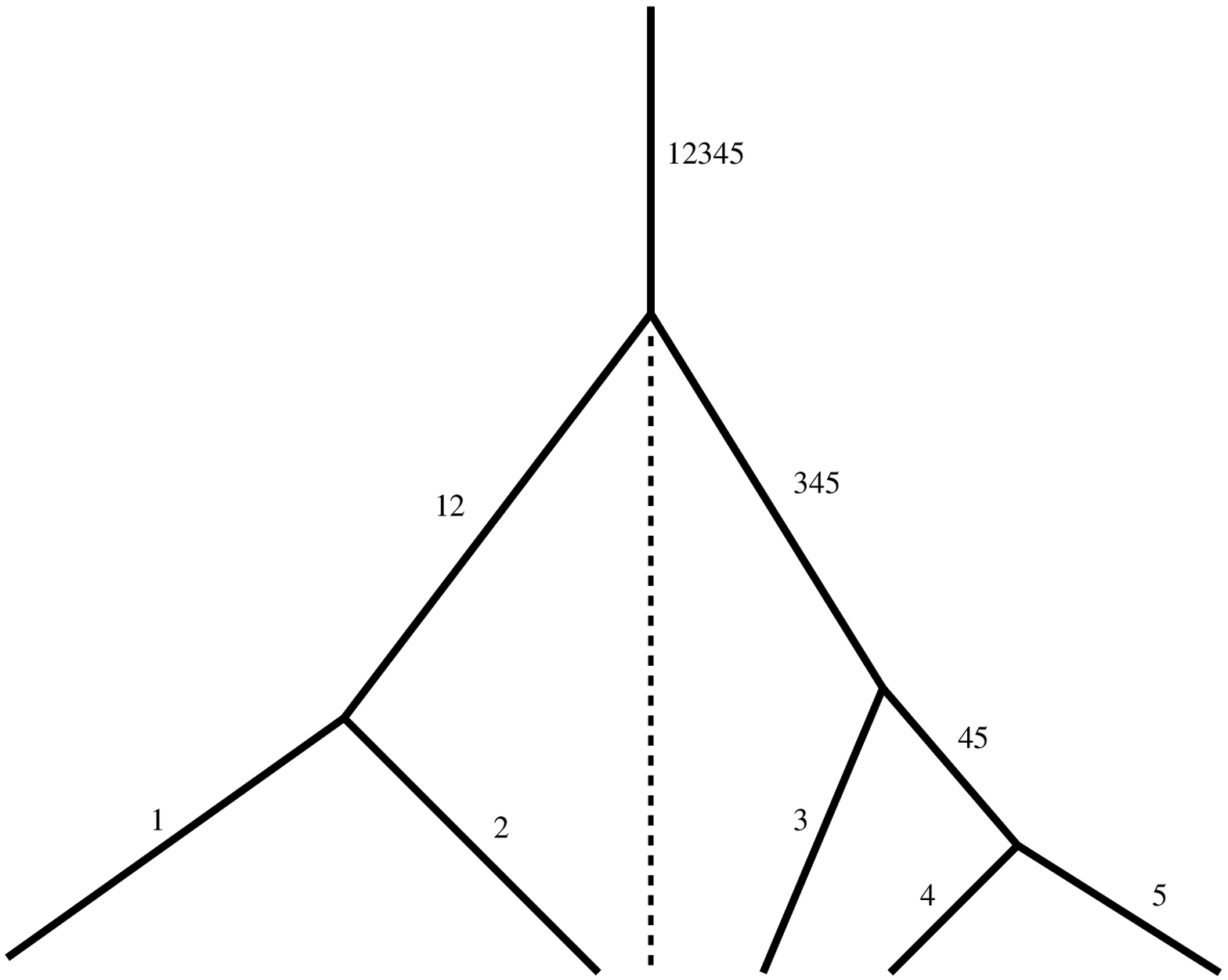}
\caption{\label{figgin} Illustration for \ax{AcCenter^*GI_n} if $GI_n=\{1,2,3,4,5,1\conc2,4\conc5,3\conc4\conc5,1\conc2\conc3\conc4\conc5\}$}
\end{center}
\end{figure}

Let us note that every two-by-two inelastic collision can be coded by
a generalized index set, but not every generalized index set can code a
collision situation, e.g., generalized index set
$\{1,2,3,1\conc2,1\conc3\}$ does not correspond to a collision
situation. In such a case \ax{AxCenter^*GI_n} states nothing since
its conditions cannot be satisfied.

\begin{thm}\label{thm-gin} 
Let $n$ be a natural number and let $GI_n$ be a generalized index
set. Then
\begin{equation*}
\ax{AxCenter}+\ax{AxCenter^+}\models \ax{AxCenter^*GI_n}.
\end{equation*}
\end{thm}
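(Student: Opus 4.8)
The plan is to reduce $\ax{AxCenter^*GI_n}$ to a single ``grouped'' instance of $\ax{AxCenter^+}$ and then chain these instances along the binary‑forest structure of $GI_n$. The condition defining a generalized index set makes $GI_n$ a forest of binary trees with leaf set $\{1,\dots,n\}$, and the families $\{b_{j_1},\dots,b_{j_m}\}$ with $j_1\conc\dots\conc j_m=1\conc\dots\conc n$ over which the union is taken are exactly the ``cuts'' of this forest (leaf‑frontiers obtained by resolving, in any order, some of the collisions). Any two cuts are connected by a chain of elementary moves, each replacing two siblings $b_i,b_j$ by their parent $b_{i\conc j}$ or vice versa, all intermediate families still being cuts, and each move witnessed by one of the hypothesised collisions $\inecoll_k(b_i,b_j:b_{i\conc j})$. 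So it suffices to show: for any such collision and any further inertial bodies $a_1,\dots,a_r$, the center‑lines $\cen_k(\{a_1,\dots,a_r,b_i,b_j\})$ and $\cen_k(\{a_1,\dots,a_r,b_{i\conc j}\})$ lie in one common affine line. Granting this, each cut's center‑line lies in the line carrying the previous cut's center‑line; since a nonempty, non‑degenerate center‑line determines its line uniquely (and by Prop.~\ref{prop-cen} these center‑lines are line segments, points or empty), transitivity along the chain yields a single line $\ell$ containing the whole union.

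For the grouped instance I would pass to four‑momenta. Working, as throughout this chapter, also under $\ax{AxMass}$ and $\ax{AxInMass}$ (so Cor.~\ref{cor-fmoment} applies), $\ax{AxCenter^+}$ is equivalent to $\ax{ConsFourMoment}$, hence $\fvp_k(b_i)+\fvp_k(b_j)=\fvp_k(b_{i\conc j})$, in particular $\m_k(b_i)+\m_k(b_j)=\m_k(b_{i\conc j})$. For an inertial body $b$ the world‑line $\wl_k(b)$ is a line segment, so $\loc_k^b$ is affine; write $\loc_k^b(t)=q_b+t\,w_b$ where $w_b$ has time‑component $1$, $\m_k(b)\,w_b=\fvp_k(b)$, and $q_b$ is the constant intercept. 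Then, wherever defined, the center of mass of a cut $F$ equals $\big(\sum_{b\in F}\m_k(b)\,q_b+t\sum_{b\in F}\fvp_k(b)\big)/M_F$ with $M_F=\sum_{b\in F}\m_k(b)$, so $\cen_k(F)$ is contained in the affine line $L_F$ through $\big(\sum_{b\in F}\m_k(b)\,q_b\big)/M_F$ with direction $\big(\sum_{b\in F}\fvp_k(b)\big)/M_F$. If some relevant mass is undefined, $\cen_k(F)=\emptyset$ and there is nothing to prove; positivity of the masses of present bodies comes from $\ax{AxMass}$.

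It remains to check $L_F=L_{F'}$ when $F$ and $F'$ differ by replacing $b_i,b_j$ with $b_{i\conc j}$. The direction is unchanged since $\sum_{b\in F}\fvp_k(b)=\sum_{b\in F'}\fvp_k(b)$ and $M_F=M_{F'}$ by $\ax{ConsFourMoment}$. For the base point it suffices that $\m_k(b_i)q_{b_i}+\m_k(b_j)q_{b_j}=\m_k(b_{i\conc j})q_{b_{i\conc j}}$; evaluating at the collision instant $t_0$, where $\loc_k^{b_i}(t_0)=\loc_k^{b_j}(t_0)=\loc_k^{b_{i\conc j}}(t_0)=:x_0$, both sides equal $\m_k(b_{i\conc j})x_0-t_0\,\fvp_k(b_{i\conc j})$, using $\m_k(b_i)+\m_k(b_j)=\m_k(b_{i\conc j})$, $\m_k(b)w_b=\fvp_k(b)$, and $\fvp_k(b_i)+\fvp_k(b_j)=\fvp_k(b_{i\conc j})$. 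When $F'$ degenerates to the single body $b_{i\conc j}$ (the case $r=0$), $L_{F'}$ is just the line carrying $\wl_k(b_{i\conc j})$, and $\ax{AxCenter}$ gives the conclusion directly — this is where $\ax{AxCenter}$, rather than $\ax{AxCenter^+}$, is the efficient tool, which is why both axioms appear in the hypothesis.

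The main obstacle I expect is not these computations but the bookkeeping: making precise that the cuts of $GI_n$ form a connected graph under elementary sibling‑contractions (a short induction on the number of internal nodes of the forest), and handling degenerate or undefined center‑lines so that the transitivity step really delivers one line — concretely, discarding the empty ones and noting that every nonempty $\cen_k(F)$ is forced onto $L_F$ and that adjacent $L_F$'s coincide, so all of them coincide. A secondary point is to confirm that the auxiliary axioms needed to invoke Cor.~\ref{cor-fmoment} are among the standing assumptions or may be added harmlessly; alternatively one can bypass Cor.~\ref{cor-fmoment} and argue from Prop.~\ref{prop-carcen} together with $\ax{AxCenter^+}$, pinning the scalar $\lambda$ there to $1$.
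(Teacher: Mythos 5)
Your argument is far more detailed than what the paper offers for this theorem: the paper's ``proof'' is a one-sentence remark that the statement ``can be proved by induction on the complexity of the generalized index set,'' with no indication of how the inductive step is carried out. Your reduction of the cuts of the forest $GI_n$ to chains of elementary sibling-contractions, and your verification that the affine line $L_F$ (base point and direction computed from masses and four-momenta) is invariant under such a contraction, is a correct and clean way to realize that induction; the computation at the collision instant $t_0$ checks out, and chaining at the level of the lines $L_F$ rather than the (possibly empty or degenerate) center-lines neatly sidesteps the transitivity worry you raise. In flavor your route is algebraic (conservation of four-momentum), whereas the paper's surrounding material suggests the intended argument is a purely geometric induction using $\ax{AxCenter}$ and $\ax{AxCenter^+}$ directly; your version buys explicit formulas and a uniform treatment of all cut sizes at once.

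There is, however, one genuine gap you flag but do not close. The theorem as stated assumes only $\ax{AxCenter}+\ax{AxCenter^+}$ (plus the frame axioms), while your proof needs $\ax{AxMass}$ and $\ax{AxInMass}$ at every step: without them the masses $\m_k(b_i)$ need not be defined, constant along world-lines, or positive, so neither $\fvp_k(b)$ nor the affine parametrization $\loc_k^b(t)=q_b+t\,w_b$ with $\m_k(b)\,w_b=\fvp_k(b)$ is available, and the line $L_F$ is simply not defined. Your proposed escape route via Prop.~\ref{prop-carcen} does not help, because that proposition carries exactly the same hypotheses $\ax{AxMass}$ and $\ax{AxInMass}$ as Cor.~\ref{cor-fmoment}. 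Note also that under those auxiliary axioms $\ax{AxCenter^+}$ already implies $\ax{AxCenter}$ (Thm.~\ref{thm2}), so the fact that the theorem lists both axioms as hypotheses is itself evidence that the intended proof is meant to work without the auxiliaries. So either the theorem is implicitly relying on $\ax{AxMass}$ and $\ax{AxInMass}$ as standing assumptions (as they were in the cited co-authored papers, but not in this thesis), in which case your proof is complete, or a genuinely mass-free geometric induction is required, which your proposal does not supply. You should either add the two auxiliary axioms to your hypotheses explicitly or replace the four-momentum bookkeeping with an argument that manipulates the center-lines directly.
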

\begin{onproof}
The theorem can be proved by induction on the complexity of the generalized index set $GI_n$.\hfill\qed
\end{onproof}

\begin{cor}\label{cor-cenn} Let $n$ be a natural number. Then 
\begin{eqnarray*}
\ax{AxCenter}+\ax{AxCenter^+} &\models& \ax{AxCenter_n},\quad \text{ and} \\
\ax{AxCenter}+\ax{AxCenter^+} &\models& \ax{AxCenter^*_n}.
\end{eqnarray*}
\end{cor}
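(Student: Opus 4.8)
The plan is to derive both implications from Theorem~\ref{thm-gin} together with the preceding proposition, by applying Theorem~\ref{thm-gin} to a single, well-chosen generalized index set: the one coding a \emph{linear chain} of $n-1$ successive two-by-two inelastic collisions, namely
\begin{equation*}
GI_n\leteq\{1,\ldots,n\}\cup\{\,1\conc2,\ 1\conc2\conc3,\ \ldots,\ 1\conc2\conc\ldots\conc n\,\}.
\end{equation*}
First I would verify that $GI_n$ is a generalized index set of basis $n$. It contains $\{1,\ldots,n\}$ and is contained in $Sq_n$ (its longest member has length $n$); its only non-singleton members are the prefixes $1\conc\ldots\conc i$, which are strictly increasing; and each such prefix with $i\ge2$ has the unique decomposition $(1\conc\ldots\conc(i-1))\conc i$ into members of $GI_n$, uniqueness holding because the second factor of any splitting of $1\conc\ldots\conc i$ starts with a value $\ge2$ and hence must be a singleton.

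Next I would simply unfold $\ax{AxCenter^*GI_n}$ for this choice. Writing $d_i$ for the body $b_{1\conc2\conc\ldots\conc i}$ (so in particular $d_1=b_1$), the collision hypothesis $\bigwedge_{i\conc j\in GI_n}\inecoll_k(b_i,b_j:b_{i\conc j})$ becomes $d_1=b_1\lland\bigwedge_{i=1}^{n-1}\inecoll_k(d_i,b_{i+1}:d_{i+1})$, since the decomposition of each prefix $1\conc\ldots\conc s$ is $(1\conc\ldots\conc(s-1))\conc s$. For the conclusion, one enumerates the ways to write $1\conc\ldots\conc n$ as a concatenation of members of $GI_n$: the first factor must be a prefix $1\conc\ldots\conc s$, after which the remaining entries $s+1,\ldots,n$ are forced to be singletons, so the decompositions are exactly $(1\conc\ldots\conc s,\ s+1,\ldots,n)$ for $1\le s\le n$. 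Hence the union in the conclusion of $\ax{AxCenter^*GI_n}$ equals
\begin{equation*}
\cen_k({b_1,\ldots,b_n})\cup\bigcup_{s=1}^{n-1}\cen_k({d_{s+1},b_{s+2},\ldots,b_n})\subseteq\ell\text{ for some line }\ell,
\end{equation*}
which is exactly $\ax{AxCenter^*_{n,n-1}}$ (the $j=n-1$ conjunct of $\ax{AxCenter^*_n}$) after the renaming $b_{1\conc\ldots\conc i}\mapsto d_i$. By Theorem~\ref{thm-gin}, $\ax{AxCenter}+\ax{AxCenter^+}\models\ax{AxCenter^*GI_n}$, so $\ax{AxCenter}+\ax{AxCenter^+}\models\ax{AxCenter^*_{n,n-1}}$.

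To conclude, I would invoke the preceding proposition. By its Item~(1), $\ax{AxCenter}$ proves all the fragments $\ax{AxCenter^*_{n,j}}$ ($1\le j\le n-1$) equivalent, so from $\ax{AxCenter^*_{n,n-1}}$ we obtain $\ax{AxCenter^*_n}=\bigwedge_{j=1}^{n-1}\ax{AxCenter^*_{n,j}}$, which is the first claim; and by its Item~(3a), $\ax{AxCenter^*_{n,j}}\models\ax{AxCenter_{n,j}}$ for each $j$, so $\ax{AxCenter^*_n}\models\ax{AxCenter_n}=\bigwedge_{j=1}^{n-1}\ax{AxCenter_{n,j}}$, which is the second. (Alternatively, one can bypass the proposition and apply Theorem~\ref{thm-gin} directly to each shorter chain index set $\{1,\ldots,n\}\cup\{1\conc2,\ldots,1\conc2\conc\ldots\conc(j+1)\}$, whose $\ax{AxCenter^*GI_n}$-instance is precisely the $j$-th conjunct of $\ax{AxCenter^*_n}$.) I expect no genuine obstacle beyond Theorem~\ref{thm-gin} itself; the only point requiring care is the index bookkeeping — matching the $b_{i\conc j}$-labelling of $\ax{AxCenter^*GI_n}$ with the $d_i$-labelling of $\ax{AxCenter^*_n}$, and checking that the family of decompositions of $1\conc\ldots\conc n$ inside $GI_n$ produces exactly the intended center-lines and no spurious extra ones.
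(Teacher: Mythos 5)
Your proposal is correct and is evidently the derivation the paper intends: the corollary is stated without proof immediately after Thm.~\ref{thm-gin}, and instantiating that theorem at the chain index sets $\{1,\ldots,n\}\cup\{1\conc2,\ldots,1\conc\ldots\conc(j+1)\}$ (or at the full chain plus the descent via Items (1)/(2a) and (3a) of the preceding proposition) is exactly the intended route. Your verification that the chains are generalized index sets and that their decompositions of $1\conc\ldots\conc n$ yield precisely the center-lines appearing in $\ax{AxCenter^*_{n,j}}$ correctly supplies the bookkeeping the paper leaves implicit.
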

\noindent
Let us finally generalize \ax{AxCenter} to noninelastic collisions, too. 
\begin{description}
\item[\Ax{AxCenter_{n:m}}]\index{\ax{AxCenter_{n:m}}} If {\it
 inertial} bodies $b_1,\ldots, b_n$ collide originating {\it inertial}
 bodies $d_1,\ldots, d_m$, the center-line of the masses of $d_1,\ldots, d_m$
 is the continuation of the center-line of the masses of $b_1,\ldots, b_n$:
\begin{multline*}
\forall k\in\Ob\; \forall b_1,\ldots,b_n, d_1,\ldots, d_m\in \IB\quad
 \coll_k(b_1,\ldots,d_n:d_1,\ldots,d_m) \\\then 
 \cen_k({b_1,\ldots, b_n})\cup \cen_k({d_1,\ldots,d_m})\subseteq\ell\text{ for some line }\ell.
\end{multline*}
\end{description}
\ax{AxCenter} is just \ax{AxCenter_{2:1}} in this series of axioms since $\inecoll_k(b,c:d)$ is the same formula as $\coll_k(b,c:d)$. 

\begin{rem}
$\setopen\ax{AxCenter_{n:m}}: n,m\in \omega\setclose$ is an independent axiom system.
\end{rem}

Let us now extend the formula expressing the conservation of four-momentum of two inelastically colliding {\it inertial} bodies
(\ax{ConsFourMomentum}) to general collision situations.
\begin{description}
\item[\Ax{ConsFourMomentum_{n:m}}]\index{\ax{ConsFourMoment_{n:m}}} If
  {\it inertial} bodies $b_1,\ldots, b_n$ collide and originate {\it
    inertial} bodies $d_1,\ldots, d_m$, the sum of four-momentums of
  $d_1,\ldots, d_m$ and $b_1,\ldots, b_n$ is the same:
\begin{multline*}
\forall k\in\IOb\; \forall b_1,\ldots, b_n,d_1,\ldots, d_m\in \IB\quad \\ 
\coll_k(b_1,\ldots,b_n:d_1,\ldots,d_m)\then \sum_{i=1}^n \fvp_k(b_i)=\sum_{j=1}^m\fvp_k(d_j).
\end{multline*}
\end{description}

\begin{rem}
It is suggested by Rindler to assume all the formulas above as axioms
of relativistic dynamics, see \cite[p.109]{Rin}. However, the formulas
\ax{ConsFourMoment_{n:m}} are not natural and observational enough assumptions to
regard them as axioms. By Cor.~\ref{cor-cen+}, we can offer a list of more
natural formulas to be assumed, which is equivalent to the list
above.
\end{rem}

\begin{rem}The following theory is consistent and independent:
\begin{multline*}
\ax{SpecRel}\cup\Setopen \IOb\neq\emptyset, \ax{AxThExp} \Setclose\cup\Setopen \ax{AxInMass}, \ax{AxMass}\Setclose\\\cup\Setopen\ax{Ax\forall coll_{n:m}},\ax{ConsFourMoment_{n:m}}\,:\, n,m\in \omega\Setclose,
\end{multline*}
where \ax{Ax\forall coll_{n:m}} is a generalization of \ax{Ax\forall
  inecoll} which ensures the realization of every possible
collision of type ``$n:m$.''
\end{rem}

The following proposition extends Prop.~\ref{prop-carcen}. It
also shows the logical connection between \ax{AxCenter_{n:m}} and
\ax{ConsFourMoment_{n:m}}.
\begin{prop}\label{prop-extcarcen}
Assume \ax{AxMass} and \ax{AxInMass}. Let $k$ be an {\it inertial} observer and let $b_1,\ldots, b_n$ be {\it inertial} bodies.
Then the following is equivalent to \ax{AxCenter_{n:m}}:
\begin{multline*}
\forall k\in \IOb \enskip \forall b_1,\ldots,b_n,d_1,\ldots,d_m\in\IB \quad \\
\coll_k(b_1,\ldots,b_n:d_1,\ldots,d_m)
\then \exists \lambda\in\Q\quad \sum_{i=1}^n \fvp_k(b_i)=\lambda \sum_{j=1}^m\fvp_k(d_j).
\end{multline*}
\end{prop}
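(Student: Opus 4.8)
The plan is to run, for general $n$ and $m$, the very same argument that proves Proposition~\ref{prop-carcen} (the case $n=2$, $m=1$). The key observation is that the four-momentum sum $\sum_{i=1}^{n}\fvp_k(b_i)$ is a direction vector of the center-line $\cen_k({b_1,\ldots, b_n})$, and likewise $\sum_{j=1}^{m}\fvp_k(d_j)$ is a direction vector of $\cen_k({d_1,\ldots, d_m})$; granting this, both sides of the claimed equivalence amount to the single geometric statement that these two center-lines are parallel.

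First I would fix $k\in\IOb$ and $b_1,\ldots,b_n,d_1,\ldots,d_m\in\IB$ with $\coll_k(b_1,\ldots,b_n:d_1,\ldots,d_m)$, witnessed by a coordinate point $\vq$ for which $in_k(\vqq)=\{b_1,\ldots,b_n\}$ and $out_k(\vqq)=\{d_1,\ldots,d_m\}$. As every $b_i$ and $d_j$ is present at $\vq$, axiom \ax{AxMass} yields $\m_k(b_i,\vqq)>0$ and $\m_k(d_j,\vqq)>0$, so by \ax{AxInMass} all the relativistic masses $\m_k(b_i)$, $\m_k(d_j)$ --- hence all the four-momenta $\fvp_k(b_i)$, $\fvp_k(d_j)$ --- are defined. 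By items (3) and (4) of Proposition~\ref{prop-cen}, both $\cen_k({b_1,\ldots, b_n})$ and $\cen_k({d_1,\ldots, d_m})$ are genuine (nondegenerate, nonhorizontal) line segments, and each of them contains $\vq$, since $\vq\in\wl_k(b_i)$ for every $i$ and $\vq\in\wl_k(d_j)$ for every $j$. Next I would compute the direction of the center-line: each $\loc^{b_i}_k$ is affine with $(\loc^{b_i}_k)'(t)=\langle 1,\bv_k(b_i)\rangle$, so writing $S\leteq\sum_{i=1}^{n}\m_k(b_i)>0$, the explicit formula $\cen_k^{b_1,\ldots,b_n}(t)=\sum_{i=1}^{n}\frac{\m_k(b_i)}{S}\loc^{b_i}_k(t)$ exhibits $t\mapsto\cen_k^{b_1,\ldots,b_n}(t)$ as an affine curve whose constant derivative, after multiplication by $S$, equals $\langle\,\sum_i\m_k(b_i),\ \sum_i\m_k(b_i)\bv_k(b_i)\,\rangle=\sum_{i=1}^{n}\fvp_k(b_i)$. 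Thus $\sum_{i=1}^{n}\fvp_k(b_i)$ is a direction vector of $\cen_k({b_1,\ldots, b_n})$, and it is nonzero because its time component is $S>0$; the identical computation handles $\sum_{j=1}^{m}\fvp_k(d_j)$ and $\cen_k({d_1,\ldots, d_m})$.

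Finally, since both center-lines pass through $\vq$, they lie on one common line $\ell$ if and only if their (nonzero) direction vectors are parallel, i.e. if and only if $\sum_{i=1}^{n}\fvp_k(b_i)=\lambda\sum_{j=1}^{m}\fvp_k(d_j)$ for some $\lambda\in\Q$; this yields both implications of the equivalence at once. I do not anticipate a real obstacle: the only delicate point is ensuring the center-lines are nondegenerate and nonhorizontal, so that ``lying on a common line'' is meaningful and the direction vectors are well defined, and this is exactly what the cited clauses of Proposition~\ref{prop-cen} (valid under \ax{AxMass} and \ax{AxInMass}) supply. In short, the proposition is a routine multi-body rerun of Proposition~\ref{prop-carcen}.
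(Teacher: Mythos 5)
Your argument is correct and reaches the conclusion by the same top-level route as the paper: show that each center-line is parallel to the corresponding four-momentum sum, observe that both center-lines pass through the collision point $\vq$, and conclude that collinearity of the two center-lines is equivalent to parallelism of the two sums (with $\lambda\neq 0$ forced since the time component of each sum is the positive total mass). The difference lies in how the parallelism is obtained. The paper delegates it to Lem.~\ref{lem-par}, which is proved by induction on $n$: the inductive step expands the model with an auxiliary {\it inertial} body $c$ carrying the partial center-line and partial four-momentum sum, thereby reducing everything to the two-body computation of Prop.~\ref{prop-carcen}. You instead differentiate the explicit formula $\cen_k^{b_1,\ldots,b_n}(t)=\sum_i\frac{\m_k(b_i)}{S}\loc^{b_i}_k(t)$ and read off that $S\cdot(\cen_k^{b_1,\ldots,b_n})'(t)=\sum_i\fvp_k(b_i)$ directly. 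This is a genuinely cleaner derivation of the same key fact: it avoids both the induction and the model-expansion trick (which, strictly speaking, requires a separate argument that such an expansion preserves the relevant axioms), at the price of leaning on $\loc^{b_i}_k$ being affine with time-derivative $1$ and on all the masses being defined, constant and positive --- which you correctly extract from \ax{AxMass}, \ax{AxInMass} and items (3)--(4) of Prop.~\ref{prop-cen}. In effect you have reproved Lem.~\ref{lem-par} by direct computation rather than cited it; either way the proposition follows.
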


\begin{proof} 
By Lem.~\ref{lem-par}, $\cen_k({b_1,\ldots, b_n})$ and
$\cen_k({d_1,\ldots, d_n})$ are parallel iff $\sum_{i=1}^n
\fvp_k(b_i)$ and $\sum_{j=1}^m\fvp_k(d_j)$ are parallel.  We have that
$\cen_k({b_1,\ldots, b_n})\cap\cen_k({d_1,\ldots, d_n})\neq\emptyset$
if $\coll_k(b_1,\ldots,b_n:d_1,\ldots,d_m)$.  Thus \ax{AxCenter_{n:m}}
holds iff
\begin{equation*}
\sum_{i=1}^n \fvp_k(b_i)=\lambda \sum_{j=1}^m\fvp_k(d_j)
\end{equation*}
 for some $\lambda\in\Q$.
\end{proof}

Let us now introduce a list of axioms which are formulated in the
spirit of \ax{AxCenter} and whose elements are equivalent to the
corresponding \ax{ConsFourMomentum_{n:m}} formulas.
\begin{description}
\item[\Ax{AxCenter^+_{n:m}}]\index{\ax{AxCenter^+_{n:m}}} If $a$ is an
  {\it inertial} body and {\it inertial} bodies $b_1,\ldots, b_n$
  collide originating {\it inertial} bodies $d_1,\ldots, d_m$, the
  center-line of the masses of $a,b_1,\ldots,b_n$ is the continuation
  of the center-line of the masses of $a,d_1,\ldots,d_m$, i.e., there
  is a line that contains both of them:
\begin{multline*}
\forall k\in\IOb\; \forall b_1,\ldots, b_n,d_1,\ldots, d_m\in \IB\quad  
\coll_k(b_1,\ldots,b_n:d_1,\ldots,d_m)\\\then  \cen_k({a,b_1,\ldots,b_n})\cup \cen_k({a,d_1,\ldots,d_m})\subseteq\ell\text{ for some line }\ell.
\end{multline*}
\end{description}

The following corollary can be proved from Prop.~\ref{prop-extcarcen} in a strictly analogous way to the proof  of  Cor.~\ref{cor-fmoment}.
\begin{cor}\label{cor-cen+}
\begin{equation*}
\ax{AxInMass}+\ax{AxMass}+\ax{AxSelf}+\IOb\subseteq\IB \models
\ax{Center^+_{n:m}}\leftrightarrow\ax{ConsFourMomentum_{n:m}}
\end{equation*}
\end{cor}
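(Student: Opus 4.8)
The plan is to run, for the $n{:}m$ version, the same argument that passes from Prop.~\ref{prop-carcen} to Cor.~\ref{cor-fmoment}, using Prop.~\ref{prop-extcarcen} in place of Prop.~\ref{prop-carcen}; the whole point is to translate the center-line assertions into assertions about sums of four-momenta. Throughout, \ax{AxMass} and \ax{AxInMass} are what guarantee that for {\it inertial} bodies the mass $\m_k(b)$, and hence $\fvp_k(b)$, are well defined with $\fvp_k(b)_\tau=\m_k(b)>0$, and that center-lines of {\it inertial} bodies are honest lines.

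First I would reformulate both sides. By Prop.~\ref{prop-extcarcen}, $\ax{AxCenter_{n:m}}$ is equivalent to the statement that, whenever $\coll_k(b_1,\dots,b_n:d_1,\dots,d_m)$, the vectors $P\leteq\sum_i\fvp_k(b_i)$ and $Q\leteq\sum_j\fvp_k(d_j)$ are parallel. The parallelism criterion underlying that proof (Lem.~\ref{lem-par}: the center-line of a finite family of {\it inertial} bodies is parallel to the sum of their four-momenta, and the composite $\cen_k(\{a\}\cup\cdots)$ carries four-momentum $\fvp_k(a)+\sum\cdots$) applies verbatim to the two augmented families $a,b_1,\dots,b_n$ and $a,d_1,\dots,d_m$. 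Writing $P_a\leteq\fvp_k(a)+P$ and $Q_a\leteq\fvp_k(a)+Q$, axiom $\ax{Center^+_{n:m}}$ then says that for every {\it inertial} body $a$ the two augmented center-lines coincide; comparing their directions and their base points at the collision instant (the displacement of $a$ from the collision point at that instant is purely spacelike, while $Q_a$ has positive time-component) shows this is equivalent to: for every {\it inertial} $a$, $P_a\parallel Q_a$ together with $\sum_i\m_k(b_i)=\sum_j\m_k(d_j)$.

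Since $\ax{ConsFourMomentum_{n:m}}$ is exactly $P=Q$, the direction $\ax{ConsFourMomentum_{n:m}}\Then\ax{Center^+_{n:m}}$ is immediate ($P=Q$ makes $P_a=Q_a$ and equalizes the base points). For the converse, the reformulation already gives $P_\tau=Q_\tau$, so $P_a$ and $Q_a$ have equal positive time-components, whence $P_a\parallel Q_a$ forces $P_a=Q_a$ and thus $P=Q$; alternatively, and in closer parallel to Cor.~\ref{cor-fmoment}, one first extracts $\ax{AxCenter_{n:m}}$ (so $P=\lambda Q$ by Prop.~\ref{prop-extcarcen}), substitutes into $P_a\parallel Q_a$ to get $(1-\nu_a)\fvp_k(a)=(\nu_a-\lambda)Q$, and picks an {\it inertial} $a$ whose world-line is not parallel to $Q$ to conclude $\nu_a=\lambda=1$. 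The one genuinely delicate step — and the place where \ax{AxSelf} and $\IOb\subseteq\IB$ enter — is producing that witness body $a$: one lying off the collision point at the collision instant (so the base-point comparison yields $\sum\m_k(b_i)=\sum\m_k(d_j)$ rather than being vacuous), equivalently one whose world-line is not parallel to $Q$. The observer $k$ itself serves (it is {\it inertial} by $\IOb\subseteq\IB$, with world-line the time-axis and $\fvp_k(k)=\m_k(k)\vet$ by \ax{AxSelf}), except in the degenerate configuration where the collision lies on $k$'s own world-line and $Q$ is parallel to the time-axis; there $\ax{Center^+_{n:m}}$ and $\ax{ConsFourMomentum_{n:m}}$ carry the same trivial content and must be handled separately. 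Everything else is the same bookkeeping as in the passage from Prop.~\ref{prop-carcen} to Cor.~\ref{cor-fmoment}.
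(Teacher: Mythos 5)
Your route is exactly the one the paper intends: its entire ``proof'' of this corollary is the remark that it can be obtained from Prop.~\ref{prop-extcarcen} in a way strictly analogous to Cor.~\ref{cor-fmoment}, and your translation of the two center-line conditions into statements about $\fvp_k(a)+\sum_i\fvp_k(b_i)$ and $\fvp_k(a)+\sum_j\fvp_k(d_j)$ via Lem.~\ref{lem-par}, followed by elimination of the scalar $\lambda$ using a witness $a$ whose world-line is not parallel to $\sum_j\fvp_k(d_j)$, is precisely that analogy spelled out; you have also correctly located where $\ax{AxSelf}$ and $\IOb\subseteq\IB$ enter, namely in producing $a=k$ as the witness. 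The one soft spot is your parenthetical dismissal of the degenerate configuration: if the collision lies on $k$'s world-line and $\sum_j\fvp_k(d_j)$ is parallel to $\vet$, then taking $a=k$ returns only the parallelism $\sum_i\fvp_k(b_i)\parallel\sum_j\fvp_k(d_j)$, i.e.\ $\ax{AxCenter_{n:m}}$, which is strictly weaker than $\ax{ConsFourMomentum_{n:m}}$ (total relativistic mass need not be conserved), so the claim that the two axioms ``carry the same trivial content'' there is not correct and a genuine witness is still owed in that case. Since the paper's own text does not address this configuration either --- the details are deferred to the proof of Thm.~\ref{thm2} in the cited dynamics papers --- this is a gap you share with the source rather than a divergence from it, but it should not be waved away as trivial.
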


\begin{lem}\label{lem-par}
Assume \ax{AxMass} and \ax{AxInMass}. Then for all $k\in\IOb$ and $b_1,\ldots, b_n\in \IB$, 
\begin{equation*}
\cen_k({b_1,\ldots, b_n})\quad \text{ is parallel to }\quad \sum_{i=1}^n\fvp_k(b_i)
\end{equation*}
if $\cen_k({b_1,\ldots, b_n})\neq\emptyset$.
\end{lem}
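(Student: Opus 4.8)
The plan is to compute directly with the explicit formula for the center of masses recorded above. First I would use \ax{AxMass} together with \ax{AxInMass} to see that for each inertial body $b_i$ the relativistic mass $\m_k(b_i)$ is well defined and strictly positive on $\wl_k(b_i)$: by \ax{AxMass} the value $\m_k(b_i,\vpp)$ is always defined and is positive exactly at the points of $\wl_k(b_i)$, and by \ax{AxInMass} it takes the same value at all points of $\wl_k(b_i)$. Writing $M\leteq \m_k(b_1)+\ldots+\m_k(b_n)$, we then have $M>0$, and the explicit formula for the center simplifies to
\begin{equation*}
\cen_k^{b_1,\ldots,b_n}(t)=\frac{1}{M}\sum_{i=1}^n \m_k(b_i)\,\loc^{b_i}_k(t)
\end{equation*}
for every $t\in\bigcap_{i=1}^n\dom\loc^{b_i}_k$. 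Since $\cen_k(\{b_1,\ldots,b_n\})\neq\emptyset$ by hypothesis, this intersection of domains is nonempty; fix some $t_0$ in it.

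Next I would exploit that each $b_i$ is inertial. Then $\wl_k(b_i)$ is a line segment, and, since $\loc^{b_i}_k$ has nonempty domain, a non-horizontal one; hence $\loc^{b_i}_k$ is affine and parametrised by the time coordinate, i.e.\ $\loc^{b_i}_k(t)_\tau=t$ and therefore $\loc^{b_i}_k(t)=\loc^{b_i}_k(t_0)+(t-t_0)\,d_i$, where $d_i$ has time component $1$ and space component the constant velocity $\bv_k(b_i)$. Substituting this into the display above and separating the part independent of $t$,
\begin{equation*}
\cen_k^{b_1,\ldots,b_n}(t)=\Big(\tfrac{1}{M}\sum_{i=1}^n\m_k(b_i)\loc^{b_i}_k(t_0)\Big)+\tfrac{t-t_0}{M}\sum_{i=1}^n\m_k(b_i)\,d_i .
\end{equation*}
Now I would recognise the last sum: since $\fvp_k(b_i)_\tau=\m_k(b_i)$ and $\fvp_k(b_i)_\sigma=\m_k(b_i)\bv_k(b_i)$, we have $\m_k(b_i)\,d_i=\fvp_k(b_i)$, whence $\sum_{i=1}^n\m_k(b_i)\,d_i=\sum_{i=1}^n\fvp_k(b_i)$.

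Combining the above, for any two instants $t_1,t_2\in\bigcap_i\dom\loc^{b_i}_k$ we get
\begin{equation*}
\cen_k^{b_1,\ldots,b_n}(t_2)-\cen_k^{b_1,\ldots,b_n}(t_1)=\frac{t_2-t_1}{M}\sum_{i=1}^n\fvp_k(b_i),
\end{equation*}
so any two points of $\cen_k(\{b_1,\ldots,b_n\})$ differ by a scalar multiple of $\sum_{i=1}^n\fvp_k(b_i)$; that is, $\cen_k(\{b_1,\ldots,b_n\})$ lies on a line directed by $\sum_{i=1}^n\fvp_k(b_i)$, which is the asserted parallelism (and this vector is nonzero, its time component being $M>0$). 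There is no genuine difficulty here; the only points needing care are that $M\neq 0$ and that the values $\cen_k^{b_1,\ldots,b_n}(t)$ exist on a common parameter interval --- both handled by \ax{AxMass}, \ax{AxInMass} and the nonemptiness hypothesis --- together with the normalisation $\loc^{b_i}_k(t)_\tau=t$, which is what lets one identify $d_i$ with $\langle 1,\bv_k(b_i)\rangle$ and hence $\m_k(b_i)\,d_i$ with $\fvp_k(b_i)$.
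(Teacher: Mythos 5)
Your proof is correct, but it takes a genuinely different route from the paper's. The paper argues by induction on $n$: the base case $n=1$ comes from Prop.~\ref{prop-cen}, and in the inductive step the model is expanded with an auxiliary {\it inertial} body $c$ whose world-line is $\cen_k(b_2,\ldots,b_{n+1})$ and whose four-momentum is $\sum_{i=2}^{n+1}\fvp_k(b_i)$ (this is where the induction hypothesis enters), which reduces everything to the two-body case; that case is then settled geometrically by translating so that $\vo$ lies on the center-line and observing that the rescaled four-momenta $\fvp_k(b_1)\cdot\frac{\m_k(b_1)+\m_k(c)}{\m_k(b_1)}$ and $\fvp_k(c)\cdot\frac{\m_k(b_1)+\m_k(c)}{\m_k(c)}$ lie on $\wl_k(b_1)$ and $\wl_k(c)$ at the common time coordinate $\m_k(b_1)+\m_k(c)$, so that $\fvp_k(b_1)+\fvp_k(c)$ is itself a point of the center-line. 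You instead compute directly from the explicit formula for $\cen_k^{b_1,\ldots,b_n}(t)$, using the affine time-normalized parametrization $\loc^{b_i}_k(t)=\loc^{b_i}_k(t_0)+(t-t_0)\langle 1,\bv_k(b_i)\rangle$ to read off that differences of center points are scalar multiples of $\sum_i\fvp_k(b_i)$. Your computation is more elementary and self-contained: it handles all $n$ at once and avoids the model-expansion step, which tacitly assumes one may adjoin a body with prescribed world-line and mass without disturbing the situation. What the paper's route buys is conceptual economy: it isolates the two-body case as the geometric heart of the matter and exhibits $\sum_i\fvp_k(b_i)$ as an actual point of the (translated) center-line rather than merely a direction vector. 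Both arguments ultimately rest on the same two inputs --- constancy and positivity of the masses from \ax{AxMass} and \ax{AxInMass}, and the affine, time-parametrized form of inertial world-lines --- and both leave the degenerate case where the common domain is a single instant to the reading of ``parallel'' as ``contained in a line with the given direction.''
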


\begin{proof}
We prove the statement by induction on $n$.  It is clear
if $n=1$ since $\fvp_k(b)$ is parallel to $\cen_k(b)=\wl_k(b)$ if
\ax{AxMass} and \ax{AxInMass} are assumed, see
Prop.~\ref{prop-cen}.  Now we have to prove that if it is true
for $n$, then it is also true for $n+1$.  Let $b_1,\ldots,
b_n,b_{n+1}\in \IB$.  Since $\cen_k({b_1,\ldots,
  b_{n+1}})\neq\emptyset$ and translations preserve parallelism, we
can assume that $\vo\in\cen_k({b_1,\ldots, b_{n+1}})$.  By the
induction hypothesis, we can expand our model with {\it inertial} body
$c$ such that $\wl_k(c)=\cen_k({b_2,\ldots, b_{n+1}})$ and
$\fvp_k(c)=\sum_{i=2}^{n+1}\fvp_k(b_i)$.  Since
$\cen_k({b_1,\ldots,b_n})=\cen_k({b_1,c})$, we have to prove that
$\cen_k({b_1,c})$ is parallel to $\fvp_k(b_1)+\fvp_k(c)$.  Without
losing generality, we can assume that $\wl_k(b_1)$ and $\wl_k(c)$ are
lines.  Then
\begin{equation*}
\fvp_k(b_1)\cdot\frac{\m_k(b_1)+\m_k(c)}{\m_k(b_1)}\quad\text{ and }\quad \fvp_k(c)\cdot\frac{\m_k(b_1)+\m_k(c)}{\m_k(c)} 
\end{equation*}
have the same time component $\m_k(b_1)+\m_k(c)$; and they are in $\wl_k(b_1)$ and $\wl_k(c)$, respectively.
So $\cen_k^{b_1,c}\big(\m_k(b_1)+\m_k(c)\big)=\fvp_k(b_1)+\fvp_k(c)$ since $\vo\in\cen_k({b_1,c})$. 
Hence $\cen_k({b_1,c})$ and $\fvp_k(b_1)+\fvp_k(c)$ are parallel; and that is what was to be proved.
\end{proof}

\section{Collision duality}
To every formula $\varphi$ in the language of \ax{SpecRelDyn}, we can
define its \df{collision dual}\index{collision dual}
$\Df{\varphi^\otimes}$\index{$\varphi^\otimes$} in which the
subformulas of the form $\coll_k(b_1,\ldots, b_n:d_1,\ldots, d_m)$ are
replaced by $\coll_k(d_1,\ldots, d_m:b_1,\ldots, b_n)$, i.e., the
incoming and the outgoing bodies in the collisions are interchanged. By
this definition of collision dual, it is clear that $(\varphi^\otimes)^\otimes$ and
$\varphi$ are the same. Let $\Sigma$ be a set of formulas. The set of
collision duals of the formulas of $\Sigma$ is denoted by
$\Df{\Sigma^\otimes}$\index{$\Sigma^\otimes$}.

Formula $\varphi$ is called \df{self-dual} if $\varphi$ and
$\varphi^\otimes$ are the same. Every formula which is  not 
about collisions is self-dual. So most of the axioms of
\ax{SpecRelDyn} are self-dual. There are also self-dual axioms 
about collisions, e.g., \ax{AxCenter_{n:n}} is such for all natural number $n$.

\begin{prop}\label{prop-dual}
Let $\Sigma$ be a set of formulas and let $\varphi$ be formula in the language of \ax{SpecRelDyn}. Then
\begin{equation*}
\Sigma\models\varphi \quad \text{iff}\quad \Sigma^\otimes\models\varphi^\otimes.
\end{equation*}
\end{prop}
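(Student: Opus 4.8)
The natural approach is to produce a \emph{time-reversing} operation on models that carries models of $\Sigma$ to models of $\Sigma^\otimes$ and under which satisfaction of $\varphi$ turns into satisfaction of $\varphi^\otimes$; since passing to the dual is an involution (the text notes $(\varphi^\otimes)^\otimes=\varphi$), a single such correspondence yields both directions of the biconditional. Concretely, I would let $\rho\colon\Q^d\to\Q^d$ be the time reflection $\rho(\vpp)\leteq\langle -p_1,p_2,\dots,p_d\rangle$, and for a model $\mathfrak{M}$ define $\mathfrak{M}^\otimes$ to have the same universe, the same $\B,\Ob,\IOb,\Ph,\Q,+,\cdot,<$, and $\W^{\mathfrak{M}^\otimes}(m,b,\vpp)\defiff\W^{\mathfrak{M}}(m,b,\rho(\vpp))$, $\M^{\mathfrak{M}^\otimes}(b,\vpp,x,k)\defiff\M^{\mathfrak{M}}(b,\rho(\vpp),x,k)$. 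Since $\rho$ is an affine involution fixing every spatial coordinate and fixing the relation $p_1=q_1$, this $\mathfrak{M}^\otimes$ is again a model; in particular it still satisfies \ax{AxFrame} and \ax{AxEOF}.

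The crux is the following \textbf{Key Lemma}: $\mathfrak{M}^\otimes\models\varphi$ iff $\mathfrak{M}\models\varphi^\otimes$, for every formula $\varphi$ of the language. I would prove this by induction on $\varphi$, treating each $\coll_k(b_1,\dots,b_n:d_1,\dots,d_m)$ as an atomic building block (as the definition of $(\cdot)^\otimes$ demands). The one step that is not mechanical is the collision atom: from the definition of $\W^{\mathfrak{M}^\otimes}$ one gets $\wl^{\mathfrak{M}^\otimes}_k(b)=\rho\big[\wl^{\mathfrak{M}}_k(b)\big]$, hence, because $\rho$ reverses the sign of the time coordinate, $in^{\mathfrak{M}^\otimes}_k(\vqq)=out^{\mathfrak{M}}_k(\rho(\vqq))$ and $out^{\mathfrak{M}^\otimes}_k(\vqq)=in^{\mathfrak{M}}_k(\rho(\vqq))$; reindexing $\vqq\leadsto\rho(\vqq)$ then gives that $\mathfrak{M}^\otimes\models\coll_k(b_1,\dots,b_n:d_1,\dots,d_m)$ iff $\mathfrak{M}\models\coll_k(d_1,\dots,d_m:b_1,\dots,b_n)$, which is exactly the clause defining $(\cdot)^\otimes$. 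Granting the Key Lemma, the proposition closes cleanly: if $\mathfrak{M}\models\Sigma^\otimes$ and $\mathfrak{M}\not\models\varphi^\otimes$, then $\mathfrak{M}^\otimes\models\Sigma$ and $\mathfrak{M}^\otimes\not\models\varphi$ (using $(\psi^\otimes)^\otimes=\psi$), so $\Sigma\models\varphi$ fails; applying this implication to $\Sigma^\otimes,\varphi^\otimes$ in place of $\Sigma,\varphi$ yields the converse.

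The step I expect to have to fight with is precisely the ``mechanical'' part of the induction. A general formula may mention $\W$ or $\M$ outside any $\coll$, and on such bare atoms $\mathfrak{M}^\otimes$ and $\mathfrak{M}$ genuinely disagree while $(\cdot)^\otimes$ does nothing, so the blunt form of the Key Lemma is only literally correct for formulas whose dependence on $\W$ and $\M$ is channelled through the observational notions (world-lines, events, collisions, relativistic masses, center-lines, four-momenta, etc.). The real work is therefore to fix the admissible class and to verify that $\rho$ acts innocuously on each such notion: world-lines and center-lines go to world-lines and center-lines by affinity of $\rho$, the relativistic mass of an inertial body is unchanged, velocities and linear momenta merely change sign, four-momenta and coordinate tuples get reflected by $\rho$; hence after the harmless substitutions $\vqq\leadsto\rho(\vqq)$, $t\leadsto -t$, $\bv\leadsto-\bv$ one recovers that the non-collision part of $\varphi^\otimes$ holds in $\mathfrak{M}$ exactly when $\varphi$ holds in $\mathfrak{M}^\otimes$. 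With that bookkeeping carried out the induction goes through and the proposition follows.
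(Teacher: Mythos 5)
Your semantic route is genuinely different from the paper's, and it contains a gap that you flag yourself but do not close. The proposition quantifies over \emph{all} $\Sigma$ and \emph{all} formulas $\varphi$ of the language, whereas your Key Lemma ($\mathfrak{M}^\otimes\models\varphi$ iff $\mathfrak{M}\models\varphi^\otimes$) fails already at the atomic level: for a bare atom $\W(m,b,\vpp)$ one has $\varphi^\otimes=\varphi$, while your reflection makes $\mathfrak{M}^\otimes\models\W(m,b,\vpp)$ equivalent to $\mathfrak{M}\models\W(m,b,\rho(\vpp))$. The same mismatch infects every defined notion that is sensitive to time orientation and is \emph{not} rewritten by $(\cdot)^\otimes$, which touches only the $\coll$ subformulas: $\time_m(e_1)<\time_m(e_2)$, ``precedes,'' the clause $w^k_h(\vqq)_\tau<w^k_h(\vpp)_\tau$ in \ax{AxThExp^\up}, an occurrence of $in_k(\vqq)$ outside any $\coll$, and so on. For instance, $\forall k,b,d\ \big(\coll_k(b:d)\then\exists \vq\ b\in in_k(\vqq)\big)$ is true in every model, hence in every $\mathfrak{M}^\otimes$, but its dual swaps only the $\coll$ atom and need not hold in $\mathfrak{M}$; so no choice of model transformation can validate your Key Lemma for all formulas once $\coll$ is unfolded into $\W$ and the time order. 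Retreating to an unspecified ``admissible class'' therefore proves a different, weaker statement than the one asserted. Your computation $in^{\mathfrak{M}^\otimes}_k(\vqq)=out^{\mathfrak{M}}_k(\rho(\vqq))$ is correct, and the closing involution trick is fine, but the forward implication is not established as written. If you insist on a semantic proof, the transformation must treat $\coll_k(\vec{b}:\vec{d}\,)$ as an atomic block and simply swap its two argument lists while leaving every other atom untouched --- i.e., it must mirror the syntactic substitution exactly rather than try to realize it by a geometric time reversal of $\W$.

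The paper takes an entirely different, proof-theoretic route that avoids all of this bookkeeping: by G\"odel's completeness theorem $\Sigma\models\varphi$ iff $\Sigma\vdash\varphi$; applying $(\cdot)^\otimes$ to every formula occurring in a derivation of $\varphi$ from $\Sigma$ yields a derivation of $\varphi^\otimes$ from $\Sigma^\otimes$, since $(\cdot)^\otimes$ commutes with the connectives and quantifiers and carries logical axioms and inference steps to logical axioms and inference steps; completeness then converts back to $\models$. That argument is uniform in $\varphi$, requires no induction on formulas, no construction of models, and no verification that the individual axioms of $\ax{SpecRelDyn}$ survive time reversal.
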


\begin{proof}
By G\"odel's completeness theorem, $\Sigma\models\varphi$ holds iff
$\Sigma\vdash\varphi$, i.e., there is a formal proof of $\varphi$ from
$\Sigma$. It is clear that $\Sigma\vdash\varphi$ iff
$\Sigma^\otimes\vdash\varphi^\otimes$ since we get a formal proof of
$\varphi^\otimes$ from $\Sigma$ by replacing every formula by its
collision in the formal proof of $\varphi$ from $\Sigma$.  From this,
we get $\Sigma^\otimes\models\varphi^\otimes$ by G\"odel's
completeness theorem.\end{proof}

By applying Prop.~\ref{prop-dual} to Thm.~\ref{thm1}, we get the
following as its conclusion is self-dual.

\begin{cor}\label{cor-fis}
Let $d\geq 3$ and assume \ax{SpecRelDyn^\otimes}.
Let $k$ be an {\it inertial} observer and $b$ be an {\it inertial} body having rest mass.
Then
\begin{equation*}
\m_0(b)={\sqrt{1-v_k(b)^2}}\cdot \m_k(b).
\end{equation*}
\end{cor}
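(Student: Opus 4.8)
The plan is to derive Cor.~\ref{cor-fis} from Thm.~\ref{thm1} by a single application of the collision-duality principle, Prop.~\ref{prop-dual}. Write $\varphi$ for the sentence asserted by Thm.~\ref{thm1},
\begin{equation*}
\varphi\ \leteq\ \Big(\forall k\in\IOb\ \forall b\in\IB_0\quad \m_0(b)=\sqrt{1-v_k(b)^2}\cdot\m_k(b)\Big),
\end{equation*}
so that Thm.~\ref{thm1} reads $\ax{SpecRelDyn}\models\varphi$ whenever $d\ge 3$. The entire argument is to track what the operation $\psi\mapsto\psi^\otimes$ does to each side of this logical implication.

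The one substantive step is to check that $\varphi$ is \emph{self-dual}, i.e.\ $\varphi^\otimes$ and $\varphi$ are literally the same formula. By the definition of the collision dual, $\varphi^\otimes$ can differ from $\varphi$ only inside subformulas of the form $\coll_k(b_1,\dots,b_n:d_1,\dots,d_m)$, and $\varphi$ has none of these. Indeed, unfolding the abbreviations, $\varphi$ is built only from $\IOb$, the defined classes $\IB$ and $\B_0$ (hence $\IB_0$), the defined operations $\m_0$, $\m_k$, $v_k$, the field operations, $<$, and existential equality; and each of the auxiliary notions $\wl_k$, $\loc^b_k$, $\bv^b_k$, $v^b_k$, $\m_k(b)$, $\m_0(b)$, $\IB$ and $\B_0$ was introduced in Chap.~\ref{chp-dyn} purely in terms of $\W$, $\M$, $+$, $\cdot$, $<$ and the FOL-definable derivative (Section~\ref{sec-diff}) --- with no occurrence of $\coll$ or $\inecoll$. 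Hence $\varphi^\otimes=\varphi$, and Prop.~\ref{prop-dual} applied with $\Sigma\leteq\ax{SpecRelDyn}$ yields
\begin{equation*}
\ax{SpecRelDyn}\models\varphi \quad\text{iff}\quad \ax{SpecRelDyn^\otimes}\models\varphi^\otimes .
\end{equation*}
Since $\varphi^\otimes=\varphi$, the right-hand side is just $\ax{SpecRelDyn^\otimes}\models\varphi$; and the left-hand side holds for $d\ge3$ by Thm.~\ref{thm1}. Instantiating the universal quantifiers of $\varphi$ to the given inertial observer $k$ and inertial body $b$ having rest mass gives exactly $\m_0(b)=\sqrt{1-v_k(b)^2}\cdot\m_k(b)$. (The frame axioms \ax{AxFrame} and \ax{AxEOF}, present on both sides by Conv.~\ref{conv-frame}, are themselves $\coll$-free and hence self-dual, so they cause no trouble.)

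I do not expect a real obstacle: the content is entirely the observation that the conclusion of Thm.~\ref{thm1} mentions no collisions, so duality carries it over verbatim while turning the hypothesis \ax{SpecRelDyn} into \ax{SpecRelDyn^\otimes}. The only place where care is genuinely needed --- and the one step I would write out --- is the bookkeeping claim that each notion occurring in $\varphi$ expands to a $\coll$-free formula, since a single hidden occurrence of $\coll$ would already break self-duality; this, however, is immediate from the definitions collected earlier in this chapter.
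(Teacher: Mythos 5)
Your proposal is correct and is exactly the paper's argument: the paper derives Cor.~\ref{cor-fis} by applying Prop.~\ref{prop-dual} to Thm.~\ref{thm1}, noting that the conclusion is self-dual. Your write-up merely makes explicit the bookkeeping (that $\varphi$ and the frame axioms contain no occurrence of $\coll$) that the paper leaves implicit.
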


\begin{rem}
It is natural to interpret $\inecoll$ as nuclear fusion. By this
interpretation $\inecoll^\otimes$ becomes nuclear fission.  In this
case axioms \ax{Ax\forall Center^\otimes} and \ax{Ax\forall Center}
(and even \ax{Ax\exists Center^\otimes} and \ax{Ax\exists Center}) are
too strong since they require the existence of several fusions and
fissions which might not exist in nature. However, it is not a problem
since all the theorems which use these axioms can be reformulated
without them by building them into the statements. So instead of
assuming that certain fusion/fission situations exist and proving a
statement from that, we can omit this assumption and prove the
statement only for the bodies which appear in the corresponding
fusion/fission situations.
\end{rem}

\section{Concluding remarks on dynamics}

We have introduced a purely geometrical axiom system of special
relativistic dynamics which is strong enough to prove the formula
connecting relativistic and rest masses of bodies. We have also
studied the connection of our key axioms \ax{AxCenter} and
\ax{AxCenter^+} and the usual axioms about the conservation of mass,
momentum and four-momentum.  We saw that the conservation postulates
are not needed to prove the relativistic mass increase theorem
$m_0=\sqrt{1-v^2/c^2}\cdot m$, see Prop.~\ref{Prop1} at
(p.\pageref{Prop1}) and Thm.~\ref{thm1} at (p.\pageref{thm1}). See also
\cite[p.22 footnote 22]{leszabo}. Connections with Einstein's insight $E=mc^2$
have also been discussed. The contents of the present chapter
represent only the first steps towards a logical conceptual analysis
of relativistic dynamics. A glimpse into Chap.\ 6 (pp.108-130)
``Relativistic particle mechanics'' of the textbook by
Rindler~\cite{Rin} suggests the topics to be covered by future work in
this line. In another direction, looking at the logical issues in
\cite{pezsgo} and \cite{logst} suggests questions and investigations
to be carried out in the future about the logical analysis of
relativistic dynamics.

\noindent
Let us mention here two tasks that should be done in the future:
\begin{que}
Analyzing the possibility/impossibility of faster than light motion of
colliding bodies within axiomatic special relativistic dynamics
similarly to what was done for observers within special relativistic
kinematic, see, e.g., \cite[Thm.11.7]{logst}, \cite[Thm.3, Thm.5]{MNT}.
\end{que}

\begin{que}
Extending the axiomatization of relativistic dynamics for accelerated observers.
\end{que}
\noindent
A work related to this chapter with somewhat different aims is \cite{Reldyn}.

\chapter{Extending the axioms of special relativity for accelerated observers}
\label{chp-accrel}

In this chapter we extend our axiomatization of special relativity to
non-{\it inertial} observers, too. Non-{\it inertial} observers are
also going to be called \df{accelerated observers}.\index{accelerated
  observers} We have two reasons for extending our aproach to
accelerated observers: to take a step towards a FOL axiomatization of
general relativity (see Chap.~\ref{chp-gr}) and to provide an
axiomatic basis of the twin paradox and other surprising predictions
of special relativity extended to non-{\it inertial} observers.  The
results of this chapter are based on \cite{AMNSZacekalm}, \cite{twp}
and \cite{mythes}. A further aim is to prove predictions of
general relativity from our theory of accelerated observers, by using
Einstein's equivalence principle, cf.\ our interpretation of answering
why-questions at p.\pageref{why-questions} or \cite{wqp}.

\section{The key axiom of accelerated observers}

It is clear that \ax{SpecRel} is too weak to answer any nontrivial
question about acceleration since \ax{AxSelf_0} is its only axiom that mentions
non-{\it inertial} observers. To extend \ax{SpecRel}, we now formulate
the key axiom about accelerated observers. It will state that the
worldviews of accelerated and {\it inertial} observers are locally
the same.

To connect the worldviews of the accelerated and the {\it inertial}
observers, we formulate the statement that, at each
moment of its world-line, each accelerated observer coordinatizes the
nearby world for a short while as an {\it inertial} observer does.  To
formalize that, first we introduce the relation of being a co-moving
observer.  Observer $m$ is a \df{co-moving observer}\index{co-moving
  observer} of observer $k$ at $\vq\in\Q^d$, in symbols $\Df{m
  \com_{\vq} k}$\index{$\com_{\vq}$}, iff $\vq\in \dom w^k_m$ and the
following holds:
\begin{equation*}
\forall \varepsilon \in \Q^+ \; \exists \delta \in \Q^+ \;
\forall \vp \in B_{\delta}(\vqq)\cap \dom w^k_m\quad
\big|w^k_m(\vpp)-\vpp\big| \leq\varepsilon\cdot|\vp-\vqq|.
\end{equation*}
Behind the definition of co-moving observers is the following intuitive
image: as we zoom in the neighborhood of the
coordinate point, the worldviews of the
two observers are getting more and more similar.

\begin{rem}\label{rem-comove} Let us note that 
$\vq\in Cd_m$ and $ev_m(\vqq)=ev_k(\vqq)$ if $m \com_{\vq} k$. It can
  be proved by choosing $\vp$ as $\vq\in B_{\delta}(\vqq)\cap \dom w^k_m$.
\end{rem}

\begin{rem}\label{rem-wkmfun}
By Conv.~\ref{conv-wkm}, there is a $\delta\in\Q^+$ such that
$w^k_m$ is a function on $B_\delta(\vqq)\cap \dom w^k_m$ if $m \com_{\vq}
k$. So $w^k_m$ is a function on a small enough neighborhood of $\vq$
if $m \com_{\vqq} k$ and $\dom w^k_m$ is open.
\end{rem}

The relation $\com_{\vqq}$ is transitive but it is neither reflexive
nor symmetric. It is not reflexive because if $k$ is an observer such
that $ev_k(\langle 0,\frac{1}{n},0\ldots,0\rangle)=ev_k(\voo)$ for all
$n\in\omega$, then $w^k_k$ is not a function on any neighborhood of
$\vo$. Thus $k\not\com_{\voo}k$ see Rem.~\ref{rem-wkmfun}.  Example
\ref{example-not} shows that $\com_{\vqq}$ is not symmetric. The
relation $\com_{\vqq}$ becomes an equivalence relation, e.g., if
$w^k_m$ is a function and defined in a small enough neighborhood of
$\vq$ for each $k,m\in \Ob$. This will be the case in our last axiom
system \ax{GenRel} in Chap.~\ref{chp-gr}.

Now we can formulate the key axiom of accelerated observers, called
the co-moving axiom. This axiom is about the connection between the
worldviews of {\it inertial} and accelerated observers:
\begin{description}
\item[\Ax{AxCmv}] \index{\ax{AxCmv}} For every observer and event
  encountered by it, there is a co-moving
 \textit{inertial} observer:
\begin{equation*}
\forall k \in \Ob \enskip \forall \vq \in \Q^d \quad k\in ev_k(\vqq) \then
\exists m\in \IOb \enskip m \com_{\vq}k.
\end{equation*}
\end{description}

\begin{rem}\label{rem-axacc} Let us note that 
 from \ax{AxCmv} and \ax{AxEv} follows that \textit{inertial}
 observers coordinatize every event encountered by an observer, i.e.,
 $e\in Ev_m$ for all event $e$ and {\it inertial} observer $m$
 whenever $k\in e\in Ev_k$ for some observer $k$.  That is true since
 \textit{inertial} observers coordinatize the same events by
 \ax{AxEv}; there is an $m\in\IOb$ such that $m\com_{\vq} k$ and
 $ev_k(\vqq)=e$ by \ax{AxCmv}; and $\vq\in Cd_m$ if $m\com_{\vq} k$,
 see Rem.~\ref{rem-comove}.
\end{rem}

Before we go on building our theory of accelerated observers, let us
prove a proposition reformulating the co-moving relation.  For the
notion of differentiability in our framework, see
Section~\ref{sec-diff}. Let us note that in our framework a function
differentiable at $\vq$ may have several derivatives at $\vq$.

\begin{prop}\label{prop-comdiff} 
Let $m$ and $k$ be observers and $\vq$ be a coordinate point. The
following two statements are equivalent:
\begin{itemize} 
\item[(a)] $m \com_{\vq} k$, i.e., $m$ is a co-moving observer of $k$
  at $\vq$.
\item[(b)] $w^k_m(\vqq)=\vq$, $w^k_m$ is differentiable at $\vq$, and
 one of its derivatives at $\vq$ is the identity map.
\end{itemize}
\end{prop}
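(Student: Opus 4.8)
The plan is to prove the equivalence by a direct unwinding of the $\varepsilon$--$\delta$ definitions involved, observing that the defining inequality of the co-moving relation and the defining inequality of ``differentiable at $\vq$ with derivative the identity'' coincide once one knows that $w^k_m(\vq)=\vq$.

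First I would record that $m\com_{\vq}k$ already forces $w^k_m(\vq)=\vq$. Indeed, $\vq\in\dom w^k_m$ by hypothesis, and applying the defining inequality of $\com_{\vq}$ to the point $\vp=\vq$ (which lies in $B_\delta(\vq)\cap\dom w^k_m$ for every $\delta$) gives $|w^k_m(\vq)-\vq|\le\varepsilon\cdot|\vq-\vq|=0$; by Conventions~\ref{conv-wkm} and~\ref{conv-eq} the term $w^k_m(\vq)$ is thereby a well-defined single point, and it equals $\vq$. For the implication $(a)\Rightarrow(b)$ I would then substitute $w^k_m(\vq)=\vq$ into the differentiability condition from Section~\ref{sec-diff}, taking the identity map on $\Q^d$ as candidate derivative: the quantity $w^k_m(\vp)-w^k_m(\vq)-(\vp-\vq)$ simplifies to $w^k_m(\vp)-\vp$, so the estimate ``$|w^k_m(\vp)-w^k_m(\vq)-(\vp-\vq)|\le\varepsilon|\vp-\vq|$ for all $\vp\in B_\delta(\vq)\cap\dom w^k_m$'' is literally the defining clause of $m\com_{\vq}k$. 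Hence $w^k_m$ is differentiable at $\vq$ and the identity is one of its derivatives there. The converse $(b)\Rightarrow(a)$ runs the same computation in reverse: from $w^k_m(\vq)=\vq$ and differentiability at $\vq$ with derivative the identity one reads off $|w^k_m(\vp)-\vp|\le\varepsilon|\vp-\vq|$ on each ball $B_\delta(\vq)\cap\dom w^k_m$, and $\vq\in\dom w^k_m$ because $w^k_m(\vq)$ is defined, which is exactly $m\com_{\vq}k$.

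The one point that needs care --- and the place where I expect the only real friction --- is the bookkeeping forced by $w^k_m$ being a priori just a binary relation. I would make explicit that the occurrences of $w^k_m(\vp)$ inside both definitions are read with the existential-equality Convention~\ref{conv-eq}, so that each inequality tacitly asserts that $w^k_m$ is single-valued at the relevant points; this is why $m\com_{\vq}k$ entails, as in Remark~\ref{rem-wkmfun}, that $w^k_m$ restricts to a function on a small ball around $\vq$. I would also check that the definition of differentiability in Section~\ref{sec-diff} carries no hidden side condition incompatible with sparse or non-function-valued domains --- e.g.\ it does not demand that $\vq$ be an accumulation point of $\dom w^k_m$ from any particular direction, which in any case holds automatically here since $\vq\in B_\delta(\vq)\cap\dom w^k_m$. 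Once these conventions are pinned down, no estimates beyond the trivial substitutions above are required, and Remark~\ref{rem-comove} (that $\vq\in Cd_m$ and $ev_m(\vq)=ev_k(\vq)$) comes out as a byproduct rather than an ingredient.
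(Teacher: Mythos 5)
Your proposal is correct and follows essentially the same route as the paper's proof: both establish $w^k_m(\vqq)=\vq$ first (the paper via Rem.~\ref{rem-wkmfun} and Rem.~\ref{rem-comove}, you by substituting $\vp=\vq$ directly, which is exactly how that remark is proved), and both then observe that after this substitution the defining inequality of $m\com_{\vq}k$ and the inequality ``differentiable at $\vq$ with the identity as a derivative'' are literally the same formula, read in both directions. Your extra care about Conventions~\ref{conv-wkm} and~\ref{conv-eq} matches the paper's appeal to Rem.~\ref{rem-wkmfun}, so there is no substantive difference.
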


\begin{proof}
To prove $(a)\Longrightarrow(b)$, let $m \com_{\vq} k$. Then $\vq\in
\dom w^k_m$ by definition. By Rem.~\ref{rem-wkmfun}, we have that $
w^k_m$ is a function on $B_{\delta_0}(\vqq)\cap \dom w^k_m$ for some
$\delta_0 \in \Q^+$. Thus $w^k_m(\vqq)$ is defined, and it is4s $\vq$ by
Rem.~\ref{rem-comove}. By the definition of $m\com_{\vqq} k$, for
all $\varepsilon \in \Q^+$, there is a $\delta \in \Q^+$ such that
the inequality $|w^k_m(\vpp)-\vpp|\leq\varepsilon|\vpp-\vqq|$ holds
for all $\vpp \in B_{\delta}(\vqq)\cap \dom w^k_m$. Since
$w^k_m(\vqq)=\vqq$, this inequality can be rewritten as
$|w^k_m(\vpp)-w^k_m(\vqq)-Id(\vpp-\vqq)|\le
\varepsilon|\vpp-\vqq|$. So $w^k_m$ is differentiable at $\vq$ and one
of its derivatives at $\vq$ is the identity map.

\smallskip \noindent To prove the converse implication, let $w^k_m$ be
differentiable at $\vq$ such that one of its derivatives at $\vq$ is
the identity map, and $w^k_m(\vqq)=\vqq$. Then for all $\varepsilon
\in \Q^+$, there is a $\delta \in \Q^+$ such that
$|w^k_m(\vpp)-w^k_m(\vqq)-Id(\vpp-\vqq)|\le \varepsilon|\vpp-\vqq|$
holds for all $\vpp \in B_{\delta}(\vqq)\cap\dom w^k_m$. And, since
$w^k_m(\vqq)=\vq$, this last inequality is the same as
$|w^k_m(\vpp)-\vpp|\leq\varepsilon|\vpp-\vqq|$. Thus $m\com_{\vq} k$.
\end{proof}

The world-line of an observer represents the set of coordinate points
where the observer is during its life but it does not tell how ``old''
the observer is at a certain event. So let us define the
\df{life-curve} $\lc^k_m$\index{life-curve} of observer $k$ according
to observer $m$ as the world-line of $k$ according to $m$ {\it
  parametrized by the time measured by $k$},\label{life-curve}
formally:
\begin{equation*}
\index{$\lc^k_m$}
\Df{\lc^k_m}\leteq \Setopen \langle t,\vpp \rangle\in \Q\times \Q^d \::\right.
\enskip\left.\exists \vqq\in \Q^d \quad k\in ev_k(\vqq)=ev_m(\vpp)\lland q_\tau=t\Setclose.
\end{equation*}
For the most important properties of life-curves, see Prop.~\ref{prop-lc}.

Let the natural embedding $\Df{\iota}: \Q \rightarrow \Q^d$ be\index{$\iota$}
defined as $\iota(x)\leteq \langle x,0,\dots,0\rangle$ for all $x\in \Q$.

\begin{lem}\label{lem-lc} Assume \ax{AxSelf_0}. Let $k$ and $m$ be observers. Then 
$\lc^k_m\leteq \iota\circ w^k_m$.
\end{lem}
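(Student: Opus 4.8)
The plan is to unwind the definitions of both sides and check that they agree as sets of pairs. First I would write out what membership in each side means. By definition, $\langle t, \vpp\rangle \in \lc^k_m$ iff there is $\vqq\in\Q^d$ with $k\in ev_k(\vqq)=ev_m(\vpp)$ and $q_\tau=t$. On the other side, $\langle t,\vpp\rangle \in \iota\circ w^k_m$ means (recalling the convention $(f\circ g)(x)=g(f(x))$, so here $\iota\circ w^k_m$ sends $t$ first through $\iota$ then through $w^k_m$) — wait, I must be careful about the direction: with the paper's composition convention, $\iota\circ w^k_m$ is the relation $\{\langle t,\vpp\rangle : \exists s\ \langle t,s\rangle\in\iota \land \langle s,\vpp\rangle\in w^k_m\}$, i.e.\ $\vpp = w^k_m(\iota(t)) = w^k_m(\langle t,0,\dots,0\rangle)$. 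So the claim is that $\lc^k_m$ consists exactly of the pairs $\langle t, w^k_m(\langle t,0,\dots,0\rangle)\rangle$ for $t$ in the appropriate domain.

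The key bridge is axiom \ax{AxSelf_0}: an observer coordinatizes itself at a coordinate point iff that point is in its coordinate domain and its space component is $\vo$. Hence for observer $k$, the points $\vqq$ with $k\in ev_k(\vqq)$ are exactly those of the form $\vqq = \langle t,0,\dots,0\rangle = \iota(t)$ with $\iota(t)\in Cd_k$, where $t=q_\tau$. So I would argue: given $\langle t,\vpp\rangle\in\lc^k_m$, the witnessing $\vqq$ satisfies $k\in ev_k(\vqq)$, so by \ax{AxSelf_0} $\vqq_\sigma=\vo$, i.e.\ $\vqq=\iota(q_\tau)=\iota(t)$; and $ev_k(\vqq)=ev_m(\vpp)\ne\emptyset$ (it contains $k$), so by definition of the worldview transformation $\langle \iota(t),\vpp\rangle\in w^k_m$, giving $\langle t,\vpp\rangle\in\iota\circ w^k_m$. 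Conversely, if $\langle t,\vpp\rangle\in\iota\circ w^k_m$ then $\langle\iota(t),\vpp\rangle\in w^k_m$, so $ev_k(\iota(t))=ev_m(\vpp)\ne\emptyset$; I then need $k\in ev_k(\iota(t))$ to land back in $\lc^k_m$. Since $\langle\iota(t),\vpp\rangle\in w^k_m$ forces $\iota(t)\in Cd_k$ (its event is nonempty), and $\iota(t)_\sigma=\vo$, \ax{AxSelf_0} gives $k\in ev_k(\iota(t))$. Taking $\vqq\leteq\iota(t)$, which has $q_\tau=t$, witnesses $\langle t,\vpp\rangle\in\lc^k_m$.

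There is essentially no hard part here; the only thing demanding a moment's care is getting the composition convention right (the paper's $\circ$ is "diagrammatic," so $\iota\circ w^k_m$ really does mean "apply $\iota$ first"), and noting that nonemptiness of the relevant events is automatic because they contain $k$, which is exactly what lets \ax{AxSelf_0} and the definition of $w^k_m$ engage. I would present it as a short double-inclusion argument, roughly half a page, citing \ax{AxSelf_0}, the definition of $w^k_m$, and Conv.~\ref{conv-eq} for the existential-equality reading of the displayed identity.
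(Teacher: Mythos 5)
Your proof is correct and follows essentially the same route as the paper: both unwind $\iota\circ w^k_m$ into the set of pairs $\langle t,\vpp\rangle$ witnessed by some $\vqq$ with $ev_k(\vqq)=ev_m(\vpp)\neq\emptyset$, $q_\tau=t$ and $\vq_\sigma=\vo$, and then use \ax{AxSelf_0} together with nonemptiness of the event to convert $\vq_\sigma=\vo$ into $k\in ev_k(\vqq)$. The paper states this as a single ``iff'' rewrite where you spell it out as a double inclusion, but the content is identical.
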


\begin{proof}
By our definitions, 
\begin{equation*}
\iota \circ w^k_m=\Setopen \langle t,\vpp \rangle\in \Q\times \Q^d
\::\right.
\enskip\left. \exists \vq\in\Q^d \quad ev_k(\vqq)=ev_m(\vpp)\neq\emptyset \lland q_\tau=t \lland \vq_\sigma=\vo\Setclose.
\end{equation*}
By \ax{AxSelf_0} and the fact that $ev_k(\vqq)\neq\emptyset$, we have  $\vq_\sigma=\vo$ iff $k\in ev_k(\vqq)$.
So $\lc^k_m = \iota \circ w^k_m$. 
\end{proof}

Let us introduce here a very natural axiom about observers, which is
going to be used in the following proposition. 

\begin{description}
\item[\Ax{AxEvTr}]
\index{\ax{AxEvTr}}\label{axevtr}
 Every observer encounters the events in which it has been observed:
\begin{equation*}
\forall m\in \Ob\enskip \forall e\in Ev \quad m\in e \then
e\in Ev_m.
\end{equation*}
\end{description}

\begin{prop}
\label{prop-lc}
Let $m$, $k$ and $h$ be observers. 
Then
\begin{enumerate}
\item \label{item-trfiff} 
$\lc^k_m$ is a function iff 
\begin{itemize}
\item[(i)] event $e$ has a unique coordinate in $Cd_m$ whenever $k\in e\in Ev_m\cap Ev_k$, and 
\item[(ii)] $ev_k(\vqq)=ev_k(\vqq')$ if $\vqq,\vqq'\in Cd_k$ such that $ev_k(\vqq),ev_k(\vqq')\in Ev_m$, $k\in ev_k(\vqq)\cap ev_k(\vqq')$ and $q_\tau=q'_\tau$.
\end{itemize}
\item \label{item-trfunct}
$\lc^k_m$ is a function if $m$ is {\it inertial}, and \ax{AxPh_0} and \ax{AxSelf_0} are assumed.
\item \label{item-tr} $\lc^h_m\supseteq \lc^h_k\circ w^k_m$ always holds, and \\
$\lc^h_m= \lc^h_k\circ w^k_m$ holds if we assume $\ax{\mathit{Ev_m\subseteq Ev_k}}$.
\item \label{item-domtr} $\setopen q_\tau : k\in ev_k(\vqq)\setclose=\dom \lc^k_k\supseteq \dom \lc^k_m$ always holds, and \\
$\dom \lc^k_m=\dom \lc^k_k$ holds if we assume \ax{AxCmv} and $m\in \IOb$.
\item \label{item-rantr} $\ran \lc^k_m\subseteq \wl_m(k)$ always holds, and\\
$\ran \lc^k_m=\wl_m(k)$ if we assume \ax{AxEvTr}.
\end{enumerate}
\end{prop}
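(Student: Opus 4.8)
The plan is to prove all five parts by unwinding the definitions, using Lem.~\ref{lem-lc} (which gives $\lc^k_m=\iota\circ w^k_m$, valid since \ax{AxSelf_0} is always assumed) when it shortens things and arguing directly otherwise. For Item~\eqref{item-trfiff} I would read ``$\lc^k_m$ is a function'' as: $\langle t,\vpp\rangle,\langle t,\vpp'\rangle\in\lc^k_m$ force $\vpp=\vpp'$. Unwinding, such a pair yields $\vqq,\vqq'$ with $k\in ev_k(\vqq)=ev_m(\vpp)\neq\emptyset$, $k\in ev_k(\vqq')=ev_m(\vpp')\neq\emptyset$ and $q_\tau=q'_\tau=t$. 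Now (ii) is exactly the assertion that $ev_k(\vqq)=ev_k(\vqq')$, and then (i), applied to $e\leteq ev_m(\vpp)=ev_k(\vqq)$ (which satisfies $k\in e\in Ev_m\cap Ev_k$), forces $\vpp=\vpp'$; so (i) together with (ii) gives functionality. Conversely, functionality of $\lc^k_m$ yields (i) by taking two $Cd_m$-coordinates of such an $e$ as $\vpp,\vpp'$, and (ii) by taking any $m$-coordinates of $ev_k(\vqq),ev_k(\vqq')$. Item~\eqref{item-trfunct} is then the special case: under \ax{AxPh_0} and $m\in\IOb$, Item~\eqref{item-crd} of Prop.~\ref{prop-sr0} makes $ev_m$ injective, which is (i), while \ax{AxSelf_0} forces $\vqq_\sigma=\vqq'_\sigma=\voo$ for nonempty $ev_k(\vqq),ev_k(\vqq')$ containing $k$, so $q_\tau=q'_\tau$ gives $\vqq=\vqq'$, which is (ii); alternatively, $w^k_m$ is a function by Prop.~\ref{prop-sr0} and hence so is $\iota\circ w^k_m=\lc^k_m$.

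For Item~\eqref{item-tr} I would compute the relational composition directly. A pair $\langle t,\vpp\rangle\in\lc^h_k\circ w^k_m$ means there are $\vqq,\vrr$ with $h\in ev_h(\vrr)=ev_k(\vqq)$, $r_\tau=t$, and $ev_k(\vqq)=ev_m(\vpp)\neq\emptyset$; chaining the three coinciding events gives $h\in ev_h(\vrr)=ev_m(\vpp)$ with $r_\tau=t$, i.e.\ $\langle t,\vpp\rangle\in\lc^h_m$, so the inclusion always holds (equivalently, via Lem.~\ref{lem-lc} and the containment $w^h_k\circ w^k_m\subseteq w^h_m$, obtained by the same chaining). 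For the converse under $Ev_m\subseteq Ev_k$: given $\langle t,\vpp\rangle\in\lc^h_m$ with witness $\vqq$ (so $h\in ev_h(\vqq)=ev_m(\vpp)=:e\neq\emptyset$ and $q_\tau=t$), the assumption gives $e\in Ev_k$, say $e=ev_k(\vrr)$; then $\langle\vrr,\vpp\rangle\in w^k_m$ and $\langle t,\vrr\rangle\in\lc^h_k$, so $\langle t,\vpp\rangle\in\lc^h_k\circ w^k_m$.

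Items~\eqref{item-domtr} and~\eqref{item-rantr} I would settle by unwinding $\dom$ and $\ran$. For~\eqref{item-domtr}, $\dom\lc^k_k=\setopen q_\tau\setmid k\in ev_k(\vqq)\setclose$ (take $\vpp=\vqq$ in the definition of $\lc^k_k$), and any $t\in\dom\lc^k_m$ comes with a $\vqq$ satisfying $k\in ev_k(\vqq)$, $q_\tau=t$, giving $\dom\lc^k_m\subseteq\dom\lc^k_k$; for equality when $m\in\IOb$ and \ax{AxCmv} holds, take $t=q_\tau$ with $k\in ev_k(\vqq)$, let \ax{AxCmv} produce an inertial co-mover of $k$ at $\vqq$, so by Rem.~\ref{rem-comove} together with Rem.~\ref{rem-axacc} the event $ev_k(\vqq)$ lies in $Ev_m$, and any $\vpp\in Cd_m$ with $ev_m(\vpp)=ev_k(\vqq)$ puts $t$ in $\dom\lc^k_m$. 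For~\eqref{item-rantr}, $\ran\lc^k_m=\setopen\vpp\setmid\exists\vqq\ \ k\in ev_k(\vqq)=ev_m(\vpp)\setclose\subseteq\setopen\vpp\setmid k\in ev_m(\vpp)\setclose=\wl_m(k)$, and under \ax{AxEvTr}, for $\vpp\in\wl_m(k)$ the event $e\leteq ev_m(\vpp)$ contains $k$, hence $e\in Ev_k$, i.e.\ $e=ev_k(\vqq)\neq\emptyset$ for some $\vqq$ with $k\in ev_k(\vqq)$, whence $\vpp\in\ran\lc^k_m$. The only step that needs care is the equality in~\eqref{item-domtr}: \ax{AxCmv} supplies only \emph{some} inertial co-mover, whereas one wants the prescribed inertial $m$, and bridging this gap is precisely Rem.~\ref{rem-axacc} (which tacitly uses \ax{AxEv}); the rest is routine definition-chasing.
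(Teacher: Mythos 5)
Your proof is correct and follows essentially the same route as the paper's: every item is settled by unwinding the definitions, with Prop.~\ref{prop-sr0} and \ax{AxSelf_0} doing the work in Item~(2), and Rem.~\ref{rem-axacc} bridging the gap in Item~(4) exactly as the paper does (which likewise imports \ax{AxEv} through that remark — your flag of this is accurate). The only cosmetic difference is in Item~(1), where the paper factors $\lc^k_m$ as $R\circ ev_k\circ \loc_m$ and reduces functionality to that of the factors, whereas you run the two-witness element chase directly; the content is identical.
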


\begin{proof}
Item \eqref{item-trfiff} is a straightforward consequence of the
definition of $\lc^k_m$. To see that, let $R\leteq \setopen \langle
t,\vqq\rangle\in \Q\times Cd_k \setmid k\in ev_k(\vqq) \lland
q_\tau=t\setclose$. Then $\lc^k_m=R\circ w^k_m=R\circ ev_k\circ
\loc_m$. Since $ev_k$ is a function and $\loc_m$ is an inverse of a
function, it is easy to see that $\lc^k_m$ is a function iff $\loc_m$
is a function on $\ran (R\circ ev_k)$ and $R\circ ev_k$ is a function
to $\dom \loc_m=Ev_m$. It is clear that $\loc_m$ is a function on
$\ran (R\circ ev_k)$ iff (i) holds; and it is also clear that $R$ is a
function to $\dom \loc_m=Ev_m$ iff (ii) holds. Hence $\lc^k_m$ is a
function iff both (i) and (ii) hold.

To prove Item \eqref{item-trfunct}, we should check (i) and (ii) of Item \eqref{item-trfiff}.
By Item \eqref{item-crd} of Prop.~\ref{prop-sr0}, (i) is true.
By \ax{AxSelf_0} if $k\in ev_k(\vqq)\cap ev_k(\vqq')$, then $\vq_\sigma=\vo=\vqq'_\sigma$.
Thus if $q_\tau=q'_\tau$ also holds, then $\vqq=\vqq'$.
Hence (ii) is also true.

To prove Item \eqref{item-tr}, let $\langle t,\vpp\rangle\in
\lc^h_k\circ w^k_m$.  That means $\exists\vc\in Cd_k$ such that
$\langle t,\vc\,\rangle\in \lc^h_k$ and $\langle\vc,\vpp\rangle\in
w^k_m$, which is equivalent to $\exists\vqq\in Cd_h$ such that $h\in
ev_h(\vqq)=ev_k(\vc\,)$, $q_\tau=t$ and $ev_k(\vc\,)=ev_m(\vpp)$.
Thus $\langle t,\vpp\rangle\in \lc^h_m$.  To prove the converse
inclusion, let $\langle t,\vpp\rangle\in \lc^h_m$.  That means that
there is a coordinate point $\vqq\in Cd_h$ such that $h\in
ev_h(\vqq)=ev_m(\vpp)$ and $q_\tau=t$.  By the assumption
$Ev_m\subseteq Ev_k$, we have that $\exists \vc\in Cd_k$ such that
$ev_k(\vc\,)=ev_m(\vpp)$.  Thus $\langle t,\vpp\rangle\in \lc^h_k\circ
w^k_m$.  That proves Item \eqref{item-tr}.

To prove Item \eqref{item-domtr}, let us recall that $t\in \dom
\lc^k_m$ iff there are $\vpp\in Cd_m$ and $\vqq\in Cd_k$ such that
$k\in ev_m(\vpp)=ev_k(\vqq)$ and $q_\tau=t$.  From that, it easily
follows that $t\in \dom \lc^k_k$ iff there is a coordinate point
$\vqq\in Cd_k$ such that $q_\tau=t$ and $k\in ev_k(\vqq)$.  Thus
$\setopen q_\tau : k\in ev_k(\vqq)\setclose=\dom \lc^k_k\supseteq \dom
\lc^k_m$ is clear; and if we assume \ax{AxCmv} and $m\in\IOb$, then
$\dom \lc^k_k\subseteq \dom \lc^k_m$ is also clear since
\textit{inertial} observers coordinatize every event encountered by
observers, see Rem.~\ref{rem-axacc}.
 
To prove Item \eqref{item-rantr}, let us recall that $\vpp\in\ran
\lc^k_m$ iff $\vp\in Cd_m$ and there are $t\in \Q$ and $\vqq\in Cd_k$
such that $k\in ev_m(\vpp)=ev_k(\vqq)$ and $q_\tau=t$.  Thus $\ran
\lc^k_m\subseteq \wl_m(k)\leteq \setopen \vpp\in Cd_m \setmid k\in
ev_m(\vpp)\setclose$ is clear.  If $\vpp\in \wl_m(k)$, then $k\in
ev_m(\vpp)$.  Therefore, by \ax{AxEvTr}, we have that $ev_m(\vpp)\in
Ev_k$.  Thus there is a coordinate point $\vqq\in Cd_k$ such that
$ev_m(\vpp)=ev_k(\vqq)$.  Hence $\ran \lc^k_m=\wl_m(k)$.
\end{proof}

We call a timelike curve $\alpha$
\df{well-parametrized}\index{well-parametrized} if
$\mu\big(\alpha'(t)\big)=1$ for all $t\in \dom \alpha$. For the
FOL definition of $\alpha'$, see Section \ref{sec-diff}.

Assume $\mathfrak{Q}=\R$.  Then curve $f$ is well-parametrized iff $f$
is parametrized according to the Minkowski length, i.e., for all
$x,y\in \dom f$, the Minkowski length of $f$ restricted to $[x,y]$ is
$y-x$.  (By the Minkowski length of a curve we mean length according
to the Minkowski metric, e.g., in the sense used by Wald~\cite[p.43,
  (3.3.7)]{wald}).  If the {\it proper time} is defined as the
Minkowski length of a timelike curve, see, e.g., Wald~\cite[p.44,
  (3.3.8)]{wald}, Taylor-Wheeler~\cite[1-1-2]{TW00} or
d'Inverno~\cite[p.112, (8.14)]{dinverno}, a curve defined on a subset
of $\R$ is well-parametrized iff it is parametrized according to
proper time (see, e.g., \cite[p.112, (8.16)]{dinverno}). Hence for
well-parametrized curves, our definition of proper time (see
p.\pageref{propertime}) coincides with the definition of the
literature.

\begin{example} \label{example-wp}
Let us list some examples of well-parametrized curves here:
\begin{enumerate}
\item \label{item-ratfv} $\gamma(t)=1/2\cdot\langle t^3/3-1/t,
  t^3/3+1/t,0,\ldots, 0\rangle$ for all $t\in\Q^+$.
\item \label{item-sqrt} $\gamma(t)=\langle\sqrt{t^3}/3+\sqrt{t},\sqrt{t^3}/3-\sqrt{t},0,\ldots,0\rangle$ for all $t\in\Q^+$.
\item \label{item-unif} $\gamma(t)=\langle a\cdot sh(t/a),a\cdot ch(t/a),0,\ldots,0\rangle$ for all $a\in \R^+$ and $t\in \R$.
\item \label{item-spiral} $\gamma(t)=\langle \sqrt{a^2+1}\cdot t, \cos(a\cdot t),\sin(a\cdot t),0,\ldots,0\rangle$ for all $a\in \R^+$ and $t\in \R$.
\end{enumerate}
\end{example}
\noindent
Let us note that examples \eqref{item-ratfv} and \eqref{item-sqrt} do
not have well-parametrized extensions.
Definability in Prop.~\ref{prop-hyp} is meant in the same way as in Sec.~\ref{sec-cont}.

\begin{prop}\label{prop-hyp}
The vertical timelike unit-hyperbola
\begin{equation*}
\Df{Hyp}\leteq\Setopen\vp\in \Q^d \setmid p_2^2-p_t^2=1, p_3=\ldots=p_d=0\Setclose
\end{equation*}
 can be well-parametrized by a definable curve iff an {\bf exponential
   function} is definable over $\Q$, i.e.,
 there is a definable well-parametrized curve $\gamma:\Q\rightarrow
 \Q^d$ such that $\ran\gamma=Hyp$ iff there is a definable function
 $e:\Q\rightarrow \Q$ such that $e'(t)=e(t)$ and $e(-t)=1/e(t)$ for all $t\in\Q$.
\end{prop}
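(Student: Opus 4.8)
The plan is to make precise, inside the framework, the classical fact that well-parametrizing the unit hyperbola is ``the same thing as'' having hyperbolic sine and cosine, i.e.\ an exponential, via $\cosh=\tfrac12(e^{t}+e^{-t})$, $\sinh=\tfrac12(e^{t}-e^{-t})$. Throughout I read ``definable'' as in Section~\ref{sec-cont}, so I may invoke \ax{CONT}: in particular that a definable function with everywhere-zero derivative on a connected domain is constant, and that definable functions have the intermediate-value property; and I use the differentiation rules of Section~\ref{sec-diff}. One preliminary remark: a well-parametrized curve is differentiable, hence continuous, with connected domain, so under \ax{CONT} its range cannot meet both of the two branches $\{p_{2}\ge 1\}$ and $\{p_{2}\le -1\}$ of $Hyp$; accordingly ``$\ran\gamma=Hyp$'' is to be understood as ``$\ran\gamma$ is the branch through the vertex $\langle 0,1,0,\ldots,0\rangle$,'' and I prove the equivalence in that form.

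\smallskip
\noindent
\emph{From an exponential to a well-parametrization.} Suppose $e\colon\Q\to\Q$ is definable with $e'(t)=e(t)$ and $e(-t)=1/e(t)$. Since $e(0)=1/e(0)$ we have $e(0)=\pm 1$; replacing $e$ by $-e$ if needed (this preserves both conditions) we may take $e(0)=1$, and then $e>0$ everywhere because $e$ is nowhere zero (from $e(t)e(-t)=1$), $e(0)>0$, and the domain is connected. Put
\[
\gamma(t)\leteq \tfrac12\,\bigl\langle\, e(t)-e(-t),\ e(t)+e(-t),\ 0,\ldots,0\,\bigr\rangle,
\]
a definable map $\Q\to\Q^{d}$. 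From $e'=e$ and $\tfrac{d}{dt}e(-t)=-e(-t)$ one computes $\gamma'(t)=\tfrac12\langle e(t)+e(-t),\ e(t)-e(-t),\ 0,\ldots,0\rangle$, so the time component of $\gamma'(t)$ is positive and its square exceeds the square of the space component by exactly $e(t)e(-t)=1$; hence $\gamma'(t)$ is timelike and $\mu(\gamma'(t))=1$, i.e.\ $\gamma$ is a well-parametrized timelike curve. The same identity gives $\gamma_{2}(t)^{2}-\gamma_{1}(t)^{2}=1$ with $\gamma_{3}=\cdots=\gamma_{d}=0$, so $\ran\gamma\subseteq Hyp$; and since $\gamma_{1}'=\tfrac12(e(t)+e(-t))>0$, $\gamma_{1}$ is a strictly increasing definable function which, using $e(t)>1+t$ for $t>0$ and $e(-t)=1/e(t)$ small for $t$ large together with \ax{CONT}, maps onto $\Q$, while $\gamma_{2}=\tfrac12(e+1/e)\ge 1$; thus $\ran\gamma$ is the whole branch through the vertex.

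\smallskip
\noindent
\emph{From a well-parametrization to an exponential.} Suppose $\gamma$ is a definable well-parametrized curve whose range is the branch of $Hyp$ through the vertex. Pick $s_{0}$ with $\gamma(s_{0})$ the vertex and replace $\gamma$ by $t\mapsto\gamma(t+s_{0})$ (still definable, still well-parametrized), so $\gamma(0)=\langle 0,1,0,\ldots,0\rangle$. Since $\ran\gamma\subseteq Hyp$ write $\gamma=\langle\gamma_{1},\gamma_{2},0,\ldots,0\rangle$ with $\gamma_{2}^{2}-\gamma_{1}^{2}=1$, and well-parametrization together with timelikeness of $\gamma'$ gives $(\gamma_{1}')^{2}-(\gamma_{2}')^{2}=1$. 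Set $e\leteq\gamma_{1}+\gamma_{2}$ and $\bar e\leteq\gamma_{2}-\gamma_{1}$; then $e\bar e=\gamma_{2}^{2}-\gamma_{1}^{2}=1$, so $\bar e=1/e$, $e$ is nowhere zero, and $e(0)=\gamma_{1}(0)+\gamma_{2}(0)=1$. Rewriting the velocity identity as $e'\bar e'=-1$ and differentiating $e\bar e=1$ (giving $\bar e'=-e'/e^{2}$) yields $(e')^{2}=e^{2}$. As $e$ is nowhere zero on a connected domain, \ax{CONT} forces $e'=\varepsilon e$ for one fixed sign $\varepsilon$; replacing $e$ by $1/e$ when $\varepsilon=-1$ (this preserves $e\bar e=1$ and $e(0)=1$) we obtain $e'=e$. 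Finally $h(t)\leteq e(t)e(-t)$ has $h'(t)=e'(t)e(-t)-e(t)e'(-t)=0$ by $e'=e$, and $h(0)=1$, so $e(-t)=1/e(t)$; and $e$ is definable since it is built from $\gamma$ by parameter-translation, $t\mapsto -t$, and reciprocation.

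\smallskip
\noindent
The genuinely delicate point — the main obstacle — is that all of this must be carried out within the FOL calculus of Section~\ref{sec-diff} under only \ax{CONT}: the two surjectivity claims and the passage $(e')^{2}=e^{2}\Rightarrow e'=\varepsilon e$ (ruling out a sign flip along the connected parameter interval) are exactly where \ax{CONT} is needed, and formulating the intermediate-value and monotonicity arguments carefully for definable, not necessarily continuously differentiable, functions is the one part that is not pure bookkeeping. The rest — differentiating $e(-t)$, the product rule, and the algebra linking $\gamma_{1},\gamma_{2}$ to $e,\bar e$ — is routine once those conventions are in hand.
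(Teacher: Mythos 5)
Your proof is correct and follows essentially the same route as the paper's: both directions rest on the correspondence $e=\gamma_1+\gamma_2$ (equivalently $ch,sh$ built from $e$), with the well-parametrization identity $(\gamma_1')^2-(\gamma_2')^2=1$ combined with $\gamma_2^2-\gamma_1^2=1$ forcing $e'=\pm e$ and \ax{CONT} (via Darboux) fixing the sign globally. You are in fact somewhat more careful than the paper on a few points it glosses over -- the two-branch issue in reading $\ran\gamma=Hyp$, the normalization $e(0)=1$, the derivation of $e(-t)=1/e(t)$ via the constant function $e(t)e(-t)$, and surjectivity onto the branch -- so no changes are needed.
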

\begin{proof}
Using the fact that the Minkowski distance of $\vex$ and the points of
$Hyp$ tend to infinity in both directions, it can be proved that
$\dom\gamma=\Q$ for any well-parametrization $\gamma$ of $Hyp$.  Let
$\gamma=\langle \gamma_1,\gamma_2,0\ldots,0\rangle$ be a definable
well-parametrization of $Hyp$.  Then 
\begin{equation*}
\gamma_2(t)^2-\gamma_1(t)^2=1\quad
\text{for all}\quad t\in\Q.
\end{equation*}
By differentiating both sides of this equation, we
get that (see Sec.~\ref{sec-diff}) 
\begin{equation*}
\gamma_2(t)\gamma_2'(t)-\gamma_1(t)\gamma'_1(t) =0\quad\text{for
  all}\quad t\in\Q,
\end{equation*}
which means that $\gamma(t)$ and $\gamma'(t)$ are Minkowski orthogonal
since $\gamma'=\langle \gamma'_1,\gamma'_2,0\ldots,0\rangle$. Since
$\gamma$ is well-parametrized, $\gamma'(t)$ is of Minkowski length $1$
for all $t\in\Q$. Hence $\gamma'_1(t)^2-\gamma'_2(t)^2=1$ for all
$t\in\Q$.  So $\gamma(t)$ and $\gamma'(t)$ are two Minkowski orthogonal
vectors of the $\txPlane$, for which $\mu\big(\gamma(t)\big)=-1$ and
$\mu\big(\gamma'(t)\big)=1$.  Thus there are two possibilities: either
(1) $\gamma_1(t)=\gamma'_2(t)$ and $\gamma_2(t)=\gamma'_1(t)$, or (2)
$\gamma_1(t)=-\gamma'_2(t)$ and $\gamma_2(t)=-\gamma'_1(t)$. From the
differentiability of $\gamma$ it follows that only one of these two
cases can hold for all $t\in\Q$.  Let
$e(t)\leteq\gamma_1(t)+\gamma_2(t)$ in case (1) and let
$e(t)\leteq\gamma_1(-t)+\gamma_2(-t)$ in case (2). Then
$e:\Q\rightarrow \Q$ is a definable differentiable function for which
$e'(t)=e(t)$ and $e(-t)=1/e(t)$ for all $t\in\Q$.
\medskip

\noindent 
To prove the other direction, let $e:\Q\rightarrow\Q$ be a
definable differentiable function such that $e(t)'=e(t)$ and $e(-t)=1/e(t)$ for all
$t\in\Q$.  Then let us define functions $ch$ and $sh$ as follows:
\begin{equation*}
ch(t)\leteq\frac{e(t)+{e(-t)}}{2}\quad \text{and}\quad
sh(t)\leteq\frac{e(t)-{e(-t)}}{2}\quad \text{for all}\quad
t\in\Q.
\end{equation*} 
Then the following can be shown by a straightforward calculation:
\begin{equation*}
ch'(t)=sh(t),\quad sh'(t)=ch(t)\quad\text{and}\quad
ch(t)^2-sh(t)^2=1\quad \text{for all}\quad t\in\Q.
\end{equation*}
From these equations it is not difficult to prove
 that the following curve is a definable well-parametrization of $Hyp$:
\begin{equation*}
\gamma(t)\leteq\langle sh(t),ch(t),0\ldots,0\rangle\quad \text{for all}\quad t\in\Q.
\end{equation*}
That completes the proof of Prop.~\ref{prop-hyp}.
\end{proof}

\begin{rem}
It is well known that uniformly accelerated motion and hyperbolic
motion are the same, see \cite[\S 3.8]{dinverno}. Thus according to {\it inertial}
observers, the world-line of a uniformly accelerated observer is the
unit-hyperbola $Hyp$ distorted by a Poincar{\'e} transformation and a
dilation.  Thus Prop.~\ref{prop-hyp} implies that there can be
uniformly accelerated observers iff an exponential function of the
quantities is definable. See Question \ref{que-unif}.
\end{rem}

\noindent
Let us introduce an axiom here that we will use to strengthen \ax{AxSelf_0}:
\begin{description}\label{axself+}
\item[\Ax{AxSelf^+_0}] \index{\ax{AxSelf^+_0}} The set of
  time-instances in which an observer  encounters an event is
  connected and has at least two distinct elements, i.e.,
\begin{equation*}
\forall k\in\Ob\; \exists \vp,\vq\in\Q^d\enskip p_\tau\neq q_\tau
\lland k\in ev_k(\vpp)\cap ev_k(\vqq)\lland\{ r_\tau : k\in
ev_k(\vrr)\}\mbox{ is connected.}
\end{equation*}
\end{description}
Let us note here that axioms \ax{AxSelf_0} and \ax{AxSelf^+_0}
together are still weaker than \ax{AxSelf}.

Let now introduce an axiom system which is the extension of
\ax{SpecRel} by \ax{AxCmv} and some simplifying axioms:
\begin{equation*}\index{\ax{AccRel_0}}
\boxed{ \ax{AccRel_0}\leteq \Setopen \ax{AxSelf_0},\ax{AxSelf^+_0},
  \ax{AxPh},\ax{AxEv},\ax{AxEvTr},\ax{AxSymDist}, \ax{AxCmv}\Setclose}
\end{equation*}

\begin{rem}
\ax{AccRel_0} is an extension of \ax{SpecRel} since \ax{AxSelf} is implied by \ax{AxSelf_0}, \ax{AxPh} and \ax{AxEv}, see Prop.~\ref{prop-sr0}. Moreover, \ax{AccRel_0} is a conservative extension of \ax{SpecRel} with respect to accelerated observers.
\end{rem}

Our next theorem states that life-curves of accelerated observers in
the models of \ax{AccRel_0} are well-parametrized. That implies that in
the models of \ax{AccRel_0}, accelerated clocks behave as expected.
And Rem.~\ref{Trrem} states a kind of ``completeness theorem'' for
life-curves of accelerated observers.

\begin{thm}
\label{thm-wp} 
Let $d\ge 3$.
Assume \ax{AccRel_0}.
Let $k$ be an observer and $m$ be an {\it inertial} observer.
Then $\lc^k_m$ is a well-parametrized timelike curve.
\end{thm}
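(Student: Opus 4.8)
The plan is to show that $\lc^k_m$ is (a) a timelike curve and (b) well-parametrized, i.e. $\mu\big((\lc^k_m)'(t)\big)=1$ for every $t\in\dom\lc^k_m$. The key idea is that near any point of its life-curve, $k$ agrees with a co-moving \emph{inertial} observer $n$, and the worldview transformation $w^n_m$ between two inertial observers is a Poincar\'e transformation (by Thm.~\ref{thm-poi}(2), since \ax{AxPh}, \ax{AxEv} and \ax{AxSymDist} are available in \ax{AccRel_0} and $d\ge3$). Poincar\'e transformations preserve the Minkowski length, so the derivative computation can be transported from $\lc^k_n$ to $\lc^k_m$; and $\lc^k_n$ is essentially the identity embedding near the relevant point because $n$ is co-moving with $k$.

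First I would fix $t_0\in\dom\lc^k_m$ and let $\vpp_0=\lc^k_m(t_0)$, so $k\in ev_m(\vpp_0)=ev_k(\voo{}')$ for the coordinate point $\vqq_0$ with $(\vqq_0)_\tau=t_0$; by \ax{AxSelf_0}, $(\vqq_0)_\sigma=\vo$, so $\vqq_0=\iota(t_0)$. By \ax{AxCmv} choose $n\in\IOb$ with $n\com_{\vqq_0}k$. By Lem.~\ref{lem-lc}, $\lc^k_m=\iota\circ w^k_m$ and $\lc^k_n=\iota\circ w^k_n$; and by Prop.~\ref{prop-lc}(\ref{item-tr}) (using that \ax{AxEv} gives $Ev_m\subseteq Ev_n$ among inertial observers, so actually one gets $\lc^k_m=\lc^k_n\circ w^n_m$ once we check the hypothesis), we can write $\lc^k_m=\lc^k_n\circ w^n_m$ on the relevant neighborhood. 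Now $w^n_m$ is a Poincar\'e transformation, hence affine with linear part a Lorentz transformation $L$; differentiating, $(\lc^k_m)'(t_0)=(w^n_m)'\big(\lc^k_n\text{-preimage}\big)\cdot(\lc^k_n)'(t_0)=L\big((\lc^k_n)'(t_0)\big)$, and since $L$ preserves Minkowski length it suffices to show $(\lc^k_n)'(t_0)$ exists, is timelike, and has $\mu=1$.

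For this last point I would use Prop.~\ref{prop-comdiff}: $n\com_{\vqq_0}k$ is equivalent to $w^k_n(\vqq_0)=\vqq_0$, $w^k_n$ differentiable at $\vqq_0$ with the identity as one of its derivatives. Then $\lc^k_n=\iota\circ w^k_n$ is differentiable at $t_0$ with derivative $\iota'(t_0)$ composed appropriately — concretely, since one derivative of $w^k_n$ at $\vqq_0$ is $Id$, one derivative of $\lc^k_n$ at $t_0$ is $Id\circ\iota=\iota$ restricted, i.e. $(\lc^k_n)'(t_0)=\vet$ is a valid derivative (here I'd invoke the chain rule for the framework's notion of differentiability as developed in Section~\ref{sec-diff}). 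Since $\mu(\vet)=1$ and $\vet$ is timelike, we are done pointwise: $\mu\big((\lc^k_m)'(t_0)\big)=\mu\big(L(\vet)\big)=\mu(\vet)=1$, and $L(\vet)$ is timelike because Lorentz transformations preserve the timelike/lightlike/spacelike trichotomy. Finally, $\dom\lc^k_m$ is connected and has at least two elements (so $\lc^k_m$ is genuinely a curve) by \ax{AxSelf^+_0} together with Prop.~\ref{prop-lc}(\ref{item-domtr}), which gives $\dom\lc^k_m=\dom\lc^k_k=\{r_\tau:k\in ev_k(\vrr)\}$ under \ax{AxCmv} and $m\in\IOb$.

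The main obstacle I anticipate is the careful handling of the framework's nonstandard notion of differentiability — in particular that a function may have several derivatives at a point, so I must make sure the chain rule is applied to the \emph{right} choice of derivative (the identity one guaranteed by Prop.~\ref{prop-comdiff}) and argue that $(\lc^k_m)'(t_0)$ as a value is well-defined along the life-curve even if $w^k_m$ has multiple derivatives off the co-moving locus. A secondary subtlety is verifying the hypothesis $Ev_m\subseteq Ev_k$ needed for the equality (not just inclusion) in Prop.~\ref{prop-lc}(\ref{item-tr}): this follows from \ax{AxEv} (inertial observers see the same events) together with Rem.~\ref{rem-axacc}, which says inertial observers coordinatize every event encountered by any observer — but it is worth spelling out since the composition identity $\lc^k_m=\lc^k_n\circ w^n_m$ is what lets the Poincar\'e-invariance argument go through cleanly.
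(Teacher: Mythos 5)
Your proposal is correct and follows essentially the same route as the paper's proof: reduce to a co-moving inertial observer via \ax{AxCmv}, use Prop.~\ref{prop-comdiff} and Lem.~\ref{lem-lc} with the Chain Rule to get derivative $\vet$ there, and transport the conclusion through the Poincar\'e transformation $w^n_m$ (Thm.~\ref{thm-poi}), with connectedness of the domain coming from \ax{AxSelf^+_0} and Prop.~\ref{prop-lc}. The only cosmetic difference is that the paper phrases the transport step as ``we may assume $m$ itself is the co-moving observer,'' whereas you make the composition $\lc^k_m=\lc^k_n\circ w^n_m$ explicit; the two subtleties you flag are exactly the ones the paper's proof also has to address.
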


\begin{proof}
By \eqref{item-trfunct} in Prop.~\ref{prop-lc} we have that $\lc^k_m$
is a function. To prove that $\lc^k_m$ is also a curve, we need to
show that $\dom \lc^k_m$ is connected and has at least two distinct
elements. That is so because by Item \eqref{item-domtr} in
Prop.~\ref{prop-lc}, $\dom \lc^k_m=\setopen q_\tau : k\in
ev_k(\vqq)\setclose$ and the latter is connected and has at least two
distinct elements by \ax{AxSelf^+_0}. Hence $\lc^k_m$ is a curve.

To complete the proof, we have to show that $\lc^k_m$ is also timelike
and well-parametrized. Let $t\in \dom \lc^k_m$. We have to prove that
$\lc^k_m$ is differentiable at $t$ and its derivative at $t$ is of
Minkowski length $1$. By \eqref{item-domtr} of Prop.~\ref{prop-lc},
there is a $\vq\in Cd_k$ such that $k\in ev_k(\vqq)$ and
$q_\tau=t$. Thus, by \ax{AxCmv}, there is a co-moving {\it inertial}
observer of $k$ at $\vq$. By \ax{AxSelf_0}, $\iota(t)=\vq$. By
Prop.~\ref{propAff}, we can assume that $m$ is a co-moving {\it
  inertial} observer of $k$ at $\vq$, i.e., $m\com_{\vq}k$, because of
the following three statements. By \eqref{item-tr} of
Prop.~\ref{prop-lc} and \ax{AxEv}, for every $h\in \IOb$, both 
$\lc^k_m$ and $\lc^k_h$ can be obtained from the other by composing it
by a worldview transformation between {\it inertial} observers. By
Thm.~\ref{thm-poi}, worldview transformations between {\it inertial}
observers are Poincar\'e-transformations in the models of
\ax{SpecRel}.  Poincar\'e-transformations are affine and preserve the
Minkowski distance.

So let us assume that $m$ is a co-moving {\it inertial} observer of
$k$ at $\vq=\iota(t)$.  We prove that $\lc^k_m$ is
differentiable at $t$ and $\vet=\langle1,0,\ldots,0\rangle$ is its
derivative.  This will complete the proof since $\vet$ is a timelike
vector of Minkowski length $1$.  By Lem.~\ref{lem-lc},
$\lc^k_m=\iota\circ w^k_m$.  So by Chain Rule, the derivative of
$\lc^k_m$ at $t$ is the derivative of $w^k_m$ at $\iota(t)$ evaluated
on the derivative of $\iota$ at $t$, i.e.,
$(\lc^k_m)'(t)=d_{\vqq}w^k_m\big(\iota'(t)\big)$.  By
Prop.~\ref{prop-comdiff}, $d_{\vqq}w^k_m=Id$ since
$m\com_{\vq}k$.  It is clear that $\iota'(t)=\vet$.  Thus
$(\lc^k_m)'(t)=\vet$ as it was stated.
\end{proof}
\noindent
Let us note that we have not used \ax{AxEvTr} in the proof of
Thm.~\ref{thm-wp}.

\begin{rem}
\label{Trrem} Well-parametrized curves are exactly the life-curves of
accelerated observers in the models of \ax{AccRel_0}, by which we mean
the following.  Let $\mathfrak{Q}$ be an Euclidean ordered field and
let $f:\Q\parrow \Q^d$ be well-parametrized.  Then there are a model
$\mathfrak{M}$ of \ax{AccRel_0}, observer $k$ and {\it inertial}
observer $m$ such that $\lc^k_m=f$ and the quantity part of
$\mathfrak{M}$ is $\mathfrak{Q}$.  That is not difficult to prove by
using the methods of the present work, see Thm.~\ref{thm-wvcompf}.
\end{rem}

The co-moving relation $\com_{\vq}$ is not symmetric while the
intuitive image behind it is. Therefore, let us introduce a symmetric
version, too.  We say that observers $m$ and $k$ are \df{strong
  co-moving observers}\index{strong co-moving observers} at $\vq$, in
symbols $m \scom_{\vq} k$, iff both $m\com_{\vq} k$ and $k \com_{\vq}
m$ hold.  The following axiom gives a stronger
connection between the worldviews of {\it inertial} and accelerated
observers:

\begin{description}\label{axscmv}
\item[\Ax{AxSCmv}] \index{\ax{AxSCmv}} For every observer and event
  encountered by it, there is a strong co-moving
 \textit{inertial} observer:
\begin{equation*}
\forall k \in \Ob \enskip \forall \vq\in\Q^d\quad k\in ev_k(\vqq) \then
\exists m\in \IOb \enskip m \scom_{\vq} k.
\end{equation*}
\end{description}

\begin{thm}Let $d\ge3$.
Assume \ax{AxSCmv} and \ax{SpecRel}. Let $h$ and $k$ be observers and
let $\vq$ be a coordinate point such that
$\vq\in\wl_k(k)\cap\wl_k(h)$. Then the worldview transformation
$w^k_h$ is differentiable at $\vq$ and one of its derivatives is a
Lorentz transformation.

\end{thm}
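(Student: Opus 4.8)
The plan is to reduce the statement to an application of Proposition~\ref{prop-comdiff} together with the known structure theory for worldview transformations between {\it inertial} observers, proven via the Alexandrov--Zeeman Theorem in Thm.~\ref{thm-poi}. First I would use \ax{AxSCmv} to pick an {\it inertial} observer $m$ that is a strong co-moving observer of $k$ at $\vq$, i.e.\ $m\scom_{\vq}k$; this is legitimate since $\vq\in\wl_k(k)$ means $k\in ev_k(\vqq)$. The same hypothesis applied to $h$ at the point $\vq$—note $\vq\in\wl_k(h)$ together with $ev_k(\vqq)=ev_m(\vqq)$ (Rem.~\ref{rem-comove}) gives a coordinate point for $h$ at which $h$ is in its own event, so \ax{AxSCmv} supplies an {\it inertial} observer $n$ with $n\scom_{w^k_h(\vqq)}h$. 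By Prop.~\ref{prop-comdiff} applied in both directions, $w^k_m$ is differentiable at $\vq$ with one derivative $Id$, $w^m_k$ is differentiable at $\vq$ with one derivative $Id$, and similarly for $h$ with $n$ at the appropriate points. Then I would write $w^k_h = w^m_h \circ w^k_m$ (using Item~\eqref{item-eqwcomp} of Prop.~\ref{prop-wkm}, which needs $Ev_k\cap Ev_h\subseteq Ev_m$—this follows from \ax{AxEv} and Rem.~\ref{rem-axacc}, since $m$ is {\it inertial} and coordinatizes every event encountered by an observer).

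Next I would invoke Thm.~\ref{thm-poi}(2): under \ax{SpecRel} (which includes \ax{AxSymDist}) with $d\ge 3$, the worldview transformation $w^m_h$ between the {\it inertial} observers $m$ and $h$ — wait, $h$ need not be {\it inertial}; so instead I apply this only to $w^m_n$, which is Poincar\'e. The composition I actually want to analyze is $w^k_h = w^n_h\circ w^m_n\circ w^k_m$, where $w^k_m$ is differentiable at $\vq$ with derivative $Id$ (by $m\com_{\vq}k$), $w^n_h$ is differentiable at $w^m_n(w^k_m(\vqq))$ with derivative $Id$ (by $n\com_{\cdot}h$, suitably transported), and $w^m_n$ is a global Poincar\'e transformation, hence affine with linear part a Lorentz transformation $L$. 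By the Chain Rule for derivatives (Section~\ref{sec-diff}), $w^k_h$ is differentiable at $\vq$ and one of its derivatives is $Id\circ L\circ Id = L$, a Lorentz transformation. To make the bookkeeping clean I would first translate so that all the relevant base points are the origin, since Poincar\'e transformations are affine and differentiation is translation-invariant; this lets me treat $w^m_n$ as linear without loss of generality.

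The main obstacle is making the composition argument rigorous in this framework where worldview transformations are only partial functions, possibly multi-valued, and ``differentiable'' is the weak local notion of Section~\ref{sec-diff} (a function may have several derivatives at a point). Specifically: (i) I must check that $w^k_h$ is genuinely a function on a neighborhood of $\vq$ so that it makes sense to differentiate it — this follows from Rem.~\ref{rem-wkmfun} since $m\com_{\vq}k$ forces $w^k_m$, hence by the equation $w^k_h=w^m_h\circ w^k_m$ also $w^k_h$, to be single-valued near $\vq$ (using that $w^m_h$ is a function where $m$ is {\it inertial}, via Prop.~\ref{prop-sr0}); and (ii) I must verify the domain conditions for Item~\eqref{item-eqwcomp} of Prop.~\ref{prop-wkm} hold at least on the relevant neighborhood, not necessarily globally, which is exactly what the ``$Ev_k\cap Ev_h\subseteq Ev_m$'' argument via Rem.~\ref{rem-axacc} delivers. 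The Chain Rule step itself needs the auxiliary observation that precomposing or postcomposing a map differentiable with derivative $Id$ does not change the (set of) derivatives of the other map beyond the linear-algebra composition — a routine $\varepsilon$--$\delta$ estimate I would not spell out in full, citing Section~\ref{sec-diff} and the proof technique already used in Thm.~\ref{thm-wp}.
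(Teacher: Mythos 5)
Your proposal is correct and follows essentially the same route as the paper's own proof: use \ax{AxSCmv} with Prop.~\ref{prop-comdiff} to sandwich the Poincar\'e transformation between the two inertial co-movers (Thm.~\ref{thm-poi}) between two maps differentiable with derivative $Id$, and conclude by the Chain Rule. The only cosmetic difference is that you argue for equality of $w^k_h$ with the triple composition near $\vq$ via Item~\eqref{item-eqwcomp} of Prop.~\ref{prop-wkm}, whereas the paper gets by with the inclusion of Item~\eqref{item-subwcomp} together with Rem.~\ref{rem-diff}, since differentiability passes from a relation to any subrelation through the point.
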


\begin{proof}
By axiom \ax{AxSCmv}, there are {\it inertial} observers $h_0$ and
$k_0$ such that $h_0\com_{\vq}h$ and $k\com_{\vq}k_0$. By
Prop.~\ref{prop-comdiff}, $w^h_{h_0}(\vqq)=\vq=w^{k_0}_k(\vqq)$, and
$w^h_{h_0}$ and $w^{k_0}_k$ are differentiable at $\vq$ and one of
their derivatives at $\vq$ is the identity map. By Thm.~\ref{thm-poi},
$w^{h_0}_{k_0}$ is a Poincar\'e transformation. So $w^{h_0}_{k_0}$ is
differentiable and its derivative is a Lorentz transformation. Hence,
by Thm.~\ref{thm-chn}, the composition of $w^h_{h_0}$, $w^{h_0}_{k_0}$
and $w^{k_0}_k$ is differentiable at $\vq$ and one of its derivatives
is a Lorentz transformation. By Prop.~\ref{prop-wkm} this composition
extends $w^k_h$. So $w^k_h$ is also differentiable at $\vq$ and one of
its derivatives is a Lorentz transformation, see Rem.~\ref{rem-diff}.
\end{proof}

\section{Models of the extended theory }

First let us note that it is easy to construct nontrivial models of
\ax{AccRel_0}, for example, the construction in
Misner\,--\,Thorne\,--\,Wheeler~\cite[\S 6, especially pp.172-173 and
  \S 13.6 on pp.327-332]{MTW} can be used for constructing models for
\ax{AccRel_0}.

To characterize the worldview transformations between {\it inertial}
and accelerated observers in the models of \ax{AccRel_0}, let us
introduce the following definition. A function $f:\Q^d\parrow \Q^d$ is
called {\bf worldview compatible} iff $\Setopen p_\tau\setmid
\vp\in\dom f \land \vp_\sigma=\voo\Setclose$ is connected and has at
least two distinct elements, $f$ is differentiable at every
$\vp\in\Q^d$ for which $\vp\in\dom f$ and $\vp_\sigma=\vo$, and one of
its derivatives is a Lorentz transformation at $\vp$ in this case.

\begin{rem}
The worldview transformation $w^k_m$ between observer $k$ and {\it
  inertial} observer $m$ is worldview compatible if $d\ge3$ and
\ax{AccRel_0} is assumed. This can be proved by using
Thms.\ \ref{thm-wp} and \ref{thm-poi} and the fact that $\{p_{\tau} :
\vp\in\dom w^k_m\land \vp_\sigma=\voo\}=\{p_\tau: k\in ev_k(\vpp)\}$,
which follows by Rem.~\ref{rem-axacc}.
\end{rem}

\begin{thm}\label{thm-wvcompf} 
Let $f:\Q^d\parrow\Q^d$ be a worldview compatible function. Then
there is a model of \ax{AccRel_0}, and there are an observer $k$ and an {\it
  inertial} observer $m$ in this model such that $w^k_m=f$.
\end{thm}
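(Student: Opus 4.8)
The plan is to build a model of \ax{AccRel_0} essentially by fiat: we will designate one privileged \textit{inertial} observer $m$ whose worldview is a ``standard'' Minkowski worldview (so that \ax{AxPh}, \ax{AxSelf_0}, etc., hold for it), then define the accelerated observer $k$ by declaring $w^k_m\leteq f$, and finally populate the model with enough additional \textit{inertial} observers (Poincar\'e-transformed copies of $m$) and enough bodies to satisfy the remaining axioms. The worldview-compatibility of $f$ is exactly the hypothesis tailored to make $k$ fit the conclusion of Thm.~\ref{thm-wp}: the connectedness-with-two-points condition on $\{p_\tau : \vp\in\dom f,\ \vp_\sigma=\voo\}$ will give \ax{AxSelf^+_0} for $k$, and differentiability of $f$ with a Lorentz derivative at the points of the time-axis will give \ax{AxCmv} for $k$ via Prop.~\ref{prop-comdiff} (after pre-composing with the inverse of the relevant Lorentz part to produce a genuinely co-moving \textit{inertial} observer).

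Concretely, first I would fix a Euclidean ordered field $\Q$ (any one will do) and let the \textit{inertial} observers be indexed by Poincar\'e transformations: for each Poincar\'e transformation $P$ introduce an observer $m_P$, with $m\leteq m_{Id}$, and stipulate $w^{m_P}_{m}\leteq P$ for all $P$ (so $w^{m_P}_{m_{P'}}=P\circ P'^{-1}$, again Poincar\'e). Define the worldline content of $m$ directly so that $Cd_m=\Q^d$ and $ev_m$ is injective: put a distinct ``point body'' at every coordinate point, put photons along every slope-$1$ line, put $m_P$ itself along the image under $P$ of the time-axis, and — the key step — introduce the accelerated observer $k$ as the body whose worldline according to $m$ is $\ran f|_{\{\vp_\sigma=\voo\}\cap\dom f}$, with $k$'s own coordinatization defined so that $w^k_m=f$. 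Transport all of this through the Poincar\'e maps $P$ to define $\W$ for every $m_P$, and let $\IOb=\{m_P : P \text{ Poincar\'e}\}$, so that \ax{AxEv} holds automatically (every \textit{inertial} worldview is a Poincar\'e image of $m$'s) and \ax{AxPh}, \ax{AxSelf_0}, \ax{AxSymDist} hold because they hold for $m$ and are preserved by Poincar\'e transformations. One must also arrange that photons are genuine bodies observed by $k$ in a way consistent with $f$; since $f$ is a function this is unproblematic — $k$ simply sees, at each $\vpp\in\dom f$, whatever $m$ sees at $f(\vpp)$.

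The remaining axioms to check are \ax{AxSelf^+_0}, \ax{AxEvTr}, and \ax{AxCmv} for $k$ (they are immediate for the \textit{inertial} $m_P$). For \ax{AxSelf^+_0}: by construction $\{t : k\in ev_k(\iota(t))\}=\{p_\tau : \vp\in\dom f,\ \vp_\sigma=\voo\}$, which is connected with at least two elements by worldview-compatibility. For \ax{AxEvTr}: the only body whose encountered-events structure is non-trivial is $k$, and by definition $e\in Ev_k$ iff $e=ev_m(f(\vpp))$ for some $\vpp\in\dom f$; since $f$ is defined precisely on the coordinate points $k$ actually uses, $k\in e$ forces $e$ to be one of these, so $e\in Ev_k$ — and for the \textit{inertial} observers \ax{AxEvTr} follows from $Ev_{m_P}=Ev$ which we can secure by making $m$'s worldview cover all events. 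For \ax{AxCmv}: given $\vq$ with $k\in ev_k(\vqq)$, so $\vq=\iota(t)$ with $t$ in the time-set above, worldview-compatibility gives that $f=w^k_m$ is differentiable at $\vq$ with one derivative a Lorentz transformation $L$; then $w^k_{m_{L^{-1}}}=L^{-1}\circ f$ (using $w^k_{m_P}=P\circ w^k_m$) is differentiable at $\vq$ with derivative $Id$, and $f(\vqq)=\vq$ together with $\vq_\sigma=\voo$ gives $w^k_{m_{L^{-1}}}(\vqq)=\vqq$ (possibly after translating $L^{-1}$ so that it fixes $f(\vqq)$, which is legitimate since translations are Poincar\'e), whence $m_{L^{-1}}\com_{\vq}k$ by Prop.~\ref{prop-comdiff}. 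The main obstacle I expect is purely bookkeeping rather than conceptual: making the definition of $\W$ globally consistent — in particular ensuring $ev_m$ really is injective and that $k$'s coordinatization, defined via $w^k_m=f$, does not accidentally force $k$ to see the same nonempty event at two different points (which would be incompatible with $f$ being a function). This is exactly the kind of routine-but-delicate construction carried out for \ax{SpecRel} in the earlier literature, adapted to glue in one extra accelerated observer; I would handle it by defining $ev_k(\vpp)\leteq ev_m(f(\vpp))\cup\{k\}$ on the time-axis portion of $\dom f$ and $ev_k(\vpp)\leteq ev_m(f(\vpp))$ elsewhere on $\dom f$, which makes $w^k_m=f$ hold by inspection.
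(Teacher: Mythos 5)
Your construction is correct and essentially the paper's own: a standard Minkowski worldview for a distinguished \textit{inertial} observer $m$, the stipulation $w^k_m=f$, Poincar\'e-transformed copies of $m$ as the remaining \textit{inertial} observers, and \ax{AxCmv} verified via Prop.~\ref{prop-comdiff} by composing $f$ with the affine map whose linear part is the inverse of the Lorentz derivative and which carries $f(\vqq)$ back to $\vq$ --- which is exactly the paper's family of co-moving observers $m_{\vr}$. The only differences are cosmetic: the paper introduces just the tangent observers $m_{\vr}$ indexed by the points of $k$'s world-line rather than all Poincar\'e copies, and it relies on the photons alone to make $ev_m$ injective, so your extra point bodies are unnecessary.
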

\begin{proof}
We construct a model of \ax{AccRel_0} over the field
$\Q$. Let
\begin{equation*}
\Ph\leteq\Setopen line(\vp,\vqq)\setmid \vp,\vq\in\Q^d\lland
|\vp_\sigma-\vq_\sigma|=|p_\tau-q_\tau|\Setclose,
\end{equation*}
\begin{equation*}
\IOb\leteq\Setopen m_{\vr}\setmid\vr\in \dom f\lland \vr_\sigma=\vo\Setclose \cup
\setopen m\setclose,\quad
 \Ob\leteq \IOb\cup\setopen k\setclose 
\quad\B\leteq \Ob\cup \Ph.
\end{equation*}
To finish the construction of the model, we should give the worldview
relation $\W$, too. Instead, it is enough to give the event functions
of all observers or to give the event function of one particular
observer and the worldview transformations that define the event
functions of the other observers. So let us first give the event
function of observer $m$.  For all $\vr\in\dom f$ if $\vr_\sigma=\vo$,
let $d_{\vrr}f$ be the Lorentz transformation which is a derivative of
$f$ at $\vrr$ (since $f$ is worldview compatible, there is such a
Lorentz transformation).  Let
\begin{equation*}
ph\in ev_m(\vpp)\quad \text{iff}\quad \vp\in ph,\qquad m\in
ev_m(\vpp)\quad\text{iff}\quad \vp_\sigma =\vo,
\end{equation*}
\begin{equation*}
k\in ev_m(\vpp)\quad\text{iff}\quad \vp=f(\vrr)\text{ for some
}\vrr\in\dom f \text{ for which }\vr_\sigma=\vo,
\end{equation*}
\begin{equation*}
m_{\vr}\in ev_m(\vpp)\quad \text{iff}\quad \vp=f(\vrr) +\lambda\cdot
d_{\vrr} f(\vet)\text{ for some }\lambda\in\Q\text{, i.e.,}
\end{equation*}
  iff the $line(\vp,f(\vrr))$ is the
tangent line of $\wl_m(k)$. Now we have arranged every body in the
events observed by $m$, thus we have given the event function $ev_m$.
Since $d_{\vrr}f$ is a Lorentz transformation so is its inverse $\big[d_{\vrr}f\big]^{-1}$.
Let 
\begin{equation*}
w^m_{m_{\vr}}(\vpp)\leteq\big[d_{\vrr}f\big]^{-1}(\vp-f(\vrr))+\vr,
\end{equation*}
which is a Poincar\'e transformation since $\big[d_{\vrr}f\big]^{-1}$
is a Lorentz transformation. And let $w^k_m\leteq f$. Now we have
given the model since the worldview relation $\W$ can be defined from
the worldview transformations and $ev_m$. Let us check the axioms. It
is easy to see that \ax{AxSelf_0} is valid by the definition of
$ev_m$. \ax{AxSelf^+_0} is valid since $\setopen p_\tau: \vp\in\dom
f\land \vp_\sigma=\voo\setclose$ is connected and has at least two distinct
elements. \ax{AxEv}, \ax{AxPh} and \ax{AxSymDist} are valid by the
definition of $ev_m$ and the fact that $w^m_{m_{\vr}}$ are Poincar\'e
transformations. \ax{AxEvTr} is valid by the definition of $ev_m$ and
$w^k_m$. To prove that \ax{AxCmv} is valid, we show that
$m_{\vr}\com_{\vr} k$, i.e., $m_{\vr}$ is a co-moving inertial
observer of $k$ at $\vr$. By Prop.~\ref{prop-comdiff}, we have
to check two things, (1) $w^k_{m_{\vr}}(\vrr)=\vr$ and (2)
$w^k_{m_{\vr}}$ is differentiable at $\vr$ and the identity map is one
of its derivatives.
\begin{equation*}
w^k_{m_{\vr}}(\vrr)=\big[d_{\vrr}f\big]^{-1}(f(\vrr)-f(\vrr))+\vr=\vr
\end{equation*}
since $[d_{\vrr}f]^{-1}(\voo)=\vo$ by the linearity of
$[d_{\vrr}f]^{-1}$.  By Thm.~\ref{thm-chn},
\begin{equation*}
d_{\vrr}w^k_{m_{\vr}}(\vrr)=d_{\vrr}f\circ d_{f(\vrr)}w^m_{m_{\vr}}=d_{\vrr}f\circ[d_{\vrr}f]^{-1}=Id_{\Q^d}
\end{equation*}
since the derivative of $w^k_{m_{\vr}}$ at $f(\vrr)$ is its linear part $[d_{\vrr}f]^{-1}$.
That completes the proof of the theorem.\end{proof}

\begin{figure}[h!btp]
\small
\begin{center}
\psfrag{f}{$f$}
\psfrag{Id}{$Id$}
\psfrag{pi}{$\pi_\sigma$}
\includegraphics[keepaspectratio, width=0.5\textwidth]{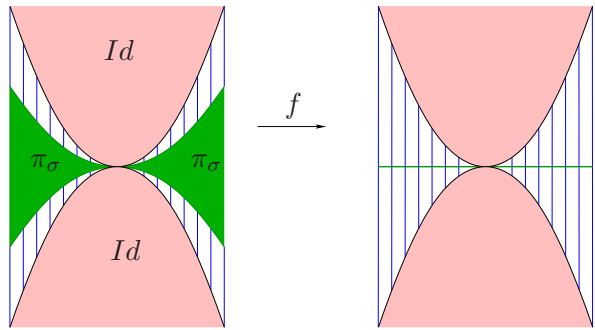}
\caption{\label{fignot} Illustration for Example \ref{example-not}.}
\end{center}
\end{figure}

\begin{rem}
Let $f$ be a worldview compatible transformation. It is not hard to
see, by the proof above, that we can extend any model of \ax{AccRel_0}
and $m\in \IOb$ such that $w^k_m=f$ for some $k\in\Ob$ in the extended
model.
\end{rem}

\begin{example}\label{example-not}
The function 
\begin{equation*}
f(\vpp)\leteq \left\{
\begin{array}{cll}
\langle 0,\vp_\sigma \rangle & \text{ iff } &\hspace{44.8pt}|p_\tau|\le
|\vp_\sigma|^2\\
\langle\frac{|p_\tau|-|\vp_\sigma|^2}{|\vp_\sigma|^2},\vp_\sigma\rangle
& \text{ iff } &\hspace{6pt}|\vp_\sigma|^2 < |p_\tau|<2|\vp_\sigma|^2\\
\vp & \text{ iff } &2|\vp_\sigma|^2\le |p_\tau|,
\end{array}
\right.
\end{equation*} 
see Fig.~\ref{fignot}, is worldview compatible, thus it can define a
worldview of an accelerated observer. Nevertheless, $f$ is not
injective in any neighborhood of the origin.
\end{example}

\begin{rem}
\ax{AccRel_0} is flexible enough to allow an accelerated observer's
coordinate domain to be a subset of the time-axis, i.e., there can be
an observer $h$ such that $\vp_\sigma=\vo$ for all $\vp\in Cd_h$.
Observers of this kind behave as accelerated clocks because they only
use the time coordinate of their coordinate systems, so we can use
them to define {\bf accelerated clocks}\index{accelerated clock}
within \ax{AccRel_0}.
\end{rem}
\chapter{The twin paradox}
\label{chp-twp}

The results of this chapter are based on \cite{twp} and
\cite{mythes}. Here we investigate the logical connection of our
accelerated relativity theory and the twin paradox, which is the
accelerated version of the clock paradox, see
Chap.~\ref{chp-cp}. According to the twin paradox (TwP)\index{TwP}, if
a twin makes a journey into space (accelerates), he will return to
find that he has aged less than his twin brother who stayed at home
(did not accelerate).  However surprising TwP is, it is not a
contradiction. It is only a fact that shows that the concept of time
is not as simple as it seems to be.

A more optimistic consequence of  TwP  is the following. Suppose
you would like to visit a distant galaxy 200 light years away. You are told
it is impossible because even light travels there for 200 years. But you do not
despair, you accelerate your spaceship nearly to the speed of light. Then
you travel there in 1 year of your time. You study there whatever you
wanted, and you come back in 1 year subjective time. When you arrive back,
you aged only  2 years. So you are happy, but of course you cannot tell the
story to your brother, who stayed on Earth. Alas you can  tell it to your
grand-\ldots-grand-children only.

\section{Formulating the twin paradox}
\label{sec-notwp}

To do logical investigation on
TwP, first we have to formulate it in our FOL language.
To formulate TwP, let us denote the \df{set of events encountered by observer $m$ between events $e_1$ and $e_2$} localized by $m$\index{set of encountered events} as
\begin{equation*}
\index{$\enc_m(e_1,e_2)$}
\Df{\enc_m(e_1,e_2)}\,\leteq\, \setopen \,
e\in Ev_m\setmid m\in e \lland \exists \vp\in\Q^d 
 ev_m(\vpp)=e\lland \time_m(e_1)\le p_\tau\le\time_m(e_2)\,\setclose.
\end{equation*}
Then TwP in our FOL setting can be formulated as follows:
\begin{description}
\item[\Ax{TwP}] \index{\ax{TwP}} Every \textit{inertial} observer $m$
 measures at least as much time as any other observer $k$ between any
 two events $e_1$ and $e_2$ in which they meet and which are
 localized by both of them; and they measure the same time iff they
 have encountered the very same events between $e_1$ and $e_2$:
\begin{multline*}
\forall m \in \IOb\enskip \forall k\in \Ob\enskip  \forall e_1,e_2\in Ev \quad 
\Loc_m(e_1)\lland\Loc_m(e_2)\lland\Loc_k(e_1)\\\lland\Loc_k(e_2)\lland k,m\in e_1\cap e_2 \then \time_k(e_1,e_2)\le\time_m(e_1,e_2)\\
\lland\big(\time_m(e_1,e_2)=\time_k(e_1,e_2)\iff \enc_m(e_1,e_2)=\enc_k(e_1,e_2) \big). 
\end{multline*}
\end{description}

Let us also formulate a property of clocks which we call the Duration
Determining Property of Events (DDPE)\index{DDPE}. This property
states that the clocks of any two observers with the same world-line
are synchronized, i.e., they measure the same amount of time
between any two events that they encounter. DDPE is such a basic
property of clocks that it is a possible candidate for assuming it as
an axiom  (if it is not provable from the other axioms).
\begin{description}
\item[\Ax{DDPE}]\index{\ax{DDPE}} If each of
two observers encounters the very same (nonempty) events between two given events,
they measure the same time between these two events:
\begin{multline*}
\forall k,m\in \Ob\enskip \forall e_1,e_2 \in Ev\quad  m,k\in e_1\cap e_2\\ 
\lland \enc_m(e_1,e_2)=\enc_k(e_1,e_2)  \then \time_m(e_1,e_2)=\time_k(e_1,e_2),
\end{multline*}
see the right hand side of Fig.~\ref{fig-twp}.
\end{description}
\begin{figure}
\small
\begin{center}
\psfrag*{TwP}[l][l]{\ax{TwP}}
\psfrag*{Eqtime}[l][l]{\ax{DDPE}}
\psfrag*{f}[b][b]{$w^k_m$}
\psfrag*{p}[rt][rt]{$\vp$}
\psfrag*{q}[rb][rb]{$\vq$}
\psfrag*{t1}[l][l]{$\wl_m(k)$}
\psfrag*{p'}[lt][lt]{$\vpp'$}
\psfrag*{q'}[lb][lb]{$\vqq'$}
\psfrag*{a}[c][c]{$\Longrightarrow\; |q_\tau-p_\tau|<|q'_\tau-p'_\tau|$}
\psfrag*{b}[c][c]{$\Longrightarrow\; |q_\tau-p_\tau|=|q'_\tau-p'_\tau|$}
\psfrag*{m}[b][b]{$\wl_m(m)$}
\psfrag*{k}[b][b]{$\wl_k(k)$}
\psfrag*{t3}[r][r]{$ \wl_m(k) \not\supseteq$}
\psfrag*{t2}[r][r]{$ \wl_k(k) \supseteq$}
\psfrag*{c}[c][c]{same events}
\includegraphics[keepaspectratio, width=0.8\textwidth]{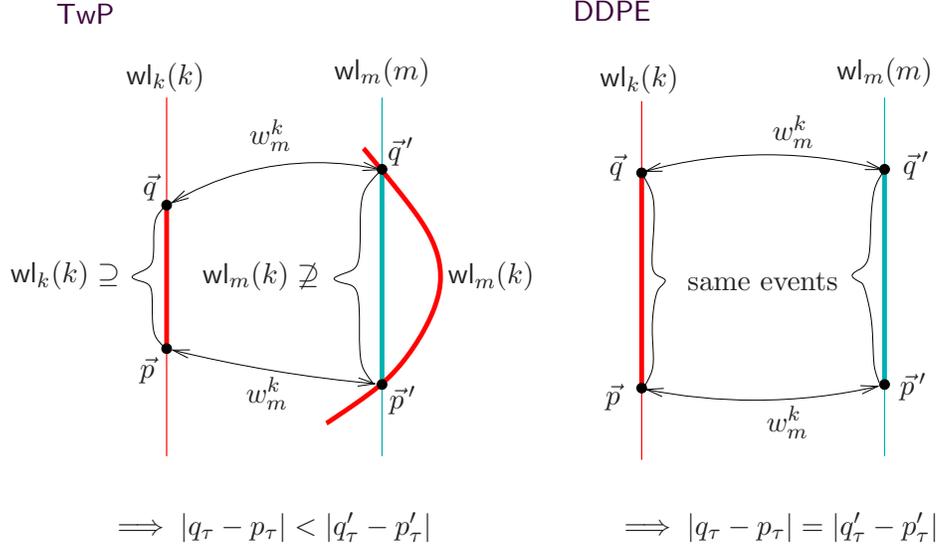}
\caption{\label{fig-twp} Illustration of \ax{TwP} and \ax{DDPE}}
\end{center}
\end{figure}

\begin{thm}\label{thmNoCONT}
For every Euclidean ordered field $\mathfrak{Q}$ not isomorphic to
$\R$, there is a model $\mathfrak{M}$ of $\ax{AccRel_0}$ such that the
quantity part of $\mathfrak{M}$ is $\mathfrak{Q}$ and
$\mathfrak{M}\not\models\ax{Twp}$; moreover,
$\mathfrak{M}\not\models\ax{DDPE}$.
\end{thm}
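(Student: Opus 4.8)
The plan is to exploit Rem.~\ref{Trrem} (equivalently Thm.~\ref{thm-wvcompf}), which tells us that every well-parametrized curve $f:\Q\parrow\Q^d$ arises as the life-curve $\lc^k_m$ of some observer $k$ with respect to an {\it inertial} observer $m$ in a suitable model of \ax{AccRel_0} over the prescribed field $\mathfrak{Q}$. So the whole task reduces to producing, over any Euclidean ordered field $\mathfrak{Q}\not\cong\R$, two well-parametrized curves that ``should'' measure the same proper time between two common endpoints but whose domains have different lengths — this simultaneously breaks \ax{DDPE} and \ax{TwP}. The key point that makes this possible is non-continuity of $\mathfrak{Q}$: since $\mathfrak{Q}$ is a Euclidean ordered field not isomorphic to $\R$, it fails the continuity schema (intermediate value / completeness), so it contains either infinitesimals or ``gaps.'' I would first record this: a Euclidean ordered field is isomorphic to $\R$ iff it is Archimedean and Dedekind-complete; failing that, there is a bounded definable (or even just bounded) subset with no supremum, or equivalently there are two ``adjacent'' cuts, and in particular the field has a gap $g$ in the sense that $\{x : x^2 < 2 + \varepsilon \text{ for all } \varepsilon\}$-style constructions can be separated, or more simply $\mathfrak{Q}$ has elements whose existence in $\R$ is forced by continuity but which are absent here.

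The construction itself: I would take the {\it inertial} straight life-curve $f_1(t) = \langle t,0,\ldots,0\rangle$ on domain $[a,b]$ for suitable $a<b$ in $\mathfrak{Q}$, representing the stay-at-home twin, and a second well-parametrized timelike curve $f_2$ whose range is a ``bent'' path that leaves $\langle a,\vo\rangle$, goes out and comes back, meeting $f_1$ again at $\langle b,\vo\rangle$. In $\R$ the returning twin's proper-time domain would be strictly shorter, but the real game is to arrange that, because of the missing element, we can instead make $f_2$ have domain $[a,b]$ as well (or, for the failure of \ax{TwP}, a domain not shorter than $[a,b]$) yet still reach the same endpoint $\langle b,\vo\rangle$. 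Concretely, in a non-Archimedean field one can make the ``bend'' of $f_2$ infinitesimal in spatial extent so that the proper-time deficit is infinitesimal and then absorb it; in a field with a gap one chooses the turnaround point at the gap so that the curve that ``should'' close up with a shorter domain is not definable/not present, and a cruder curve that does close up has the wrong (non-smaller, or equal) domain. The cleanest route is probably: choose a well-parametrized $f_2:[a,b]\parrow\Q^d$ with $f_2(a)=\langle a,\vo\rangle$ whose endpoint $f_2(b)$ is forced to equal $\langle b,\vo\rangle$ only if a certain integral/limit exists in $\mathfrak{Q}$ — and it does not, because $\mathfrak{Q}\not\cong\R$. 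One then picks a nearby well-parametrized curve actually hitting $\langle b,\vo\rangle$ with the same domain $[a,b]$; its range is not the straight segment, so $\enc_k$ differs from $\enc_m$, yet $\time_k(e_1,e_2)=\time_m(e_1,e_2)=b-a$, contradicting the ``iff'' direction of \ax{TwP} and also contradicting \ax{DDPE} since the two observers do not encounter the same events although they measure equal time — wait, for \ax{DDPE} I need the complementary configuration: two curves encountering the very same events but measuring different times, so I would instead take two well-parametrized reparametrizations of the same broken line whose domains differ by a missing-limit amount.

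Accordingly I would split the proof into the two assertions. For $\mathfrak{M}\not\models\ax{DDPE}$: build two observers $k,h$ with $\wl_m(k)=\wl_m(h)$ (same broken-line range), hence $\enc_k(e_1,e_2)=\enc_h(e_1,e_2)$ for the obvious endpoints, but with $\dom\lc^k_m$ and $\dom\lc^h_m$ of different $\mathfrak{Q}$-lengths; this needs a well-parametrized curve tracing a fixed timelike polygonal path whose total Minkowski length is not an element of $\mathfrak{Q}$ realizable two different ways — again using the absence of a supremum. Then $\mathfrak{M}\not\models\ax{TwP}$ follows either from the same model or from a variant, because \ax{TwP} in particular implies \ax{DDPE}-like behavior for {\it inertial} $m$; in fact once \ax{DDPE} fails with one of the two observers {\it inertial}, \ax{TwP} fails too, and if neither is inertial I add a third, genuinely inertial, observer with the straight life-curve and compare. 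The main obstacle I expect is the bookkeeping of Minkowski arc-length in a non-Archimedean or incomplete field: I must verify that the candidate $f_2$ is genuinely well-parametrized (i.e.\ $\mu(f_2'(t))=1$ everywhere, using the FOL notion of derivative from Sec.~\ref{sec-diff}), that its range is really a connected timelike path meeting the straight curve at the two chosen events, and — the delicate part — that the discrepancy in domain lengths is forced precisely by $\mathfrak{Q}\not\cong\R$ rather than by an accidental choice; this is where I would invoke a clean characterization (Euclidean ordered field $\cong\R$ iff every nonempty bounded set has a least upper bound) and pick the turnaround data to sit exactly at a witnessing cut. Everything else is routine given Thm.~\ref{thm-wvcompf} and the well-parametrization examples in Example~\ref{example-wp}.
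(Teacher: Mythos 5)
There is a genuine gap: you have located the right cause (non-completeness of $\mathfrak{Q}$) and the right reduction (realize well-parametrized curves as life-curves via Thm.~\ref{thm-wvcompf}), but none of the concrete mechanisms you propose actually produces the required curves, and the one idea that does the work is missing. Your ``Minkowski length fails to exist in $\mathfrak{Q}$'' mechanism cannot work: for a polygonal timelike path with vertices in $\Q^d$ the Minkowski length is obtained from the field operations and square roots of positive elements, hence lies in the Euclidean field $\mathfrak{Q}$; and for a general well-parametrized curve the ``length'' is by definition the length of the parameter domain, so there is no integral whose non-existence you can exploit. Your non-Archimedean fallback fails twice over: an infinitesimal but nonzero proper-time deficit does not contradict \ax{TwP} (the axiom only asserts $\time_k\le\time_m$ with equality iff the encountered events coincide), and the theorem must also cover Archimedean fields not isomorphic to $\R$, e.g.\ the real algebraic numbers, where there are no infinitesimals to hide behind. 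Finally, arguing that the ``correct'' closing curve is \emph{absent} from the model proves nothing; you must exhibit a curve that \emph{is} present and violates the axiom.

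The missing idea is this: since $\mathfrak{Q}\not\cong\R$ there is a nonempty bounded set without a supremum, hence a partition of $\Q$ into finitely many disjoint nonempty \emph{open} convex pieces. Define a map $k:\Q^d\rightarrow\Q^d$ that acts as a \emph{different translation on each piece} (e.g.\ the identity on one piece and a shift by $-\vet$, $+\vet$ or $\vex$ on others). Such a map is locally a translation everywhere, so the single accelerated observer it defines has a translation as co-moving {\it inertial} observer at every point (\ax{AxCmv} holds trivially), its life-curve is well-parametrized with derivative $\vet$ everywhere, and every axiom of \ax{AccRel_0} is easily checked --- yet globally the map displaces time coordinates by a fixed nonzero amount, so the elapsed coordinate time and the elapsed parameter time disagree. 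This is exactly the failure, over a field with a gap, of ``derivative identically the identity implies the map is a translation'' (equivalently, of the Mean Value Theorem), and it is what simultaneously kills \ax{TwP} (one can even force $\time_m<\time_k$ for $m$ {\it inertial}, a strict violation of the inequality, not merely of the ``iff'' clause) and \ax{DDPE} (two observers traversing the same events with domains of different lengths). You sensed this difficulty when you flagged ``the delicate part'' at the end, but the proposal never supplies the construction that resolves it.
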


Thm.~\ref{thmNoCONT} is rather surprising since stationary {\it
  inertial} clocks are synchronized by \ax{SpecRel}, and \ax{AxCmv}
states that accelerated clocks locally behave like {\it inertial}
ones.  The proof of this theorem 
is at p.\pageref{thmNoCONT-proof}.

Thm.~\ref{thmNoCONT} also has strong consequences, it implies that to prove the
Twin Paradox or even DDPE, it does not suffice to add all the
FOL formulas valid in $\R$ to \ax{AccRel_0}. Let $Th(\R)$
denote the set of all FOL formulas valid in $\R$. The
following corollary formulates this strong consequence.

\begin{cor} 
\label{corNoCONT}
$Th(\R)+\ax{AccRel_0}\not\models \ax{TwP}$ 
and $Th(\R)+\ax{AccRel_0}\not\models \ax{DDPE}$.
\end{cor}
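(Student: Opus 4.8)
The plan is to derive Corollary~\ref{corNoCONT} directly from Theorem~\ref{thmNoCONT} via a standard compactness/Löwenheim--Skolem argument, exploiting the fact that $Th(\R)$ is a complete first-order theory that nonetheless admits non-Archimedean (non-standard) models. The key point is that $Th(\R)$ does \emph{not} pin down $\R$ up to isomorphism: by the Löwenheim--Skolem theorem there exist Euclidean ordered fields $\mathfrak{Q}$ with $\mathfrak{Q}\equiv\R$ (i.e.\ $\mathfrak{Q}\models Th(\R)$) but $\mathfrak{Q}\not\cong\R$ --- for instance any countable elementary substructure of $\R$, or any ultrapower of $\R$ by a nonprincipal ultrafilter on $\omega$, which is non-Archimedean.

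The argument then runs as follows. First I would fix such a field $\mathfrak{Q}$: a Euclidean ordered field satisfying every FOL sentence valid in $\R$ but not isomorphic to $\R$. (Its existence is guaranteed by Löwenheim--Skolem, and one should note that being a Euclidean ordered field is itself first-order axiomatizable, hence inherited from $\R$; this is needed so that $\mathfrak{Q}$ is a legitimate quantity part.) Next I would invoke Theorem~\ref{thmNoCONT} with this choice of $\mathfrak{Q}$: it yields a model $\mathfrak{M}$ of $\ax{AccRel_0}$ whose quantity part is $\mathfrak{Q}$ and in which $\ax{TwP}$ fails (and moreover $\ax{DDPE}$ fails). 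Finally I would observe that, since the quantity part of $\mathfrak{M}$ satisfies $Th(\R)$ and the sentences of $Th(\R)$ only speak about the ordered field reduct, $\mathfrak{M}$ is a model of $Th(\R)+\ax{AccRel_0}$. As $\ax{TwP}$ (resp.\ $\ax{DDPE}$) is false in $\mathfrak{M}$, we conclude $Th(\R)+\ax{AccRel_0}\not\models\ax{TwP}$ and $Th(\R)+\ax{AccRel_0}\not\models\ax{DDPE}$, which is exactly the claim.

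The only genuinely delicate point --- and the one I would spell out carefully --- is the claim that $\mathfrak{M}\models Th(\R)$. Strictly, $Th(\R)$ is the set of sentences in the \emph{pure ordered-field language} true in $\R$, while $\mathfrak{M}$ is a structure in the much richer spacetime language with sorts for bodies, observers, photons, etc. One must check that each $\varphi\in Th(\R)$, when read as a sentence of the spacetime language (with quantifiers relativised to the sort $\Q$, as is implicitly done throughout the thesis), holds in $\mathfrak{M}$; this is immediate because the quantity-part reduct of $\mathfrak{M}$ is elementarily equivalent to $\R$ by construction, and the relativised sentence $\varphi$ depends only on this reduct. Everything else is routine: there is no compactness step needed beyond the existence of a non-standard model of $Th(\R)$, and no construction of models beyond what Theorem~\ref{thmNoCONT} already supplies.

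Thus the corollary is essentially a one-line consequence of Theorem~\ref{thmNoCONT} together with the failure of categoricity of $Th(\R)$; the main conceptual content is that adding \emph{every} true first-order fact about the reals is still strictly weaker than the (inherently second-order) assumption ``$\Q$ is the field of real numbers,'' so the non-provability established for $\ax{AccRel_0}$ over arbitrary non-standard $\mathfrak{Q}$ cannot be repaired by the first-order theory $Th(\R)$.
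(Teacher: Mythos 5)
Your proof is correct and follows essentially the same route as the paper: pick a Euclidean ordered field elementarily equivalent to but not isomorphic to $\R$ (the paper uses the real algebraic numbers where you invoke L\"owenheim--Skolem), feed it to Thm.~\ref{thmNoCONT} to get a model of $\ax{AccRel_0}$ killing $\ax{TwP}$ and $\ax{DDPE}$, and note that this model satisfies $Th(\R)$ because its quantity part does. Your extra remark about relativising the sentences of $Th(\R)$ to the sort $\Q$ is a point the paper leaves implicit, but it does not change the argument.
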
 

\begin{proof}[\colorbox{proofbgcolor}{\textcolor{proofcolor}{Proof of Cor.~\ref{corNoCONT}}}]
 Let $\mathfrak{Q}$ be a field elementarily equivalent to $\R$, i.e.,
 all FOL formulas valid in $\R$ are valid in $\mathfrak{Q}$,
 too. Assume that $\mathfrak{Q}$ is not isomorphic to $\R$. For
 example, the field of the real algebraic numbers is such.  Let
 $\mathfrak{M}$ be a model of $\ax{AccRel_0}$ with quantity part
 $\mathfrak{Q}$ in which neither $\ax{TwP}$ nor $\ax{DDPE}$ is true.
 Such an $\mathfrak{M}$ exists by Thm.~\ref{thmNoCONT}. That shows that
 $Th(\R)+ \ax{AccRel_0}\not\models \ax{TwP}\lor \ax{DDPE}$ since
 $\mathfrak{M}\models Th(\R)$ by assumption.
\end{proof}

An ordered field is called {\bf non-Archimedean} if it has an element $a$ such
that, for every
positive integer $n$, 
\begin{equation*}-1<\underbrace{a+\ldots+a}_n<1.
\end{equation*}
We call these elements {\bf infinitesimally small}. These are not
FOL definable concepts in our language; however, that is not a
problem since we will not use them in formulas.

The following theorem says that, for countable or non-Archimedean
Euclidean ordered fields, there are quite
sophisticated models of \ax{AccRel_0} in which \ax{TwP} and \ax{DDPE}
are false.

\begin{thm}\label{thmMO}
For every Euclidean ordered field $\mathfrak{Q}$ which is
non-Archimedean or countable, there is a model $\mathfrak{M}$ of
$\ax{AccRel_0}$ such that $\mathfrak{M}\not\models\ax{TwP}$,
$\mathfrak{M}\not\models\ax{DDPE}$, the quantity part of
$\mathfrak{M}$ is $\mathfrak{Q}$ and (i)--(iv) below also hold in
$\mathfrak{M}$.
\begin{itemize}
\item[(i)] Every observer uses the whole coordinate system as
coordinate-domain:
\begin{equation*}
\forall m \in \Ob\quad Cd_m=\Q^d.
\end{equation*}
\item[(ii)] At any point in $\Q^d$, there is a co-moving {\it inertial} observer of any
observer:
\begin{equation*}
\forall k \in \Ob \enskip \forall q \in \Q^d\; \exists m \in \IOb\quad m\com_q k.
\end{equation*}
\item[(iii)] All observers coordinatize the same set of events:
\begin{equation*}
\forall m,k\in \Ob\enskip \forall \vp\in \Q^d\;\exists \vq\in \Q^d\quad ev_{m}(\vpp)=ev_{k}(\vqq).
\end{equation*}
\item[(iv)] Every observer coordinatizes every event only once:
\begin{equation*}
\forall m\in \Ob\enskip \forall \vp,\vq\in \Q^d\quad ev_m(\vpp)=ev_m(\vqq)\then p=q.
\end{equation*}
\end{itemize}
\end{thm}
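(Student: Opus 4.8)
The plan is to build, for any countable or non-Archimedean Euclidean ordered field $\mathfrak{Q}$, a model $\mathfrak{M}$ of $\ax{AccRel_0}$ with quantity part $\mathfrak{Q}$ that also satisfies (i)--(iv) and falsifies $\ax{TwP}$ and $\ax{DDPE}$. The basic idea is to use Thm.~\ref{thm-wvcompf}: it suffices to exhibit a single worldview-compatible function $f:\Q^d\parrow\Q^d$ whose domain is all of $\Q^d$ and which fails to preserve proper time, i.e., such that the life-curve $\lc^k_m=\iota\circ w^k_m$ of the accelerated observer $k$ it defines is not isometric to a straight segment of the time-axis — then $k$ and an {\it inertial} observer sharing its endpoints witness the failure of $\ax{TwP}$ and $\ax{DDPE}$. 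However, the model produced by the proof of Thm.~\ref{thm-wvcompf} does \emph{not} automatically satisfy (i)--(iv): that model has exactly one accelerated observer, whose coordinate domain is a subset of the time-axis, so (i) fails badly, and (ii) fails at points not on $\wl_m(k)$. So the real work is to \emph{redo} the construction of Thm.~\ref{thm-wvcompf} with enough extra observers and enough care that (i)--(iv) hold, while still carrying a proper-time-distorting accelerated clock inside it.

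First I would fix the geometric ingredient. Since $\mathfrak{Q}$ is non-Archimedean or countable, one can find a timelike curve $\gamma:[a,b]\parrow\txPlane$, differentiable, with $\gamma(a)=\vo$ and $\gamma(b)$ on the positive time-axis, which is \emph{not} well-parametrized on any reparametrization matching proper time to coordinate time — more concretely, I would arrange the Minkowski length of $\gamma$ from $a$ to $b$ to be strictly less than $b-a$; this is possible precisely because such fields admit curves that are ``timelike but slow'' without the field being rich enough to force the expected length (this is exactly the phenomenon isolated in Thm.~\ref{thmNoCONT}, whose proof on p.\pageref{thmNoCONT-proof} I would quote for the curve itself). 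The accelerated observer $k$ will have this $\gamma$ as its life-curve according to the distinguished {\it inertial} observer $m$, so $k$ ages $\ell<b-a$ while $m$ ages $b-a$ between the meeting events $ev_m(\vo)$ and $ev_m(\gamma(b))$.

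Next I would assemble the model. Let the photons be all slope-$1$ lines, as in Thm.~\ref{thm-wvcompf}. For the {\it inertial} observers, instead of only the co-movers along $\wl_m(k)$, I would take a co-moving {\it inertial} observer at \emph{every} coordinate point of $\Q^d$ and in \emph{every} admissible Lorentz frame — i.e., index {\it inertial} observers by pairs $\langle\vr,\Lambda\rangle$ with $\vr\in\Q^d$ and $\Lambda$ a Lorentz transformation, with $w^m_{m_{\vr,\Lambda}}(\vpp)\leteq\Lambda^{-1}(\vp-\vr')+\vr$ for a suitable translation making the frames consistent; and I would give each such observer the \emph{full} coordinate domain $\Q^d$ (the construction of $ev_m$ in Thm.~\ref{thm-wvcompf} already gives $m$ a full domain because every photon and every {\it inertial} worldline meets every spatial slice, so this is just a matter of populating every event). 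For the accelerated observer $k$, I would likewise extend its coordinate domain from the bare time-axis to all of $\Q^d$ by letting $k$ coordinatize, near each parameter $t\in[a,b]$, exactly what its co-moving {\it inertial} observer at $\gamma(t)$ coordinatizes in a neighborhood — formally, define $w^k_m$ on a tube around $\gamma([a,b])$ by $w^k_m(\vpp)\leteq\gamma(t)+d_t\gamma$ applied appropriately, extended (not necessarily injectively, which is fine by Conv.~\ref{conv-wkm} and Example~\ref{example-not}) to all of $\Q^d$; the only constraint is that $w^k_m$ remain worldview-compatible, i.e., differentiable with a Lorentz derivative along the time-axis, which holds by construction since the derivative there is $d_t\gamma$, a unit timelike vector generating the co-moving Lorentz frame. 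With full domains everywhere, (i) and (ii) are immediate; (iii) follows because all observers are defined via worldview transformations from the single observer $m$ whose event function is surjective onto a fixed event set; and (iv) I would secure by taking $\gamma$ and all the {\it inertial} frames to be genuinely injective (the {\it inertial} ones are affine bijections, and $\gamma$ can be chosen injective), so no observer sees an event twice.

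The main obstacle I expect is \emph{simultaneously} getting (iv) (every observer coordinatizes every event exactly once) and keeping the accelerated observer's life-curve short: a short timelike curve $\gamma$ tends to ``fold back'' in a way that makes the induced tube-coordinatization of $k$ non-injective, just as in Example~\ref{example-not}. Reconciling these forces the extension of $w^k_m$ off the time-axis to be chosen globally and injectively, which is a genuine geometric constraint rather than a formality; I would handle it by first choosing $\gamma$ injective and slow, then defining $k$'s coordinatization not by a naive tube but by a globally injective diffeomorphism-like map agreeing with the co-moving data to first order along $\gamma$ — here the countability or non-Archimedean hypothesis is used again, since it is exactly what lets the ``slow'' behavior persist globally without the field forcing a contradiction. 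Once that injective global extension is in hand, checking that $\ax{AxSelf_0}$, $\ax{AxSelf^+_0}$, $\ax{AxPh}$, $\ax{AxEv}$, $\ax{AxEvTr}$, $\ax{AxSymDist}$ and $\ax{AxCmv}$ all hold is routine and parallels the verification in the proof of Thm.~\ref{thm-wvcompf}, and the failure of $\ax{TwP}$ and $\ax{DDPE}$ is witnessed by $m$, $k$ and the events $ev_m(\vo)$, $ev_m(\gamma(b))$ together with $\time_k<\time_m$ between them while $\enc_m=\enc_k$ (the latter because, with full domains and (iii), both encounter the same events along the shared worldline segment).
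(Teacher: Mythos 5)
There is a genuine gap, and it sits exactly where you wave your hands. First, your witness for the failure of \ax{TwP} and \ax{DDPE} is internally inconsistent: you take a non-vertical ``slow'' life-curve $\gamma$ with $\time_k<\time_m$ and then claim $\enc_m=\enc_k$ ``because of full domains and (iii).'' But $\enc_k(e_1,e_2)$ consists of the events containing $k$, and if $\gamma$ leaves the time-axis then $k$ participates in events that $m$ does not (by \ax{AxSelf_0}, $m$ sits on the time-axis), so $\enc_k\neq\enc_m$ regardless of coordinate domains. With $\time_k<\time_m$ and $\enc_k\neq\enc_m$ you have falsified neither the inequality clause nor the iff-clause of \ax{TwP}, and not \ax{DDPE} either. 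You must choose: either the curve's range \emph{is} the time-axis (so $\enc_k=\enc_m$ and any time discrepancy kills \ax{DDPE} and the iff-clause of \ax{TwP}), in which case the discrepancy cannot come from ``slowness'' in any velocity sense but only from the global structure of the field; or the curve genuinely bulges, in which case you need $\time_k>\time_m$ to violate the inequality. Second, the ``globally injective diffeomorphism-like map agreeing with the co-moving data to first order along $\gamma$'' is not a detail to be handled later --- it is the entire content of the theorem, since (i)--(iv) and \ax{AxCmv} (and (ii), which demands a co-moving \textit{inertial} observer at \emph{every} point of $\Q^d$, not just along the worldline) all hinge on it, and you never construct it.

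The paper's proof resolves both issues with one device you do not use: a \emph{translation-like bijection} $k:\Q^d\rightarrow\Q^d$, i.e.\ a map that is locally a translation in a neighborhood of every point of $\Q^d$ but is globally not a translation because it displaces coordinates across a ``hole'' of the field. Taking $\IOb$ to be \emph{all} translations $m_{\vp}$ and setting $w^k_{m_{\vo}}=k$ makes every worldview transformation a total bijection of $\Q^d$ (hence (i), (iii), (iv)), makes $m_{h(\vqq)-\vq}$ a co-mover of $h$ at every $\vq$ (hence (ii) and \ax{AxCmv}), and keeps $\wl_{m_\vo}(k)$ equal to the time-axis so that $\enc_m=\enc_k$ while the elapsed times differ. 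The non-Archimedean and countable hypotheses enter precisely in building such a bijection: choosing representatives $\tilde{a}$ of the infinitesimal-equivalence classes and setting $k(\vpp)=\langle p_\tau+\tilde{p}_\tau,\vp_\sigma\rangle$ in the non-Archimedean case, and deleting total length $1$ from an enumerated cover of $[a,a+2]\cap\Q$ (for $a\in\R\setminus\Q$) in the countable case. Note also that the simpler two-interval map from the proof of Thm.~\ref{thmNoCONT} that you propose to quote is \emph{not} injective (points $x\in I_1$ with $x+1\in I_2$ collide), which is exactly why the theorem you are proving needs these two sharper constructions.
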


\begin{proof}[\colorbox{proofbgcolor}{\textcolor{proofcolor}{Proofs of Thms.\ \ref{thmNoCONT} and \ref{thmMO}}}]
\label{thmNoCONT-proof}
\label{thmMO-proof}
We construct four models. Before the constructions let us
introduce a definition. For every $\vp\in \Q^d$, let
$m_{\vp}:\Q^d\rightarrow \Q^d$ denote the translation by vector $\vp$,
i.e., $m_{\vp}: \vq\mapsto \vq+\vp$. Function $f:\Q^d\rightarrow
\Q^d$ is called \df{translation-like}\index{translation-like} iff for
all $\vq\in\Q^d$, there is a $\delta \in \Q^+$ such that
$f(\vpp)=m_{f(\vqq)-\vqq}(\vpp)$ for all $\vp\in B_\delta(\vqq)$, and 
$f(\vpp)=f(\vqq)$ and $\vp_\sigma=\vo$ imply that $\vq_\sigma=\vo$ for
all $\vp,\vq\in \Q^d$.

Let $\mathfrak{Q}=\left<\Q;+,\cdot,< \right>$ be an Euclidean ordered
field and let $k:\Q^d\rightarrow \Q^d$ be a translation-like
map. First we construct a model $\mathfrak{M}_{(\mathfrak{Q},k)}$ of
\ax{AccRel_0} and (i) and (ii) of Thm.~\ref{thmMO},
 which will be a model of (iii) and (iv) of Thm.~\ref{thmMO} if $k$ is
 a bijection. Then we choose $\mathfrak{Q}$ and $k$ appropriately to
 get the desired models in which \ax{DDPE} and \ax{TwP} are false.

Let us now construct the model $\mathfrak{M}_{(\mathfrak{Q},k)}$. Let
$\IOb\leteq \{m_{\vp}:\vp\in \Q^d\}$, $\Ob\leteq \IOb\cup\{k\}$,
$\Ph\leteq \{l: \exists\vp,\vq\in\Q^d \enskip l=line(\vp,\vqq) \lland
|\vp_\sigma-\vq_\sigma|=|p_\tau-q_\tau|\}$, and $\B\leteq \Ob\cup
\Ph$. Recall that $\vo$ is the origin, i.e., $\langle
0,\ldots,0\rangle$. First we give the worldview of $m_{\vo}$, then we
give the worldview of an arbitrary observer $h$ by giving the
worldview transformation between $h$ and $m_{\vo}$. Let $
wl_{m_{\vo}}(ph)\leteq ph$ and $wl_{m_{\vo}}(h)\leteq \{h(\vxx):
\vx_\sigma=\vo\,\}$ for all $ph\in \Ph$ and $h\in \Ob$. And let
$ev_{m_{\vo}}(\vpp)\leteq \{b\in \B: \vp\in wl_{m_{\vo}}(b)\}$ for all
$\vp\in \Q^d$. Let $w^h_{m_{\vo}}\leteq h$ for all $h\in \Ob$. From
these worldview transformations, we can obtain the worldview of each
observer $h$ in the following way: $ev_h(\vpp)\leteq
ev_{m_{\vo}}\big(h(\vpp)\big)$ for all $\vp\in \Q^d$. And from the
worldviews, we can obtain the $\W$ relation as follows: for all $h\in
\Ob$, $b\in \B$ and $\vp\in \Q^d$, let $\W(h,b,\vpp)$ iff $b\in
ev_h(\vpp)$. Thus we have given the model
$\mathfrak{M}_{(\mathfrak{Q},k)}$. Let us note that $w^m_h=m\circ
h^{-1}$ and $m_{h(\vqq)-\vq}\com_{\vq} h$ for all $m,h\in \Ob$ and
$\vq\in \Q^d$. It is easy to check that the axioms of \ax{AccRel_0}
and (i) and (ii) of Thm.~\ref{thmMO} are true in
$\mathfrak{M}_{(\mathfrak{Q},k)}$ and that (iii) and (iv) of
Thm.~\ref{thmMO} are also true in $\mathfrak{M}_{(\mathfrak{Q},k)}$ if
$k$ is a bijection.

To construct the first model, we choose $\mathfrak{Q}$
and $k$ such that \ax{TwP} falls in $\mathfrak{M}_{(\mathfrak{Q},k)}$.
Let $\mathfrak{Q}$ be an Euclidean ordered field different from $\R$.
To define $k$ let $\{I_1, I_2, I_3, I_4, I_5\}$ be a partition%
\footnote{ i.e., $I_i$'s are disjoint and $\Q=I_1\cup I_2 \cup I_3
  \cup I_4 \cup I_5$.}  of $\Q$ such that every $I_i$ is open, $x\in
I_2 \iff x+1\in I_3 \iff x+2 \in I_4$, and for all $y\in I_i$ and
$z\in I_j$, $y\leq z \iff i\leq j$.  Such a partition can be easily
constructed.%
\footnote{ Let $H\subset \Q$ be a nonempty bounded set that does not
  have a supremum. Let $I_1\leteq \{x\in \Q: \exists h \in H \quad
  x<h\}$, $I_2\leteq \{x+1\in \Q: x\in I_1\}\setminus I_1$, $I_3\leteq
  \{x+1\in \Q: x\in I_2\}$, $I_4\leteq \{x+1\in \Q: x\in I_3\}$ and
  $I_5\leteq \Q\setminus(I_1\cup I_2 \cup I_3 \cup I_4)$.} Let
\begin{equation*}
k(\vpp)\leteq \left\{
\begin{array}{cll}
 \vp & \text{ if } & p_\tau\in I_1\cup I_5 , \\
\vp-\vet & \text{ if } & p_\tau\in I_4 , \\
\vp+\vet & \text{ if } & p_\tau\in I_3 , \\
\vp+\vex & \text{ if } & p_\tau \in I_2
 \end{array}
\right.
\end{equation*}
for every $\vp\in \Q^d$, see Fig.~\ref{fig-notwp}.
\begin{figure}
\small
\begin{center}
\psfrag*{p'}[r][r]{$k(\vpp)$} \psfrag*{q'}[r][r]{$k(\vqq)$}
\psfrag*{q}[r][r]{$\vq$} \psfrag*{p}[r][r]{$\vp$}
\psfrag*{text1}[b][b]{worldview of $m$}
\psfrag*{text2}[b][b]{worldview of $k$} \psfrag*{I0}[l][l]{$I_1$}
\psfrag*{I1}[l][l]{$I_2$} \psfrag*{I2}[l][l]{$I_3$}
\psfrag*{I3}[l][l]{$I_4$} \psfrag*{I4}[l][l]{$I_5$}
\psfrag*{tr}[r][r]{$wl_{k}(k)$} \psfrag*{tr1}[r][r]{$wl_m(k)$}
\psfrag*{k}[b][b]{$k$} \psfrag*{f}[t][t]{$w^{k}_m$}
\psfrag*{ff}[t][t]{$w^m_{k}$} \psfrag*{text3}[t][t]{\shortstack[c]{
  first model, \\$\mathfrak{Q}\neq\R$ is
  Euclidean,\\ \ax{TwP} is
  false:\\ $|p'_\tau-q'_\tau| < |p_\tau-q_\tau| $}}
\includegraphics[keepaspectratio, width=0.8\textwidth]{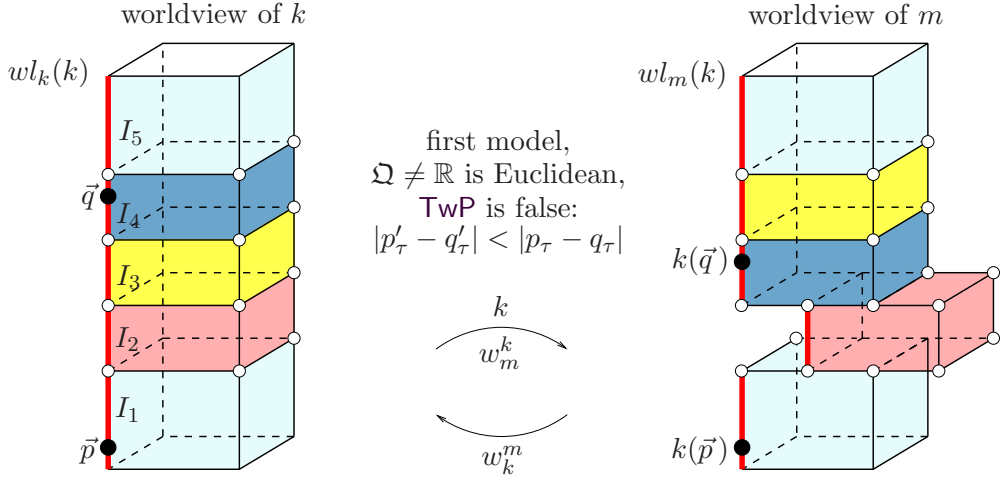}
\caption{\label{fig-notwp}Illustration for the proofs of Thms.\ \ref{thmNoCONT} and \ref{thmMO}}
\end{center}
\end{figure}
It is easy to see that $k$ is a translation-like bijection. Let
$\vp,\vq\in\Q^d$ be coordinate points such that
$\vp_\sigma=\vq_\sigma=\vo$ and $p_\tau\in I_1$, $q_\tau\in I_4$; and
let $m\leteq m_{\vo}$, $e_1\leteq ev_k(\vpp)$, $e_2\leteq ev_k(\vqq)$.
It is easy to see that \ax{TwP} is false in
$\mathfrak{M}_{(\mathfrak{Q},k)}$ for $k$, $m$, and $e_1$, $e_2$ since
\begin{equation*}
\time_m(e_1,e_2)=|k(\vpp)_\tau-k(\vqq)_\tau| < |p_\tau - q_\tau|=\time_k(e_1,e_2),
\end{equation*}
see Fig.~\ref{fig-notwp}.

\begin{figure}
\begin{center}
\small
\psfrag*{text2}[t][t]{worldview of $m$}
\psfrag*{text1}[t][t]{worldview of $k$}
\psfrag*{p'}[r][r]{$k(\vpp)$}
\psfrag*{q'}[r][r]{$k(\vqq)$}
\psfrag*{q}[r][r]{$\vq$}
\psfrag*{p}[r][r]{$\vp$}
\psfrag*{P}[rt][rt]{$\vp$}
\psfrag*{Q}[lb][lb]{$\vq$}
\psfrag*{P'}[rt][rt]{$k(\vpp)$}
\psfrag*{Q'}[rb][rb]{$k(\vqq)$}
\psfrag*{tr}[r][r]{$wl_k(k)$}
\psfrag*{tr1}[r][r]{$wl_m(k)$}
\psfrag*{k}[b][b]{$k$}
\psfrag*{f}[t][t]{$w^k_m$}
\psfrag*{ff}[t][t]{$w^m_k$}
\psfrag*{text6}[t][t]{\shortstack[c]{second model, \\$\mathfrak{Q}\neq\R$ is
Euclidean,\\
\ax{DDPE} is false:\\
$|p_\tau-q_\tau|\neq|k(\vpp)_\tau-k(\vqq)_\tau|$}}
\psfrag*{text4}[t][t]{\shortstack[c]{third model,\\ $\mathfrak{Q}$ is
non-Archimedean,\\ \ax{DDPE} is false:\\
$|p_\tau-q_\tau|\neq |k(\vpp)_\tau-k(\vqq')_\tau|$}}
\psfrag*{text5}[t][t]{\shortstack{fourth model,\\ $\mathfrak{Q}$ is countable
Archimedean,\\ \ax{DDPE} is false:\\
$|p_\tau-q_\tau|\neq |k(\vpp)_\tau-k(\vqq')_\tau|$}}
\psfrag*{tr2}[l][l]{$wl_m(m)$}
\psfrag*{tr0}[r][r]{$wl_m(m)$}
\psfrag*{a+1}[r][r]{$a+1$}
\psfrag*{a+2}[r][r]{$a+2$}
\psfrag*{a}[r][r]{$a$}
\psfrag*{1t}[l][l]{$\widehat{1}$}
\psfrag*{o}[l][l]{$o$}
\psfrag*{I1}[l][l]{$I_1$}
\psfrag*{I2}[l][l]{$I_2$}
\includegraphics[keepaspectratio, width=\textwidth]{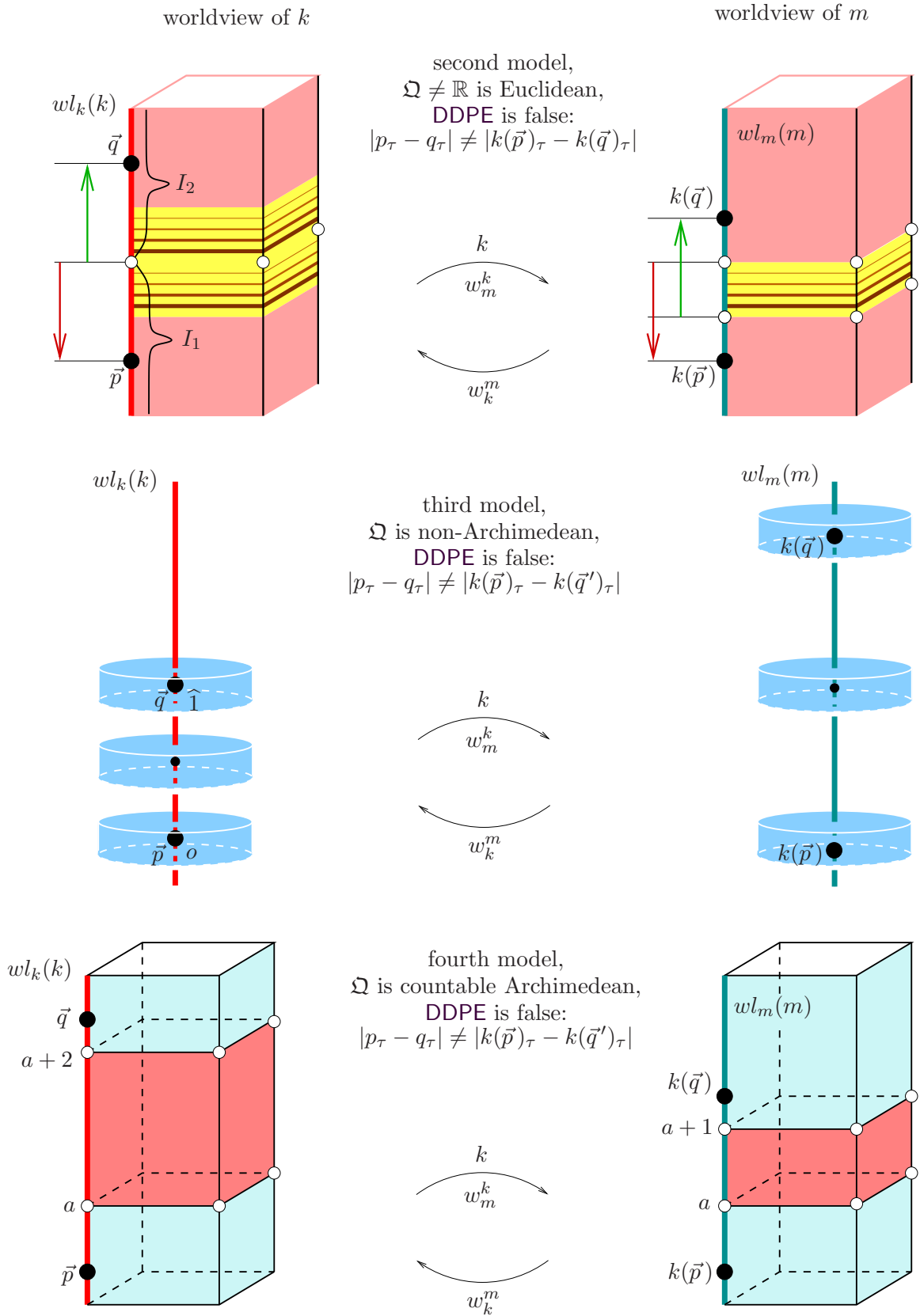}
\caption{\label{fig-noDDPE} Illustration for the proofs of Thms.\ \ref{thmNoCONT} and \ref{thmMO}}
\end{center}
\end{figure}

To construct the second model, let $\mathfrak{Q}$ be an arbitrary
Euclidean ordered field different from $\R$ and let $\{I_1, I_2\}$ be
a partition of $\Q$ such that $x<y$ for all $x\in I_1$ and $y\in I_2$.
Let
\begin{equation*}
k(\vpp)\leteq \left\{
\begin{array}{cll}
 \vp & \text{ if } & p_\tau\in I_1 , \\
\vp-\vet & \text{ if } & p_\tau\in I_2
\end{array}
\right.
\end{equation*}
for every $\vp\in \Q^d$, see Fig.~\ref{fig-noDDPE}. It is easy to see
that $k$ is translation-like. Let $\vp,\vq\in\Q^d$ such that
$\vp_\sigma=\vq_\sigma=\vo$; $p_\tau,p_\tau+1\in I_1$; and
$q_\tau,q_\tau-1\in I_2$. And let $m\leteq m_{\vo}$, $e_1\leteq
ev_k(\vpp)$, $e_2\leteq ev_k(\vqq)$. It is also easy to see that
\ax{DDPE} is false in $\mathfrak{M}_{(\mathfrak{Q},k)}$ for $k$, $m$
and $e_1$, $e_2$ since $m$ and $k$ encounter the very same events
between $e_1$ and $e_2$, however,
\begin{equation*}
\time_k(e_1,e_2)=|p_\tau - q_\tau|\neq |k(\vpp)_\tau-k(\vqq)_\tau|=\time_m(e_1,e_2),
\end{equation*}
see Fig.~\ref{fig-noDDPE}.  This completes the proof of
Thm.~\ref{thmNoCONT}.

To construct the third model, let $\mathfrak{Q}$ be an arbitrary
non-Archimedean, Euclidean ordered field. Let $a\sim b$ denote that
$a,b\in \Q$ and $a-b$ is infinitesimally small. It is not difficult to
see that $\sim$ is an equivalence relation. Let us choose an element
from every equivalence class of $\sim$; and let the chosen element
equivalent to $a\in \Q$ be denoted by $\tilde{a}$. Let $k(\vpp)\leteq
\langle p_\tau+\tilde{p}_\tau,\vp_\sigma\rangle$ for every $\vp\in
\Q^d$, see Fig.~\ref{fig-noDDPE}. It is easy to see that $k$ is a
translation-like bijection. Let $p\leteq \vo$, $q\leteq \vet$,
$k(\vpp)=\langle \tilde{0},0,\ldots,0\rangle$, $k(\vqq)=\langle
1+\tilde{1},0,\dots,0\rangle$. And let $m\leteq m_{\vo}$, $e_1\leteq
ev_k(\vpp)$, $e_2\leteq ev_k(\vqq)$. It is also easy to check that
\ax{DDPE} is false in $\mathfrak{M}_{(\mathfrak{Q},k)}$ for $k$, $m$
and $e_1$, $e_2$ since $m$ and $k$ encounter the very same events
between $e_1$ and $e_2$, however,
\begin{equation*}
\time_k(e_1,e_2)=|p_\tau - q_\tau|\neq |k(\vpp)_\tau-k(\vqq)_\tau|=\time_m(e_1,e_2),
\end{equation*}
 see Fig.~\ref{fig-noDDPE}.

To construct the fourth model, let $\mathfrak{Q}$ be an arbitrary
countable Archimedean Euclidean ordered field and let $k(\vpp)=\langle
f(p_\tau),\vp_\sigma\rangle$ for every $\vp\in \Q^d$ where
$f:\Q\rightarrow \Q$ is constructed as follows, see Figs.\ 
\ref{fig-noDDPE} and \ref{fig-arch}. We can assume that $\mathfrak{Q}$
is a subfield of $\R$ by \cite[Thm.1 in \S VIII]{Fuchs}. Let $a$ be a
real number that is not an element of $\Q$. Let us enumerate the
elements of $[a,a+2]\cap \Q$ and denote the $i$-th element by $r_i$.
First we cover $[a,a+2]\cap \Q$ with infinitely many disjoint
subintervals of $[a,a+2]$ such that the sum of their lengths is $1$,
the length of each interval is in $\Q$ and the distance of the left
endpoint of each interval from $a$ is also in $\Q$. We construct this
covering by recursion. In the $i$-th step, we will use only finitely
many new intervals such that the sum of their lengths is $1/{2^i}$. In
the first step, we cover $r_1$ with an interval of length $1/2$. Let
us suppose that we have covered $r_i$ for each $i<n$.  Since we have
used only finitely many intervals so far, we can cover $r_n$ with an
interval that is not longer than $1/{2^n}$. Since
$\sum_{i=1}^{n}1/2^i<1$, it is not difficult to see that we can choose finitely
many other subintervals of $[a,a+2]$ to be added to this interval such
that the sum of their lengths is $1/{2^n}$. We are given the covering
of $[a,a+2]$. Let us enumerate these intervals. Let $I_i$ be the
$i$-th interval, $d_i$ be the length of $I_i$, $d_0\leteq 0$ and
$a_i\geq 0$ the distance of $a$ and the left endpoint of $I_i$.
$\sum_{i=1}^{\infty}d_i=1$ since $\sum_{i=1}^{\infty}{1}/{2^i}=1$.
Let
\begin{equation*}
f(x)\leteq \left\{
\begin{array}{lll}
x & \text{ if } & x<a , \\
x-1 & \text{ if } & a+2\le x,\\
x-a_n+\sum\limits_{i=0}^{n-1}d_i & \text{ if } & x\in I_n
\end{array}
\right.
\end{equation*}
for all $x\in \Q$, see Fig.~\ref{fig-arch}.
It is easy to see that $k$ is a translation-like bijection.
Let $\vp,\vq\in \Q^d$ such that $p_\tau<a$ and $a+2<q_\tau$; and let $m\leteq m_{\vo}$, $e_1\leteq ev_k(\vpp)$, $e_2\leteq ev_k(\vqq)$. 
It is also easy to check that \ax{DDPE} is false in $\mathfrak{M}_{(\mathfrak{Q},k)}$ for $k$, $m$
 and $e_1$, $e_2$ since $m$ and $k$ encounters the very same same events between $e_1$ and $e_2$, however,
\begin{equation*}
\time_k(e_1,e_2)=|p_\tau - q_\tau|\neq |k(\vpp)_\tau-k(\vqq)_\tau|=\time_m(e_1,e_2),
\end{equation*}
see Fig.~\ref{fig-noDDPE}.

\begin{figure}[h!btp]
\begin{center}
\small
\psfrag*{a}[t][t]{$a$}
\psfrag*{r1}[t][t]{$r_1$}
\psfrag*{r2}[t][t]{$r_2$}
\psfrag*{r3}[t][t]{$r_3$}
\psfrag*{1}[r][r]{$1$}
\psfrag*{v}[l][l]{$\vdots$}
\psfrag*{12}[l][l]{$\frac{1}{2}$}
\psfrag*{14}[l][l]{$\frac{1}{4}$}
\psfrag*{18}[l][l]{$\frac{1}{8}$}
\psfrag*{etc}[rb][rb]{etc.}
\psfrag*{a+2}[t][t]{$a+2$}
\psfrag*{a1}[t][t]{$a_1$}
\psfrag*{I1}[b][b]{$I_1$}
\psfrag*{I2}[b][b]{$I_2$}
\includegraphics[keepaspectratio, width=0.7\textwidth]{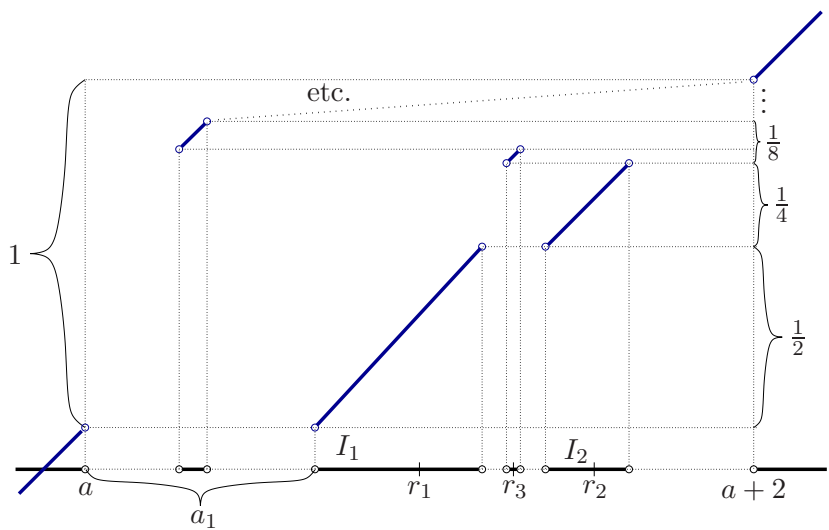}
\caption{\label{fig-arch}Illustration for the proofs of Thms.\ \ref{thmNoCONT} and \ref{thmMO}.}
\end{center}
\end{figure}

\end{proof}

\section{Axiom schema of continuity}\label{sec-cont}
As it was proved in Section \ref{sec-notwp}, \ax{AccRel_0} is not
strong enough to prove properties of accelerated clocks, such as the
twin paradox or even DDPE. The additional property we need is that
every bounded nonempty subset of the quantity part has a
supremum. That is a second-order logic property (because it concerns
all subsets) which we cannot use in a FOL axiom
system. Instead, we will use a kind of ``induction'' axiom schema. It
will state that every nonempty, bounded subset of the quantity part
which can be defined by a FOL formula (using possibly the
extra part of the model, e.g., using the worldview relation)
has a supremum. To formulate this FOL axiom schema, we need
some more definitions.

If $\varphi$ is a formula and $x$ is a variable, then we say that $x$
is a \df{free variable} \label{free variable} of $\varphi$ iff $x$
does not occur under the scope of either $\exists x$ or $\forall x$.
Sometimes we introduce a formula $\varphi$ as $\varphi(\vx\,)$, which
means that all the free variables of $\varphi$ lie in $\vx$.

If $\varphi(x,y)$ is a formula and $\mathfrak{M}=\langle
U;\ldots\rangle$ is a model, then whether $\varphi$ is true or false
in $\mathfrak{M}$ depends on how we associate elements of $U$ with the
free variables $x$ and $y$. When we associate $a\in U$ with $x$ and $b\in U$
with $y$, $\varphi(a,b)$ denotes this truth-value; so $\varphi(a,b)$
is either true or false in $\mathfrak{M}$. For example, if $\varphi$
is $x< y$, then $\varphi(0,1)$ is true while $\varphi(1,0)$ is
false in any ordered field. A formula $\varphi$ is said to be
\df{true}\index{true} in $\mathfrak{M}$ if $\varphi$ is true in
$\mathfrak{M}$ no matter how we associate elements with the free
variables. We say that a \df{subset $H$ of $\Q$ is} (parametrically)
\df{definable by} $\varphi(y,\vx\,)$ iff there is an $\va\in U^n$ such
that $H=\setopen b\in \Q\: :\: \varphi(b,\va\,)\text{ is true in
}\mathfrak{M}\setclose$. We say that a subset of $\Q$ is
\df{definable}\index{definable subset} iff it is definable by a
FOL formula.

Now we formulate the promised axiom schema. To do so, let $\phi(x,\vy\,)$ be a FOL formula of our language.
\begin{description}
\item[\Ax{AxSup_\phi}]\index{\ax{AxSup_\phi}} Every subset of $\Q$
 definable by $\phi(x,\vy\,)$ has a supremum if it is nonempty and
 \df{bounded}.
\end{description}
\noindent A FOL formula expressing \ax{AxSup_\phi} can be
found in Chap.~\ref{chp-a}. Our axiom schema \ax{CONT} below says
that every nonempty bounded subset of $\Q$ that is definable in our
language has a supremum:
\begin{equation*}\index{\ax{CONT}} 
\Ax{CONT} \leteq \Setopen \ax{AxSup_\varphi}\setmid
\varphi \text{ is a FOL formula of our language} \Setclose.
\end{equation*}
Let us note that \ax{CONT} is true in any model whose quantity part is
$\R$.
And let us call the collection of the axioms introduced so far \ax{AccRel}:
\begin{equation*}\index{\ax{AccRel}}
\boxed{ \ax{AccRel}\leteq \ax{AccRel_0}\cup\ax{CONT}}
\end{equation*}

Our next theorem states that \ax{DDPE} can be proved from our
FOL axiom system \ax{AccRel} if $d\ge 3$.

\begin{thm} \label{thmEq}
$\ax{AccRel} \models \ax{DDPE}$ if $d\ge 3$.
\end{thm}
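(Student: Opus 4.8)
The plan is to prove $\ax{AccRel}\models\ax{DDPE}$ by reducing the statement about two arbitrary observers with a common world-segment to a statement about a single well-parametrized timelike curve, and then using the continuity schema \ax{CONT} to show such a curve has a unique Minkowski length depending only on its endpoints. First I would fix a model $\mathfrak M$ of $\ax{AccRel}$ with $d\ge3$, take observers $m,k$ and events $e_1,e_2$ with $m,k\in e_1\cap e_2$ and $\enc_m(e_1,e_2)=\enc_k(e_1,e_2)$, and pass to an auxiliary {\it inertial} observer $h$ (one exists by $\ax{AxCmv}+\ax{AxEv}$, cf.\ Rem.~\ref{rem-axacc}, so that $e_1,e_2\in Ev_h$). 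By Thm.~\ref{thm-wp} the life-curves $\lc^m_h$ and $\lc^k_h$ are well-parametrized timelike curves, and by Prop.~\ref{prop-lc}\eqref{item-rantr} together with \ax{AxEvTr} their ranges are $\wl_h(m)$ and $\wl_h(k)$ respectively. The key observation is that the hypothesis $\enc_m(e_1,e_2)=\enc_k(e_1,e_2)$ forces these two ranges to agree on the relevant segment: the portion of $\wl_h(m)$ between $\loc_h(e_1)$ and $\loc_h(e_2)$ equals the corresponding portion of $\wl_h(k)$, because an event on one world-line but not the other would be encountered by one observer and not the other (here one must check, using $\ax{AxSelf_0}$ and the definition of $\enc$, that ``encountered between $e_1$ and $e_2$'' for these observers is exactly ``lies on the world-segment''). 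So both $\lc^m_h$ and $\lc^k_h$, restricted to the time-intervals between $\time_m(e_1),\time_m(e_2)$ and between $\time_k(e_1),\time_k(e_2)$ respectively, are well-parametrized curves with the \emph{same} image, a fixed timelike arc $\Gamma\subseteq\Q^d$ running from $\loc_h(e_1)$ to $\loc_h(e_2)$.

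It then remains to show that any two well-parametrized timelike curves with the same image have domains of the same length; equivalently, that a well-parametrized timelike arc has a well-defined ``Minkowski arc length'' equal to the difference of its domain endpoints, and this quantity is intrinsic to the point-set. I would argue this by a connectedness/supremum argument inside the model: fix $\Gamma$ and the two parametrizations $\gamma_1=\lc^m_h|_{[s_1,s_2]}$, $\gamma_2=\lc^k_h|_{[t_1,t_2]}$ with $\gamma_1(s_1)=\gamma_2(t_1)$ and $\gamma_1(s_2)=\gamma_2(t_2)$; the reparametrization $\rho=\gamma_1^{-1}\circ\gamma_2$ (which makes sense once one checks well-parametrized timelike curves are injective — their time-components are strictly monotone since the derivative is timelike of Minkowski length $1$) is a differentiable bijection of intervals with $\rho'\equiv 1$ wherever derivatives exist, by the chain rule applied to $\mu(\gamma_i'(t))=1$. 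Then the set $S\leteq\setopen t\in[t_1,t_2]\setmid \rho(t)=t+(s_1-t_1)\setclose$ is nonempty, bounded, and \emph{definable} in the model's language (it is defined by a first-order formula using $\W$, the field operations, and the parameters $s_1,t_1,e_1,e_2$); so by \ax{CONT} it has a supremum $c$, and a standard argument (continuity of $\rho$, which follows from differentiability, plus $\rho'=1$ in a neighborhood) shows $c=t_2$ and $S$ is the whole interval, giving $s_2-s_1=t_2-t_1$. Since $\time_m(e_1,e_2)=|s_2-s_1|=\dom$-length of $\lc^m_h$ between the two events and likewise for $k$, this yields $\time_m(e_1,e_2)=\time_k(e_1,e_2)$, which is \ax{DDPE}.

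The main obstacle I expect is the last step: carefully setting up the definable set $S$ and pushing the ``clopen in an interval'' argument through \emph{without} appealing to real-analytic facts that are unavailable over a general Euclidean ordered field — in particular, establishing that a differentiable function with derivative identically $1$ on an interval is a translation, using only \ax{CONT} and the FOL notion of differentiability from Section~\ref{sec-diff} (this is essentially a mean-value / intermediate-value argument that must be bootstrapped from the supremum schema, and it is exactly the kind of statement whose failure over fields other than $\R$ produced the counterexamples in Thms.~\ref{thmNoCONT} and \ref{thmMO}). A secondary technical point is the bookkeeping showing that $\enc_m(e_1,e_2)=\enc_k(e_1,e_2)$ translates into set-equality of the two world-segments and that the endpoints match up correctly; this uses $\ax{AxSelf_0}$ to identify being-at-a-coordinate-point with lying on the world-line, and $\ax{AxEvTr}$ to guarantee $h$ coordinatizes everything in sight, but is otherwise routine. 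Everything else — existence of the co-moving {\it inertial} observers, well-parametrization of life-curves, injectivity and strict monotonicity of timelike well-parametrized curves — is already available from Thm.~\ref{thm-wp}, Prop.~\ref{prop-lc}, and the definitions.
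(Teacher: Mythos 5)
Your proposal follows essentially the same route as the paper: pass to an auxiliary inertial observer $h$, use Thm.~\ref{thm-wp} to see that $\lc^m_h$ and $\lc^k_h$ are definable well-parametrized timelike curves whose restricted images coincide by the hypothesis on $\enc$, and then reduce everything to the fact that two such curves with the same image have domains of equal length --- which is exactly the paper's Thm.~\ref{thmJeq}, proved there just as you sketch, via the reparametrization having derivative $1$ (chain rule plus Minkowski unit length plus monotonicity of the time component) and a \ax{CONT}-based mean-value argument (Prop.~\ref{propInt}) showing that derivative identically $1$ forces a translation. The step you flag as the main obstacle is precisely the content of the \ax{CONT}-Mean-Value Theorem already developed in Chap.~\ref{chp-a}, so the argument closes as you expect.
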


\begin{proof}
\label{thmEq-proof}
Let $k$ and $m$ be observers, and let $e_1$ and $e_2$ be events
localizable by $m$ and $k$ such that $m,k\in e_1\cap e_2$ and
$\enc_m(e_1,e_2)=\enc_k(e_1,e_2)$. We have to prove that
$\time_m(e_1,e_2)=\time_k(e_1,e_2)$. Let $\vp\leteq\loc_k(e_1)$ and
$\vq\leteq\loc_k(e_2)$, and let $\vpp'\leteq\loc_m(e_1)$ and
$\vqq'\leteq\loc_m(e_2)$. Then $\time_k(e_1,e_2)=|p_\tau-q_\tau|$ and
$\time_m(e_1,e_2)=|p'_\tau-q'_\tau|$. See the right hand side of
Fig.~\ref{fig-twp}.

We can assume that $p_\tau\leq q_\tau$ and $p'_\tau\leq q'_\tau$. Let
$h\in \IOb$. We prove that
$\big|q_\tau-p_\tau\big|=\big|q'_\tau-p'_\tau\big|$, by applying
Thm.~\ref{thmJeq} as follows: let $[a,b]\leteq [p_\tau,q_\tau]$,
$[a',b']\leteq [p'_\tau,q'_\tau]$, $f\leteq \lc^k_h$ and $g\leteq
\lc^m_h$. By \ax{AxSelf^+_0} and \ax{AxCmv}, we conclude that
$[a,b]\subseteq \dom f$ and $[a',b']\subseteq \dom g$ since
$h\in\IOb$, see Prop.~\ref{prop-lc}. From \ax{AccRel_0} it follows that
$f$ and $g$ are definable and well-parametrized timelike curves, see
Thm.~\ref{thm-wp}. By \ax{AxSelf_0}, we have that
$\{\,f(r):r\in[a,b]\,\}=\{\,g(r'):r'\in [a',b']\,\}$ since
$\enc_k(e_1,e_2)=\enc_m(e_1,e_2)$. Thus, by Thm.~\ref{thmJeq}, we
conclude that $\big|q_\tau-p_\tau\big|=\big|q'_\tau-p'_\tau\big|$; and
that is what we wanted to prove.
\end{proof}

Now let us prove the following theorem stating that the twin paradox
is a logical consequence of \ax{AccRel} if $d\ge3$.

\begin{thm} \label{thmTwp}
$\ax{AccRel} \models \ax{TwP}$ if $d\ge 3$.
\end{thm}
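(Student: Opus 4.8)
The plan is to reduce \ax{TwP} to \ax{DDPE} (Thm.~\ref{thmEq}) together with a ``strict'' comparison of proper times along timelike curves. Let $m\in\IOb$ and $k\in\Ob$ meet at events $e_1,e_2$ localized by both, with $k,m\in e_1\cap e_2$. Fix an arbitrary $h\in\IOb$ and pass to the life-curves $f\leteq \lc^k_h$ and $g\leteq \lc^m_h$, which by \ax{AccRel_0} (Thm.~\ref{thm-wp}) are definable, well-parametrized timelike curves, with $[p_\tau,q_\tau]\subseteq\dom f$ and $[p'_\tau,q'_\tau]\subseteq \dom g$ by \ax{AxSelf^+_0}, \ax{AxCmv} and Prop.~\ref{prop-lc}. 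Here $\vp\leteq\loc_k(e_1)$, $\vq\leteq\loc_k(e_2)$ (so $\time_k(e_1,e_2)=|q_\tau-p_\tau|$) and similarly for $m$.

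First I would handle the equality case: if $\enc_m(e_1,e_2)=\enc_k(e_1,e_2)$, then $\time_m(e_1,e_2)=\time_k(e_1,e_2)$ is exactly \ax{DDPE}, which holds by Thm.~\ref{thmEq} since $d\ge3$. Conversely, I need that equal proper times force the encountered events to coincide; combined with the inequality (below), this gives the full biconditional in \ax{TwP}. For the core inequality $\time_k(e_1,e_2)\le\time_m(e_1,e_2)$, the geometric content is: the Minkowski arc length of the well-parametrized timelike curve $g$ (which is $|q'_\tau-p'_\tau|$, since $g=\lc^m_h=\iota\circ w^m_h$ is a straight segment parametrized by proper time, as $m$ is inertial) is the \emph{longest} timelike curve between its endpoints. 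I would invoke the analogue of the ``reverse triangle inequality'' for timelike curves over an arbitrary Euclidean ordered field — this is precisely the kind of statement that needs \ax{CONT}, and I expect it to be packaged as a theorem in Chap.~\ref{chp-a} (the companion of Thm.~\ref{thmJeq} used in the proof of Thm.~\ref{thmEq}, giving a \emph{strict} inequality unless the curves traverse the same point set). Since $m$ and $k$ meet at $e_1$ and $e_2$, the point sets $\{f(r):r\in[p_\tau,q_\tau]\}$ and $\{g(r'):r'\in[p'_\tau,q'_\tau]\}$ share their endpoints $\loc_h(e_1)$ and $\loc_h(e_2)$ (using \ax{AxEvTr} and Rem.~\ref{rem-axacc} to see both curves land in $Ev_h$), and $g$'s image is the straight segment between them; hence the length-maximality theorem gives $|q_\tau-p_\tau|\le|q'_\tau-p'_\tau|$, with equality iff $\operatorname{Ran} f=\operatorname{Ran} g$, i.e.\ iff (via \ax{AxSelf_0} and $\enc_m=\enc_k$ reformulated through life-curves as in the proof of Thm.~\ref{thmEq}) the two observers encountered the very same events.

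Finally I would translate back: $|q_\tau-p_\tau|\le|q'_\tau-p'_\tau|$ is $\time_k(e_1,e_2)\le\time_m(e_1,e_2)$, and the equality-iff-same-image clause is exactly $\time_m(e_1,e_2)=\time_k(e_1,e_2)\iff\enc_m(e_1,e_2)=\enc_k(e_1,e_2)$; note one direction of this biconditional is just \ax{DDPE}/Thm.~\ref{thmEq} and need not be re-proved. The main obstacle is the strict length-maximality of the straight timelike segment among all well-parametrized timelike curves with the same endpoints, valid over every Euclidean ordered field admitting \ax{CONT}: this is where continuity is genuinely used (Thms.~\ref{thmNoCONT} and~\ref{thmMO} show it fails without it), and everything else is bookkeeping with life-curves, \ax{AxCmv}, \ax{AxEvTr} and the already-established fact that $w^k_h$ is well-parametrized. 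I would cite the relevant Chap.~\ref{chp-a} theorem for this maximality statement rather than reprove it here.
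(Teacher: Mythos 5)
Your decomposition --- settle one direction of the equality clause via \ax{DDPE}/Thm.~\ref{thmEq}, then get the inequality and the converse by comparing proper times along timelike life-curves --- is exactly the paper's, and the outline is correct. The differences are in how the comparison is instantiated. The paper does not pass to an arbitrary $h\in\IOb$, and Chap.~\ref{chp-a} contains no general ``reverse triangle inequality'' with an equality-iff-same-range clause of the kind you propose to cite; the available tools are Thm.~\ref{thmJeq} (same range $\Rightarrow$ same parameter length) and Thm.~\ref{thmFtwp} ($b-a\le|f_\tau(b)-f_\tau(a)|$ for a definable well-parametrized timelike curve, strictly if $f(x)_\sigma\neq f(a)_\sigma$ for some $x\in[a,b]$). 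The paper simply takes $h=m$: then by \ax{AxSelf_0} the endpoints $\loc_m(e_1),\loc_m(e_2)$ of $\lc^k_m$ lie on $m$'s time axis, so $m$'s proper time \emph{is} the coordinate-time difference $|q'_\tau-p'_\tau|$ and Thm.~\ref{thmFtwp} applies verbatim. Your more general maximality statement does reduce to this after normalizing with a Poincar\'e transformation (Thm.~\ref{thm-poi}), so nothing is wrong, but with $h=m$ that extra step evaporates. The one place your sketch compresses a real argument is the equality case: what Thm.~\ref{thmFtwp}(ii) gives is strictness under a \emph{spatial deviation} hypothesis, not ``equality iff $\ran f=\ran g$,'' so one must still show that $\enc_m(e_1,e_2)\neq\enc_k(e_1,e_2)$ forces some $x\in[p_\tau,q_\tau]$ with $\lc^k_m(x)_\sigma\neq\vo$. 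The paper does this by a two-case analysis (an event in $\enc_k(e_1,e_2)$ missed by $m$, versus one in $\enc_m(e_1,e_2)$ missed by $k$), each resolved with the \ax{CONT}-Bolzano Theorem --- in the second case Bolzano is needed to locate a parameter at which $\lc^k_m(x)_\tau$ hits the time coordinate of the missing event. That step should be made explicit rather than folded into the range condition.
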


\begin{proof}
\label{thmTwp-proof}
Let $m\in \IOb$ and $k\in \Ob$; and let $e_1$ and $e_2$ be events
localizable by $m$ and $k$ such that $m,k\in e_1\cap e_2$.  By
Thm.~\ref{thmEq}, \ax{DDPE} is provable from \ax{AccRel}. So we have to
prove the following only:
\begin{equation}\label{eq-twp0} \time_m(e_1,e_2)\ge\time_k(e_1,e_2),\enskip\text{ and} 
\end{equation}
\begin{equation}\label{eq-twp} 
\enc_m(e_1,e_2)=\enc_k(e_1,e_2)\enskip\text{ if }\enskip\time_m(e_1,e_2)=\time_k(e_1,e_2).
\end{equation}
To do so, let $\vp\leteq\loc_k(e_1)$, $\vq\leteq\loc_k(e_2)$; and let $\vpp'\leteq\loc_m(e_1)$, $\vqq'\leteq\loc_m(e_2)$.
Then 
\begin{equation*}
\time_k(e_1,e_2)=|p_\tau-q_\tau|\quad\text{ and }\quad\time_m(e_1,e_2)=|p'_\tau-q'_\tau|,
\end{equation*}
see Fig.~\ref{fig-twp}. Thus we have to prove that
$|q_\tau-p_\tau|\le|q'_\tau-p'_\tau|$, and that $\enc_m(e_1,e_2)=
\enc_k(e_1,e_2)$ if $|q_\tau-p_\tau|=|q'_\tau-p'_\tau|$. We are going
to prove them by applying Thm.~\ref{thmFtwp} to $\lc^k_m$ and $[p_\tau,
  q_\tau]$. From \ax{AccRel_0} we have that
\begin{equation}
\label{twp-e1}
\lc^k_m: \Q\parrow \Q^d \text{ is a definable and well-parametrized
 timelike curve,}
\end{equation}
see Thm.~\ref{thm-wp}. By \ax{AxSelf_0},
$\vp_\sigma=\vq_\sigma=\vpp'_\sigma=\vqq'_\sigma=\vo$ since $m,k\in
e_1\cap e_2$. By the definition of life-curve,
\begin{equation}
\label{twp-e2}
\lc^k_m(p_\tau)=\vpp'\quad\text{ and }\quad \lc^k_m(q_\tau)=\vqq'.
\end{equation}
By \ax{AxSelf^+_0}, we have
\begin{equation}
\label{twp-e3}
[p_\tau, q_\tau]\subseteq \dom \lc^k_m.
\end{equation}
Hence, by applying (i) of Thm.~\ref{thmFtwp} to $\lc^k_m$ and
$[p_\tau,q_\tau]$, we get that
\begin{equation*}
|q_\tau-p_\tau|\le|\lc^k_m(q_\tau)_\tau-\lc^k_m(p_\tau)_\tau|=|q'_\tau-p'_\tau|.
\end{equation*}
Consequently, $\time_k(e_1,e_2)\le\time_m(e_1,e_2)$. So
\eqref{eq-twp0} is proved.

We prove \eqref{eq-twp} by proving its contraposition.  Moreover, we
prove that $\time_k(e_1,e_2)<\time_m(e_1,e_2)$ if
$\enc_m(e_1,e_2)\neq\enc_k(e_1,e_2)$. That will be proved by applying
(ii) of Thm.~\ref{thmFtwp} to $\lc^k_m$ and $[p_\tau,q_\tau]$. To do
so, let us assume that $\enc_m(e_1,e_2)\neq\enc_k(e_1,e_2)$. Since
$\enc_m(e_1,e_2)\neq\enc_k(e_1,e_2)$, there are two possibilities:
either there is an event $e$ such that $e\in\enc_k(e_1,e_2)$ and
$e\not\in\enc_m(e_1,e_2)$, or there is an event $\bar e$ such that
$\bar e\in\enc_m(e_1,e_2)$ and $\bar e\not\in\enc_k(e_1,e_2)$. If
there is such $e$, there is an $x\in[p_\tau,q_\tau]$ such that
$\lc^k_m(x)=\loc_m(e)$. By \ax{CONT}-Bolzano Theorem,
$\lc^k_m(x)_\tau\in[q'_\tau,p'_\tau]$, since $\lc^k_m$ is a definable
timelike curve and $\lc^k_m(p_\tau)=\vpp'$,
$\lc^k_m(q_\tau)=\vqq'$. Thus, since $e\not\in\enc_m(e_1,e_2)$, we
have $\lc^k_m(x)_\sigma\neq\vo$. If $\bar e$ is such that $\bar
e\in\enc_m(e_1,e_2)$ and $\bar e\not\in\enc_k(e_1,e_2)$, then
$\lc^k_m(t)\neq\loc_m(\bar e)$ for all $t\in [p_\tau,q_\tau]$. By
\ax{CONT}-Bolzano Theorem, there is an $x\in[p_\tau,q_\tau]$ such that
$\lc^k_m(x)_\tau=\loc_m(\bar e)_\tau$. By \ax{AxSelf_0}, $\loc_m(\bar
e)_\sigma=\vo$. Therefore, $\lc^k_m(x)_\sigma\neq\vo$ since $\bar
e\not\in\enc_k(e_1,e_2)$. So in both cases there is an
$x\in[p_\tau,q_\tau]$ such that
$\lc^k_m(x)_\sigma\neq\vo=\lc^k_m(p_\tau)_\sigma$. Consequently, there
is an $x\in\dom\lc^k_m$ such that
\begin{equation*}
x\in [p_\tau,q_\tau]\quad\mbox{and}\quad \lc^k_m(x)_\sigma\neq \lc^k_m(p_\tau)_\sigma.
\end{equation*}
By (ii) of Thm.~\ref{thmFtwp}, we get that
\begin{equation*}
|q_\tau-p_\tau|<|\lc^k_m(q_\tau)_\tau-\lc^k_m(p_\tau)_\tau|=|q'_\tau-p'_\tau|.
\end{equation*}
Consequently, $\time_k(e_1,e_2)<\time_m(e_1,e_2)$ if $\enc_k(e_1,e_2)\neq \enc_m(e_1,e_2)$. That completes the proof of the theorem.\end{proof}

\begin{que}
Can the \ax{CONT} axiom schema be replaced by some natural assumptions on observers such that the theorem above remains valid?
\end{que}

\begin{rem} The assumption $d\ge 3$ cannot be omitted from
Thm.~\ref{thmTwp}.  However, Thms.\ \ref{thmTwp} and \ref{thmEq} remain
true if we omit the assumption $d\ge 3$ and assume the auxiliary
axioms \ax{AxThExp} of Chap.~\ref{chp-cp} and \ax{AxLine} defined
below, i.e.,
\begin{equation*}
\ax{AccRel}+\ax{AxThExp}+\ax{AxLine}\models \ax{TwP}\;\land\;\ax{DDPE}
\end{equation*}
holds for $d=2$, too.  A proof for the latter statement can be
obtained from the proofs of Thms.\ \ref{thmTwp} and \ref{thmEq} by
\cite[Items 4.3.1, 4.2.4, 4.2.5]{mythes} and
\cite[Thm.1.4(ii)]{AMNsamples}.

\begin{description}
\item[\Ax{AxLine}] World-lines of {\it inertial} observers are lines
 according to any {\it inertial} observer:
 \begin{equation*}
\qquad\qquad\quad\forall m,k \in \IOb\enskip \exists \vp,\vq\in\Q^d \quad wl_m(k)=line(\vp,\vqq).
\end{equation*}
\end{description}
\end{rem}

\begin{que}
Can the assumption $d\ge 3$ be omitted from Thm.~\ref{thmEq}, i.e.,
does $\ax{AccRel}\models\ax{DDPE}$ hold for $d=2$?
\end{que}

In the next chapter, we discuss how the present methods
and in particular \ax{AccRel_0} and \ax{CONT} can be used for
introducing gravity via Einstein's equivalence principle and for
proving that ``gravity causes time to run slow'' (also called gravitational
time dilation). In this connection we would like to point out that it is explained, in
Misner et al.\ \cite[pp.172-173, 327-332]{MTW}, that 
the theory of accelerated observers (in flat spacetime) is a rather
useful first step in building up general relativity by using the
methods of that book.

\chapter{Simulating gravitation by accelerated observers}
\label{chp-grav}

Before we derive a FOL axiom system of general
relativity from our theory \ax{AccRel}, let us investigate the
strength of \ax{AccRel} by proving some theorems on gravitation from
it.  The results of this chapter are based on \cite{logtw} and
\cite{folfrt}.  Here we investigate the effect of gravitation on
clocks in our FOL setting by proving theorems about
gravitational time dilation. This effect roughly  means that
``gravitation makes time flow slower,'' that is to say, clocks in the
bottom of a tower run slower than clocks in its top. We use Einstein's
equivalence principle to treat gravitation in \ax{AccRel}. This
principle says that a uniformly accelerated frame of reference is
indistinguishable from a rest frame in a uniform gravitational field,
see, e.g., d'Inverno~\cite[\S 9.4]{dinverno}. So instead of
gravitation we will talk about acceleration and instead of towers we
will talk about spaceships.  This way the gravitational time dilation
will become the following statement: ``Time flows more slowly in the
back of a uniformly accelerated spaceship than in its front.''

One of the reasons why gravitational time dilation is interesting and
important is that general relativistic hypercomputing is based on this
effect, see \cite{ann}, \cite{nemeti-dgy}. Another reason is that it
leads to other surprising effects, such as that ``time stops'' at the event
horizons of huge\footnote{This statement is true for any black hole but it is
  interesting in the case of huge ones.} (ca.\ $10^{10}$ solar mass)
black holes. That is true because at the event horizon ``gravitational
force'' (meant in the sense of Rindler~\cite[\S 11.2 p.230]{Rin}) tends to
infinity. The possibility of the existence of (traversable) wormholes is also related to
these ideas, see \cite[p.140]{kaku}, Novikov~\cite{novikov}, Thorne~\cite{thorne} and \cite{rkr}.

Here we concentrate on the general case when the spaceship is not
necessarily uniformly accelerated. This case corresponds to the
situation when the tower is in a possibly changing gravitational
field.  At first it is not clear whether the changing gravitational
field has any physical relevance. However, every ``physical''
gravitational field is changing slightly. For example, the source of
the gravitation may lose energy by radiation, which might
significantly change the gravitational field in the long run. {Black
  holes may radiate by Hawking's radiation hypothesis. Changing
  gravitational fields also play a key role in the theory of
  gravitational waves.}

\section{Formulating gravitational time dilation}
Let us formulate the sentence ``Time flows more slowly in the back
of an accelerated spaceship than in its front.'' in our FOL language.

\begin{figure}[h!t]
\small
\begin{center}
\psfrag{a}[tl][tl]{$k$}
\psfrag{m}[l][l]{$m$}
\psfrag{e}[b][b]{$e$}
\psfrag{e1}[tl][tl]{$e_1$}
\psfrag{e'}[bl][bl]{$e'$}
\psfrag{e2}[br][br]{$e_2$}
\psfrag{2l}[l][l]{$2\lambda$}
\psfrag{l}[b][b]{$\lambda$}
\psfrag{ph1}[tr][tr]{$ph_1$}
\psfrag{ph2}[br][br]{$ph_2$}
\psfrag{text1}[tl][tl]{$(a)$}
\psfrag{text2}[tl][tl]{$(b)$}
\includegraphics[keepaspectratio, width=0.7\textwidth]{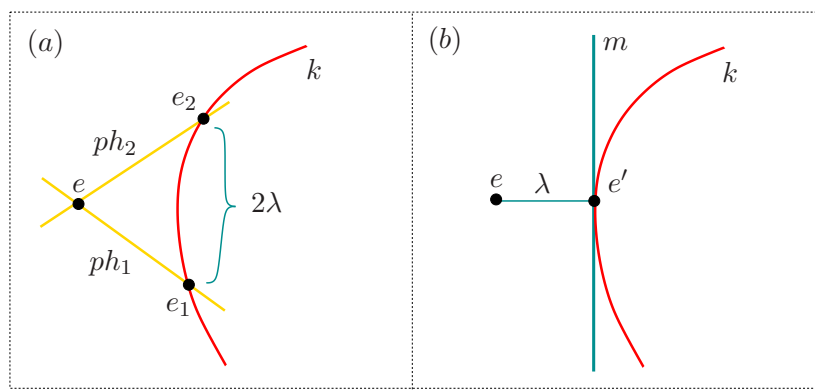}
\caption{\label{distfig} Illustrations of the radar distance and the Minkowski distance, respectively}
\end{center}
\end{figure}

To talk about spaceships, we need a concept of distance between
events and observers.
We have two natural candidates for that:
\begin{itemize}
\item Event $e$ is at \df{radar distance}\index{radar distance}
 $\lambda\in\Q^+$ from observer $k$ iff there are events $e_1$ and
 $e_2$ and photons $ph_1$ and $ph_2$ such that $k\in e_1\cap e_2$,
 $ph_1\in e\cap e_1$, $ph_2\in e\cap e_2$ and
 $\time_k(e_1,e_2)=2\lambda$. Event $e$ is at \df{radar distance}
 $0$ from observer $k$ iff $k\in e$. See $(a)$ of Fig.~\ref{distfig}.
\item Event $e$ is at \df{Minkowski distance}\index{Minkowski
 distance} $\lambda\in\Q$ from observer $k$ iff there is an event
 $e'$ such that $k\in e'$, $e\sim_m e'$ and $\dist_m(e,e')=\lambda$
 for every co-moving {\it inertial} observer $m$ of $k$ at $e'$. See
 $(b)$ of Fig.~\ref{distfig}.
\end{itemize}

We say body $b$ is at constant radar distance from observer $k$
according to $k$ iff the radar distance (from $k$) of every event in
which $b$ participates is the same.  The notion of constant Minkowski
distance is analogous.

To state that the {\it spaceship does not change its direction}, we
need to introduce another concept.  We say that observers $k$ and $b$
are \df{coplanar}\index{coplanar} iff $\wl_m(k)\cup \wl_m(b)$ is a
subset of a vertical plane in the coordinate system of an {\it
  inertial} observer $m$.  A plane is called a \df{vertical
  plane}\index{vertical plane} iff it is parallel to the time-axis.

Now we introduce two concepts of spaceship.  Observers $b$, $k$ and
$c$ form a \df{radar spaceship}\index{radar spaceship}, in symbols
$\Df{\rship}$\index{$\rship$}, iff $b$, $k$ and $c$ are coplanar and
$b$ and $c$ are at (not necessarily the same) constant radar distances
from $k$ according to $k$.  The definition of the \df{Minkowski
  spaceship}\index{Minkowski spaceship}, in symbols
$\Df{\mship}$\index{$\mship$}, is analogous.

We say that event $e_1$ \df{precedes}\index{precedes} event $e_2$
according to observer $k$ iff $\loc_m(e_1)_\tau\le \loc_m(e_2)_\tau$
for all co-moving \textit{inertial} observers $m$ of $k$. In this
case we also say that $e_2$ \df{succeeds}\index{succeeds} $e_1$
according to $k$. We need these concepts to distinguish the past and
the future light cones according to observers. Let us note that 
no time orientation is definable from \ax{AccRel}; so we can only
speak of orientation according to observers. However, there are
several possible axioms which make time orientation possible, e.g.,
\begin{equation*}
\forall m,k\in\IOb\quad w^k_m(\voo)_\tau<w^k_m(\vet)_\tau
\end{equation*} is such.

\begin{figure}[h!btp]
\small
\begin{center}
\psfrag{e}[r][r]{$e$}
\psfrag{he1}[l][l]{$\hat{e}_1$}
\psfrag{he2}[l][l]{$\hat{e}_2$}
\psfrag{hp1}[tr][tr]{$\hat{p}_1$}
\psfrag{hp2}[tl][tl]{$\hat{p}_2$}
\psfrag{te1}[br][br]{$\tilde{e}_1$}
\psfrag{te2}[l][l]{$\tilde{e}_2$}
\psfrag{tp1}[br][br]{$\tilde{p}_1$}
\psfrag{tp2}[bl][bl]{$\tilde{p}_2$}
\psfrag{e1}[r][r]{$e_1$}
\psfrag{e2}[lr][lr]{$e_2$}
\psfrag{ph1}[rt][rt]{$ph_1$}
\psfrag{ph2}[lt][lt]{$ph_2$}
\psfrag{m}[lb][lb]{$m$}
\psfrag{a}[tl][tl]{$k$}
\psfrag{l}[b][b]{$\lambda$}
\psfrag*{text1}[cb][cb]{(a)}
\psfrag*{text2}[cb][cb]{(b)}
\psfrag*{text3}[cb][cb]{(c)}
\includegraphics[keepaspectratio, width=0.8\textwidth]{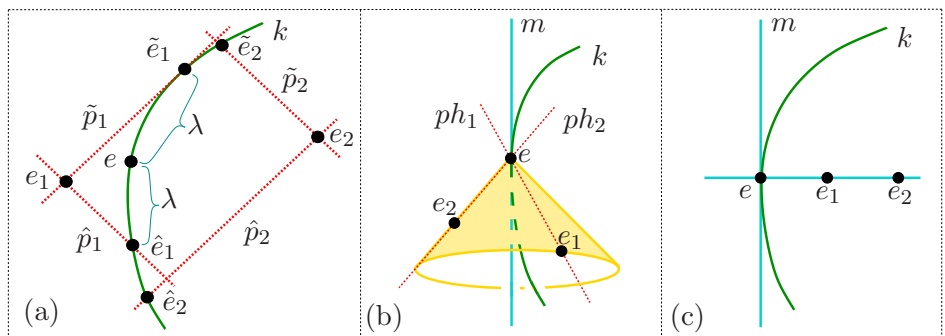}
\caption{\label{simfig} Illustrations of relations $e_1\simrad_k e_2$, $e_1\simph_k e_2$ and $e_1\simmu_k e_2$, respectively}
\end{center}
\end{figure}

We also need a concept to decide which events happen at the same
time according to an accelerated observer.
The following three
natural concepts offer themselves:
\begin{itemize}
\item Events $e_1$ and $e_2$ are \df{radar simultaneous}\index{radar
 simultaneous} for observer $k$, in symbols $e_1\Df{\simrad_k}
 e_2$\index{$\simrad$}, iff there are events $e$, $\hat e_1$, $\hat
 e_2$, $\tilde e_1$, $\tilde e_2$ and photons $\tilde{p}_1$,
 $\tilde{p}_2$, $\hat{p}_1$, $\hat{p}_2$ such that $k\in e\cap
 \tilde{e}_i\cap\hat{e}_i$, $\hat{p}_i\in e_i\cap\hat{e}_i$,
 $\tilde{p}_i\in e_i\cap\tilde{e}_i$, ($\tilde{e}_i\neq\hat{e}_i$ or
 $e_i=e$) and $\time_k(e,\hat{e}_i)=\time_k(e,\tilde{e}_i)$ if
 $i\in\setopen 1,2\setclose$, see Fig.~\ref{simfig}.
\item Events $e_1$ and $e_2$ are \df{photon simultaneous}\index{photon
 simultaneous} for observer $k$, in symbols $e_1\Df{\simph_k}
 e_2$\index{$\simph$}, iff there are an event $e$ and photons $ph_1$
 and $ph_2$ such that $k\in e$, $ph_1\in e\cap e_1$, $ph_2\in e\cap
 e_2$ and $e_1$ and $e_2$ precede $e$ according to $k$. See $(b)$ of
 Fig.~\ref{simfig}.
\item Events $e_1$ and $e_2$ are \df{Minkowski
  simultaneous}\index{Minkowski simultaneous} for observer $k$, in
  symbols $e_1\Df{\simmu_k} e_2$\index{$\simmu$}, iff there is an
  event $e$ such that $k\in e$ and $e_1$ and $e_2$ are simultaneous
  for any co-moving \textit{inertial} observer of $k$ at $e$. See
  $(c)$ of Fig.~\ref{simfig}.
\end{itemize}

\begin{rem}
Let us note that, for \textit{inertial} observers, the concepts of
radar simultaneity, Minkowski simultaneity and the concept of
simultaneity introduced on p.\pageref{sim} coincide, and any two of these
three simultaneity concepts coincide only for \textit{inertial}
observers.
\end{rem}

Radar simultaneity and Minkowski simultaneity are the two most natural
generalizations (for non-{\it inertial} observers) of the standard
simultaneity introduced by Einstein in \cite{Einstein}. In the case
of Minkowski simultaneity, the standard simultaneity of co-moving
\textit{inertial} observers is rigidly copied, while in the case of
radar simultaneity, the standard simultaneity is generalized in a more
flexible way. Dolby and Gull calculate and illustrate the radar
simultaneity of some coplanar accelerated observers in
\cite{Dolby-Gull}. 

Let us note that the Minkowski simultaneity of observer $k$ is an
equivalence relation if and only if $k$ does not accelerate. So one
can argue against regarding it as a simultaneity concept for non-{\it
  inertial} observers, too. We think, however, that it is so
straightforwardly generalized from the standard concept of
simultaneity that it deserves to be forgiven for its weakness and to
be called simultaneity. Let us also note that the Minkowski simultaneity
of $k$ is an equivalence relation on a small enough neighborhood of
the world-line of $k$ if this world-line is smooth enough.

The concept of photon simultaneity is the
least usual and the most naive. It is based on the simple idea that
an event is happening right now iff it is seen to be happening right
now. Some authors require from a simultaneity concept to be an
equivalence relation such that its equivalence classes are smooth
spacelike hypersurfaces, see, e.g., Matolcsi~\cite{Matolcsi}.
In spite of the fact that equivalence classes of $\simph_k$ are
neither smooth nor spacelike, we think that it might to be called
simultaneity, see, e.g., Hogarth
\cite{Hogarth} and Malament \cite{Malament}. This concept
occurs as a possible simultaneity concept in some of the papers
investigating the question of conventionality/definability of
simultaneity, see, e.g., Ben-Yami \cite{Ben-Yami}, Rynasiewicz
\cite{Rynasiewicz}, Sarkar and Stachel \cite{Sarkar-Stachel}. Let us
also note that all of the introduced simultaneity and distance
concepts are experimental ones, i.e., they can be determined by
observers by means of experiments with clocks and photons.

We distinguish the front and the back of the spaceship by the
direction of the acceleration, so we need a concept for direction. We
say that the \df{directions of $\vp\in \Q^d$ and $\vq\in \Q^d$ are the
  same}\index{direction}, in symbols
$\vpp\Df{\upp}\vqq$\index{$\upp$}, if $\vp$ and $\vq$ are spacelike
vectors, and there is a $\lambda \in \Q^+$ such that $\lambda\cdot
\vp_\sigma=\vq_\sigma$, see $(a)$ of Fig.~\ref{figupp}. When $\vp$ and
$\vq$ are timelike vectors, we also use this notation if $p_\tau
q_\tau>0$.

\begin{figure}[h!btp]
\small
\begin{center}
\psfrag{p}[lb][lb]{$\vpp$} 
\psfrag{ps}[lb][lb]{$\vpp_\sigma$}
\psfrag{qs}[lb][lb]{$\vq_\sigma$} 
\psfrag{ph}[rb][rb]{$ph$}
\psfrag{q}[r][r]{$\vqq$} 
\psfrag{q3}[l][l]{$\vq_3$}
\psfrag{q2}[l][l]{$\vq_2$} 
\psfrag{q1}[l][l]{$\vq_1$}
\psfrag{aa}[l][l]{$(a)$} 
\psfrag{bb}[l][l]{$(b)$}
\psfrag{o}[t][t]{$\vo$} 
\psfrag{b}[t][t]{$b$} 
\psfrag{k}[t][t]{$c$}
\psfrag{e}[t][t]{$e$} 
\psfrag{eb}[t][t]{$e_b$}
\psfrag{ek}[t][t]{$e_c$} 
\psfrag{k1}[t][t]{$c'$}
\psfrag{b1}[t][t]{$b'$}
\includegraphics[keepaspectratio, width=0.7\textwidth]{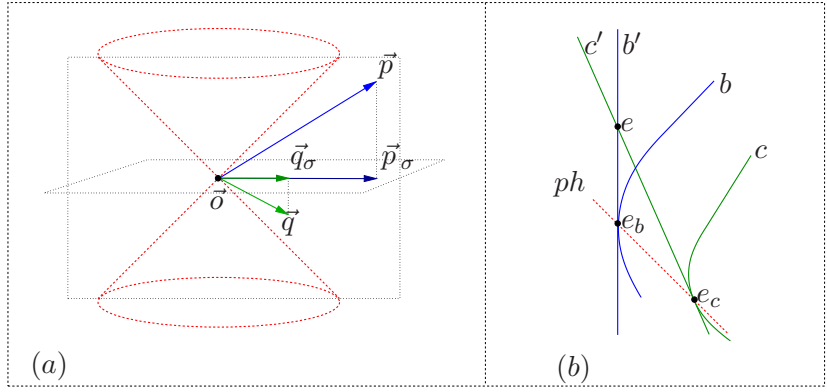}
\caption{\label{figupp} $(a)$ illustrates $\vpp\upp\vqq$, and
$(b)$ illustrates that observer $c$ is approaching observer $b$,
as seen by $b$ by photons.}
\end{center}
\end{figure}

Now let us focus on the definition of acceleration in our FOL
setting. The life-curves of observers and the derivative $f'$ of a
given function $f$ are both FOL definable concepts, see
pages \pageref{life-curve} and \pageref{derivative}. Thus if the life-curve of observer $k$
according to observer $m$ is a function, then the following
definitions are also FOL ones. The \df{relative
 velocity}\index{relative velocity} $\Dff{\fvvkm}$\index{$\fvvkm$} of
observer $k$ according to observer $m$ at instant $t\in\Q$ is the
derivative of the life-curve of $k$ according to $m$ at $t$ if it is
differentiable at $t$;  otherwise it is undefined. The \df{relative
 acceleration}\index{relative acceleration}
$\Dff{\fvakm}$\index{$\fvakm$} of observer $k$ according to observer
$m$ at instant $t\in\Q$ is the derivative of the relative velocity of
$k$ according to $m$ at $t$ if it is differentiable at $t$; otherwise it is
undefined.

We say that \df{the direction of the spaceship $\ship$ is the same as that of the} 
\df{acceleration of $k$} iff the following holds:
\begin{multline*}
\forall m \in \IOb \enskip \forall t \in \dom \fvakm\enskip \enskip \forall \vpp,\vqq \in Cd_m\quad \\c\in ev_m(\vpp) \lland 
b\in ev_m(\vqq)\lland \vpp\seq \vqq \then\fvakm(t)\upp (\vpp-\vqq).
\end{multline*}

The \df{acceleration}\index{acceleration} of observer $k$ at instant
$t\in\Q$ is defined as the unsigned Minkowski length of the relative
acceleration according to any {\it inertial} observer $m$ at $t$,
i.e.,
\begin{equation*}\index{$a_k(t)$}
\Df{a_k(t)}\leteq -\mu\big(\fvakm(t)\big).
\end{equation*}
The reason for the ``$-$'' sign in this definition is the fact that
$\mu\big(\fvakm(t)\big)$ is negative since $\fvakm(t)$ is a spacelike
vector, see Thm.~\ref{thm-wp} and Prop.~\ref{prop-vmorta}.  The
acceleration is a well-defined concept since it is independent of the
choice of the {\it inertial} observer $m$, see Thm.~\ref{thm-poi} and
Prop.~\ref{prop-inv}. We say that observer $k$ is \df{positively
  accelerated}\index{positively accelerated observer} iff $a_k(t)$ is
defined and greater than $0$ for all $t\in \dom \lc^k_k$. Observer $k$
is called \df{uniformly accelerated}\index{uniformly accelerated
  observer} iff there is an $a\in\Q^+$ such that $a_k(t)=a$ for all
$t\in \dom \lc^k_k$.

We say that \df{the clock of $b$ runs slower than the clock of $c$ as}
\df{seen by \,$k$\, by radar} iff
$\time_b(e_b,\bar{e}_b)<\time_c(e_c,\bar{e}_c)$ for all events
$e_b,\bar{e}_b, e_c, \bar{e}_c$ for which $b\in e_b\cap \bar{e}_b$,
$c\in e_c\cap \bar{e}_c$ and $e_b\simrad_k e_c$, $\bar{e}_b\simrad_k
\bar{e}_c$. If it is \df{seen by photons}, we use $\simph_k$ instead
of $\simrad_k$. Similarly, if it is \df{seen by Minkowski
 simultaneity}, we use $\simmu_k$ instead of $\simrad_k$.

\section{Proving gravitational time dilation}
Let us prove here two theorems about gravitational time
dilation. Both theorems state that gravitational time
dilation follows from \ax{AccRel}, they only differ in the
formulation of this statement.

Let us first prove a theorem about the clock-slowing effect of
gravitation in radar spaceships.

\begin{thm} \label{thm-rad}
Let $d\ge 3$.
Assume \ax{AccRel}.
Let $\rship$ be a radar spaceship such that:
\begin{itemize}
\item[(i)] observer $k$ is positively accelerated,
\item[(ii)] the direction of the spaceship is the same as that of the
 acceleration of observer $k$.
\end{itemize}
Then both (1) and (2) hold:
\begin{itemize}
\item[$(1)$] The clock of $b$ runs slower than the clock of $c$ as seen by $k$ by radar.
\item[$(2)$] The clock of $b$ runs slower than the clock of $c$ as seen by each of $k$, $b$ and $c$ by photons.
\end{itemize}
\end{thm}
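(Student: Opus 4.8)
\textbf{Proof proposal for Theorem~\ref{thm-rad}.} The plan is to reduce the entire statement to a single geometric fact about a uniformly-in-sign-accelerated timelike curve in a Minkowski plane, and then exploit the equivalence-principle machinery already set up. First I would pass to a fixed inertial observer $m$. By \ax{AxCmv}, \ax{AxSelf_0} and Thm.~\ref{thm-wp}, the life-curve $\lc^k_m$ is a definable, well-parametrized timelike curve; since $k$, $b$ and $c$ are coplanar (by the hypothesis that they form a radar spaceship), all three world-lines $\wl_m(k)$, $\wl_m(b)$, $\wl_m(c)$ lie in one vertical plane, which I may take to be the $\txPlane$ after a Poincar\'e transformation (allowed by Thm.~\ref{thm-poi} and the fact that accelerations and Minkowski lengths are Poincar\'e-invariant, as recorded in the last section of the excerpt). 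So the whole problem lives in a $2$-dimensional Minkowski plane with the distinguished spacelike direction $\vex$ along which, by hypothesis (ii), the relative acceleration $\fvvkm$-derivative points.

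Next I would set up the radar construction explicitly. For an event $e$ on $\wl_m(b)$ or $\wl_m(c)$, the radar distance from $k$ is $\frac12\time_k(e_1,e_2)$ where $e_1,e_2$ are the intersections of $\wl_m(k)$ with the past and future light cones through $e$; writing $\lc^k_m(t_1)$, $\lc^k_m(t_2)$ for these, the radar distance is $\tfrac12|t_2-t_1|$ (using that $\lc^k_m$ is parametrized by $k$'s proper time), and radar simultaneity $e_b\simrad_k e_c$ means the \emph{midpoints} $\tfrac12(t_1+t_2)$ agree for $e_b$ and $e_c$. The constant-radar-distance hypothesis then says: there are fixed $\lambda_b,\lambda_c$ and a reparametrization $s\mapsto e_b(s)$, $s\mapsto e_c(s)$ of $\wl_m(b)$, $\wl_m(c)$ such that the two backward/forward null rays from $e_b(s)$ meet $\lc^k_m$ at parameter values $s\pm\lambda_b$ and similarly for $c$ with $\lambda_c$, and by hypothesis (ii) $b$ is on the ``back'' side (opposite the acceleration, i.e.\ $\lambda_b$ corresponds to the direction $-\vex$) while $c$ is on the front side. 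The key computational step is then to express the proper time $\time_b(e_b(s_1),e_b(s_2))$ along $b$'s world-line in terms of $s_1,s_2$ and the acceleration profile $a_k$ of $k$; for a curve built as the ``constant-radar-distance companion'' of $\lc^k_m$ on the back side one gets a factor strictly less than $1$, and on the front side strictly more than $1$, relative to $k$'s own elapsed time, because the null-cone construction stretches/compresses differently on the two sides of a curve whose acceleration has a definite sign. This is where the hypothesis $a_k(t)>0$ for all $t$ (positive acceleration, not merely nonzero) is essential: it guarantees the comparison is one-directional along the whole segment.

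The cleanest way to make the last step rigorous, rather than by a brute hyperbolic-coordinate computation, is to use \ax{CONT} and a continuity/monotonicity argument: define $F_b(s)$ to be the proper time $b$ measures from a fixed base event up to the event $e_b(s)$, and show $F_b' < F_k' < F_c'$ pointwise (where $F_k$ is just $s\mapsto s$ after the reparametrization is arranged) by computing these derivatives at a single instant using a co-moving inertial observer of $k$ there --- at that instant the configuration is a genuine inertial-spaceship configuration, so one can read off the inequality from the standard (inertial) length-contraction/clock relation, which is available in \ax{SpecRel}; the sign of the acceleration controls how the co-moving frame rotates as $s$ increases, and this is precisely what makes the instantaneous inequality persist. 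Then $F_b(s_2)-F_b(s_1) < F_c(s_2)-F_c(s_1)$ follows by integrating (formally: by the \ax{CONT}-based mean-value / Bolzano arguments used in Chap.~\ref{chp-a} and invoked in the proof of Thm.~\ref{thmTwp}), which is exactly conclusion $(1)$.

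For conclusion $(2)$ I would run the same argument with photon simultaneity $\simph_k$ in place of $\simrad_k$, and separately with the photon-simultaneity relations as seen by $b$ and by $c$ themselves; the only change is which null rays are used to match events (a single backward null ray from the observing event, versus the two-ray radar construction), so the matching map between $\wl_m(b)$ and $\wl_m(c)$ is different but still monotone and still, by hypothesis (ii) together with positive acceleration, compresses on the back side relative to the front side. Hence the same instantaneous inequality $F_b' < F_c'$ (now transported along the photon-simultaneity matching) holds, and integrating gives the result. The main obstacle I anticipate is bookkeeping: carefully verifying that the constant-radar-distance and coplanarity hypotheses really do produce a well-defined, everywhere-differentiable reparametrization of $\wl_m(b)$ and $\wl_m(c)$ on the relevant segment (so that the derivative comparison makes sense), and checking that the ``back'' versus ``front'' assignment coming from hypothesis (ii) is consistent with the side on which the null-cone construction contracts --- i.e.\ pinning down all the signs. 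Once the sign conventions are fixed and the reparametrizations shown to be regular, the inequality itself is the short, classical co-moving-frame computation combined with the \ax{CONT}-integration already developed in the thesis.
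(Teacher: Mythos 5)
Your overall architecture is the paper's: pass to an inertial $m$, reduce to a vertical plane, reparametrize $\wl_m(b)$ and $\wl_m(c)$ by the radar (resp.\ photon) matching against $k$'s proper time, prove a pointwise inequality between the proper-time rates of $b$ and $c$, and integrate it by a \ax{CONT}-Mean--Value argument. This is precisely Lem.~\ref{lem-main} applied to $\alpha=\lc^k_m$, $\beta=\lc^b_m$, $\gamma=\lc^c_m$ and their radar/photon reparametrizations $\beta_*,\gamma_*$; your $F_b'$ is $\mu(\beta'_*)$, your $F_k'\equiv 1$ is well-parametrizedness of $\alpha$, and the conversion back to $\time_b$, $\time_c$ is Lem.~\ref{lem-time}.

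There is, however, a genuine gap at the one step that carries the physics: your justification of the pointwise inequality $F_b'<F_c'$. You propose to ``read off the inequality from the standard (inertial) length-contraction/clock relation'' in an instantaneous co-moving frame of $k$, with the acceleration merely making it ``persist.'' That is backwards. In a genuinely inertial spaceship configuration (parallel world-lines at constant radar distance) the radar- and photon-matched clock rates are exactly \emph{equal}, so the instantaneous co-moving configuration yields equality, not strict inequality; the acceleration does not preserve an inequality, it creates one. Worse, for the radar matching the quantity $\beta'_*(t)$ is intrinsically nonlocal: by Prop.~\ref{prop-rad} it is the photon sum $\alpha'(t+r)\phsum\alpha'(t-r)$ (or the reverse order on the side satisfying $(\beta_*(t)-\alpha(t))\upp\vex$), i.e.\ it depends on $k$'s velocity at the two separated proper times $t\pm r$, so no computation made purely at the instant $t$ in one co-moving frame can produce it. The actual mechanism is: positive acceleration makes $\alpha'_2$ strictly monotone (Lem.~\ref{lem-vmon}), and by the photon-sum formula (Lem.~\ref{lem-phsum}) the Minkowski length of $\alpha'(t\mp r)\phsum\alpha'(t\pm r)$ is a ratio of null components of $\alpha'$ taken at $t-r$ and $t+r$, which is $<1$ behind and $>1$ in front precisely because those two velocities differ. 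What your proposal is missing, then, is the derivative formula for the radar (resp.\ photon) reparametrization --- Prop.~\ref{prop-rad}(3) (resp.\ Prop.~\ref{prop-ph}(2)) --- together with the proof that $\beta_*,\gamma_*$ are differentiable timelike curves at all (nontrivial: cf.\ Example~\ref{xmpl-nondif}); you correctly flag the latter as ``bookkeeping,'' but together these two items are where most of the real work of the theorem lives, and the instantaneous-inertial shortcut you offer in their place would prove only the non-strict, in fact degenerate, statement.
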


\begin{figure}[h!btp]
\small
\begin{center} 
\psfrag{a}[l][l]{$\alpha$}
\psfrag{b}[l][l]{$\beta$}
\psfrag{c}[l][l]{$\gamma$}
\psfrag{at}[l][l]{$\alpha(t)$}
\psfrag{at+R}[r][r]{$\alpha(t+R)$}
\psfrag{at-R}[r][r]{$\alpha(t-R)$}
\psfrag{at+r}[r][r]{$\alpha(t+r)$}
\psfrag{at-r}[r][r]{$\alpha'(t-r)$}
\psfrag{a1t+R}[b][b]{$\alpha'(t+R)$}
\psfrag{a1t-R}[b][b]{$\alpha'(t-R)$}
\psfrag{a1t+r}[b][b]{$\alpha'(t+r)$}
\psfrag{a1t-r}[b][b]{$\alpha'(t-r)$}
\psfrag{a1t+rtl}[t][t]{$\alpha'(t+r)$}
\psfrag{a1t-rtr}[t][t]{$\alpha'(t-r)$}
\psfrag{bt}[l][l]{$\beta_*(t)$}
\psfrag{btr}[r][r]{$\beta_*(t)$}
\psfrag{atr}[r][r]{$\alpha(t)$}
\psfrag{ct}[l][l]{$\gamma_*(t)$}
\psfrag{a1t}[b][b]{$\alpha'(t)$}
\psfrag{b1t}[l][l]{$\beta'_*(t)$}
\psfrag{c1t}[l][l]{$\gamma'_*(t)$}
\psfrag{text}[l][r]{$\quad\quad\alpha_2(x)<\alpha_2(y) \text{ iff } x<y$}
\includegraphics[keepaspectratio, width=0.94\textwidth]{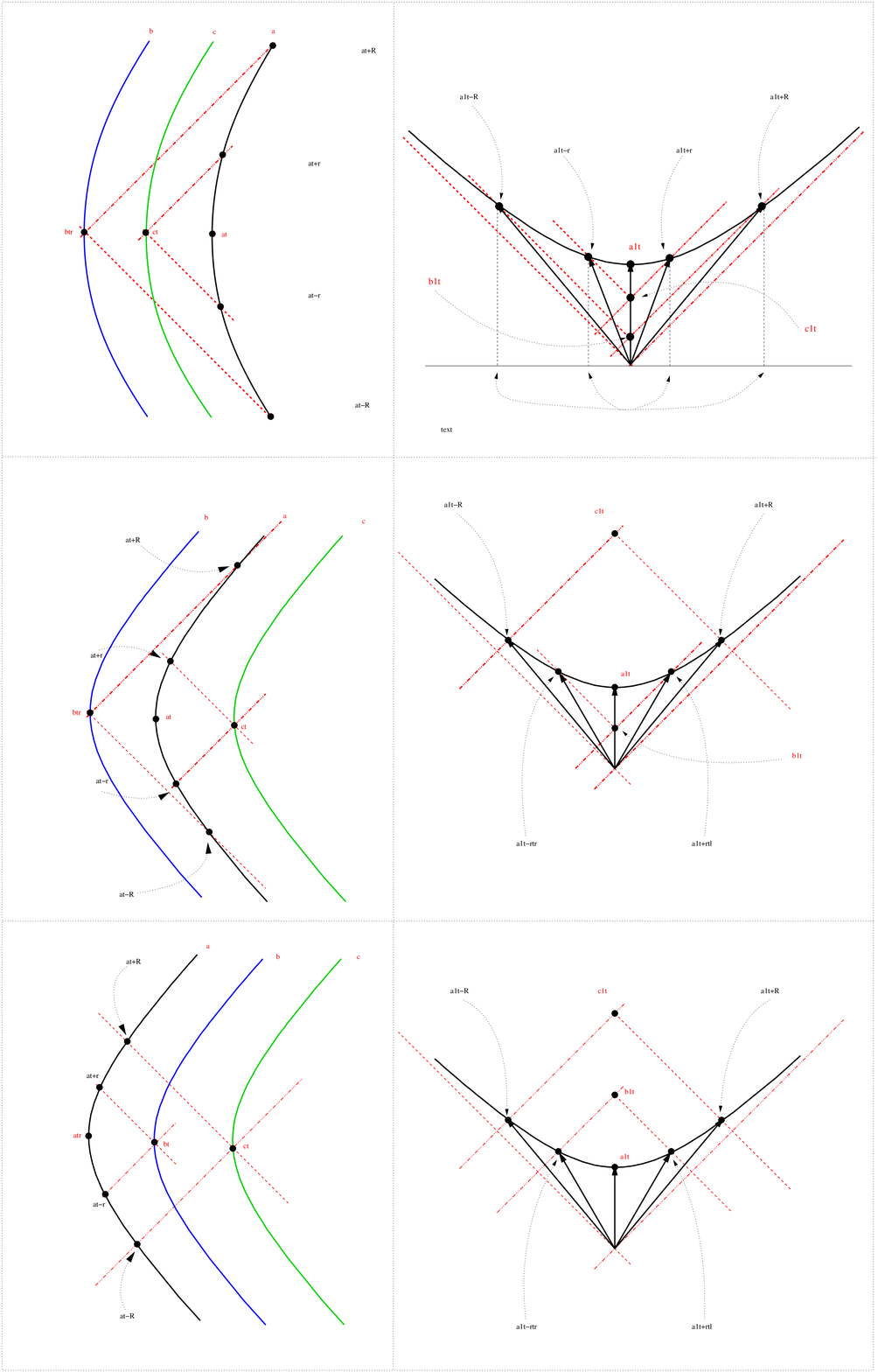}
\caption{\label{fig-radthm} Illustration for the proof of Item (1) in
  Thm.~\ref{thm-rad} verifying requirement (iii) in Lem.~\ref{lem-main}}
\end{center}
\end{figure}

\begin{proof}
To prove Item (1), let $\rship$ be a radar spaceship such that $k$ is
positively accelerated and the direction of the spaceship is the same
as that of the acceleration of $k$.  Let $e_b$, $\bar{e}_b$, $e_c$,
$\bar{e}_c$ be such events that $b\in e_b\cap\bar{e}_b$, $c\in e_c\cap
\bar{e}_c$ and $e_b\simrad_k e_c$, $\bar{e}_b\simrad_k\bar{e}_c$.  To
prove Item (1), we have to prove that
$\time_b(e_b,\bar{e}_b)<\time_c(e_c,\bar{e}_c)$.  Since $\rship$ is a
spaceship, there is an {\it inertial} observer $m\in\IOb$ such that
$\wl_m(b)\cup \wl_m(k)\cup \wl_m(c)$ is a subset of a vertical plane.
Let $m$ be such an {\it inertial} observer.  Without losing
generality, we can assume that this plane is the $\txPlane$.  We are
going to apply Lem.~\ref{lem-main}.  To do so, let $\beta=\lc^b_m$,
$\gamma=\lc^c_m$ and $\alpha=\lc^k_m$; and let $\beta_*$ and
$\gamma_*$ be the radar reparametrization of $\beta$ and $\gamma$
according to $\alpha$, respectively.  By Thm.~\ref{thm-wp}, $\beta$ and
$\gamma$ are definable and well-parametrized timelike curves.  By
Lems.\ \ref{lem-vmon} and \ref{lem-accdir}, we can assume that
$\alpha'_2$ is increasing and $\alpha'\upp\vet$.  By
Prop.~\ref{prop-ph}, $\beta_*$ and $\gamma_*$ are definable timelike
curves since the photon sum of any two timelike vectors of
$\ran\alpha'$ is also a timelike one.  Requirement (i) in
Lem.~\ref{lem-main} is clear by the definition of the radar
reparametrization.  It is also clear that there are
$x_\beta,y_\beta\in\dom\beta$, $x_\gamma,y_\gamma\in \dom\gamma$ and
$x,y\in\dom\beta_*\cap\dom\gamma_*$ such that
$\beta(x_\beta)=\loc_m(e_b)=\beta_*(x)$,
$\beta(y_\beta)=\loc_m(\bar{e}_b)=\beta_*(y)$ and
$\gamma(x_\gamma)=\loc_m(e_c)=\gamma_*(x)$,
$\gamma(y_\gamma)=\loc_m(\bar{e}_c)=\gamma_*(y)$.  Hence requirement
(ii) in Lem.~\ref{lem-main} also holds.  Since the direction of
$\rship$ is the same as that of the acceleration of $k$, there are
only three possible orders of the observers in the spaceship.  All
these three cases are illustrated by Fig.~\ref{fig-radthm}.  By
Prop.~\ref{prop-rad}, it is easy to see that
$\mu\big(\beta'_*(t)\big)<\mu\big(\gamma'_*(t)\big)$ for all $t\in
(x,y)$; and that is requirement (iii) in Lem.~\ref{lem-main}.  Hence by
Lem.~\ref{lem-main}, $|x_\beta-y_\beta|<|x_\gamma-y_\gamma|$.  Thus
$\time_b(e_b,\bar{e}_b)<\time_c(e_c,\bar{e}_c)$ since by
Lem.~\ref{lem-time}, $\time_i(e_i,\bar{e}_i)=|x_i-y_i|$ for all
$i\in\setopen b,c\setclose$; and that is what we wanted to prove.

To prove Item (2), there are many cases we should consider resulting
from which order is taken by the observers in the spaceship, and which
observer is watching the other two.  The proof in all the cases is
based on the very same ideas and lemmas as the proof of Item (1).  The
only difference is that we should use photon simultaneity and photon
reparametrization instead of radar ones, and we should use
Prop.~\ref{prop-ph} (and Lem.~\ref{lem-vmon}) when verifying requirement
(iii) in Lem.~\ref{lem-main}.  In Fig.~\ref{fig-radthmph}, we illustrate
the proof of requirement (iii) in Lem.~\ref{lem-main} in one of the
many cases.  In the other cases, this part of the proof can also be
attained by means of similar figures without any extra difficulty.
\end{proof}

\begin{figure}[h!btp]
\small
\begin{center} 
\psfrag{a}[bl][b]{$\alpha$}
\psfrag{b}[bl][bl]{$\beta$}
\psfrag{c}[bl][bl]{$\gamma$}
\psfrag{at}[tr][tr]{$\alpha(t)$}
\psfrag{at-2R}[tl][tl]{$\alpha(t-2R)$}
\psfrag{at-2r}[tl][tl]{$\alpha(t-2r)$}
\psfrag{a1t-2R}[tr][tl]{$\alpha'(t-2R)$}
\psfrag{a1t-2r}[tr][tl]{$\alpha'(t-2r)$}
\psfrag{b*t}[tr][tr]{$\beta_*(t)$}
\psfrag{c*t}[tl][tl]{$\gamma_*(t)$}
\psfrag{b1*t}[tr][tl]{$\beta'_*(t)$}
\psfrag{c1*t}[tl][tl]{$\gamma'_*(t)$}
\psfrag{a1t}[tl][tl]{$\alpha'(t)$}
\psfrag{b*1t}[tl][tl]{$\beta'_*(t)$}
\psfrag{c*1t}[tl][tl]{$\gamma'_*(t)$}
\includegraphics[keepaspectratio, width=\textwidth]{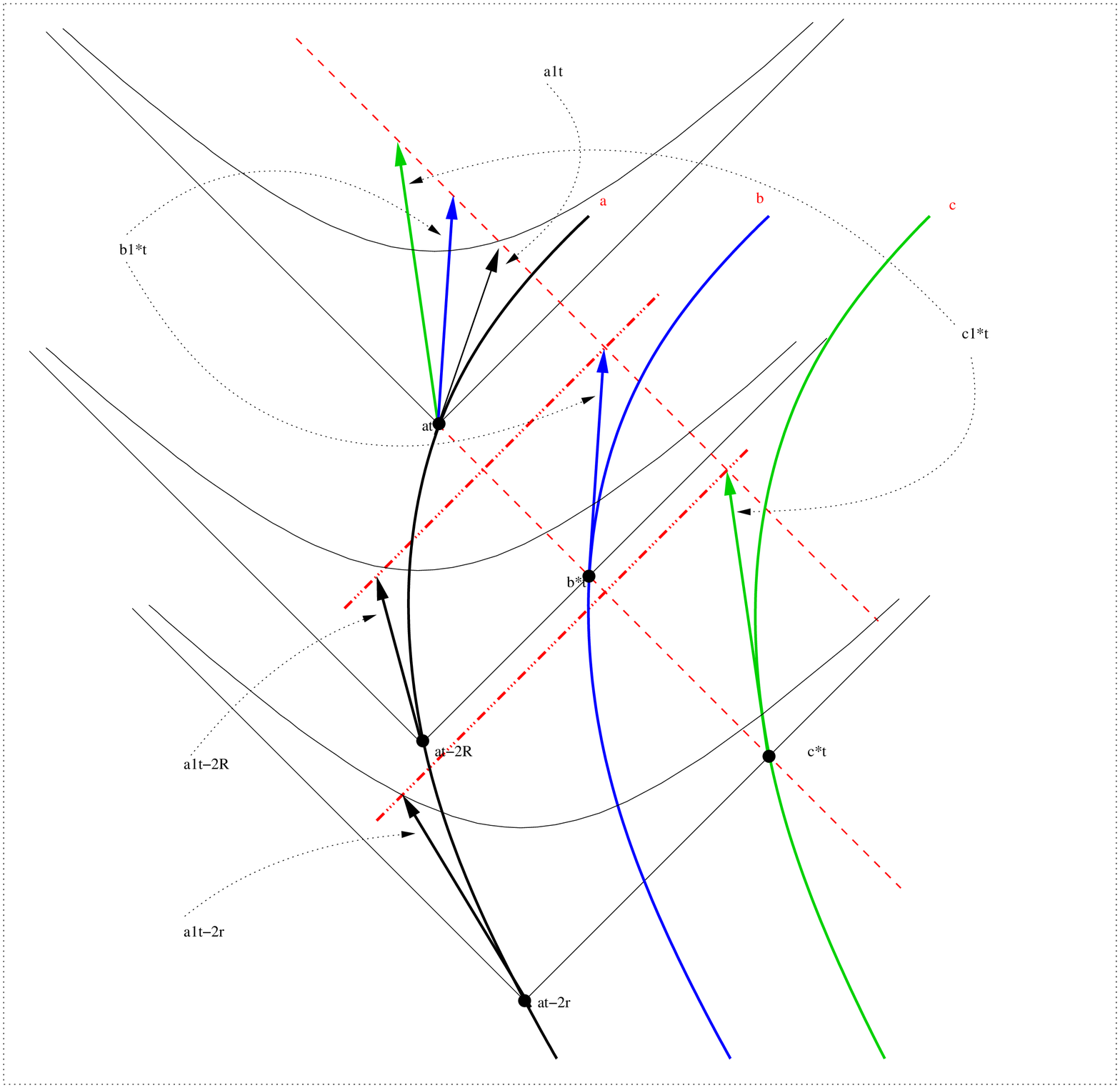}
\caption{\label{fig-radthmph} Illustration for the proof of Item (2)
  in Thm.~\ref{thm-rad} verifying requirement (iii) in
  Lem.~\ref{lem-main}}
\end{center}
\end{figure}

To prove a similar theorem for Minkowski spaceships, we need the
following concept.
We say that observer \df{$b$ is not too far behind}\index{not too far behind} 
the positively accelerated observer $k$ iff 
the following holds:
\begin{multline*}
\forall m \in \IOb \enskip \forall t \in \dom \fvakm\enskip \forall
\vpp,\vqq \in Cd_m \quad k\in ev_m(\vpp)\lland b\in ev_m(\vqq)\\\lland
ev_m(\vpp)\simmu_k ev_m(\vqq) \lland \fvakm(t)\upp (\vpp-\vqq) \then
\forall \tau \in \dom\fvakm \quad
\mu(\vpp,\vqq)<\frac{-1}{a_k(\tau)}.
\end{multline*}

Now we can state and prove our theorem about the clock-slowing effect of
gravitation in Minkowski spaceships.
\begin{thm}\label{thm-mu}
Let $d\ge 3$.
Assume \ax{AccRel}.
Let $\mship$ be a Minkowski spaceship such that:
\begin{itemize}
\item[(i)] observer $k$ is positively accelerated,
\item[(ii)] the direction of the spaceship is the same as that of the acceleration of observer $k$,
\item[(iii)] observer $b$ is not too far behind $k$.
\end{itemize}
Then both (1) and (2) hold:
\begin{enumerate}
\item The clock of $b$ runs slower than the clock of $c$ as
seen by $k$ by Minkowski simultaneity.
\item The clock of $b$ runs slower than the clock of $c$ as seen by each
of $k$, $b$ and $c$ by photons.
\end{enumerate}
\end{thm}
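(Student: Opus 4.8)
The plan is to run the proof of Thm.~\ref{thm-rad} essentially verbatim, substituting Minkowski simultaneity and the \emph{Minkowski} reparametrization of a curve according to $\alpha$ for radar simultaneity and the radar reparametrization throughout. First I would use the coplanarity built into $\mship$ to fix an {\it inertial} observer $m$ with $\wl_m(b)\cup\wl_m(k)\cup\wl_m(c)$ contained in a vertical plane, and assume without loss of generality (exactly as in Thm.~\ref{thm-rad}, using Thm.~\ref{thm-poi} and Prop.~\ref{prop-inv}) that this plane is the $\txPlane$. Put $\alpha\leteq\lc^k_m$, $\beta\leteq\lc^b_m$, $\gamma\leteq\lc^c_m$; by Thm.~\ref{thm-wp} these are definable, well-parametrized timelike curves, and by Lems.~\ref{lem-vmon} and \ref{lem-accdir} I may also assume $\alpha'_2$ is increasing and $\alpha'\upp\vet$. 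Let $\beta_*$ and $\gamma_*$ be the Minkowski reparametrizations of $\beta$ and $\gamma$ according to $\alpha$. Given events $e_b,\bar e_b,e_c,\bar e_c$ as in the hypothesis, \ax{AxSelf_0} together with the definition of the reparametrization yields $x_\beta,y_\beta\in\dom\beta$, $x_\gamma,y_\gamma\in\dom\gamma$ and $x,y\in\dom\beta_*\cap\dom\gamma_*$ with $\beta(x_\beta)=\loc_m(e_b)=\beta_*(x)$, $\beta(y_\beta)=\loc_m(\bar e_b)=\beta_*(y)$ and similarly for $\gamma$, so requirements (i) and (ii) of Lem.~\ref{lem-main} hold just as in Thm.~\ref{thm-rad}.

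The one place the Minkowski case is genuinely harder is the \emph{well-definedness} of $\beta_*$: since the Minkowski simultaneity $\simmu_k$ of an accelerated observer is not an equivalence relation and its simultaneity hyperplanes can fail to meet a world-line in a single point once that world-line lies farther behind $k$ than the instantaneous Rindler distance $1/a_k$, one needs a hypothesis confining the back observer $b$ to the good region --- and that is precisely clause~(iii), ``$b$ is not too far behind $k$'', via the bound $\mu(\vpp,\vqq)<-1/a_k(\tau)$. Granting~(iii), I would check that $\beta_*$ is a single-valued definable timelike curve (for $\gamma_*$ nothing beyond the radar proof is needed, $c$ being on the front side), so Lem.~\ref{lem-main} applies. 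Requirement (iii) of Lem.~\ref{lem-main}, namely $\mu\big(\beta'_*(t)\big)<\mu\big(\gamma'_*(t)\big)$ for $t\in(x,y)$, I would read off from the Minkowski-simultaneity analogue of Prop.~\ref{prop-rad} (the rescaling of the Minkowski length of the tangent under reparametrization), using that $k$ is positively accelerated and that the spaceship points along the acceleration so that in each of the three possible orderings of $b,k,c$ the back observer's tangent is Minkowski-shorter; here~(iii) is used once more to keep the relevant surface on the correct side. Lem.~\ref{lem-main} then gives $|x_\beta-y_\beta|<|x_\gamma-y_\gamma|$, and Lem.~\ref{lem-time} identifies these with $\time_b(e_b,\bar e_b)$ and $\time_c(e_c,\bar e_c)$, proving~(1).

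Part~(2) is literally the photon argument already given for Item~(2) of Thm.~\ref{thm-rad}: replace $\simmu_k$ and the Minkowski reparametrization by $\simph_k$ and the photon reparametrization, invoke Prop.~\ref{prop-ph} (together with Lem.~\ref{lem-vmon}) in place of the Minkowski analogue of Prop.~\ref{prop-rad} when verifying requirement~(iii) of Lem.~\ref{lem-main}, and note that the photon sum of any two timelike vectors of $\ran\alpha'$ is timelike, so that $\beta_*$ and $\gamma_*$ are timelike curves; clause~(iii) is not needed for~(2). I expect the main obstacle to be the careful bookkeeping around clause~(iii): showing that ``not too far behind'' is exactly what makes the Minkowski reparametrization a well-defined timelike curve \emph{and} fixes the sign in requirement~(iii) of Lem.~\ref{lem-main}, and checking that the bound as stated (holding for all $\tau\in\dom\fvakm$) controls every simultaneity surface used between $e_b$ and $\bar e_b$. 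Once the Minkowski analogue of Prop.~\ref{prop-rad} is in hand, the remaining case analysis --- the three orderings of $b,k,c$, and in~(2) which of $k,b,c$ watches --- is routine and parallels Thm.~\ref{thm-rad}.
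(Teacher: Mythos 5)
Your proposal follows essentially the same route as the paper's own (much terser) proof: the paper likewise reduces Thm.~\ref{thm-mu} to the argument of Thm.~\ref{thm-rad}, replacing radar simultaneity and radar reparametrization by their Minkowski counterparts and using Prop.~\ref{prop-mink} — exactly the ``Minkowski analogue of Prop.~\ref{prop-rad}'' you describe, whose hypothesis (iii) is precisely the ``not too far behind'' bound you single out — to verify requirement (iii) of Lem.~\ref{lem-main}, with the same three orderings for Item (1) and the same photon argument via Prop.~\ref{prop-ph} for Item (2). Your write-up is in fact more explicit than the paper's about why clause (iii) is what makes the Minkowski reparametrization a well-defined timelike curve, but the underlying decomposition and key lemmas are identical.
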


\begin{figure}[h!btp]
\small
\begin{center} 
\psfrag{a}[bl][bl]{$\alpha$}
\psfrag{b}[bl][bl]{$\beta$}
\psfrag{c}[bl][bl]{$\gamma$}
\psfrag{aa}[tl][tl]{(a)}
\psfrag{bb}[tl][tl]{(b)}
\psfrag{cc}[tl][tl]{(c)}
\psfrag{dd}[tl][tl]{(d)}
\psfrag{a1}[b][b]{$\alpha'(t)$}
\psfrag{a1t}[tl][tl]{$\alpha'(t)$}
\psfrag{at}[tl][tl]{$\alpha(t)$}
\psfrag{a11}[tl][tl]{$\alpha''(t)$}
\psfrag{b*t}[tr][tl]{$\beta_*(t)$}
\psfrag{c*t}[bl][bl]{$\gamma_*(t)$}
\psfrag{b1*t}[tr][tl]{$\beta'_*(t)$}
\psfrag{c1*t}[tl][tl]{$\gamma'_*(t)$}
\psfrag{b*1}[bl][bl]{$\beta'_*(t)$}
\psfrag{b*1r}[br][br]{$\beta'_*(t)$}
\psfrag{c*1}[bl][bl]{$\gamma'_*(t)$}
\psfrag{r}[bl][bl]{$r$}
\psfrag{R}[tl][tl]{$R$}
\psfrag{dc}[bl][bl]{$r\cdot\bar\alpha''(t)$}
\psfrag{db}[bl][bl]{$R\cdot\bar\alpha''(t)$}
\includegraphics[keepaspectratio, width=\textwidth]{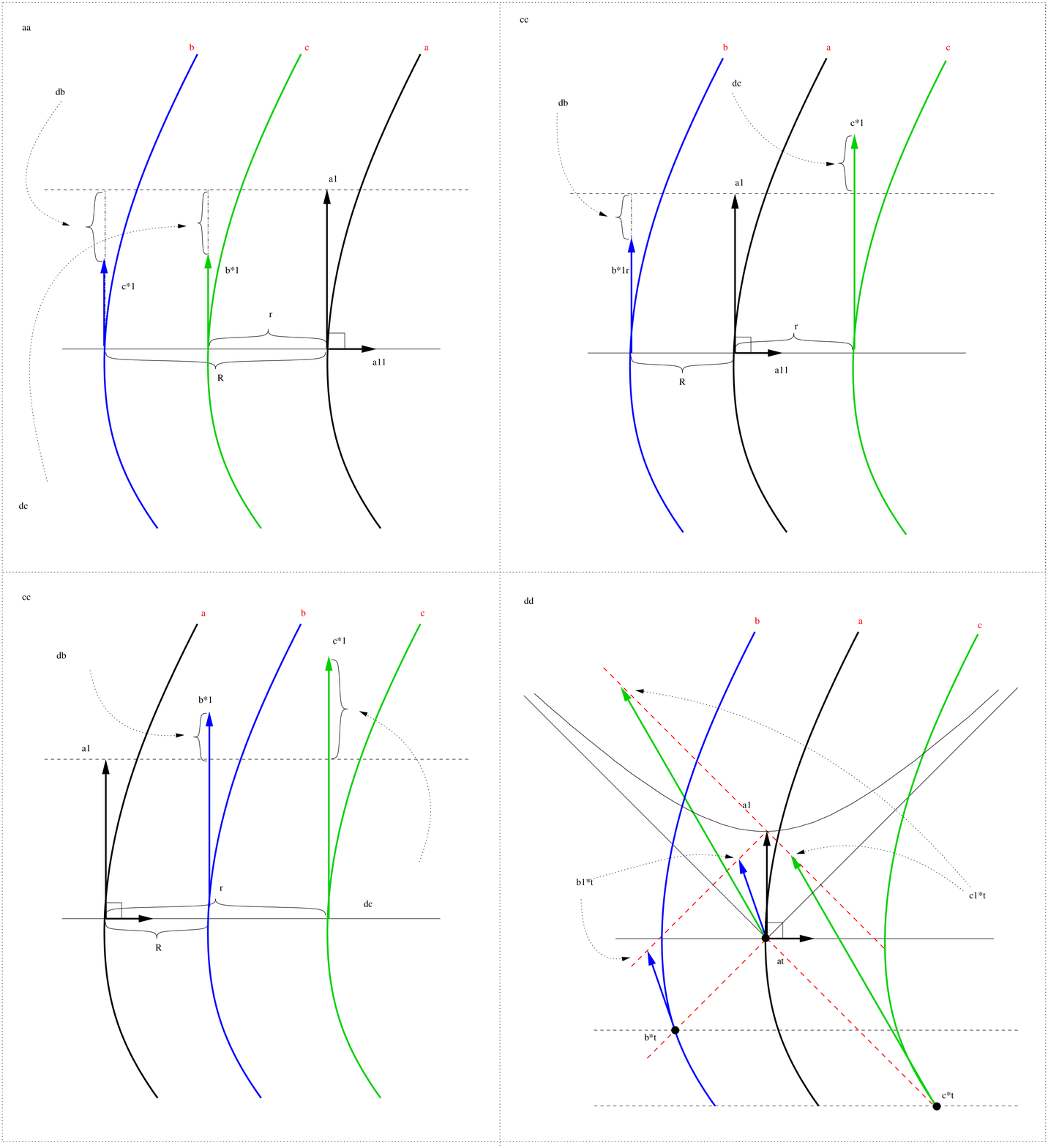}
\caption{\label{fig-minkthm} Illustration for the proof of
  Thm.~\ref{thm-mu} verifying requirement (iii) in Lem.~\ref{lem-main}}
\end{center}
\end{figure}

\begin{proof}
The proof of this theorem is based on the very same ideas and lemmas
as the proof of Thm.~\ref{thm-rad}. The only difference is that we
should use Minkowski simultaneity and Minkowski
re\-pa\-ram\-e\-tri\-za\-tion instead of radar ones, and in the proof
of Item (1) we should use Prop.~\ref{prop-mink} instead of
Prop.~\ref{prop-rad} when verifying requirement (iii) in
Lem.~\ref{lem-main}. In the proof of Item (1) of this theorem, we face
the same three cases as in the proof of Item (1) in
Thm.~\ref{thm-rad}. By (a), (b) and (c) of Fig.~\ref{fig-minkthm}, we
illustrate the proof of requirement (iii) in Lem.~\ref{lem-main} in
this three cases. Similarly, in the proof of Item (2) of this theorem,
we face the same large number of cases as in the proof of Item (2) in
Thm.~\ref{thm-rad}. By (d) of Fig.~\ref{fig-minkthm}, we illustrate the
proof of requirement (iii) in Lem.~\ref{lem-main} in one of these many
cases.  We do not go into more details here since the rest of the
proof can be put together with the help of the hints above.
\end{proof}

We have seen that gravitation (acceleration) makes ``time flow
slowly.''  However, we left the question open which feature of
gravitation (its ``magnitude'' or its ``direction'') plays a role in
this effect.  The following theorem shows that two observers, say $b$
and $c$, can feel the same gravitation while the clock of $b$ runs
slower than the clock of $c$.  Thus it is not the ``magnitude'' of the
gravitation that makes ``time flow slowly.''

\begin{thm} \label{thm-ob}
Let $d\ge 3$.
Then there is a model of \ax{AccRel}, and there
are observers $b$ and  $c$ in this model such that
$a_b(t)=a_c(t)=1$ for all $t\in\Q$, but the clock of $b$ runs slower
than the clock of $c$ as seen by $b$ by photons (or
by radar or by Minkowski simultaneity).
\end{thm}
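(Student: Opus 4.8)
\textbf{Proof plan for Theorem~\ref{thm-ob}.}
The plan is to exhibit a concrete model of \ax{AccRel} containing two uniformly accelerated observers $b$ and $c$ with acceleration $1$ whose life-curves are translates of the same hyperbola, but positioned so that one is spatially ``behind'' the other relative to their common acceleration direction; the gravitational time dilation established in Thm.~\ref{thm-rad} (or Thm.~\ref{thm-mu}) then does the rest once we check the hypotheses. First I would fix a non-Archimedean Euclidean ordered field $\mathfrak{Q}$ (or any field over which an exponential function is definable, so that uniformly accelerated world-lines exist by Prop.~\ref{prop-hyp} and the remark following it), since over such a field the hyperbola $Hyp$ admits a definable well-parametrization $\gamma(t)=\langle sh(t),ch(t),0,\ldots,0\rangle$. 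I would then take $b$ and $c$ to be observers whose life-curves according to a fixed {\it inertial} observer $m$ are, respectively, $\lc^b_m(t)=\langle sh(t),ch(t),0,\ldots,0\rangle$ and $\lc^c_m(t)=\langle sh(t),ch(t)+D,0,\ldots,0\rangle$ for a suitable positive $D$ (small enough in the non-Archimedean sense to satisfy the ``not too far behind'' clause of Thm.~\ref{thm-mu}, if that route is taken), together with an auxiliary positively accelerated observer $k$ playing the role of the spaceship's middle observer. Since $\lc^b_m$ and $\lc^c_m$ are well-parametrized timelike curves, the model exists by Rem.~\ref{Trrem} / Thm.~\ref{thm-wvcompf}: we can realize prescribed worldview-compatible transformations, and here the relevant transformations are the translations and Lorentz boosts witnessed by the co-moving {\it inertial} observers along the hyperbolas.

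Next I would verify that $a_b(t)=a_c(t)=1$ for all $t$. This is the routine computation that the relative acceleration $\fvakm(t)=(\lc^k_m)''(t)$ of a hyperbolically-moving observer with $ch,sh$ satisfying $ch''=ch$, $sh''=sh$ has Minkowski length $-1$; since $c$'s life-curve differs from $b$'s only by a constant spatial translation, its second derivative is identical, so $a_c=a_b=1$ as well. Here one uses the FOL differentiation apparatus of Section~\ref{sec-diff} and the fact that $\mu$ of a spacelike vector is the negative of its unsigned length, exactly as in the definition $a_k(t)\leteq -\mu(\fvakm(t))$. Then I would check the hypotheses of Thm.~\ref{thm-rad}: $b$, $k$, $c$ are coplanar (their world-lines according to $m$ lie in the $\txPlane$); $b$ and $c$ are at constant radar distance from $k$ according to $k$ (this follows because translated hyperbolas of the same acceleration form a rigid Rindler frame in which radar distances are time-independent — this is a standard fact but I would either cite it or verify it by the light-cone computation); $k$ is positively accelerated; and the direction of the spaceship coincides with the direction of $k$'s acceleration (both point along $\vex$). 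With these in hand, Thm.~\ref{thm-rad} yields that the clock of $b$ runs slower than the clock of $c$ as seen by $k$ (and by $b$ and $c$) by radar and by photons; a parallel application of Thm.~\ref{thm-mu} (after checking the ``not too far behind'' condition, which is where the non-Archimedean $D$ is used) gives the Minkowski-simultaneity version.

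The main obstacle I anticipate is the verification that $b$ and $c$ are genuinely at \emph{constant} radar (resp.\ Minkowski) distance from $k$ according to $k$ — i.e.\ that the configuration really is a ``spaceship'' in the technical sense of $\rship$ or $\mship$. For the uniformly accelerated Rindler wedge this is the well-known Born-rigidity property, but spelling it out in the present FOL framework requires computing, along $k$'s life-curve, the intersections of past/future light cones with $b$'s and $c$'s world-lines and showing the resulting elapsed times $\time_k$ are independent of the point on $k$'s world-line; this is a bounded but slightly fiddly hyperbolic-trigonometric calculation (made clean by the identities $ch^2-sh^2=1$, $ch(x\pm y)=ch(x)ch(y)\pm sh(x)sh(y)$). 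A secondary subtlety is choosing $k$ itself: the cleanest choice is to let $k$ be one of $b$ or $c$, or a third Rindler observer strictly between them; I would take $k$ with $\lc^k_m(t)=\langle sh(t),ch(t)+d,0,\ldots,0\rangle$ for $0<d<D$, so that automatically the direction-of-acceleration and the spaceship-direction hypotheses are satisfied and $b$ sits behind $k$ which sits behind $c$. Everything else — definability of the life-curves and of the reparametrizations, timelikeness, and the structural axioms of \ax{AccRel_0} plus \ax{CONT} — is inherited from the construction in Thm.~\ref{thm-wvcompf} together with the observation that over our chosen $\mathfrak{Q}$ the relevant sets are definable, so \ax{CONT} holds if we additionally arrange $\mathfrak{Q}$ to be, say, a real-closed field (for which \ax{CONT} is automatic) that is also non-Archimedean, e.g.\ a non-Archimedean real-closed field; such fields exist.
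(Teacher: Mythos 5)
There is a genuine gap, and it sits exactly where you flagged your ``main obstacle'': the claim that translated unit hyperbolas of equal proper acceleration form a rigid configuration is false, and with it the whole reduction to Thm.~\ref{thm-rad}/Thm.~\ref{thm-mu} collapses. Take $\lc^b_m(t)=\langle sh(t),ch(t),0,\ldots,0\rangle$ and $\lc^c_m(t)=\langle sh(t),ch(t)+D,0,\ldots,0\rangle$. A rightward photon emitted by $b$ at proper time $t_1$ satisfies $x-\tau=e^{-t_1}$ and meets $c$'s world-line only when $e^{-s}+D=e^{-t_1}$, which already fails for $t_1\geq -\ln D$; where the radar echo exists, the round-trip proper time works out to $\ln\bigl(2D+D^2+2D\,ch(s)\bigr)$ up to an additive constant, which manifestly depends on $s$. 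Likewise the Minkowski-simultaneity hyperplane of $b$ at $\beta(t)$ is the line through the origin and $\beta(t)$, it meets $c$'s world-line at $\gamma(2t)=(1+2D\,ch(t))\cdot\beta(t)/(\text{suitable normalization})$, and the Minkowski distance $2D\,ch(t)-D$ grows with $|t|$. So $\rship$ and $\mship$ are simply not spaceships in the technical sense, and no repair is possible: Born rigidity forces the rear observer to have strictly \emph{larger} proper acceleration than the front one, so any genuine radar or Minkowski spaceship is incompatible with the hypothesis $a_b(t)=a_c(t)=1$. That incompatibility is the very content of Thm.~\ref{thm-ob} (``it is not the magnitude of the acceleration that slows time''), which is why the paper cannot and does not route the proof through the spaceship theorems.

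The paper's actual argument works over $\Q=\R$ with $D=1$ and avoids the spaceship machinery entirely: for the photon case it verifies that $c$ is \emph{approaching} $b$ as seen by $b$ by photons and invokes Lem.~\ref{lem-ph}; for the radar and Minkowski cases it observes that both $\simrad_b$ and $\simmu_b$ coincide with the relation ``$\vo\in line(\vpp,\vqq)$'' and compares proper times directly along the radial lines. Two further, secondary problems with your plan: (i) Thm.~\ref{thm-rad} would in any case only give the radar statement ``as seen by $k$,'' whereas the theorem asserts it as seen by $b$; and (ii) choosing a non-Archimedean real closed field creates an existence problem rather than solving one, since by Prop.~\ref{prop-hyp} uniformly accelerated observers require an exponential-type function on $\Q$, whose existence over such fields is exactly the paper's open Question~\ref{que-unif}; the safe choice is $\R$, where \ax{CONT} holds automatically and $sh,ch$ exist. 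Your computation that $a_b=a_c=1$ is fine, but it is the only part of the plan that survives.
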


\begin{proof}
To prove the theorem, let $\Q$ be the field of real numbers and let
\begin{equation*}
\beta(t)\leteq\big(sh(t),ch(t),0,\ldots,0\big)\quad\text{ and
}\quad\gamma(t)\leteq\big(sh(t),ch(t)+1,0\ldots,0\big)
\end{equation*}
where $sh$ and $ch$ are the hyperbolic sine and cosine functions.
Since both $\beta$ and $\gamma$ are smooth and well-parametrized
timelike curves, we can easily build a model of \ax{AccRel} such that
$\lc^b_m=\beta$ and $\lc^c_m=\alpha$ for some $m\in\IOb$.  By a
straightforward calculation, we can show that
$\mu\big(\beta''(t)\big)=\mu\big(\gamma''(t)\big)=-1$ for all
$t\in\Q$.  Hence $a_b(t)=a_c(t)=1$ for all $t\in\Q$.

\begin{figure}[h!btp]
\small
\begin{center}
\psfrag{be}[l][tl]{$\beta$}
\psfrag{ga}[l][tl]{$\gamma$}
\psfrag{b}[l][tl]{$b$}
\psfrag{c}[l][tl]{$c$}
\psfrag{o}[r][r]{$\vo$}
\psfrag{p}[tr][tr]{$\vp$}
\psfrag{q}[tr][tr]{$\vq$}
\psfrag{p1}[br][br]{$\vpp'$}
\psfrag{q1}[br][br]{$\vqq'$}
\psfrag{eq}[tl][tl]{$=$}
\psfrag{e}[r][r]{$e$}
\psfrag{ph}[l][bl]{$ph$}
\psfrag{eb}[tl][tl]{$e_b$}
\psfrag{ec}[tl][tl]{$e_c$}
\includegraphics[keepaspectratio, width=0.7\textwidth]{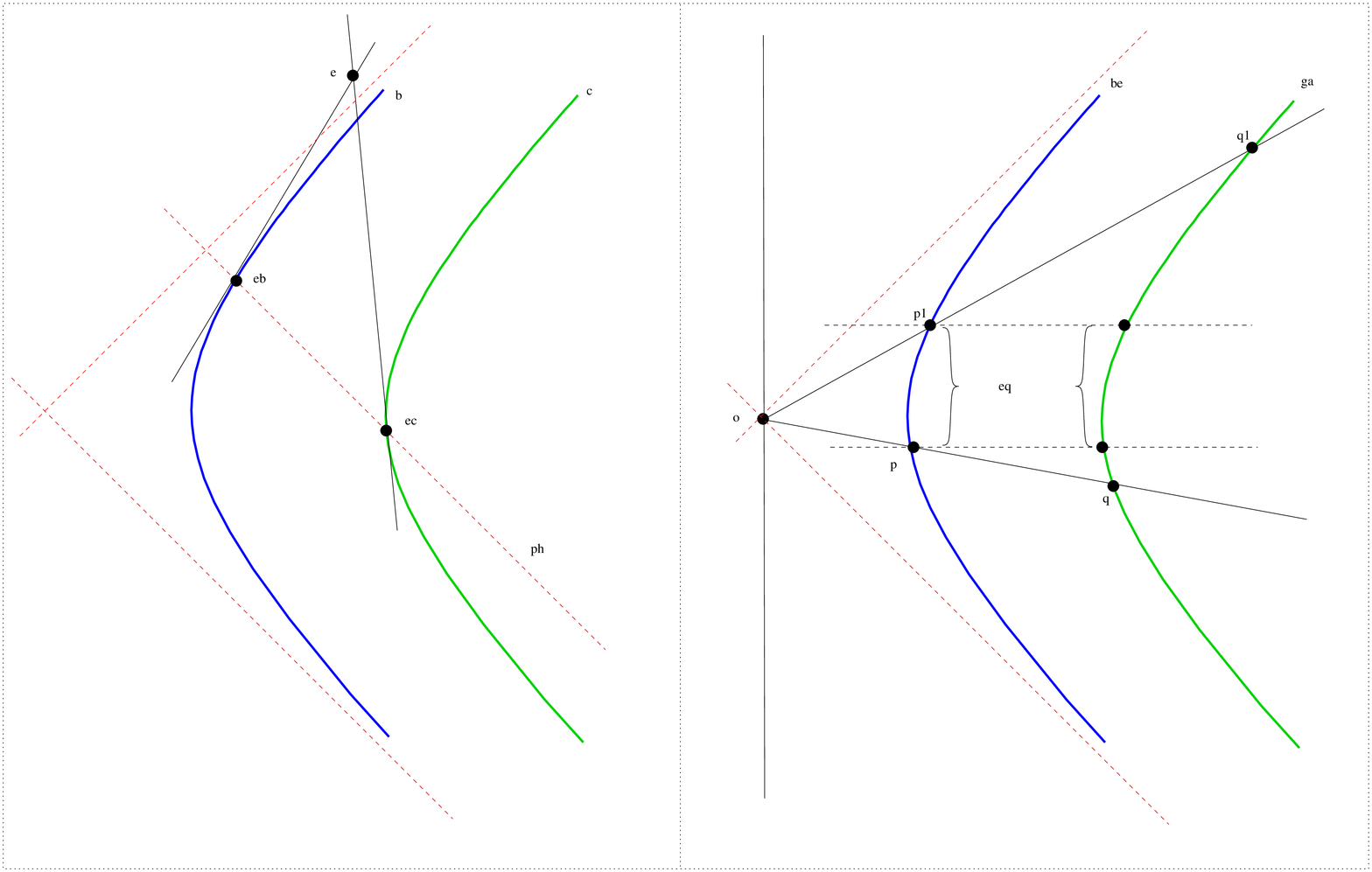}
\caption{\label{fig-unifacc} Illustration for the proof of
  Thm.~\ref{thm-ob}}
\end{center}
\end{figure}

It is easy to show that $c$ is approaching $b$ as seen by $b$ by
photons, see (a) of Fig.~\ref{fig-unifacc}.  Thus by
Lem.~\ref{lem-ph}, the clock of $b$ runs slower than the clock of $c$
as seen by $b$ by photons.  It is not difficult to show that
$ev_m(\vpp)\simrad_b ev_m(\vqq)$ iff $ev_m(\vpp)\simmu_b ev_m(\vqq)$
iff $\vo\in line(\vpp,\vqq)$.  Thus the clock of $b$ runs slower than
the clock of $c$ as seen by $b$ by both radar simultaneity and
Minkowski simultaneity, see (b) of Fig.~\ref{fig-unifacc}.
\end{proof}

Let us now prove some lemmas that were used in the proofs above. 
First let us introduce two concepts which are strongly
connected to the flow of time as seen by photons, see
Lem.~\ref{lem-ph}. We say that observer $c$ is
\df{approaching}\index{approaching} (or \df{moving away}\index{moving
  away} from) observer $b$ as seen by $b$ by photons at event $e_b$
iff the following hold
\begin{itemize}
\item $b\in e_b$,
\item for all events $e_c$ for which $c\in e_c$ and $e_b\simph_b e_c$
 hold, there is an event $e$ such that $b',c'\in e$ for every co-moving
 {\it inertial} observers $b'$ and $c'$ of $b$ at event $e_b$ and of
 $c$ at event $e_c$, respectively, and
\item $e_b$ precedes (succeeds) $e$ according to $b$,
\end{itemize}
see (b) of Fig.~\ref{figupp}. We say that $c$ is approaching (moving
away from) $b$ as seen by $b$ by photons iff it is so for every event
$e_b$ for which $b\in e_b$. The idea behind these definitions is the
following: two observers are considered approaching when they would
meet if they stopped accelerating at simultaneous events.

\begin{rem} 
Let us note that coplanar {\it inertial} observers seen by photons are
approaching each other before the event of meeting and moving away
from each other after it. This fact explains the words used for these
concepts.
\end{rem}
 
\begin{rem}
There is no direct connection between the two concepts above. For
example, it is not difficult to construct a model of \ax{AccRel} in
which there are (uniformly accelerated) observers $b$ and $c$ such
that $c$ is approaching $b$ seen by $b$ by photons while $b$ is moving
away from $c$ seen by $c$ by photons, see the proof of
Thm.~\ref{thm-ob}.
\end{rem}

\noindent
Lem.~\ref{lem-ph} can be interpreted as a refined version of the Doppler effect.

\begin{lem} \label{lem-ph}
Let $d\ge 3$.
Assume \ax{AccRel}.
Let $b$ and $c$ be coplanar observers.
Then
\begin{itemize}
\item[(1)] If $c$ is approaching $b$ as seen by $b$ by photons, the clock of $b$ runs slower
than the clock of $c$ as seen by $b$ by photons.
\item[(2)] If $c$ is moving away from $b$ as seen by $b$ by photons, the clock of $c$ runs
slower than the clock of $b$ as seen by $b$ by photons.
\end{itemize}
\end{lem}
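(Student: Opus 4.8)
The plan is to relate the photon-simultaneity picture to a one-dimensional comparison of elapsed proper times along two timelike curves, and then reduce everything to the general comparison lemma (Lem.~\ref{lem-main}) already used in the proofs of Thms.~\ref{thm-rad} and \ref{thm-mu}. First I would fix a model and coplanar observers $b$ and $c$, pick an {\it inertial} observer $m$ whose coordinate system contains $\wl_m(b)\cup\wl_m(c)$ in a vertical plane (which we may take to be the $\txPlane$), and set $\beta\leteq\lc^b_m$, $\gamma\leteq\lc^c_m$; by Thm.~\ref{thm-wp} these are definable, well-parametrized timelike curves. The statement ``$c$ is approaching $b$ as seen by $b$ by photons'' should then be unpacked: for each event $e_b$ on $b$'s world-line it picks out the unique past-directed photon hitting $b$ at $e_b$ and the event $e_c$ where that photon crossed $c$; the ``approaching'' clause says that the co-moving {\it inertial} observers of $b$ at $e_b$ and of $c$ at $e_c$ would meet in the future of $e_b$ (as seen by $b$), i.e.\ their relative velocity points so that $c$ is catching up to $b$.

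Next I would introduce the photon reparametrization $\gamma_*$ of $\gamma$ according to $\beta$ (the analogue of the radar reparametrization used in Thm.~\ref{thm-rad}), so that $\gamma_*(t)$ is the event on $\gamma$ connected to $\beta(t)$ by a past-directed photon toward $b$. By Prop.~\ref{prop-ph} (and the remark that photon sums of timelike vectors of $\ran\beta'$ stay timelike) $\gamma_*$ is again a definable timelike curve on a suitable interval. The key point is that ``the clock of $b$ runs slower than the clock of $c$ as seen by $b$ by photons'' is, by Lem.~\ref{lem-time} type bookkeeping, exactly the assertion that for any $x<y$ in the common parameter domain, the $\beta$-parameter length of the corresponding segment is strictly less than the $\gamma$-parameter length of the corresponding segment; and this is precisely the conclusion of Lem.~\ref{lem-main} applied with $\alpha=\beta$ (i.e.\ the watcher's curve), the two curves being $\beta$ itself and $\gamma_*$. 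Requirement (i) of Lem.~\ref{lem-main} holds by construction of the photon reparametrization, and (ii) is the trivial matching of endpoints coming from the events $e_b,\bar e_b,e_c,\bar e_c$.

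The substance of the argument, and the main obstacle, is verifying requirement (iii) of Lem.~\ref{lem-main}: that $\mu\big(\beta'_*(t)\big)<\mu\big(\gamma'_*(t)\big)$ for all $t$ in the relevant interval --- here $\beta_*=\beta$ so the left side is $1$, and we need $\mu\big(\gamma'_*(t)\big)>1$, i.e.\ $\gamma_*$ is ``dilated.'' This is where the hypothesis that $c$ is approaching $b$ enters: approaching means the relative velocity of $c$ with respect to (the co-moving frame of) $b$ at the relevant event has a component toward $b$ with the right sign, and a short computation with the photon sum --- exactly the kind of picture drawn in Fig.~\ref{fig-radthmph} --- shows this forces $\mu\big(\gamma'_*(t)\big)>\mu\big(\gamma'(t)\big)=1$. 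Item (2) is obtained by the symmetric argument with the roles of $b$ and $c$ interchanged, or equivalently by reversing the inequality in the approaching/moving-away dichotomy. I would expect the only delicate part to be making the ``approaching'' definition yield the strict inequality uniformly on the interval; this is essentially the same lemma-chasing (via Prop.~\ref{prop-ph} and Lem.~\ref{lem-vmon}) as in the already-completed cases of Thm.~\ref{thm-rad}(2), so no genuinely new idea is needed, only the bookkeeping of which photon goes in which direction.
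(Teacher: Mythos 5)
Your proposal follows essentially the same route as the paper's proof: choose an inertial $m$ putting both world-lines in a vertical plane, set $\beta=\beta_*=\lc^b_m$, $\gamma=\lc^c_m$, take $\gamma_*$ to be the photon reparametrization of $\gamma$ according to $\beta$, and feed these into Lem.~\ref{lem-main}, with the approaching hypothesis supplying requirement (iii) in the form $\mu\big(\gamma'_*(t)\big)>1=\mu\big(\beta'_*(t)\big)$ via Prop.~\ref{prop-ph}. The only detail you gloss over is that the paper first uses the approaching hypothesis to conclude $\ran\beta\cap\ran\gamma=\emptyset$, which is the hypothesis needed for Prop.~\ref{prop-ph}(2) to yield that $\gamma_*$ is a (differentiable) timelike curve.
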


\begin{figure}[h!btp]
\small
\begin{center}
\psfrag{g}[b][b]{$\gamma=\lc^c_m$}
\psfrag{b}[l][l]{$\beta=\beta_*=\lc^b_m$}
\psfrag{b1}[bl][bl]{$\beta'(t)$}
\psfrag{bt}[l][l]{$\beta(t)$}
\psfrag{g1}[tl][tl]{$\gamma'(\bar{t}\,)$}
\psfrag{g1*}[tl][tl]{$\gamma_*'(t)$}
\psfrag{gt}[tl][tl]{$\gamma(\bar{t}\,)$}
\includegraphics[keepaspectratio, width=0.8\textwidth]{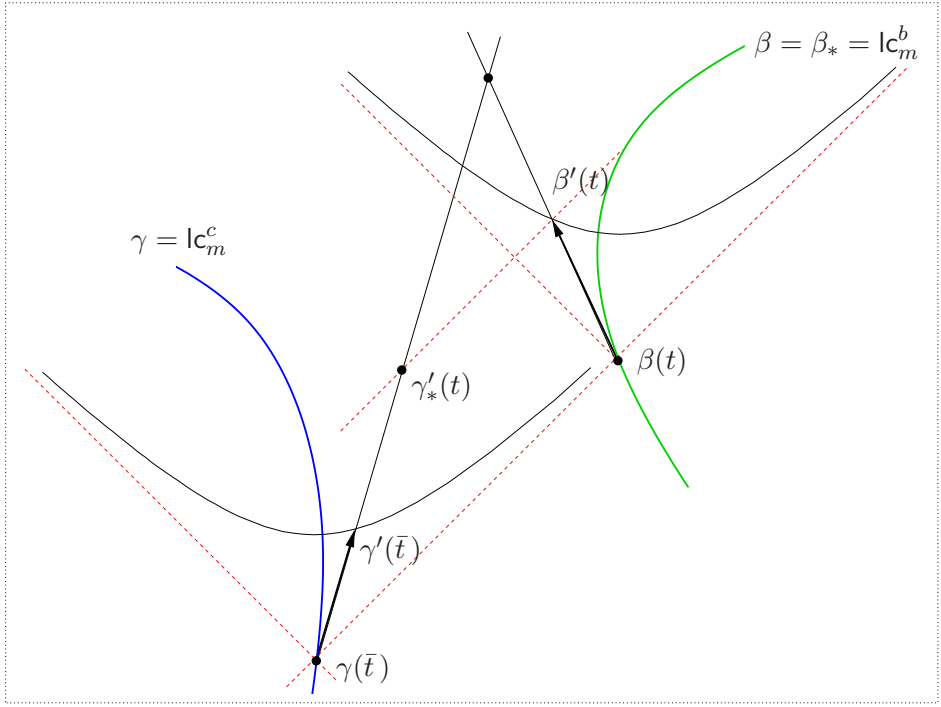}
\caption{\label{fig-thmph} Illustration for the proof of
  Lem.~\ref{lem-ph}}
\end{center}
\end{figure}
 
\begin{proof}
To prove Item (1), let $b$ and $c$ be coplanar observers, and let
$e_b$, $\bar{e}_b$, $e_c$ and $\bar{e}_c$ be such events that $b\in
e_b\cap\bar{e}_b$, $c\in e_c\cap \bar{e}_c$ and $e_b\simph_b e_c$,
$\bar{e}_b\simph_b\bar{e}_c$. Let us suppose that $c$ is approaching
$b$ as seen by $b$ by photons. We have to prove that
$\time_b(e_b,\bar{e}_b)<\time_c(e_c,\bar{e}_c)$. Since $c$ and $b$ are
coplanar, there is an {\it inertial} observer $m\in\IOb$ such that
$\wl_m(c)\cup \wl_m(b)$ is a subset of a vertical plane. Let $m$ be
such an {\it inertial} observer. We are going to apply
Lem.~\ref{lem-main}. To do so, let $\beta=\beta_*=\lc^b_m$,
$\gamma=\lc^c_m$, and let $\gamma_*$ be the photon reparametrization
of $\gamma$ according to $\beta$. By Thm.~\ref{thm-wp}, $\beta=\beta_*$
and $\gamma$ are definable and well-parametrized timelike
curves. Without losing generality, we can assume that $\beta'\upp\vet$
and $\gamma'\upp\vet$. It is easy to see that $\wl_m(b)\cap
\wl_m(c)=\emptyset$ since $c$ is approaching $b$ as seen by $b$. Thus
$\ran\beta\cap\ran\gamma=\emptyset$ since $\ran\beta=\wl_m(b)$ and
$\ran\gamma=\wl_m(c)$ by Item \eqref{item-rantr} in
Prop.~\ref{prop-lc}. Thus $\gamma_*$ is also a definable timelike curve
by Prop.~\ref{prop-ph}.  Requirement (i) in Lem.~\ref{lem-main} is clear
by the definition of the photon reparametrization. It is also clear
that there are $x_\beta,y_\beta\in\dom\beta$, $x_\gamma,y_\gamma\in
\dom\gamma$ and $x,y\in\dom\beta_*\cap\dom\gamma_*$ such that
$\beta(x_\beta)=\loc_m(e_b)=\beta_*(x)$,
$\beta(y_\beta)=\loc_m(\bar{e}_b)=\beta_*(y)$ and
$\gamma(x_\gamma)=\loc_m(e_c)=\gamma_*(x)$,
$\gamma(y_\gamma)=\loc_m(\bar{e}_c)=\gamma_*(y)$. Hence requirement
(ii) in Lem.~\ref{lem-main} also holds. Since $c$ is approaching $b$ as
seen by $b$ by photons, the tangent lines of $\beta_*$ and $\gamma_*$
at any $t\in(x,y)$ intersect in the future of $\beta_*(t)$ and
$\gamma_*(t)$. Thus
$\mu\big(\beta'_*(t)\big)=1<\mu\big(\gamma'_*(t)\big)$ for all $t\in
(x,y)$ by Prop.~\ref{prop-ph}, see Fig.~\ref{fig-thmph}; and that is
requirement (iii) in Lem.~\ref{lem-main}. Hence by Lem.~\ref{lem-main},
we have that $|x_\beta-y_\beta|<|x_\gamma-y_\gamma|$.  Consequently,
$\time_b(e_b,\bar{e}_b)<\time_c(e_c,\bar{e}_c)$ since by
Lem.~\ref{lem-time}, $\time_i(e_i,\bar{e}_i)=|x_i-y_i|$ for all
$i\in\setopen b,c\setclose$. So Item (1) is proved.

The proof of (2) is similar.
Hence we omit it.
\end{proof}

Lem.~\ref{lem-time} states that the time measured according to the
parametrization of the life-curve $\lc^k_m$ between two parameter points
and the time measured by observer $k$ between the corresponding events is the
same if \ax{AxSelf_0} and  \ax{AxPh}  are assumed and  $m\in\IOb$.

\begin{lem}
\label{lem-time}
Assume \ax{AxSelf_0}, \ax{AxPh}, and let $m\in\IOb$.
Let $k\in \Ob$.
Let $x,y\in\dom \lc^k_m$.
Then
\begin{equation}\label{eq-time}
\time_k\big(ev_m\big(\lc^k_m(x)\big),ev_m\big(\lc^k_m(y)\big)\big)=|x-y|.
\end{equation}
\end{lem}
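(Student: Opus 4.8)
The plan is to unwind both sides of \eqref{eq-time} using the defining property of the life-curve $\lc^k_m$ together with \ax{AxSelf_0}, reducing everything to a statement about the coordinates $\loc_k$ assigns in $k$'s own worldview. First I would fix $x,y\in\dom\lc^k_m$ and, invoking the definition of $\lc^k_m$, pick coordinate points $\vq_x,\vq_y\in Cd_k$ such that $k\in ev_k(\vq_x)=ev_m\big(\lc^k_m(x)\big)$ with $(q_x)_\tau=x$, and similarly $k\in ev_k(\vq_y)=ev_m\big(\lc^k_m(y)\big)$ with $(q_y)_\tau=y$. Set $e_1\leteq ev_m\big(\lc^k_m(x)\big)$ and $e_2\leteq ev_m\big(\lc^k_m(y)\big)$. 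The goal is then to show $\time_k(e_1,e_2)=|x-y|$, i.e.\ that $e_1$ and $e_2$ are localized by $k$ with $\time_k(e_1)=x$ and $\time_k(e_2)=y$.

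The key step is to establish that $e_1$ and $e_2$ are indeed \emph{localized} by $k$, so that $\time_k(e_1)$ and $\time_k(e_2)$ are defined. Here I would use \ax{AxSelf_0}: since $k\in e_1=ev_k(\vq_x)$, \ax{AxSelf_0} gives $(\vq_x)_\sigma=\vo$, and if $\vq_x'$ is any other coordinate point with $ev_k(\vq_x')=e_1\ni k$, then again $(\vq_x')_\sigma=\vo$. But $k$ is an observer, so by \ax{AxPh} applied to $m$... — actually the cleaner route is: $\loc_k(e_1)$ need not a priori be unique, so instead of claiming localization outright I would appeal to Conv.~\ref{conv-eq} and Item~\eqref{item-trfiff} of Prop.~\ref{prop-lc}. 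The cleanest honest argument: the lemma's hypotheses are exactly \ax{AxSelf_0} and \ax{AxPh} with $m\in\IOb$, so by Prop.~\ref{prop-sr0}\eqref{item-crd} $ev_m$ is injective, hence $\loc_m$ is a function and $\lc^k_m=\iota\circ w^k_m$ is a function by Lem.~\ref{lem-lc} together with Item~\eqref{item-trfunct} of Prop.~\ref{prop-lc}. Thus $\lc^k_m(x)$ and $\lc^k_m(y)$ are well-defined single points, and $\lc^k_m(x)_\tau=x$, $\lc^k_m(y)_\tau=y$ by the very definition of $\lc^k_m$ (the time coordinate in $k$'s worldview is the parameter).

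Now I would compute $\time_k(e_1,e_2)$. We have $\lc^k_m=\iota\circ w^k_m$, so $w^k_m(\vq_x)=x\cdot\vet=\iota(x)$ and hence $ev_m\big(\lc^k_m(x)\big)=ev_m\big(w^k_m(\vq_x)\big)=ev_k(\vq_x)=e_1$, with $\loc_k(e_1)=\vq_x$ since $(\vq_x)_\sigma=\vo$ by \ax{AxSelf_0} and $ev_k(\vq_x)\ne\emptyset$ (it contains $k$), and uniqueness of this coordinate follows because any coordinate point coordinatizing $e_1$ for $k$ has space component $\vo$ (by \ax{AxSelf_0}, since $k\in e_1$) and its time component is forced by $\lc^k_m$ being a function together with Prop.~\ref{prop-lc}\eqref{item-trfiff}(ii). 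Therefore $\time_k(e_1)=\loc_k(e_1)_\tau=(\vq_x)_\tau=x$ and likewise $\time_k(e_2)=y$, giving $\time_k(e_1,e_2)=|\time_k(e_1)-\time_k(e_2)|=|x-y|$, which is \eqref{eq-time}.

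The main obstacle I anticipate is the localization bookkeeping: showing that $e_1$ and $e_2$ have a \emph{unique} coordinate according to $k$ (so that $\loc_k$, $\time_k$ are defined on them) without inadvertently assuming more than \ax{AxSelf_0}, \ax{AxPh}, and $m\in\IOb$. The subtlety is that $k$ is an arbitrary, possibly accelerated, observer, so $ev_k$ need not be injective in general — but \ax{AxSelf_0} pins down the space component to $\vo$, and the fact that $x,y\in\dom\lc^k_m$ (with $\lc^k_m$ a function, by the cited results) pins down the time component, which together is enough. I would make sure to cite Prop.~\ref{prop-lc}\eqref{item-trfiff} and \eqref{item-trfunct}, Lem.~\ref{lem-lc}, and Conv.~\ref{conv-eq} precisely at this point rather than hand-waving. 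Everything else is a short unwinding of definitions.
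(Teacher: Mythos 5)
Your proof is correct and follows essentially the same route as the paper's: establish that $\lc^k_m$ is a function via Item~\eqref{item-trfunct} of Prop.~\ref{prop-lc}, use \ax{AxSelf_0} to localize the two events in $Cd_k$, and read off from the definition of the life-curve that their time coordinates according to $k$ are $x$ and $y$. The paper's version is simply a terser rendering of the same argument (it dispatches the localization step with the single sentence that by \ax{AxSelf_0} both events have unique coordinates in $Cd_k$), so your additional bookkeeping adds detail but no new idea.
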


\begin{proof}
By \eqref{item-trfunct} in Prop.~\ref{prop-lc}, $\lc^k_m$ is a function.
Thus $\lc^k_m(x)$ and $\lc^k_m(y)$ are meaningful.
We have that $k\in ev_m\big(\lc^k_m(x)\big)\bigcap ev_m\big(\lc^k_m(y)\big)$ by the definition of $\lc^k_m$.
Thus by \ax{AxSelf_0}, both events $ev_m\big(\lc^k_m(x)\big)$ and $ev_m\big(\lc^k_m(y)\big)$ have unique coordinates in $Cd_k$.
Thus the left hand side of equation \eqref{eq-time} is defined and equal to 
\begin{equation*}
\Big|\loc_k\big(ev_m\big(\lc^k_m(x)\big)\big)_\tau-\loc_k\big(ev_m\big(\lc^k_m(y)\big)\big)_\tau \Big|
\end{equation*} 
by definition.  However, by the definition of $\lc^k_m$,
\begin{equation*}\loc_k\big(ev_m\big(\lc^k_m(x)\big)\big)_\tau=x \quad\text{ and }\quad
\loc_k\big(ev_m\big(\lc^k_m(y)\big)\big)_\tau=y.
\end{equation*}
Hence equation \eqref{eq-time} holds.
\end{proof}

None of the axioms introduced so far require the existence of
accelerated observers. Our following axiom schema says that every
definable timelike curve is the world-line of an observer. Since there
are many timelike curves that are not lines, that will ensure the
existence of many accelerated observers since from \ax{AxSelf_0},
\ax{AxPh} and \ax{AxEv} it follows that the world-lines of {\it
  inertial} observers are lines, see, e.g., Thm.~\ref{thm-poi}.

We say that a \df{function $f$ is}
(parametrically) \df{definable by} $\psi(x,\vy, \vz\,)$ iff there
is an $\va\in U^n$ such that $f(b)=\vp \iff \psi(b,\vp,\va\,)$ is true in
$\mathfrak{M}$.\index{definable function}
Let $\psi$ be a FOL formula of our language.
\begin{description}
\item[\Ax{Ax\exists Ob_\psi}]\index{\ax{Ax\exists Ob_\psi}} If a
  function that is parametrically definable by $\psi$ is a timelike curve,
  then there is an observer whose world-line is the range of this
  function:
\end{description}
\begin{equation*}\index{\ax{COMPR}}\label{compr}
\Ax{COMPR}\leteq \Setopen \ax{Ax\exists Ob_\psi}\setmid \psi \text{
is a FOL formula of our language}\Setclose.
\end{equation*}
A precise formulation of \ax{COMPR} can be obtained from
that of its analogue in \cite{logst}.

The following three theorems say that the clocks can run arbitrarily
slow or fast, as seen by the three different methods.

\begin{thm} \label{thm-Rad}
Let $d\ge 3$.  Assume \ax{AccRel} and \ax{COMPR}.  Let $m$ be a
positively accelerated observer such that $\dom \lc^m_m=\Q$ and let
$e$ and $e'$ be two events such that $e\neq e'$ and $m\in e\cap e'$.
Then for all $\lambda \in \Q^+$, there are an 
observer $b$ and events $e_b$ and $e'_b$ such that $b\in e_b\cap
e'_b$, $e\simrad_m e_b$, $e'\simrad_m e'_b$ and
$\time_b(e_b,e_b')=\lambda\cdot \time_m(e,e')$.
\end{thm}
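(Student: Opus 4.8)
The plan is to work inside the worldview of a fixed inertial observer, exploit \ax{COMPR} (every definable timelike curve is a world-line), and engineer $b$'s world-line to be a definable timelike curve of exactly the prescribed Minkowski length joining two events that are radar-simultaneous for $m$ to $e$ and to $e'$. First I would fix an inertial observer $m_0$ (one exists: apply \ax{AxCmv} to $m$ at an event it encounters, which exists by \ax{AxSelf^+_0}), and let $\alpha$ be the life-curve $\lc^m_{m_0}$. By Thm.~\ref{thm-wp}, $\alpha$ is a well-parametrized timelike curve; by Item \eqref{item-domtr} in Prop.~\ref{prop-lc} together with \ax{AxCmv} and $m_0\in\IOb$ we get $\dom\alpha=\dom\lc^m_m=\Q$; and $Cd_{m_0}=\Q^d$ by Prop.~\ref{prop-sr0} (via \ax{AxPh}). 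Replacing $m_0$ by its composite with a suitable Poincar\'e transformation (legitimate by Thm.~\ref{thm-poi}) and normalizing as in the proof of Thm.~\ref{thm-rad}, I may assume $\alpha$ lies in the $\txPlane$, is future-directed, and has monotone velocity. Writing $e=ev_{m_0}\big(\alpha(s_0)\big)$ and $e'=ev_{m_0}\big(\alpha(s_1)\big)$ with $s_0<s_1$, axiom \ax{AxSelf_0} and well-parametrization give $\time_m(e,e')=s_1-s_0$.

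Next I would unwind radar simultaneity. Since $e,e'$ lie on $m$'s world-line and a timelike curve has no lightlike chords, the definition of $\simrad$ forces any $e_b$ with $e_b\simrad_m e$ to lie on the radar-simultaneity surface of $m$ through $e$: writing points in null coordinates $u=p_\tau-p_2$, $v=p_\tau+p_2$ and putting $U=\alpha_1-\alpha_2$, $V=\alpha_1+\alpha_2$ (both strictly increasing, with $U'V'\equiv1$), this surface is, inside the $\txPlane$, the union of the two branches $\{(U(s_0-r),V(s_0+r)):r\ge0\}$ and $\{(U(s_0+r),V(s_0-r)):r\ge0\}$, and analogously through $e'$ with $s_1$ in place of $s_0$. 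For $R\ge0$ I set $P_R$ at null coordinates $(U(s_0-R),V(s_0+R))$ and $Q_R$ at $(U(s_1-R),V(s_1+R))$; since $U,V$ increase and $s_0<s_1$, both null components of $Q_R-P_R$ are positive, so $Q_R-P_R$ is future-timelike with $\mu(Q_R-P_R)^2=\big(U(s_1-R)-U(s_0-R)\big)\big(V(s_1+R)-V(s_0+R)\big)$. The key claim is that $R$ can be chosen so that $\mu(Q_R-P_R)\ge\lambda(s_1-s_0)$; for $\lambda\le1$ it suffices to take $R=0$, so that $P_0=\alpha(s_0)$, $Q_0=\alpha(s_1)$, and to use the reverse triangle inequality for Minkowski length, $\mu\big(\alpha(s_1)-\alpha(s_0)\big)\ge s_1-s_0$. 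Granting the claim, I would join $P_R$ to $Q_R$ by a \emph{definable} well-parametrized timelike curve $\beta$ whose Minkowski length equals \emph{exactly} $\lambda(s_1-s_0)$: in a one-parameter definable family of timelike curves from $P_R$ to $Q_R$, ranging between the straight segment (length $\mu(Q_R-P_R)$) and arbitrarily short oscillating curves, the length is a continuous definable function of the parameter, so by \ax{CONT} (the intermediate-value principle used as ``\ax{CONT}-Bolzano'' in the proof of Thm.~\ref{thmTwp}) it attains every value in $(0,\mu(Q_R-P_R)]$, in particular $\lambda(s_1-s_0)$. By \ax{COMPR}, i.e.\ \ax{Ax\exists Ob_\psi} for a formula $\psi$ parametrically defining $\beta$, there is an observer $b$ with $\wl_{m_0}(b)=\ran\beta$.

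It then remains to verify the conclusions with $e_b=ev_{m_0}(P_R)$ and $e'_b=ev_{m_0}(Q_R)$. Since $P_R,Q_R\in\ran\beta=\wl_{m_0}(b)$, we have $\W(m_0,b,P_R)$ and $\W(m_0,b,Q_R)$, that is $b\in e_b\cap e'_b$. By \ax{AxEvTr} and Item \eqref{item-rantr} in Prop.~\ref{prop-lc}, $\ran\lc^b_{m_0}=\wl_{m_0}(b)=\ran\beta$, and by Thm.~\ref{thm-wp} the curve $\lc^b_{m_0}$ is well-parametrized; hence $\lc^b_{m_0}$ and $\beta$ differ only by an affine reparametrization, so $\time_b(e_b,e'_b)$ equals the Minkowski length of $\beta$, namely $\lambda(s_1-s_0)=\lambda\cdot\time_m(e,e')$. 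Finally $e_b\simrad_m e$ and $e'_b\simrad_m e'$: by construction $P_R$ and $Q_R$ lie on the radar-simultaneity branches through $e$ and $e'$, and every photon demanded by the definition of $\simrad$ exists by \ax{AxPh} applied to $m_0$ (any two lightlike-separated points of $\Q^d$ lie on a common photon); one takes the reference event to be $e$ (respectively $e'$), and the clause ``$\tilde e_i\ne\hat e_i$ or $e_i=e$'' holds because for $R>0$ the two reflection events on $m$ are distinct, while for $R=0$ one has $e_i=e$.

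The main obstacle will be the ``key claim'': forcing $\mu(Q_R-P_R)$ to be large, equivalently making $U(s_1-R)-U(s_0-R)$ and $V(s_1+R)-V(s_0+R)$ simultaneously large by choosing $R$. Geometrically this is the statement that the null rays $m$ sends into its past and into its future spread out without bound, which is exactly where positive acceleration of $m$ must be used, via a quantitative estimate on how the rapidity of $m$ accumulates along its bi-infinite world-line (equivalently, on the ``radar reparametrization'' of a copy of $\alpha$ at radar distance $R$, whose speed is governed by $\exp$ of the rapidity increment over a radar interval). A secondary technical point is the sub-lemma used above: between two timelike-separated points there is a definable well-parametrized timelike curve of each Minkowski length in $(0,\mu(\text{separation})]$; this should follow from the same \ax{CONT}-based continuity arguments that underlie the reparametrization lemmas and Prop.~\ref{prop-hyp}.
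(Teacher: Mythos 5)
The paper states Thm.~\ref{thm-Rad} without giving any proof, so there is nothing to compare your argument against; I can only assess it on its own terms. The architecture you choose (pass to an inertial $m_0$, describe the radar class of an event on $m$'s world-line via intersections of light cones of $\alpha=\lc^m_{m_0}$, pick radar-simultaneous partners $P_R,Q_R$ at large Minkowski distance, then invoke \ax{COMPR} on a definable timelike curve of prescribed proper length joining them) is the natural one. But the step you yourself flag as ``the main obstacle'' is not merely unproved --- the way you propose to prove it, it is false. Writing $\varphi$ for the rapidity of $\alpha$, so that $U'=e^{-\varphi}$, $V'=e^{\varphi}$ and the acceleration is $|\varphi'|$, your formula reads $\mu(Q_R-P_R)^2=\big(\int_{s_0-R}^{s_1-R}e^{-\varphi}\big)\big(\int_{s_0+R}^{s_1+R}e^{\varphi}\big)$, and this is unbounded in $R$ precisely when $\varphi$ itself is unbounded, i.e.\ when the total rapidity accumulated along at least one half of the world-line is infinite. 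Positive acceleration gives only $\varphi'>0$ everywhere, not $\int\varphi'=\infty$: an observer whose acceleration decays integrably has $\varphi$ strictly increasing but bounded, and then $\mu(Q_R-P_R)$ is bounded uniformly in $R$ on both branches and for mixed branches (where timelikeness forces the two radar radii to sum to less than $s_1-s_0$). Such an observer is realizable over $\R$ (Rem.~\ref{Trrem}, Thm.~\ref{thm-wvcompf}; \ax{CONT} is automatic over $\R$), and since by Thm.~\ref{thmFtwp} no observer measures more proper time between two events than the Minkowski distance of their coordinates, for large $\lambda$ there would then be no admissible $b$ at all. So your key claim needs the extra hypothesis that the rapidity of $m$ is unbounded (automatic for uniformly accelerated observers); you must either derive that from the stated assumptions --- I do not see how --- or record it as a missing hypothesis.

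Two further points are glossed over. First, a positively accelerated observer in $d\ge3$ need not be coplanar with anything (cf.\ the spiral in Example~\ref{example-wp}), so ``I may assume $\alpha$ lies in the $\txPlane$'' is not a legitimate normalization, and the whole $U,V$ bookkeeping is specific to the planar case; only your $\lambda\le1$ argument (with $R=0$ and the reverse triangle inequality) survives without it. Second, the sub-lemma producing a \emph{definable} timelike curve from $P_R$ to $Q_R$ of prescribed proper length is delicate over a general real closed field: proper length is an integral, hence not automatically a definable, continuous function of a deformation parameter (Prop.~\ref{prop-hyp} shows that even well-parametrizing a single hyperbola definably already requires a definable exponential). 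The workable route is a broken-geodesic path, whose length is a finite sum of Minkowski lengths and hence definable and to which Thm.~\ref{thmJeq} applies segmentwise --- but the corners must then be rounded to meet the differentiability required of a timelike curve in \ax{Ax\exists Ob_\psi}, with the rounding's contribution to the length controlled. None of this refutes the overall strategy, but as written the proposal establishes the conclusion only for $\lambda\le1$, and for $\lambda>1$ it rests on a claim that does not follow from positive acceleration.
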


\begin{thm}\label{thm-Mu}
Let $d\ge 3$.
Assume \ax{AccRel} and \ax{COMPR}.
 Let $m$ be a
uniformly accelerated observer and let $e$ and $e'$ be two events  such that $e\neq e'$ and
$m\in e\cap e'$.
Then for all $\lambda \in \Q^+$, there are an
observer $b$ and events $e_b$ and $e'_b$ such that $b\in e_b\cap
e'_b$, $e\simmu_m e_b$, $e'\simmu_m e'_b$ and
$\time_b(e_b,e_b')=\lambda\cdot \time_m(e,e')$.
\end{thm}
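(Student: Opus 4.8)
The plan is to follow the strategy behind Thm.~\ref{thm-Rad}, but to exploit the rigidity of \emph{uniform} acceleration: according to any {\it inertial} observer the world-line of a uniformly accelerated observer is an arc of a Minkowski hyperbola, and the Minkowski-simultaneity classes of that observer are exactly the rays through the centre of that hyperbola. So I would fix an {\it inertial} observer $h$ (one exists: $m$ encounters $e$ by \ax{AxEvTr}, hence has a co-moving {\it inertial} observer by \ax{AxCmv}) and pass to $h$'s coordinates. By Rem.~\ref{rem-axacc} and Prop.~\ref{prop-sr0} the events $e,e'$ are localized by $h$; put $\vp\leteq\loc_h(e)$ and $\vp'\leteq\loc_h(e')$, which lie on $\wl_h(m)$. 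By Thm.~\ref{thm-wp} the life-curve $\beta\leteq\lc^m_h$ is a definable well-parametrized timelike curve, and as $m$ is uniformly accelerated, say with acceleration $a\in\Q^+$, the remark following Prop.~\ref{prop-hyp} yields a Poincar{\'e} transformation $P$ with $\wl_h(m)=\ran\beta$ an arc of one branch of $P[\tfrac1a\cdot Hyp]$; let $O\leteq P(\vo)$ be its centre. Exactly as in the proof of Prop.~\ref{prop-hyp}, the Minkowski-normal of $\beta$ at $\beta(s)$ passes through $O$, i.e.\ $\beta(s)-O\mort\beta'(s)$. Combined with Thm.~\ref{thm-poi} and Prop.~\ref{prop-comdiff} (in $h$'s coordinates a co-moving {\it inertial} observer of $m$ at an event $f\ni m$ has a Poincar{\'e} worldview transformation whose linear part carries the time axis onto the direction of $\beta'$ at $f$), this shows: any event whose $h$-coordinate lies on the ray from $O$ through $\vp$ inside the common vertical plane of $\wl_h(m)$ is $\simmu_m$-related to $e$, and likewise for $\vp'$ and $e'$.

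Next I would set $\rho\leteq\lambda/a\in\Q^+$ and let $D(\vec{x})\leteq O+\lambda(\vec{x}-O)$ be the dilation about $O$ by factor $\lambda$; since $P$ is affine with Lorentz linear part, $D$ maps $P[\tfrac1a\cdot Hyp]$ onto $P[\rho\cdot Hyp]$ and fixes every ray from $O$. The relevant branch of $P[\rho\cdot Hyp]$ is the range of $\gamma(t)\leteq P\big(\langle\rho t,\ \rho\sqrt{1+t^2},\ 0,\dots,0\rangle\big)$ for $t\in\Q$; this is a definable curve (its parameters being the entries of $P$ and the quantity $\rho$), it is differentiable, and its derivative is timelike since $|t|/\sqrt{1+t^2}<1$ and a Lorentz map preserves timelikeness. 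Hence $\gamma$ is a definable timelike curve, and the instance of \ax{COMPR} for a formula defining $\gamma$ provides an observer $b$ with $\wl_h(b)=\ran\gamma$. Putting $\vq\leteq D(\vp)$, $\vq'\leteq D(\vp')$, $e_b\leteq ev_h(\vq)$ and $e'_b\leteq ev_h(\vq')$, we get $\vq,\vq'\in\ran\gamma=\wl_h(b)$, so $b\in e_b\cap e'_b$, and $e\simmu_m e_b$, $e'\simmu_m e'_b$ by the ray description of $m$'s Minkowski-simultaneity obtained above (with $e$, $e'$ themselves as witnessing events, since $\vq-\vp=(1-\lambda)(\vp-O)$ lies on the ray $O\vp$).

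It then remains to compute $\time_b(e_b,e'_b)$. Consider $\tilde\beta(s)\leteq D\big(\beta(s/\lambda)\big)$; then $\tilde\beta'(s)=\beta'(s/\lambda)$, so $\mu\big(\tilde\beta'(s)\big)=1$ and $\ran\tilde\beta=D[\wl_h(m)]\subseteq\wl_h(b)$, i.e.\ $\tilde\beta$ well-parametrizes an arc of $\wl_h(b)$. By Thm.~\ref{thm-wp}, $\lc^b_h$ is also a well-parametrized timelike curve with range $\wl_h(b)$, hence on this arc $\lc^b_h$ and $\tilde\beta$ differ only by an affine reparametrization of slope $\pm1$. If $\beta(s_e)=\vp$ and $\beta(s_{e'})=\vp'$, then $\tilde\beta(\lambda s_e)=D(\vp)=\vq$ and $\tilde\beta(\lambda s_{e'})=\vq'$, so the $\lc^b_h$-parameters of $e_b$ and $e'_b$ differ in absolute value by $\lambda|s_e-s_{e'}|$. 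Applying Lem.~\ref{lem-time} twice (legitimate since $h\in\IOb$ and \ax{AxSelf_0}, \ax{AxPh} hold, with $ev_h(\vp)=e$, $ev_h(\vp')=e'$, $ev_h(\vq)=e_b$, $ev_h(\vq')=e'_b$) gives $\time_m(e,e')=|s_e-s_{e'}|$ and $\time_b(e_b,e'_b)=\lambda|s_e-s_{e'}|$, that is, $\time_b(e_b,e'_b)=\lambda\cdot\time_m(e,e')$, as required.

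The step I expect to be the main obstacle is the one already needed for Thm.~\ref{thm-Rad}: making ``uniform acceleration $=$ hyperbolic motion'' rigorous over an arbitrary Euclidean ordered field instead of just over $\R$ — that is, extracting the Poincar{\'e} transformation $P$ and the scalar $a$ from $\lc^m_h$ using only $a_m\equiv a$ and the fact that $\lc^m_h$ is a well-parametrized timelike curve (which, after differentiating $\mu(\beta'(s))=1$ and using $\mu(\beta''(s))=-a$, forces the linear ODE solved by the hyperbolic sine/cosine pair exactly as in the proof of Prop.~\ref{prop-hyp}), and then tracking definability and the domains of $\lc^m_h$ and $\lc^b_h$ carefully enough that \ax{COMPR} genuinely yields an observer with the claimed world-line and that $e_b,e'_b$ really lie on it. Everything else — the elementary Minkowski geometry of concentric hyperbolas and the parameter bookkeeping — is routine given Thms.~\ref{thm-wp}, \ref{thm-poi} and Lem.~\ref{lem-time}.
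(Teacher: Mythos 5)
The paper states Thm.~\ref{thm-Mu} without proof (as it does Thms.~\ref{thm-Rad} and \ref{thm-Ph}), so there is no official argument to compare yours with; I can only assess the proposal on its own terms. The overall strategy is surely the intended one: it is exactly the geometry behind Prop.~\ref{prop-mink} and the remark following Prop.~\ref{prop-hyp} — concentric hyperbolas about a common centre $O$, the Minkowski-simultaneity classes of $m$ being the rays through $O$, a dilation about $O$ by $\lambda$ scaling proper time by $\lambda$, \ax{COMPR} supplying the observer $b$, and Thm.~\ref{thmJeq} together with Lem.~\ref{lem-time} converting parameter differences into measured times. That part of your argument is correct, and the obstacle you single out (recovering the hyperbola over a non-Archimedean field) is milder than you fear: once $\beta=\lc^m_h$ is known to lie in a vertical plane, $\mu(\beta')\equiv1$ and $\mu(\beta'')\equiv-a$ force $\beta''=\pm a\,\bar{\beta}'$ (Prop.~\ref{prop-vmorta}, Lem.~\ref{lem-accdir}), whence $\beta-\tfrac{1}{a}\bar{\beta}'$ has vanishing derivative and is constant by Prop.~\ref{propInt}; calling that constant $O$ gives $\mu\big(\beta(s)-O\big)^2=-1/a^2$ directly, with no need to construct $\sinh$ or $\cosh$.

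The genuine gap is the planarity you are silently assuming. The paper defines ``uniformly accelerated'' only as constancy of the magnitude $a_k(t)=-\mu\big(\fvakm(t)\big)$; nothing in that definition confines $\wl_h(m)$ to a vertical plane when $d\ge3$. Item (4) of Example~\ref{example-wp}, $\gamma(t)=\langle\sqrt{a^2+1}\,t,\cos(at),\sin(at),0,\ldots,0\rangle$, is a well-parametrized timelike curve with $-\mu\big(\gamma''(t)\big)=a^2$ for all $t$, so over $\R$ it is the life-curve of a uniformly accelerated observer in a model of $\ax{AccRel}+\ax{COMPR}$ — and it is a helix, not a Poincar\'e-and-dilation image of $Hyp$. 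For such an observer there is no centre $O$, the simultaneity hyperplanes are not concurrent, and your dilation $D$ does not exist; worse, when $e$ and $e'$ sit one full period apart the two simultaneity hyperplanes of $m$ are parallel (they share the normal $\gamma'$), so the Minkowski distance between any $e_b\simmu_m e$ and any $e'_b\simmu_m e'$ is bounded by $(a^2+1)\cdot\time_m(e,e')$, and no observer $b$ whatsoever can realize a ratio $\lambda$ above that bound. So either the theorem tacitly presupposes that uniform acceleration means hyperbolic (planar) motion — which is what the remark after Prop.~\ref{prop-hyp} asserts, but which does not follow from the paper's definition for $d\ge3$ — or the statement needs a coplanarity hypothesis. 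Your proof should at least make the reduction to the planar case explicit and flag it as a substantive extra assumption rather than a consequence of uniform acceleration.
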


\begin{thm} \label{thm-Ph}
Let $d\ge 3$.
Assume \ax{AccRel} and \ax{COMPR}.
 Let
$m$ be a positively accelerated observer and  let $e$ and  $e'$ be two events such that
$e\neq e'$ and $m\in e\cap e'$.
Then for all $\lambda \in \Q^+$,
there are an observer $b$ and events $e_b$ and $e'_b$ such that
$b\in e_b\cap e'_b$, $e\simph_m e_b$, $e'\simph_m e'_b$ and
$\time_b(e_b,e_b')=\lambda \cdot\time_m(e,e')$.
\end{thm}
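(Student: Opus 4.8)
The plan is to follow the pattern of the proofs of Thms.~\ref{thm-Rad} and \ref{thm-Mu}, taking advantage of the fact that photon simultaneity is read off from light cones, which are available at \emph{every} event; this is why, in contrast to those theorems, no extra hypothesis on $m$ is needed. Since $m\in e\cap e'$, observer $m$ encounters both events, so \ax{AxCmv} provides an inertial observer $n$ co-moving with $m$ at $e$. By Thm.~\ref{thm-wp}, $\alpha:=\lc^m_n$ is a definable, well-parametrized timelike curve, and by \ax{AxSelf_0} together with Prop.~\ref{prop-comdiff} we have $\alpha(s)=\loc_n(e)$ and $\alpha'(s)=\vet$, where $s$ is $m$'s proper-time reading at $e$. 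Without losing generality, $e$ precedes $e'$ according to $m$ (otherwise interchange $e$ with $e'$, which interchanges $e_b$ with $e'_b$ in the conclusion), so $T:=\time_m(e,e')>0$, and, since $\alpha'(s)=\vet$ is future-pointing, $\loc_n(e')=\alpha(s')$ lies strictly to the future of $\loc_n(e)$ in $n$'s coordinates.

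Next I would realize $b$ by a single straight Minkowski segment. Put $F:=\loc_n(e')-\loc_n(e)$, a future-directed timelike vector, $A:=F_\tau>0$, $B:=|F_\sigma|$, and let $\hat v\in\Q^{d-1}$ be a Euclidean unit vector parallel to $F_\sigma$ (any unit vector if $F_\sigma=\vo$; here \ax{AxEOF} is used). For parameters $\rho,r\in\Q^+$ to be chosen, set $P:=\loc_n(e)+\langle-\rho,\rho\hat v\rangle$, which lies on the past light cone of $\loc_n(e)$, and $Q:=\loc_n(e)+F+\langle-r,r\hat v\rangle$, which lies on the past light cone of $\loc_n(e')$. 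An elementary computation gives $\mu(P,Q)^2=(A+B)\bigl(A-B-2(r-\rho)\bigr)$ and $P_\tau-Q_\tau=(r-\rho)-A$, so taking $r-\rho=\tfrac{A-B}{2}-\tfrac{(\lambda T)^2}{2(A+B)}$ makes $P$ and $Q$ timelike-separated with $P$ to the past of $Q$ and $\mu(P,Q)=\lambda T$, while $\rho$ may still be picked large enough that $r=\rho+(r-\rho)>0$ — all inside the ordered field. Let $\gamma(t):=P+t\cdot\tfrac{Q-P}{\lambda T}$ for $t\in[0,\lambda T]$; this is a definable, well-parametrized timelike curve, so \ax{COMPR} (the instance \ax{Ax\exists Ob_\psi} for the evident linear formula $\psi$) yields an observer $b$ with $\wl_n(b)=\ran\gamma$, whence $b\in ev_n(P)=:e_b$ and $b\in ev_n(Q)=:e'_b$.

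It then remains to check the three requirements. For the clock equation: by Thm.~\ref{thm-wp} the life-curve $\lc^b_n$ is well-parametrized, and by \eqref{item-rantr} in Prop.~\ref{prop-lc} (using \ax{AxEvTr}) $\ran\lc^b_n=\wl_n(b)=[P,Q]$, so Lem.~\ref{lem-time} identifies $\time_b(e_b,e'_b)$ with the gap between the life-curve parameters of $P$ and $Q$, i.e.\ with the Minkowski length $\mu(P,Q)=\lambda T=\lambda\,\time_m(e,e')$. For $e\simph_m e_b$ and $e'\simph_m e'_b$: take the observation events to be $e$ and $e'$ themselves; the needed photons are $ph_2=line\bigl(P,\loc_n(e)\bigr)$, resp.\ $line\bigl(Q,\loc_n(e')\bigr)$ — slope-$1$ lines by construction, hence photons by \ax{AxPh} — together with any photon at $e$, resp.\ $e'$; the only thing still to be confirmed is that $e_b$ precedes $e$ and $e'_b$ precedes $e'$ according to $m$. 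Finally $e_b\ne e'_b$ because $\mu(P,Q)>0$, and $e_b,e'_b$ are localized events for $b$ since $\lc^b_n$ is a function by \eqref{item-trfunct} in Prop.~\ref{prop-lc}.

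The one genuinely delicate point is that last precedence check. ``Precedes according to $m$'' quantifies over the inertial observers co-moving with $m$, and at $e'$ our $n$ is \emph{not} among them (as $m$ accelerates between $e$ and $e'$), so one must know that every inertial frame co-moving with $m$ at $e'$ shares the time-orientation of $n$. This holds because $\alpha$ is a well-parametrized timelike curve with $\alpha'(s)=\vet$: hence $\alpha'(t)_\tau$ keeps a fixed sign along the connected domain of $\alpha$, which is a Bolzano-type argument — precisely the place where \ax{CONT}, and thus the full strength of \ax{AccRel} beyond \ax{AccRel_0}, enters, just as in the proof of Thm.~\ref{thmTwp}. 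The worldview transformations between these co-moving frames are then orthochronous Lorentz transformations (they are Poincar\'e transformations by Thm.~\ref{thm-poi}, and orthochronous by the orientation fact just noted), and orthochronous transformations carry past light cones to past light cones, so the verifications $P_\tau<\loc_n(e)_\tau$ and $Q_\tau<\loc_n(e')_\tau$ in $n$'s frame transfer to the co-moving frames. I expect this orientation/precedence bookkeeping to be the main obstacle; the remainder is routine light-cone geometry together with the properties of life-curves recorded in Prop.~\ref{prop-lc} and Thm.~\ref{thm-wp}.
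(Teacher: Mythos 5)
Your proof is correct; note, though, that the paper states Thm.~\ref{thm-Ph} (together with Thms.~\ref{thm-Rad} and \ref{thm-Mu}) without any proof, so the only comparison available is with the machinery the paper builds for the neighbouring results. The proofs of Thms.~\ref{thm-rad} and \ref{thm-mu} run through the reparametrization apparatus of Section~\ref{lem-sec} (Prop.~\ref{prop-ph}, Lem.~\ref{lem-main}), comparing derivatives of reparametrized life-curves, which is presumably also the intended route here and explains the curvature hypotheses on $m$. You instead exploit the fact that, taking the observation event of $\simph_m$ to be $e$ (resp.\ $e'$) itself, photon simultaneity only pins $e_b$ (resp.\ $e'_b$) to the past light cone of $e$ (resp.\ $e'$), and you realize $b$ as a single straight well-parametrized segment of prescribed Minkowski length $\lambda T$ joining the two cones, handed to \ax{COMPR}. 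This is more elementary: the algebra $(A+B)\bigl(A-B-2u\bigr)=(\lambda T)^2$ checks out, and the clock computation via Thm.~\ref{thm-wp}, \eqref{item-rantr} of Prop.~\ref{prop-lc}, Lem.~\ref{lem-time} and (implicitly, to identify the parameter gap with $\mu(P,Q)$) Thm.~\ref{thmJeq} is sound. As you observe, positive acceleration of $m$ is never used, so you prove a slightly stronger statement. You also correctly isolate the one genuinely delicate step, and your resolution works: by Lem.~\ref{lemWp} $(\lc^m_n)'_\tau\ge 1$ throughout, each co-moving inertial observer $m'$ of $m$ at a world-line point satisfies $1^{m'}_n=(\lc^m_n)'(t)$ (via Lem.~\ref{lem-lc} and \eqref{item-tr} of Prop.~\ref{prop-lc}), so $w^{m'}_n$ is orthochronous and future-directed lightlike separation is preserved. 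The only residual looseness is inherited from the paper: the definition of ``precedes'' does not specify that the co-moving observers are taken at points of $m$'s world-line, and your argument covers only that (natural) reading.
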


\section{Concluding remarks on gravitational time dilation}

We have proved several qualitative versions of gravitational time
dilation from axiom system \ax{AccRel} by the use of Einstein's
equivalence principle. It is important to note that the axioms of
\ax{AccRel} and Einstein's equivalence principle have different
statuses. Einstein's equivalence principle is not an axiom, it
is just a guiding principle.

The theorems of this chapter can be interpreted as saying that
observers will experience time dilation in the direction of
gravitation by the corresponding measuring methods (photon, radar,
Minkowski) if all the axioms of \ax{AccRel} are ``true in our world''
and Einstein's equivalence principle is a ``good'' principle.

Since gravitation can be defined by the acceleration of dropped
\textit{inertial} bodies, Einstein's equivalence principle can be
formulated within \ax{AccRel}. It raises the possibility of checking
within \ax{AccRel} how good a principle Einstein's equivalence
principle is. That is, we can investigate for what kind of
accelerated observers the Einstein's equivalence principle can be
proved within \ax{AccRel}. For a detailed investigation on this
subject, see \cite{eep}.

\begin{rem}
By Thm.~\ref{thm-wp} and Prop.~\ref{prop-hyp}, it is not difficult to
prove that the quantity part of a model of \ax{AccRel_0} cannot be the
field of real algebraic numbers if we assume that there are uniformly
accelerated observers.
\end{rem}

\begin{rem}
By Prop.~\ref{prop-rc}, the quantity part of a model of \ax{AccRel} has
to be a real-closed field. 
\end{rem}
\noindent
These remarks generate the following three questions, each of which is unanswered yet:
\begin{que}\label{que-unif}
What can be the quantity part of a model of \ax{AccRel_0} if we also assume
that there are uniformly accelerated observers?
\end{que}

\begin{que}
What can be the quantity part of a model of \ax{AccRel_0}+\ax{COMPR}?
\end{que}

\begin{que}
What can be the quantity part of a model of \ax{AccRel}+\ax{COMPR}?
\end{que}

\chapter{A FOL axiomatization of General relativity}
\label{chp-gr}

In this chapter we extend our investigations to general relativity by
deriving a its FOL axiomatization from our theory
\ax{AccRel}, see also \cite{amnsz-wku}. The axioms of general
relativity are going to be slightly modified versions of the four
axioms of special relativity together with one more assumption which
 is a refinement of the co-moving axiom of \ax{AccRel}. We are
also going to give the connections between models of our axioms and
spacetimes that we meet in the literature on general relativity.

We slightly refine the axioms of \ax{SpecRel} and the strong
co-moving axiom of accelerated observers \ax{AxSCmv} (see
p.\pageref{axscmv}) and get an axiomatic theory of general
relativity. To do so, we ``eliminate the privileged class of inertial
reference frames'' which was Einstein's original recipe for obtaining
general relativity from special relativity, see \cite{friedman}. So
below we realize Einstein's original
program  formally and literally. We modify the axioms one by one using the following two
guidelines:
\begin{itemize}
\item let the new axioms not speak about {\it inertial} observers, and
\item let the new axioms be consequences of the old ones and our theory \ax{AccRel_0}.
\end{itemize}

To get the modified version of \ax{AxSelf}, let us note that
\ax{AxSelf_0} (see p.\pageref{axself0}) and \ax{AxSelf^+_0} (see
p.\pageref{axself+}) satisfy the requirements above. So let
\Ax{AxSelf^-}\index{\ax{AxSelf^-}} be $\ax{AxSelf_0}\land
\ax{AxSelf^+_0}$. The localized version of \ax{AxEv} contains the
following two statements: (1) every observer encounters the events in
which it is observed, and (2) if observer $k$ coordinatizes event $e$
which is also coordinatized by observer $m$, then $k$ also
coordinatizes the events which are near $e$ according to $m$. The
first statement is already formulated in \ax{AxEvTr} (see
p.\pageref{axevtr}), and the second one can be formulated by saying
that $\dom w^k_m$ is open for any observers $k$ and $m$. 

\begin{description}
\item[\Ax{AxEv^-}]\index{\ax{AxEv^-}} Every observer encounters the
  events in which it is observed; and the domains of worldview
  transformations are open, i.e.,
\begin{equation*}
 \ax{AxEvTr}\lland \forall m,k\in\Ob \quad \dom w^k_m \text{ is open}.
\end{equation*}
\end{description}
The localized version of \ax{AxPh} is  the following:
\begin{description}
\item[\Ax{AxPh^-}]\index{\ax{AxPh^-}} The instantaneous velocity of
  photons is $1$ in the moment when they are sent out according to the
  observer sending them out, and any observer can send out a photon
  in any direction with this instantaneous velocity:
\begin{multline*}
\forall k\in\Ob\enskip \forall\vp\in\Q^d\enskip  k\in ev_k(\vpp)
\then \big( \forall ph\in\Ph
 \quad ph\in ev_k(\vpp) \then
 \vv^{\,ph}_k(\vpp)=1\big)\\\lland \big(\forall \vv\in\Q^{d-1}\quad |\vv\,|=1 \then
 \exists ph\in \Ph\quad ph\in ev_k(\vpp)\lland
 \vv^{\,ph}_k(\vpp)=\vv\,\big), 
\end{multline*}
\end{description}
where $\vv^{\,b}_k(\vpp)$ is the instantaneous velocity of body $b$
according to observer $k$ at $\vp$.

Our symmetry axiom \ax{AxSymDist} has many equivalent versions with
respect to \ax{SpecRel_0}, see \cite[\S 2.8, \S 3.9, \S
  4.2]{pezsgo}. We can localize any of these versions and use it in
our FOL axiom system of general relativity. For aesthetic reasons we
use \ax{AxSymTime}, the version stating that ``{\it inertial}
observers see each others' clocks behaving in the same way,'' see
Thm.~\ref{thm-tdeq} at p.\pageref{thm-tdeq} and \cite{logst}.

\begin{description}
\item[\Ax{AxSymTime^-}]\index{\ax{AxSymTime^-}} Any two observers
  meeting see each others' clocks behaving in the same way at the
  event of meeting:
\begin{multline*}
\forall k,m\in\Ob\enskip\forall t_1,t_2 \in\Q \\\quad k,m\in ev_k(\langle t_1,\voo \rangle)\cap  ev_m\left(\langle t_2,\voo \rangle\right) \then \left| \fvv^{\,m}_k(t_1)_\tau\right|=\left| \fvv^{\,k}_m(t_2)_\tau\right|.
\end{multline*}
\end{description}

\noindent 
Now all the four axioms of theory \ax{SpecRel} are modified according to the
requirements above. 

Strictly following the guidelines above,
\ax{AxSCmv^-} would state that the worldview transformations between
observers are differentiable in their meeting-point. Instead, we
introduce a series of axioms, each of which ensures the
smoothness of worldview transformations to some degree.

\begin{description}
\item[\Ax{AxDiff_n}]\index{\ax{AxDiff_n}} The worldview transformations are $n$-times
 differentiable functions, i.e.,
\begin{equation*}
\forall k,m\in\Ob\quad w^k_m \text{ is $n$-times differentiable
 function}.
\end{equation*}
\end{description}
Let us introduce the following axiom systems of general
relativity:
\begin{equation*}\index{\ax{GenRel_n}}
\boxed{\ax{GenRel_n}\leteq \Setopen \ax{AxSelf^-}, \ax{AxPh^-},
 \ax{AxEv^-},\ax{AxSymTime^-},\ax{AxDiff_n} \Setclose\cup\ax{CONT}}
\end{equation*}
Let us note that every model of \ax{GenRel_m} is a model of \ax{GenRel_n} if $m\ge n$.
Let us also introduce a smooth version:
\begin{equation*}\index{\ax{GenRel_\omega}}
\boxed{\ax{GenRel_\omega}\leteq \Setopen \ax{AxSelf^-}, \ax{AxPh^-},
 \ax{AxEv^-},\ax{AxSymTime^-}\Setclose\cup \Setopen \ax{AxDiff_n} :
 n\ge 1 \Setclose\cup  \ax{CONT}}
\end{equation*}

\noindent
For completeness, let us mention here the localized version of
\ax{AxSymDist}, too. The reader may safely skip this axiom.
\begin{description}
\item[\Ax{AxSymDist^-}]\index{\ax{AxSymDist^-}} Observers meeting each
  other agree approximately as to the spatial distance of a
  neighbouring event if this event and the event of meeting are
  simultaneous approximately enough according to both observers:
\begin{multline*}
\forall k,m\in\Ob\;\forall \varepsilon \in\Q^+\; \forall\vp\in
wl_k(k)\cap wl_k(m)\enskip \exists \delta\in\Q^+\;\forall \vq\in
B_\delta(\vpp)\quad \\ |\vq_\tau-\vpp_\tau|<\delta \cdot\left|\vq_\sigma-\vpp_\sigma\right|
\lland \left|w^k_m(\vqq)_\tau-w^k_m(\vpp)_\tau\right|<\delta\cdot \left|
w^k_m(\vqq)_\sigma-w^k_m(\vpp)_\sigma\right|\\ \then \Big|
|\vq_\sigma-\vp_\sigma|-\left|w^k_m(\vqq)_\sigma-w^k_m(\vpp)_\sigma\right|\Big|\le
\varepsilon\cdot|\vp-\vqq|.
\end{multline*}
\end{description}

The definition of Lorentzian manifolds over arbitrary real closed
fields is a natural extension of their standard definition over $\R$.
By the following theorems, which we are going to prove in a forthcoming
paper, the models of \ax{GenRel_n} are exactly the $n$-times
differentiable Lorentzian manifolds over real closed fields; and
the models of \ax{GenRel_\omega} are exactly the smooth Lorentzian
manifolds over real closed fields.
\begin{thm}\label{thm-grn} Let $d\ge3$. Then
\ax{GenRel_n} is complete with respect to $n$-times differentiable
Lorentzian manifolds over real closed fields.
\end{thm}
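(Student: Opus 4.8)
The plan is to prove Theorem~\ref{thm-grn} as a representation theorem, establishing a two-way correspondence between models of \ax{GenRel_n} and $n$-times differentiable Lorentzian manifolds over real closed fields. First I would make precise what ``complete with respect to'' means here, in the same sense as Cor.~\ref{cor-srcompl}: there should be a faithful translation of the vocabulary of one theory into that of the other such that a FOL sentence $\varphi$ is provable from \ax{GenRel_n} if and only if its translation holds in every $n$-times differentiable Lorentzian manifold over a real closed field (equivalently, the two classes of structures generate the same FOL theory). So the core work is constructing, for each model $\mathfrak{M}\models\ax{GenRel_n}$, an associated Lorentzian manifold $\mathcal{L}(\mathfrak{M})$ of differentiability class $n$ over a real closed field, and conversely associating to each such manifold $\mathcal{L}$ a model $\mathfrak{M}(\mathcal{L})\models\ax{GenRel_n}$, in a way that the two constructions are mutually inverse up to the appropriate equivalence.

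For the direction from models to manifolds, I would proceed as follows. By \ax{CONT} (specifically Prop.~\ref{prop-rc} cited in the Rem.\ after \ax{AccRel}) the quantity part $\Q$ of any model of \ax{GenRel_n} is a real closed field, which gives the coordinate field. Each observer $k$ supplies a coordinate chart, namely its coordinate-domain $Cd_k\subseteq\Q^d$ together with the event function $ev_k$; \ax{AxEv^-} guarantees the domains $\dom w^k_m$ are open, so the charts overlap on open sets, and \ax{AxDiff_n} guarantees the transition maps $w^k_m$ are $n$-times differentiable, making the collection an atlas of class $C^n$. The underlying point set of the manifold is the set $Ev$ of observed events (modulo the obvious identifications), and the charts are the $\loc_k$ maps. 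The Lorentzian metric is recovered pointwise from the local Minkowski structure: \ax{AxPh^-} fixes the light cones at each point (the photons through an event determine the null directions of the metric at that event), and \ax{AxSymTime^-} together with \ax{AxSelf^-} pins down the conformal factor, i.e.\ the time-scaling, so that the metric is determined up to the usual ambiguities and is $C^n$ because the worldview transformations are. Here I would lean heavily on the special-relativistic machinery: by Thm.~\ref{thm-tdeq} and the discussion of \ax{AxSymTime}, in the inertial/co-moving regime the transformations are Poincar\'e, hence isometries of the Minkowski form, and the localized axioms force each $w^k_m$ to have a Lorentz transformation as derivative at meeting points, which is exactly what it means for the charts to be metric-compatible to first order.

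For the converse direction, given an $n$-times differentiable Lorentzian manifold $\mathcal{L}$ over a real closed field $F$, I would build a model of \ax{GenRel_n} by taking $\Q\leteq F$, letting $\Ob$ be (a suitable definable family of) local coordinate patches / charts of $\mathcal{L}$ whose coordinate is adapted so the ``observer'' sits on its own time-axis (this is the content of \ax{AxSelf^-}, and one uses Fermi-type or normal coordinates along timelike curves), letting $\Ph$ encode the null geodesics or at least the null directions, and defining $\W$ from the chart data. Then one checks the five axioms: \ax{AxSelf^-} by the adapted-coordinate construction, \ax{AxPh^-} because the metric's null cone is what defines photon velocities, \ax{AxEv^-} because chart domains are open and transition maps defined on open sets, \ax{AxSymTime^-} because the metric is symmetric and the transition maps are isometries in the relevant sense, \ax{AxDiff_n} directly from the $C^n$ atlas, and \ax{CONT} because $F$ is real closed (real-closedness is exactly the FOL residue of the supremum property for definable sets). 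Finally I would verify that $\mathcal{L}(\mathfrak{M}(\mathcal{L}))\cong\mathcal{L}$ and that $\mathfrak{M}(\mathcal{L}(\mathfrak{M}))$ is ``definitionally equivalent'' to $\mathfrak{M}$, and then the completeness statement follows formally: any FOL sentence true in all the manifolds transfers through the correspondence to a sentence true in all models of \ax{GenRel_n}, hence provable from \ax{GenRel_n} by G\"odel completeness (Rem.\ after Conv.~\ref{conv-frame}), and conversely.

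The main obstacle I expect is \emph{not} the bookkeeping of axioms but the precise handshake between the synthetic ``observer with a worldview'' picture and the differential-geometric ``manifold with charts'' picture, particularly (a) ensuring that the manifold extracted from a model of \ax{GenRel_n} is genuinely a manifold in the technical sense over a possibly non-Archimedean real closed field --- Hausdorffness, second-countability substitutes, and the behaviour of differentiability over such fields all need care, and the earlier framework's notion of differentiability (allowing several derivatives at a point, see the remarks around Prop.~\ref{prop-comdiff}) must be reconciled with the manifold atlas definition; and (b) showing that the conformal/time-scale data reconstructed from \ax{AxSymTime^-} and \ax{AxPh^-} really yields a well-defined $C^n$ Lorentzian metric rather than just a conformal class --- this is where \ax{AxSymTime^-} must do real work, and I would expect to need a lemma, analogous to Thm.~\ref{thm-poi}(2) but localized, saying that the symmetry axiom upgrades ``derivative is a Lorentz transformation up to dilation'' to ``derivative is a Lorentz transformation,'' killing the dilation freedom pointwise and smoothly in the base point. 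Since the theorem is explicitly stated as one to be proved ``in a forthcoming paper,'' I would present this as the architecture of that proof, flagging (a) and (b) as the points requiring the most technical development, and noting that the $\omega$-case is then immediate by intersecting over all $n$.
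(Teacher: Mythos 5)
The paper itself contains no proof of Thm.~\ref{thm-grn}: it only explains what ``complete with respect to'' is intended to mean (exactly the reading you give, namely $\ax{GenRel_n}\vdash\varphi$ iff $\varphi$ is valid in all $n$-times differentiable Lorentzian manifolds over real closed fields, mediated by a construction assigning a model to each manifold and showing every model arises this way) and explicitly defers the proof to a forthcoming paper. Your architecture — the two mutually inverse constructions, real-closedness of $\Q$ from \ax{CONT} via Prop.~\ref{prop-rc}, charts from coordinate-domains with \ax{AxEv^-} giving openness and \ax{AxDiff_n} the $C^n$ transition maps, light cones from \ax{AxPh^-} and the conformal factor from \ax{AxSymTime^-}, and the final appeal to G\"odel completeness — is consistent with everything the paper says about the intended proof, and your flagged difficulties (manifolds over non-Archimedean fields, reconciling the multi-valued derivative notion with an atlas, and upgrading the conformal class to a metric) are genuinely the points the paper leaves entirely unaddressed; there is simply no proof in the text against which to check them.
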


\begin{thm}\label{thm-grinf} Let $d\ge3$. Then
\ax{GenRel_\omega} is complete with respect to smooth Lorentzian
manifolds over real closed fields.
\end{thm}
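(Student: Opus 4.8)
The plan is to prove Thm.~\ref{thm-grinf} in the same two-step, ``correspondence'' style that underlies Cor.~\ref{cor-srcompl} and Thm.~\ref{thm-grn}, and in fact to reduce it to the latter. Recall that ``$\ax{GenRel_\omega}$ is complete with respect to smooth Lorentzian manifolds over real closed fields'' is to be read as: there is a canonical construction assigning to each $d$-dimensional smooth Lorentzian manifold $\mathcal{M}$ over a real closed field a model $\mathfrak{M}(\mathcal{M})$ of $\ax{GenRel_\omega}$, and a construction in the other direction assigning to each $\mathfrak{M}\models\ax{GenRel_\omega}$ a smooth Lorentzian manifold $\mathcal{M}(\mathfrak{M})$ over its (real closed, by Prop.~\ref{prop-rc}) quantity part, the two being mutually inverse up to isomorphism of manifolds and up to the natural equivalence of models (on the model side one passes to the ``definable hull'' obtained by adjoining all observers that the construction introduces); equivalently, a \textsc{fol} sentence follows from $\ax{GenRel_\omega}$ iff its geometric translate holds in every smooth Lorentzian manifold over a real closed field. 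Since the axiom set $\ax{GenRel_\omega}$ is $\bigcup_{n\ge1}\ax{GenRel_n}$ and ``smooth'' means ``$n$-times differentiable for every $n$'', the statement will follow from Thm.~\ref{thm-grn} once the two constructions in its proof are seen to be canonical, i.e.\ not to depend on $n$; so the substantive work is the $C^n$ case, sketched below, and the role of $d\ge3$ is to make the light-cone data rigid.

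\emph{From manifolds to models.} Given $\mathcal{M}$, I would take the points of $\mathcal{M}$ as the events, the maximal null geodesics as the photons, and as observers a rich family of local coordinate systems built from timelike curves of $\mathcal{M}$ parametrized by Minkowski length together with an orthonormal frame (e.g.\ via a Fermi-type or radar construction), closed off under the transition maps; $\W$ is then defined in the obvious way from these charts, and the quantity part is the ground field. Each axiom is verified against ordinary differential geometry: $\ax{AxSelf^-}$ because the coordinates are centred on the observer's curve, whose parameter domain is a connected set with at least two points; $\ax{AxPh^-}$ because in such coordinates a null geodesic issued from the origin has coordinate speed $1$ and every null direction is realized; $\ax{AxEv^-}$ because chart domains are open; $\ax{AxSymTime^-}$ because the differential of $w^k_m$ at a common point is a Lorentz transformation, a condition symmetric in $m$ and $k$ exactly as $\ax{AxSymTime}$ is; $\ax{AxDiff_n}$ for every $n$ because transition maps of a smooth atlas are smooth; and $\ax{CONT}$ because the ground field is real closed and the definition of ``Lorentzian manifold over a real closed field'' will be set up (charts and metric components taken, say, analytic/semialgebraic over the field) so that every subset of the quantity part definable in the constructed structure is tame, hence has a supremum when bounded and nonempty.

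\emph{From models to manifolds.} Given $\mathfrak{M}\models\ax{GenRel_\omega}$, I would take $Ev$ as the underlying set. For each observer $m$ and $e\in Ev_m$, $\ax{AxEv^-}$ (openness of $\dom w^k_m$) and the $\ax{AxDiff_n}$ give a neighbourhood of $\loc_m(e)$ on which every $w^k_m$ is a diffeomorphism; $\loc_m$ restricted there is a chart, and the transition maps are the worldview transformations $w^k_m$, which are $C^n$ for every $n$ by $\bigcup_n\ax{AxDiff_n}$, so one gets a smooth atlas. The causal structure is read off from the photons: by $\ax{AxPh^-}$ there is, at each observer's position and in each null direction, a photon, and by $\ax{AxDiff_n}$ and $\ax{CONT}$ this light-cone field propagates to a smooth null cone field on each chart, which fixes $g$ up to a conformal factor; $\ax{AxSymTime^-}$ pins the factor (equivalently, forces every $d_{\vp}w^k_m$ to be a Lorentz transformation, using that a linear automorphism of $\Q^d$ carrying the standard light cone onto itself is a positive multiple of a Lorentz transformation when $d\ge3$), so the local forms glue to a smooth Lorentzian $g$; real-closedness of the quantity part is Prop.~\ref{prop-rc}. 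Finally I would check that applying the first construction to $\mathcal{M}(\mathfrak{M})$ returns the definable hull of $\mathfrak{M}$ and the second to $\mathfrak{M}(\mathcal{M})$ returns $\mathcal{M}$ up to isomorphism, and that $\mathfrak{M}$ is elementarily equivalent to its definable hull; the completeness statement is then immediate.

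\emph{Main obstacles.} Two steps carry the real weight. First, $\ax{CONT}$: it is about definability in the \emph{full} language, not just the field, so in the manifold-to-model direction one must pin down the definition of ``smooth Lorentzian manifold over a real closed field'' tightly enough (an o-minimality-type tameness of charts and metric) that the continuity schema survives, while in the model-to-manifold direction one must genuinely produce a \emph{smooth} --- not merely continuous --- metric, which I expect to require a local ``existence of freely falling / locally geodesic observers'' lemma proved from $\ax{AxPh^-}$, $\ax{AxSymTime^-}$, $\ax{AxDiff_n}$ and $\ax{CONT}$, playing the part that $\ax{AxSCmv}$ and Prop.~\ref{prop-comdiff} play in the accelerated theory (here there are no \textit{inertial} observers to lean on). Second, and relatedly, showing that the light-cone data of $\ax{AxPh^-}$ --- which constrain a photon's velocity only at its point of emission --- determine a consistent null cone field at \emph{all} events and that photons really do behave as null geodesics; this is the general-relativistic analogue of the Alexandrov--Zeeman-driven rigidity behind Thm.~\ref{thm-poi}, and it is the point where $d\ge3$, $\ax{CONT}$ and smoothness must cooperate. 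Once these are in place, the passage from the $C^n$ correspondence of Thm.~\ref{thm-grn} to the smooth one of Thm.~\ref{thm-grinf} is just the observation that both constructions use the full atlas and therefore witness every $\ax{AxDiff_n}$ at once.
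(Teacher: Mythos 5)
The paper does not actually prove Thm.~\ref{thm-grinf}: it is stated without proof, with an explicit remark that the proofs of Thms.~\ref{thm-grn} and \ref{thm-grinf} are ``analogous to'' the completeness of \ax{SpecRel} with respect to Minkowskian geometries (Cor.~\ref{cor-srcompl}) and are deferred to a forthcoming paper. So there is no proof in this thesis to compare yours against. What the paper does supply is the intended reading of ``complete with respect to'': a canonical construction of a model of \ax{GenRel} from each Lorentzian manifold, together with the converse claim that every model arises this way, so that $\ax{GenRel_n}\vdash\varphi$ iff $\varphi$ is valid in all $n$-times differentiable Lorentzian manifolds over real closed fields. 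Your reading of the statement, your two-way-construction strategy, and your reduction of the smooth case to the $C^n$ case by observing that a single atlas witnesses every $\ax{AxDiff_n}$ simultaneously all match what the paper announces.

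That said, your sketch is a plan rather than a proof, and you are candid about where the weight lies. The two obstacles you name --- making \ax{CONT} survive the manifold-to-model direction (it quantifies over definability in the full language, not just the field) and extracting a genuinely smooth Lorentzian metric from light-cone data that $\ax{AxPh^-}$ constrains only at emission points --- are exactly the parts the thesis leaves to the forthcoming paper, and nothing in the present text (Prop.~\ref{prop-rc}, Thm.~\ref{thm-poi}, the \ax{AccRel} machinery of Chap.~\ref{chp-a}) closes them. In particular, the paper never formally defines ``Lorentzian manifold over a real closed field,'' and without pinning that definition down your tameness requirement on charts and metric components is doing real, unverified work; the same goes for your proposed ``freely falling observer'' lemma, which has no counterpart proved here. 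So your proposal should be judged as a faithful reconstruction of the announced strategy, carrying the same open gaps that the paper itself leaves open, rather than as a complete argument.
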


The proofs and formal statements of Thms.\ \ref{thm-grn} and
\ref{thm-grinf} are analogous to those of Cor.~\ref{cor-srcompl} at
p.\pageref{cor-srcompl}.  These theorems can be regarded as
completeness theorems in the following sense. Let us consider
Lorentzian manifolds as intended models of \ax{GenRel}. How to do
that? In our forthcoming paper, we will give a method for constructing
 a model of \ax{GenRel} from each Lorentzian manifold; and conversely,
we will also show that each model of \ax{GenRel} is obtained this way
from a Lorentzian manifold.  By the above, we defined what we mean by
a formula $\varphi$ in the language of \ax{GenRel} being valid in a
Lorentzian manifold, or in all Lorentzian manifolds.  Then
completeness means that for any formula $\varphi$ in the language of
\ax{GenRel}, we have $\ax{GenRel_n}\vdash \varphi$ iff $\varphi$ is
valid in all $n$-times differentiable Lorentzian manifolds over real
closed fields.  That is completely analogous to the way how Minkowskian
geometries were regarded as intended models of \ax{SpecRel} in the
completeness theorem of \ax{SpecRel}, see \cite[\S 4]{Mphd} and
\cite[Thm.11.28 p.681]{logst}.

Our theory \ax{GenRel} was obtained from \ax{AccRel} by getting rid of
the concept of inertiality in the level of axioms. However, we can
redefine this concept. We call the world-line of observer $m$
\df{timelike geodesic}\index{timelike geodesic}, if each of its points
has a neighborhood within which this observer measures the most time
between any two encountered event, i.e.,
\begin{multline*}
   \forall\vr \in wl_m(m)\;\exists \delta\in\Q^+ \enskip
  \forall \vp,\vq\in wl_m(m)\cap B_\delta(\vrr)\enskip \forall
  k\in\Ob\cap ev_m(\vpp)\cap ev_m(\vqq)\quad\\ wl_m(k)\subseteq
  B_\delta(\vrr) \then |p_\tau-q_\tau| \geq \left|w^m_k(\vpp)_\tau-w^m_k(\vqq)_\tau\right|.
\end{multline*}
In this case we also say that observer $m$ is an {\it inertial} body.  This
definition is justified by the Twin Paradox theorem of \ax{AccRel},
see Thm.~\ref{thmTwp}. This theorem says that in the models of
\ax{AccRel} the world-lines of {\it inertial} observers are timelike
geodesics in the above sense.

We can define lightlike geodesics in a similar fashion: a lightlike
geodesic $\gamma$ is a lightlike curve with the property that each point in
the curve has a neighborhood in which $\gamma$ is the unique lightlike
curve through any two points of $\gamma$.

The assumption of axiom schema \ax{COMPR} guarantees that our
definition of geodesic coincides with that of the literature on
Lorentzian manifolds. Therefore we also introduce the following theory:
\begin{equation*}
\boxed{\ax{GenRel_n^+}\leteq \ax{GenRel_n}\cup\ax{COPMR}}
\end{equation*}
So in our theory \ax{GenRel^+}, our notion of timelike geodesic
coincides with its standard notion in the literature on general
relativity.  All the other key notions of general relativity, such as
curvature or Riemannian tensor field, are definable from timelike
geodesics. Therefore we can treat all these notions (including the
notion of metric tensor field) in our theory \ax{GenRel^+} in a natural way.

{\it \underline{Connections with our results on \ax{AccRel}}}:
Theorems proved from \ax{AccRel} (our first approximation of
\ax{GenRel}) can also be reformulated and proved from \ax{GenRel},
such as the gravitational time dilation, see
Thms.\ \ref{thm-rad} and \ref{thm-mu}. For lack of space, we postpone
that to a forthcoming paper.

\chapter{The tools necessary for proving the main results}
\label{chp-a}

This chapter is about the development of the tools that were used in
the proofs of the main results of the former chapters. First we have
to build a FOL theory of real analysis. The point is to
formulate and prove theorems of real analysis staying within
FOL. We also seek for using as few assumptions as
possible.

A part of real analysis can be generalized for arbitrary ordered
fields without any real difficulty. However, a certain fragment of real
analysis can only be generalized within FOL for
\textit{definable} functions  and for proofs we
need a version of the \ax{CONT} axiom schema; and there are some
theorems of real analysis which are not provable even by the \ax{CONT}
schema. We refer to the generalizations which cannot be proved without
\ax{CONT}  by marking them ``\ax{CONT}-.'' The
FOL generalizations of some theorems, such as Chain
Rule can be proved without \ax{CONT}, so they are naturally referred
to without the ``\ax{CONT}-'' mark.

Throughout this chapter $\Df{\L}$\index{$\L$} is assumed to be a
FOL language that contains the binary relation symbol $<$
and the unary relation symbol $\Q$, such as our frame language or the
language of the ordered fields. We use notation
$\Df{\L_0}$\index{$\L_0$} for the language $\setopen \Q,<
\setclose$. Let the set of FOL formulas in language $\L$ is
denoted by $\Df{Fm(\L)}$.

In this chapter we also use the following generalized versions of
our field axiom \ax{AxEOF}:
\begin{description}
\item[\Ax{AxOF}]\index{\ax{AxOF}} $\left< \Q;+,\cdot, < \right>$ is
 an ordered field.
\end{description}
\begin{description}
\item[\Ax{AxPOS}]\index{\ax{AxPOS}} $\left< \Q;< \right>$ is a
 partially ordered set, i.e., $\le$ is a reflexive, antisymmetric
 and transitive relation on $\Q$.
\end{description}

\noindent
Naturally, we do not assume \ax{AxEOF} in the theorems of this chapter in which
 \ax{AxOF} or \ax{AxPOS} is used, see Conv.~\ref{conv-frame}.

\section{The axiom schema of continuity}

To prove some of the theorems of real analysis, we need a property of
$\R$. This property is that in $\R$ every bounded nonempty set has a
\df{supremum}, i.e., a least upper bound. It is a second-order
logic property which cannot be used in a FOL axiom system.
Instead, we use an axiom schema stating that every nonempty and
bounded subset of the quantity part that can be defined
parametrically by a FOL formula has a supremum.

This way of imitating a second-order formula by a FOL formula
schema comes from the methodology of approximating second-order
theories by FOL ones. Examples are Tarski's replacement of
Hilbert's second-order geometry axiom by a FOL axiom schema
and Peano's FOL axiom schema of induction replacing the
second-order logic induction.

Let $\{\Q\}\subseteq\L$ be a FOL language, $\mathfrak{M}$ an
\L-model with universe $M$. We say that a subset $H$ of $\Q$ is
\df{(parametrically) \L-definable by}\index{(parametrically)
 \L-definable} $\varphi\in Fm(\L)$ iff there are $a_1,\ldots,a_n\in
U$ such that
\begin{equation*}
H=\Setopen d\in\Q\setmid\mathfrak{M}\models\varphi(d,a_1,\ldots,a_n)
\Setclose. 
\end{equation*}
We say that a subset of $\Q$ is
\df{\L-definable}\index{\L-definable set} iff it is definable by an
\L-formula. More generally, an $n$-ary relation $R\subseteq \Q^n$ is
said to be \df{\L-definable}\index{\L-definable relation} in
$\mathfrak{M}$ by parameters iff there is a formula $\varphi\in
Fm(\L)$ with only free variables $x_1,\ldots,x_n, y_1,\ldots,y_{k}$
and there are $a_{1},\ldots,a_{k}\in U$ such that
\begin{equation*}
R=\Setopen\langle p_1,\ldots, p_n\rangle\in \Q^n : \mathfrak{M}\models \varphi(p_1,\ldots,p_n,a_1,\ldots,a_k)\Setclose.
\end{equation*}

\begin{description}
\item[\ax{AxSup_\varphi}] 
Every subset of $\Q$ definable by $\varphi\in Fm(\L)$ (when using $a_1,\ldots,a_n$ as fixed parameters) has a supremum if it is nonempty and bounded:
\begin{multline*}
\forall y_1,\ldots,y_n\quad[\exists x\in\Q\quad\varphi]\;\land\; [\exists b\in\Q\quad \forall x\in\Q\quad\varphi\Longrightarrow x\le b]\\ \then
\big[\exists s\in\Q\enskip \forall b\in\Q\quad (\forall x\in\Q\quad
\varphi\Longrightarrow x\le b)\ \Longleftrightarrow\ s\le b\big].
\end{multline*}
\end{description}
Our axiom schema \ax{CONT_\L} below says that every nonempty
bounded and \L-definable subset of $\Q$ has a supremum.

\begin{center}\index{\ax{CONT_\L}}
$\ax{CONT_\L}\leteq \Setopen\ax{AxSup_\varphi}:\varphi \mbox{ is a
  FOL formula of the language } \L \Setclose.$
\end{center}

When $\L$ is our frame language, we omit the subscript and write
\ax{CONT} only. When the language is $\L_0$, we write
\Df{\ax{CONT_0}}\index{\ax{CONT_0}}. The language
$\setopen\Q,+,\cdot,<\setclose$ is denoted by
$\Df{\mathcal{OF}}$\index{$\mathcal{OF}$}.

\begin{rem}
$\ax{CONT_\mathcal{L'}}$ is stronger than \ax{CONT_\L} if $\{\Q,<\}\subseteq\L\subseteq\mathcal{L'}$.
\end{rem}

An ordered field $\mathfrak{Q}$ is called
\df{real closed}\index{real closed field} if every positive element
has a square root and every polynomial of odd degree has a root.

\begin{prop}\label{prop-rc}
Let $\mathfrak{Q}$ be an ordered field. Then
\begin{equation*}
\mathfrak{Q}\models\ax{CONT_\mathcal{OF}}\enskip \text{ iff }\enskip \mathfrak{Q} \text{ is real closed.}
\end{equation*}
\end{prop}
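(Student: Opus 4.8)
The plan is to prove the two implications separately: the forward direction ``$\ax{CONT_\mathcal{OF}}\Rightarrow$ real closed'' by direct supremum arguments, and the backward direction ``real closed $\Rightarrow\ax{CONT_\mathcal{OF}}$'' via Tarski's quantifier-elimination theorem for real closed fields.

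For the forward direction I would verify the two defining properties of a real closed field, using one instance of the continuity schema in each case. To obtain a square root of a positive $a\in\Q$, consider the $\mathcal{OF}$-definable set $H=\setopen x\in\Q\setmid 0\le x\lland x\cdot x\le a\setclose$; it is nonempty ($0\in H$) and bounded above (by $a+1$), so by $\ax{AxSup_\varphi}$ it has a supremum $s\ge 0$, and the usual intermediate-value argument — using that the squaring map is continuous over every ordered field — shows $s\cdot s=a$: if $s^2<a$ one finds a slightly larger element still in $H$, and if $s^2>a$ one finds a slightly smaller upper bound, both contradictions. For roots of an odd-degree polynomial we may assume it is monic, $p(x)=x^{2n+1}+a_{2n}x^{2n}+\dots+a_0$; for a suitable bound $B$ (e.g.\ $B\leteq 1+|a_{2n}|+\dots+|a_0|$) the sign of $p$ agrees with that of $x^{2n+1}$ whenever $|x|\ge B$, hence $p(-B)<0<p(B)$. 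Then apply $\ax{AxSup_\varphi}$ to the $\mathcal{OF}$-definable set $H=\setopen x\in\Q\setmid x\le B\lland p(x)\le 0\setclose$ (nonempty since $-B\in H$, bounded by $B$) to get $s=\sup H$, and conclude $p(s)=0$ by continuity of $p$: $p(s)>0$ would make some $s-\delta$ an upper bound, while $p(s)<0$ (which forces $s<B$) would place a point just above $s$ into $H$. I expect these arguments to be routine.

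For the backward direction I would argue model-theoretically. Each $\ax{AxSup_\varphi}$ (for $\varphi\in Fm(\mathcal{OF})$) is a single first-order $\mathcal{OF}$-sentence, and all of them are true in $\R$, since in $\R$ every nonempty bounded-above subset — a fortiori every parametrically definable one — has a supremum. As $\R$ is a real closed field and the theory of real closed fields is complete, the set $\ax{CONT_\mathcal{OF}}$ is contained in $Th_{\mathcal{OF}}(\R)$, which is axiomatized by the real-closed-field axioms; hence every real closed field satisfies $\ax{CONT_\mathcal{OF}}$. The concrete content behind this is Tarski's quantifier-elimination theorem: in a real closed field, every parametrically $\mathcal{OF}$-definable subset of $\Q$ is a finite union of points and (possibly unbounded) open intervals, because by quantifier elimination it is a Boolean combination of polynomial sign conditions, and over a real closed field each polynomial has only finitely many roots, which partition $\Q$ into finitely many intervals of constant sign; such a finite union, if nonempty and bounded above, obviously has a supremum, namely the largest of the finitely many relevant endpoints or isolated points.

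The main obstacle is the backward direction: deriving the full continuity schema — a statement about arbitrary one-variable definable sets — from the purely ``local'' condition of real-closedness genuinely requires knowing the shape of definable subsets of the line, i.e.\ it rests on Tarski's quantifier elimination (equivalently, completeness of the theory of real closed fields). Once that theorem is invoked the argument is short. One minor bookkeeping point to dispose of is that $\mathcal{OF}$ as defined here carries no constant symbols $0,1$; this is harmless, since $0$ and $1$ are $\mathcal{OF}$-definable, so the family of parametrically $\mathcal{OF}$-definable subsets of $\Q$ is unchanged if we adjoin them and quantifier elimination may be applied in the expanded language.
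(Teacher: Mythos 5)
Your proposal is correct and follows essentially the same route as the paper: the forward direction by applying the supremum schema to the definable sets $\{x : x^2\le a\}$ and $\{x : p(x)\le 0\}$ and squeezing the supremum by continuity, and the backward direction by noting that each $\ax{AxSup_\varphi}$ is a first-order $\mathcal{OF}$-sentence true in $\R$ and invoking completeness of the theory of real closed fields (the paper phrases this as elementary equivalence of any real closed field with $\R$). Your added remarks on quantifier elimination and the shape of definable subsets of the line are a nice gloss on why the transfer works but are not needed beyond the completeness theorem itself.
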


\begin{proof} 
Let $\mathfrak{Q}$ be an ordered field such that
$\mathfrak{Q}\models\ax{CONT_\mathcal{OF}}$.  To prove that
$\mathfrak{Q}$ is real closed, let $p(y)$ be the odd degree polynomial
$a_{2n+1}y^{2n+1}+\ldots+a_1y+a_0$. It is enough to prove that $p(y)$
has a root when $a_{2n+1}>0$. Let $H\leteq \{t\in \Q: p(t)<0\}$. It is
clear that $H$ is nonempty, bounded and
$\mathcal{OF}$-definable. From $\ax{CONT_\mathcal{OF}}$, it follows
that $H$ has a supremum, let us call it $s$. Both $\{t: p(t)>0 \}$ and
$\{t : p(t)<0 \}$ are open sets, since $p(y)$ is continuous. Thus
$p(s)$ cannot be negative since $s$ is an upper bound of $H$, and
cannot be positive since $s$ is the smallest upper bound, i.e.,
$p(s)=0$ as it was required.

Let $a$ be a positive element of $\Q$ and let $H\leteq \{y\in \Q:
y^2<a\}$.  Then $H$ is nonempty, bounded and $\mathcal{OF}$-definable.
From $\ax{CONT_\mathcal{OF}}$, it follows that $H$ has a supremum and
for the same reasons as before this supremum is a square root of $a$.

\medskip\noindent If $\mathfrak{Q}$ is real closed field, it
is elementary equivalent to $\R$, see \cite[Cor.3.3.16.]{marker}. Thus
$\mathfrak{Q}\models\ax{CONT_\mathcal{OF}}$ since
$\R\models\ax{CONT_\mathcal{OF}}$.
\end{proof}

\begin{rem}
Let us note that \ax{CONT_\L} is not strong enough to prove every
theorem of real analysis, e.g., the statement that there is
a function $f$ such that $f'(x)=f(x)$ and $\ran f=\Q$ is not provable from
\ax{CONT_\L}.
\end{rem}

Let $f$ be an \L-definable function. Then we denote one of
the formulas defining $f$ by $\Df{\phi_f}$\index{$\phi_f$}, i.e.,
$\phi_f$ is a formula in the language $\L$ such that
\begin{equation*}
f=\Setopen \langle \vx,\vyy\rangle \setmid \phi_f(\vx,\vyy) \Setclose.
\end{equation*}

\begin{prop}
Let $f,g:\Q^n\parrow \Q^m$ and $h:\Q^m\parrow\Q^k$ be \L-definable
functions and let $\lambda \in \Q$. Then $\dom f$ and $\ran f$ are
\L-definable and the following functions are also \L-definable ones:
$\lambda\cdot f$, $f+g$ and $f\circ h$.
\end{prop}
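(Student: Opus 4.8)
The plan is to prove this by writing down, for each of the four items, an explicit $\L$-formula that defines it, built from the given defining formulas $\phi_f,\phi_g,\phi_h$ by adding quantifiers and field-term equalities. (Throughout, $\L$ is understood to contain the field symbols $+$ and $\cdot$, as it must be whenever the operations $\lambda\cdot f$ and $f+g$ are meaningful.) Recall that $f=\setopen\langle\vx,\vyy\rangle\setmid\phi_f(\vx,\vyy)\setclose$, and similarly for $g$ and $h$. The whole content of the proposition is the observation that prefixing existential quantifiers and conjoining atomic field-equations to $\L$-formulas keeps one inside $\L$, so there is no real obstacle; the only point needing (very mild) care is checking that the resulting relations are again partial functions, and the single-sorted bookkeeping of restricting quantified variables to $\Q$.

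First, $\dom f$ is $\L$-definable (by parameters, in the sense of the definition preceding \ax{AxSup_\varphi}) via the formula
\begin{equation*}
\Q(x_1)\lland\ldots\lland\Q(x_n)\lland\exists\vy\ \phi_f(\vx,\vyy)
\end{equation*}
in the free variables $x_1,\ldots,x_n$, where $\exists\vy$ abbreviates $\exists y_1\ldots\exists y_m$; the $\Q(\cdot)$ conjuncts only guarantee that the defined set lies inside $\Q^n$. Symmetrically, $\ran f$ is defined by $\Q(y_1)\lland\ldots\lland\Q(y_m)\lland\exists\vx\ \phi_f(\vx,\vyy)$. Any parameters occurring in $\phi_f$ are simply carried along. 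For the three functions I would take $\lambda$ as an additional parameter and set
\begin{equation*}
\phi_{\lambda\cdot f}(\vx,\vyy)\ \leteq\ \exists\vz\ \Big(\phi_f(\vx,\vzz)\lland\bigwedge_{i=1}^m y_i=\lambda\cdot z_i\Big),
\end{equation*}
\begin{equation*}
\phi_{f+g}(\vx,\vyy)\ \leteq\ \exists\vp\,\exists\vq\ \Big(\phi_f(\vx,\vpp)\lland\phi_g(\vx,\vqq)\lland\bigwedge_{i=1}^m y_i=p_i+q_i\Big),
\end{equation*}
\begin{equation*}
\phi_{f\circ h}(\vx,\vzz)\ \leteq\ \exists\vy\ \big(\phi_f(\vx,\vyy)\lland\phi_h(\vy,\vzz)\big),
\end{equation*}
where for $f\circ h$ I recall that $R\circ S$ means ``first apply $R$, then $S$,'' so $f\circ h:\Q^n\parrow\Q^k$. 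Each of these is again an $\L$-formula, so $\lambda\cdot f$, $f+g$ and $f\circ h$ are $\L$-definable by parameters.

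It then remains to verify that these relations are genuinely partial functions with the stated domains and codomains, which uses nothing beyond the hypothesis that $f$, $g$, $h$ are functions. For instance, if $\phi_{f\circ h}(\vx,\vzz)$ and $\phi_{f\circ h}(\vx,\vzz')$ both hold, pick witnesses $\vy$ and $\vyy'$; since $f$ is a function $\vy=\vyy'$, and then since $h$ is a function $\vz=\vzz'$, so $f\circ h$ is single-valued and evidently coincides with the composition from the Basic definitions; the arguments for $\lambda\cdot f$ and $f+g$ are the same reasoning one step shorter, using that $\dom(f+g)=\dom f\cap\dom g$ and $\dom(\lambda\cdot f)=\dom f$. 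I expect the ``hard part'' to be, at most, getting these domain-and-codomain bookkeeping statements stated cleanly; the definability itself is immediate from the displayed formulas.
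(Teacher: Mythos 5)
Your proposal is correct and takes essentially the same route as the paper: both write down explicit $\L$-formulas for $\dom f$, $\ran f$, $\lambda\cdot f$, $f+g$ and $f\circ h$ by existentially quantifying witnesses in $\phi_f$, $\phi_g$, $\phi_h$ and conjoining field equations. Your version is in fact slightly more careful than the paper's (which leaves the witness variables implicitly quantified and does not bother to verify single-valuedness), but there is no difference in substance.
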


\begin{proof}
Let $\phi_f(\vx,\vyy)$, $\phi_g(\vx,\vyy)$ and $\phi_h(\vx,\vyy)$ be
formulas defining $f$, $g$ and $h$ in the language $\L$,
respectively. Then we can define $\dom f$ and $\ran
f$ as
\begin{equation*}
\dom f=\Setopen \vxx \setmid \exists\vyy\quad \phi_f(\vx,\vyy) \Setclose \quad\text{and}\quad \ran f=\Setopen \vyy \setmid \exists\vxx\quad \phi_f(\vx,\vyy) \Setclose.
\end{equation*}
Furthermore,
\begin{equation*}
\begin{split}
\lambda\cdot f &=\Setopen \langle \vx,\vyy\rangle \setmid \phi_f(\vx,\vzz) \lland \vyy=\lambda\cdot\vzz \Setclose,\\
f+g &=\Setopen \langle \vx,\vyy\rangle \setmid \phi_f(\vx,\vy_1) \lland \phi_g(\vx,\vy_2)\lland \vyy=\vy_1+\vy_2 \Setclose,\\
f\circ h &=\Setopen \langle \vx,\vyy\rangle \setmid \phi_f(\vx,\vzz) \lland \phi_h(\vzz,\vyy) \Setclose.
\end{split}
\end{equation*}
From these equations, it is easy to recognize the required formulas
defining $\dom f$, $\ran f$, $\lambda\cdot f$, $f+g$ and $f\circ g$.
\end{proof}

\begin{prop}
Let $f,g:\Q^n\parrow \Q$ be \L-definable
functions. Then the $f\cdot g$ and $1/f$ functions are also \L-definable ones.
\end{prop}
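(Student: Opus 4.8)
The plan is to exhibit explicit $\L$-formulas defining $f\cdot g$ and $1/f$, exactly in the style used for $\lambda\cdot f$, $f+g$ and $f\circ h$ in the preceding proposition. Let $\phi_f(\vx,y)$ and $\phi_g(\vx,y)$ be formulas of $\L$ defining $f$ and $g$, i.e., $f=\setopen\langle\vx,y\rangle\setmid\phi_f(\vx,y)\setclose$ and likewise for $g$. The languages we consider (our frame language and the language of ordered fields) contain the multiplication symbol $\cdot$, so it and the logical symbol $=$ may be used freely in the defining formulas.

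First I would treat the product. The function $f\cdot g$ has domain $\dom f\cap\dom g$ and maps $\vx$ in this domain to $f(\vx)\cdot g(\vx)$, so it is defined by
\begin{equation*}
f\cdot g=\Setopen\langle\vx,y\rangle\setmid \exists y_1\exists y_2\enskip \phi_f(\vx,y_1)\lland\phi_g(\vx,y_2)\lland y=y_1\cdot y_2\Setclose.
\end{equation*}
Since $f$ and $g$ are functions, for $\vx\in\dom f\cap\dom g$ there is exactly one admissible pair $y_1,y_2$, and outside this set the inner formula is unsatisfiable; hence the displayed set is indeed the graph of $f\cdot g$.

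For $1/f$ the same idea applies, the only point needing care being that the defining formula must automatically exclude the coordinate points where $f$ vanishes, in accordance with Conv.~\ref{conv-eq} and the meaning of the field operation $/$, so that $\dom(1/f)=\setopen\vx\in\dom f\setmid f(\vx)\ne 0\setclose$. This is handled by
\begin{equation*}
1/f=\Setopen\langle\vx,y\rangle\setmid \exists z\enskip \phi_f(\vx,z)\lland z\cdot y=1\Setclose,
\end{equation*}
because $z$ is forced to equal $f(\vx)$, and the equation $z\cdot y=1$ has the unique solution $y=1/z$ when $z\ne 0$ and no solution when $z=0$. Thus both $f\cdot g$ and $1/f$ are $\L$-definable. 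There is essentially no obstacle here: the argument is a routine unwinding of the definitions, the one place meriting a moment's thought being the domain bookkeeping for $1/f$, which the displayed formula handles correctly.
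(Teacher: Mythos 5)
Your proof is correct and follows essentially the same route as the paper: exhibit explicit defining formulas built from $\phi_f$ and $\phi_g$ using the field operations. The only cosmetic difference is that for $1/f$ you write $z\cdot y=1$ where the paper writes $z\neq 0\lland y=1/z$; these are equivalent, and your version handles the domain restriction just as cleanly.
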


\begin{proof}
Let $\phi_f(\vx,\vyy)$ and $\phi_g(\vx,\vyy)$ be formulas defining $f$ and $g$ in the
language $\L$, respectively. 
Then 
\begin{equation*}
\begin{split}
f\cdot g &=\Setopen \langle \vx,\vyy\rangle \setmid
\phi_f(\vx,\vy_1) \lland \phi_g(\vx,\vy_2)\lland \vyy=\vy_1\cdot\vy_2
\Setclose,\\
 1/f &=\Setopen \langle \vx,\vyy\rangle \setmid
\phi_g(\vx,\vzz) \lland \vzz\neq 0\lland \vyy=1/\vzz \Setclose.
\end{split}
\end{equation*}
From these equations, it is easy to recognize the required formulas defining $f\cdot g$ and $1/f$. 
\end{proof}

\section{Continuous functions over ordered fields}

In this section we define the concept of continuity within FOL and
prove some related theorems which are used in the proofs of the main
results.

\theoremstyle{definition}
\newtheorem*{cousin}{\colorbox{thmbgcolor}{\textcolor{thmcolor}{\ax{CONT}-Cousin's
   Lemma}}}
\begin{cousin}\label{lem-ind-cousin}
Let $\L\supseteq\mathcal{OF}$. Assume \ax{CONT_\L} and \ax{AxOF}.  Let
$a,b\in\Q$ such that $a<b$, and let $\mathcal{A}$ be a set of
subintervals of $[a,b]$ which has the following properties:
\begin{itemize}
\item[(i)] \df{beginable}: for each $x\in[a,b]$, $\mathcal{A}$
  contains any small enough right and left neighborhood of $x$,
  i.e.,
\begin{multline*}
\forall x \in [a,b] \; \exists c,d \in \Q \quad c<x<d
 \lland \forall y\in[c,d]\cap[a,b]\;\;\\ (y<x \then [y,x]\in \mathcal{A}) \lland (x<y \then [x,y]\in \mathcal{A}),
\end{multline*}
\item[(ii)] \df{connectable}: if $[x,y],[y,z]\in \mathcal{A}$ then $[x,z]\in \mathcal{A}$, 
\item[(iii)] \df{\L-definable}: the set $\{t\in \Q : [a,t]\in \mathcal{A}\}$ is \L-definable.
\end{itemize}
Then $[a,b]\in \mathcal{A}$.
\end{cousin}

\begin{proof}
From \ax{CONT_\L}, it follows that the set 
\begin{equation*}
 H\leteq \Setopen x\in \Q\setmid a<x \lland \forall t\in(a,x) \quad [a,t]\in \mathcal{A} \Setclose
\end{equation*}
 has a supremum since it is an \L-definable, nonempty (since $\mathcal{A}$ is beginable) and bounded set.
Let us call this supremum $s$.
We complete the proof by proving that $[a,s]\in \mathcal{A}$ and $s=b$.

Since $\mathcal{A}$ is beginable, there is a $c\in [a,s)$ such that $[c,s]\in \mathcal{A}$.
Since $s$ is the supremum of $H$, $[a,t]\in \mathcal{A}$ for all $t\in(a,s)$.
Thus $[a,c]\in \mathcal{A}$, so by the connectability of $\mathcal{A}$, we get that $[a,s]\in \mathcal{A}$.

If $s<b$, there is an $e\in (s,b]$ such that $[s,t]\in \mathcal{A}$
for all $t \in (s,e]$ since $\mathcal{A}$ is beginable.  Thus we get
   that for all $t\in (s,e] \;\; [a,t]\in \mathcal{A}$ by using the
     connectability of $\mathcal{A}$ and the fact that $[a,s]\in
     \mathcal{A}$.  Then for all $t\in(a,e]\;\;[a,t]\in\mathcal{A}$.
       This contradicts the fact that $s$ is the supremum of the set,
       $H$ therefore $s=b$.
\end{proof}

A set $G\subseteq \Q$ is called \df{open}\index{open set} if it
contains an open interval around its every element, i.e., for all
$x \in G$, there are $a,b \in G$ such that $x\in(a,b)\subseteq G$.
The open sets of $\Q$ form a topology, which is called the \df{order
 topology}\index{order topology}. A function $f: \Q \parrow \Q$ is
called \df{order-continuous}\index{order-continuous function} if the
inverse image of any open subinterval of $\Q$ is open, i.e.,
$\{x:f(x)\in (c,d)\}$ is open for all $c,d\in \Q$. It is easy to see
that while the order-topology is a second-order logic concept both the
openness of a given set or the order-continuousness of a given
function are FOL ones.

\theoremstyle{definition} \newtheorem*{ind-order-bolzano}{\colorbox{thmbgcolor}{\textcolor{thmcolor}{\ax{CONT}-order-Bolzano's Theorem}}} 
\begin{ind-order-bolzano}
\index{\ax{CONT}-order-Bolzano's Theorem}
\label{thm-indobolzano}
Assume \ax{CONT_\L} and \ax{AxPOS}. 
Let $f:\Q \parrow Q$ be an \L-definable order-continuous function such that $[a,b]\subseteq \dom f$.
If $ f(a) < c < f(b)$, then there is a $t\in [a,b]$ such that $f(t)=c$.
\end{ind-order-bolzano}

\begin{proof}
Let 
\begin{equation*}
\mathcal{A}\leteq \Setopen[x,y]\subseteq [a,b]: \big(\,\forall t\in [x,y]\enskip f(t)<c\,\big) \llor \big(\,\forall t\in [x,y]\enskip f(t)>c\,\big)\Setclose
\end{equation*}
and assume that there is no such $t\in [a,b]$ that $f(t)=c$.
$\mathcal{A}$ is \L-definable since $f$ is such.
$\mathcal{A}$ is beginable since $f$ is order-continuous.
The connectability of $\mathcal{A}$ is also clear.
Thus from \ax{CONT}-Cousin's lemma we get that $f(t)<c$ for all $t\in [a,b]$ or $f(t)>c$ for all $t\in [a,b]$.
So if $f(a)<c$ and $f(b)>c$, then there must be a $t$ where $f(t)=c$.
This completes the proof of the theorem.\end{proof}

\begin{thm}
\label{thm-indsup}
Assume \ax{CONT_\L} and \ax{AxPOS}. 
Let $f: \Q \parrow \Q$ be an \L-definable order-continuous function such that $[a,b]\subseteq \dom f$.
Then $\sup\{f(x): x\in [a,b]\}$ exists and there is a $t\in [a,b]$ where $f(t)=\sup\{f(x): x\in [a,b]\}$.
\end{thm}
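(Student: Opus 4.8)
The plan is to prove the theorem in two stages. First I would show that $f$ is bounded above on $[a,b]$, which by $\ax{CONT_\L}$ will give the existence of $\sup\{f(x)\setmid x\in[a,b]\}$; then I would show that this supremum is actually attained, by applying the first stage a second time to a cleverly chosen auxiliary function. The first stage is a standard ``covering'' argument packaged through the $\ax{CONT}$-Cousin's Lemma (Lem.~\ref{lem-ind-cousin}), and the second is a division trick. I expect the only genuine work to be the verification that the relevant family of subintervals satisfies the hypotheses of Cousin's Lemma --- in particular the \emph{beginable} condition, which is the one place order-continuity of $f$ is used, and the bookkeeping needed to keep the defining set $\L$-definable.

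For the first stage, let $\mathcal{A}$ be the set of all subintervals $[x,y]\subseteq[a,b]$ on which $f$ is bounded above. It is \emph{connectable} because the larger of two upper bounds bounds $f$ on the union; it is \emph{beginable} because, given $x\in[a,b]$, order-continuity of $f$ makes $\{u\setmid f(u)<f(x)+1\}$ open, so it contains some interval about $x$, and hence $f\le f(x)+1$ on every closed subinterval of $[a,b]$ lying in that interval and having $x$ as an endpoint; and it is \emph{$\L$-definable} because $\{t\setmid[a,t]\in\mathcal{A}\}$ is defined by the $\L$-formula
\[
a\le t\le b\;\land\;\exists M\,\forall u\,\Big((a\le u\le t)\rightarrow\exists z\,\big(\phi_f(u,z)\land z\le M\big)\Big).
\]
Then Cousin's Lemma gives $[a,b]\in\mathcal{A}$, i.e.\ $f$ is bounded above on $[a,b]$. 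Since $S\leteq\{f(x)\setmid x\in[a,b]\}$ is nonempty ($f(a)\in S$), bounded above, and $\L$-definable by the formula $\exists x\,\big(a\le x\le b\land\phi_f(x,y)\big)$ with $a,b$ as parameters, the instance $\ax{AxSup_\varphi}\in\ax{CONT_\L}$ for this formula yields $s\leteq\sup S$.

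For the second stage, suppose toward a contradiction that $f(t)\ne s$ for every $t\in[a,b]$; since $s$ is an upper bound of $S$, this forces $f(t)<s$ for all such $t$. Consider the function $g\leteq 1/(s-f)$. By the closure properties of $\L$-definable functions proved earlier in this chapter it is $\L$-definable (with $s$ as an extra parameter), it is order-continuous as a composition of order-continuous maps, and it is total on $[a,b]$ because $s-f(t)\ne 0$ there. Applying the first stage to $g$ produces an $M\in\Q^+$ with $g(t)\le M$ for all $t\in[a,b]$, whence $s-f(t)\ge 1/M$, i.e.\ $f(t)\le s-1/M$, for all $t\in[a,b]$. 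Thus $s-1/M$ is an upper bound of $S$ strictly below $s$, contradicting $s=\sup S$; so the supremum is attained, completing the proof. As a side remark, note that --- exactly as in Cousin's Lemma --- this argument makes essential use of the field operations on $\Q$ (to form $f(x)+1$, $s-f$, $1/(s-f)$ and $1/M$), so it in fact leans on $\ax{AxOF}$, not merely on $\ax{AxPOS}$.
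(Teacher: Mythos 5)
Your proof is correct, and its first stage (Cousin's Lemma applied to the family of subintervals on which $f$ is bounded above, followed by the appropriate instance of $\ax{AxSup_\varphi}$) is exactly the paper's. Where you diverge is in showing the supremum is attained: the paper runs Cousin's Lemma a \emph{second} time, on the family $\setopen [x,y]\subseteq[a,b] \setmid \exists q\enskip \forall t\in[x,y]\enskip f(t)<q<\sup H\setclose$, concluding directly that some single $q<\sup H$ bounds $f$ on all of $[a,b]$, contradicting the supremum property; you instead use the classical reciprocal trick $g=1/(s-f)$ and feed $g$ back into the first stage. Both work. Your route buys familiarity and reuses the boundedness statement as a black box, at the price of invoking the closure of $\L$-definable order-continuous functions under $s-f$ and $1/(\cdot)$ (which the paper does establish) and of genuinely needing division; the paper's route stays entirely inside the Cousin's-Lemma machinery and needs only the density of the order to witness $q$ between $f(x)$ and $\sup H$ in the beginability check. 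Your closing remark about $\ax{AxOF}$ versus $\ax{AxPOS}$ is well taken and in fact applies equally to the paper's own proof: the theorem is stated under $\ax{AxPOS}$, but Cousin's Lemma is only formulated under $\ax{AxOF}$ with $\L\supseteq\mathcal{OF}$, and both second stages use more than a bare partial order. One small definitional point: order-continuity as defined in the paper only guarantees that preimages of \emph{bounded} intervals $(c,d)$ are open, so in your beginability check you should take the open set $f^{-1}\big((f(x)-1,f(x)+1)\big)\ni x$ rather than appeal to $\setopen u\setmid f(u)<f(x)+1\setclose$ directly (though the latter is also open as a union of such preimages); this does not affect the argument.
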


\begin{proof}
Let $H\leteq \{f(x): x\in [a,b]\}$ and 
\begin{equation*}
\mathcal{A}\leteq \Setopen [x,y]\subseteq [a,b] \setmid \exists c\in \Q \enskip \forall t\in [x,y] \enskip f(t)<c \Setclose.
\end{equation*}
Since $\mathcal{A}$ is \L-definable, beginable and connectable, $H$ is bounded by \ax{CONT}-Cousin's Lemma.
Thus from \ax{CONT_\L} it follows that $\sup H$ exists since $H$ is nonempty, \L-definable and bounded.
If there is no $t\in [a,b]$ such that $f(t)=\sup H$, then 
\begin{equation*}
\mathcal{A}\leteq \Setopen [x,y]\subseteq [a,b] \setmid \exists q \in \Q \enskip \forall t\in [x,y] \enskip f(t)<q<\sup H \Setclose
\end{equation*}
 is also \L-definable, beginable and connectable.  Thus $[a,b]\in
 \mathcal{A}$ by Cousin's lemma, therefore there is a $q<\sup H$ such that
 $f(t)<q$ for all $t\in [a,b]$ and this contradicts the supremum
 property.  This completes the proof of the theorem.\end{proof}

A function $f:\Q^n \parrow\Q^m$ is called \df{continuous} at $\vq\in\dom f$ if the usual formula of continuity holds for $f$, i.e.:
\begin{equation*}
\forall \varepsilon \in \Q^+ \; \exists \delta \in \Q^+ \enskip \forall \vp \in \dom f \quad \left| \vp-\vqq \right| < \delta  \then \left| f(\vpp)-f(\vqq) \right| <\varepsilon.
\end{equation*}
The function $f$ is called continuous if it is continuous at every
$\vq\in \dom f$. Let us note that if $f:\Q^n \parrow\Q^m$ is a continuous function,
$f\big|_H$ is also continuous for all $H\subseteq\Q^n$.

\theoremstyle{definition} \newtheorem*{ind-bolzano}{\colorbox{thmbgcolor}{\textcolor{thmcolor}{\ax{CONT}-Bolzano's Theorem}}} 
\begin{ind-bolzano}
\label{thm-ind-bolzano}
\index{\ax{CONT}-Bolzano's Theorem}
Let $\L\supseteq\mathcal{OF}$.
Assume \ax{CONT_\L} and \ax{AxOF}.
Let $f: \Q \parrow \Q$ be an \L-definable and continuous function such that $[a,b]\subseteq\dom f$.
If $c$ is between $f(a)$ and $f(b)$, there is an $s\in [a,b]$ such that $f(s)=c$.
\end{ind-bolzano}

\begin{proof}
Let $c$ be between $f(a)$ and $f(b)$.  We can assume that $f(a)<f(b)$.
Let $H\leteq \{x\in [a,b] : f(x) <c\}$.  Then $H$ is \L-definable,
bounded and nonempty.  Thus by \ax{CONT_\L}, the supremum of $H$
exists.  Let us call it $s$.  Both $\{x\in (a,b) : f(x)<c \}$ and
$\{x\in (a,b): f(x)>c \}$ are nonempty open sets since $f$ is
continuous on $[a,b]$.  Thus $f(s)$ cannot be less than $c$ since $s$
is an upper bound of $H$ and cannot be greater than $c$ since $s$ is
the least upper bound.  Hence $f(s)=c$ as it was required.
\end{proof}

\begin{thm}
\label{thm-sup}
Let $\L\supseteq\mathcal{OF}$.
Assume \ax{CONT_\L} and \ax{AxOF}.
Let $f: \Q \parrow \Q$ be an \L-definable and continuous function such that $[a,b]\subset\dom f$.
Then the supremum $s$ of $\{f(x): x\in [a,b]\}$ exists and there is a $y\in [a,b]$ such that $f(y)=s$.
\end{thm}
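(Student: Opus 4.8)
The plan is to mirror the proof of \ax{CONT}-order-Bolzano's Theorem / Theorem~\ref{thm-indsup}, but now in the field setting, using \ax{CONT}-Cousin's Lemma to first establish boundedness and then to force the supremum to be attained. First I would set $H\leteq\setopen f(x) : x\in[a,b]\setclose$ and observe that $H$ is nonempty and \L-definable, since $f$ is \L-definable and $[a,b]$ is $\mathcal{OF}$-definable; so by \ax{CONT_\L} it suffices to show $H$ is bounded above. For that I would apply \ax{CONT}-Cousin's Lemma to
\begin{equation*}
\mathcal{A}\leteq \Setopen [x,y]\subseteq[a,b] \setmid \exists c\in\Q\enskip \forall t\in[x,y]\enskip f(t)<c \Setclose,
\end{equation*}
exactly as in the proof of Theorem~\ref{thm-indsup}: $\mathcal{A}$ is beginable because $f$ is continuous (near any $x$, $f$ stays within $1$ of $f(x)$, so any short enough one-sided neighborhood of $x$ lies in $\mathcal{A}$), it is connectable because the max of two bounds is a bound, and it is \L-definable because $f$ is. Hence $[a,b]\in\mathcal{A}$, i.e.\ $f$ is bounded above on $[a,b]$, so $s\leteq\sup H$ exists by \ax{CONT_\L}.

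Next I would show $s$ is attained. Suppose not, i.e.\ $f(t)<s$ for all $t\in[a,b]$. Then consider
\begin{equation*}
\mathcal{A}'\leteq \Setopen [x,y]\subseteq[a,b] \setmid \exists q\in\Q\enskip \forall t\in[x,y]\enskip f(t)<q<s \Setclose.
\end{equation*}
This set is \L-definable since $f$ is. It is connectable by the same max argument. It is beginable: given $x\in[a,b]$, we have $f(x)<s$, so pick $q$ with $f(x)<q<s$; by continuity of $f$ at $x$ there is a neighborhood of $x$ on which $f<q$, and every one-sided subinterval of that neighborhood lies in $\mathcal{A}'$. By \ax{CONT}-Cousin's Lemma, $[a,b]\in\mathcal{A}'$, so there is a single $q<s$ with $f(t)<q$ for all $t\in[a,b]$; but then $q$ is an upper bound of $H$ strictly below $s$, contradicting $s=\sup H$. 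Therefore there is a $y\in[a,b]$ with $f(y)=s$, which completes the proof.

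I do not expect a genuine obstacle here, since everything reduces to two applications of \ax{CONT}-Cousin's Lemma together with the basic continuity-gives-local-bounds observation; the proof is essentially identical to that of Theorem~\ref{thm-indsup}, only replacing \ax{AxPOS} and order-continuity by \ax{AxOF} and ordinary $\varepsilon$-$\delta$ continuity, and one should double-check that the beginability verifications really only use continuity of $f$ at the single point $x$ (which they do). The one point worth stating carefully is why $H$, $\mathcal{A}$ and $\mathcal{A}'$ are \L-definable: this is because $f$ has a defining formula $\phi_f$ in $\L$, $[a,b]$ is definable using $a,b$ as parameters, and bounded quantifiers over $[x,y]$ and the existential ``$\exists c$'' or ``$\exists q$'' stay within FOL — so \ax{CONT_\L} and \ax{CONT}-Cousin's Lemma genuinely apply.
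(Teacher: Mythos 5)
Your proof is correct, but it does not follow the route the paper takes for this particular theorem. The paper's own proof of Thm.~\ref{thm-sup} works with suprema directly: it first takes the supremum of $\setopen y\in[a,b]\setmid \exists c\in\Q\ \forall x\in[a,y]\ f(x)<c\setclose$, argues from continuity that this supremum is $b$ and belongs to the set (so $f$ is bounded on $[a,b]$), then takes $s=\sup\setopen f(x):x\in[a,b]\setclose$ and finally considers the supremum of $\setopen y\in[a,b]\setmid\exists c\ \forall x\in[a,y]\ f(x)<c<s\setclose$, arguing that $f$ cannot stay below $s$ at that supremum. You instead run both halves of the argument through \ax{CONT}-Cousin's Lemma, exactly as the paper does for the order-topology analogue Thm.~\ref{thm-indsup}. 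The two arguments are morally the same --- Cousin's Lemma is itself proved by the ``supremum of good endpoints'' device --- but yours is more modular and arguably more carefully justified at the step the paper compresses into ``this supremum has to be $b$ and $b\in H$ since $f$ is continuous,'' which is precisely the beginability-plus-connectability verification you spell out. Your approach buys uniformity with Thm.~\ref{thm-indsup} at the cost of invoking a lemma whose stated hypotheses ($\L\supseteq\mathcal{OF}$, \ax{AxOF}) you correctly note are available here; the paper's approach is more self-contained but leaves more to the reader. Your definability and beginability checks (including the observation that one only needs continuity of $f$ at the single point $x$, and that the midpoint $(f(x)+s)/2$ supplies the needed $q$) are sound.
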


\begin{proof}
The supremum of $H\leteq \{y\in [a,b]: \exists c\in \Q\enskip \forall x\in [a,y]\quad f(x)<c\}$ exists by \ax{CONT_\L} since $H$ is \L-definable, nonempty and bounded.
This supremum has to be $b$ and $b\in H$ since $f$ is continuous on $[a,b]$.
Thus $Ran(f)\leteq \{f(x): x\in [a,b]\}$ is bounded.
Thus by \ax{CONT_\L}, it has a supremum, say $s$, since it is \L-definable and nonempty.
We can assume that $f(a)\neq s$.
Let $A\leteq \{y\in [a,b]: \exists c \in \Q\enskip \forall x\in [a,y]\quad f(x)<c<s\}$.
By \ax{CONT_\L}, $A$ has a supremum.
At this supremum, $f$ cannot be less than $s$ since $f$ is continuous on $[a,b]$ and $s$ is the supremum of $Ran(f)$.
\end{proof}

We call function $f$ \df{monotonic}\index{monotonic} if it preserves or reverses the relation $<$, i.e., $f(x)<f(y)$ [or $f(x)>f(y)$] for all $x,y\in \dom f$ if $x<y$.

\begin{lem}
\label{lem-moncont}
If $f:\Q\parrow\Q$ is monotonic and $\ran f$ is connected, $f$ is continuous.\qed
\end{lem}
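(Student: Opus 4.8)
The plan is to verify the $\varepsilon$-$\delta$ definition of continuity directly at each point of $\dom f$, using only the ordered field structure (this is why the lemma carries no ``\ax{CONT}-'' mark: completeness is not needed). First I would dispose of the decreasing case by a sign flip. If $f$ reverses $<$, then $-f$ preserves $<$, its range $-\ran f$ is connected exactly when $\ran f$ is, since the interval between $-a$ and $-b$ is the negation of the interval between $a$ and $b$, and $-f$ is continuous exactly when $f$ is, because $|(-f)(p)-(-f)(q)|=|f(p)-f(q)|$. Hence it suffices to treat a monotone increasing $f$.

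So assume $f$ is increasing. Fix $q\in\dom f$ and $\varepsilon\in\Q^+$; I must produce $\delta\in\Q^+$ with $|f(p)-f(q)|<\varepsilon$ for every $p\in\dom f$ satisfying $|p-q|<\delta$. The idea is to bound $\delta$ separately on the two sides of $q$. For the right side: if $\dom f$ has no element above $q$, put $\delta_{\mathrm r}:=1$. Otherwise choose $p_0\in\dom f$ with $p_0>q$, so $f(p_0)>f(q)$ by (strict) monotonicity. If $f(p_0)<f(q)+\varepsilon$, put $\delta_{\mathrm r}:=p_0-q$, since then every $p\in\dom f$ with $q<p<p_0$ has $f(q)<f(p)\le f(p_0)<f(q)+\varepsilon$. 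If instead $f(p_0)\ge f(q)+\varepsilon$, then $f(q)+\varepsilon/2$ lies strictly between $f(q)$ and $f(p_0)$, hence --- and this is the only place connectedness of $\ran f$ enters --- it belongs to $\ran f$; pick $p_1\in\dom f$ with $f(p_1)=f(q)+\varepsilon/2$, note $p_1>q$ because $f$ is increasing and $f(p_1)>f(q)$, and put $\delta_{\mathrm r}:=p_1-q$, so that $q<p<p_1$ forces $f(q)<f(p)<f(q)+\varepsilon/2<f(q)+\varepsilon$. The left side is mirror symmetric and produces a $\delta_{\mathrm l}\in\Q^+$ in the same fashion.

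Finally I would set $\delta:=\min(\delta_{\mathrm l},\delta_{\mathrm r})\in\Q^+$ and check the cases for $p\in\dom f$ with $|p-q|<\delta$: the case $p=q$ is trivial, the case $p>q$ forces $q<p<q+\delta_{\mathrm r}$ so the right side estimate applies, and the case $p<q$ is handled symmetrically by the left side estimate. I do not expect a genuine obstacle; the argument is elementary. The one thing to keep straight is that ``connected'' here means $(x,y)\subseteq\ran f$ for all $x,y\in\ran f$ in the nonstandard interval notation of the excerpt, which is exactly what licenses the step producing $p_1$ --- so the proof is, in effect, a bare-hands intermediate value argument adapted to monotone functions.
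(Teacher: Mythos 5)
Your proof is correct. The paper offers no proof of this lemma (the \qed{} immediately after the statement marks it as left to the reader), and your elementary $\varepsilon$--$\delta$ argument fills the gap soundly: the reduction to the increasing case, the one-sided choice of $\delta$, and the single appeal to connectedness of $\ran f$ (in the paper's symmetric-interval sense) to produce $p_1$ with $f(p_1)=f(q)+\varepsilon/2$ are all valid, and you are right that no instance of \ax{CONT} is needed.
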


\begin{lem}
\label{lem-injcont}
Assume \ax{CONT}.
Let $f:\Q\parrow\Q$ be definable and continuous such that $\dom f$ is connected.
Then
\begin{enumerate}
\item $\ran f$ is also connected.
\item If $f$ is injective, it is also monotonic.
Moreover, $f^{-1}$ is also a definable monotonic and continuous function.
\end{enumerate}
\end{lem}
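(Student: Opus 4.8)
\textbf{Proof proposal for Lemma~\ref{lem-injcont}.}

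The plan is to prove the two parts in order, using \ax{CONT}-Bolzano's Theorem as the main engine. For part (1), I would argue by contradiction: suppose $\ran f$ is not connected, so there exist $u,v \in \ran f$ and a point $c$ with $u < c < v$ but $c \notin \ran f$. Pick $a,b \in \dom f$ with $f(a) = u$ and $f(b) = v$. Since $\dom f$ is connected, $[a,b] \subseteq \dom f$ (here I use our nonstandard interval convention, so this covers both $a<b$ and $b<a$). Now $f$ is definable and continuous on $[a,b]$, and $c$ lies strictly between $f(a)$ and $f(b)$, so by \ax{CONT}-Bolzano's Theorem there is an $s \in [a,b]$ with $f(s) = c$, contradicting $c \notin \ran f$. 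Hence $\ran f$ is connected.

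For part (2), assume $f$ is injective. First I would show $f$ is monotonic. Fix any two points $x_0 < y_0$ in $\dom f$; without loss of generality $f(x_0) < f(y_0)$ (they are distinct by injectivity). I claim $f$ preserves $<$ everywhere. Suppose not: then there are $x_1 < y_1$ in $\dom f$ with $f(x_1) > f(y_1)$. The standard trick is to connect the pair $(x_0,y_0)$ to the pair $(x_1,y_1)$ by a one-parameter family and apply the intermediate value property: consider the continuous (definable) auxiliary function $t \mapsto f\big((1-t)x_0 + t x_1\big) - f\big((1-t)y_0 + t y_1\big)$ on $[0,1]$, whose domain lies in $\dom f$ by connectedness; it is positive-signed-opposite at the two endpoints in a way that forces, via \ax{CONT}-Bolzano, a parameter $t$ at which the two arguments have equal $f$-value while being distinct --- contradicting injectivity. (One must be a little careful to arrange the arguments to stay distinct; the cleanest route is to first handle the case where the four points are suitably ordered and reduce the general case to it, exactly as in the classical proof that an injective continuous function on an interval is strictly monotone.) Once monotonicity is established, $f^{-1}$ is a function with connected domain $\ran f$ (by part (1)) and is itself monotonic (it preserves or reverses $<$ according to whether $f$ does); it is definable because the inverse of a definable function is definable, taking $\phi_{f^{-1}}(\vx,\vyy) \leteq \phi_f(\vyy,\vx)$; and it is continuous by Lemma~\ref{lem-moncont}, since a monotonic function with connected range is continuous.

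The main obstacle I expect is the bookkeeping in the monotonicity argument: turning ``injective + continuous $\Rightarrow$ monotone'' into something that goes through cleanly with only \ax{CONT} available, rather than full second-order completeness. The subtlety is that the usual textbook proof implicitly uses the order structure of an interval repeatedly, and here I must make sure every auxiliary set or function invoked is \L-definable so that \ax{CONT}-Bolzano actually applies; the family above is definable with $x_0,y_0,x_1,y_1$ as parameters, so this is fine, but the case analysis needed to guarantee the two arguments stay distinct at the crossing parameter needs to be done carefully. Everything else --- connectedness of the range, definability of $f^{-1}$, and continuity of $f^{-1}$ via Lemma~\ref{lem-moncont} --- is routine.
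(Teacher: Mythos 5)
Your proposal is correct and follows essentially the same route as the paper: part (1) via \ax{CONT}-Bolzano, monotonicity of an injective $f$ again via \ax{CONT}-Bolzano, and continuity of $f^{-1}$ via Lemma~\ref{lem-moncont} applied to its connected range $\dom f$. The paper dispatches the ``injective $\Rightarrow$ monotonic'' step in one line, and your definable one-parameter family (which automatically keeps the two arguments distinct, since $x_i<y_i$ forces $(1-t)x_0+tx_1<(1-t)y_0+ty_1$ for all $t$) is a perfectly adequate way to fill in that detail.
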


\begin{proof}
Item (1) is a consequence of \ax{CONT}-Bolzano theorem. To
prove Item (2), let us first note that if $f$ were not monotonic, it
would not be injective by \ax{CONT}-Bolzano theorem.  It is clear that
$f^{-1}$ is definable and monotonic since $f$ is such.  Thus by
Lem.~\ref{lem-moncont}, $f^{-1}$ is continuous.
\end{proof}

\noindent
The following can be easily proved without any  of the \ax{CONT} axiom schemas.
\begin{prop}
Assume \ax{AxOF}.
Let $f:\Q\parrow\Q$ be a function. Then $f$ is continuous iff it is order-continuous.\qed
\end{prop}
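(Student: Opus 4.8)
The statement to prove is: assuming \ax{AxOF}, a function $f:\Q\parrow\Q$ is continuous iff it is order-continuous. The plan is to prove the two implications separately, working purely within an ordered field where all the usual $\varepsilon$--$\delta$ manipulations are available without any completeness assumption.

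First I would unfold both definitions carefully. Recall that $f$ is \emph{continuous} at $\vq\in\dom f$ means for every $\varepsilon\in\Q^+$ there is a $\delta\in\Q^+$ such that $|p-q|<\delta$ and $p\in\dom f$ imply $|f(p)-f(q)|<\varepsilon$; continuity means continuity at every point of the domain. And $f$ is \emph{order-continuous} means $\{x : f(x)\in(c,d)\}$ is open for all $c,d\in\Q$, where here one must be mindful of the paper's nonstandard interval convention (so $(c,d)$ is a genuine open interval between $c$ and $d$ regardless of their order, possibly empty only if $c=d$), and open means the inverse image contains an order-interval around each of its points, intersected of course with nothing extra since we work in $\Q$ — but one should be careful that ``inverse image'' here means $\{x\in\dom f : f(x)\in(c,d)\}$.

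For the forward direction (continuous $\Rightarrow$ order-continuous), I would take $c,d\in\Q$ and a point $x_0$ with $f(x_0)\in(c,d)$; set $\varepsilon\leteq\min\{|f(x_0)-c|,|f(x_0)-d|\}$, which is a positive element since $f(x_0)$ lies strictly between $c$ and $d$, and apply continuity at $x_0$ to get a $\delta$; then the open interval $(x_0-\delta,x_0+\delta)$ witnesses openness, because any $x$ in it with $x\in\dom f$ satisfies $|f(x)-f(x_0)|<\varepsilon$, hence $f(x)\in(c,d)$. For the converse (order-continuous $\Rightarrow$ continuous), I would fix $\vq\in\dom f$ and $\varepsilon\in\Q^+$, consider the interval $(f(q)-\varepsilon, f(q)+\varepsilon)$, note $q$ lies in its inverse image, use order-continuity to obtain an order-interval $(a,b)\subseteq$ that inverse image with $q\in(a,b)$, and then set $\delta\leteq\min\{q-a,b-q\}\in\Q^+$; every $p\in\dom f$ with $|p-q|<\delta$ then lies in $(a,b)$, so $f(p)\in(f(q)-\varepsilon,f(q)+\varepsilon)$, i.e.\ $|f(p)-f(q)|<\varepsilon$.

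The main thing to be careful about — really the only subtlety, since there are no completeness issues at all here — is the bookkeeping around partial functions and around the paper's interval convention: ``open'' and ``inverse image'' must be interpreted relative to $\dom f$ correctly, and the $\min$ operations producing $\delta$ and $\varepsilon$ must be checked to land in $\Q^+$. Since $\min$ of two positive field elements is positive and the definition of openness already bakes in the two-sided order-interval, no genuine obstacle arises; the proof is a routine $\varepsilon$--$\delta$ translation valid in any ordered field. I would therefore present it compactly, citing \ax{AxOF} for the availability of order arithmetic and absolute values, and note explicitly that no \ax{CONT} schema is needed (which is why the statement is marked without the ``\ax{CONT}-'' prefix).
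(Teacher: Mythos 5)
The paper gives no proof of this proposition at all (it is stated with a closing \qed as "easily proved without any of the \ax{CONT} axiom schemas"), and your routine $\varepsilon$--$\delta$ argument in both directions is exactly the intended one and is correct. The one point of genuine substance is your remark that "open" and "inverse image" must be read relative to $\dom f$: under the paper's literal definition of open subsets of $\Q$, the forward implication would fail for a continuous partial function with non-open domain (e.g.\ a constant function on $[0,1]$, whose preimage of $(-1,1)$ is $[0,1]$), so your interpretive choice is necessary for the statement to be true and deserves the explicit mention you give it.
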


\begin{prop}
Assume \ax{AxOF}.
Let $f,g:\Q\parrow\Q$ be continuous functions.
Then $f+g$, $f\cdot g$ and $f\circ g$ are also continuous ones.\qed
\end{prop}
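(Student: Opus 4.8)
The plan is to verify each of the three cases directly from the $\varepsilon$--$\delta$ definition of continuity, invoking only \ax{AxOF}; no instance of \ax{CONT} is needed, since no supremum is used anywhere. Throughout I use the standard consequences of the ordered field axioms that $|a+b|\le|a|+|b|$ and $|a\cdot b|=|a|\cdot|b|$ for all $a,b\in\Q$, and that $\varepsilon/c\in\Q^+$ whenever $\varepsilon,c\in\Q^+$. Recall also that $\dom(f+g)=\dom(f\cdot g)=\dom f\cap\dom g$ and $\dom(f\circ g)=\{x\in\dom f:f(x)\in\dom g\}$ by the definitions of these operations. For $f+g$, fix $q\in\dom f\cap\dom g$ and $\varepsilon\in\Q^+$; applying continuity of $f$ and of $g$ at $q$ with tolerance $\varepsilon/2$ yields $\delta_1,\delta_2\in\Q^+$, and with $\delta\leteq\min(\delta_1,\delta_2)$ the triangle inequality gives $|(f+g)(p)-(f+g)(q)|\le|f(p)-f(q)|+|g(p)-g(q)|<\varepsilon$ for every $p\in\dom f\cap\dom g$ with $|p-q|<\delta$.

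For $f\cdot g$, fix $q$ and $\varepsilon$ as before. The one step beyond the additive case is a preliminary local bound on $|f|$: by continuity of $f$ at $q$ with tolerance $1$, there is $\delta_0\in\Q^+$ such that $|f(p)|\le|f(q)|+1$ whenever $|p-q|<\delta_0$. Using the identity $f(p)g(p)-f(q)g(q)=f(p)\big(g(p)-g(q)\big)+g(q)\big(f(p)-f(q)\big)$, I would choose $\delta_1,\delta_2\in\Q^+$ so that $|g(p)-g(q)|<\varepsilon/\big(2(|f(q)|+1)\big)$ and $|f(p)-f(q)|<\varepsilon/\big(2(|g(q)|+1)\big)$ on the respective $\delta$-balls, and then set $\delta\leteq\min(\delta_0,\delta_1,\delta_2)$; combining the two estimates bounds $|f(p)g(p)-f(q)g(q)|$ by $\varepsilon/2+\varepsilon/2=\varepsilon$. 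All denominators appearing here are strictly positive elements of $\Q$, so the quotients are legitimate positive field elements, which is exactly where \ax{AxOF} enters.

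For $f\circ g$, the only point requiring care is the paper's convention $(f\circ g)(x)=g\big(f(x)\big)$, so that $f\circ g$ means ``apply $f$ first, then $g$.'' Fix $q\in\dom(f\circ g)$, so that $f(q)\in\dom g$, and fix $\varepsilon\in\Q^+$. By continuity of $g$ at $f(q)$ there is $\eta\in\Q^+$ with $|g(y)-g(f(q))|<\varepsilon$ for all $y\in\dom g$ with $|y-f(q)|<\eta$; by continuity of $f$ at $q$ applied with tolerance $\eta$ there is $\delta\in\Q^+$ with $|f(p)-f(q)|<\eta$ for all $p\in\dom f$ with $|p-q|<\delta$. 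For $p\in\dom(f\circ g)$ with $|p-q|<\delta$ we have $p\in\dom f$ and $f(p)\in\dom g$, so chaining the two implications gives $|(f\circ g)(p)-(f\circ g)(q)|=|g(f(p))-g(f(q))|<\varepsilon$. The ``hard part'' here is really only the domain bookkeeping and respecting the reversed composition convention; there is no genuine analytic obstacle, which is consistent with the proposition being stated in the text with \qed and no written proof.
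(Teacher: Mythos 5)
Your proof is correct, and it is the routine $\varepsilon$--$\delta$ verification the paper has in mind: the proposition is stated with the proof omitted precisely because nothing beyond the ordered-field axioms (triangle inequality, local boundedness for the product, and the paper's reversed composition convention $(f\circ g)(x)=g(f(x))$, all of which you handle correctly) is needed. No instance of \ax{CONT} is required, as you note, so there is nothing to add.
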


\begin{example}
Let $exp:\R \rightarrow \R$ be the exponential map.
Then $exp$ is a continuous function but it is not $\mathcal{OF}$-definable.
\end{example}

We call a set $Z\subseteq \Q^n$ \df{closed} \index{closed set} iff $\Q^n\setminus Z$ is open.
Let us note that $\{p\}$ is closed for all $\vp\in \Q^n$.
The following can be easily proved without any \ax{CONT} schema.

\begin{prop}
Let $\Q$ be an ordered field. Let $f:\Q^n\rightarrow \Q^m$. The
following three statements are equivalent:
\begin{itemize}
\item[(i)] $f$ is continuous.
\item[(ii)] The $f^{-1}$-image of an open set is open.
\item[(iii)] The $f^{-1}$-image of a closed set is closed.\hfill\qedsymbol
\end{itemize}
\end{prop}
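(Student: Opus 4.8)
The plan is to prove $(i)\Rightarrow(ii)\Rightarrow(i)$ together with the trivial equivalence $(ii)\Leftrightarrow(iii)$, after first recording the one geometric fact that drives everything: every open ball $B_\varepsilon(\vpp)\subseteq\Q^n$ is itself an open set. This is the only place a computation is needed — given $\vq\in B_\varepsilon(\vpp)$, any small enough ball around $\vq$ sits inside $B_\varepsilon(\vpp)$ by the triangle inequality for the Euclidean length. That inequality in turn reduces (after squaring, which is legitimate since both sides are nonnegative and one may read $|\cdot|<r$ as $|\cdot|^2<r^2$) to Cauchy--Schwarz $\big(\textstyle\sum_i a_ib_i\big)^2\le\big(\sum_i a_i^2\big)\big(\sum_i b_i^2\big)$, which holds over an arbitrary ordered field because $\sum_{i,j}(a_ib_j-a_jb_i)^2\ge 0$. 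I would dispatch this as a one-line preliminary and not dwell on it.

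Given the preliminary, $(i)\Rightarrow(ii)$ goes as follows. Let $G\subseteq\Q^m$ be open and let $\vp\in f^{-1}(G)$, so $f(\vpp)\in G$; pick $\varepsilon\in\Q^+$ with $B_\varepsilon(f(\vpp))\subseteq G$. Continuity of $f$ at $\vp$ gives $\delta\in\Q^+$ with $|\vp-\vqq|<\delta\Rightarrow|f(\vpp)-f(\vqq)|<\varepsilon$, i.e.\ $B_\delta(\vpp)\subseteq f^{-1}\!\big(B_\varepsilon(f(\vpp))\big)\subseteq f^{-1}(G)$. Hence every point of $f^{-1}(G)$ has a ball around it inside $f^{-1}(G)$, so $f^{-1}(G)$ is open.

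For $(ii)\Rightarrow(i)$: fix $\vq\in\Q^n=\dom f$ (recall $f$ is total) and $\varepsilon\in\Q^+$. The ball $B_\varepsilon(f(\vqq))$ is open by the preliminary, so $f^{-1}\!\big(B_\varepsilon(f(\vqq))\big)$ is open by $(ii)$; it contains $\vq$, hence contains some $B_\delta(\vqq)$ with $\delta\in\Q^+$, which unwinds to $|\vp-\vqq|<\delta\Rightarrow|f(\vpp)-f(\vqq)|<\varepsilon$. Since $\vq$ was arbitrary, $f$ is continuous. Finally $(ii)\Leftrightarrow(iii)$ is immediate from $f^{-1}(\Q^m\setminus Z)=\Q^n\setminus f^{-1}(Z)$ and the definition of a closed set as the complement of an open one: if $Z$ is closed then $\Q^m\setminus Z$ is open, so under $(ii)$ its preimage $\Q^n\setminus f^{-1}(Z)$ is open, i.e.\ $f^{-1}(Z)$ is closed; the converse is the same computation read backwards, using $(iii)$ to conclude $(ii)$.

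The main obstacle, such as it is, is purely bookkeeping: making sure the ``open ball is open'' lemma is genuinely available over a bare ordered field (no square roots needed once one passes to squared distances, unlike the Euclidean-ordered-field setting of \ax{AxEOF}). Once that is in place, all three implications are routine $\varepsilon$--$\delta$ manipulations, and totality of $f$ means there is no subspace topology on the domain to account for.
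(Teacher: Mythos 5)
Your proof is correct and is exactly the routine $\varepsilon$--$\delta$ argument the paper has in mind when it omits the proof with ``can be easily proved'' --- (i)$\Rightarrow$(ii)$\Rightarrow$(i) via the openness of balls, and (ii)$\Leftrightarrow$(iii) by taking complements, using that $f$ is total so that $f^{-1}(\Q^m\setminus Z)=\Q^n\setminus f^{-1}(Z)$. Your preliminary remark is also well taken: over a bare ordered field the Euclidean length need not exist, and passing to squared distances together with the Lagrange-identity form of Cauchy--Schwarz is the right way to make ``open balls are open'' available without \ax{AxEOF}, a point the paper glosses over.
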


We say that $f: \Q^n\parrow \Q^k$ \df{tends to} $\vq\in \Q^k$ while
$\vx\in \dom f $ tends to $\vp\in \Q^n$ if the usual formula for the
limit of a function holds for $f$:
\begin{equation*}
\forall \varepsilon \in \Q^+ \;\exists \delta \in \Q^+ \enskip \forall
\vx \in \dom f \;\; 0< \left| \vx - \vpp \right| < \delta \then \left|
f(\vxx) -\vqq\right| <\varepsilon.
\end{equation*} 
This  $\vq$ is unique iff $\vp$ is not isolated from the set $\dom
f\setminus\{\vpp\}$, i.e., $B_\varepsilon(\vpp)\cap \dom f
\setminus\{\vpp\}\neq \emptyset$ for all $\varepsilon\in\Q^+$. In
this case we call $\vq$ the \df{limit}\index{limit} of the function
$f$ at $\vp$ and we write that
\begin{equation*}
 {\lim_{\vxx\rightarrow \vq}f(\vxx)}=\vq.
\end{equation*}

\section{Differentiable functions over ordered fields}
\label{sec-diff}
In this section we define the concept of differentiability
within FOL and prove some theorems about it which are
used in the proofs of the main theorems.

We say that a function $f: \Q^n \parrow \Q^m$ is
\df{differentiable}\label{p-diff}\index{differentiable function} at $\vq\in \dom f $
if the usual formula
\begin{equation*}
\forall \varepsilon \in \Q^+\; \exists \delta\in \Q^+ \enskip \forall
  \vp\in \dom f\cap B_\delta(\vqq) \quad\left| f(\vpp) - f(\vqq)
  - L(\vp-\vqq)\right| \le \varepsilon\cdot\left| \vp-\vqq \right|
\end{equation*}
 holds for a linear map $L: \Q^n\rightarrow \Q^m$.  In this case $L$
 is called \df{a derivative}\index{derivative} of $f$ at $\vq$.  The
 \df{set of derivative maps} of $f$ at $\vq$ is denoted by
 \Df{\der_{\vq}f}\index{$\der_{\vqq}f$} and any
 derivative of $f$ at $\vq$ is denoted by
 \Df{d_{\vqq}f}.\index{$d_{\vqq}f$} Function $f$ is called {\bf
   uniquely differentiable} at $\vq$ if it has one and only one
 derivative at $\vq$. In this case, $d_{\vqq}f$ is called \df{the
   derivative}\index{the derivative} of $f$ at $\vqq$.

\begin{rem}\label{rem-reldiff}
We say that a binary relation is differentiable at $\vq$ if it is
equal to a differentiable function on a small enough neighbourhood of
$\vq$.
\end{rem}

\begin{rem} \label{rem-diff}
If $f$ extends $f_0$ (i.e., $f\supseteq f_0$) and $f$ is
differentiable at $\vq$, then $f_0$ is also differentiable at $\vq$
and every derivative of $f$ at $\vq$ is also a derivative of $f_0$ at
$\vq$.
\end{rem}

Several theorems can be proved about differentiable functions without
using any \ax{CONT} axiom schema. Here we prove some of them. To do
so, we will use the following easily provable and well-known fact about linear
maps.

\begin{lem}\label{lem-linbound}
Every linear map $L$ is bounded in the following sense: there is a bound $M\in\Q^+$ such that 
$|L(\vxx)-L(\vyy)|\le M\cdot|\vx-\vyy|$ for all $\vx,\vy\in \dom L$.
\end{lem}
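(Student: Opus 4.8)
The plan is to reduce the two-point Lipschitz-type estimate to a single-point estimate by exploiting linearity, and then to bound the image of a vector by expanding it in the standard basis $\vei$ of $\Q^n$. Write $L:\Q^n\rightarrow\Q^m$, so that $\dom L=\Q^n$. First I would observe that, since $L$ is linear, $L(\vxx)-L(\vyy)=L(\vx-\vyy)$ for all $\vx,\vy\in\Q^n$; hence it suffices to find $M\in\Q^+$ with $|L(\vzz)|\le M\cdot|\vzz|$ for every $\vz\in\Q^n$, and then apply this to $\vz\leteq\vx-\vyy$.

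For the single-point bound, I would expand $\vz=\sum_{i=1}^{n} z_i\vei$ and use linearity again to get $L(\vzz)=\sum_{i=1}^{n} z_i\,L(\vei)$. By the triangle inequality for the Euclidean length, $|L(\vzz)|\le\sum_{i=1}^{n}|z_i|\cdot|L(\vei)|$. Since each coordinate satisfies $|z_i|\le|\vzz|$ (because $z_i^2\le z_1^2+\ldots+z_n^2$ in any ordered field, so the square roots compare the right way), we obtain
\begin{equation*}
|L(\vzz)|\le\Big(\sum_{i=1}^{n}|L(\vei)|\Big)\cdot|\vzz|.
\end{equation*}
Taking $M\leteq 1+\sum_{i=1}^{n}|L(\vei)|$ ensures $M\in\Q^+$ (so that $M$ is genuinely a positive element, even in the degenerate case $L=0$) and gives $|L(\vxx)-L(\vyy)|=|L(\vx-\vyy)|\le M\cdot|\vx-\vyy|$ for all $\vx,\vy\in\dom L$, as required.

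There is essentially no obstacle here: the only points that need a word of care are that all the manipulations (triangle inequality for $|\cdot|$, monotonicity of $\sqrt{\phantom{i}}$ on nonnegative elements, comparison $z_i^2\le\sum_j z_j^2$) are valid in an arbitrary Euclidean ordered field by \ax{AxEOF}, so no instance of \ax{CONT} is invoked; and that one must pad $M$ by $1$ to land in $\Q^+$ rather than merely $\Q^{\ge 0}$.
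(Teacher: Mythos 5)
Your proof is correct. The paper itself gives no argument for this lemma — it is introduced only as an "easily provable and well-known fact" — so there is nothing to compare against; your basis-expansion argument ($L(\vx-\vyy)=\sum_i (x_i-y_i)L(\vei)$, triangle inequality, $|z_i|\le|\vzz|$, pad $M$ by $1$ to stay in $\Q^+$) is the standard way to fill that gap, and your care that every step (Cauchy–Schwarz/triangle inequality for $|\cdot|$, monotonicity of $\sqrt{\phantom{i}}$) is purely algebraic and hence valid under \ax{AxEOF} alone, with no appeal to \ax{CONT}, is exactly the point that matters in this FOL setting.
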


\begin{thm}\label{thm-difflip}
Let $f$ be differentiable at $\vx\in\dom f$. Then there are
$\delta,K\in\Q^+$ such that $|f(\vxx)-f(\vyy)|\le K \cdot|\vx-\vyy|$
for all $\vy\in\dom f\cap B_\delta(\vxx)$.
\end{thm}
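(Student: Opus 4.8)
The statement is a local Lipschitz estimate that follows immediately from the definition of differentiability together with the boundedness of linear maps (Lemma~\ref{lem-linbound}). The plan is to unpack the definition of differentiability at $\vx$ with a convenient fixed choice of $\varepsilon$, say $\varepsilon=1$, obtaining a $\delta\in\Q^+$ such that
\begin{equation*}
|f(\vpp)-f(\vxx)-d_{\vxx}f(\vp-\vxx)|\le |\vp-\vxx|
\end{equation*}
for all $\vp\in\dom f\cap B_\delta(\vxx)$, where $d_{\vxx}f$ denotes some derivative of $f$ at $\vx$ (one exists since $f$ is differentiable there). Then I would apply Lemma~\ref{lem-linbound} to the linear map $L=d_{\vxx}f$ to get a bound $M\in\Q^+$ with $|L(\vuu)-L(\vww)|\le M\cdot|\vu-\vww|$ for all $\vu,\vw$; in particular, taking $\vw=\vo$ and using $L(\voo)=\vo$ by linearity, $|L(\vuu)|\le M\cdot|\vuu|$.

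Combining these, for every $\vy\in\dom f\cap B_\delta(\vxx)$ I would apply the differentiability inequality at $\vp=\vy$ and use the triangle inequality:
\begin{equation*}
|f(\vyy)-f(\vxx)|\le |f(\vyy)-f(\vxx)-d_{\vxx}f(\vy-\vxx)|+|d_{\vxx}f(\vy-\vxx)|\le |\vy-\vxx|+M\cdot|\vy-\vxx|=(1+M)\cdot|\vy-\vxx|.
\end{equation*}
Setting $K\leteq 1+M$ (which lies in $\Q^+$) gives the claim with this $\delta$ and $K$. Note $\vx\in\dom f\cap B_\delta(\vxx)$ so the estimate is nonvacuous, and it holds trivially at $\vy=\vx$.

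There is essentially no obstacle here; the only point requiring a moment's care is the reduction from the full linear bound in Lemma~\ref{lem-linbound} to the single-point bound $|L(\vuu)|\le M\cdot|\vuu|$, which uses only linearity ($L(\voo)=\vo$). Everything is purely algebraic over the ordered field $\Q$, so no instance of the \ax{CONT} schema is needed, consistent with the remark preceding the theorem that such results hold for arbitrary ordered fields. (If one prefers, one can absorb $\delta$ and $K$ simultaneously by noting that shrinking $\delta$ only makes the statement easier, so any $\delta$ furnished by differentiability at $\varepsilon=1$ works.)
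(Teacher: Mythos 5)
Your proposal is correct and follows essentially the same route as the paper: instantiate the differentiability definition with $\varepsilon=1$ to get $\delta$, bound the derivative map via Lemma~\ref{lem-linbound}, and conclude with the triangle inequality and $K=M+1$. No differences worth noting.
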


\begin{proof}
We have to choose $\delta$ and $K$ appropriately. 
Since $f$ is differentiable at $\vx\in\dom f$, there is a linear map $L$
and $\delta$ such that 
\begin{equation*}
\left| f(\vyy) - f(\vxx) - L(\vy-\vxx)\right| \le | \vy-\vxx |
\end{equation*}
for all $\vy\in \dom f\cap B_{\delta}(\vxx)$.
Let $M$ be a bound of $L$ which exists by Lem.~\ref{lem-linbound}, and let $K$ be $M+1$.
Then, by the triangle inequality and the linearity of $L$, 
\begin{equation*}
|f(\vyy) - f(\vxx)|\le| f(\vyy) - f(\vxx) - L(\vy-\vxx)| +| L(\vyy)-L(\vxx) |.
\end{equation*}
Thus, by Lem.~\ref{lem-linbound},
\begin{equation*}
|f(\vyy) - f(\vxx)|\le (1+M)\cdot|\vy-\vxx |=K\cdot|\vy-\vxx|
\end{equation*}
for all $\vy\in\dom f\cap B_\delta(\vxx)$.
\end{proof}

\begin{cor}
If $f$ is differentiable at $\vx\in\dom f$, then $f$ is also
continuous at $\vx$. \hfill\qed
\end{cor}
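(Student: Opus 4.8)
The statement to prove is the corollary that differentiability at a point implies continuity at that point. The plan is to derive this directly from Thm.~\ref{thm-difflip}, which has just been established: if $f$ is differentiable at $\vx\in\dom f$, then there are $\delta, K\in\Q^+$ such that $|f(\vxx)-f(\vyy)|\le K\cdot|\vx-\vyy|$ for all $\vy\in\dom f\cap B_\delta(\vxx)$. In other words, $f$ is locally Lipschitz at $\vx$, and local Lipschitz continuity at a point trivially implies continuity at that point.

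Concretely, I would proceed as follows. Fix $\vx\in\dom f$ at which $f$ is differentiable, and let $\delta, K\in\Q^+$ be as furnished by Thm.~\ref{thm-difflip}. To verify the $\varepsilon$-$\delta$ definition of continuity at $\vx$ (as given on p.\pageref{p-diff}-ish, the definition preceding \ax{CONT}-Bolzano's Theorem), let $\varepsilon\in\Q^+$ be arbitrary. Set $\delta'\leteq \min\{\delta, \varepsilon/K\}$, which is a positive element of $\Q$ since $\delta, \varepsilon/K\in\Q^+$. Then for any $\vy\in\dom f$ with $|\vx-\vyy|<\delta'$ we have in particular $\vy\in B_\delta(\vxx)$, so Thm.~\ref{thm-difflip} gives $|f(\vxx)-f(\vyy)|\le K\cdot|\vx-\vyy| < K\cdot(\varepsilon/K)=\varepsilon$. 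This is exactly the statement that $f$ is continuous at $\vx$.

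There is essentially no obstacle here: the entire content was already extracted into Thm.~\ref{thm-difflip}, and the corollary is the routine "local Lipschitz $\Rightarrow$ continuous" observation, which goes through over any ordered field and requires no \ax{CONT} schema. The only minor point worth a sentence in the write-up is that one should note $K>0$ so that $\varepsilon/K$ makes sense as a positive field element; this is guaranteed since $K\in\Q^+$ by the statement of Thm.~\ref{thm-difflip}. Since the paper writes "\hfill\qed" after the corollary rather than giving a proof, the natural thing is simply to present this as the one-line argument I have sketched, or to leave it as is; if a proof is wanted, the display above is the whole of it.

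\begin{proof}
Let $f$ be differentiable at $\vx\in\dom f$. By Thm.~\ref{thm-difflip}, there are $\delta, K\in\Q^+$ such that $|f(\vxx)-f(\vyy)|\le K\cdot|\vx-\vyy|$ for all $\vy\in\dom f\cap B_\delta(\vxx)$. Let $\varepsilon\in\Q^+$ be arbitrary, and put $\delta'\leteq \min\{\delta,\varepsilon/K\}$; note $\delta'\in\Q^+$ since $K>0$. Then for every $\vy\in\dom f$ with $|\vx-\vyy|<\delta'$ we have $\vy\in B_\delta(\vxx)$, hence
\begin{equation*}
|f(\vxx)-f(\vyy)|\le K\cdot|\vx-\vyy| < K\cdot\frac{\varepsilon}{K}=\varepsilon.
\end{equation*}
Thus $f$ is continuous at $\vx$.
\end{proof}
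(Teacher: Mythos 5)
Your proof is correct and follows exactly the route the paper intends: the corollary is stated with only a \qed because it is the immediate ``local Lipschitz implies continuous'' consequence of Thm.~\ref{thm-difflip}, and your $\delta'\leteq\min\{\delta,\varepsilon/K\}$ argument is precisely that step, valid over any ordered field without \ax{CONT}.
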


\begin{thm}\label{thm-chn}
Let $\Q$ be an ordered field. Let $g:\Q^n \parrow \Q^m$ and
$f:\Q^m\parrow \Q^k$ such that $g$ is differentiable at $\vq\in \Q^n$
and $f$ is differentiable at $g(\vqq)$. Let $L_g\in \der(\vq, g)$ and
$L_f\in \der(g(\vqq),f)$. Then $L_g\circ L_f\in \der(\vq ,g\circ f)$.
\end{thm}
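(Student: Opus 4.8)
The plan is to verify directly from the $\varepsilon$–$\delta$ definition of differentiability (p.\pageref{p-diff}) that $L_g\circ L_f$ works as a derivative of $g\circ f$ at $\vq$; recall that composition of functions here means $(g\circ f)(\vxx)=f(g(\vxx))$, so the candidate derivative $L_g\circ L_f$ sends $\vx$ to $L_f(L_g(\vxx))$, which is indeed a linear map $\Q^n\rightarrow\Q^k$. First I would fix $\varepsilon\in\Q^+$ and set up the standard estimate: write $\vb\leteq g(\vqq)$, and for $\vp$ near $\vq$ let $\vc\leteq g(\vpp)$. Then
\begin{multline*}
\big|f(g(\vpp))-f(g(\vqq))-L_g\circ L_f(\vp-\vqq)\big|\\
\le \big|f(\vc\,)-f(\vb\,)-L_f(\vc-\vb\,)\big|+\big|L_f(\vc-\vb\,)-L_f\big(L_g(\vp-\vqq)\big)\big|,
\end{multline*}
using the triangle inequality and the linearity of $L_f$ to rewrite $L_g\circ L_f(\vp-\vqq)=L_f\big(L_g(\vp-\vqq)\big)$. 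The second term is $\le M_f\cdot\big|\vc-\vb-L_g(\vp-\vqq)\big|$ where $M_f$ is a bound for $L_f$ from Lem.~\ref{lem-linbound}, and this is controlled by differentiability of $g$ at $\vq$. The first term is controlled by differentiability of $f$ at $\vb$, but only in terms of $|\vc-\vb|$, so I need an a priori Lipschitz-type bound $|\vc-\vb|=|g(\vpp)-g(\vqq)|\le K\cdot|\vp-\vqq|$ on a small neighborhood of $\vq$; this is exactly Thm.~\ref{thm-difflip}.

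Concretely, I would proceed as follows. Let $M_f$ be a bound for $L_f$ and $M_g$ a bound for $L_g$ (Lem.~\ref{lem-linbound}), and let $\delta_0,K\in\Q^+$ be such that $|g(\vpp)-g(\vqq)|\le K|\vp-\vqq|$ for all $\vp\in\dom g\cap B_{\delta_0}(\vqq)$ (Thm.~\ref{thm-difflip}). Given $\varepsilon\in\Q^+$, first apply differentiability of $f$ at $\vb=g(\vqq)$ with tolerance $\varepsilon/(2K)$ to get $\eta\in\Q^+$ with $|f(\vc\,)-f(\vb\,)-L_f(\vc-\vb\,)|\le \tfrac{\varepsilon}{2K}|\vc-\vb\,|$ whenever $\vc\in\dom f\cap B_\eta(\vb\,)$. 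Then apply differentiability of $g$ at $\vq$ with tolerance $\varepsilon/(2M_f)$ to get $\delta_1$; also use continuity of $g$ at $\vq$ (the corollary to Thm.~\ref{thm-difflip}) to get $\delta_2$ with $g(\vpp)\in B_\eta(\vb\,)$ for $\vp\in B_{\delta_2}(\vqq)$. Set $\delta\leteq\min\{\delta_0,\delta_1,\delta_2\}$. For $\vp\in\dom(g\circ f)\cap B_\delta(\vqq)$ we then have $g(\vpp)\in\dom f\cap B_\eta(\vb\,)$, so the two-term bound above gives
\begin{equation*}
\big|(g\circ f)(\vpp)-(g\circ f)(\vqq)-L_g\circ L_f(\vp-\vqq)\big|\le \tfrac{\varepsilon}{2K}\cdot K|\vp-\vqq|+M_f\cdot\tfrac{\varepsilon}{2M_f}|\vp-\vqq|=\varepsilon|\vp-\vqq|,
\end{equation*}
which is precisely the differentiability condition witnessing $L_g\circ L_f\in\der_{\vqq}(g\circ f)$.

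The main obstacle — and the reason Thm.~\ref{thm-difflip} was proved first — is the usual subtlety in the chain rule: one cannot naively divide by $|g(\vpp)-g(\vqq)|$, since this may vanish for $\vp\neq\vq$, so the error from $f$ must be absorbed using the local Lipschitz bound on $g$ rather than a genuine quotient. A secondary point to be careful about is the domain bookkeeping: I must ensure $g(\vpp)$ actually lies in $\dom f$ and close enough to $g(\vqq)$ for the $f$-estimate to apply, which is why the continuity of $g$ at $\vq$ enters alongside its differentiability. No instance of \ax{CONT} is needed anywhere, consistently with the remark in the excerpt that Chain Rule is provable without it; all the ingredients (Lem.~\ref{lem-linbound}, Thm.~\ref{thm-difflip} and its corollary, and pure $\varepsilon$–$\delta$ manipulation over an ordered field) are available from \ax{AxOF} alone.
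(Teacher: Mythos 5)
Your proposal is correct and follows essentially the same route as the paper's proof: the same triangle-inequality decomposition into an $f$-error term and an $L_f$-of-$g$-error term, the same use of Lem.~\ref{lem-linbound} for the bound $M_f$ on $L_f$ and of Thm.~\ref{thm-difflip} for the local Lipschitz constant $K$ of $g$, and the same tolerances $\varepsilon/(2K)$ and $\varepsilon/(2M_f)$. Your explicit handling of the domain bookkeeping (ensuring $g(\vpp)\in\dom f\cap B_\eta(\vb\,)$ via continuity of $g$) is slightly more careful than the paper's, which folds this into its choice of $\delta$, but the argument is the same.
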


\begin{proof}
Let $\varepsilon\in\Q^+$ be fixed.  Let $L_g$ be a derivative of $g$
at $\vq$ and $L_f$ be a derivative of $f$ at $g(\vqq)$.  Let $K$,
$\delta_0$ be the bounding constants of $g$ given by
Thm.~\ref{thm-difflip} and let $M$ be the bounding constant of $L_f$
given by Lem.~\ref{lem-linbound}.  Then there is a $\delta\in\Q^+$ such
that $\delta\le \delta_0$,
\begin{equation*}
|g(\vpp) - g(\vqq) - L_g(\vp-\vqq)|\le \frac{\varepsilon}{2M}\cdot|\vp-\vqq | \text{ and}
\end{equation*}
\begin{equation*}
\left|f\big(g(\vpp)\big) - f(g(\vqq)) - L_f\big(g(\vpp)-g(\vqq)\big)\right|\le \frac{\varepsilon}{2K}\cdot|g(\vpp)-g(\vqq)|
\end{equation*}
for all $\vp\in\dom g\cap B_\delta(\vqq)$ and $g(\vpp)\in\dom f \cap B_\delta(\vqq)$.
By the triangle inequality and the linearity of $L$, 
\begin{multline*}
\left|f\big(g(\vpp)\big) - f\big(g(\vqq)\big) - L_f\big(L_g(\vp-\vqq)\big)\right|\\
\le \left|f\big(g(\vpp)\big) - f\big(g(\vqq)\big) - L_f\big(g(\vpp)-g(\vqq)\big)\right|
 + \left|L_f\big(g(\vpp)-g(\vqq)- L_g(\vp-\vqq)\big)\right|\\
\le \frac{\varepsilon}{2K}\cdot|g(\vpp)-g(\vqq) |+ M\cdot |g(\vpp) - g(\vqq) - L_g(\vp-\vqq)| \\
\le \frac{\varepsilon}{2K}\cdot K\cdot|\vp-\vqq | + M\cdot \frac{\varepsilon}{2M}\cdot|\vpp-\vqq|= \varepsilon\cdot|\vpp-\vqq|
\end{multline*}
for all $\vp\in\dom f\circ g\cap B_\delta(\vqq)$; and that is what we wanted to prove.
\end{proof}

\begin{example} 
Let $g:\Q\rightarrow\Q$ and $f:[-1,0]\rightarrow \Q$ be defined as
$g(x)=x^2$ for all $x\in\Q$ and $f(x)=0$ for all $x\in[-1,0]$.  Then
$g\circ f=\setopen \langle 0,0\rangle\setclose$.  So $f$, $g$ and
$g\circ f$ are differentiable at $0$ by definition, but derivative of
$g\circ f$ is not unique though the derivatives of $f$ and $g$ are
such.
\end{example}

The following Thms.\ \ref{thm-unider} and \ref{thm-uniderv} can be
proved by the usual textbook proofs of the uniqueness of derivatives.
\begin{thm}[uniqueness of derivatives]\label{thm-unider}
Assume \ax{AxOF}.  If $f:\Q^n\parrow \Q^m$ is differentiable at $\vq$
and there is a $\delta\in\Q^+$ such that $B_\delta(\vqq)\subseteq \dom
f$ then the derivative of $f$ at $\vq$ is unique.
\end{thm}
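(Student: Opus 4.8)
\textbf{Proof proposal for Theorem~\ref{thm-unider} (uniqueness of derivatives).}
The plan is to carry out the standard argument that a linear map is determined by the difference quotient, but being careful that it goes through for an arbitrary ordered field rather than $\R$. Suppose $f:\Q^n\parrow\Q^m$ is differentiable at $\vq$, that $B_\delta(\vqq)\subseteq\dom f$ for some $\delta\in\Q^+$, and that both $L_1,L_2\in\der_{\vqq}f$. Set $L\leteq L_1-L_2$, which is again a linear map $\Q^n\rightarrow\Q^m$. From the two differentiability inequalities (one for $L_1$, one for $L_2$) and the triangle inequality, one gets that for every $\varepsilon\in\Q^+$ there is a $\delta'\in\Q^+$, $\delta'\le\delta$, with
\begin{equation*}
|L(\vp-\vqq)|\le \varepsilon\cdot|\vp-\vqq|\qquad\text{for all }\vp\in B_{\delta'}(\vqq).
\end{equation*}
The goal is then to conclude $L=0$, i.e. $L_1=L_2$.

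The key step is to exploit that $B_\delta(\vqq)$ is a full ball: for any direction vector $\vu\in\Q^n$ with $\vu\ne\vo$, the points $\vp=\vq+t\vu$ lie in $B_{\delta'}(\vqq)$ for all sufficiently small $t\in\Q^+$ (explicitly whenever $t|\vu|<\delta'$), and this is where the hypothesis $B_\delta(\vqq)\subseteq\dom f$ is essential --- without it $\dom f$ might contain no segment through $\vq$ in the direction $\vu$. Plugging $\vp-\vqq=t\vu$ into the displayed inequality and using linearity of $L$ and homogeneity of $|\cdot|$ (here $|t\vu|=t|\vu|$ for $t\in\Q^+$, which holds in any ordered field by the definition of Euclidean length via $\sqrt{\phantom{i}}$), we divide by $t>0$ to obtain $|L(\vu)|\le\varepsilon\cdot|\vu|$. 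Since this holds for \emph{every} $\varepsilon\in\Q^+$ and $|\vu|$ is a fixed positive field element, an ordered-field argument (if $|L(\vu)|>0$, take $\varepsilon\leteq |L(\vu)|/(2|\vu|)$ for a contradiction) forces $|L(\vu)|=0$, hence $L(\vu)=\vo$. As $\vu\ne\vo$ was arbitrary and $L(\vo)=\vo$ by linearity, $L$ is the zero map, so $L_1=L_2$.

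I expect the only subtlety --- the ``main obstacle,'' though it is mild --- to be making sure every inference that is reflexive over $\R$ is actually valid over an arbitrary ordered field: that $|t\vu|=t|\vu|$ for positive scalars (true, since $\sqrt{t^2}=|t|=t$), that one can divide inequalities by a positive element, and that ``$\le\varepsilon$ for all $\varepsilon\in\Q^+$'' implies $=0$ for a field element that is a priori $\ge 0$ (this is exactly the kind of step \ax{AxOF} licenses and where $\R$-intuition about limits must be replaced by a direct $\varepsilon$-manipulation). No form of \ax{CONT} is needed, consistently with the remark preceding the statement. Everything else is the routine textbook computation, which I would not spell out in full.
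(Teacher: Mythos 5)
Your proof is correct and is exactly the argument the paper has in mind: the paper does not spell out a proof of Thm.~\ref{thm-unider} but merely remarks that it ``can be proved by the usual textbook proofs of the uniqueness of derivatives,'' and your write-up is precisely that textbook argument (difference of the two candidate derivatives, restriction to a segment $\vq+t\vu$ inside the ball, division by $t>0$, and the $\varepsilon$-squeeze valid over any ordered field), with the role of the hypothesis $B_\delta(\vqq)\subseteq\dom f$ correctly identified.
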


When $f$ is a function from a subset of $\Q$ to $\Q^k$, a
derivative of $f$ can be defined as a
limit of the function $h\mapsto\frac{f(x+h)-f(x)}{h}$, i.e., as
$\lim_{h\rightarrow 0}\frac{f(x+h)-f(x)}{h}$. In this situation the
derivatives of $f$ at $x$ are not linear maps but vectors. In this
case, we use the notation $f'(x)$ for a derivative and we call it a
\df{derivative vector}\index{derivative vector} of $f$ at $x$.\label{derivative} The
connection between the two definitions is the following: $d_xf(t)=t\cdot f'(x)$ for all $t\in \Q$.
By the following theorem, the derivatives of a differentiable curve are unique.    

\begin{thm}[uniqueness of derivative vectors]
\label{thm-uniderv}
Assume \ax{AxOF}.  The derivative of $f$ at $t$ is unique, if $f:\Q\parrow \Q^k$ is a curve (i.e., $\dom f$ is
connected and has at least two elements) and $f$ is differentiable at $t$.
\end{thm}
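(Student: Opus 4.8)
The plan is to prove Theorem~\ref{thm-uniderv} (uniqueness of derivative vectors) by a standard subtraction argument, adapted to the fact that a curve's domain need only be a connected subset of $\Q$ with at least two points. First I would set up: suppose $f:\Q\parrow\Q^k$ is a curve differentiable at $t\in\dom f$, and suppose $\vu,\vv\in\Q^k$ are both derivative vectors of $f$ at $t$. The goal is $\vu=\vv$, i.e.\ $|\vu-\vv|=0$, which in an ordered field amounts to showing $|\vu-\vv|\le\varepsilon$ for every $\varepsilon\in\Q^+$.

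The key step is to exploit that $\dom f$ is connected and has at least two elements, so $t$ is not isolated from $\dom f\setminus\{t\}$: for every $\delta\in\Q^+$ there is a point $x\in\dom f$ with $0<|x-t|<\delta$. (If $t$ is an endpoint of the connected set we approach from the one available side; if it is interior we may approach from either side — either way connectedness of $\dom f$ guarantees points arbitrarily close to $t$ on at least one side, since a one-point connected domain is excluded by the ``at least two elements'' hypothesis.) Now fix $\varepsilon\in\Q^+$. By the definition of derivative (in the vector form, $\lim_{h\to0}\frac{f(t+h)-f(t)}{h}$, or equivalently via the linear-map form $d_tf(h)=h\cdot f'(t)$ and the displayed differentiability inequality on p.\pageref{p-diff}), there are $\delta_1,\delta_2\in\Q^+$ such that for all $x\in\dom f$ with $0<|x-t|<\delta_i$,
\begin{equation*}
|f(x)-f(t)-(x-t)\vu|\le \tfrac{\varepsilon}{2}\,|x-t|\quad\text{and}\quad |f(x)-f(t)-(x-t)\vv|\le \tfrac{\varepsilon}{2}\,|x-t|.
\end{equation*}
Let $\delta=\min\{\delta_1,\delta_2\}$ and pick, using non-isolation, an $x\in\dom f$ with $0<|x-t|<\delta$. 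Subtracting the two estimates and using the triangle inequality gives $|(x-t)(\vu-\vv)|\le\varepsilon|x-t|$, hence $|x-t|\,|\vu-\vv|\le\varepsilon|x-t|$, and since $|x-t|>0$ we may divide to get $|\vu-\vv|\le\varepsilon$. As $\varepsilon$ was arbitrary, $|\vu-\vv|=0$, so $\vu=\vv$.

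The main obstacle — really the only subtle point — is justifying the non-isolation of $t$ in $\dom f$ from the hypotheses ``connected'' and ``has at least two elements,'' using the paper's nonstandard notion of connectedness ($H$ connected iff $(x,y)\subseteq H$ for all $x,y\in H$, with the convention $(x,y)=(y,x)$). Given two distinct points $t,s\in\dom f$, the whole open interval between them lies in $\dom f$, which immediately supplies points of $\dom f$ arbitrarily close to $t$ (namely points of $(t,s)$ approaching $t$), so $B_\varepsilon(t)\cap(\dom f\setminus\{t\})\ne\emptyset$ for all $\varepsilon\in\Q^+$. This also shows the limit defining $f'(t)$ is unique in the sense already recorded in the text. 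Everything else is the routine computation above and requires no \ax{CONT} schema, only \ax{AxOF}, consistent with the theorem's stated hypotheses; I would remark that this is exactly the textbook uniqueness proof, specialized so that divisibility by $|x-t|$ in the ordered field replaces the usual appeal to $\R$.
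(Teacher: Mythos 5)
Your proof is correct and is essentially the argument the paper itself intends: the text merely remarks that Thms.~\ref{thm-unider} and \ref{thm-uniderv} ``can be proved by the usual textbook proofs of the uniqueness of derivatives,'' and your subtraction-and-division argument is exactly that proof, with the one genuinely field-specific point (that connectedness plus ``at least two elements'' yields points of $\dom f$ arbitrarily close to $t$, so one can divide by $|x-t|>0$) correctly identified and justified from the paper's nonstandard definition of connectedness.
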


In the case when $f:Q\parrow\Q^k$, we define the
derivative function $f'$ of $f$ as the binary relation that relates the
derivatives of $f$ at $x$ to $x\in\dom f$. Of course, $f'$ is a function only if $f$ is
uniquely differentiable.
\begin{prop} 
Let $f:\Q\parrow\Q^k$ be a $\L$-definable function. Then
$f'\subseteq\Q\times\Q^k$ is also $\L$-definable.
\end{prop}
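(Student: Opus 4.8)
The statement to prove is: if $f:\Q\parrow\Q^k$ is an $\L$-definable function, then its derivative relation $f'\subseteq\Q\times\Q^k$ is also $\L$-definable.

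The plan is to write down an explicit $\L$-formula that defines $f'$ by literally transcribing the $\varepsilon$--$\delta$ definition of differentiability for a curve into first-order logic. First I would fix a formula $\phi_f(x,\vy\,)$ defining $f$ (with the suppressed parameters $\va$ held fixed, as in the notation $\phi_f$ introduced on p.\pageref{phi_f}). Recall that for a function $f:\Q\parrow\Q^k$, a derivative vector $\vv\in\Q^k$ of $f$ at $x\in\dom f$ is characterized by
\begin{equation*}
\forall \varepsilon\in\Q^+\;\exists\delta\in\Q^+\;\forall h\in\Q\quad \big(0<|h|<\delta \lland x+h\in\dom f\big)\then \Big|f(x+h)-f(x)-h\cdot\vv\,\Big|\le\varepsilon\cdot|h|,
\end{equation*}
which is the scalar-input specialization of the differentiability formula on p.\pageref{p-diff}. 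The point is that every ingredient here is already expressible in $\L$ once we have $\phi_f$: "$x\in\dom f$'' is $\exists\vy\;\phi_f(x,\vy\,)$; the values $f(x+h)$ and $f(x)$ can be quantified in via fresh variables constrained by $\phi_f$; and the Euclidean norm $|\cdot|$, the ordering, $\Q^+$, and scalar multiplication are all available since $\mathcal{OF}\subseteq\L$ (or at least $+,\cdot,<,\Q$ are in $\L$ by the standing assumptions of the chapter).

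Concretely, I would set $\psi_{f'}(x,\vv\,)$ to be the conjunction of "$\exists\vy\;\phi_f(x,\vy\,)$'' (so that $x\in\dom f$) with
\begin{multline*}
\forall\varepsilon\;\big(0<\varepsilon\big)\then\exists\delta\;\Big[0<\delta\lland\forall h\;\forall\vzz\;\forall\vuu\;\\
\big(\phi_f(x+h,\vuu\,)\lland\phi_f(x,\vzz\,)\lland 0<|h|\lland |h|<\delta\big)\then |\vuu-\vzz-h\cdot\vv\,|\le\varepsilon\cdot|h|\Big],
\end{multline*}
where $|\vw\,|\le r$ abbreviates the $\L$-formula $w_1^2+\ldots+w_k^2\le r^2\lland 0\le r$ and $|\vw\,|<r$ abbreviates $w_1^2+\ldots+w_k^2<r^2\lland 0\le r$ (these are legitimate since squaring and summing are term operations and $<,\le$ are in $\L$). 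Then I would argue that $f'=\{\langle x,\vv\,\rangle : \psi_{f'}(x,\vv\,)\text{ holds in }\mathfrak{M}\}$: the right-hand side says exactly that $x\in\dom f$ and $\vv$ is a derivative vector of $f$ at $x$ in the sense of the definition on p.\pageref{derivative}, using the identity $d_xf(h)=h\cdot f'(x)$ that connects the two formulations of the derivative; and $f'$ was defined to be precisely the relation relating the derivative vectors of $f$ at $x$ to $x\in\dom f$.

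I do not expect a serious obstacle here — the content is a routine bookkeeping exercise in translating a metric-style definition into $\L$. The only mild care points are: (i) making sure the norm and the positivity predicate "$\cdot\in\Q^+$'' are genuinely $\L$-formulas, which they are because $\L\supseteq\{\Q,+,\cdot,<\}$ in this chapter; (ii) handling the fact that $f(x+h)$ is not a term but must be accessed through an existential witness $\vuu$ constrained by $\phi_f$, which works because $f$ is a function so the witness is unique where it exists; and (iii) confirming that no appeal to \ax{CONT} is needed — indeed none is, so this proposition, like Chain Rule, is stated without the "\ax{CONT}-'' mark. Hence the proof is just: exhibit $\psi_{f'}$, verify it defines $f'$, done.
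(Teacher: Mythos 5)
Your proposal is correct and matches the paper's own proof: both simply transcribe the $\varepsilon$--$\delta$ definition of a derivative vector into an explicit $\L$-formula built from $\phi_f$, accessing the values of $f$ through existentially/universally quantified witnesses constrained by $\phi_f$. The only cosmetic difference is that you parametrize the nearby point as $x+h$ while the paper writes it as a variable $x$ near $x_0$; no further review is needed.
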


\begin{proof}
Let $\phi_f(x,\vyy)$ be a formula defining $f$ in the language $\L$.
Then
\begin{multline*} 
f' =\Setopen \langle x_0,\vzz\rangle \setmid \phi_f(x_0,\vy_0) \lland
\forall \varepsilon\in\Q^+\enskip\exists\delta\in\Q^+\enskip \forall
x\in\Q\right.\\\left. |x-x_0|\lland \phi_f(x,\vyy) \then
|\vy-\vy_0-(x-x_0)\cdot\vzz|\le\varepsilon\cdot|x-x_0| \Setclose;
\end{multline*}
and from this equation, it is easy to recognize the formula defining
$f'$.
\end{proof}

\begin{cor} 
If $f:\Q\parrow\Q^k$ is a uniquely differentiable and $\L$-definable function, 
$f':\Q\parrow\Q^k$ is an $\L$-definable function.
\end{cor}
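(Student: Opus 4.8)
The plan is to read this corollary off directly from the Proposition immediately preceding it, which already shows that the derivative relation $f'\subseteq\Q\times\Q^k$ is $\L$-definable for any $\L$-definable $f:\Q\parrow\Q^k$. So first I would invoke that Proposition to obtain the explicit $\L$-formula exhibited in its proof, say $\phi_{f'}(x,\vz\,)$ (built from $\phi_f$ together with the $\varepsilon$--$\delta$ clause and carrying along the same finitely many parameters $\va$ that witness the $\L$-definability of $f$), so that $f'=\Setopen\langle x,\vzz\rangle\setmid \phi_{f'}(x,\vzz)\Setclose$.

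Next I would observe that the hypothesis ``$f$ is uniquely differentiable'' says precisely that for every $x\in\dom f$ there is at most one $\vz\in\Q^k$ with $\langle x,\vzz\rangle\in f'$; hence the relation $f'$ is single-valued, i.e.\ it is a partial function $\Q\parrow\Q^k$. Since for a function $f'(x)=\vz$ holds iff $\langle x,\vzz\rangle\in f'$ iff $\phi_{f'}(x,\vzz)$ is true in the model, the very same formula $\phi_{f'}$ witnesses that $f'$ is $\L$-definable as a function in the sense fixed earlier (p.\ \pageref{derivative} and the notion of parametrically definable function). Combining the two observations gives that $f':\Q\parrow\Q^k$ is an $\L$-definable function, as claimed.

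There is essentially no obstacle here; the corollary is a one-line consequence of the Proposition plus the definition of unique differentiability. The only point that deserves a sentence of care is the reconciliation of the two notions of definability in play — ``definable as a binary relation'' versus ``definable as a function'' — but these coincide for functions, because a function and its graph encode the same data, so no extra work is needed. I would also remark that the parameters occurring in $\phi_{f'}$ are exactly those occurring in $\phi_f$, so the passage from $f$ to $f'$ introduces nothing new into the language.
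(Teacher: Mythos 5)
Your proposal is correct and matches the paper's intent exactly: the corollary is stated without proof precisely because it follows immediately from the preceding Proposition (definability of $f'$ as a relation) combined with the observation that unique differentiability makes $f'$ single-valued, hence a definable partial function. Nothing further is needed.
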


Let $h:\Q\parrow \Q$ and let $H\subseteq \dom h$.
Then $h$ is said to be \df{increasing}\index{increasing} on $H$ iff
$h(x)<h(y)$ for all $x,y\in H$ for which $x<y$; and
$h$ is said to be \df{decreasing}\index{decreasing} on $H$ iff 
$h(y)<h(x)$ for all $x,y\in H$ for which $x<y$.
The proof of the following theorem also uses only the ordered field
property of the real numbers, see, e.g., \cite{Rudin}, \cite{Laczkovich-Soos}.

\begin{prop}\label{propDiff}
Let $\Q$ be an ordered field. Let $f,g:\Q \parrow \Q^n$ and
$h:\Q\parrow \Q$. Then (i)--(v) below hold.
\begin{itemize}
\item[(ii)] Let $\lambda\in \Q$.
If $f$ is differentiable at $x$, then $\lambda\cdot f$ is also differentiable at $x$ and $(\lambda\cdot f)'(x)=\lambda\cdot f'(x)$.
\item[(iii)] If $f$ and $g$ are differentiable at $x$ and $x$ is an accumulation point of $\dom f\cap
\dom g$, then $f+g$ is differentiable at $x$ and $(f+g)'(x)=f'(x)+g'(x)$.
\item[(v)] If $h$ is increasing (or decreasing) on $(a,b)$, differentiable at $x\in (a,b)$ and
$h'(x)\neq0$, then $h^{-1}$ is differentiable at $h(x)$.
\end{itemize}
\end{prop}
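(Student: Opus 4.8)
\textbf{Proof proposal for Proposition~\ref{propDiff}.}

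The plan is to prove each of the three listed items (ii), (iii), (v) by the familiar textbook arguments adapted to an arbitrary ordered field, invoking nothing beyond \ax{AxOF}, the definition of the derivative vector (as the limit $\lim_{h\to0}\frac{f(x+h)-f(x)}{h}$, equivalently $d_xf(t)=t\cdot f'(x)$), and the elementary limit arithmetic for functions $\Q\parrow\Q^n$. None of these needs a \ax{CONT} schema, which is why the proposition is stated without that mark. For (ii), I would start from the difference quotient $\frac{(\lambda\cdot f)(x+h)-(\lambda\cdot f)(x)}{h}=\lambda\cdot\frac{f(x+h)-f(x)}{h}$, valid for all $h\ne0$ with $x+h\in\dom f$, and then pass to the limit: since $\frac{f(x+h)-f(x)}{h}\to f'(x)$ and scalar multiplication by the fixed $\lambda$ is continuous, the left-hand side tends to $\lambda\cdot f'(x)$. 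For (iii), the analogous identity is $\frac{(f+g)(x+h)-(f+g)(x)}{h}=\frac{f(x+h)-f(x)}{h}+\frac{g(x+h)-g(x)}{h}$, which holds for all nonzero $h$ with $x+h\in\dom f\cap\dom g$; the hypothesis that $x$ is an accumulation point of $\dom f\cap\dom g$ is exactly what guarantees the limit of the sum is taken over a set on which it is meaningful and unique, and additivity of limits then gives $(f+g)'(x)=f'(x)+g'(x)$.

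The main work is item (v), the differentiability of the inverse of a monotonic function. Here I would argue as follows. Suppose $h$ is increasing on $(a,b)$ (the decreasing case is symmetric, applying the result to $-h$), differentiable at $x\in(a,b)$, with $h'(x)\ne0$. Since $h$ is differentiable at $x$ it is continuous at $x$; combined with monotonicity on the interval, $h$ is injective on $(a,b)$ and $h^{-1}$ is defined on the image $h\big((a,b)\big)$, which is a connected set containing $h(x)$, so that $h(x)$ is an accumulation point of $\dom h^{-1}$. Moreover $h^{-1}$ is monotonic with connected range $(a,b)$, hence continuous at $h(x)$ by Lem.~\ref{lem-moncont}. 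Now write $y=h(x)$ and, for $k\ne0$ with $y+k$ in the domain of $h^{-1}$, put $\ell\leteq h^{-1}(y+k)-h^{-1}(y)$; then $\ell\ne0$ (by injectivity of $h$), $k=h(x+\ell)-h(x)$, and
\begin{equation*}
\frac{h^{-1}(y+k)-h^{-1}(y)}{k}=\frac{\ell}{\,h(x+\ell)-h(x)\,}=\left(\frac{h(x+\ell)-h(x)}{\ell}\right)^{-1}.
\end{equation*}
As $k\to0$, continuity of $h^{-1}$ at $y$ forces $\ell\to0$, so the difference quotient inside the parentheses tends to $h'(x)\ne0$; since inversion is continuous away from $0$, the whole expression tends to $1/h'(x)$. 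Hence $h^{-1}$ is differentiable at $h(x)$ (with derivative $1/h'(x)$, though the statement only asks for differentiability).

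The step I expect to require the most care is the ``change of variables'' in (v): one must be sure that $\ell\to0$ as $k\to0$ and that $\ell$ never vanishes for $k\ne0$, both of which rest on injectivity and continuity of $h^{-1}$ rather than on any completeness property of $\Q$ — this is precisely why the argument survives the passage from $\R$ to an arbitrary ordered field. I would also double-check that $h(x)$ is a genuine accumulation point of $\dom h^{-1}$ (so that the limit defining $(h^{-1})'(h(x))$ is well-posed and the derivative vector is unique in the sense of Thm.~\ref{thm-uniderv}); this follows because $h^{-1}$ is monotonic with a connected range having at least two points, so its domain $h\big((a,b)\big)$ is an interval containing $h(x)$ in a two-sided or at least one-sided cluster. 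The remaining bookkeeping — that all the relations appearing are functions on the relevant small neighbourhoods, and that limits of sums and products behave as expected over $\Q$ — is routine and I would not spell it out in full.
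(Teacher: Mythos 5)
Your proposal is correct and follows the same route as the paper, which does not write the argument out at all but simply refers to the standard textbook proofs (Ross, Thms.\ 28.2--28.4 and 29.9) adapted to an arbitrary ordered field; your difference-quotient computations for (ii) and (iii) and the change-of-variables argument for (v) are exactly those proofs. One small caveat: your side remark that the image $h\big((a,b)\big)$ is connected is not valid over a general ordered field (intermediate values can be missed), but nothing in your argument actually depends on it --- the accumulation-point property of $h(x)$ follows from continuity and injectivity of $h$ near $x$, and the continuity of $h^{-1}$ follows from Lem.~\ref{lem-moncont} applied to its connected \emph{range} $(a,b)$, just as you say.
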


\begin{onproof}
Since the proofs of the statements are based on the same calculations and 
ideas as in 
real analysis, we omit the proof,
see \cite[Thms.\ 28.2, 28.3, 28.4 and 29.9]{ross}.\hfill\qed
\end{onproof}

\theoremstyle{definition} \newtheorem*{chainrule}{\colorbox{thmbgcolor}{\textcolor{thmcolor}{Chain Rule}}} 
\begin{chainrule}
\label{chainrule}
\index{chain rule}
Assume \ax{AxOF}.
Let $g:\Q^n \parrow \Q^m$ and $f:\Q^m\parrow \Q^k$.
If $g$ is differentiable at $\vp\in \Q^n$ and $f$ is differentiable at $g(\vpp)$, then $g\circ f$ is differentiable at $\vp$ and $d_{\vpp}g\circ d_{g(\vpp)}f$ is one of its derivatives, i.e., 
\begin{equation*} 
d_{\vpp}(g \circ f) = d_{\vpp}g \circ d_{g(\vpp)}f.
\end{equation*}
In particular, if $g: \Q\parrow \Q^m$, and  $g$ is differentiable at $t\in\Q$ and  $f$ is differentiable at $g(t)$, then  
\begin{equation*} 
(g\circ f)'(t)=d_{g(t)}f\big(g'(t)\big).
\end{equation*} 
\end{chainrule}

\begin{prop}\label{propAff} 
Assume \ax{AxOF}.
The derivative of an affine map is its linear part; i.e., if $A:\Q^n\rightarrow \Q^m$ is an affine map, then 
$(d_{\vqq}A)(\vpp)=A(\vpp)-A(\voo)$, where $\vp,\vq\in \Q^n$ and $\vo$ is the origin of $\Q^n$.
\end{prop}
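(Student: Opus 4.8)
\textbf{Proof proposal for Proposition \ref{propAff}.}

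The plan is to verify directly that the linear part of $A$ satisfies the defining inequality of a derivative at $\vq$, in fact with the stronger conclusion that the ``error term'' vanishes identically. First I would write $A$ in the form $A(\vxx) = L(\vxx) + \vc$, where $L:\Q^n\rightarrow \Q^m$ is linear and $\vc = A(\voo)\in \Q^m$ is a constant; this is just the definition of an affine map. Note that then $L(\vpp) = A(\vpp) - A(\voo)$ for all $\vp\in\Q^n$, which is the formula claimed for the derivative.

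Next I would compute, for arbitrary $\vp,\vq\in\Q^n$, the quantity
\begin{equation*}
A(\vpp) - A(\vqq) - L(\vp-\vqq) = \big(L(\vpp)+\vc\big) - \big(L(\vqq)+\vc\big) - L(\vpp-\vqq) = L(\vpp) - L(\vqq) - L(\vpp-\vqq).
\end{equation*}
By linearity of $L$ (using additivity), $L(\vpp-\vqq) = L(\vpp) - L(\vqq)$, so the expression above is the zero vector of $\Q^m$. Hence $|A(\vpp)-A(\vqq)-L(\vpp-\vqq)| = 0 \le \varepsilon\cdot|\vpp-\vqq|$ for every $\varepsilon\in\Q^+$ and every $\vp\in\Q^n$, so the differentiability condition (on p.\pageref{p-diff}) holds at $\vq$ with $\delta$ arbitrary and derivative map $L$. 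Therefore $L\in\der_{\vqq}A$, and since $L(\vpp) = A(\vpp)-A(\voo)$, we get $(d_{\vqq}A)(\vpp) = A(\vpp)-A(\voo)$ as desired.

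There is essentially no obstacle here: the only mild point is that the statement as phrased speaks of ``the'' derivative $d_{\vqq}A$, so strictly one should note either that we are exhibiting one derivative (which suffices, since $d_{\vqq}f$ denotes \emph{any} derivative by the convention on p.\pageref{p-diff}), or invoke Thm.~\ref{thm-unider} when $\vq$ has a ball-neighborhood inside $\dom A = \Q^n$ to conclude uniqueness. I would include a one-sentence remark to this effect, but no further work is needed; the whole argument is the two-line computation above together with the definition of affine map.
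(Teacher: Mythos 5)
Your proof is correct and is exactly the "straightforward from our definitions" verification that the paper leaves to the reader: decompose $A$ as linear part plus constant, observe the error term $A(\vpp)-A(\vqq)-L(\vp-\vqq)$ vanishes identically by additivity of $L$, and conclude $L\in\der_{\vqq}A$ with $L(\vpp)=A(\vpp)-A(\voo)$. Your closing remark about "a" versus "the" derivative (and the appeal to Thm.~\ref{thm-unider} for uniqueness on $\Q^n$) is a sensible precision the paper does not bother to make; nothing is missing.
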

\begin{proof} 
The proof is straightforward from our definitions.\end{proof}

\begin{cor}\label{colChain}
Let $\Q$ be an ordered field. If $g: \Q\rightarrow \Q^n$ and $A:\Q^n
\rightarrow \Q^m$ is an affine map, then $(g\circ A)'(t)=
A\big(g'(t)\big)-A(\voo)$. \hfill\qedsymbol
\end{cor}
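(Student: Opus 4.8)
The final statement is Corollary~\ref{colChain}: if $g:\Q\rightarrow\Q^n$ and $A:\Q^n\rightarrow\Q^m$ is affine, then $(g\circ A)'(t)=A\big(g'(t)\big)-A(\voo)$.

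The plan is to derive this as an immediate consequence of the Chain Rule together with Prop.~\ref{propAff}. First I would apply the second (vector-valued) form of the Chain Rule to the composition $g\circ A$: since $g:\Q\parrow\Q^n$ is differentiable at $t$ (being differentiable everywhere, as it is defined on all of $\Q$) and $A:\Q^n\rightarrow\Q^m$ is affine, hence differentiable at every point and in particular at $g(t)$, the Chain Rule gives
\begin{equation*}
(g\circ A)'(t)=d_{g(t)}A\big(g'(t)\big).
\end{equation*}
Next I would invoke Prop.~\ref{propAff}, which states that the derivative of an affine map at any point is its linear part, explicitly $(d_{\vqq}A)(\vpp)=A(\vpp)-A(\voo)$. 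Substituting $\vq=g(t)$ and $\vp=g'(t)$ into this identity yields $d_{g(t)}A\big(g'(t)\big)=A\big(g'(t)\big)-A(\voo)$, and combining with the displayed equation above gives exactly $(g\circ A)'(t)=A\big(g'(t)\big)-A(\voo)$, as desired.

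There is essentially no obstacle here — the result is a one-line corollary obtained by chaining two already-proved facts. The only minor points worth checking are bookkeeping ones: that the hypothesis $g:\Q\rightarrow\Q^n$ (defined on all of $\Q$) guarantees $\dom g$ is connected with at least two elements so that $g$ qualifies as a curve and its derivative notation is meaningful, and that $g'(t)$ is well-defined (indeed unique, by Thm.~\ref{thm-uniderv}). Also one should note that the Chain Rule requires only \ax{AxOF}, so the corollary holds over any ordered field, consistent with the phrasing ``Let $\Q$ be an ordered field'' in the statement. No appeal to any \ax{CONT} schema is needed.
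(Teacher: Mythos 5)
Your proof is correct and is exactly the derivation the paper intends: the corollary is stated with no written proof precisely because it follows in one line from the Chain Rule applied to $g\circ A$ together with Prop.~\ref{propAff} identifying $d_{g(t)}A$ with the linear part $\vpp\mapsto A(\vpp)-A(\voo)$. The only quibble is your parenthetical claim that $g$ is ``differentiable everywhere, as it is defined on all of $\Q$'' --- totality does not imply differentiability; rather, differentiability of $g$ at $t$ is an implicit hypothesis of the statement via the existential-equality convention (Conv.~\ref{conv-eq}), as you in effect acknowledge at the end.
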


We say that function $f:\Q^n\rightarrow \Q$ is \df{locally
  maximal}\index{local maximality} at $x \in \dom f$ iff there is a
$\delta \in \Q^+$ such that $f(y)\le f(x)$ for all $y\in B_\delta(x)$.
The \df{local minimality}\index{local minimality} is analogously
defined.

\begin{prop}\label{prop-max}
Assume \ax{AxOF}.
If $f:\Q\parrow \Q$ is differentiable on $(a,b)$ and 
locally maximal or minimal at $x\in (a,b)$, its 
derivative is $0$ at $x$, i.e., $f'(x)=0$.
\end{prop}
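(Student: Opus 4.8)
The statement to prove is the classical ``Fermat's interior extremum theorem'': if $f:\Q\parrow\Q$ is differentiable on $(a,b)$ and locally maximal (or minimal) at $x\in(a,b)$, then $f'(x)=0$. The plan is to mimic the standard textbook argument, being careful that it goes through over an arbitrary ordered field and uses only \ax{AxOF}, with no appeal to any \ax{CONT} schema. First I would treat the locally maximal case; the minimal case follows by applying that case to $-f$, since by Prop.~\ref{propDiff}(ii) we have $(-f)'(x)=-f'(x)$ and local minimality of $f$ at $x$ is local maximality of $-f$ at $x$.

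For the maximal case, note first that since $x\in(a,b)$ and $f$ is differentiable there, $x$ is an interior point of $\dom f$, so there is a $\delta_0\in\Q^+$ with $B_{\delta_0}(x)\subseteq\dom f$; hence by Thm.~\ref{thm-unider} the derivative $f'(x)$ is unique and the notation is unambiguous. Choose also $\delta_1\in\Q^+$ from local maximality so that $f(y)\le f(x)$ for all $y\in B_{\delta_1}(x)$. Now suppose toward a contradiction that $f'(x)\neq 0$, and write $\varepsilon\leteq |f'(x)|/2\in\Q^+$. By the definition of differentiability (applied with this $\varepsilon$), there is a $\delta\in\Q^+$, which we may shrink to be $\le\min(\delta_0,\delta_1)$, such that for all $h\in\Q$ with $0<|h|<\delta$ one has $|f(x+h)-f(x)-h\cdot f'(x)|\le \varepsilon|h|$. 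The key step is then the sign analysis: if $f'(x)>0$, take a small positive $h<\delta$; the inequality forces $f(x+h)-f(x)\ge h f'(x)-\varepsilon h = h(f'(x)-\varepsilon)=h\cdot\varepsilon>0$, contradicting $f(x+h)\le f(x)$. If $f'(x)<0$, take a small negative $h$ with $0<|h|<\delta$, i.e.\ $h=-t$ with $0<t<\delta$; then $f(x+h)-f(x)\ge h f'(x)-\varepsilon|h| = t|f'(x)| - \varepsilon t = t(|f'(x)|-\varepsilon)=t\varepsilon>0$, again contradicting local maximality. Either way we reach a contradiction, so $f'(x)=0$.

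I expect no real obstacle here: the only thing to watch is that everything used — rearranging inequalities, splitting into the two sign cases, extracting a suitably small $h$ of a prescribed sign inside $B_\delta(x)$ — is valid in any ordered field, so \ax{AxOF} genuinely suffices and no continuity schema is needed. The mild bookkeeping point worth stating explicitly is the reduction of the minimal case to the maximal one via $-f$ and Prop.~\ref{propDiff}(ii), and the initial remark that interiority of $x$ plus Thm.~\ref{thm-unider} makes $f'(x)$ well-defined so that the conclusion ``$f'(x)=0$'' is meaningful as an equation between a unique object and $0$. Since the argument is a direct transcription of the real-analysis proof, I would keep the write-up short, essentially presenting the two-line estimate above.
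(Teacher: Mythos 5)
Your proof is correct and is exactly the route the paper takes: the paper's ``On the proof'' remark simply states that the argument is the same as in real analysis (citing Rudin's Thm.~5.8), and your write-up is a careful transcription of that standard sign-analysis argument, verifying that it uses only ordered-field reasoning. The additional bookkeeping you include (uniqueness of $f'(x)$ via Thm.~\ref{thm-unider} and the reduction of the minimum case to the maximum case via $-f$) is sound and consistent with the paper's definitions.
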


\begin{onproof}
The proof is the same as in real analysis, see e.g.,
\cite[Thm.5.8]{Rudin}.\hfill\qed
\end{onproof}

Function $f$ is said to be \df{differentiable on set $H$} if $H\subseteq \dom f$ and $f$ is differentiable at $x$ for all $x\in H$.


\theoremstyle{definition} \newtheorem*{meanvalue}{\colorbox{thmbgcolor}{\textcolor{thmcolor}{\ax{CONT}-Mean--Value Theorem}}} 
\begin{meanvalue}
Let $\L\supseteq\mathcal{OF}$.
Assume \ax{CONT_\L} and \ax{AxOF}.
Let $f:\Q\parrow \Q$ be an \L-definable function which is differentiable on $(a,b)$ and continuous on $(a,b)$.
If $a\neq b$, there is an $s\in(a,b)$ such that $f'(s)=\frac{f(b)-f(a)}{b-a}$.\qed
\end{meanvalue}

\begin{proof}
Let $h(t)\leteq \big(f(b)-f(a)\big)\cdot t - (b-a)\cdot f(t)$.  Then
$h$ is continuous on $[a,b]$, differentiable on $(a,b)$ and
$h(a)=f(b)\cdot a-b\cdot f(a)=h(b)$.  If $h$ is constant then
$h'(t)=0$ for all $t\in(a,b)$.  Otherwise, by Thm.~\ref{thm-sup}, there
is a maximum/minimum of $h$ different from $h(a)=h(b)$ in a $t\in
(a,b)$.  Hence $h'(t)=0$ by Prop.~\ref{prop-max}.  This completes the
proof since $h'(t)=f(b)-f(a)-(b-a)\cdot f'(t)$.\end{proof}


\theoremstyle{definition}
\newtheorem*{rolle}{\colorbox{thmbgcolor}{\textcolor{thmcolor}{\ax{CONT}-Rolle's
      Theorem}}}
\begin{rolle}
Let $\L\supseteq\mathcal{OF}$.
Assume \ax{CONT_\L} and \ax{AxOF}.
Let $f:\Q\parrow\Q$ be definable and \L-differentiable function which is differentiable on $(a,b)$ and continuous on $(a,b)$.
If $a\neq b$ and $f(a)=f(b)$, there is an $s\in(a,b)$ such that $f'(s)=0$.\qed
\end{rolle}


\begin{cor}\label{colHplane}
Let $\L\supseteq\mathcal{OF}$.
Assume \ax{CONT_\L} and \ax{AxOF}.
Let $\gamma:\Q\rightarrow \Q^n$ be an \L-definable and differentiable curve.
Then for all distinct $a,b\in\Q$ and for every $(n-1)$-dimensional subspace $H$ that contains $\gamma(a)-\gamma(b)$, there is at least one $c$ between $a$ and $b$ such that $\gamma'(c)\in H$.
\end{cor}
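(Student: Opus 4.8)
The plan is to reduce the statement to \ax{CONT}-Rolle's Theorem by composing $\gamma$ with a suitable linear functional. Since $H$ is an $(n-1)$-dimensional linear subspace of $\Q^n$, elementary linear algebra over the field $\Q$ provides a nonzero linear map $\ell\colon\Q^n\rightarrow\Q$ with $\ker\ell=H$; moreover $\ell$ is \L-definable because $\mathcal{OF}\subseteq\L$ and the (finitely many) coefficients of $\ell$ may be used as parameters. First I would put $f\leteq\gamma\circ\ell$, i.e.\ the function $t\mapsto\ell\big(\gamma(t)\big)$. By the \L-definability of compositions of \L-definable functions, $f\colon\Q\rightarrow\Q$ is \L-definable; by Chain Rule it is differentiable at every point of $\Q$, in particular on the interval between $a$ and $b$; and, being differentiable, it is continuous there as well.

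Next I would verify the hypothesis of Rolle's theorem. Since $\gamma(a)-\gamma(b)\in H=\ker\ell$ and $\ell$ is linear, $f(a)-f(b)=\ell\big(\gamma(a)\big)-\ell\big(\gamma(b)\big)=\ell\big(\gamma(a)-\gamma(b)\big)=0$, so $f(a)=f(b)$. As $a\neq b$, \ax{CONT}-Rolle's Theorem (using $\ax{CONT_\L}$ and $\ax{AxOF}$) yields a point $c$ between $a$ and $b$ with $f'(c)=0$. At this stage one uses that $\gamma$ is a curve, so by Thm.~\ref{thm-uniderv} its derivative vector $\gamma'(c)$ is unique, which makes $f'(c)$ unambiguous.

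Finally I would unwind the derivative. Regarding $\ell$ as an affine map, its linear part is $\ell$ itself and $\ell(\voo)=0$, so Cor.~\ref{colChain} (equivalently Chain Rule together with Prop.~\ref{propAff}) gives $f'(c)=(\gamma\circ\ell)'(c)=\ell\big(\gamma'(c)\big)-\ell(\voo)=\ell\big(\gamma'(c)\big)$. Combining this with $f'(c)=0$ yields $\ell\big(\gamma'(c)\big)=0$, i.e.\ $\gamma'(c)\in\ker\ell=H$, which is precisely the claim. The only points requiring a little care are the \L-definability of $\ell$ and hence of $f$, and the legitimacy of transporting the equation $f'(c)=0$ back through $\ell$ to $\gamma'$; both are handled by results already available above (the \L-definability propositions of this chapter and Thm.~\ref{thm-uniderv}), so beyond this bookkeeping there is no genuine obstacle.
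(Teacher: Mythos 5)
Your proposal is correct and follows essentially the same route as the paper: the paper first normalizes $H$ to $\{0\}\times\Q^{n-1}$ by a linear transformation and then applies \ax{CONT}-Rolle's Theorem to $\gamma\circ\pi_t$, which is exactly your $\gamma\circ\ell$ for a linear functional $\ell$ with $\ker\ell=H$, concluding via Cor.~\ref{colChain} just as you do. The only cosmetic difference is that you pick the annihilating functional directly instead of changing coordinates first.
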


\begin{proof}
The derivative vector of a curve $\gamma$ composed with a linear map
$A$ at $t\in\Q$ is the $A$-image of $\gamma'(t)$ by
Cor.~\ref{colChain}.  Since any $(n-1)$-dimensional subspace of $\Q^n$
can be taken to $\{0\}\times \Q^{n-1}$ by a linear transformation, we
can assume that $H=\{0\}\times \Q^{n-1}$.  Recall that the function
$\pi_t:\Q^n\rightarrow \Q$ is defined as $p\mapsto p_t$.  Then
$\gamma\circ\pi_t(a)=\gamma\circ\pi_t(b)$ since
$\gamma(a)-\gamma(b)\in H$.  By applying Rolle's Theorem to
$\gamma\circ \pi_t$, we get that there is a $c\in\Q$ such that
$(\gamma\circ \pi_t)'(c)=0$.  Thus $\gamma'(c)$ is an element of $H$
since $(\gamma\circ\pi_t)'(c)=\pi_t\big(\gamma'(c)\big)=\gamma'(c)_t$
by Cor.~\ref{colChain}.
\end{proof}

\begin{prop}
Let $\L\supseteq\mathcal{OF}$.
Assume \ax{AxOF} and \ax{CONT_\L}.
Let $f:\Q\parrow \Q$ be an \L-definable differentiable function $(a,b)\subseteq \dom f$ for some $a,b\in\Q$.
If $f'(t)=0$ for all $t\in(a,b)$ then there is a $c\in\Q$ such that $f(t)=c$ for all $t\in(a,b)$.
\end{prop}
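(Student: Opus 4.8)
The statement to prove is: assuming \ax{AxOF} and \ax{CONT_\L}, if $f:\Q\parrow\Q$ is \L-definable, differentiable on $(a,b)$, and $f'(t)=0$ for all $t\in(a,b)$, then $f$ is constant on $(a,b)$. The plan is to reduce this to the \ax{CONT}-Mean--Value Theorem already established in this section. First I would fix two arbitrary points $x,y\in(a,b)$ with $x\neq y$; by symmetry we may assume $x<y$. Note that $[x,y]\subseteq(a,b)\subseteq\dom f$, so $f$ is differentiable on the open interval $(x,y)$ and, being differentiable there, also continuous on $(x,y)$ by the corollary to Thm.~\ref{thm-difflip}. Moreover the restriction $f\big|_{[x,y]}$ is still \L-definable (restriction to an \L-definable set preserves definability, since $[x,y]$ is \L-definable using $x,y$ as parameters), so the hypotheses of the \ax{CONT}-Mean--Value Theorem are met on $[x,y]$.

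Applying the \ax{CONT}-Mean--Value Theorem to $f$ on $[x,y]$, there is an $s\in(x,y)$ such that
\begin{equation*}
f'(s)=\frac{f(y)-f(x)}{y-x}.
\end{equation*}
But $s\in(x,y)\subseteq(a,b)$, so by hypothesis $f'(s)=0$, hence $f(y)-f(x)=0$, i.e.\ $f(x)=f(y)$. Since $x,y$ were arbitrary points of $(a,b)$, the value $f(t)$ is the same for all $t\in(a,b)$; calling this common value $c$ gives $f(t)=c$ for all $t\in(a,b)$, as required.

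The only subtle points, neither of which is a real obstacle, are the bookkeeping items: checking that $f\big|_{(a,b)}$ is continuous on every relevant subinterval (immediate from differentiability via the corollary after Thm.~\ref{thm-difflip}), and confirming that passing to the restriction on $[x,y]$ does not destroy \L-definability so that \ax{CONT_\L} (through the \ax{CONT}-Mean--Value Theorem) still applies. One might alternatively prove the claim directly by a \ax{CONT}-Cousin's-Lemma argument on the collection $\mathcal{A}=\{[x,y]\subseteq(a,b) : f(x)=f(y)\}$, verifying beginability from $f'\equiv 0$, but routing through the Mean--Value Theorem is shorter and cleaner, so that is the approach I would take.
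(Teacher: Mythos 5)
Your proof is correct and is essentially the same as the paper's: both apply the \ax{CONT}-Mean--Value Theorem to an arbitrary pair $x,y\in(a,b)$ and conclude $f(x)=f(y)$ from $f'\equiv 0$ (the paper merely phrases it as a contradiction). Your extra remarks on continuity and preservation of \L-definability under restriction are fine but not needed beyond what the paper takes for granted.
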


\begin{proof}
If there are $x,y\in (a,b)$ such that $f(x)\neq f(y)$ and $x\neq y$,
then from \ax{CONT}-Mean-Value Theorem there is a $t$ between $x$ and
$y$ such that $f'(t)\cdot(y-x)=f(y)-f(x)\neq 0$ and this contradicts
that $f'(t)=0$.
\end{proof}

\theoremstyle{definition} \newtheorem*{darboux}{\colorbox{thmbgcolor}{\textcolor{thmcolor}{\ax{CONT}-Darboux's Theorem}}} 
\begin{darboux}
Let $\L\supseteq\mathcal{OF}$.
Assume \ax{AxOF} and \ax{CONT_\L}.
Let $f:\Q\parrow \Q$ be an \L-definable and differentiable function such that $(a,b)\subseteq\dom f$ for some $a,b\in \Q$.
If $c\in\big(f'(a),f'(b)\big)$, there is an $s\in(a,b)$ such that $f'(s)=c$.\qed
\end{darboux}

\begin{proof}
We can assume that $f'(a)>d>f'(b)$. Let $g(t)\leteq f(t)-t\cdot d$.
Then $g$ is differentiable and $g'(a)>0$, $g'(b)<0$. Thus $g$ cannot
be maximal at $a$ or $b$ by Prop.~\ref{prop-max}. Thus, from
Thm.~\ref{thm-sup}, we get that there is a point, say $c$, between $a$
and $b$ where $g$ is maximal. Thus from Prop.~\ref{prop-max}, we also
get that $g'(c)=f'(c)-d=0$.
\end{proof}

Let $i\leq n$.  $\pi_i:\Q^n\rightarrow \Q$ denotes the $i$-th
projection function, i.e., $\pi_i:p\mapsto p_i$.  Let $f:\Q\parrow
\Q^n$.  We denote the $i$-th coordinate function of $f$ by $f_i$,
i.e., $\Df{f_i}\leteq f\circ\pi_i$\index{$f_i$}.  Sometimes $f_\tau$
is used instead of $f_1$.  A function $A:\Q^n\rightarrow \Q^j$ is said
to be an \df{affine map}\index{affine map} if it is a linear map
composed with a translation.%
\footnote{That is, $A$ is an affine map if there are
  $L:\Q^n\rightarrow \Q^j$ and $a\in \Q^j$ such that
  $A(\vpp)=L(\vpp)+a$, $L(p+q)=L(\vpp)+L(\vqq)$ and $L(\lambda\cdot
  p)=\lambda\cdot L(\vpp)$ for all $\vp,\vq\in \Q^n$ and $\lambda \in
  \Q$.}

The following proposition says that the derivative of a function $f$
composed with an affine map $A$ at $x$ is the image of the
derivative $f'(x)$ taken by the linear part of $A$.

\begin{prop}\label{propAff2}
Assume \ax{AxOF}.
Let $f:\Q \parrow \Q^n$ be differentiable at $x$ and let $A:\Q^n\rightarrow
\Q^j$ be an affine map.
Then $f\circ A$ is differentiable at $x$ and $(f\circ A)'(x)=A\big(f'(x)\big)-A(o)$.
In particular, $f'(x)=\langle f'_1(x),\dots,f'_n(x)\rangle$, 
i.e., $f'_i(x)=f'(x)_i$.
\end{prop}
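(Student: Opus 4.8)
The plan is to derive the first assertion as an immediate instance of the Chain Rule combined with Prop.~\ref{propAff}, and then to obtain the coordinate-wise description by specializing the affine map to a projection. Recall that, in our notation, $(f\circ A)(t)=A\big(f(t)\big)$, so the composite in question is genuinely ``$A$ after $f$''.

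First I would apply the Chain Rule (stated above) with the inner curve being $f:\Q\parrow\Q^n$ and the outer map being the affine map $A:\Q^n\rightarrow\Q^j$. By Prop.~\ref{propAff} the map $A$ is differentiable at every point, with $d_{\vyy}A(\vpp)=A(\vpp)-A(\voo)$ for all $\vp,\vy\in\Q^n$; in particular $A$ is differentiable at $f(x)$. Hence the Chain Rule gives that $f\circ A$ is differentiable at $x$ and that $(f\circ A)'(x)=d_{f(x)}A\big(f'(x)\big)$ for any derivative vector $f'(x)$ of $f$ at $x$. Substituting the expression for $d_{f(x)}A$ from Prop.~\ref{propAff} yields $(f\circ A)'(x)=A\big(f'(x)\big)-A(\voo)$, which is the first claim. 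This is precisely Cor.~\ref{colChain}, now allowed for a partial $f$ that is differentiable merely at the single point $x$; nothing in the argument uses differentiability of $f$ away from $x$.

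For the ``in particular'' part I would take, for each $i\le n$, the linear (hence affine) map $A\leteq\pi_i:\Q^n\rightarrow\Q$, so that $A(\voo)=0$ and, by definition of the coordinate functions, $f\circ\pi_i=f_i$. The first part then gives that $f_i$ is differentiable at $x$ and $f_i'(x)=\pi_i\big(f'(x)\big)=f'(x)_i$. Ranging over $i=1,\dots,n$, the vector $\langle f_1'(x),\dots,f_n'(x)\rangle$ coincides with $\langle f'(x)_1,\dots,f'(x)_n\rangle=f'(x)$, as claimed.

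Since the whole argument is just a two-line combination of already established facts, there is essentially no hard step. The only point that needs a little care is the bookkeeping around non-uniqueness of derivative vectors: here $f'(x)$ denotes \emph{a} derivative vector of $f$ at $x$ (it is unique only when $f$ is a curve, by Thm.~\ref{thm-uniderv}), so the displayed identities should be read, in the sense of Conv.~\ref{conv-eq}, as asserting that for every derivative vector $f'(x)$ of $f$ at $x$ the right-hand side is a derivative vector of $f\circ A$ (resp.\ of $f_i$) at $x$ — and that is exactly what the Chain Rule delivers.
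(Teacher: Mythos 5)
Your proof is correct and is essentially the intended argument: the paper only remarks that the statement ``follows straightforwardly from the respective definitions,'' and your instantiation — Chain Rule with inner map $f$ and outer map $A$, using Prop.~\ref{propAff} for $d_{f(x)}A$, then specializing $A$ to the projections $\pi_i$ to get $f'_i(x)=f'(x)_i$ — is exactly the computation being alluded to (it is the partial-function version of Cor.~\ref{colChain}). Your closing remark about reading the identities as ``every derivative vector of $f$ at $x$ yields a derivative vector of $f\circ A$'' is the right way to handle the possible non-uniqueness when $\dom f$ is not connected.
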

\begin{onproof}
 The statement follows straightforwardly from the respective definitions.\hfill\qed
\end{onproof}

\begin{prop} \label{propInt}
Let $\L\supseteq\mathcal{OF}$.
Assume \ax{AxOF} and \ax{CONT_\L}.
Let $f,g:\Q\parrow \Q$ be \L-definable and differentiable functions on $(a,b)$.
If $f'(x)=g'(x)$ for all $x\in(a,b)$, then there is a $c\in \Q$ such that 
$f(x)=g(x)+c$ for all $x\in(a,b)$.
\end{prop}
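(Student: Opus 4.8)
The statement to prove is Proposition \ref{propInt}: assuming \ax{AxOF} and \ax{CONT_\L}, if $f,g:\Q\parrow\Q$ are \L-definable and differentiable on $(a,b)$ with $f'(x)=g'(x)$ for all $x\in(a,b)$, then $f-g$ is constant on $(a,b)$.

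\medskip

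The plan is to reduce this to the immediately preceding result. First I would set $h \leteq f-g$, restricted to $(a,b)$. By the assumption that $f$ and $g$ are \L-definable, $h$ is \L-definable as well (the class of \L-definable functions is closed under pointwise subtraction, which is witnessed by the formula $\phi_h(x,y) \leteq \exists y_1\exists y_2\ \phi_f(x,y_1)\lland \phi_g(x,y_2)\lland y=y_1-y_2$). Next, by item (iii) of Prop.~\ref{propDiff} (additivity of the derivative, combined with item (ii) applied with $\lambda=-1$), $h$ is differentiable at every $x\in(a,b)$ and $h'(x)=f'(x)-g'(x)=0$ for all such $x$, since every point of the open interval $(a,b)$ is an accumulation point of $\dom f\cap\dom g\supseteq(a,b)$.

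\medskip

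Then I would invoke the unnamed proposition stated just above \ax{CONT}-Darboux's Theorem in the excerpt, namely: under \ax{AxOF} and \ax{CONT_\L}, if an \L-definable differentiable function has identically zero derivative on $(a,b)$, then it is constant there. Applying this to $h$ gives a $c\in\Q$ with $h(x)=c$ for all $x\in(a,b)$, i.e., $f(x)=g(x)+c$ on $(a,b)$, which is exactly the conclusion. Strictly speaking, that earlier proposition is itself proved from \ax{CONT}-Mean-Value Theorem (if $h(x)\neq h(y)$ for distinct $x,y$, one gets a point with nonzero derivative, contradicting $h'\equiv 0$), so an alternative is to apply \ax{CONT}-Mean-Value Theorem directly to $h$ on $[x,y]$ for arbitrary distinct $x,y\in(a,b)$; but routing through the already-stated constant-derivative proposition is cleaner.

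\medskip

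There is essentially no serious obstacle here; the only points needing care are the bookkeeping ones. I must check that \L-definability is preserved under the difference operation (routine, as above), that the accumulation-point hypothesis in Prop.~\ref{propDiff}(iii) is satisfied (it is, since $(a,b)$ is an open interval contained in both domains), and that the earlier constant-derivative proposition applies verbatim — it does, as its hypotheses are precisely \ax{AxOF}, \ax{CONT_\L}, \L-definability, and differentiability on $(a,b)$. Given all that, the proof is a two-line combination of Prop.~\ref{propDiff} and the constant-derivative proposition.
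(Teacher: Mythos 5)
Your proposal is correct and matches the paper's own proof: the paper also sets $h\leteq f-g$, gets $h'(x)=0$ on $(a,b)$ from items (ii) and (iii) of Prop.~\ref{propDiff}, and then applies the \ax{CONT}-Mean--Value Theorem directly to conclude $h$ is constant. Your preferred routing through the earlier constant-derivative proposition is only a cosmetic repackaging of the same argument, so there is nothing substantive to distinguish the two.
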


\begin{proof}
Assume that $f'(x)=g'(x)$ for all $x\in(a,b)$. Let $h\leteq f-g$.
Then $h'(x)=f'(x)-g'(x)=0$ for all $x\in(a,b)$ by (ii) and (iii) of
Prop.~\ref{propDiff}. If there are $y,z\in (a,b)$ such that
$h(y)\neq h(z)$ and $y\neq z$, then by the \ax{CONT}-Mean-Value
Theorem, there is an $x$ between $y$ and $z$ such that
$h'(x)=\frac{h(z)-h(y)}{z-y}\neq 0$ and this contradicts $h'(x)=0$.
Thus $h(y)=h(z)$ for all $y,z\in (a,b)$. Hence there is a $c\in \Q$
such that $h(x)=c$ for all $x\in(a,b)$.
\end{proof}

\section{Tools used for proving the Twin Paradox}

In this section we develop the tools which were used in
Chap.~\ref{chp-twp}. To do so, let us first introduce a notation. We
say that $\vp\in \Q^d$ is {\bf vertical} iff $\vp_\sigma=\vo$.

\begin{lem} \label{lemWp}
Let $f:\Q\parrow \Q^d$ be a well-parametrized timelike curve. Then the following hold:
\begin{itemize}
\item[(i)] Let $x\in \dom f$. Then $f_\tau$ is differentiable at $x$
 and $1\le|f'_\tau(x)|$. Furthermore, $|f'_\tau(x)|=1$ iff $f'(x)$
 is vertical.
\item[(ii)] Assume \ax{CONT} and let $f$ be definable. Then $f_\tau$
 is increasing or decreasing. Moreover, $1\le f'_\tau(x)$ for all $x\in
 \dom f$ if $f_\tau$ is increasing.
\end{itemize}
\end{lem}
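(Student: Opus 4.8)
\textbf{Proof plan for Lemma~\ref{lemWp}.}

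The plan is to unwind the definitions of ``well-parametrized'', ``timelike curve'' and the Minkowski length $\mu$, and then to apply the already-developed one-variable calculus (in particular Prop.~\ref{propAff2}, the \ax{CONT}-Darboux Theorem and the \ax{CONT}-Mean--Value Theorem). First I would handle (i). Fix $x\in\dom f$. Since $f$ is a curve, $\dom f$ is connected with at least two points, so it contains an interval around $x$ (relative to $\dom f$); by hypothesis $f$ is differentiable at $x$, and by Prop.~\ref{propAff2} its derivative vector splits coordinatewise as $f'(x)=\langle f'_\tau(x),(f'(x))_\sigma\rangle$, so in particular $f_\tau=f\circ\pi_1$ is differentiable at $x$. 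Being well-parametrized means $\mu\big(f'(x)\big)=1$; since $\mu(\vr)=1>0$ forces $r_\tau^2\ge|\vr_\sigma|^2$ and $\sqrt{r_\tau^2-|\vr_\sigma|^2}=1$, we get $f'_\tau(x)^2=1+|(f'(x))_\sigma|^2\ge 1$, hence $|f'_\tau(x)|\ge 1$. Equality $|f'_\tau(x)|=1$ holds iff $|(f'(x))_\sigma|^2=0$, i.e. iff $(f'(x))_\sigma=\vo$, which by definition is exactly ``$f'(x)$ is vertical''. (The derivative vector is unique here by Thm.~\ref{thm-uniderv}, so these statements are unambiguous.) This disposes of (i).

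For (ii), assume \ax{CONT} and that $f$ is definable. The derivative function $f'_\tau$ is then $\L$-definable (it is a coordinate of $f'$, which is $\L$-definable by the proposition following Thm.~\ref{thm-uniderv}), and by part (i) it never takes the value $0$ on $\dom f$: for every $x$, $|f'_\tau(x)|\ge 1$. The key step is to argue that $f'_\tau$ therefore has constant sign on the connected set $\dom f$. Suppose not; then there are $a<b$ in $\dom f$ with $f'_\tau(a)$ and $f'_\tau(b)$ of opposite sign, so $0$ lies strictly between them. Since $[a,b]\subseteq\dom f$ (connectedness) and $f$ — hence $f_\tau$ — is $\L$-definable and differentiable on $(a,b)$, the \ax{CONT}-Darboux Theorem yields $s\in(a,b)$ with $f'_\tau(s)=0$, contradicting part (i). So $f'_\tau>0$ everywhere or $f'_\tau<0$ everywhere on $\dom f$. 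In the first case, for any $x<y$ in $\dom f$ we have $[x,y]\subseteq\dom f$, and the \ax{CONT}-Mean--Value Theorem gives $s\in(x,y)$ with $f_\tau(y)-f_\tau(x)=f'_\tau(s)(y-x)>0$, so $f_\tau$ is increasing; symmetrically $f'_\tau<0$ everywhere makes $f_\tau$ decreasing. Finally, in the increasing case part (i) already gives $1\le|f'_\tau(x)|$, and since $f'_\tau(x)>0$ this is just $1\le f'_\tau(x)$ for all $x\in\dom f$, as claimed.

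I do not expect a serious obstacle here; the only point requiring a little care is the sign-constancy of $f'_\tau$, since one cannot simply invoke ``continuity of the derivative'' (the derivative of a definable differentiable function need not be continuous in this weak setting) — this is precisely why the argument must route through the \ax{CONT}-Darboux Theorem rather than an intermediate-value argument for $f'_\tau$ itself. One should also double-check that ``$f$ definable'' is genuinely used only to license the applications of the \ax{CONT}-flavored theorems (Darboux, Mean--Value), which it is.
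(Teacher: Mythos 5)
Your proof is correct. Part (i) is essentially the paper's own argument: differentiate coordinatewise, use $\mu\big(f'(x)\big)=1$ to get $f'_\tau(x)^2=1+|f'(x)_\sigma|^2$, and read off both the inequality and the equality case. Part (ii), however, takes a genuinely different route. The paper argues: $f'_\tau$ never vanishes, so by the \ax{CONT}-Rolle Theorem $f_\tau$ is injective; a definable continuous injective function on a connected domain is monotonic by the \ax{CONT}-Bolzano Theorem (this is Lem.~\ref{lem-injcont}); and in the increasing case $0\le f'_\tau(x)$ follows directly from the definition of the derivative as a limit of nonnegative difference quotients, which combined with (i) gives $1\le f'_\tau(x)$. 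You instead establish sign-constancy of the derivative first, via the \ax{CONT}-Darboux Theorem, and then deduce monotonicity of $f_\tau$ from the \ax{CONT}-Mean--Value Theorem. Both routes are sound and use only machinery already available in Chap.~\ref{chp-a}; your version localizes all the work in the derivative (and your closing remark correctly identifies why one must use Darboux rather than continuity of $f'_\tau$, which is not available here), while the paper's version works at the level of the function $f_\tau$ itself and gets the final inequality slightly more cheaply, without needing the Mean--Value Theorem at that step. The hypothesis ``$f$ definable'' is used in both proofs only to license the \ax{CONT}-flavored theorems, exactly as you note.
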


\begin{proof}
As $f$ is a well-parametrized curve, $f'(x)$ is of Minkowski length $1$.
By Prop.~\ref{propAff}, $f_\tau$ is differentiable at $x$ and $f'_\tau(x)=f'(x)_\tau$.
Now, Item (i) follows from the fact that the absolute value of the time component of a
vector of Minkowski length 1 is always at least 1 and it is 1 iff the vector is vertical.

\smallskip
\noindent
Let us now prove Item (ii).  From Item (i), we have $f'_\tau(x)\neq 0$
for all $x\in\dom f$.  Thus by \ax{CONT}-Rolle's Theorem, $f_\tau$ is
injective.  Consequently, by \ax{CONT}-Bolzano's Theorem, $f_\tau$ is
increasing or decreasing since $f_\tau$ is continuous and injective.
Let us now assume that $f_\tau$ is increasing.  Then $0\le f'_\tau(x)$
for all $x\in \dom f$ by our definition of the derivative.  Hence by
Item (i), $1\le f'_\tau(x)$ for all $x\in\dom f$; and that is what we
wanted to prove.
\end{proof}

\begin{thm}\label{thmFtwp}
Assume \ax{CONT}. Let $f:\Q\parrow \Q^d$ be a definable
well-parametrized timelike curve, and let $a,b\in \dom f$ such that
$a<b$. Then the following hold:
\begin{itemize}
\item[(i)] $b-a\le \left|f_\tau(b)-f_\tau(a)\right|$, and
\item[(ii)]$b-a<\big|f_\tau(b)-f_\tau(a)\big|$ if $f(x)_\sigma\neq
 f(a)_\sigma$ for any $x\in[a,b]$.
\end{itemize}
\end{thm}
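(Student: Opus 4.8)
The plan is to reduce the statement to a one-dimensional application of the \ax{CONT}-Mean--Value Theorem applied to the curve $f$ via its coordinate functions. First I would dispose of part (i). By Lem.~\ref{lemWp}(ii), since $f$ is a definable well-parametrized timelike curve, $f_\tau$ is monotonic on $[a,b]\subseteq\dom f$; without loss of generality (replacing $f$ by $-f$ if needed, which is again a definable well-parametrized timelike curve) assume $f_\tau$ is increasing, so that $f_\tau(b)-f_\tau(a)>0$ and $|f_\tau(b)-f_\tau(a)|=f_\tau(b)-f_\tau(a)$. The key inequality $1\le f'_\tau(x)$ for all $x\in[a,b]$ is precisely the content of Lem.~\ref{lemWp}(ii). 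Now I would like to conclude $b-a\le f_\tau(b)-f_\tau(a)$; the natural route is to apply the \ax{CONT}-Mean--Value Theorem to the \L-definable function $g(t)\leteq f_\tau(t)-t$ on $(a,b)$, which is differentiable there with $g'(t)=f'_\tau(t)-1\ge 0$, to obtain an $s\in(a,b)$ with $g'(s)=\big(g(b)-g(a)\big)/(b-a)$, whence $g(b)\ge g(a)$, i.e. $f_\tau(b)-b\ge f_\tau(a)-a$, which rearranges to (i). (One small point to check: $f_\tau$ is continuous on $(a,b)$, which follows since differentiability implies continuity by the Corollary after Thm.~\ref{thm-difflip}.)

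For part (ii), the hypothesis is that $f(x)_\sigma\neq f(a)_\sigma$ for some $x\in[a,b]$ --- so the spatial part of the curve genuinely moves. The idea is that this forces the time component to grow strictly faster than $t$ somewhere, so the weak inequality $g'(s)\ge 0$ from part (i) must be strict on a whole subinterval. Concretely: since $f$ is well-parametrized, $f'(x)$ has Minkowski length $1$ for every $x$, so $f'_\tau(x)^2-|f'_\sigma(x)|^2=1$, giving $f'_\tau(x)=\sqrt{1+|f'_\sigma(x)|^2}$ (using the increasing branch). Hence $f'_\tau(x)>1$ exactly when $f'_\sigma(x)\neq\vo$. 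I would argue that $f_\sigma$ cannot be constant on $[a,b]$: if it were, then $f(x)_\sigma=f(a)_\sigma$ for all $x\in[a,b]$, contradicting the hypothesis. Therefore there is some $x_0\in[a,b]$ with $f'_\sigma(x_0)\neq\vo$ --- here I would invoke that a function constant on an interval has vanishing derivative, and contrapositively a nonconstant \emph{differentiable} $f_\sigma$ must have a nonzero derivative somewhere; more carefully, if $f'_\sigma\equiv\vo$ on $(a,b)$ then by the proposition preceding \ax{CONT}-Darboux's Theorem (the ``$f'=0$ implies $f$ constant'' proposition, applied coordinatewise) $f_\sigma$ would be constant on $(a,b)$, and by continuity constant on $[a,b]$, a contradiction.

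The main obstacle, and the step requiring the most care, is upgrading ``$f'_\tau>1$ at a single point $x_0$'' to ``$\int$-type strict gain'', i.e. to the strict inequality $b-a<f_\tau(b)-f_\tau(a)$. The clean way is: $g'(x_0)=f'_\tau(x_0)-1>0$, and $g'\ge 0$ everywhere on $(a,b)$ by part (i); by \ax{CONT}-Darboux's Theorem $g'$ takes all values between consecutive values, but more to the point, $g$ is monotonic nondecreasing (apply \ax{CONT}-Mean--Value on every subinterval), and if $g(b)=g(a)$ then $g$ would be constant on $[a,b]$, forcing $g'\equiv 0$ on $(a,b)$ --- here using that a nonconstant-derivative situation is incompatible: if $g(b)=g(a)$ with $g$ nondecreasing then $g$ is constant, hence $g'\equiv 0$, contradicting $g'(x_0)>0$. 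Therefore $g(b)>g(a)$, which is exactly $b-a<f_\tau(b)-f_\tau(a)=|f_\tau(b)-f_\tau(a)|$. The proof of (i) and (ii) for the decreasing branch of $f_\tau$ is obtained by replacing $f$ with $-f$, or symmetrically by working with $|f_\tau(b)-f_\tau(a)|=f_\tau(a)-f_\tau(b)$ throughout; I would remark this rather than repeat the argument.
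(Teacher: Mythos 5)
Your proposal is correct. Part (i) is essentially the paper's argument: one application of the \ax{CONT}-Mean--Value Theorem combined with the pointwise bound $1\le|f'_\tau|$ from Lem.~\ref{lemWp}; you merely repackage it through the auxiliary function $g(t)=f_\tau(t)-t$ and a WLOG reduction to the increasing branch, whereas the paper works directly with $f_\tau$ and the absolute value and so needs no such reduction for (i). Part (ii) is where you genuinely diverge. The paper first produces (via the Mean--Value Theorem applied to a spatial coordinate $f_i$) a point $y$ with $|f'_\tau(y)|>1$, then uses the raw definition of the derivative to find a nearby $z$ with $z-y<|f_\tau(z)-f_\tau(y)|$, and finally adds this strict inequality to the two weak inequalities from (i) applied to $[a,y]$ and $[z,b]$, using the monotonicity of $f_\tau$ to make the sum telescope. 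You instead argue globally: $g$ is nondecreasing with $g'(x_0)>0$ somewhere, and a nondecreasing $g$ with $g(a)=g(b)$ would be constant and hence have $g'\equiv 0$ on $(a,b)$, a contradiction. Your route is slightly slicker but rests on two extra ingredients the paper's proof deliberately avoids: uniqueness of derivative vectors for curves (Thm.~\ref{thm-uniderv}, needed so that ``constant implies $g'\equiv 0$'' is meaningful rather than ``$0$ is one of the derivatives''), and the ``$f'\equiv 0$ implies $f$ constant'' proposition (itself a consequence of the Mean--Value Theorem). Both of these are available in the paper's toolkit, so nothing is missing; the paper's version just stays closer to the $\varepsilon$--$\delta$ definitions and needs only Lem.~\ref{lemWp} and the Mean--Value Theorem.
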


\begin{proof}
For every $i\leq d$, we have that $f_i$ is definable and
differentiable by Prop.~\ref{propAff}.  Hence by the
\ax{CONT}-Mean-Value Theorem, there is an $s\in (a,b)$ such that
$f'_\tau(s)=\frac{f_\tau(b)-f_\tau(a)}{b-a}$.  By Item (i) of
Lem.~\ref{lemWp}, we have $1\le |f'_\tau(s)|$.  Then $b-a\le
\big|f_\tau(b)-f_\tau(a)\big|$.  That completes the proof of Item (i).

\smallskip
\noindent
To prove Item (ii), 
let $x\in [a,b]$ such that $f(x)_\sigma\neq f(a)_\sigma$.
Then there is an $i>1$ such that $f_i(x)\neq f_i(a)$.
Hence by the \ax{CONT}-Mean-Value Theorem, there is a
$y\in (a,b)$ such that $f'_i(y)=\frac{f_i(x)-f_i(a)}{x-a}\neq 0$.
Thus $f'(y)$ is not vertical.
Therefore, by Item (i) of Lem.~\ref{lemWp}, we have $1<|f'_\tau(y)|$.
Thus by our definition of the derivative,
there is a $z\in(y,b)$ such that $1<\frac{|f_\tau(z)-f_\tau(y)|}{z-y}$.
Hence we have
\begin{equation*}
z-y<|f_\tau(z)-f_\tau(y)|.
\end{equation*}
Let us note that $a<y<z<b$.
By applying Item (i) 
to $[a,y]$ and $[z,b]$ we get
\begin{equation*}
y-a\le \big| f_\tau(y)-f_\tau(a) \big|\quad\text{and}\quad b-z\le \big|f_\tau(b)-f_\tau(z)\big|.
\end{equation*}
$f_\tau$ is increasing or decreasing by Item (ii) of Lem.~\ref{lemWp}.
Thus $f_\tau(a)<f_\tau(y)<f_\tau(z)<f_\tau(b)$ or $f_\tau(a)>f_\tau(y)>f_\tau(z)>f_\tau(b)$.
Therefore, by adding up the last three inequalities, we get $b-a<\big|f_\tau(b)-f_\tau(a)\big|$.
\end{proof}

\begin{thm}\label{thmJeq}
Assume \ax{CONT}.
Let $f,g:\Q\parrow \Q^d$ be definable well-parametrized timelike curves.
Let $a,b,a',b'\in\Q$ such that
\begin{itemize}
\item $a\le b$ and $a'\le b'$, 
\item $[a,b]\subseteq \dom f$ and $[a',b']\subseteq \dom g$, and
\item $\{f(t):t\in[a,b]\}=\{g(t'):t'\in [a',b']\}$.
\end{itemize}
Then $b-a=b'-a'$.
\end{thm}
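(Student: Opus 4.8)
The plan is to realize $g\big|_{[a',b']}$ as a reparametrization of $f\big|_{[a,b]}$ by a map $\phi:[a,b]\to[a',b']$, to show that $|\phi'|\equiv1$, and to conclude that $\phi$ is affine of slope $\pm1$, so that the two parameter intervals have equal length. First I would dispose of the degenerate case: if $a=b$ then $\{f(t):t\in[a,b]\}$ is a single point, so $g$ is constant on $[a',b']$; but if also $a'<b'$, then $g\big|_{[a',b']}$ is a definable well-parametrized timelike curve on an interval with two distinct points, and by Lem.~\ref{lemWp}(ii) its time component is monotonic, hence injective---a contradiction. So $a'=b'$ and there is nothing to prove; by symmetry the same holds if $a'=b'$. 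Hence I may assume $a<b$ and $a'<b'$, and then $f\big|_{[a,b]}$ and $g\big|_{[a',b']}$ are themselves definable well-parametrized timelike curves (the derivatives needed are inherited by restriction, cf.\ Rem.~\ref{rem-diff}), so Lem.~\ref{lemWp} applies to them.

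Next I would build $\phi$. Set $u\leteq f_\tau\big|_{[a,b]}$ and $v\leteq g_\tau\big|_{[a',b']}$; by Lem.~\ref{lemWp} these are differentiable (hence continuous), with $|u'|,|v'|\ge1$ everywhere, and monotonic. Taking first coordinates in the hypothesis $\{f(t):t\in[a,b]\}=\{g(s):s\in[a',b']\}$ shows $\ran u=\ran v\leteq J$, so by Lem.~\ref{lem-injcont} (here \ax{CONT} and definability are used) $u$ and $v$ are definable monotonic continuous bijections onto the connected set $J$, with definable monotonic continuous inverses. I would then put $\phi\leteq u\circ v^{-1}$, a definable monotonic bijection $[a,b]\to[a',b']$. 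For each $t\in[a,b]$ one gets $g_\tau\big(\phi(t)\big)=v\big(v^{-1}(u(t))\big)=u(t)=f_\tau(t)$; and since $f(t)$ lies in the common image, $f(t)=g(s_0)$ for some $s_0\in[a',b']$, whence $g_\tau(s_0)=f_\tau(t)=g_\tau(\phi(t))$, so injectivity of $g_\tau$ on $[a',b']$ forces $s_0=\phi(t)$ and therefore $g\big(\phi(t)\big)=f(t)$ for all $t\in[a,b]$.

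Finally I would differentiate. For $t\in(a,b)$ we have $\phi(t)\in(a',b')$, and since $v$ is monotonic on $(a',b')$ and differentiable at $\phi(t)$ with $v'(\phi(t))=g_\tau'(\phi(t))\ne0$, Prop.~\ref{propDiff}(v) gives that $v^{-1}$ is differentiable at $u(t)=v(\phi(t))$; as $u$ is differentiable at $t$, the Chain Rule (Thm.~\ref{thm-chn}) makes $\phi$ differentiable at $t$. Differentiating the identity $g\big(\phi(t)\big)=f(t)$ by the Chain Rule yields $f'(t)=\phi'(t)\cdot g'\big(\phi(t)\big)$; since $f'(t)$ is timelike, $\phi'(t)\ne0$, and using $\mu(\lambda\vrr)=|\lambda|\,\mu(\vrr)$ for timelike $\vrr$ together with the well-parametrization of $f$ and $g$ gives $1=\mu\big(f'(t)\big)=|\phi'(t)|\cdot\mu\big(g'(\phi(t))\big)=|\phi'(t)|$. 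Hence $\phi'(t)=\pm1$ on $(a,b)$, with a constant sign because $\phi$ is monotonic, so by Prop.~\ref{propInt} there is $c\in\Q$ with $\phi(t)=t+c$ for all $t\in(a,b)$, or $\phi(t)=-t+c$ for all $t\in(a,b)$. Since $\phi$ is a monotonic bijection of $[a,b]$ onto $[a',b']$ it maps $(a,b)$ onto $(a',b')$, so comparing $\phi\big((a,b)\big)$ with $(a',b')$ forces $\{a',b'\}=\{a+c,\,b+c\}$ in the first case and $\{a',b'\}=\{c-b,\,c-a\}$ in the second; either way $b'-a'=b-a$.

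The main obstacle is bookkeeping rather than a new idea: one must keep the paper's reversed composition convention $(R\circ S)(x)=S(R(x))$ straight when invoking the Chain Rule, check that the restricted curves still satisfy every hypothesis (parametric definability, differentiability at the endpoints, timelikeness) required by Lem.~\ref{lemWp}, Lem.~\ref{lem-injcont} and Prop.~\ref{propDiff}(v), and---the one genuinely substantive point---verify that equality of the \emph{images} of the two curves descends to the pointwise identity $g(\phi(t))=f(t)$ of vector values, which is exactly where injectivity of the time components enters.
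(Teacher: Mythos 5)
Your proof is correct and follows essentially the same route as the paper's: form the transfer map $f_\tau\circ g_\tau^{-1}$ (your $\phi$), identify it with $f\circ g^{-1}$ via injectivity of the time components, differentiate $f=\phi\circ g$ and use well-parametrization to get $|\phi'|\equiv 1$, then integrate. The only (harmless) differences are that the paper normalizes both time components to be increasing so that $\phi'\equiv 1$, whereas you keep the sign $\pm1$ and resolve it at the end, and that you spell out the degenerate case $a=b$ which the paper dismisses with ``we can assume $a\neq b$.''
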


\begin{proof}
By (ii) of Lem.~\ref{lemWp},
$f_\tau$ is increasing or decreasing on $[a,b]$ and
so is $g_\tau$ on $[a',b']$.
Without losing generality, we can assume that $\dom f=[a,b]$,
$\dom g=[a',b']$
and that $f_\tau$ and $g_\tau$ are increasing on $[a,b]$ and $[a',b']$,
respectively.%
\footnote{ It can be assumed that $f_\tau$ is increasing on $[a,b]$
  because the assumptions of the theorem remain true when $f$ and
  $[a,b]$ are replaced by $-Id\circ f$ and $[-b,-a]$, respectively,
  and $f_\tau$ is decreasing on $[a,b]$ iff $(-Id\circ f)_\tau$ is
  increasing on $[-b,-a]$.}  Then $\ran(f)=\ran(g)$ by the assumptions
of the theorem.  Furthermore, $f$ and $g$ are injective since $f_\tau$
and $g_\tau$ are such.  Since $\ran(f)=\ran(g)$ and $g_\tau$ is
injective, $f\circ g^{-1}=f_\tau\circ g_\tau^{-1}$.  Let $h\leteq
f\circ g^{-1}=f_\tau\circ g_\tau^{-1}$.  Since
$\ran(f_\tau)=\ran(g_\tau)$ and $f_\tau$ and $g_\tau$ are increasing,
$h$ is an increasing bijection between $[a,b]$ and $[a',b']$.  Hence
$h(a)=a'$ and $h(b)=b'$.  We prove that $b-a=b'-a'$ by proving that
there is a $c\in \Q$ such that $h(x)=x+c$ for all $x\in [a,b]$.  We
can assume that $a\neq b$ and $a'\neq b'$.  By Lem.~\ref{lemWp},
$f_\tau$ and $g_\tau$ are differentiable on $[a,b]$ and $[a',b']$,
respectively, and $f'_\tau(x)>0$ for all $x\in[a,b]$ and
$g'_\tau(x')>0$ for all $x'\in[a',b']$.  By Chain Rule and (v) of
Prop.~\ref{propDiff}, $h=f_\tau\circ g_\tau^{-1}$ is also
differentiable on $(a,b)$.  By $h=f\circ g^{-1}$, we have $f=h\circ
g$.  Thus $f'(x)=h'(x)g'\big(h(x)\big)$ for all $x\in(a,b)$ by Chain
Rule.  Since both $f'(x)$ and $g'\big(h(x)\big)$ are of Minkowski
length $1$ and their time-components are positive%
\footnote{That is, $f'_\tau(x)>0$ and $g'_\tau\big(h(x)\big)>0$.}  for
all $x\in(a,b)$, we conclude that $h'(x)=1$ for all $x\in(a,b)$.  By
Prop.~\ref{propInt}, we get that there is a $c\in \Q$ such that
$h(x)=x+c$ for all $x\in(a,b)$ and thus for all $x\in[a,b]$ since $h$
is an increasing bijection between $[a,b]$ and $[a',b']$.
\end{proof}

A curve is called slower than light ({\sf STL}) and faster than light
({\sf FTL}) iff any of its chords is timelike and spacelike,
respectively.

\begin{prop}\label{propstl}
Let $\L\supseteq\mathcal{OF}$.
Assume \ax{AxOF} and \ax{CONT_\L}.
Let $\gamma: \Q \parrow \Q^d$ be an \L-definable and continuous curve.
Then (i) and (ii) below hold:
\begin{itemize}
\item[(i)] $\gamma$ is timelike $\Longrightarrow$ $\gamma$ is {\sf STL}.
\item[(ii)] $\gamma$ is {\sf STL}, {\sf FTL} or it has a lightlike chord.
\end{itemize}
\end{prop}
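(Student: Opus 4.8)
\textbf{Proof plan for Proposition \ref{propstl}.}

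The plan is to prove (i) and (ii) together by contraposition, using the \ax{CONT}-Darboux-style tools and the intermediate value machinery developed in Section \ref{sec-diff}. The central object is the auxiliary function that measures whether a chord is spacelike, lightlike or timelike: for fixed $\vp\in\Q^d$ define $g_{\vpp}(x)\leteq |\gamma(x)_\sigma-\vp_\sigma|^2-(\gamma(x)_\tau-p_\tau)^2$, so that the chord from $\vp$ to $\gamma(x)$ is timelike iff $g_{\vpp}(x)<0$, lightlike iff $g_{\vpp}(x)=0$, and spacelike iff $g_{\vpp}(x)>0$. Since $\gamma$ is \L-definable and continuous, so is each $g_{\vpp}$, and hence \ax{CONT}-Bolzano's Theorem applies to it on any closed subinterval of $\dom\gamma$. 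The idea is that a sign change of some $g_{\vpp}$ forces the existence of a lightlike chord, which immediately settles (ii), and rules out the bad case in (i).

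First I would prove (ii). Suppose $\gamma$ is neither {\sf STL} nor {\sf FTL} and has no lightlike chord. Then there are $x_1,x_2,x_3,x_4\in\dom\gamma$ such that the chord $\overline{\gamma(x_1)\gamma(x_2)}$ is timelike and the chord $\overline{\gamma(x_3)\gamma(x_4)}$ is spacelike. Using that $\dom\gamma$ is connected (it is a curve) and the continuity of $\gamma$, I would connect these two chords by a continuous one-parameter family of chords inside $\dom\gamma$; concretely, fixing one endpoint $\vp\leteq\gamma(x_1)$ and sliding the other along $\gamma$ from $\gamma(x_2)$ towards $\gamma(x_4)$ (and then sliding the first endpoint), the value $g_{\vpp}$ evaluated along this path is a continuous \L-definable function of a single real parameter that is negative at one end and positive at the other. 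By \ax{CONT}-Bolzano's Theorem it has a zero, i.e., some chord of $\gamma$ is lightlike --- a contradiction. (One has to be slightly careful: the path of chords should avoid degenerate zero-length chords, but since $\gamma(x_1)\ne\gamma(x_2)$ and $\gamma(x_3)\ne\gamma(x_4)$ this is arranged by choosing the sliding endpoint to stay away from the fixed one, or by perturbing within the connected domain.) This proves (ii).

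For (i), assume $\gamma$ is timelike. By (ii) it is {\sf STL}, {\sf FTL}, or has a lightlike chord. A lightlike chord $\overline{\gamma(a)\gamma(b)}$ is excluded: applying Item (i) of Lem.~\ref{lemWp} is not available here (that lemma is about well-parametrized curves), so instead I would argue directly --- but in fact the cleanest route is to note that a timelike curve cannot have a lightlike chord or a spacelike chord, because along any subinterval $[a,b]$ the function $g_{\gamma(a)}$ starts at $g_{\gamma(a)}(a)=0$ and its one-sided behaviour is governed by $\gamma$ being timelike; more robustly, I would apply (ii) to the restriction $\gamma|_{[a,b]}$ for each relevant pair and use that if $\gamma$ had a spacelike or lightlike chord then, combined with a nearby timelike chord (which exists since $\gamma$ is timelike), (ii) would produce a lightlike chord, and then a second application of the sliding argument on a subinterval where one endpoint is held fixed produces arbitrarily short lightlike chords, contradicting that $\gamma$ is timelike (which forbids even that). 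Hence all chords of $\gamma$ are timelike, i.e., $\gamma$ is {\sf STL}.

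The main obstacle I anticipate is making the ``sliding family of chords'' argument rigorous and \emph{\L-definable within FOL}: I must exhibit the homotopy of chords as the graph of a single \L-definable continuous function of one quantity variable so that \ax{CONT}-Bolzano's Theorem genuinely applies, rather than appealing informally to connectedness. The fix is to parametrize the sliding endpoint by a real parameter $s$ ranging over a closed interval in $\dom\gamma$ (using that $\dom\gamma$ is connected, so $(x,y)\subseteq\dom\gamma$ for the relevant $x,y$) and to treat the fixed endpoint similarly in a second step; each step then reduces to the one-variable \ax{CONT}-Bolzano's Theorem already proved. Everything else --- continuity of $g_{\vpp}$, the sign dictionary for spacelike/lightlike/timelike chords --- is immediate from the definitions in the excerpt.
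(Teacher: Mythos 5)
Your treatment of (ii) is essentially the paper's argument: fix one endpoint, slide the other along the curve, and apply \ax{CONT}-Bolzano's Theorem to a continuous \L-definable function whose sign records the chord type (the paper uses the ratio $\vpp\mapsto|p_\tau-b_\tau|/|\vpp-\vb\,|$ rather than your quadratic form, but this is cosmetic). The degeneracy issue you flag is real and your proposed fix is too vague; the paper resolves it concretely by first choosing a non-monochromatic triangle $\{\va,\vb,\vc\,\}$ with the two same-type chords meeting at $\vc=\gamma(0)$, and then using \ax{CONT} to pick the parameter $t$ \emph{closest to} $0$ at which $\gamma$ hits $\va$ or $\vb$, so that the sliding interval avoids the fixed endpoint entirely. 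That detail is fixable along the paper's lines, so I would not count (ii) as broken.

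Part (i) is where you have a genuine gap. You explicitly decline the derivative-based route, but that route is the one that works, and your substitute does not close. The paper's proof of (i) is short: if $\gamma$ is not {\sf STL} it has a lightlike or spacelike chord $\vpp-\vqq$; choose a $(d-1)$-dimensional subspace $H$ containing $\vpp-\vqq$ and containing no timelike vectors; by Cor.~\ref{colHplane} (Rolle applied to the linear functional vanishing on $H$) some $\gamma'(c)$ lies in $H$ and hence is not timelike, contradicting timelikeness. Your alternative reduces to the claim that a timelike curve admits no ``arbitrarily short lightlike chords.'' Two problems: first, your sliding argument is not shown to \emph{produce} such chords, and second, even if it did, the claimed contradiction requires the short lightlike chords to be anchored at a single parameter $t_0$ with the other parameter approaching $t_0$, so that the lightlike difference quotients force $\gamma'(t_0)$ into the (closed) light cone. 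Short lightlike chords scattered along the curve give no contradiction, and over an arbitrary Euclidean ordered field there is no compactness available to extract an accumulation point. So the key idea of (i) --- putting the offending chord into a hyperplane free of timelike vectors and invoking the \ax{CONT}-Rolle/Mean-Value machinery to find a derivative in that hyperplane --- is missing from your proposal and cannot be replaced by the limiting argument you sketch.
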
 

\begin{proof}
To prove the first statement, let us assume that $\gamma$ is not
\textsf{STL}.  Then it has a lightlike or spacelike chord, say
$\{\vp,\vqq\}$.  Let $H$ be a $(d-1)$-dimensional subspace that
contains $\vp-\vq$ and does not contain timelike vectors.  Thus by
Cor.~\ref{colHplane}, we get that there is a $t\in\Q$ such that
$\gamma'(t)$ is in $H$.  Since $H$ does not contain timelike vectors,
$\gamma'(t)$ is not timelike.  Thus $\gamma$ is not timelike.

To prove the second statement, let us assume that $\gamma$ is not {\sf
  STL} or {\sf FTL} and does not have a lightlike chord.  Then
$\gamma$ has both timelike and spacelike chords.  Then there are
distinct points $\va,\vb,\vc\in \ran(\gamma)$ such that the triangle
$\{\va,\vb,\vc\,\}$ determines two timelike and one spacelike or two
spacelike and one timelike chords of $\gamma$.  We can assume that
$\gamma(0)=\vc$ and $\vc$ is the intersection of the chords of
same type.  See Fig.~\ref{keeep}.

\begin{figure}
\begin{center}
\small
\psfrag*{t}[r][r]{$t$}
\psfrag*{y}[r][r]{$y$}
\psfrag*{0}[r][r]{$0$}
\psfrag*{1}[t][t]{$1$}
\psfrag*{a}[b][b]{$\va$}
\psfrag*{b}[l][l]{$\vb$}
\psfrag*{c}[lb][lb]{$\vc$}
\psfrag*{c'}[l][l]{$\vc$}
\psfrag*{2}[lt][lt]{$\frac{1}{\sqrt{2}}=f\big(\gamma(y)\big)$}
\psfrag*{g}[b][b]{$\gamma$}
\psfrag*{f}[r][r]{$f$}
\psfrag*{gy}[b][b]{$\gamma(y)$}
\psfrag*{gy'}[r][r]{$\gamma(y)$}
\psfrag*{or}[c][c]{or}
\psfrag*{text1}[l][l]{spacelike}
\psfrag*{text2}[l][l]{timelike}
\psfrag*{text3}[l][l]{lightlike}
\includegraphics[keepaspectratio, width=0.8\textwidth]{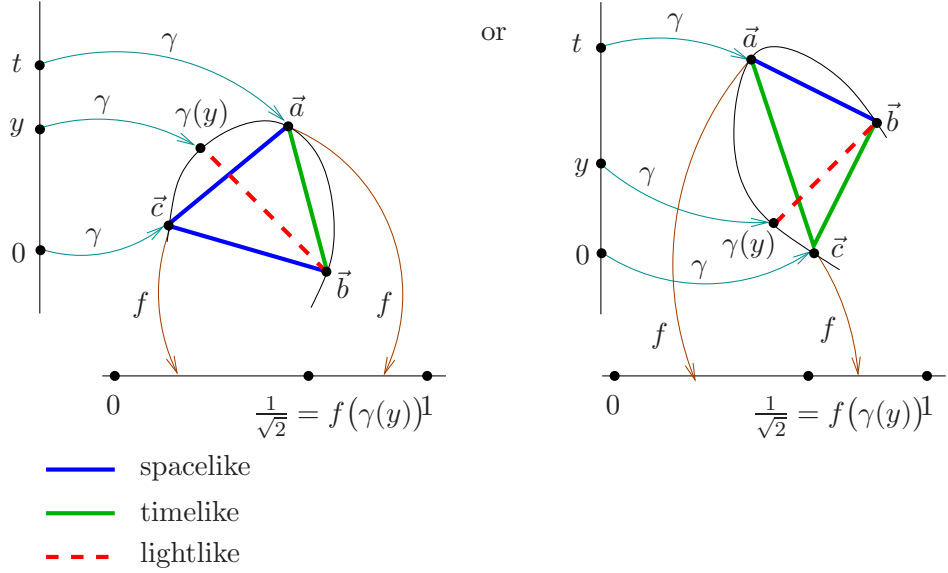}
\caption{\label{keeep} Illustration for the proof of Prop.~\ref{propstl}}
\end{center}
\end{figure}

For every $\vp\in \ran(\gamma)$ by \ax{CONT}, 
there is a  closest $t\in\Q$ to $0$ such that
$\gamma(t)=\vp$, i.e., the set $H\leteq \{\,|x|\: :\: \gamma(x)=\vp\,\}$ has
a minimal element.%
\footnote{That is so because of the following.  Let $s$ be the
  supremum of the nonempty bounded definable set $\{\, -|x|\: :\:
  \gamma(x)=\vp\,\}$.  By the continuity of $\gamma$, one of
  $\gamma(s)$ and $\gamma(-s)$ must be $\vp$.  Then $-s$ is the
  minimal element of $H$.}  Thus there is a $t\in\Q$ such that
$\gamma(t)$ is $\va$ or $\vb$ and there is no $t'$ between $0$ and $t$
such that $\gamma(t')$ is $\va$ or $\vb$.  We can assume that
$\gamma(t)=\va$ and $t> 0$.

Let $f:\Q^d\setminus\{\vb\,\}\rightarrow \Q$ be the function defined as
$\vp\mapsto \frac{|p_\tau-b_\tau|}{|\vp-\vb\,|}$.
It is easy to see that $f$ is continuous
and for all $\vp\in \Q^d\setminus\{\vb\,\}$
\begin{eqnarray}\label{csill}
f(\vpp)={1}/{\sqrt{2}} & \Longleftrightarrow & \vp-\vb \text{ is lightlike},\nonumber\\
f(\vpp)>{1}/{\sqrt{2}} & \Longleftrightarrow & \vp-\vb \text{ is timelike}, \\
 f(\vpp)<{1}/{\sqrt{2}} & \Longleftrightarrow & \vp-\vb \text{ is spacelike}.\nonumber
\end{eqnarray} 
Consider the function $g\leteq \gamma\big|_{[0,t]}\circ f$.
It is a 
continuous function.
Furthermore, $\dom g=[0,t]$ since
there is no $t'\in [0,t]$ such that
$\gamma(t')=b$.
By (\ref{csill}) above and by the fact that
$\gamma(0)=\vc$ and $\gamma(t)=\va$, we have that 
\begin{equation*}
\big(\, g(0)>1/\sqrt{2}\text{ and }g(t)<{1}/{\sqrt{2}}\,\big)\ \text{ or
}\ \big(\, g(0)<{1}/{\sqrt{2}}\text{ and }g(t)>{1}/{\sqrt{2}}\,\big)
\end{equation*} 
since one of the chords $\{\vb,\vc\,\}$ and $\{\vb,\va\,\}$ is timelike and the other is
spacelike.
However, by \ax{CONT}-Bolzano's Theorem, there is a $y\in[0,t]$
such that $g(y)=1/\sqrt{2}$. Hence by (\ref{csill}) above,
we have that $\gamma(y)-\vb$ is lightlike 
for this $y$.
Consequently, $\{\vb,\gamma(y)\}$
is a lightlike chord of $\gamma$.
This contradiction proves our proposition.
\end{proof}

\section{Tools used for simulating gravity by accelerated observers}
\label{lem-sec}
In this section we develop the tools which were used in
Chap.~\ref{chp-grav}. To do so, let us first introduce the following
convenient notation. We say that $\alpha:\Q\parrow \Q$ is a
\df{nice map}\index{nice map} if it is differentiable such that
$0\not\in\ran\alpha'$, and $\dom \alpha$ is connected.

\begin{lem}
\label{lem-tlnice}
Let $\alpha$ be a timelike curve.
Then $\alpha_\tau$ is a nice map.
\end{lem}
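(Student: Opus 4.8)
The statement to prove is Lemma~\ref{lem-tlnice}: if $\alpha$ is a timelike curve, then $\alpha_\tau$ is a nice map, i.e., $\alpha_\tau$ is differentiable with $0\notin\ran\alpha_\tau'$ and $\dom\alpha_\tau$ connected.

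\textbf{The plan.} Recall that, by definition, a timelike curve $\alpha$ is differentiable with $\alpha'(t)$ timelike for every $t\in\dom\alpha$, and a curve has connected domain with at least two distinct elements. So $\dom\alpha_\tau=\dom\alpha$ is automatically connected, which disposes of the last requirement immediately. For differentiability: since $\alpha_\tau = \alpha\circ\pi_1$ where $\pi_1:\Q^d\to\Q$ is the (linear, hence affine) first projection, I would invoke Prop.~\ref{propAff2} (the derivative of a curve composed with an affine map), which gives that $\alpha_\tau$ is differentiable wherever $\alpha$ is, with $\alpha_\tau'(t)=\alpha'(t)_\tau=\alpha'(t)_1$. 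This is precisely the content already extracted in the course of proving Lem.~\ref{lemWp}(i), so the argument is essentially a reuse of that computation.

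\textbf{The one substantive point} is showing $\alpha_\tau'(t)\neq 0$ for all $t\in\dom\alpha$, i.e., $0\notin\ran\alpha_\tau'$. Here I would use that $\alpha'(t)$ is timelike, so by the definition of timelike vector $|\alpha'(t)_\sigma| < |\alpha'(t)_\tau|$; since the left side is $\ge 0$, this forces $|\alpha'(t)_\tau| > 0$, hence $\alpha_\tau'(t)=\alpha'(t)_\tau\neq 0$. (If $\alpha$ happens not to be uniquely differentiable at $t$, the same bound applies to \emph{every} derivative vector of $\alpha$ at $t$, since each such derivative is timelike by hypothesis, so every value the relation $\alpha_\tau'$ can take at $t$ is nonzero; this is the reading of $0\notin\ran\alpha_\tau'$ consistent with Rem.~\ref{rem-reldiff} and the framework's handling of non-unique derivatives.)

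\textbf{Expected obstacle.} There is no real obstacle — this is a short bookkeeping lemma whose entire force is already packaged in Prop.~\ref{propAff2} and in the elementary fact that a timelike vector has strictly dominant time-component. The only thing to be slightly careful about is the non-unique-derivative subtlety just mentioned and making sure the phrase ``$\dom\alpha$ connected'' in the definition of nice map matches ``$\dom\alpha_\tau$'', which it does since composition with a total projection does not shrink the domain. I would write the proof in three crisp sentences: domain connectedness is inherited; differentiability and the formula $\alpha_\tau'=\alpha'_1$ come from Prop.~\ref{propAff2}; and nonvanishing of $\alpha_\tau'$ follows from timelikeness of $\alpha'$.
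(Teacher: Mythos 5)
Your proof is correct and follows essentially the same route as the paper's: connectedness of the domain is inherited, $(\alpha_\tau)'=(\alpha')_\tau$ via the affine-projection rule (Prop.~\ref{propAff2}), and nonvanishing of $\alpha'_\tau$ comes from timelikeness of $\alpha'$. Your remarks on non-unique derivatives are a harmless elaboration the paper leaves implicit.
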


\begin{proof} 
Since $\alpha$ is a timelike curve, $\dom\alpha$ is connected and
$\alpha'(x)_\tau\neq 0$ for all $x\in \dom \alpha$.  But
$\dom\alpha_\tau=\dom\alpha$ and $(\alpha_\tau)'=(\alpha')_\tau$.
Thus $\alpha_\tau$ is a nice map.
\end{proof}

\begin{lem}
\label{lem-inj}
Assume \ax{CONT}.
Let $\alpha$ be a definable nice map.
Then $\alpha$ is injective.
Moreover, $\alpha$ is monotonic.
\end{lem}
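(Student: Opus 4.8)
The plan is to show that a definable nice map $\alpha$ is injective, and then strengthen this to monotonicity, using \ax{CONT}-Rolle's Theorem and \ax{CONT}-Bolzano's Theorem. First I would establish injectivity by contradiction: suppose $\alpha(x)=\alpha(y)$ for some distinct $x,y\in\dom\alpha$. Since $\dom\alpha$ is connected, $[x,y]\subseteq\dom\alpha$, so $\alpha$ restricted to $[x,y]$ is a definable function that is differentiable on $(x,y)$, continuous there (differentiability implies continuity, by the corollary after Thm.~\ref{thm-difflip}), and takes equal values at the endpoints. By \ax{CONT}-Rolle's Theorem there is an $s\in(x,y)$ with $\alpha'(s)=0$, contradicting $0\notin\ran\alpha'$.

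Next I would deduce monotonicity. Having shown $\alpha$ is injective, and knowing $\alpha$ is continuous (again from differentiability) with $\dom\alpha$ connected, I would invoke \ax{CONT}-Bolzano's Theorem: an injective continuous function on a connected domain cannot fail to be monotonic, since if it were neither increasing nor decreasing there would be three points $x<y<z$ with, say, $\alpha(x)<\alpha(y)$ and $\alpha(y)>\alpha(z)$ (or the reversed pattern), and then picking a value $c$ strictly between $\max\{\alpha(x),\alpha(z)\}$ and $\alpha(y)$, \ax{CONT}-Bolzano's Theorem applied on $[x,y]$ and on $[y,z]$ produces two distinct points mapping to $c$, contradicting injectivity. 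This is essentially the argument used in Item~(ii) of the proof of Lem.~\ref{lemWp}, and also appears in Lem.~\ref{lem-injcont}, so I would likely just cite that reasoning rather than rewrite it.

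The main obstacle — really the only delicate point — is making sure the hypotheses of \ax{CONT}-Rolle's Theorem and \ax{CONT}-Bolzano's Theorem are genuinely met: both require the function to be \emph{definable} (which is given for $\alpha$), defined on an \emph{interval} $[x,y]$ (which follows from connectedness of $\dom\alpha$), and continuous on it (which follows from differentiability). One should also note that \ax{AxOF} is in force so that $\L\supseteq\mathcal{OF}$ and the analytic machinery applies; here $\L$ is the frame language and \ax{CONT} is \ax{CONT_\L}. Beyond these bookkeeping checks the proof is short, and I would expect the write-up to be only a few lines, closely mirroring the already-given proofs of Lem.~\ref{lemWp}(ii) and Lem.~\ref{lem-injcont}.
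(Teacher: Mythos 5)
Your proposal is correct and matches the paper's proof: injectivity follows by \ax{CONT}-Rolle's Theorem from $0\notin\ran\alpha'$, and monotonicity is then obtained by citing item (2) of Lem.~\ref{lem-injcont}, exactly as you suggest. The extra bookkeeping you mention (continuity from differentiability, connectedness of the domain, definability) is all that is needed, and the paper's write-up is indeed just a few lines.
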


\begin{proof} 
If $\alpha$ were not injective, then $\alpha'(x)$ would be $0$ for
some $x$ by \ax{CONT}-Rolle's Theorem.  But $\alpha'(x)$ cannot be $0$
since $\alpha$ is a nice map.  Thus $\alpha$ is injective.  Then
$\alpha$ is also monotonic by (2) in Lem.~\ref{lem-injcont}.
\end{proof}
 
\begin{lem}
\label{lem-nice}
Assume \ax{CONT}.
If $\alpha$ and $\delta$ are nice maps, $\delta^{-1}$ and $\alpha\circ\delta$ are also nice maps.\qed
\end{lem}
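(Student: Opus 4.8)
The plan is to verify the two closure properties directly from the definition of ``nice map'' — namely that $\alpha$ is differentiable, $0 \notin \ran \alpha'$, and $\dom \alpha$ is connected — using the tools already developed: \ax{CONT}-Mean-Value Theorem, \ax{CONT}-Rolle's Theorem, Chain Rule, and items (v) of Prop.~\ref{propDiff} and (2) of Lem.~\ref{lem-injcont}. Both halves hinge on the fact, supplied by Lem.~\ref{lem-inj}, that a definable nice map is injective and monotonic, so its inverse is a genuine function with connected domain.

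First I would handle $\delta^{-1}$. By Lem.~\ref{lem-inj}, $\delta$ is monotonic, so $\delta^{-1}$ is a well-defined function; moreover $\ran \delta$ is connected by (1) of Lem.~\ref{lem-injcont} (applied to the definable continuous function $\delta$, continuity following from differentiability), and $\dom \delta^{-1} = \ran \delta$, giving the connectedness requirement. Since $0 \notin \ran \delta'$ and $\delta$ is monotonic, by (v) of Prop.~\ref{propDiff} the function $\delta^{-1}$ is differentiable at every point $\delta(x)$ of its domain, and the usual computation via Chain Rule (differentiating $\delta^{-1} \circ \delta = Id$, using $(Id)' = 1 \neq 0$) shows $(\delta^{-1})'(\delta(x)) = 1/\delta'(x)$, which is never $0$; hence $0 \notin \ran (\delta^{-1})'$. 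So $\delta^{-1}$ is a nice map. I should also note $\delta^{-1}$ is definable, since the inverse of a definable function is definable.

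Next I would handle $\alpha \circ \delta$. Its domain is $\{x \in \dom \delta : \delta(x) \in \dom \alpha\}$; since $\delta$ is monotonic and continuous and both $\dom \delta$ and $\dom \alpha$ are connected (intervals), this preimage is again connected — a monotone continuous function pulls back an interval to an interval. Differentiability of $\alpha \circ \delta$ at each point of this domain is immediate from Chain Rule, which also gives $(\alpha \circ \delta)'(x) = d_{\delta(x)}\alpha\big(\delta'(x)\big)$; in the one-variable case this reads $(\alpha\circ\delta)'(x) = \alpha'(\delta(x)) \cdot \delta'(x)$. Since neither factor is $0$ (both $\alpha$ and $\delta$ are nice maps, and $\delta(x)$ ranges inside $\dom \alpha$), the product is never $0$, so $0 \notin \ran (\alpha\circ\delta)'$. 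Thus $\alpha \circ \delta$ is a nice map.

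The only real subtlety — and the one place I would be careful — is the connectedness of the domain of $\alpha \circ \delta$: one must use that $\delta$, being a definable nice map, is monotonic (Lem.~\ref{lem-inj}), so that $\delta^{-1}(\dom \alpha)$ is an interval rather than something disconnected; without monotonicity this could fail. Everything else is a routine application of the differentiation calculus already set up in Section~\ref{sec-diff}, so the proof is short. As elsewhere in this chapter, I would likely relegate the bookkeeping to a one-line ``On the proof'' remark rather than writing it all out, but the above is the argument in full.
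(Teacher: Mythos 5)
Your argument is correct, and in fact the paper offers no proof at all here: the lemma is stated with a terminal \qed, i.e.\ the verification is left as a routine exercise. Your write-up supplies exactly the omitted routine: monotonicity of a definable nice map from Lem.~\ref{lem-inj}, connectedness of $\ran\delta$ from Lem.~\ref{lem-injcont}, differentiability of the inverse from (v) of Prop.~\ref{propDiff}, and the Chain Rule computation showing the derivative of the composite is a product of two nonzero factors. You also correctly isolate the one non-trivial point, namely that the domain of the composite is the preimage of an interval under a monotone continuous map and hence again an interval.

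Two small remarks. First, the paper's convention is $(f\circ g)(x)=g(f(x))$, so $\alpha\circ\delta$ is ``first $\alpha$, then $\delta$''; your domain computation $\{x\in\dom\delta : \delta(x)\in\dom\alpha\}$ uses the opposite convention. Since both maps are nice, the argument is symmetric in $\alpha$ and $\delta$ and nothing breaks, but the labels should be swapped to match the paper. Second, the definition of ``nice map'' does not include definability, yet your appeal to Lem.~\ref{lem-inj} (and hence to \ax{CONT}-Rolle) requires it. The lemma's hypothesis \ax{CONT} only has teeth for definable maps, and every application in Section~\ref{lem-sec} is to definable curves, so this is consistent with the paper's intent --- but strictly speaking you should either add ``definable'' to the hypotheses or note that you are reading it as implicit, since without definability the monotonicity (and hence the connectedness of $\dom(\alpha\circ\delta)$) is not available from the cited lemmas.
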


\begin{lem}
\label{lem-mink} Assume \ax{CONT}.
Let $\alpha$ and $\delta$ be definable timelike curves such that $\ran\alpha\subseteq \ran\delta$ (or $\ran\delta\subseteq \ran\alpha$), and let $h\leteq \alpha\circ\delta^{-1}$.
Then $h$ is a nice map and 
\begin{equation}\label{eq-mink}
|h'(x)|=\frac{\mu\big(\alpha'(x)\big)}{\mu\big(\delta'(h(x))\big)} \quad \text{ for all } \enskip x\in \dom h.
\end{equation}
\end{lem}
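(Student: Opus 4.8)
\textbf{Proof proposal for Lemma~\ref{lem-mink}.}

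The plan is to reduce the statement to the Chain Rule together with the uniqueness of derivative vectors for curves, and then to compute $\mu\big(h'(x)\cdot\delta'(h(x))\big)$ in two ways. First I would show that $h=\alpha\circ\delta^{-1}$ is well-defined and a nice map. Since $\alpha$ and $\delta$ are timelike curves, Lem.~\ref{lem-tlnice} gives that $\alpha_\tau$ and $\delta_\tau$ are nice maps; but $\delta$ itself is injective because by Item~(ii) of Lem.~\ref{lemWp} its time component $\delta_\tau$ is monotonic (hence $\delta$ is injective as a curve), so $\delta^{-1}$ makes sense. The hypothesis $\ran\alpha\subseteq\ran\delta$ (the other case is symmetric) ensures $\ran\alpha\subseteq\dom\delta^{-1}$, so $h=\alpha\circ\delta^{-1}$ has $\dom h=\dom\alpha$, which is connected. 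To see $h$ is nice, note that $\delta^{-1}$ is differentiable with nonzero derivative on $\ran\delta$ by (v) of Prop.~\ref{propDiff} together with Lem.~\ref{lemWp}(ii) (which gives $\delta'_\tau\ne0$, hence $\delta$ monotonic in its $\tau$-coordinate and locally invertible), and $\alpha$ is differentiable; so by Chain Rule $h$ is differentiable, and $h'(x)\ne0$ follows because $\alpha'(x)$ is timelike (nonzero) while composing with the invertible $d\delta^{-1}$ cannot kill it. Alternatively, and more cleanly, $h=\alpha_\tau\circ\delta_\tau^{-1}$ on the relevant set once one observes $\ran\alpha\subseteq\ran\delta$ forces the points to agree, reducing to a composition of nice maps which is nice by Lem.~\ref{lem-nice}; I would use whichever of these two routes is shortest in the final write-up.

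The core computation is the derivative identity. From $h=\alpha\circ\delta^{-1}$ we get $\alpha=h\circ\delta$ as curves (on $\dom h$ composed appropriately), so by Chain Rule, for all $x\in\dom h$,
\begin{equation*}
\alpha'(x)=h'(x)\cdot\delta'\big(h(x)\big),
\end{equation*}
using that $h$ maps $\Q$ to $\Q$ so its derivative acts by scalar multiplication on the derivative vector of $\delta$. Now apply the Minkowski length $\mu$ to both sides. Since $\mu$ is homogeneous of degree one along a fixed vector direction in the following sense — for a fixed vector $\vv$ and scalar $\lambda$, $\mu(\lambda\vv)=|\lambda|\cdot\mu(\vv)$ when $\mu(\vv)$ has a definite sign (which holds here because $\delta'(h(x))$ is timelike, so $\mu\big(\delta'(h(x))\big)>0$) — we obtain
\begin{equation*}
\mu\big(\alpha'(x)\big)=\mu\big(h'(x)\cdot\delta'(h(x))\big)=|h'(x)|\cdot\mu\big(\delta'(h(x))\big).
\end{equation*}
Since $\delta$ is timelike, $\mu\big(\delta'(h(x))\big)\neq0$, so we may divide, yielding exactly \eqref{eq-mink}.

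The main obstacle I anticipate is bookkeeping around domains and the invertibility/differentiability of $\delta^{-1}$ in the FOL setting: one must be careful that $\delta$ restricted to the relevant parameter interval is genuinely injective (which needs Lem.~\ref{lemWp}(ii), hence \ax{CONT}) and that $h$ is differentiable precisely where claimed, invoking (v) of Prop.~\ref{propDiff} and the Chain Rule in the curve form $(g\circ f)'(t)=d_{g(t)}f\big(g'(t)\big)$. The identity $\mu(\lambda\vv)=|\lambda|\mu(\vv)$ also deserves a one-line justification from the definition of $\mu$, distinguishing the timelike/spacelike cases, though since $\delta'(h(x))$ is timelike throughout, only the timelike case is needed and the sign issue does not arise. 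Everything else is a direct application of already-established tools (Chain Rule, Prop.~\ref{propDiff}(v), Lem.~\ref{lem-tlnice}, Lem.~\ref{lem-nice}, Lem.~\ref{lemWp}).
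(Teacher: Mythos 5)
Your proposal is correct and follows essentially the same route as the paper: establish $h=\alpha_\tau\circ\delta_\tau^{-1}$ so that niceness follows from Lem.~\ref{lem-nice}, then apply the Chain Rule to $\alpha=h\circ\delta$ and the homogeneity $\mu(\lambda\vv)=|\lambda|\cdot\mu(\vv)$, dividing by $\mu\big(\delta'(h(x))\big)\neq0$. Your first suggested route for establishing niceness (via Prop.~\ref{propDiff}(v) applied to $\delta^{-1}$) is the shakier option, since that proposition concerns maps $\Q\parrow\Q$ rather than curve inverses $\Q^d\parrow\Q$, so in the final write-up you should commit to your second route, which is exactly the one the paper uses.
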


\begin{proof} 
First we show that $h=\alpha_\tau\circ\delta_\tau^{-1}$.  Since
$\alpha$ and $\delta$ are definable timelike curves, $\alpha_\tau$ and
$\delta_\tau$ are definable nice maps by Lem.~\ref{lem-tlnice}.  Thus
$\alpha_\tau$ and $\delta_\tau$ are injective by Lem.~\ref{lem-inj}.
Consequently, $\alpha$ and $\delta$ are also injective.  Therefore,
$\langle x,y\rangle \in \alpha_\tau\circ \delta_\tau^{-1}$ iff
$\alpha_\tau(x)=\delta_\tau(y)$ and $\langle x,y\rangle\in
\alpha\circ\delta^{-1}$ iff $\alpha(x)=\delta(y)$.  Since
$\alpha(x)=\delta(y)\rightarrow \alpha_\tau(x)=\delta_\tau(y)$ is clear, we
have to show the converse implication only.  By symmetry, we can
assume that $\ran\alpha\subseteq \ran\delta$.  Then there is a $z\in
\dom\delta$ such that $\delta(z)=\alpha(x)$, so
$\delta_\tau(z)=\alpha_\tau(x)=\delta_\tau(y)$.  Thus $z=y$ since
$\delta$ is injective, so $\alpha(x)=\delta(y)$.  That proves
$h=\alpha_\tau\circ\delta_\tau^{-1}$.

By Lem.~\ref{lem-nice}, $h$ is a nice map, so $\dom h$ is an interval.
We have that $\alpha\supseteq h\circ\delta$ since $h=\alpha\circ\delta^{-1}$.
Thus by Chain Rule, $\alpha'(x)=h'(x)\cdot\delta'\big(h(x)\big)$ for all $x\in \dom h$.
Since $\mu(\lambda\vpp)=|\lambda|\cdot\mu(\vpp)$ for all $\lambda\in \Q$ and $\vpp\in\Q^d$, we have that $\mu\big(\alpha'(x)\big)=|h'(x)|\cdot\mu\big(\delta'(h(x))\big)$ for all $x\in \dom h$.
We have that $\mu\big(\delta'(h(x))\big)\neq0$ since $\delta$ is timelike.
Hence equation \eqref{eq-mink} holds.
\end{proof}

\begin{lem}
\label{lem-main}
Assume \ax{CONT}.
Let $\beta$ and $\gamma$ be definable and well-parametrized timelike curves; let $\beta_*$ and $\gamma_*$ be definable timelike curves; let $x_\beta,y_\beta\in \dom\beta$, $x_\gamma,y_\gamma\in \dom\gamma$ and $x,y\in \dom\beta_*\cap \dom\gamma_*$ such that
\begin{itemize}
\item[(i)] $\ran\beta_*\subseteq \ran\beta$ and $\ran\gamma_*\subseteq \ran\gamma$.
\item[(ii)] $\beta(x_\beta)=\beta_*(x)$, $\beta(y_\beta)=\beta_*(y)$, $\gamma(x_\gamma)=\gamma_*(x)$, $\gamma(y_\gamma)=\gamma_*(y)$.
\item[(iii)] $x\neq y$ and $\mu\big(\gamma'_*(z)\big)>\mu\big(\beta'_*(z)\big)$ for all $z\in (x,y)$.
\end{itemize}
Then $\big|x_\gamma-y_\gamma\big|>\big|x_\beta-y_\beta\big|$.
\end{lem}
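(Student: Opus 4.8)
\textbf{Proof plan for Lemma~\ref{lem-main}.} The plan is to reduce the statement to a one-dimensional calculus fact about the time-reparametrization maps, using the machinery of nice maps from Lem.~\ref{lem-tlnice}, Lem.~\ref{lem-inj}, Lem.~\ref{lem-nice} and Lem.~\ref{lem-mink}. First I would set up the two reparametrization maps $h_\beta\leteq \beta\circ\beta_*^{-1}$ and $h_\gamma\leteq \gamma\circ\gamma_*^{-1}$. By (i), $\ran\beta_*\subseteq\ran\beta$ and $\ran\gamma_*\subseteq\ran\gamma$, so Lem.~\ref{lem-mink} applies to each: $h_\beta$ and $h_\gamma$ are definable nice maps, and for all $z$ in their domains
\begin{equation*}
|h_\beta'(z)|=\frac{\mu\big(\beta_*'(z)\big)}{\mu\big(\beta'(h_\beta(z))\big)}=\mu\big(\beta_*'(z)\big),\qquad
|h_\gamma'(z)|=\frac{\mu\big(\gamma_*'(z)\big)}{\mu\big(\gamma'(h_\gamma(z))\big)}=\mu\big(\gamma_*'(z)\big),
\end{equation*}
where the second equality in each uses that $\beta$ and $\gamma$ are well-parametrized, i.e.\ $\mu(\beta')=\mu(\gamma')=1$. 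So the derivatives of the two reparametrization maps are exactly the Minkowski lengths of the tangents of $\beta_*$ and $\gamma_*$.

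Next I would translate condition (ii) into endpoint values of $h_\beta$ and $h_\gamma$. Since $\beta$ and $\gamma$ are timelike curves, they are injective (their time components are nice maps by Lem.~\ref{lem-tlnice}, hence injective by Lem.~\ref{lem-inj}, hence the curves themselves are injective). From $\beta(x_\beta)=\beta_*(x)$ and injectivity of $\beta$ we get $h_\beta(x)=\beta^{-1}\big(\beta_*(x)\big)=x_\beta$, and likewise $h_\beta(y)=y_\beta$, $h_\gamma(x)=x_\gamma$, $h_\gamma(y)=y_\gamma$. Now consider the difference map $h\leteq h_\gamma\circ h_\beta^{-1}$; by Lem.~\ref{lem-nice} this is again a definable nice map, and it is an increasing (or decreasing) bijection between $[x_\beta,y_\beta]$-type intervals and $[x_\gamma,y_\gamma]$-type intervals with $h(x_\beta)=x_\gamma$, $h(y_\beta)=y_\gamma$. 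So $|x_\gamma-y_\gamma|=|h(y_\beta)-h(x_\beta)|$. By the \ax{CONT}-Mean--Value Theorem applied to $h$ on the interval between $x_\beta$ and $y_\beta$, there is a point $z_0$ strictly between them with
\begin{equation*}
|h'(z_0)|=\frac{|x_\gamma-y_\gamma|}{|x_\beta-y_\beta|}.
\end{equation*}

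It then remains to show $|h'(z)|>1$ for all $z$ strictly between $x_\beta$ and $y_\beta$, which combined with the displayed equation gives $|x_\gamma-y_\gamma|>|x_\beta-y_\beta|$ (after noting $x_\beta\neq y_\beta$: this holds because $x\neq y$ by (iii) and $h_\beta$ is injective). Since $h=h_\gamma\circ h_\beta^{-1}$, Chain Rule together with (v) of Prop.~\ref{propDiff} gives, writing $w=h_\beta^{-1}(z)$,
\begin{equation*}
|h'(z)|=\big|h_\gamma'(w)\big|\cdot\big|(h_\beta^{-1})'(z)\big|=\frac{|h_\gamma'(w)|}{|h_\beta'(w)|}=\frac{\mu\big(\gamma_*'(w)\big)}{\mu\big(\beta_*'(w)\big)},
\end{equation*}
and $w$ lies in $(x,y)$ when $z$ lies between $x_\beta$ and $y_\beta$ (because $h_\beta$ is a monotonic bijection sending $\{x,y\}$ to $\{x_\beta,y_\beta\}$). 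By hypothesis (iii), $\mu\big(\gamma_*'(w)\big)>\mu\big(\beta_*'(w)\big)$ for all $w\in(x,y)$, and both quantities are positive since $\beta_*,\gamma_*$ are timelike; hence $|h'(z)|>1$ as needed. The main obstacle I expect is bookkeeping the orientations and domains carefully — ensuring that $h_\beta$, $h_\gamma$ and $h$ really are bijections between the relevant closed intervals and that the mean-value point maps into the open interval $(x,y)$; this is the standard kind of reduction-to-one-variable argument seen already in the proof of Thm.~\ref{thmJeq}, so once the nice-map calculus is set up the inequality falls out. I may need to reduce to the case where $\beta_*,\gamma_*$ have increasing time components (by composing with $-Id$ as in the footnote to Thm.~\ref{thmJeq}) to make the monotonicity statements uniform.
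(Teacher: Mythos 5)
Your proposal is correct and is essentially the paper's own proof: the paper forms the same composite reparametrization map (there written $i\circ j$ with $i=\beta\circ\beta_*^{-1}$, $j=\gamma_*\circ\gamma^{-1}$), uses Lem.~\ref{lem-mink} plus well-parametrizedness to show its derivative has absolute value $\mu\big(\gamma'_*(\cdot)\big)/\mu\big(\beta'_*(\cdot)\big)>1$ on the relevant interval, and finishes with the \ax{CONT}-Mean--Value Theorem exactly as you do. The one thing to fix is notation: under this paper's convention $(f\circ g)(x)=g\big(f(x)\big)$, the map you compute with (sending $x\mapsto x_\beta$ and satisfying $|h_\beta'(z)|=\mu\big(\beta_*'(z)\big)$ via Lem.~\ref{lem-mink}) should be written $\beta_*\circ\beta^{-1}$ rather than $\beta\circ\beta_*^{-1}$.
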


\begin{figure}[h!btp]
\small
\begin{center}
\psfrag{yb}[r][r]{$y_\beta$}
\psfrag{xb}[r][r]{$x_\beta$}
\psfrag{y}[r][r]{$y$}
\psfrag{x}[r][r]{$x$}
\psfrag{yc}[r][r]{$y_\gamma$}
\psfrag{xc}[r][r]{$x_\gamma$}
\psfrag{i}[t][t]{$i$}
\psfrag{j}[t][t]{$j$}
\psfrag{bb}[bl][bl]{$\beta_*,\beta$}
\psfrag{cc}[bl][bl]{$\gamma_*,\gamma$}
\psfrag{b}[b][b]{$\beta$}
\psfrag{c}[b][b]{$\gamma$}
\psfrag{bx}[b][b]{$\beta_*$}
\psfrag{cx}[b][b]{$\gamma_*$}
\psfrag{bxy}[l][l]{$$}
\psfrag{cxy}[l][l]{$$}
\psfrag{bxx}[l][l]{$$}
\psfrag{cxx}[l][l]{$$}
\includegraphics[keepaspectratio, width=0.8\textwidth]{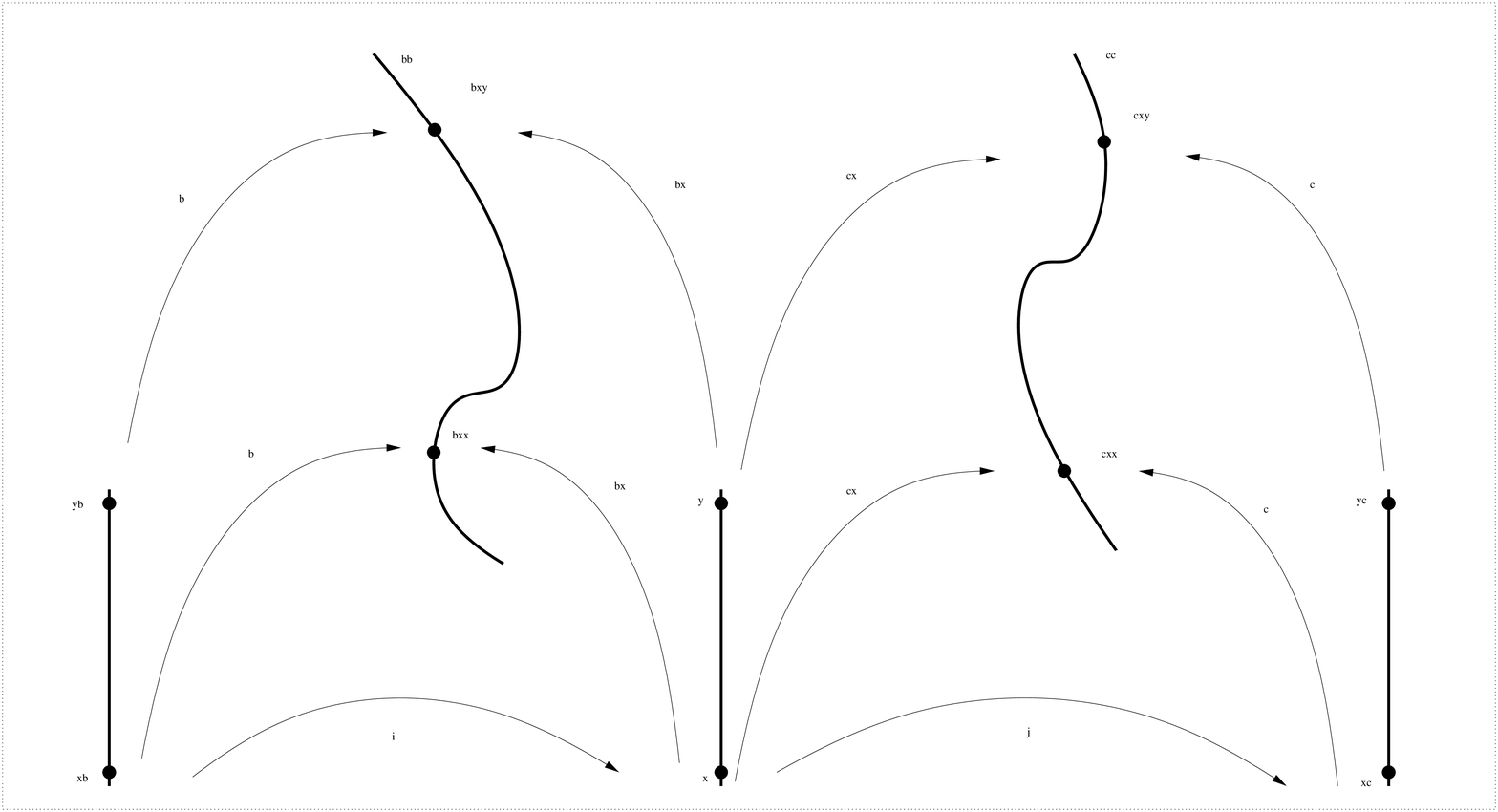}
\caption{\label{fig-lemma} Illustration for the proof of Lem.~\ref{lem-main}}
\end{center}
\end{figure}

\begin{proof}
Since $\beta$, $\beta_*$, $\gamma$ and $\gamma_*$ are definable timelike curves, they are injective by Lems.\ \ref{lem-tlnice} and \ref{lem-inj}.
Thus $x_\beta\neq y_\beta$ and $x_\gamma\neq y_\gamma$ since $x\neq y$.
Let 
\begin{equation*}
i\leteq \beta\circ\beta_*^{-1}\quad\text{and}\quad j\leteq \gamma_*\circ \gamma^{-1},
\end{equation*} 
see Fig.~\ref{fig-lemma}.
Then $i$, $j$ and $i\circ j$ are nice maps by Lem.~\ref{lem-nice} and \ref{lem-mink}.
Furthermore,
\begin{equation*}
i(x_\beta)=x,\enskip i(y_\beta)=y,\enskip j(x)=x_\gamma,\enskip j(y)=y_\gamma,\enskip
(i\circ j)(x_\beta)=x_\gamma\enskip \text{and}\enskip (i\circ j)(y_\beta)=y_\gamma.
\end{equation*}
Since $x_\beta, y_\beta\in \dom i\circ j$, and $i\circ j$ is a nice
map, we have that $(x_\beta,y_\beta)\subseteq \dom i\circ j$.

Now we will show that
\begin{equation*}
\label{megjelolt}
\forall t\in (x_\beta, y_\beta) \quad \big|(i\circ j)'(t)\big|>1.
\end{equation*}
To prove this statement, let $t\in (x_\beta, y_\beta)$.
Since $i$ is a nice map, it is monotonic by Lem.~\ref{lem-inj}, thus $i(t)\in (x,y)$.
By Lem.~\ref{lem-mink} and the fact that $\beta$ and $\gamma$ are well-parametrized, we have that 
\begin{equation}\label{i}
\big|i'(t)\big|=\frac{\mu\big(\beta'(t)\big)}{\mu\big(\beta_*'(i(t))\big)}=\frac{1}{\mu\big(\beta_*'(i(t))\big)}\\ 
\end{equation}
and
\begin{equation}\label{j}
\big|j'\big(i(t)\big)\big|=\frac{\mu\big(\gamma_*'(i(t))\big)}{\mu\big(\gamma'\big(j(i(t))\big)\big)}=\mu\big(\gamma'_*(i(t))\big).
\end{equation}
From equations \eqref{i}, \eqref{j} and Item (iii) by Chain Rule, we
have that
\begin{equation*}
\big|(i\circ j)'(t)\big|=\big|i'(t)\cdot
j'\big(i(t)\big)\big|=\frac{\mu\big(\gamma_*'(i(t))\big)}{\mu\big(\beta'_*(i(t))\big)}>1.
\end{equation*}
This completes the proof of \eqref{megjelolt}.

By \ax{CONT}-Mean--Value Theorem there is a $z\in (x_\beta,y_\beta)$ such that
\begin{equation*}
(i\circ j)'(z)=\frac {(i\circ j)(x_\beta)-(i\circ j)(y_\beta)}{x_\beta-y_\beta}=\frac{x_\gamma-y_\gamma}{x_\beta-y_\beta}\,.
\end{equation*}
By this and \eqref{megjelolt}, we conclude that $\big|\frac{x_\gamma-y_\gamma}{x_\beta-y_\beta}\big|>1$.
Hence $|x_\gamma-y_\gamma|>|x_\beta-y_\beta|$, as it was required.
\end{proof} 

\begin{rem}
Lem.~\ref{lem-main} remains true even if we substitute ``$=$'' or ``$\ge$'' for ``$>$''.
The proof can be achieved by the same substitution in the original proof.
\end{rem}

\begin{prop}$\,$
\label{prop-vmorta}
\begin{enumerate}
\item Let $\alpha$ be a well-parametrized timelike curve.
If $\alpha$ is twice differentiable at $t\in\dom\alpha$, then $\alpha'(t)\mort\alpha''(t)$.
\item Let $d\ge 3$.
Assume \ax{AccRel_0}.
Let $k\in\Ob$ and $m\in \IOb$.
Then $\fvvkm(t)\mort\fvakm(t)$ for all $t\in \dom \fvakm$.
\end{enumerate}
\end{prop}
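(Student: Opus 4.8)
The plan is to prove the two parts separately, with part (1) being the purely analytic core and part (2) a translation of the relativistic setup into the hypothesis of part (1). For part (1), I would start from the defining equation of a well-parametrized timelike curve, $\mu\big(\alpha'(t)\big)=1$ for all $t\in\dom\alpha$, which since $\alpha$ is timelike means $\alpha'_\tau(t)^2-|\alpha'_\sigma(t)|^2=1$, i.e.
\begin{equation*}
\alpha'_1(t)^2-\alpha'_2(t)^2-\ldots-\alpha'_d(t)^2=1\quad\text{for all }t\in\dom\alpha.
\end{equation*}
The idea is to differentiate this identity. Each coordinate function $\alpha'_i$ is differentiable at $t$ (twice-differentiability of $\alpha$ at $t$ gives exactly that $\alpha'$ is differentiable at $t$, and by Prop.~\ref{propAff2} its components are $\alpha''_i(t)$), so by the product rule (Prop.~\ref{propDiff}, items (ii)--(iii)) the left-hand side is differentiable at $t$ with derivative $2\alpha'_1(t)\alpha''_1(t)-2\alpha'_2(t)\alpha''_2(t)-\ldots-2\alpha'_d(t)\alpha''_d(t)$; the right-hand side is constant, so this derivative is $0$. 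Dividing by $2$ gives $\alpha'_1(t)\alpha''_1(t)=\alpha'_2(t)\alpha''_2(t)+\ldots+\alpha'_d(t)\alpha''_d(t)$, which is precisely the condition $\alpha'(t)\mort\alpha''(t)$ by the definition of Minkowski orthogonality.

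For part (2), the plan is to reduce to part (1) via the results already established for \ax{AccRel_0}. Assuming \ax{AccRel_0} with $d\ge 3$, Thm.~\ref{thm-wp} tells us that $\lc^k_m$ is a well-parametrized timelike curve whenever $k\in\Ob$ and $m\in\IOb$. Now $\fvvkm$ is defined as the derivative of the life-curve $\lc^k_m$, i.e. $\fvvkm(t)=(\lc^k_m)'(t)$, and $\fvakm$ is the derivative of $\fvvkm$, i.e. $\fvakm(t)=(\lc^k_m)''(t)$. Hence for any $t\in\dom\fvakm$, the curve $\lc^k_m$ is twice differentiable at $t$, so part (1) applied to $\alpha\leteq \lc^k_m$ gives $(\lc^k_m)'(t)\mort(\lc^k_m)''(t)$, which is exactly $\fvvkm(t)\mort\fvakm(t)$.

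The only mild subtlety I anticipate is bookkeeping around non-uniqueness of derivatives in this framework (a function differentiable at a point may have several derivatives). In part (1) the statement $\alpha'(t)\mort\alpha''(t)$ should be read as holding for the chosen derivative vectors $\alpha'(t)$ and $\alpha''(t)=(\alpha')'(t)$; the differentiation-of-the-identity argument goes through for any consistent choice, since at the point where one differentiates $|\alpha'(t)|_\mu=1$ one is free to use whatever derivative of $\alpha'$ one has fixed. For part (2) the life-curve $\lc^k_m$ is actually a curve (its domain is connected with at least two points, by \ax{AxSelf^+_0} and Prop.~\ref{prop-lc}), so by Thm.~\ref{thm-uniderv} its derivative vectors are in fact unique, and no ambiguity remains. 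Thus the main ``obstacle'' is really just care in invoking the correct earlier results (Thm.~\ref{thm-wp}, Prop.~\ref{propDiff}, Prop.~\ref{propAff2}) rather than any genuine difficulty; the argument is short.
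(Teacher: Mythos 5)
Your proposal is correct and follows essentially the same route as the paper: part (1) is obtained by differentiating the identity $\alpha'_1(t)^2-\alpha'_2(t)^2-\ldots-\alpha'_d(t)^2=1$ coming from well-parametrization, and part (2) reduces to part (1) via Thm.~\ref{thm-wp} together with $\fvvkm=(\lc^k_m)'$ and $\fvakm=(\lc^k_m)''$. Your extra remarks on the (non-)uniqueness of derivatives are a sensible addition that the paper leaves implicit.
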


\begin{proof}
To prove Item (1), let $t\in\dom\alpha$ such that $\alpha$ is twice differentiable at $t$.
Since $\alpha$ is a well-parametrized timelike curve, we have that
\begin{equation}\label{eq-wellp}
\big(\alpha'_1(t)\big)^2-\big(\alpha'_2(t)\big)^2-\ldots-\big(\alpha'_d(t)\big)^2=1.
\end{equation}
By derivation of both sides of equation \eqref{eq-wellp} we have that
\begin{equation*}\label{eq-mort}
2\alpha'_1(t)\cdot\alpha''_1(t)-2\alpha'_2(t)\cdot\alpha''_2(t)-\ldots-2\alpha'_d(t)\cdot\alpha''_d(t)=0.
\end{equation*}
Thus $\alpha'(t)\mort\alpha''(t)$, which is what we wanted to prove.

Item (2) is a consequence of Item (1) since $\fvvkm=(\lc^k_m)'$,
$\fvakm=(\lc^k_m)''$, $\lc^k_m$ is a well-parametrized timelike curve
by Thm.~\ref{thm-wp}, and $\lc^k_m$ is twice differentiable at $t$ iff
$t\in \dom \fvakm$.
\end{proof}

If $f:\Q\parrow\Q$, we abbreviate $f(t)>0$ for all $t\in\dom f$ to $f>0$.
We also use the analogous notation $f<0$.

\begin{lem}$\,$ 
\label{lem-vmon}
\begin{enumerate}
\item Assume \ax{CONT}.
Let $\alpha$ be a definable and twice differentiable timelike curve such that $\ran\alpha\subset \txPlane$.
If $\alpha''\circ\mu<0$, then $\alpha'_2$ is increasing or decreasing.
\item Let $d\ge 3$.
Assume \ax{AccRel}.
Let $k\in\Ob$ and $m\in \IOb$ such that $\wl_m(k)\subset \txPlane$ and $\dom\fvakm=\dom \lc^k_m$.
If $k$ is positively accelerated, $(\fvvkm)_2$ is increasing or decreasing.
\end{enumerate}
\end{lem}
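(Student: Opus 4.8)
The plan is to derive Item~(2) from Item~(1) applied to the life-curve $\lc^k_m$, and to prove Item~(1) by turning $\alpha'_2$ into a definable \emph{nice map}, so that Lem.~\ref{lem-inj} (which uses \ax{CONT}) yields monotonicity.

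For Item~(1), I would first note that since $\ran\alpha\subset\txPlane$ the coordinate functions $\alpha_3,\dots,\alpha_d$ vanish identically, hence so do $\alpha''_3,\dots,\alpha''_d$; thus $\alpha''(t)=\langle\alpha''_1(t),\alpha''_2(t),0,\dots,0\rangle$ for all $t\in\dom\alpha$. By the definition of $\mu$, the hypothesis $\mu\big(\alpha''(t)\big)<0$ is equivalent to $\alpha''_1(t)^2<\alpha''_2(t)^2$, so in particular $\alpha''_2(t)\neq 0$ for every $t\in\dom\alpha$. Now $\alpha'_2=(\alpha')_2$ is definable: $\alpha$ is a definable twice differentiable curve, hence uniquely differentiable by Thm.~\ref{thm-uniderv}, so $\alpha'$ is a definable function, and projecting to the second coordinate preserves definability. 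Moreover $\alpha'_2$ is differentiable with derivative $(\alpha'_2)'=\alpha''_2$, and $0\notin\ran(\alpha'_2)'$ by the previous sentence, while $\dom\alpha'_2=\dom\alpha$ is connected. Hence $\alpha'_2$ is a definable nice map, so by Lem.~\ref{lem-inj} it is injective and monotonic, i.e., increasing or decreasing.

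For Item~(2), I would set $\alpha:=\lc^k_m$. Since \ax{AccRel} extends \ax{AccRel_0}, $m\in\IOb$ and $d\ge 3$, Thm.~\ref{thm-wp} gives that $\alpha$ is a definable well-parametrized timelike curve, and the hypothesis $\dom\fvakm=\dom\lc^k_m$ says precisely that $\alpha$ is twice differentiable on all of its domain (recall $\fvvkm=(\lc^k_m)'$ and $\fvakm=(\lc^k_m)''$). By Item~\eqref{item-rantr} of Prop.~\ref{prop-lc}, $\ran\alpha\subseteq\wl_m(k)\subset\txPlane$. By Item~\eqref{item-domtr} of Prop.~\ref{prop-lc} (using \ax{AxCmv}, which belongs to \ax{AccRel}, and $m\in\IOb$), $\dom\lc^k_k=\dom\lc^k_m=\dom\fvakm$; hence, as $k$ is positively accelerated, $a_k(t)>0$ for all $t\in\dom\fvakm$, which since $a_k(t)=-\mu\big(\fvakm(t)\big)$ says exactly that $\alpha''\circ\mu<0$. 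Applying Item~(1) to $\alpha$ then gives that $\alpha'_2=(\lc^k_m)'_2=(\fvvkm)_2$ is increasing or decreasing, as desired.

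The routine-but-delicate parts will be the domain bookkeeping in Item~(2) --- making sure that $\dom\lc^k_k$, $\dom\lc^k_m$ and $\dom\fvakm$ really coincide, so that ``positively accelerated'' transfers to spacelikeness of $\alpha''$ on \emph{all} of $\dom\alpha''$ --- and confirming that $\lc^k_m$ together with its first and second coordinate derivatives are definable, so that the \ax{CONT}-dependent Lem.~\ref{lem-inj} applies. I do not expect any genuinely new estimate to be needed; the content is entirely in the observation that a spacelike second derivative in the $\txPlane$ forces its $x$-component to be nowhere zero.
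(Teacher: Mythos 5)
Your proof is correct and follows essentially the same route as the paper's: Item (2) reduces to Item (1) via $\alpha=\lc^k_m$ exactly as in the paper (same appeals to Thm.~\ref{thm-wp} and Prop.~\ref{prop-lc}), and Item (1) rests on the same key observation that $\mu\big(\alpha''(t)\big)<0$ for a curve in the $\txPlane$ forces $\alpha''_2$ to be nowhere zero, followed by a \ax{CONT}-based monotonicity argument. The only (harmless) differences are that you get $\alpha''_2(t)\neq 0$ by direct computation from the definition of $\mu$ rather than via Prop.~\ref{prop-vmorta}, and you invoke the nice-map machinery of Lem.~\ref{lem-inj} where the paper uses \ax{CONT}-Darboux's and \ax{CONT}-Mean--Value Theorems.
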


\begin{proof}
To prove Item (1), let $t\in\dom \alpha$.  By
Prop.~\ref{prop-vmorta}, $\alpha''(t)$ is a spacelike vector
since it is Minkowski orthogonal to a timelike one.  Therefore,
$\mu(\alpha''(t))<0$ iff $|\alpha''_\sigma(t)|\neq0$.  Thus, since
$\ran\alpha\subset \txPlane$, we have that $\mu(\alpha''(t))<0$ iff
$\alpha''_2(t)\neq0$.  Thus by \ax{CONT}-Darboux's Theorem,
$\alpha''\circ\mu<0$ iff $\alpha''_2>0$ or $\alpha''_2<0$ since
$\alpha'_2$ is definable and $\dom\alpha'=\dom\alpha$ is connected.
Then by \ax{CONT}-Mean-Value Theorem, $\alpha'_2$ is increasing or
decreasing.

Item (2) is a consequence of Item (1) because of the following.
Let $\alpha=\lc^k_m$.  Then by Thm.~\ref{thm-wp}, $\alpha$ is definable
(well-parametrized) timelike curve.  $\alpha$ twice differentiable
since $\dom\alpha''=\dom\fvakm=\dom \lc^k_m=\dom\alpha$; and
$\ran\alpha\subset \txPlane$ since by \eqref{item-rantr} in
Prop.~\ref{prop-lc}, $\wl_m(k)=\ran \lc^k_m$.  Then $k$ is positively
accelerated iff $\alpha''\circ\mu<0$.  Hence by Item (1), if $k$ is
positively accelerated, $(\fvvkm)_2=\alpha'_2$ is increasing or
decreasing.
\end{proof}

Let us introduce the following notation: 
\begin{equation*}\index{$dw^k_m$}
\Df{dw^k_m}(\vpp)\leteq w^k_m(\vpp)-w^k_m(\vo\,).
\end{equation*}

\begin{prop}
\label{prop-inv} 
Let $d\ge3$.
Assume \ax{SpecRel}.
Let $m,k\in\IOb$ and $h\in \Ob$.
Then
\begin{enumerate}
\item \label{item-mort} $\vpp\mort\vqq$ iff $dw^k_m(\vpp)\mort dw^k_m(\vqq)$.
\item \label{item-veloctransf} 
$\fvvh_m=\fvvh_k \circ dw^k_m$ and $\dom\fvvh_m=\dom\fvvh_k$.
\item \label{item-acctransf} 
$\fvah_m=\fvah_k\circ dw^k_m$ and $\dom\fvah_m=\dom\fvah_k$.
\end{enumerate}
\end{prop}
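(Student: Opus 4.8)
The statement to prove is Prop.~\ref{prop-inv}, which asserts three facts about how Minkowski orthogonality, velocity, and acceleration of a body $h$ transform between {\it inertial} observers $m$ and $k$. The unifying tool is Thm.~\ref{thm-poi}: assuming \ax{SpecRel} and $d\ge3$, the worldview transformation $w^k_m$ is a Poincar\'e transformation, hence an affine map whose linear part is a Lorentz transformation. Write $w^k_m=T\circ \Lambda$ where $T$ is a translation and $\Lambda$ is linear; then by definition $dw^k_m(\vpp)=w^k_m(\vpp)-w^k_m(\voo)=\Lambda(\vpp)$, so $dw^k_m$ is precisely the Lorentz transformation $\Lambda$. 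So the whole proposition reduces to three routine facts about Lorentz transformations and the chain rule.

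First I would prove Item~\eqref{item-mort}. Since $\Lambda$ is a Lorentz transformation, it preserves the Minkowski bilinear form $g(\vpp,\vqq)=p_\tau q_\tau-(p_2q_2+\ldots+p_dq_d)$; this follows from $\mu(\Lambda\vpp)=\mu(\vpp)$ by polarization (or it is immediate from the standard fact that a linear map preserving the Minkowski quadratic form preserves its associated bilinear form). By definition $\vpp\mort\vqq$ means exactly $g(\vpp,\vqq)=0$, so $\vpp\mort\vqq$ iff $g(\Lambda\vpp,\Lambda\vqq)=0$ iff $dw^k_m(\vpp)\mort dw^k_m(\vqq)$. Next, for Item~\eqref{item-veloctransf}, recall $\lc^h_m=\iota\circ w^h_m$ (Lem.~\ref{lem-lc}) and $w^h_m=w^h_k\circ w^k_m$ whenever the composition behaves well (Prop.~\ref{prop-wkm}); since here $m,k\in\IOb$, all worldview transformations among them are genuine Poincar\'e-transformation functions on $\Q^d$ (Prop.~\ref{prop-sr0}, Thm.~\ref{thm-poi}), so $\lc^h_m=\lc^h_k\circ w^k_m$ by \eqref{item-tr} of Prop.~\ref{prop-lc}. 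Differentiating by the Chain Rule and using that the derivative of the affine map $w^k_m$ at any point is its linear part $\Lambda=dw^k_m$ (Prop.~\ref{propAff}, Prop.~\ref{propAff2}), we get $\fvvh_m(t)=(\lc^h_m)'(t)=\Lambda\big((\lc^h_k)'(w^k_m\ ?\ )\big)$; being careful with the parametrization, the relative velocity $\fvvh_m$ is the derivative of the life-curve reparametrized appropriately, and the upshot is $\fvvh_m=\fvvh_k\circ dw^k_m$ together with the corresponding equality of domains $\dom\fvvh_m=\dom\fvvh_k$ (the domains match because $\Lambda$ is a bijection and composition with a bijection preserves the domain of the derivative relation, cf.\ Rem.~\ref{rem-diff} and the fact that $\dom\lc^h_m=\dom\lc^h_k$). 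Item~\eqref{item-acctransf} follows by differentiating Item~\eqref{item-veloctransf} once more and applying the Chain Rule again, using that $\Lambda$ is linear so its own derivative is $\Lambda$: $\fvah_m=(\fvvh_m)'=(\fvvh_k\circ\Lambda)'=\fvah_k\circ\Lambda=\fvah_k\circ dw^k_m$, with the same domain bookkeeping.

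The only genuinely delicate point is the parametrization of the relative velocity and acceleration: $\fvvh_m$ and $\fvah_m$ are the derivatives of the \emph{life-curve} $\lc^h_m$ (parametrized by $h$'s proper time), not of the world-line $\wl_m(h)$, and one must check that composing $\lc^h_m$ with the transformation $w^k_m$ on the \emph{target} side (not reparametrizing the source) is exactly what makes the formulas $\fvvh_m=\fvvh_k\circ dw^k_m$ and $\fvah_m=\fvah_k\circ dw^k_m$ come out with the arguments in the right slots. Concretely, $\lc^h_m=\lc^h_k\circ w^k_m$ has $w^k_m$ applied \emph{after} $\lc^h_k$, so the Chain Rule puts $d_{\lc^h_k(t)}w^k_m=\Lambda$ on the outside, giving $\fvvh_m(t)=\Lambda\big(\fvvh_k(t)\big)$ --- but written as a composition of relations this reads $\fvvh_m=\fvvh_k\circ dw^k_m$ only once one unwinds that $dw^k_m$ here plays the role of the velocity-transformation map on $\Q^d$, i.e.\ one is implicitly passing from the curve identity $\lc^h_m=\lc^h_k\circ w^k_m$ to the induced identity on velocities, which is where $dw^k_m$ rather than $w^k_m$ enters. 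I expect this indexing/parametrization bookkeeping to be the main obstacle; everything else is a direct citation of Thm.~\ref{thm-poi}, Prop.~\ref{propAff}, the Chain Rule, and Prop.~\ref{prop-lc}. Once the bookkeeping is set up correctly, the three items are essentially one-line computations, and I would present them in the order \eqref{item-mort}, \eqref{item-veloctransf}, \eqref{item-acctransf}, since the later ones reuse the affine/linear structure established for the first.
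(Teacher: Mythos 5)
Your proposal is correct and follows essentially the same route as the paper's proof: Item (1) via preservation of the Minkowski quadratic form plus polarization and linearity of $dw^k_m$, and Items (2)--(3) via the life-curve identity from Prop.~\ref{prop-lc}, the Chain Rule, and the facts that the derivative of the affine map $w^k_m$ is its linear part $dw^k_m$ and that a linear map is its own derivative. Your closing worry about the parametrization is resolved exactly as you suspect --- the transformation acts only on the target side of $\lc^h_k$, so the proper-time argument is untouched --- which is precisely how the paper's one-line chain-rule computation works.
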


\begin{proof}
To prove Item \eqref{item-mort}, observe that
$\mu\big(dw^k_m(\vpp)\big)=\mu(\vpp)$ by Thm.~\ref{thm-poi}. The
statement $\vpp\mort\vqq$ iff
$\mu(\vpp+\vqq)^2=\mu(\vpp)^2+\mu(\vqq)^2$ can be proved by
straightforward calculation. Thus Item \eqref{item-mort} is clear
since $dw^k_m$ is linear by Thm.~\ref{thm-poi}.

To prove Items \eqref{item-veloctransf} and \eqref{item-acctransf},
let us note that $\lc^h_m$ and $\lc^h_k$ are functions by Item
\eqref{item-trfunct} in Prop.~\ref{prop-lc}. Thus $\fvvh_m=\fvvh_k
\circ dw^k_m$ follows by Chain Rule because $\lc^h_k= \lc^h_m\circ
w^m_k$ (by \eqref{item-tr} in Prop.~\ref{prop-lc}), the derivative of
$w^k_m$ is $dw^k_m$ (since $w^k_m$ is affine transformation by
Thm.~\ref{thm-poi}), and $\fvvh_x=(\lc^h_x)'$ (by definition). Hence
$\dom\fvvh_m=\dom\fvvh_k$ also holds since $dw^k_m$ is a
bijection. $\fvah_m=\fvah_k\circ dw^k_m$ follows from
\eqref{item-veloctransf} of this proposition by Chain Rule because the
derivative of $dw^k_m$ is $dw^k_m$ (since $dw^k_m$ is a linear
transformation), and $\fvah_x=(\fvvh_x)'$ (by definition). Hence
$\dom\fvah_m=\dom\fvah_k$ also holds since $dw^k_m$ is a bijection.
\end{proof}

The \df{light cone}\index{light cone} of $\vpp\in\Q^d$ is defined as
$\Df{\Lambda}{}_{\vpp}\leteq \setopen \vqq\in\Q^d:\vpp\pheq\vqq
\setclose$.\index{$\Lambda{}_{\vpp}$} The \df{past light
 cone}\index{past light cone} of $\vpp\in\Q^d$ is defined as
$\Df{\Lambda}{}^-_{\vpp}\leteq \setopen\vqq\in\Q^d\setmid \vpp\pheq\vqq
\lland q_\tau\le p_\tau\setclose$.\index{$\Lambda{}^-_{\vpp}$} The
\df{future light cone}\index{future light cone} of $\vpp\in\Q^d$ is
defined as $\Df{\Lambda}{}^+_{\vpp}\leteq \setopen\vqq\in\Q^d\setmid
\vpp\pheq\vqq \lland q_\tau\ge
p_\tau\setclose$.\index{$\Lambda{}^+_{\vpp}$} We say that
$\vpp\in\Q^d$ \df{chronologically precedes}\index{chronologically
 precedes} $\vqq\in\Q^d$, in symbols $\vpp\Df{\ll}\vqq$, iff
$\vpp\teq\vqq$ and $p_\tau<q_\tau$.\index{$\ll$} The \df{chronological
 past}\index{chronological past} of $\vpp\in\Q^d$ is defined as
$\Df{I}{}^-_{\vpp}\leteq \Setopen\vqq\in\Q^d\setmid
\vqq\ll\vpp\Setclose$.\index{$I{}^-_{\vpp}$} The \df{chronological
 future}\index{chronological future} of $\vpp\in\Q^d$ is defined as
$\Df{I}{}^+_{\vpp}\leteq \Setopen\vqq\in\Q^d\setmid \vpp\ll\vqq
\Setclose$.\index{$I{}^+_{\vpp}$} The \df{chronological
 interval}\index{chronological interval} between $\vpp\in\Q^d$ and
$\vqq\in\Q^d$ is defined as $\Df{\llangle\vpp, \vqq\rrangle}\leteq
\setopen \vr\in\Q^d\setmid \vpp\teq\vr \lland \vqq\teq\vr \lland r_\tau\in
(p_\tau,q_\tau)\setclose$.\index{$\llangle\vpp, \vqq\rrangle$} We also
use the notation $\Df{I}{}_{\vpp}\leteq I^-_{\vpp}\cup I^+_{\vpp} \cup
\{\vpp\}$.\index{$I{}_{\vpp}$}

\begin{lem}
\label{lem-cau}
Let $\vpp,\vqq\in\Q^d$.
Then
\begin{enumerate}
\item \label{item-distjcones} If $\vpp\teq \vqq$, then $\Lambda^-_{\vpp}\cap \Lambda^-_{\vqq}=\Lambda^+_{\vpp}\cap \Lambda^+_{\vqq}=\emptyset$.
\item If $\vpp\ll\vqq$, then $\Lambda^-_{\vqq}\cap I^-_{\vpp}=\emptyset$, and $\Lambda^-_{\vpp}\cup I^-_{\vpp}\subset I^-_{\vqq}$.
\item $\vpp\ll\vqq$ iff $I^+_{\vpp}\cap I^-_{\vqq}\neq\emptyset$.\qed
\end{enumerate}
\end{lem}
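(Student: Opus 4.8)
The plan is to reduce all three items to two elementary algebraic facts about the Euclidean length $|\cdot|$ that are available over an arbitrary Euclidean ordered field, so that no analytic or topological reasoning is needed: the Cauchy--Schwarz inequality $\langle\vx,\vy\rangle\le|\vx|\cdot|\vy|$ (which I would derive from the Lagrange identity $|\vx|^2|\vy|^2-\langle\vx,\vy\rangle^2=\sum_{i<j}(x_iy_j-x_jy_i)^2$, a sum of squares and hence $\ge 0$), and the triangle inequality $|\vx+\vy|\le|\vx|+|\vy|$, which follows from it. From these I would first record two ``causal'' observations. (a) If $\va,\vb$ are timelike with $a_\tau>0$ and $b_\tau>0$, then $\va+\vb$ is timelike with $(\va+\vb)_\tau>0$; the same conclusion holds if one of the summands is only lightlike with nonnegative time component (the zero vector included), since $(\va+\vb)_\tau=a_\tau+b_\tau>0$ and $|(\va+\vb)_\sigma|\le|\va_\sigma|+|\vb_\sigma|<a_\tau+b_\tau$ by the triangle inequality (using $|\va_\sigma|<a_\tau$, $|\vb_\sigma|\le b_\tau$); in particular $\ll$ is transitive. (b) If $\va,\vb$ are lightlike and $a_\tau,b_\tau$ have the same weak sign, then $\va-\vb$ is \emph{not} timelike: using $|\va_\sigma|^2=a_\tau^2$, $|\vb_\sigma|^2=b_\tau^2$ one gets $|(\va-\vb)_\sigma|^2-((\va-\vb)_\tau)^2=2\bigl(a_\tau b_\tau-\langle\va_\sigma,\vb_\sigma\rangle\bigr)$, and $a_\tau b_\tau=|a_\tau|\,|b_\tau|=|\va_\sigma|\,|\vb_\sigma|\ge\langle\va_\sigma,\vb_\sigma\rangle$ by Cauchy--Schwarz, so this is $\ge 0$.

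For Item~\eqref{item-distjcones} I would argue by contradiction: if $\vrr\in\Lambda^-_{\vpp}\cap\Lambda^-_{\vqq}$, set $\va\leteq\vpp-\vrr$ and $\vb\leteq\vqq-\vrr$, so $\va,\vb$ are lightlike and $r_\tau\le p_\tau$, $r_\tau\le q_\tau$ give $a_\tau,b_\tau\ge 0$; by (b), $\vpp-\vqq=\va-\vb$ is not timelike, contradicting $\vpp\teq\vqq$. The case $\Lambda^+_{\vpp}\cap\Lambda^+_{\vqq}$ is identical with $a_\tau,b_\tau\le 0$, and can also be reduced to the first by the time reflection $\langle t,\vs\rangle\mapsto\langle-t,\vs\rangle$, which preserves $\teq$ and interchanges past and future light cones.

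For the second claim I would again use (a): if $\vrr\in\Lambda^-_{\vqq}\cap I^-_{\vpp}$, then $\vqq-\vrr$ is lightlike, whereas $\vqq-\vrr=(\vqq-\vpp)+(\vpp-\vrr)$ is a sum of two future-directed timelike vectors (from $\vpp\ll\vqq$ and $\vrr\ll\vpp$), hence timelike --- a contradiction. The inclusion $I^-_{\vpp}\subseteq I^-_{\vqq}$ is transitivity of $\ll$; and for $\Lambda^-_{\vpp}\subseteq I^-_{\vqq}$, if $\vrr\in\Lambda^-_{\vpp}$ then $\vqq-\vrr=(\vqq-\vpp)+(\vpp-\vrr)$ is the sum of a future-directed timelike vector and a lightlike vector with nonnegative time component, hence future-directed timelike by the extended form of (a), i.e.\ $\vrr\ll\vqq$. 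For the third claim, $(\Leftarrow)$ is transitivity of $\ll$, and for $(\Rightarrow)$ I would take the midpoint $\vrr\leteq\tfrac12(\vpp+\vqq)$ (which exists since an ordered field has characteristic $0$): both $\vrr-\vpp$ and $\vqq-\vrr$ equal $\tfrac12(\vqq-\vpp)$, a positive scalar multiple of a timelike vector with positive time component, so $\vpp\ll\vrr\ll\vqq$ and $\vrr\in I^+_{\vpp}\cap I^-_{\vqq}\ne\emptyset$.

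I do not expect a real obstacle; the one thing to be careful about is that the quantity part is an arbitrary Euclidean ordered field, so every inequality must be produced algebraically --- through the Lagrange identity and sums of squares --- rather than by continuity or compactness. Once observations (a) and (b) are set up in that form, each of the three items is a short deduction.
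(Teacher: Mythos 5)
The paper gives no proof of this lemma at all --- it is stated with a terminal \qed{} and all three items are treated as immediate --- so there is nothing of the paper's to compare your argument against. Your proof is correct and complete, and it is pitched at exactly the right level of generality: reducing everything to the two facts (a) and (b), with Cauchy--Schwarz obtained from the Lagrange identity as a sum of squares, is precisely what is needed so that no appeal to $\R$-specific continuity or compactness sneaks in; the sign bookkeeping in (b), the strictness of the estimate in (a) supplied by the timelike summand, transitivity of $\ll$, and the midpoint witness for Item~(3) all check out over an arbitrary Euclidean ordered field. The only point you leave untouched is the strictness of the symbol ``$\subset$'' in Item~(2), but the paper uses $\subset$ and $\subseteq$ interchangeably elsewhere (e.g.\ in the definition of open sets), and in any case properness is immediate from a point of $I^-_{\vqq}$ spacelike-separated from $\vpp$, so this is not a genuine gap.
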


\begin{lem}
\label{lem-chord}
Assume \ax{CONT}.
Let $\gamma$ be a definable timelike curve, and let $x,y\in \dom\gamma$ such that $x\neq y$.
Then
\begin{enumerate}
\item\label{item-chord} All the chords of $\gamma$ are timelike, i.e., $\gamma(x)\teq\gamma(y)$.
\item\label{item-pcone} If $\gamma(x)\in I^-_{\vpp}$ and $\gamma(y)\not\in I^-_{\vpp}$, there is a $z\in[x,y]$ such that $\gamma(z)\in\Lambda^-_{\vpp}$.
\item\label{item-fcone} If $\gamma(x)\in I^+_{\vpp}$ and $\gamma(y)\not\in I^+_{\vpp}$, there is a $z\in[x,y]$ such that $\gamma(z)\in\Lambda^+_{\vpp}$.
\item\label{item-cord} If $\gamma_\tau$ is increasing (decreasing), $\gamma(x)\ll\gamma(y)$ iff $x<y$ ($y<x$).
\item\label{item-cinv} $z\in(x,y)$ iff $\gamma(z)\in \llangle \gamma(x),\gamma(y)\rrangle$.
\end{enumerate}
\end{lem}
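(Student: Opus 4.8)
\textbf{Proof plan for Lemma~\ref{lem-chord}.}
The five items are largely independent, so the plan is to treat them one by one, leaning on the tools already developed (Prop.~\ref{propstl}, the \ax{CONT}-Bolzano Theorem, \ax{CONT}-Mean--Value Theorem, Lem.~\ref{lemWp} and Lem.~\ref{lem-cau}). First I would prove Item~\eqref{item-chord}: since $\gamma$ is a definable timelike curve and \ax{CONT} is assumed, Prop.~\ref{propstl}(i) tells us that $\gamma$ is \textsf{STL}, i.e.\ every chord of $\gamma$ is timelike, which is exactly the claim $\gamma(x)\teq\gamma(y)$ for $x\ne y$. This is immediate and sets the stage, since it guarantees in particular that $\gamma_\tau$ is strictly monotonic on $\dom\gamma$ by Item~(ii) of Lem.~\ref{lemWp} (a fact used repeatedly below).

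For Items~\eqref{item-pcone} and \eqref{item-fcone} the idea is a continuity/Bolzano argument applied to a suitable auxiliary function measuring the ``cone defect.'' Concretely, for Item~\eqref{item-pcone}, suppose $\gamma(x)\in I^-_{\vpp}$ (so $\gamma(x)\ll\vpp$, hence $\gamma(x)\teq\vpp$ and $\gamma(x)_\tau<p_\tau$) and $\gamma(y)\notin I^-_{\vpp}$. Following the recipe in the proof of Prop.~\ref{propstl}, I would introduce $f(\vrr)\leteq |r_\tau-p_\tau|/|\vr-\vpp|$ on $\Q^d\setminus\{\vpp\}$, which is continuous and definable, with $f(\vrr)=1/\sqrt{2}$ exactly when $\vr-\vpp$ is lightlike, $f(\vrr)>1/\sqrt{2}$ when timelike, $f(\vrr)<1/\sqrt{2}$ when spacelike. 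Along the segment $[x,y]\subseteq\dom\gamma$ the composition $g\leteq \gamma|_{[x,y]}\circ f$ is continuous; $g(x)>1/\sqrt{2}$ because $\gamma(x)\in I^-_{\vpp}$, and I need $g(y)\le 1/\sqrt{2}$. Here I would argue that $\gamma(y)\notin I^-_{\vpp}$ together with the monotonicity of $\gamma_\tau$ from Item~\eqref{item-chord} forces $\gamma(y)$ to lie outside the open past cone: if $\gamma(y)$ were timelike-separated from $\vpp$ with $\gamma(y)_\tau<p_\tau$ it would be in $I^-_{\vpp}$, so either it is not timelike-separated (then $g(y)\le 1/\sqrt2$) or $\gamma(y)_\tau\ge p_\tau$; the latter can be handled by first shrinking $[x,y]$ to the subinterval up to the point where $\gamma_\tau$ first reaches $p_\tau$ (which exists by \ax{CONT}, as in the ``closest $t$'' argument in Prop.~\ref{propstl}) and noting $\vpp$ itself is not on $\ran\gamma$ in that subinterval because $\gamma(x)\ne\vpp$. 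Once $g(x)>1/\sqrt2\ge g(y)$, the \ax{CONT}-Bolzano Theorem gives $z\in[x,y]$ with $g(z)=1/\sqrt2$, i.e.\ $\gamma(z)-\vpp$ lightlike; and since $\gamma(z)_\tau\le p_\tau$ on the chosen subinterval, $\gamma(z)\in\Lambda^-_{\vpp}$. Item~\eqref{item-fcone} is the time-reverse of this, obtained by replacing $\gamma$ by $-Id\circ\gamma$ (or just flipping the inequalities), so I would not repeat the argument.

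For Item~\eqref{item-cord}: if $\gamma_\tau$ is increasing and $x<y$, then by Item~\eqref{item-chord} $\gamma(x)\teq\gamma(y)$, and $\gamma(x)_\tau=\gamma_\tau(x)<\gamma_\tau(y)=\gamma(y)_\tau$, so $\gamma(x)\ll\gamma(y)$ by the definition of $\ll$; conversely $\gamma(x)\ll\gamma(y)$ gives $\gamma(x)_\tau<\gamma(y)_\tau$, hence $x<y$ by monotonicity. The decreasing case is symmetric. Finally, Item~\eqref{item-cinv} follows by combining the previous items: $z\in(x,y)$ means (WLOG $\gamma_\tau$ increasing, by the same $-Id$ trick) $x<z<y$, hence $\gamma(x)\ll\gamma(z)$ and $\gamma(z)\ll\gamma(y)$ by Item~\eqref{item-cord}; this gives $\gamma(x)\teq\gamma(z)$, $\gamma(y)\teq\gamma(z)$ and $\gamma(z)_\tau\in(\gamma(x)_\tau,\gamma(y)_\tau)=(p_\tau,q_\tau)$ with the notation $\vpp=\gamma(x)$, $\vqq=\gamma(y)$, i.e.\ $\gamma(z)\in\llangle\gamma(x),\gamma(y)\rrangle$. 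For the converse, $\gamma(z)\in\llangle\gamma(x),\gamma(y)\rrangle$ forces $\gamma(z)_\tau\in(\gamma_\tau(x),\gamma_\tau(y))$, and since $\gamma_\tau$ is a strictly increasing bijection onto its (connected) range, $z\in(x,y)$. The main obstacle I anticipate is the bookkeeping in Items~\eqref{item-pcone}--\eqref{item-fcone}: making the reduction to the subinterval on which $\vpp\notin\ran\gamma$ precise, and ruling out the degenerate possibility $\gamma(y)=\vpp$, so that $f\circ\gamma$ is genuinely defined and continuous on the relevant interval; everything else is routine once Item~\eqref{item-chord} and the monotonicity of $\gamma_\tau$ are in hand.
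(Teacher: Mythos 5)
Your treatment of Items~\eqref{item-chord}, \eqref{item-fcone}, \eqref{item-cord} and \eqref{item-cinv} matches the paper's proof (Item~\eqref{item-chord} from Prop.~\ref{propstl}, Item~\eqref{item-fcone} by time reversal, the last two from timelikeness of chords plus monotonicity of $\gamma_\tau$). For Item~\eqref{item-pcone} you take a genuinely different route. The paper simply sets $H\leteq\Setopen t\in[x,y]\setmid \gamma(t)\in I^-_{\vpp}\Setclose$, takes $z\leteq\sup H$ (which exists by \ax{CONT} since $H$ is definable, nonempty and bounded), and reads off from continuity of $t\mapsto\mu\big(\gamma(t),\vpp\big)$ and of $\gamma_\tau$ that at $z$ one has $\mu\big(\gamma(z),\vpp\big)=0$ and $\gamma(z)_\tau\le p_\tau$, i.e.\ $\gamma(z)\in\Lambda^-_{\vpp}$ --- a two-line argument with no case analysis and no need to worry about whether $\vpp$ lies on $\ran\gamma$. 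You instead recycle the cone-defect function $f(\vrr)=|r_\tau-p_\tau|/|\vr-\vpp|$ from the proof of Prop.~\ref{propstl} and run \ax{CONT}-Bolzano on $\gamma\circ f$, which forces you to first truncate $[x,y]$ at the parameter where $\gamma_\tau$ reaches $p_\tau$ and to treat $\gamma(w)=\vpp$ separately (note $\vpp\in\Lambda^-_{\vpp}$ by the definitions, so that degenerate case is harmless). Your version works --- the monotonicity of $\gamma_\tau$ does make the truncated interval well behaved and keeps $\vpp$ off $\gamma\big([x,w)\big)$ --- but it buys nothing over the direct supremum argument except extra bookkeeping; the paper's $H$ is exactly the set whose supremum you are implicitly locating.

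One small repair: you justify the strict monotonicity of $\gamma_\tau$ by Item~(ii) of Lem.~\ref{lemWp}, but that lemma is stated for \emph{well-parametrized} timelike curves, and here $\gamma$ is only a definable timelike curve. The fact you need is still available: by Lem.~\ref{lem-tlnice} $\gamma_\tau$ is a nice map, hence injective and monotonic by Lem.~\ref{lem-inj} (equivalently, use Item~\eqref{item-chord} together with Lem.~\ref{lem-injcont}). With that citation fixed, the rest of your argument goes through.
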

\begin{proof}
 Item \eqref{item-chord} follows from Prop.~\ref{propstl}.
To prove Item \eqref{item-pcone}, let 
\begin{equation*}
H\leteq \Setopen t\in[x,y]\setmid \mu\big(\gamma(t),\vpp\big)<0 \lland \gamma(t)_\tau<p_\tau\Setclose.
\end{equation*}
It is clear that $H\subseteq \dom \gamma$ is definable, bounded and
nonempty.  Let $z\leteq \sup H$ which exists by \ax{CONT}.  Thus by
continuity of $t\mapsto \mu(\gamma(t),\vpp)$ and $\gamma_\tau$, we
have that $\gamma(z)\not\in I^-_{\vpp}$ since $z$ is an upper bound of
$H$. Furthermore, $\mu(\gamma(t),\vpp)\le0$ and $\gamma(t)_\tau\le
p_\tau$ since $z$ is the least upper bound of $H$.  But
$\gamma(t)_\tau=p_\tau$ and $\mu(\gamma(t),\vpp)<0$ is impossible.
Thus $\gamma(t)_\tau\le p_\tau$ and $\mu(\gamma(t),\vpp)=0$.  Hence
$\gamma(\vpp)\in\Lambda^-_{\vpp}$.

Item \eqref{item-fcone} is clear from Item \eqref{item-pcone} since the continuous bijection $\vpp\mapsto -\vpp$ takes $I^+_{\vpp}$ to $I^-_{\vpp}$ and $\Lambda^+_{\vpp}$ to $\Lambda^-_{\vpp}$.

Item \eqref{item-cord} is clear by Item \eqref{item-chord}.

Item \eqref{item-cinv} is a consequence of Item \eqref{item-cord} since $\gamma_\tau$ is either increasing or decreasing by Lems.\ \ref{lem-inj} and \ref{lem-nice}.
\end{proof}

\noindent
We use the following notations: 
\begin{equation*}\index{$Cone_{\varepsilon}(\vpp;\vqq)$}\index{$\Lambda^-[H]$}
\Df{Cone_{\varepsilon}(\vpp;\vqq)}\leteq \bigcup_{\vr\,\in B_\varepsilon(\vqq)}line(\vpp,\vr\,) \quad \text{ and }\quad
\Df{\Lambda^-[H]}\leteq \bigcup_{\vpp\in H}\Lambda^-_{\vpp}.
\end{equation*}

Let $\alpha$ and $\beta$ be timelike curves.
We say that $\beta_*$ is the \df{photon reparametrization of $\beta$ according to $\alpha$}\index{photon reparametrization} if 
\begin{equation*}
\beta_*=\setopen \langle t,\vpp\rangle\in\dom\alpha\times\ran\beta \setmid\vpp\in\Lambda^-_{\alpha(t)}\setclose.
\end{equation*}

\begin{prop}
\label{prop-ph}
Assume \ax{CONT}.
Let $\alpha$ and $\beta$ be definable timelike curves.
Let $\beta_*$ be the photon reparametrization of $\beta$ according to $\alpha$.
\begin{enumerate} 
\item \label{item-phcont} Then $\beta_*$ is a definable, continuous and injective curve.
\item \label{item-phder} If $\ran\alpha\cap\ran\beta=\emptyset$, and $\ran \alpha\cup \ran \beta$ is in a vertical plane, $\beta_*$ is a timelike curve, and $\beta_*(t_0)+\beta'_*(t_0)\in\Lambda^-_{\alpha(t_0)+\alpha'(t_0)}$.
\end{enumerate}
\end{prop}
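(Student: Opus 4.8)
The statement to prove is Proposition \ref{prop-ph} about the photon reparametrization $\beta_*$ of a definable timelike curve $\beta$ according to a definable timelike curve $\alpha$. The plan is to treat the two items separately, building on the lemmas about light cones (Lem.~\ref{lem-cau}, Lem.~\ref{lem-chord}) and the \ax{CONT}-based analysis tools (\ax{CONT}-Bolzano, \ax{CONT}-Mean--Value, Chain Rule).

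\textbf{Item (1): $\beta_*$ is a definable, continuous and injective curve.} First I would observe that definability of $\beta_*$ is immediate: it is defined by the FOL-expressible condition $\vpp\in\Lambda^-_{\alpha(t)}$ together with $t\in\dom\alpha$ and $\vpp\in\ran\beta$, all of which are definable from $\phi_\alpha$ and $\phi_\beta$. The substance is to show $\beta_*$ is a \emph{function} (on its domain), that its domain is connected with at least two elements, and that it is continuous and injective. To see it is a function at a given $t$: if $\alpha(t)$'s past light cone $\Lambda^-_{\alpha(t)}$ met $\ran\beta$ in two distinct points $\vpp_1,\vpp_2$, then by Item \eqref{item-chord} of Lem.~\ref{lem-chord} the chord $\vpp_1\teq\vpp_2$ is timelike, while both lie on $\Lambda^-_{\alpha(t)}$; but two distinct points of the same past light cone cannot be timelike-separated from each other and from the apex --- this follows from Lem.~\ref{lem-cau}, since if $\vpp_1\ll\vpp_2$ say, then $\Lambda^-_{\alpha(t)}$ cannot contain both (as $\vpp_1$ on the past cone forces $\vpp_2\notin\Lambda^-_{\alpha(t)}$ by \eqref{item-distjcones}/item (2)). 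So $\beta_*$ is a function. Connectedness of $\dom\beta_*$: using that $\gamma\mapsto\gamma_\tau$ is monotonic for timelike curves (Lem.~\ref{lem-inj}/Lem.~\ref{lem-tlnice}), the set of $t$ for which the past cone of $\alpha(t)$ reaches $\ran\beta$ is characterized by an order condition on $\alpha_\tau(t)$ relative to the (monotone) $\beta_\tau$-parameter, hence an interval; one then checks it has at least two elements in the hypotheses where $\beta_*$ is actually used (it does in the applications, but for the bare proposition it suffices to note that if it has fewer than two points the statement is vacuous). Injectivity: if $\beta_*(t_1)=\beta_*(t_2)=\vpp$ with $t_1\ne t_2$, then $\vpp\in\Lambda^-_{\alpha(t_1)}\cap\Lambda^-_{\alpha(t_2)}$, but $\alpha(t_1)\teq\alpha(t_2)$ (chords of $\alpha$ are timelike) so by \eqref{item-distjcones} of Lem.~\ref{lem-cau} these past cones are disjoint, a contradiction. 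Continuity: write $\beta_*=R\circ(\text{something involving }\beta_\tau^{-1})$ and use Lem.~\ref{lem-injcont} and the continuity of the coordinate functions; alternatively, since $\beta_*$ is a monotone definable function with connected range (its range is a subset of $\ran\beta$, which is connected by \ax{CONT}-Bolzano applied to a well-chosen reparametrization), continuity follows from Lem.~\ref{lem-moncont} or Lem.~\ref{lem-injcont}.

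\textbf{Item (2): under $\ran\alpha\cap\ran\beta=\emptyset$ and coplanarity, $\beta_*$ is timelike and the tangent-light-cone relation holds.} Here I would fix $t_0\in\dom\beta_*$, set $\vpp_0=\beta_*(t_0)$, and work out $\beta'_*(t_0)$ explicitly. The defining relation $\beta_*(t)\in\Lambda^-_{\alpha(t)}$ means $\mu\big(\alpha(t)-\beta_*(t)\big)=0$ with $\beta_*(t)_\tau<\alpha(t)_\tau$; writing $\beta_*(t)=\beta\big(s(t)\big)$ for a definable reparametrization function $s$ (definable and differentiable --- the latter via the implicit-function-style argument using \ax{CONT}-Mean--Value, or directly since $s=\beta_\tau^{-1}\circ(\ldots)$ composed of nice maps by Lems.~\ref{lem-tlnice}, \ref{lem-inj}, \ref{lem-nice}), I would differentiate the lightlike constraint $\big(\alpha_\tau(t)-\beta_{*,\tau}(t)\big)^2=|\alpha_\sigma(t)-\beta_{*,\sigma}(t)|^2$. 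Differentiating (Chain Rule, legitimate since all pieces are differentiable) gives a linear relation among $\alpha'(t_0)$, $\beta'_*(t_0)$, and the null vector $\vn:=\alpha(t_0)-\beta_*(t_0)$; combined with the fact that $\alpha'(t_0)$ is timelike and future-directed (by normalization, well-parametrization of $\alpha$ is \emph{not} assumed but $\alpha$ is a timelike curve so $\alpha'$ is timelike), one deduces that $\beta'_*(t_0)$ must be timelike. The cleanest packaging: since everything lies in a vertical plane (the $\txPlane$ after a Poincaré/linear change, which is legitimate because Poincaré transformations preserve $\mu$, the light-cone structure, and differentiability), the computation reduces to a $2$-dimensional one where $\Lambda^-$ is a pair of lines of slope $\pm1$; there the statement $\beta_*(t_0)+\beta'_*(t_0)\in\Lambda^-_{\alpha(t_0)+\alpha'(t_0)}$ is the assertion that the tangent line to $\beta_*$ at $t_0$ and the tangent line to $\alpha$ at $t_0$, translated to share the relevant null ray, still meet on that ray --- which is exactly what differentiating the incidence relation $\beta_*(t)\in\Lambda^-_{\alpha(t)}$ yields to first order. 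The hypothesis $\ran\alpha\cap\ran\beta=\emptyset$ is used to guarantee $\vn\ne\vo$ so the null ray is genuinely $1$-dimensional and the slope-$\pm1$ line through $\alpha(t_0)$ is well-defined and on the \emph{past} side (i.e., we stay on $\Lambda^-$ not $\Lambda^+$, using the $\tau$-ordering part of the definition and Item \eqref{item-cord} of Lem.~\ref{lem-chord}).

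\textbf{Expected main obstacle.} The routine-looking but genuinely delicate point is establishing that $s(t)$ (equivalently $\beta_*$ itself) is \emph{differentiable} at $t_0$ within the FOL framework, since here differentiability is not automatic and derivatives need not be unique. I would handle this by exhibiting $\beta_*$ as a composition of nice maps --- using Lem.~\ref{lem-tlnice} to get that $\alpha_\tau,\beta_\tau$ are nice, Lem.~\ref{lem-inj} for their invertibility, and Lem.~\ref{lem-nice} for closure of nice maps under inverse and composition --- so that the reparametrization is a nice map, hence differentiable with nonzero derivative, and then Chain Rule applies. The second subtle point is the sign/branch bookkeeping (past vs.\ future cone, increasing vs.\ decreasing $\tau$-parametrization), which is why reducing to the explicit $2$-dimensional slope-$\pm1$ picture after a Poincaré transformation is worth doing up front: it turns all the cone-membership assertions into transparent linear-algebra facts about lines, and \ax{CONT}-Bolzano (Lem.~\ref{lem-chord} items \eqref{item-pcone}, \eqref{item-fcone}) supplies the existence of the cone-crossing points needed to see $\dom\beta_*$ is as claimed.
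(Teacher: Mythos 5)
Your Item (1) follows the paper's own route: single-valuedness and injectivity come from playing Lem.~\ref{lem-chord}\eqref{item-chord} (chords of a timelike curve are timelike) against Lem.~\ref{lem-cau} (light cones of timelike-separated points are disjoint), connectedness of $\dom\beta_*$ comes from the cone-crossing statement Lem.~\ref{lem-chord}\eqref{item-pcone}, and continuity comes from monotonicity. Two small cautions: monotonicity and Lem.~\ref{lem-moncont} must be applied to the scalar reparametrization map $f\leteq\beta_*\circ\beta^{-1}$, not to the $\Q^d$-valued $\beta_*$ (``monotone with connected range'' is not defined for the latter), and you also need connectedness of $\ran f$, which requires a second cone-crossing argument run on $\beta$'s side with $\Lambda^+$; the paper does both and your sketch elides them. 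These are fixable bookkeeping points.

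For Item (2) you take a genuinely different route from the paper. The paper proves differentiability and computes $\beta_*'(t_0)$ simultaneously by sandwiching the difference quotients of $\beta_*$ between $\Lambda^-[B_{\varepsilon_1}(\vqq)]$ and $Cone_{\varepsilon_2}(\beta_*(t_0);\vr\,)$; you propose to first get differentiability of the reparametrization as a composition of nice maps and then differentiate the lightlike constraint. The first concrete gap is that $\alpha_\tau$ and $\beta_\tau$ being nice (Lem.~\ref{lem-tlnice}) does not exhibit the \emph{photon} reparametrization as a composition of nice maps: matching is along null lines, not along time slices, so after reducing to the $\txPlane$ the maps you need are the null coordinates $t\mapsto\alpha_\tau(t)-\alpha_2(t)$ and $s\mapsto\beta_\tau(s)-\beta_2(s)$ (or the $+$ versions), which are nice precisely because timelikeness gives $|\gamma_2'|<|\gamma_\tau'|$; and you must argue, using $\ran\alpha\cap\ran\beta=\emptyset$ (a Lem.~\ref{lem-twocurve}-style argument), that the same null branch is used for every $t$, so that $f$ is a single such composition. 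Once this is supplied, your route does deliver differentiability, $f'\neq 0$, and hence timelikeness of $\beta_*=f\circ\beta$ very cleanly --- arguably more cleanly than the paper's $\varepsilon$--$\delta$ argument.

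The step that actually fails as written is the last one. Differentiating the incidence relation only yields that $\alpha'(t_0)-\beta_*'(t_0)$ is a scalar multiple of the null vector $\alpha(t_0)-\beta_*(t_0)$, i.e.\ that the two tangent tips are lightlike-separated (membership in the full cone $\Lambda_{\alpha(t_0)+\alpha'(t_0)}$). Writing $\alpha(t)-\beta_*(t)=\lambda(t)\cdot\vc$ with $\vc$ a fixed future-pointing lightlike vector and $\lambda>0$, the difference of the tangent tips is $\bigl(\lambda(t_0)+\lambda'(t_0)\bigr)\cdot\vc$, so membership in $\Lambda^-$ rather than $\Lambda^+$ is exactly the inequality $\lambda(t_0)+\lambda'(t_0)\ge 0$. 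That is a statement at parameter distance $1$, not a first-order consequence of the incidence relation, and the hypotheses do not force it: take $\alpha(t)=\langle t,0,\dots,0\rangle$ and $\beta=\beta_*$ with $\beta_*(t)=\langle 11t-1,\,1-10t,0,\dots,0\rangle$ on small intervals around $0$ (ranges disjoint, both timelike); then $\lambda(0)=1$, $\lambda'(0)=-10$, and $\beta_*(0)+\beta_*'(0)=\langle 10,-9,\dots\rangle$ lies in $\Lambda^+_{\langle 1,0,\dots\rangle}$, not $\Lambda^-$. Your ``sign/branch bookkeeping'' remark points at exactly this but does not resolve it; note that the same unargued sign is the delicate point in the paper's own proof (the assertion that the difference quotients of $\beta_*$ and $\alpha$ are $\Lambda^-$-related, rather than merely $\Lambda$-related, from parallelism alone). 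To close Item (2) you must either add a hypothesis or argument controlling $\lambda'$ from below, or weaken the conclusion to membership in the full light cone, which is what the tangent-line case analyses in Lem.~\ref{lem-ph} and Thm.~\ref{thm-rad} actually consume together with their figures.
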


\begin{figure}[h!btp]
\small
\begin{center}
\psfrag{p}[l][l]{$\vpp$}
\psfrag{q}[bl][bl]{$\vqq$}
\psfrag{r}[l][l]{$\vr$}
\psfrag{ph}[l][l]{$ph$}
\psfrag{e}[tr][tr]{$\varepsilon$}
\psfrag{e1}[tr][tr]{$\varepsilon_1$}
\psfrag{e2}[bl][bl]{$\varepsilon_2$}
\psfrag{a}[l][l]{$\alpha$}
\psfrag{a'0}[bl][bl]{$\alpha'(t_0)$}
\psfrag{b}[tr][tr]{$\beta,\beta_*$}
\psfrag{b'0}[br][br]{$\beta'(\bar{t}_0)$}
\psfrag{a0}[l][l]{$\alpha(t_0)$}
\psfrag{b0}[l][l]{$\beta(\bar{t}_0)=\beta_*(t_0)$}
\psfrag{adif}[bl][bl]{$\frac{\alpha(t)-\alpha(t_0)}{t-t_0}$}
\psfrag{b*'0}[br][br]{$\beta_*'(t_0)$}
\psfrag{b*t}[tl][tl]{$\beta_*(t)$}
\psfrag{at}[l][l]{$\alpha(t)$}
\includegraphics[keepaspectratio, width=0.9\textwidth]{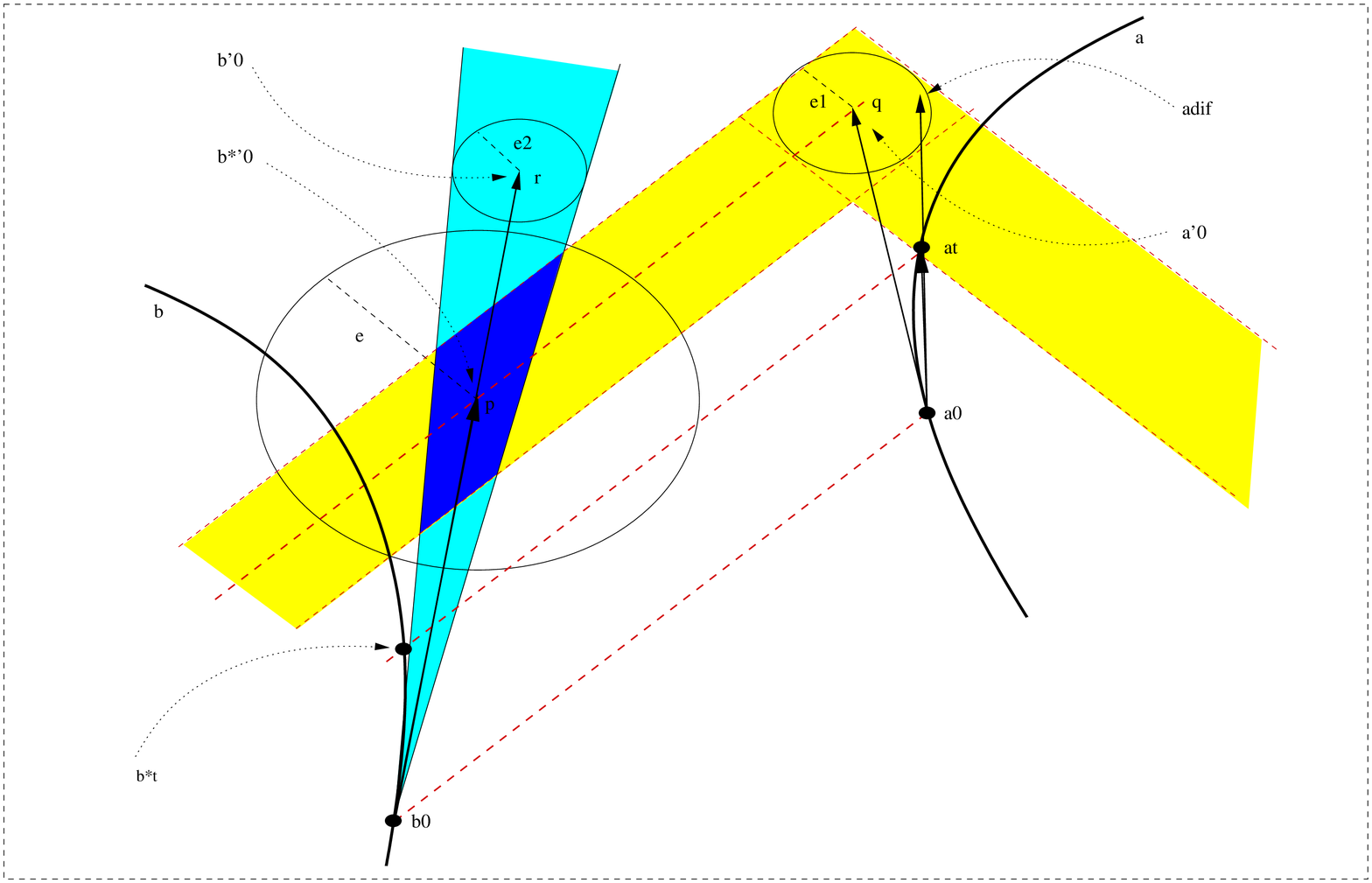}
\caption{\label{figphpar} Illustration for the proof of Prop.~\ref{prop-ph}}
\end{center}
\end{figure}

\begin{proof}
It is clear that $\beta_*$ is definable.

To show that $\beta_*$ is a function, we need to prove that
$\Lambda^-_{\alpha(t)}\cap\ran \beta$ has one element at the most for
all $t\in\dom \alpha$.  It is clear by Lem.~\ref{lem-chord} since if
it had two distinct elements, say $\vpp$ and $\vqq$, then $\setopen
\vpp,\vqq\setclose$ would not be a timelike chord of $\beta$, but a
lightlike one.

For all $t\in\dom\beta_*$, let $\bar{t}\in\dom\beta$ such that
$\beta(\bar t\,)=\beta_*(t)$, and let $f:t \mapsto \bar t$ be the
reparametrization map, i.e., $f\leteq \beta_*\circ\beta^{-1}$.  First
we show that
\begin{equation*} t\in (t_1,t_2) \iff
\alpha(t)\in\llangle \alpha(t_1),\alpha(t_2)\rrangle \stackrel{(*)}{\iff}
\beta_*(t)\in\llangle \beta_*(t_1),\beta_*(t_2)\rrangle
\iff \bar{t}\in (\bar{t}_1,\bar{t}_2)
\end{equation*}
if $t,t_1,t_2\in\dom\beta_*$.  The first and the last equivalence are
clear by \eqref{item-cinv} in Lem.~\ref{lem-chord} since $\alpha$,
$\beta$ are timelike curves and $\beta(\bar{t}\,)=\beta_*(t)$ for all
$t\in\dom\beta_*$.  To prove $(*)$, we can assume that
$\alpha(t_1)\ll\alpha(t)\ll\alpha(t_2)$.  Thus
$\beta_*(t)\ll\alpha(t_2)$ since $\Lambda^-_{\alpha(t)}\subset
I^-_{\alpha(t_2)}$ (2) of by Lem.~\ref{lem-cau}.  Therefore,
$\beta_*(t)\ll\beta_*(t_2)$ since $\beta_*(t)\in I_{\beta_*(t_2)}$ by
\eqref{item-chord} in Lem.~\ref{lem-chord}, but $I^+_{\beta_*(t_2)}\cap
I^-_{\alpha(t_2)}=\emptyset$ (3) by Lem.~\ref{lem-cau}.  A similar
argument can show that $\beta_*(t_1)\ll\beta_*(t)$, so $(*)$ is
proved.  Now we have that $f$ preserves betweenness, so it is
monotonic.

To show that $\dom \beta_*$ is connected, let $x,y\in\dom\beta_*$, and let $z\in(x,y)$.
Then $z\in\dom\alpha$ since $x,y\in\dom \alpha$ and $\dom\alpha$ is connected.
Since $\alpha$ is a timelike curve, $\alpha(z)\in\llangle \alpha(x),\alpha(y)\rrangle$.
Without losing generality, we can assume that $\alpha(x)\ll \alpha(z)\ll\alpha(y)$.
Then $\beta_*(x)\in I^-_{\alpha(z)}$ since $\beta_*(x)\in\Lambda^-_{\alpha(x)}\subset I^-_{\alpha(z)}$;
and $\beta_*(y)\not\in I^-_{\alpha(z)}$ since $\beta_*(y)\in\Lambda^-_{\alpha(y)}$ and $\Lambda^-_{\alpha(y)}\cap I^-_{\alpha(z)}=\emptyset$, see Lem.~\ref{lem-cau}.
Then by \eqref{item-pcone} in Lem.~\ref{lem-chord}, there is a $\widehat{z}\in\dom\beta$ such that $\beta(\widehat{z})\in\Lambda^-_{\alpha(z)}$ since $\beta(\bar{x})\in I^-_{\alpha(z)}$ and $\beta({\bar{y}})\not\in I^-_{\alpha(z)}$.
Thus $\langle z,\beta(\widehat{z})\rangle\in\beta_*$.
Consequently, $z\in\dom\beta_*$.
Hence $\dom\beta_*$ is connected.

Now using a similar argument, we show that $\ran f\subseteq\dom\beta$
is also connected.  To do so, let $\bar{x},\bar{y}\in\ran f$ and
$\widehat{z}\in(\bar{x},\bar{y})$.  Then $\widehat{z}\in\dom \beta$.
We can assume that
$\beta(\bar{x})\ll\beta(\widehat{z})\ll\beta((\bar{y})$.  Then
$\alpha(x)\in I^+_{\beta(\widehat{z})}$ and $\alpha(y)\not\in
I^+_{\beta(\widehat{z})}$.  Thus there is a $z\in\dom\alpha$ such that
$\alpha(z)\in\Lambda^+_{\beta(\widehat{z})}$.  Consequently,
$\beta(\widehat{z})\in\Lambda^-_{\alpha(z)}$, so $\langle
z,\beta(\widehat{z})\rangle\in\beta_*$.  Therefore,
$\widehat{z}\in\ran f$, and hence $\ran f$ is connected.

Since $\ran f$ is connected and $f$ is monotonic, $f$ must be
continuous by Lem.~\ref{lem-moncont}.  Hence $\beta_*=f\circ\beta$ is
also continuous and $\beta_*$ injective since both $\beta$ and $f$ are
such.  So Item \eqref{item-phcont} is proved.

To prove Item \eqref{item-phder}, let $\vqq=\alpha'(t_0)+\alpha(t_0)$, $\vr=\beta'(\bar{t}_0)+\beta(\bar{t}_0)$, and let $\vpp$ be the unique element of $\Lambda^-_{\vqq}\cap line(\beta(\bar{t}_0),\vr\,)$, see Fig.~\ref{figphpar}.
We will show that $\beta'_*(t_0)=\vpp-\beta_*(t_0)$.
To do so, let $\varepsilon\in\Q^+$ be fixed.
We have to show that there is a $\delta\in\Q^+$ such that $\frac{\beta_*(t)-\beta_*(t_0)}{t-t_0}\in B_\varepsilon(\vpp)$ if $t\in\dom\beta_*\cap B_\delta(t_0)$.
It is clear that we can choose $\varepsilon_1$ and $\varepsilon_2$ such that 
\begin{equation}\label{eq-inball}
\Lambda^-[B_{\varepsilon_1}(\vqq)]\cap Cone_{\varepsilon_2}(\beta_*(t_0);\vr\,)\subset B_\varepsilon(\vpp).
\end{equation}
Since $\alpha$ is differentiable at $t_0$, there is a $\delta_1\in \Q^+$ such that
\begin{equation}\label{eq-alpha}
\frac{\alpha(t)-\alpha(t_0)}{t-t_0}+\alpha(t_0)\in B_{\varepsilon_1}(\vqq)
\end{equation}
if $t\in\dom\alpha\cap B_{\delta_1}(t_0)$.
Since $\ran\beta\cap\ran\alpha=\emptyset$, and $\ran\beta\cup\ran\alpha$ is in a vertical plane, $line\big(\beta_*(t),\alpha(t)\big)$ and $line\big(\beta_*(t_0),\alpha(t_0)\big)$ are parallel.
Hence
\begin{equation*}
\frac{\beta_*(t)-\beta_*(t_0)}{t-t_0}+\beta_*(t_0)\in \Lambda^-_{\frac{\alpha(t)-\alpha(t_0)}{t-t_0}+\alpha(t_0)}.
\end{equation*}
Thus by \eqref{eq-alpha}, we have that 
\begin{equation}\label{eq-inpast}
\frac{\beta_*(t)-\beta_*(t_0)}{t-t_0}+\beta_*(t_0)\in \Lambda^-[B_{\varepsilon_1}(\vqq)]
\end{equation}
if $t\in\dom\beta_*\cap B_{\delta_1}(t_0)$.
Since $\beta$ is differentiable at $\bar{t}_0$, there is a $\bar{\delta}_2\in \Q^+$ such that 
\begin{equation} \label{eq-beta}
\frac{\beta(\bar{t}\,)-\beta(\bar{t}_0)}{\bar{t}-\bar{t}_0}+\beta(\bar{t}_0)\in B_{\varepsilon_2}(\vr\,)
\end{equation}
if $\bar{t}\in\dom\beta\cap B_{\bar{\delta}_2}(\bar{t}_0)$.
Since $f:t\mapsto\bar{t}$ is continuous, there is a $\delta_2\in\Q^+$ such that \eqref{eq-beta} holds if $t\in\dom\beta_*\cap B_{\delta_2}(t_0)$.
Since
\begin{equation*}
\frac{\beta_*(t)-\beta_*(t_0)}{t-t_0}=\frac{\beta(\bar{t}\,)-\beta(\bar{t}_0)}{\bar{t}-\bar{t}_0}\cdot\frac{\bar{t}-\bar{t}_0}{t-t_0},
\end{equation*}
we have that 
\begin{equation}\label{eq-incone}
\frac{\beta_*(t)-\beta_*(t_0)}{t-t_0}+\beta_*(t_0)\in Cone_{\varepsilon_2}(\beta_*(t_0);\vr\,)
\end{equation}
if $t\in\dom\beta_*\cap B_{\delta_2}(t_0)$.  Let
$\delta=\min(\delta_1,\delta_2)$.  Therefore, by equations
\eqref{eq-inpast} and \eqref{eq-incone}, we have that
\begin{equation*}
\frac{\beta_*(t)-\beta_*(t_0)}{t-t_0}+\beta_*(t_0)\in \Lambda^-[B_{\varepsilon_1}(\vqq)]\cap Cone_{\varepsilon_2}(\beta_*(t_0);\vr\,)
\end{equation*}
if $t\in\dom\beta_*\cap B_{\delta}(t_0)$.
But the latter is a subset of $B_\varepsilon(\vpp)$ by equation \eqref{eq-inball}.
Consequently,
\begin{equation*}
\frac{\beta_*(t)-\beta_*(t_0)}{t-t_0}+\beta_*(t_0)\in B_\varepsilon(\vpp)
\end{equation*}
if $t\in\dom\beta_*\cap B_{\delta}(t_0)$.
Hence $\beta_*$ is differentiable at $t_0$ and $\beta'_*(t_0)=\vp-\beta_*(t_0)$, as it was required.
\end{proof}

\begin{figure}[h!btp]
\small
\begin{center}
\psfrag{a}[l][l]{$\alpha$} \psfrag{b}[l][l]{$\beta$}
\psfrag{t}[r][r]{$t$} \psfrag{l+1}[r][r]{$\lambda+1$}
\psfrag{l}[r][r]{$\lambda$} \psfrag{-l}[l][l]{$-\lambda$}
\psfrag{1-l}[l][l]{$1-\lambda$} \psfrag{o}[tl][tl]{$\vo$}
\psfrag{p}[r][r]{$\langle -\lambda, -1, 0, \ldots, 0 \rangle$}
\psfrag{q}[l][l]{$\langle \lambda, 1, 0, \ldots, 0 \rangle$}
\psfrag{qt}[l][l]{$\langle \frac{\lambda}{\lambda+1}\cdot t,
  \frac{1}{\lambda+1}\cdot t , 0, \ldots, 0 \rangle$}
\includegraphics[keepaspectratio, width=0.7\textwidth]{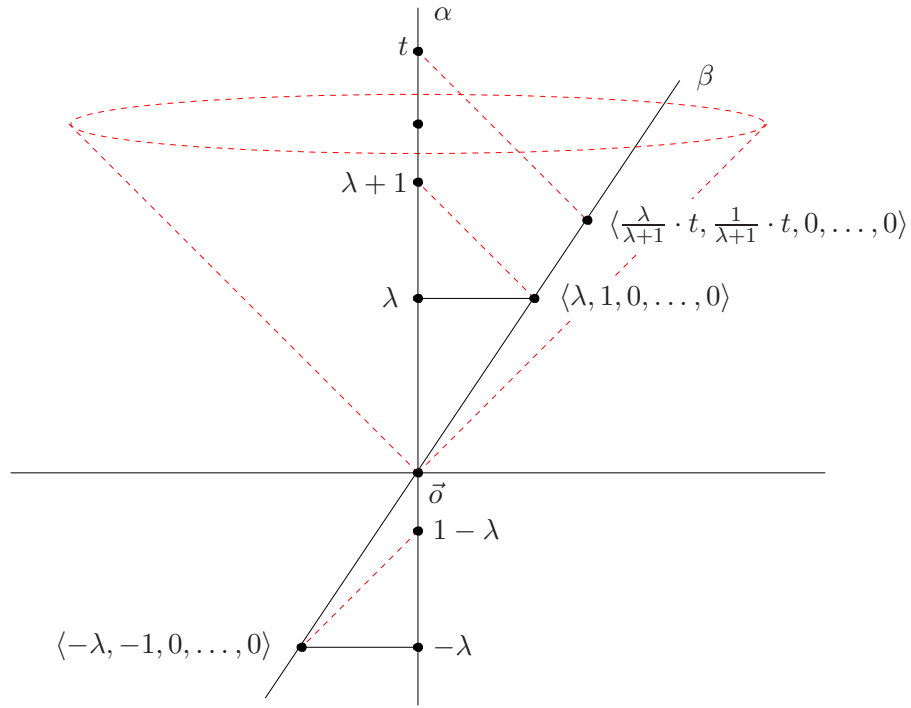}
\caption{\label{fig-nondif} Illustration of Example \ref{xmpl-nondif}}
\end{center}
\end{figure}

The following example shows that the assumption $\ran\alpha\cap\ran\beta=\emptyset$ is necessary in item 2 in Prop.~\ref{prop-ph}.
\begin{example}\label{xmpl-nondif}
Let $\lambda\in\Q$ such that $1<\lambda$ or $\lambda<-1$.  Let
timelike curves $\alpha$ and $\beta$ be defined as $\alpha(t)=\langle
t,0,\ldots, 0\rangle$ and $\beta(t)=\langle \lambda \cdot t, t, 0,
\ldots, 0\rangle$ for all $t\in\Q$.  Then the photon reparametrization
of $\beta$ according to $\alpha$ is:
\begin{equation*}
\beta_*=\left\{
\begin{array}{ll}
\langle\frac{\lambda}{\lambda+1}\cdot t, \frac{1}{\lambda+1}\cdot t, 0, \ldots, 0 \rangle &t\ge0\\
\langle\frac{\lambda}{\lambda-1}\cdot t, \frac{1}{\lambda-1}\cdot t, 0, \ldots, 0 \rangle &t\le 0,
\end{array}
\right.
\end{equation*}
see Fig.~\ref{fig-nondif}.  Therefore, $\beta_*$ is continuous, but
it is not differentiable at $t=0$.
\end{example}

Let $\vpp,\vqq\in \txPlane$.
Then the \df{photon sum}\index{photon sum} of $\vpp$ and $\vqq$, in symbols $\vpp\Df{\phsum}\vqq$\index{$\phsum$}, is the intersection of the two photon lines $\setopen \vpp+\langle A,A,0,\ldots, 0\rangle:A\in\Q \setclose$ and $\setopen \vqq+\langle B,-B,0,\ldots, 0\rangle:B \in\Q \setclose$.

\begin{figure}[h!btp]
\small
\begin{center}
\psfrag{ph1}[l][l]{$ph_1$}
\psfrag{ph2}[r][r]{$ph_2$}
\psfrag{p2}[r][r]{$\left\langle \frac{p_\tau-p_2}{2}, -\frac{p_\tau-p_2}{2}, \vo\,\right\rangle$}
\psfrag{p}[br][br]{$\langle p_\tau, p_2,\vo\,\rangle=\vpp$}
\psfrag{q1}[l][l]{$\left\langle\frac{q_\tau+q_2}{2}, \frac{q_\tau+q_2}{2}, \vo\,\right\rangle$}
\psfrag{q}[bl][bl]{$\vqq=\langle q_\tau, q_2,\vo\,\rangle$}
\psfrag{qlp}[l][l]{$\vqq\phsum\vpp$}
\psfrag{plq}[l][l]{$\vpp\phsum\vqq$}
\psfrag{t}[l][l]{$t$}
\psfrag{x}[l][l]{$x$}
\psfrag{pt}[bl][bl]{$p_\tau$}
\psfrag{ps}[t][t]{$p_\sigma$}
\psfrag{qt}[tr][tr]{$q_\tau$}
\psfrag{qs}[t][t]{$\vq_\sigma$}
\includegraphics[keepaspectratio, width=0.8\textwidth]{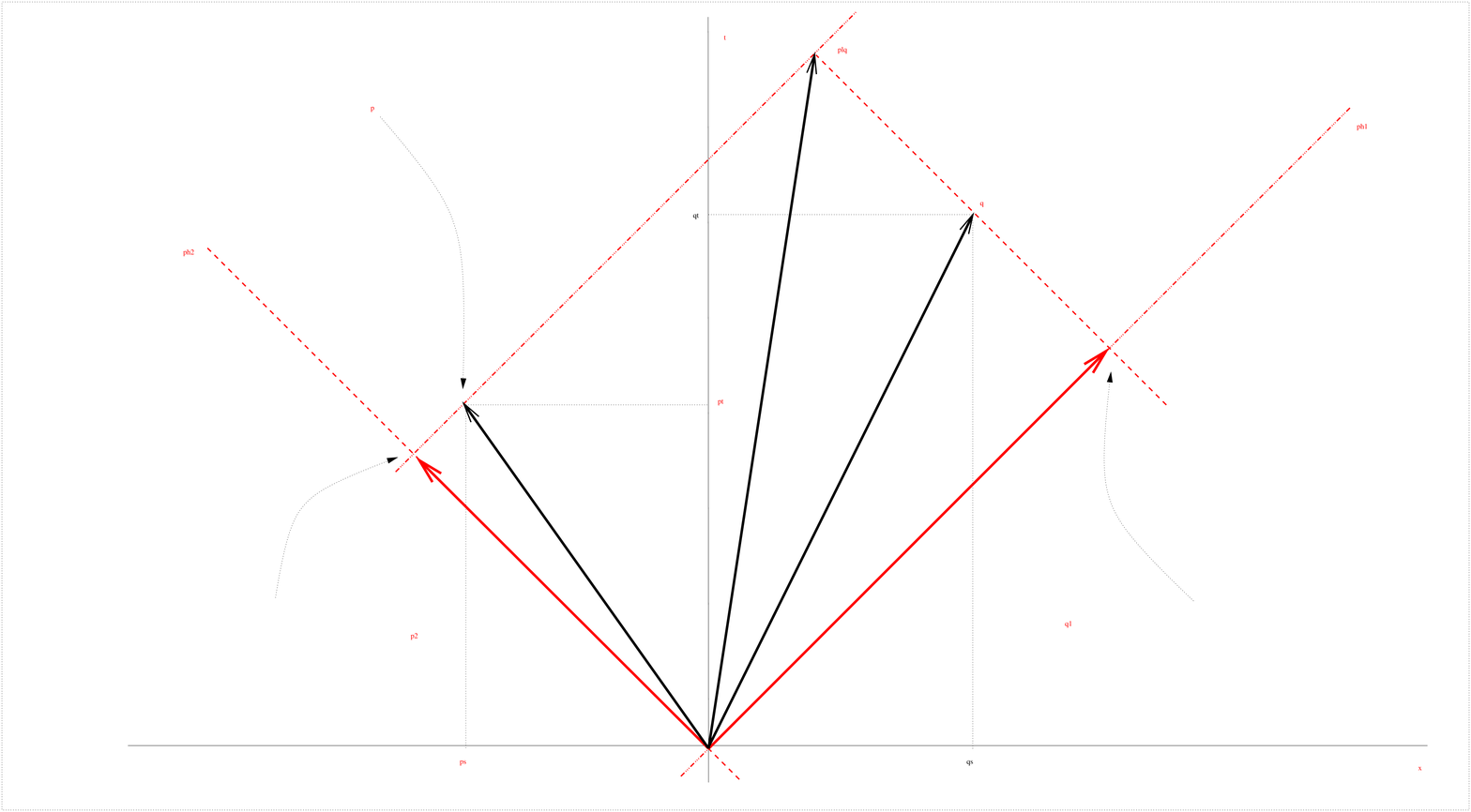}
\caption{\label{figphcrd} Illustration of the photon sum
  $\vpp\phsum\vqq$, and for the proof of Lem.~\ref{lem-phsum}}
\end{center}
\end{figure}

\begin{lem}
\label{lem-phsum} 
Let $\vpp,\vqq\in \txPlane$, and let $a=\frac{q_\tau+q_2}{2}$ and $b=\frac{p_\tau-p_2}{2}$.
Then $\vpp\phsum\vqq=\langle a+b,a-b,0,\ldots,0\rangle$.
\end{lem}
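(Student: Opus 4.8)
\textbf{Proof plan for Lemma \ref{lem-phsum}.}

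The plan is to compute the intersection of the two photon lines directly, using only the definition of the photon sum and the coordinate conventions. By definition, $\vpp\phsum\vqq$ is the unique point lying on both $\ell_1\leteq\setopen \vpp+\langle A,A,0,\ldots,0\rangle : A\in\Q\setclose$ and $\ell_2\leteq\setopen \vqq+\langle B,-B,0,\ldots,0\rangle : B\in\Q\setclose$. First I would write a generic point of $\ell_1$ as $\langle p_\tau+A,\ p_2+A,\ 0,\ldots,0\rangle$ and a generic point of $\ell_2$ as $\langle q_\tau+B,\ q_2-B,\ 0,\ldots,0\rangle$, and set the time and $x$ coordinates equal. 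This gives the linear system $p_\tau+A=q_\tau+B$ and $p_2+A=q_2-B$ in the unknowns $A,B$.

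Next I would solve this system. Subtracting the two equations eliminates $A$ and yields $p_\tau-p_2 = q_\tau-q_2+2B$, so $B=\frac{(p_\tau-p_2)-(q_\tau-q_2)}{2}$; adding them (after substituting) or solving the first equation for $A$ gives $A=q_\tau-p_\tau+B=\frac{(q_\tau+q_2)-(p_\tau+p_2)}{2}$. Plugging $A$ back into the $\ell_1$-parametrization, the time coordinate of the intersection point is
\begin{equation*}
p_\tau+A=p_\tau+\frac{(q_\tau+q_2)-(p_\tau+p_2)}{2}=\frac{(q_\tau+q_2)+(p_\tau-p_2)}{2}=a+b,
\end{equation*}
and the $x$ coordinate is
\begin{equation*}
p_2+A=p_2+\frac{(q_\tau+q_2)-(p_\tau+p_2)}{2}=\frac{(q_\tau+q_2)-(p_\tau-p_2)}{2}=a-b,
\end{equation*}
with all remaining coordinates $0$ since both $\vpp,\vqq\in\txPlane$ and the direction vectors $\langle A,A,0,\ldots,0\rangle$, $\langle B,-B,0,\ldots,0\rangle$ have vanishing components $3,\ldots,d$. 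Hence $\vpp\phsum\vqq=\langle a+b,\ a-b,\ 0,\ldots,0\rangle$, as claimed.

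This is an entirely routine computation, so there is no real obstacle; the only point worth a sentence of care is well-definedness — that the two photon lines actually meet in exactly one point. Since $\ell_1$ has direction slope $+1$ and $\ell_2$ has direction slope $-1$ in the $tx$-plane, they are non-parallel lines in a $2$-dimensional affine plane (the $\txPlane$), so they intersect in a unique point; the explicit solution for $A,B$ above also exhibits existence and uniqueness. I would close by remarking that this justifies the notation $\vpp\phsum\vqq$ and records the formula in a form convenient for its later use (e.g.\ in verifying that photon sums of timelike vectors are timelike, as invoked in the proof of Theorem \ref{thm-rad}).
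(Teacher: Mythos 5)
Your computation is correct and is exactly what the paper means by its one-line remark that the lemma "is straightforward by the respective definitions" (the paper offers only a figure in place of the algebra you carry out). You have simply written out the routine linear system the paper leaves implicit, including the uniqueness remark, so there is nothing to compare or correct.
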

\begin{proof}
The proof is straightforward by the respective definitions, see Fig.~\ref{figphcrd}.
\end{proof}

\begin{lem}
\label{lem-tlinvcont}
Assume \ax{CONT}.
Let $\beta$ be a definable timelike curve.
Then $\beta^{-1}:\ran\beta\rightarrow \dom\beta$ is definable, injective and continuous.
\end{lem}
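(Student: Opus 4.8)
The statement to prove is Lemma~\ref{lem-tlinvcont}: assuming \ax{CONT}, for any definable timelike curve $\beta$, the inverse $\beta^{-1}:\ran\beta\rightarrow \dom\beta$ is definable, injective and continuous.

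The plan is to reduce everything to the one-dimensional picture via the time-component function $\beta_\tau$, for which the continuity-schema machinery developed earlier in this chapter is directly available. First I would note that definability of $\beta^{-1}$ is immediate: if $\phi_\beta(x,\vyy)$ defines $\beta$ in $\L$, then $\beta^{-1}$ is defined by the formula $\phi_\beta(y,\vxx)$ (the same relation with the roles of input and output swapped), so $\beta^{-1}\subseteq\Q^d\times\Q$ is $\L$-definable. Injectivity of $\beta^{-1}$ as a relation means $\beta$ is a function, which holds by hypothesis; what we really want is that $\beta^{-1}$ is itself a function, i.e.\ that $\beta$ is injective. For this I would invoke Lem.~\ref{lem-tlnice}: since $\beta$ is a timelike curve, $\beta_\tau$ is a nice map, and since $\beta$ is definable, $\beta_\tau=\beta\circ\pi_1$ is a definable nice map (using that coordinate functions of definable functions are definable). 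By Lem.~\ref{lem-inj}, a definable nice map is injective and monotonic; hence $\beta_\tau$ is injective, and therefore $\beta$ is injective. So $\beta^{-1}$ is a well-defined function and it is also injective as a relation.

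For continuity, the key step is to express $\beta^{-1}$ through $\beta_\tau^{-1}$. Since $\beta$ is injective with $\ran\beta\subseteq\Q^d$, for $\vpp\in\ran\beta$ we have $\beta^{-1}(\vpp)=\beta_\tau^{-1}(p_\tau)=\beta_\tau^{-1}\circ\pi_1(\vpp)$, because $\beta(x)=\vpp$ forces $\beta_\tau(x)=p_\tau$ and, conversely, $\beta_\tau$ being injective forces the $x$ with $\beta_\tau(x)=p_\tau$ to be the unique preimage. Thus $\beta^{-1}=\pi_1\big|_{\ran\beta}\circ\,\beta_\tau^{-1}$. Now $\beta_\tau$ is a definable, injective, continuous function of one variable with connected domain (continuity of $\beta_\tau$ follows because $\beta$ is differentiable, hence continuous by the corollary to Thm.~\ref{thm-difflip}, and composition with the linear $\pi_1$ preserves continuity; connectedness of $\dom\beta_\tau=\dom\beta$ holds since $\beta$ is a curve). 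By Lem.~\ref{lem-injcont}(2), $\beta_\tau^{-1}$ is then a definable, monotonic, continuous function. Finally $\beta^{-1}=\pi_1\circ\beta_\tau^{-1}$ restricted appropriately: it is the composition of the continuous $\pi_1$ (on $\ran\beta$) with the continuous $\beta_\tau^{-1}$, so $\beta^{-1}$ is continuous; and it is definable and injective by the earlier steps.

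I expect the routine wrapping to be harmless — the only point requiring a little care is the identity $\beta^{-1}=\pi_1\big|_{\ran\beta}\circ\beta_\tau^{-1}$, i.e.\ verifying that the map $\vpp\mapsto\beta_\tau^{-1}(p_\tau)$ really does land in $\dom\beta$ and inverts $\beta$. This is where we genuinely use that $\beta_\tau$ is injective (so $p_\tau$ determines $x$) together with the fact that $\beta$ is a function of $x$ (so $x$ determines $\vpp$); once both injectivities are in hand, the composition identity is a one-line check. No continuity-schema argument beyond Lems.~\ref{lem-tlnice}, \ref{lem-inj}, \ref{lem-injcont} and the differentiability-implies-continuity corollary is needed, and all of these are already available in the excerpt.
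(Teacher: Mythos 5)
Your proof is correct, but it takes a different route from the paper's for the continuity part. The paper argues directly in $\Q^d$: by Lem.~\ref{lem-chord} one has $t\in(t_0-\varepsilon,t_0+\varepsilon)$ iff $\beta(t)\in\llangle \beta(t_0-\varepsilon),\beta(t_0+\varepsilon)\rrangle$, and since this chronological interval is an open set one extracts a suitable $\delta$; so the paper leans on the timelike-chord geometry (ultimately Prop.~\ref{propstl} and the light-cone machinery). You instead factor the inverse through the time coordinate, $\beta^{-1}(\vpp)=\beta_\tau^{-1}(p_\tau)$, reduce to the one-variable situation, and quote Lem.~\ref{lem-injcont}(2) for the continuity of $\beta_\tau^{-1}$; the only geometric input you need is that $\beta_\tau$ is a definable nice map (Lems.~\ref{lem-tlnice} and \ref{lem-inj}), which both proofs already use for injectivity. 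Your reduction is arguably more elementary, avoiding chords and chronological intervals altogether, and the key identity $\beta^{-1}=\pi_1\big|_{\ran\beta}\circ\beta_\tau^{-1}$ is verified correctly (it needs exactly the injectivity of $\beta_\tau$, which you have); the paper's argument has the advantage that the interval characterization $\llangle\beta(x),\beta(y)\rrangle$ is reused verbatim in the subsequent reparametrization lemmas, so it keeps the toolkit uniform. Both routes rest on \ax{CONT} only through the already-established lemmas, so no new schema instance is needed in either case.
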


\begin{proof}
It is clear that $\beta^{-1}$ is definable and injective.

Since by Lems.\ \ref{lem-tlnice} and \ref{lem-inj} $\beta$ is
injective, $\beta^{-1}$ is a function from $\ran\beta$ to $\dom\beta$.
To prove that it is also continuous, let $t_0\in\dom\beta$.  We have
to show that for all $\varepsilon \in \Q^+$, there is a $\delta\in\Q^+$
such that if $t\in \dom\beta$ and $|\beta(t)-\beta(t_0)|<\delta$, then
$|t-t_0|<\varepsilon$.  By Lem.~\ref{lem-chord},
$t\in(t_0-\varepsilon,t_0+\varepsilon)$ iff $\beta(t)\in\llangle
\beta(t_0-\varepsilon),\beta(t_0+\varepsilon)\rrangle$.  Thus, since
$\llangle \beta(t_0-\varepsilon),\beta(t_0+\varepsilon)\rrangle$ is an
open set, there is a good $\delta$.
\end{proof}

\begin{lem}
\label{lem-repar}
Assume \ax{CONT}.
Let $\beta$ be a definable timelike curve and $\beta_*$ a definable continuous curve such that $\ran \beta_*\subseteq \ran\beta$, and let $f\leteq \beta_*\circ\beta^{-1}$.

\begin{enumerate}
\item Then $f$ is a definable and continuous function.
\item If $\beta_*$ is injective, $f$ is also injective.
Moreover, $\dom f$ and $\ran f$ are connected and $f^{-1}$ is also a definable, monotonic and continuous function.
\item If $\beta_*$ is differentiable such that $\beta_*'(t)\neq \vo$ for all $t\in\dom\beta$, then $f$ is injective and differentiable, and $f'(t)\neq0$.
Hence $f^{-1}$ is also a differentiable function.
\end{enumerate}
\end{lem}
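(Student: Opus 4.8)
The plan is to reduce everything to the scalar analysis already developed in this chapter, via the identity $\beta_*=f\circ\beta$. First I would record the setup: since $\beta$ is a definable timelike curve, its time component $\beta_\tau$ is a definable nice map by Lemma~\ref{lem-tlnice}, hence injective and monotonic by Lemma~\ref{lem-inj}; in particular $\beta$ is injective, so $\beta^{-1}:\ran\beta\rightarrow\dom\beta$ is a well-defined function, which is definable, injective and continuous by Lemma~\ref{lem-tlinvcont}. Because $\ran\beta_*\subseteq\ran\beta$, the function $f=\beta_*\circ\beta^{-1}$ (that is, $f(x)=\beta^{-1}\big(\beta_*(x)\big)$) has domain exactly $\dom\beta_*$ and satisfies
\begin{equation*}
\beta_*(x)=\beta\big(f(x)\big),\qquad\text{i.e.}\qquad \beta_*=f\circ\beta,
\end{equation*}
for all $x\in\dom f$. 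Item~(1) is then immediate: $f$ is a composition of definable functions, hence definable, and a composition of continuous functions ($\beta_*$ by hypothesis, $\beta^{-1}$ by Lemma~\ref{lem-tlinvcont}), hence continuous.

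For item~(2) I would assume $\beta_*$ injective; then $f$, being a composition of the injective maps $\beta^{-1}$ and $\beta_*$, is injective. Now $f:\Q\parrow\Q$ is definable and continuous by item~(1), and $\dom f=\dom\beta_*$ is connected since $\beta_*$ is a curve, so Lemma~\ref{lem-injcont} applies directly: part~(1) gives that $\ran f$ is connected, and — $f$ being injective — part~(2) gives that $f$ is monotonic and that $f^{-1}$ is definable, monotonic and continuous.

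For item~(3) I would assume $\beta_*$ differentiable with $\beta_*'$ nowhere zero. The one step needing care is differentiability of $f$: one cannot chain-rule through $\beta^{-1}$ directly, since $\ran\beta\subseteq\Q^d$ is not an interval of $\Q$, so instead I pass to time components. Composing $\beta_*=f\circ\beta$ with the projection $\pi_\tau$ gives $(\beta_*)_\tau=f\circ\beta_\tau$, i.e.\ $f(x)=\beta_\tau^{-1}\big((\beta_*)_\tau(x)\big)$, which is legitimate because $\beta_\tau$ is injective and $\ran (\beta_*)_\tau\subseteq\ran\beta_\tau$. Here $(\beta_*)_\tau$ is differentiable by Prop.~\ref{propAff2}, while $\beta_\tau$ is differentiable, monotonic on its (interval) domain and has $\beta_\tau'$ nowhere zero (nice map), so $\beta_\tau^{-1}$ is differentiable at each point $\beta_\tau(t)$ by (v) of Prop.~\ref{propDiff}; hence $f$ is differentiable by Chain Rule. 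Next, applying Chain Rule to $\beta_*=f\circ\beta$ (valid since $\beta$, being timelike, is differentiable) gives
\begin{equation*}
\beta_*'(x)=f'(x)\cdot\beta'\big(f(x)\big)\qquad\text{for all }x\in\dom f;
\end{equation*}
since $\beta'\big(f(x)\big)$ is timelike, hence nonzero, while $\beta_*'(x)\neq\vo$ by hypothesis, this forces $f'(x)\neq0$. Injectivity of $f$ then follows from \ax{CONT}-Rolle's Theorem (if $f(a)=f(b)$ with $a\neq b$, there would be a point with $f'=0$, applicable since $f$ is definable, continuous and differentiable on the interval between $a$ and $b$). Finally, $f$ is injective, continuous and has connected domain, so it is monotonic by (2) of Lemma~\ref{lem-injcont}; being increasing or decreasing on the interval $\dom f$ and differentiable there with nowhere-zero derivative, $f^{-1}$ is differentiable by (v) of Prop.~\ref{propDiff}.

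The only genuinely nontrivial point — the hard part — is the differentiability of $f$ in item~(3): the naive approach of differentiating $\beta^{-1}$ fails, and the fix is precisely the factorization of $f$ through the scalar time coordinates using that $\beta_\tau$ is a monotone differentiable bijection with nonvanishing derivative. A secondary, purely technical wrinkle is the behavior at the endpoints of $\dom\beta$ when invoking (v) of Prop.~\ref{propDiff}; this is absorbed by the interval-aware (one-sided at endpoints) notion of derivative adopted throughout this chapter. Everything else is bookkeeping with lemmas already established.
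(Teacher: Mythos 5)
Your items (1) and (2) coincide with the paper's proof (which simply cites Lem.~\ref{lem-tlinvcont} and Lem.~\ref{lem-injcont}), but for item (3) you take a genuinely different route. The paper proves differentiability of $f$ directly: it observes that the difference quotient $\big(f(t)-f(t_0)\big)/(t-t_0)$ is the ratio of the two \emph{parallel} vectors $\big(\beta_*(t)-\beta_*(t_0)\big)/(t-t_0)$ and $\big(\beta(f(t))-\beta(f(t_0))\big)/(f(t)-f(t_0))$, whose limits are $\beta_*'(t_0)$ and $\beta'(f(t_0))$ respectively; since the latter is timelike (hence nonzero) and the former is nonzero by hypothesis, the ratio converges to a nonzero $\lambda$ with $\lambda\,\beta'(f(t_0))=\beta_*'(t_0)$, which yields $f'(t_0)\ne 0$ in the same stroke. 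You instead factor $f$ through the scalar time coordinates, $f=(\beta_*)_\tau\circ\beta_\tau^{-1}$, and invoke (v) of Prop.~\ref{propDiff} together with the Chain Rule; this is legitimate (it is exactly how the paper argues in the analogous places, e.g.\ Lem.~\ref{lem-mink}, Lem.~\ref{lem-cordmap} and Thm.~\ref{thmJeq}), and your diagnosis that one cannot chain-rule through $\beta^{-1}$ on $\ran\beta\subseteq\Q^d$ is correct. What each buys: your route reuses established machinery and makes the formula $f'=((\beta_*)_\tau)'/\beta_\tau'\circ f$ transparent, at the cost of the one-sided-derivative caveat at endpoints of $\dom\beta$ when applying Prop.~\ref{propDiff}(v); the paper's parallel-vector limit is self-contained, needs no inverse-function step, and handles endpoints automatically because the derivative is defined relative to $\dom f$. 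Your subsequent derivation of $f'\ne 0$ from $\beta_*'=f'\cdot\beta'\circ f$, injectivity via \ax{CONT}-Rolle, and differentiability of $f^{-1}$ are all sound.
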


\begin{proof}
Item (1) is clear by Lem.~\ref{lem-tlinvcont}.

Item (2) is clear by Item (1) and Lem.~\ref{lem-injcont} since $\dom f=\dom \beta_*$ which is connected.

To prove Item (3), let $t_0\in\dom f$.
Since $\ran\beta_*\subseteq\ran\beta$, we have that there is a $\lambda\in\Q$ such that $\lambda\cdot\beta'(t_0)=\beta'_*(f(t_0))$.
Since $\big(f(t)-f(t_0)\big)/(t-t_0)$ is the ratio of parallel vectors 
\begin{equation*}
\frac{\beta(t)-\beta(t_0)}{t-t_0} \quad\text{ and }\quad \frac{\beta_*\big(f(t)\big)-\beta_*\big(f(t_0)\big)}{f(t)-f(t_0)},
\end{equation*} 
we have that $\big(f(t)-f(t_0)\big)/(t-t_0)$ tends to $\beta'(t_0)/\beta'_*(f(t_0))=1/\lambda$ if $t$ tends to $t_0$.
Thus $f$ is differentiable, and $f'(t_0)=1/\lambda$.
\end{proof}

\begin{lem}
\label{lem-cordmap}
Assume \ax{CONT}.
Let $\alpha$ be a definable timelike curve.
Let $t\in\dom\alpha$ and $x=\alpha_\tau(t)$.
Let $f_\alpha\leteq \alpha^{-1}_\tau\circ\alpha_\sigma$.
\begin{enumerate}
\item Then $f_\alpha$ is a differentiable curve, and $f'_\alpha(x)=\alpha'_\sigma(t)/\alpha'_\tau(t)$.
\item If $\alpha$ is twice differentiable at $t$, then so is $f_\alpha$ at $x$, and 
\begin{equation*}f''_\alpha(x)=\frac{\alpha'_\tau(t)\alpha''_\sigma(t)-\alpha''_\tau(t)\alpha'_\sigma(t)}{\alpha'_\tau(t)^3}.\end{equation*}
\end{enumerate}
\end{lem}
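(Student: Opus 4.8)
The plan is to reduce both parts to the Chain Rule applied to the time-reparametrization map $\alpha_\tau^{-1}$. First I would collect the facts about $\alpha_\tau=\alpha\circ\pi_1$. Since $\alpha$ is a timelike curve it is differentiable with $\alpha'(s)$ timelike for every $s\in\dom\alpha$, so $\alpha_\tau$ is a definable continuous function with $\alpha_\tau'(s)=\alpha'(s)_\tau\neq 0$; hence $\alpha_\tau$ is a \emph{nice map} (Lem.~\ref{lem-tlnice}). Using \ax{CONT}: Lem.~\ref{lem-inj} makes $\alpha_\tau$ injective and monotonic; Lem.~\ref{lem-injcont} makes $\ran\alpha_\tau$ connected and $\alpha_\tau^{-1}\colon\ran\alpha_\tau\to\dom\alpha$ definable, monotonic and continuous; and Lem.~\ref{lem-nice} makes $\alpha_\tau^{-1}$ a nice map as well, so in particular it is differentiable everywhere on $\ran\alpha_\tau$ with nowhere-vanishing derivative. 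Because $\alpha_\tau^{-1}$ maps into $\dom\alpha=\dom\alpha_\sigma$, the map $f_\alpha=\alpha_\tau^{-1}\circ\alpha_\sigma$ has domain exactly $\ran\alpha_\tau$, which is connected with at least two points; so $f_\alpha$ is a curve, and $f_\alpha(y)=\alpha_\sigma(\alpha_\tau^{-1}(y))$ by the composition convention.

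For part (1), differentiating the identity $\alpha_\tau^{-1}\circ\alpha_\tau=Id$ by the Chain Rule at $x=\alpha_\tau(t)$ gives $(\alpha_\tau^{-1})'(x)=1/\alpha_\tau'(t)$. Applying the Chain Rule to $f_\alpha=\alpha_\tau^{-1}\circ\alpha_\sigma$ (inner scalar map $\alpha_\tau^{-1}$, outer map $\alpha_\sigma$ evaluated at $\alpha_\tau^{-1}(x)=t$) then gives $f_\alpha'(x)=(\alpha_\tau^{-1})'(x)\cdot\alpha'_\sigma(t)=\alpha'_\sigma(t)/\alpha'_\tau(t)$. Since this computation is valid at every point of $\ran\alpha_\tau$ (as $\alpha$ is differentiable everywhere on $\dom\alpha$), $f_\alpha$ is a differentiable curve, which finishes (1).

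For part (2), suppose $\alpha$ is twice differentiable at $t$, i.e.\ $\alpha'_\tau$ and $\alpha'_\sigma$ are differentiable at $t$ (with $\alpha'_\tau(t)\neq0$). By (1), on a neighborhood of $x$ one has $f_\alpha'=\alpha_\tau^{-1}\circ h$, where $h(z)=\alpha'_\sigma(z)/\alpha'_\tau(z)$. The quotient rule (Prop.~\ref{propDiff}) gives $h$ differentiable at $t$ with $h'(t)=\bigl(\alpha''_\sigma(t)\alpha'_\tau(t)-\alpha'_\sigma(t)\alpha''_\tau(t)\bigr)/\alpha'_\tau(t)^2$, and $(\alpha_\tau^{-1})'(x)=1/\alpha'_\tau(t)$ as above; the Chain Rule then yields
\begin{equation*}
f''_\alpha(x)=h'(t)\cdot(\alpha_\tau^{-1})'(x)=\frac{\alpha'_\tau(t)\alpha''_\sigma(t)-\alpha''_\tau(t)\alpha'_\sigma(t)}{\alpha'_\tau(t)^3},
\end{equation*}
which is the asserted formula.

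The computations are entirely routine; the care needed is clerical. One must keep the paper's composition convention $(f\circ g)(x)=g(f(x))$ straight, so that $f_\alpha=\alpha_\tau^{-1}\circ\alpha_\sigma$ really is ``reparametrize by the time coordinate, then take the spatial part,'' and one must track which derivatives are evaluated at $t$ and which at $x=\alpha_\tau(t)$. The hypothesis \ax{CONT} enters only through Lems.~\ref{lem-inj}, \ref{lem-injcont} and \ref{lem-nice}, which are exactly what license applying the Chain Rule to $\alpha_\tau^{-1}$; I do not anticipate any genuine obstacle beyond this bookkeeping.
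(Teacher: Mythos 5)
Your proof is correct and follows essentially the same route as the paper: establish that $\alpha_\tau$ is a nice, injective map with connected range (via Lems.~\ref{lem-tlnice}, \ref{lem-inj}, \ref{lem-injcont}), so that $\alpha_\tau^{-1}$ is differentiable with derivative $1/\alpha'_\tau(t)$, and then apply the Chain Rule to $f_\alpha=\alpha_\tau^{-1}\circ\alpha_\sigma$ and, for (2), to $f'_\alpha=\alpha_\tau^{-1}\circ(\alpha'_\sigma/\alpha'_\tau)$. The only difference is that you write out the quotient-rule computation that the paper dismisses as ``a straightforward calculation,'' which is fine.
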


\begin{proof}
Let us first prove Item (1). We have that $\alpha_\tau$ is injective
by Lems.\ \ref{lem-tlnice} and \ref{lem-inj}. Hence $f_\alpha$ is a
function. $\dom f_\alpha$ is connected since $\dom
f_\alpha=\ran\alpha_\tau$ and $\ran\alpha_\tau$ is connected by
Lem.~\ref{lem-injcont}. Thus $f_\alpha$ is a curve. Since $\alpha_\tau$
is an injective differentiable curve, $\alpha^{-1}_\tau$ is also such
and $(\alpha^{-1}_\tau)'(x)=1/\alpha'_\tau(t)$. Thus by Chain Rule, we
have that $f'_\alpha(x)=\alpha'_\sigma(t)/\alpha'_\tau(t)$.

Now let us prove Item (2).
If $\alpha$ is twice differentiable at $t$, then so are $\alpha_\sigma$ and $\alpha_\tau$.
By Item (1), $f'_\alpha=\alpha^{-1}_\tau\circ\alpha'_\sigma/\alpha'_\tau$.
Thus $f_\alpha$ is twice differentiable at $x$ and a straightforward calculation based on the rules of differential calculus can show that $f''_\alpha(x)$ is what was stated.
\end{proof}

\begin{lem}
\label{lem-twocurve}
Assume \ax{CONT}.
Let $\alpha$ and $\beta$ be definable timelike curves such that $\ran\alpha\cup\ran\beta$ is in a vertical plane.
Let $t_1,t_2\in\dom\alpha$ and $\bar{t}_1,\bar{t}_2\in\dom\beta$ such that $\alpha(t_1)\seq\beta(\bar{t}_1)$, $\alpha(t_2)\seq\beta(\bar{t}_2)$ and $\big(\beta(\bar{t}_1)-\alpha(t_1)\big)\upp\big(\alpha(t_2)-\beta(\bar{t}_2)\big)$.
Then there is  a $t\in(t_1,t_2)$ such that $\alpha(t)\in\ran\beta$.
Hence $Ran\alpha\cap\ran\beta\neq\emptyset$.
\end{lem}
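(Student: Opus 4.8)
The plan is to reduce this to an application of the intermediate-value-type result \ax{CONT}-order-Bolzano or, more precisely, to \eqref{item-pcone} and \eqref{item-fcone} of Lem.~\ref{lem-chord}. The geometric picture is that $\alpha$ and $\beta$ are two timelike curves confined to a common vertical plane; at parameter $t_1$ the point $\beta(\bar t_1)$ lies spacelike-separated from $\alpha(t_1)$ in some spatial direction, and at parameter $t_2$ the point $\beta(\bar t_2)$ lies spacelike-separated from $\alpha(t_2)$ in the \emph{opposite} spatial direction (that is what $\big(\beta(\bar t_1)-\alpha(t_1)\big)\upp\big(\alpha(t_2)-\beta(\bar t_2)\big)$ encodes, together with the spacelike hypotheses). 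Since both curves are timelike and live in one vertical plane, as the parameter runs from $t_1$ to $t_2$ the curve $\alpha$ must cross $\ran\beta$. First I would fix an \textit{inertial} frame in which this common vertical plane is the $\txPlane$, using \ax{SpecRel} and Thm.~\ref{thm-poi} (Poincar\'e transformations preserve timelikeness, spacelikeness, verticality of planes, and the $\upp$-relation, so nothing is lost).

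Next I would pass to the ``spatial graph'' description of the curves. By Lem.~\ref{lem-cordmap}, the maps $f_\alpha\leteq\alpha^{-1}_\tau\circ\alpha_\sigma$ and $f_\beta\leteq\beta^{-1}_\tau\circ\beta_\sigma$ are differentiable curves; since $\alpha,\beta$ are timelike with $\ran\alpha,\ran\beta\subseteq\txPlane$, these are functions $\Q\parrow\Q$ (one spatial coordinate as a function of time), and $\alpha_\tau$, $\beta_\tau$ are monotonic by Lems.~\ref{lem-tlnice} and \ref{lem-inj}. Without loss of generality assume $\alpha_\tau$ and $\beta_\tau$ are increasing and that $t_1<t_2$. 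The hypothesis $\alpha(t_i)\seq\beta(\bar t_i)$ together with timelikeness of $\alpha$ and $\beta$ forces the time-slabs over $[t_1,t_2]$ and $[\bar t_1,\bar t_2]$ to overlap on an interval $[a,b]$ of time values with $a<b$ (this uses that a spacelike separation has strictly smaller time-difference than space-difference, while each curve's own chord between $t_i$ has time-difference at least its space-difference, by Item~(i) of Lem.~\ref{lemWp}; this is the bookkeeping step). On $[a,b]$ define $g\leteq f_\alpha-f_\beta$, which is \L-definable and continuous. The $\upp$-hypothesis says $f_\alpha-f_\beta$ takes values of opposite sign (strictly) at the two relevant endpoints of $[a,b]$ — one endpoint comes from the slice at $\alpha(t_1)\seq\beta(\bar t_1)$, the other from $\alpha(t_2)\seq\beta(\bar t_2)$. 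Then by \ax{CONT}-Bolzano's Theorem there is a time $x\in(a,b)$ with $g(x)=0$, i.e.\ $f_\alpha(x)=f_\beta(x)$, meaning $\alpha$ and $\beta$ occupy the same $\txPlane$-point at time $x$. Setting $t\leteq\alpha^{-1}_\tau(x)$ gives $\alpha(t)\in\ran\beta$, and one checks $t\in(t_1,t_2)$ from monotonicity of $\alpha_\tau$ and $a<x<b$.

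An equally clean alternative, which I might prefer for the write-up, avoids the slab bookkeeping: apply \eqref{item-pcone} or \eqref{item-fcone} of Lem.~\ref{lem-chord} directly. Indeed, $\alpha(t_1)\seq\beta(\bar t_1)$ and the direction hypothesis put $\alpha(t_1)$ outside the chronological past and future of $\beta(\bar t_1)$ while $\alpha(t_2)$ ends up on the other side; more carefully, one shows $\alpha(t_1)$ and $\alpha(t_2)$ lie on opposite sides of $\ran\beta$ inside the plane, in the sense that the segment of $\ran\alpha$ from $\alpha(t_1)$ to $\alpha(t_2)$ must meet $\ran\beta$. Since $\beta$ is a timelike curve and $\alpha$ is continuous, the set of times $t$ where $\alpha(t)$ is ``on the $\beta(\bar t_1)$-side'' is open, as is its complement relative to the ``$\beta(\bar t_2)$-side'', and both are nonempty; \ax{CONT} then furnishes a boundary point $t$, at which $\alpha(t)\in\ran\beta$.

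\textbf{Main obstacle.} The routine field-manipulations (reducing to $\txPlane$, checking $t\in(t_1,t_2)$) are harmless; the real content is verifying that the two spacelike separations \emph{in opposite directions} genuinely force a crossing — equivalently, that $g=f_\alpha-f_\beta$ really changes sign on the overlap interval. The subtlety is that ``$\upp$'' only constrains spatial directions up to positive scaling, so one must combine it with the timelikeness of both curves (each curve's spatial coordinate changes slower than its time coordinate) to rule out the degenerate possibility that the curves' time-slabs fail to overlap enough, or that one curve ``outruns'' the other spatially before the crossing time is reached. Pinning down this overlap-and-sign-change argument cleanly — probably by quoting Lem.~\ref{lemWp}(i) for the slope bound and then an elementary inequality — is where the care is needed; once it is in place, \ax{CONT}-Bolzano's Theorem closes the argument and the final sentence $\ran\alpha\cap\ran\beta\neq\emptyset$ is immediate.
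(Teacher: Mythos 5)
Your proposal is correct and matches the paper's own argument: the paper likewise reduces to the $tx$-plane, passes to the graphs $f_\alpha=\alpha_\tau^{-1}\circ\alpha_\sigma$ and $f_\beta=\beta_\tau^{-1}\circ\beta_\sigma$ via Lem.~\ref{lem-cordmap}, uses the $\upp$-hypothesis to get a sign change of $f_\alpha-f_\beta$, and closes with \ax{CONT}-Bolzano's Theorem. The ``overlap'' issue you flag as the main obstacle is exactly the step the paper handles by first adjusting $\bar t_1,\bar t_2$ (again via \ax{CONT}-Bolzano) so that $\alpha(t_i)_\tau=\beta(\bar t_i)_\tau$, which your slope-bound argument from Lem.~\ref{lemWp}(i) justifies.
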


\begin{proof}
Since $\ran\alpha\cup\ran\beta$ is in a vertical plane, we can assume,
without losing generality, that $d=2$. By \ax{CONT}-Bolzano's Theorem,
we can also assume that $\alpha(t_1)_\tau=\beta(\bar{t}_1)_\tau$ and
$\alpha(t_2)_\tau=\beta(\bar{t}_2)_\tau$. Let $x_1=\alpha(t_1)_\tau$
and $x_2=\alpha(t_2)_\tau$. Let $f_\alpha\leteq
\alpha_\tau^{-1}\circ\alpha_\sigma$ and $f_\beta\leteq
\beta_\tau^{-1}\circ\beta_\sigma$. Then $f_\alpha$ and $f_\beta$ are
continuous curves, see Lem.~\ref{lem-cordmap}. By the assumption
$\big(\beta(\bar{t}_1)-\alpha(t_1)\big)\upp\big(\alpha(t_2)-\beta(\bar{t}_2)\big)$,
we have that
$\big(f_\beta(x_1)-f_\alpha(x_1)\big)\big(f_\alpha(x_2)-f_\beta(x_2)\big)<0$.
Thus by \ax{CONT}-Bolzano's Theorem, there is an $x\in(x_1,x_2)$ such
that $f_\alpha(x)=f_\beta(x)$. Let $t\leteq \alpha_\tau^{-1}(x)$.
Then $\alpha(t)\in\ran\beta$.
\end{proof}

Let $\alpha$ and $\beta$ be timelike curves.
We say that $\beta_*$ is the \df{radar reparametrization of $\beta$ according to $\alpha$}\index{radar reparametrization} if 
\begin{equation*}
\beta_*=\setopen \langle t,\vpp\rangle\in\dom\alpha\times\ran\beta \setmid\exists r\in\Q \quad \vpp\in\Lambda^-_{\alpha(t+r)}\cap\Lambda^+_{\alpha(t-r)}\setclose.
\end{equation*}
We say that $\beta$ is at constant radar distance $r$ from $\alpha$ iff 
\begin{equation*}
\ran\beta\subseteq \bigcup_{t\pm r\in \dom\alpha}\Lambda^-_{\alpha(t+r)}\cap\Lambda^+_{\alpha(t-r)}.
\end{equation*}
Let us note that this $r$ can be negative if $\alpha_\tau$ is
decreasing since by this definition $\alpha(t-r)\ll\alpha(t+r)$.

\begin{prop}
\label{prop-rad}
Assume \ax{CONT}.
Let $\alpha$ and $\beta$ be definable timelike curves.
Let $\beta_*$ be the radar reparametrization of $\beta$ according to $\alpha$.
\begin{enumerate} 
\item\label{item-radcont} Then $\beta_*$ is a definable, injective, and continuous curve.
\item\label{item-raddiff} If $\ran \alpha\cup \ran \beta$ is in a
  vertical plane, and $\beta$ is at constant radar distance $r$ from
  $\alpha$, then $\beta_*$ is differentiable.
\item\label{item-radder} Let us further assume that this vertical plane is the $\txPlane$.
Then 
\begin{itemize}
\item[] $\beta'_*(t)=\alpha'(t-r)\phsum\alpha'(t+r)\enskip\text{ iff }\enskip \big(\beta_*(t)-\alpha(t)\big)\upp\phantom{-}\vex$, 
\item[] $\beta'_*(t)=\alpha'(t+r)\phsum\alpha'(t-r)\enskip\text{ iff }\enskip \big(\beta_*(t)-\alpha(t)\big)\upp -\vex$.
\end{itemize}
\end{enumerate}
\end{prop}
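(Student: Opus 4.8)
The plan is to reduce Proposition~\ref{prop-rad} to the corresponding facts about the photon reparametrization (Prop.~\ref{prop-ph}), since a radar simultaneity event is built out of two photon events: the point $\beta_*(t)$ on $\ran\beta$ is the unique point lying on the past light cone of $\alpha(t+r)$ and the future light cone of $\alpha(t-r)$. Concretely, I would first observe that the radar reparametrization $\beta_*$ can be written as a composition of two photon-type constructions. Let $\alpha^\uparrow(t)\leteq\alpha(t+r)$ and $\alpha^\downarrow(t)\leteq\alpha(t-r)$; these are again definable timelike curves (shifts of $\alpha$ in the parameter, so their ranges are sub-curves of $\ran\alpha$), and $\beta_*(t)$ is characterized by $\beta_*(t)\in\Lambda^-_{\alpha^\uparrow(t)}\cap\Lambda^+_{\alpha^\downarrow(t)}\cap\ran\beta$. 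Uniqueness of this intersection point on a timelike curve follows from Lem.~\ref{lem-cau} and Lem.~\ref{lem-chord} exactly as in the proof of Prop.~\ref{prop-ph}: two distinct points of $\ran\beta$ in the same past (or future) light cone would give a lightlike chord of $\beta$, contradicting Item~\eqref{item-chord} of Lem.~\ref{lem-chord}. That settles that $\beta_*$ is a well-defined function.

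For Item~\eqref{item-radcont} (definability, injectivity, continuity, and connectedness of the domain) I would run the same betweenness-preservation argument as in Prop.~\ref{prop-ph}: show $t\in(t_1,t_2)\iff \alpha(t)\in\llangle\alpha(t_1),\alpha(t_2)\rrangle$, and then transport this through the two light-cone relations to conclude $\beta_*(t)\in\llangle\beta_*(t_1),\beta_*(t_2)\rrangle$, using that both $t\mapsto t+r$ and $t\mapsto t-r$ are monotone and that a past light cone of a later point contains the past light cone of an earlier timelike-related point. Connectedness of $\dom\beta_*$ is obtained by the intermediate-value argument of Prop.~\ref{prop-ph}, applying Items~\eqref{item-pcone} and \eqref{item-fcone} of Lem.~\ref{lem-chord} to find, for any $z$ between two elements of $\dom\beta_*$, a point of $\ran\beta$ on both $\Lambda^-_{\alpha^\uparrow(z)}$ and $\Lambda^+_{\alpha^\downarrow(z)}$. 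Continuity of the reparametrization map $f\leteq\beta_*\circ\beta^{-1}$ follows from monotonicity plus connectedness of $\ran f$ via Lem.~\ref{lem-moncont} (and Lem.~\ref{lem-tlinvcont} for continuity of $\beta^{-1}$), whence $\beta_*=f\circ\beta$ is continuous and injective.

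For Items~\eqref{item-raddiff} and \eqref{item-radder} I would work in the vertical plane, assumed WLOG to be the $\txPlane$ (using that $d\ge 2$ suffices here and the extra coordinates are constant), and reduce to a ``two photon sums'' computation. The key picture: $\beta_*(t)$ lies at the intersection of the photon line through $\alpha(t+r)$ going down-left (or down-right) and the photon line through $\alpha(t-r)$ going up-right (or up-left), with the choice dictated by which side of $\alpha$ the curve $\beta$ sits on, i.e.\ by whether $\big(\beta_*(t)-\alpha(t)\big)\upp\vex$ or $\upp -\vex$. Since $\alpha$ is differentiable, I would estimate $\big(\alpha(t\pm r)-\alpha(t_0\pm r)\big)/(t-t_0)\to\alpha'(t_0\pm r)$, push these through the two photon-line intersections (each of which is an affine operation, by Lem.~\ref{lem-phsum}, once we fix which two null directions are used), and thereby show $\big(\beta_*(t)-\beta_*(t_0)\big)/(t-t_0)$ tends to the intersection of the corresponding tangent null lines, which is precisely $\alpha'(t_0-r)\phsum\alpha'(t_0+r)$ in the first case and $\alpha'(t_0+r)\phsum\alpha'(t_0-r)$ in the second; this is a cone-containment estimate of exactly the shape of equation~\eqref{eq-inball} and \eqref{eq-incone} in the proof of Prop.~\ref{prop-ph}. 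The hypothesis that $\beta$ is at constant radar distance $r$ guarantees $t\pm r\in\dom\alpha$ on the relevant interval, so the shifted curves are defined where needed.

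\textbf{Main obstacle.} The bookkeeping in Item~\eqref{item-radder} will be the delicate part: one must correctly pair each of the two light cones $\Lambda^-_{\alpha(t+r)}$, $\Lambda^+_{\alpha(t-r)}$ with the right null direction ($\langle A,A,0,\ldots\rangle$ vs.\ $\langle B,-B,0,\ldots\rangle$) and verify that the photon sum $\phsum$ is the (affine, hence limit-commuting) operation extracting the intersection, with the ``$\upp\vex$ vs.\ $\upp -\vex$'' dichotomy governing the order of the two arguments of $\phsum$. Getting the orientation conventions consistent with Lem.~\ref{lem-phsum} (where the formula $\vpp\phsum\vqq=\langle a+b,a-b,0,\ldots,0\rangle$ with $a=\tfrac{q_\tau+q_2}{2}$, $b=\tfrac{p_\tau-p_2}{2}$ is manifestly non-symmetric) is where an error could creep in, so I would verify the two cases separately against Fig.~\ref{figphcrd}. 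Everything else is a near-mechanical adaptation of the photon-reparametrization proof, with the single photon cone replaced by an intersection of two.
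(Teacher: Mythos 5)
There is a genuine gap in your treatment of Item~\eqref{item-radcont}, at the very first step: showing that $\beta_*$ is a function. Your whole setup fixes a radar distance $r$ in advance (the shifted curves $\alpha^\uparrow(t)=\alpha(t+r)$, $\alpha^\downarrow(t)=\alpha(t-r)$), but in the definition of the radar reparametrization $r$ is \emph{existentially quantified}: $\langle t,\vpp\rangle\in\beta_*$ iff there is \emph{some} $r$ with $\vpp\in\Lambda^-_{\alpha(t+r)}\cap\Lambda^+_{\alpha(t-r)}$. So the uniqueness you must establish is not ``two points of $\ran\beta$ on the same light cone would give a lightlike chord'' (that only rules out two points at the same radar distance), but rather that one cannot have $\vpp$ at radar distance $r$ and $\vqq\neq\vpp$ at radar distance $s\neq r$ from the same parameter value $t$. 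The paper's proof handles exactly this case: assuming $0\le r\le s$ and $\vpp\neq\vqq$, it uses the nesting/disjointness facts $I^-_{\vpp}\subset I^-_{\alpha(t+r)}$, $I^-_{\alpha(t+r)}\cap\Lambda^-_{\alpha(t+s)}=\emptyset$ (and the dual statements for future cones) from Lem.~\ref{lem-cau} to conclude $\vqq\notin I^-_{\vpp}\cup I^+_{\vpp}$, which contradicts $\vqq\in I_{\vpp}$ from Item~\eqref{item-chord} of Lem.~\ref{lem-chord}. Without this step your $\beta_*$ is not even known to be a function in Item~\eqref{item-radcont}, where no constant-distance hypothesis is available. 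Note also that Item~\eqref{item-raddiff} needs the degenerate case $r=0$ treated separately (there $\ran\alpha\cap\ran\beta\neq\emptyset$, so the disjoint-range hypothesis of Item~\eqref{item-phder} of Prop.~\ref{prop-ph} fails, and $\beta_*$ is simply a restriction of $\alpha$); your difference-quotient plan silently assumes the two photon lines through $\alpha(t\pm r)$ and $\beta_*(t)$ are nondegenerate.

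Beyond that gap, your route is more laborious than the paper's but not wrong in spirit. The paper's structural device, which you do not use, is the identity $f^{-1}(\tilde t)=\bigl(g^{-1}(\tilde t)+h(\tilde t)\bigr)/2$, where $f\leteq\beta_*\circ\beta^{-1}$ is the radar reparametrization map and $g$, $h$ are the photon reparametrization maps of $\beta$ according to $\alpha$ and of $\alpha$ according to $\beta$. This averaging formula delivers continuity, connectedness of the domain, and (for $r\neq0$) differentiability of $f$ immediately from the corresponding properties of $g^{-1}$ and $h$ already proved in Prop.~\ref{prop-ph} and Lem.~\ref{lem-repar}, with no new cone estimates. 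Your plan to redo the $\varepsilon$--$\delta$ cone containments for the intersection of two photon lines can be made to work, but it duplicates the hardest part of Prop.~\ref{prop-ph}. For Item~\eqref{item-radder} your plan matches the paper's: identify $\beta_*(t)$ with $\alpha(t-r)\phsum\alpha(t+r)$ or its reverse according to the sign of the spatial direction of $\beta_*(t)-\alpha(t)$, invoke Lem.~\ref{lem-twocurve} to see that this sign cannot change along the curve, and differentiate the affine formula of Lem.~\ref{lem-phsum}; you correctly flag the orientation bookkeeping as the only delicate point there.
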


\begin{proof}
It is clear that $\beta_*$ is definable.  Without losing generality,
we can assume that $\alpha_\tau$ is increasing, see Lems.\
\ref{lem-tlnice} and \ref{lem-inj}.

\begin{figure}[h!btp]
\small
\begin{center}
\psfrag{df}[r][r]{$\dom f$}
\psfrag{t}[l][l]{$t$}
\psfrag{dg}[r][r]{$\dom g$}
\psfrag{rh}[r][r]{$\ran h$}
\psfrag{g-1tt}[l][l]{$g^{-1}(\tilde{t})$}
\psfrag{htt}[l][l]{$h(\tilde{t})$}
\psfrag{f}[b][b]{$f$}
\psfrag{g}[bl][bl]{$g$}
\psfrag{h}[br][br]{$h$}
\psfrag{a}[tl][tl]{$\alpha$}
\psfrag{b}[tl][tl]{$\beta,\beta_*$}
\psfrag{tt}[l][l]{$\tilde{t}$}
\psfrag{dh}[l][l]{$\dom h$}
\psfrag{rf}[r][r]{$\ran f$}
\psfrag{rg}[l][l]{$\ran g$}
\includegraphics[keepaspectratio, width=0.8\textwidth]{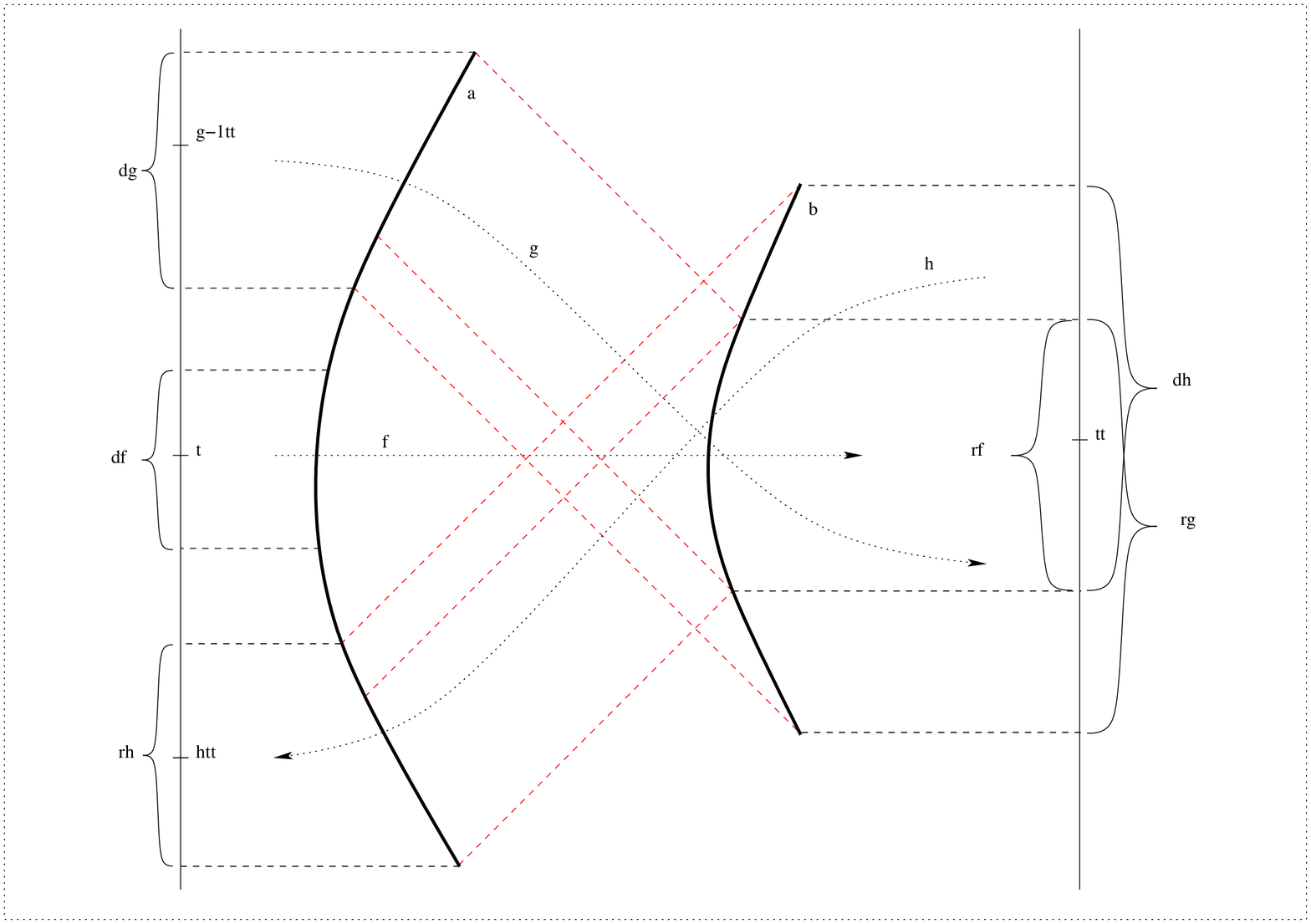}
\caption{\label{fig-radpar} Illustration for the proof of Prop.~\ref{prop-rad}}
\end{center}
\end{figure}

To show that $\beta_*$ is a function, let $\langle
t,\vpp\rangle,\langle t,\vqq\rangle \in\beta_*$.  Then there are
$r,s\in \Q$ such that $\vpp\in
\Lambda^-_{\alpha(t+r)}\cap\Lambda^+_{\alpha(t-r)}$ and $\vqq\in
\Lambda^-_{\alpha(t+s)}\cap\Lambda^+_{\alpha(t-s)}$.  We can assume
that $0\le r\le s$.  Since both $\alpha$ and $\beta$ are timelike
curves, $\vp=\vq$ iff $r=s$.  Therefore, if $\vp\neq\vq$,
$\alpha(t+r)\ll\alpha(t+s)$ and $\alpha(t-s)\ll\alpha(t-r)$.  Thus
$\vq\not\in I^-_{\vp}$ since $I^-_{\vpp}\subset I^-_{\alpha(t+r)}$ and
$I^-_{\alpha(t+r)}\cap\Lambda^-_{\alpha(t+s)}=\emptyset$; and
$\vq\not\in I^+_{\vp}$ since $I^+_{\vpp}\subset I^+_{\alpha(t-r)}$ and
$I^+_{\alpha(t-r)}\cap\Lambda^+_{\alpha(t-s)}=\emptyset$.  Thus
$\vpp=\vqq$ since $\vqq\in I_{\vpp}$ by Lem.~\ref{lem-chord}.

For all $t\in\dom\beta_*$, let $\tilde{t}\in\dom\beta$ such that
$\beta(\tilde{t}\,)=\beta_*(t)$, and let $f:t \mapsto \tilde{t}$ be
the (radar) reparametrization map, i.e., $f\leteq
\beta_*\circ\beta^{-1}$. Then $f$ is injective since if
$\Lambda^-_{\alpha(t_1+r)}\cap\Lambda^+_{\alpha(t_1-r)}\cap\Lambda^-_{\alpha(t_2+s)}\cap\Lambda^+_{\alpha(t_2-s)}\neq
\emptyset$, then $t_1=t_2$ and $r=s$, see \eqref{item-distjcones} in
Lem.~\ref{lem-cau}. Let $g$ and $h$ be the photon reparametrization
maps of $\beta$ according to $\alpha$ and of $\alpha$ according to
$\beta$, respectively. Then $g$, $g^{-1}$ and $h$, $h^{-1}$ are
monotonic and continuous bijections between connected sets, see
Prop.~\ref{prop-ph} and Lem.~\ref{lem-injcont}. It is clear by
the respective definitions, that
\begin{equation*}
f^{-1}(\tilde{t}\,)=t=\frac{g^{-1}(\tilde{t}\,)+h(\tilde{t}\,)}{2}
\end{equation*}
for all $\tilde{t}\in\ran f$, see Fig.~\ref{fig-radpar}. Thus $f^{-1}$
is continuous since both $h$ and $g^{-1}$ are such. It is clear that
$\dom f^{-1}=\ran f=\dom h \cap \ran g$. Thus $\dom f^{-1}$ is
connected since both $\dom h$ and $\ran g$ are such.  Therefore,
$\dom\beta_*=\dom f=\ran f^{-1}$ is also connected and $f$ is
definable and continuous, see Lem.~\ref{lem-injcont}. Hence
$\beta_*=f\circ\beta$ is also continuous; and $\beta_*$ is injective
since both $\beta$ and $f$ are such. So Item \eqref{item-radcont} is
proved.

Now let us prove Item \eqref{item-raddiff}. If $r=0$, then $\beta_*$
is the restriction of $\alpha$ to $\dom\beta_*$ which is connected,
thus it is obviously differentiable. If $r\neq0$, then $\ran
\alpha\cap \ran\beta=\emptyset$. Thus by \eqref{item-phder} in
Prop.~\ref{prop-ph} and Lem.~\ref{lem-repar}, we have that $h$
and $g^{-1}$ are differentiable. Thus $f$ is also differentiable.

To prove Item \eqref{item-radder}, let $\ran \alpha\cup \ran\beta\subset \txPlane$.
By Item \eqref{item-raddiff} of this proposition, $\beta_*$ is differentiable.
It is not difficult to see that 
\begin{equation}\label{eq-radpar}
\begin{split}
&\beta_*(t)=\alpha(t-r)\phsum \alpha(t+r) \text{ iff } \big(\beta_*(t)-\alpha(t)\big)\upp\vex \text{ and}\\
&\beta_*(t)=\alpha(t+r)\phsum\alpha(t-r) \text{ iff } \big(\beta_*(t)-\alpha(t)\big)\upp -\vex
\end{split}
\end{equation}
if $t\in\dom\beta_*$ since $\beta$ is at constant radar distance $r$ from $\alpha$.
By Lem.~\ref{lem-twocurve}, we have that the direction of $\beta_*(t)-\alpha(t)$ cannot change.
Thus it is always the same equation in \eqref{eq-radpar} that holds for $\beta_*$.
Hence Item \eqref{item-radder} follows from Lem.~\ref{lem-phsum} by an easy calculation.
\end{proof}

If $\alpha:\Q\parrow\Q^d$ and $\vp\in\Q^d$, we abbreviate
$\alpha(t)\upp\vp$ for all $t\in\dom\alpha$ to $\alpha\upp\vp$.  We
use analogously the notation $\alpha\upp\beta$ if
$\alpha,\beta:\Q\parrow\Q^d$.  Let $\Df{\bar{\alpha}}\leteq \langle
\alpha_2,\alpha_1,\alpha_3,\ldots,\alpha_d\rangle$ for all $\alpha:
\Q\rightarrow\Q^d$, i.e., the first two coordinates are
interchanged.\index{$\bar{\alpha}$}

\begin{lem}
 \label{lem-accdir}
Assume \ax{CONT}.
Let $\alpha$ be a definable timelike curve.
\begin{enumerate}
\item \label{item-velocdir} Then $\alpha'\upp\vet$ or $\alpha'\upp-\vet$.
\item \label{item-condir} If $\ran\alpha\subset \txPlane$, then $\bar{\alpha}'\upp\vex$ iff $\alpha'\upp\vet$ and $\bar{\alpha}'\upp-\vex$ iff $\alpha'\upp-\vet$.
\item \label{item-accdir} If $\alpha$ is twice differentiable, $\ran\alpha\subset \txPlane$ and $\vo\not\in\ran\alpha''$,
then $\alpha''\upp\vex$ ($\alpha''\upp-\vex$) iff $\alpha'_2$ is increasing (decreasing).
\item \label{item-accsame} If $\alpha$ is twice differentiable, $\ran\alpha$ is in a vertical plane and $\vo\not\in\ran\alpha''$, then $\alpha''(t_1)\upp\alpha''(t_2)$ for all $t_1,t_2\in \dom\alpha$.
\item \label{item-conacc} If $\alpha$ is twice differentiable and $\ran\alpha\subset \txPlane$, then for all $t\in\dom\alpha$, there is a $\lambda_t\in\Q$ such that $\lambda_t\alpha'(t)=\alpha''(t)$.
Furthermore, if $\vo\not\in\ran\alpha''$, the sign of $\lambda_t$ is the same for all $t\in\dom\alpha$ and
\begin{equation}\label{eq-accdir}
\begin{split}
& \lambda_t>0 \quad\text{ iff }\enskip \mbox{$\phantom{-}\bar\alpha'\upp\alpha''$}\\
&\lambda_t<0 \quad\text{ iff }\enskip \mbox{$-\bar\alpha'\upp\alpha''$}
\end{split}
\end{equation}
\end{enumerate}
\end{lem}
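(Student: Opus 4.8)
The plan is to prove the five claims of Lemma~\ref{lem-accdir} in the stated order, bootstrapping each from the previous ones together with the \ax{CONT}-based analysis tools already developed in this chapter. First I would handle \eqref{item-velocdir}: since $\alpha$ is a timelike curve, $\alpha'(t)_\tau\neq 0$ for all $t\in\dom\alpha$, so $\alpha'_\tau:\Q\parrow\Q$ is an \L-definable function that is continuous (derivatives of definable timelike curves are definable, as observed after Prop.~\ref{propAff2}) and never zero on the connected set $\dom\alpha$. By \ax{CONT}-Bolzano's Theorem it cannot change sign, hence $\alpha'_\tau>0$ everywhere or $\alpha'_\tau<0$ everywhere; recalling the definition of $\upp$ for timelike vectors ($p_\tau q_\tau>0$), this is exactly $\alpha'\upp\vet$ or $\alpha'\upp-\vet$.

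Next, \eqref{item-condir} is immediate from the definitions: when $\ran\alpha\subset\txPlane$ we have $\bar\alpha=\langle\alpha_2,\alpha_1,0,\ldots,0\rangle$, so $\bar\alpha'=\langle\alpha'_2,\alpha'_1,0,\ldots,0\rangle$, and since $\alpha'$ is timelike the vector $\bar\alpha'$ is spacelike with first space-component $\alpha'_\tau$; the direction relation $\bar\alpha'\upp\vex$ then unwinds to $\alpha'_\tau>0$, matching $\alpha'\upp\vet$ by \eqref{item-velocdir}. For \eqref{item-accdir} I would argue as in the proof of Lem.~\ref{lem-vmon}(1): by Prop.~\ref{prop-vmorta}(1), $\alpha''(t)$ is Minkowski-orthogonal to the timelike vector $\alpha'(t)$, hence spacelike; with $\ran\alpha\subset\txPlane$ this forces $\alpha''(t)$ to lie in $\txPlane$ with $\alpha''_\tau(t)$ determined by $\alpha''_2(t)$, and $\alpha''\upp\vex$ unwinds to $\alpha''_2>0$. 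Since $\vo\notin\ran\alpha''$ and $\alpha''$ is spacelike in the $\txPlane$, $\alpha''_2$ is never zero; by \ax{CONT}-Darboux's Theorem (applied to the definable $\alpha'_2$ on the connected $\dom\alpha$) it keeps a constant sign, and then \ax{CONT}-Mean--Value Theorem gives that $\alpha'_2$ is increasing iff $\alpha''_2>0$ iff $\alpha''\upp\vex$. Claim \eqref{item-accsame} follows: for a vertical plane we may assume (after a space-isometry, which preserves $\upp$, $\mu$, differentiability and definability, by Thm.~\ref{thm-poi}-type reasoning) that the plane is the $\txPlane$, whence by \eqref{item-accdir} the direction of $\alpha''(t)$ is globally $+\vex$ or globally $-\vex$, so any two values $\alpha''(t_1),\alpha''(t_2)$ have the same direction.

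Finally, for \eqref{item-conacc}: with $\ran\alpha\subset\txPlane$ both $\alpha'(t)$ and $\alpha''(t)$ lie in the $2$-dimensional space $\txPlane$; Minkowski-orthogonality there is equivalent to $\alpha''(t)$ being a scalar multiple of $\bar\alpha'(t)$ --- but $\bar\alpha'(t)$ and $\alpha'(t)$ differ only by swapping the first two coordinates, so actually $\alpha''(t)$ is automatically proportional to $\alpha'(t)$ rotated; a direct computation (using $\alpha'$ timelike, i.e.\ $\alpha'^2_\tau-\alpha'^2_2=\mu(\alpha'(t))^2$, and $\alpha'_\tau\alpha''_\tau=\alpha'_2\alpha''_2$ from differentiating) shows $\alpha''(t)=\lambda_t\bar\alpha'(t)$; one then re-expresses this to get the $\lambda_t$ of the statement and, from $\vo\notin\ran\alpha''$ plus continuity and \ax{CONT}-Darboux, a constant sign, with the sign encoded by \eqref{eq-accdir} via the definition of $\upp$ applied to the spacelike vectors $\pm\bar\alpha'(t)$ and $\alpha''(t)$. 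I expect the main obstacle to be bookkeeping the sign conventions cleanly across \eqref{item-condir}--\eqref{item-conacc} --- getting the $\upp$/$-\upp$ dichotomy to line up correctly with ``increasing/decreasing'' and with the sign of $\lambda_t$ --- rather than any deep analytic difficulty, since all the heavy lifting (sign-constancy, Mean--Value, Darboux) is already packaged in the earlier \ax{CONT}-theorems; I would organize the proof so that each sign assertion is reduced once and for all to $\alpha'_\tau$'s sign and then re-used.
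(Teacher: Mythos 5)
Your overall route coincides with the paper's: item (1) from sign-constancy of $\alpha'_\tau$, item (2) by unwinding $\bar\alpha'=\langle\alpha'_2,\alpha'_\tau,0,\ldots,0\rangle$, item (3) via Minkowski-orthogonality of $\alpha''$ to $\alpha'$ (Prop.~\ref{prop-vmorta}) together with sign-constancy of $\alpha''_2$ and the Mean--Value Theorem, item (4) by reduction to the $\txPlane$ and item (3), and item (5) from the fact that the Minkowski-orthogonal complement of the timelike vector $\alpha'(t)$ inside the two-dimensional $\txPlane$ is the line spanned by $\bar\alpha'(t)$. You also correctly read the displayed relation $\lambda_t\alpha'(t)=\alpha''(t)$ as $\lambda_t\bar\alpha'(t)=\alpha''(t)$, which is what the paper's own proof establishes.

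The one genuine flaw is in item (1): you apply \ax{CONT}-Bolzano's Theorem to $\alpha'_\tau$, but that theorem requires the function to be \emph{continuous}, and a timelike curve is only assumed differentiable with timelike derivative, so nothing guarantees that $\alpha'_\tau$ is continuous; your parenthetical only justifies definability. The correct tool, and the one the paper uses here, is \ax{CONT}-Darboux's Theorem: $\alpha'_\tau$ is the derivative of the definable function $\alpha_\tau$, hence has the intermediate-value property on the connected set $\dom\alpha$ without any continuity assumption, and since $0\not\in\ran\alpha'_\tau$ (Lem.~\ref{lem-tlnice}) it has constant sign. The same remark applies to your appeal to ``continuity'' of $\alpha''$ in item (5); Darboux applied to $\alpha''_2$ (as you already do, correctly, in item (3)) is what gives the constant sign of $\lambda_t$. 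A final caveat that you share with the paper rather than introduce yourself: Prop.~\ref{prop-vmorta}(1) is stated for \emph{well-parametrized} timelike curves, whereas the lemma does not list that hypothesis, so the orthogonality $\alpha'(t)\mort\alpha''(t)$ underlying items (3)--(5) is being used tacitly in both your argument and the paper's.
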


\begin{proof}
Item \eqref{item-velocdir} is easy to prove since by
Lem.~\ref{lem-tlnice}, $0\not\in\ran\alpha_\tau$.  Thus by
\ax{CONT}-Darboux's Theorem, we have that $\alpha'_\tau>0$ or
$\alpha'_\tau<0$.

To prove Item \eqref{item-condir}, let us first note that
$\alpha=\langle\alpha_\tau,\alpha_2,0,\ldots,0\rangle$ since
$\ran\alpha\subset \txPlane$.  Therefore,
$\bar\alpha'=\langle\alpha'_2,\alpha'_\tau,0,\ldots,0\rangle$.  Hence
$\bar\alpha'\upp\vex$ iff $\alpha'_\tau>0$, and $\bar\alpha'\upp-\vex$
iff $\alpha'_\tau<0$.

To prove Item \eqref{item-accdir}, let $t\in\dom\alpha$. It is clear
that $\alpha''(t)$ is spacelike or $\vo$ since
$\alpha''(t)\mort\alpha'(t)$ by Prop.~\ref{prop-vmorta}. Thus
$\vo\not\in\ran\alpha''$ iff $\vo\not\in\ran\alpha''_\sigma$. We have
that $\alpha_\sigma=\langle\alpha_2,0,\ldots,0\rangle\in\Q^{d-1}$
since $\ran\alpha\subset \txPlane$. Thus
$\vo\not\in\ran\alpha''_\sigma$ iff $0\not\in\ran\alpha''_2$. Hence
$0\not\in\ran\alpha''_2$. Therefore, by \ax{CONT}-Darboux's Theorem,
we have that $\alpha''_2>0$ or $\alpha''_2<0$. Consequently,
$\alpha''\upp\vex$ iff $\alpha''_2>0$, and $\alpha''\upp-\vex$ iff
$\alpha''_2<0$. Thus, since $0\not\in\ran\alpha''_2$,
$\alpha''\upp\vex$ iff $\alpha'_2$ is increasing, and
$\alpha''\upp-\vex$ iff $\alpha'_2$ is decreasing.

Let us now prove Item \eqref{item-accsame}.
Without losing generality, we can assume that the vertical plane is the $\txPlane$.
By Lem.~\ref{lem-vmon}, we have that $\alpha'_2$ is increasing or decreasing since $\alpha''\circ\mu<0$ iff $\vo\not\in\ran\alpha''$.
Thus Item \eqref{item-accsame} follows by Item \eqref{item-accdir}.

Let us finally prove Item \eqref{item-conacc}.  Since both
$\bar\alpha'(t)$ and $\alpha''(t)$ are Minkowski orthogonal to
$\alpha'(t)$ and are in the $\txPlane$, there is a $\lambda_t\in\Q$ such
that $\bar\alpha'(t)=\lambda_t\alpha''(t)$.  By Items
\eqref{item-condir} and \eqref{item-accdir}, equation
\eqref{eq-accdir} is clear.
\end{proof}

Let $\alpha$ and $\beta$ be timelike curves.
We say that $\beta_*$ is the \df{Minkowski reparametrization of $\beta$ according to $\alpha$}\index{Minkowski reparametrization} if 
\begin{equation*}
\beta_*=\setopen \langle t,\vpp\rangle\in\dom\alpha \times\ran\beta \setmid\big(\vpp-\alpha(t)\big)\mort \alpha'(t)\setclose.
\end{equation*}
We say that \df{$\beta$ is at constant Minkowski distance $r\in\Q^+$ from $\alpha$} iff for all $\vp\in\ran\beta$, there is a $t\in \dom\alpha$ such that $-\mu\big(\vp,\alpha(t)\big)=r$.\index{constant Minkowski distance}

\begin{prop}
\label{prop-mink}
Assume \ax{CONT}.
Let $\alpha$ and $\beta$ be definable timelike curves such that $\alpha$ is well-parametrized, and let $\beta_*$ be the Minkowski reparametrization of $\beta$ according to $\alpha$ such that.
\begin{itemize}
\item[(i)]$ \alpha$ is twice differentiable, and $\vo\not\in\ran\alpha''$.
\item[(ii)] $\ran \alpha\cup \ran \beta$ is in a vertical plane.
\item[(iii)] If $\langle t,\vpp\rangle\in\beta_*$ and
  $\big(\alpha(t)-\vpp\big)\upp \alpha''(t)$, then
  $-\mu\big(\vp,\alpha(t)\big)<-1/\mu(\alpha''(\tau))$ for all
  $\tau\in\dom \alpha$.\item[(iv)] $\beta$ is at constant 
  Minkowski distance $r\in\Q^+$ from $\alpha$.
\end{itemize}
Then $\beta_*$ is a definable timelike curve.
Furthermore,
\begin{equation}\label{eq-muder}
\begin{split}
&\beta'_*(t)=\alpha'(t)+ r\cdot\bar{\alpha}''(t)\text{ iff } \alpha''(t)\upp \big(\beta_*(t)-\alpha(t)\big),\\
&\beta'_*(t)=\alpha'(t)- r\cdot\bar{\alpha}''(t)\text{ iff } \alpha''(t)\upp \big(\alpha(t)-\beta_*(t)\big)
\end{split}
\end{equation}
 if $\ran \alpha\cup \ran \beta\subseteq \txPlane$, $\alpha'\upp\vet$ and $\alpha''\upp\vex$.
\end{prop}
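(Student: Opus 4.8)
\textbf{Plan of proof for Proposition~\ref{prop-mink}.}
The strategy is to reduce everything to the photon/radar machinery already developed, together with the infinitesimal description of Minkowski simultaneity. First I would establish that $\beta_*$ is a function: given $\langle t,\vpp\rangle,\langle t,\vqq\rangle\in\beta_*$, both $\vpp-\alpha(t)$ and $\vqq-\alpha(t)$ are Minkowski orthogonal to the timelike vector $\alpha'(t)$, hence both are spacelike (or zero), and in the $\txPlane$ (by hypothesis (ii), after a Poincar\'e transformation we may take the vertical plane to be the $\txPlane$) the Minkowski-orthogonal complement of $\alpha'(t)$ is a line; on that line the condition of being at constant Minkowski distance $r$ from $\alpha(t)$ together with hypothesis (iii) — which rules out the ``wrong'' intersection point, the one on the far side where the Minkowski sphere of radius $r$ would fold back — pins down $\vpp$ uniquely. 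This is the place where (iii) is genuinely needed, and it is analogous to Rem.~\ref{rem-convMS} / the ``not too far behind'' condition in Thm.~\ref{thm-mu}.

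Next I would show $\beta_*$ is a definable, injective, continuous curve. Definability is immediate from the defining formula (using definability of $\alpha'$, Sec.~\ref{sec-diff}). For continuity and connectedness of the domain I would mimic the proof of Prop.~\ref{prop-ph} (the photon reparametrization case) and Prop.~\ref{prop-rad}: set $f\leteq\beta_*\circ\beta^{-1}$, prove $f$ preserves a suitable betweenness relation using Lem.~\ref{lem-cau} and Lem.~\ref{lem-chord} (so $f$ is monotonic), show $\ran f$ and $\dom f$ are connected by a Bolzano-type argument (\ax{CONT}-Bolzano's Theorem applied to the map $t\mapsto\mu\big(\beta(t),\alpha(s)\big)$ or to $f_\alpha,f_\beta$ as in Lem.~\ref{lem-twocurve}), and then conclude continuity of $f^{-1}$ hence of $f$ from Lem.~\ref{lem-moncont}, so that $\beta_*=f\circ\beta$ is continuous and injective.

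For differentiability of $\beta_*$ and the derivative formula \eqref{eq-muder}, the key geometric fact is that the Minkowski-simultaneity line of $k$ at $\alpha(t)$ — i.e.\ the line through $\alpha(t)$ Minkowski-orthogonal to $\alpha'(t)$ — varies smoothly in $t$ because $\alpha$ is twice differentiable and $\alpha'$ is well-parametrized (so $\alpha'(t)\mort\alpha''(t)$ by Prop.~\ref{prop-vmorta}, meaning $\bar\alpha''(t)$ points along that simultaneity line). I would argue that, since $\ran\alpha\cup\ran\beta$ lies in the $\txPlane$ and $\beta$ is at constant Minkowski distance $r$, we have $\beta_*(t)=\alpha(t)\pm r\cdot\frac{\bar\alpha'(t)}{\mu\,\text{-length normalization}}$; but $\alpha$ being well-parametrized gives $|\mu(\alpha'(t))|=1$, so $\beta_*(t)=\alpha(t)\pm r\,\bar\alpha'(t)$ up to sign, where the sign is the direction of $\beta_*(t)-\alpha(t)$ and, by Lem.~\ref{lem-twocurve} (which forbids $\ran\alpha$ and $\ran\beta$ from crossing), this sign is constant along the curve. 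Differentiating the closed form $\beta_*(t)=\alpha(t)\pm r\,\bar\alpha'(t)$ termwise and using $\overline{\alpha'}{}'=\bar\alpha''$ yields $\beta'_*(t)=\alpha'(t)\pm r\,\bar\alpha''(t)$, and the sign-bookkeeping in \eqref{eq-muder} follows from Item~\eqref{item-conacc} of Lem.~\ref{lem-accdir} (relating $\alpha''$, $\bar\alpha'$ and the direction $\vex$ under $\alpha'\upp\vet$, $\alpha''\upp\vex$); finally $\beta'_*(t)$ is timelike because it is $\alpha'(t)$, a unit timelike vector, perturbed by the spacelike vector $r\bar\alpha''(t)$ whose Minkowski length is controlled by hypothesis (iii) ($r=-\mu(\vpp,\alpha(t))<-1/\mu(\alpha''(\tau))$ forces $r\,|\mu(\bar\alpha''(t))|<1$).

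\textbf{Main obstacle.} The hard part will be the rigorous justification of the ``closed form'' $\beta_*(t)=\alpha(t)\pm r\,\bar\alpha'(t)$ and the attendant differentiability: one must verify that the unique intersection point selected in step~1 depends differentiably on $t$, which requires combining the implicit description $(\vpp-\alpha(t))\mort\alpha'(t)$ with $-\mu(\vpp,\alpha(t))=r$ and invoking the smoothness of $\alpha'$ plus condition (iii) to stay on the correct branch — much as the delicate $\varepsilon$-$\delta$ argument in the proof of Item~\eqref{item-phder} of Prop.~\ref{prop-ph}. I expect this to be the only place needing real work; the rest is bookkeeping with the lemmas of Sec.~\ref{lem-sec}.
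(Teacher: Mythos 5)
Your overall route is the paper's: reduce to the $\txPlane$ with $\alpha'\upp\vet$, $\alpha''\upp\vex$; use well-parametrization and the constant-distance hypothesis to get the closed form $\beta_*(t)=\alpha(t)\pm r\cdot\bar\alpha'(t)$; invoke Lem.~\ref{lem-twocurve} to fix the sign once and for all; differentiate termwise to get $\beta'_*=\alpha'\pm r\cdot\bar\alpha''$; and use hypothesis (iii) to force $\mu\big(\alpha'(t)-r\cdot\bar\alpha''(t)\big)>0$ so that the minus branch is still timelike, with Lem.~\ref{lem-accdir} doing the sign bookkeeping. Two remarks, one of which is a genuine (though repairable) error.

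First, your uniqueness argument for $\beta_*$ being a function is wrong as stated. On the spacelike line through $\alpha(t)$ Minkowski-orthogonal to $\alpha'(t)$ there are exactly two points at Minkowski distance $r$, namely $\alpha(t)\pm r\cdot\bar\alpha'(t)$, and hypothesis (iii) does \emph{not} exclude either of them: (iii) is only a quantitative bound $r<-1/\mu\big(\alpha''(\tau)\big)$ on the branch lying in the direction of $\alpha''$, not a branch selection. What actually rules out both points lying on $\ran\beta$ is Item~(1) of Lem.~\ref{lem-chord}: if $\vpp,\vqq\in\ran\beta$ were both $\mort\,\alpha'(t)$-related to $\alpha(t)$, then $\vpp-\vqq$ would be a chord of the timelike curve $\beta$, hence timelike, yet Minkowski orthogonal to the timelike vector $\alpha'(t)$ --- impossible. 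That is the paper's argument, and it needs neither (iii) nor the constant-distance assumption. Where (iii) genuinely enters is exactly the place you also (correctly) cite later: guaranteeing $r\cdot\mu\big(\bar\alpha''(t)\big)<1$ so that the perturbed vector stays timelike.

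Second, the ``main obstacle'' you anticipate largely dissolves. Once the closed form $\beta_*=\alpha\pm r\cdot\bar\alpha'$ is available on $\dom\beta_*$, differentiability is immediate from twice-differentiability of $\alpha$; no $\varepsilon$--$\delta$ argument of the kind in Item~(2) of Prop.~\ref{prop-ph} is needed. The residual work is showing $\dom\beta_*$ is connected, which the paper does by noting that $\alpha\mp r\cdot\bar\alpha'$ is a timelike curve with $\vet$-directed tangent, so for $t$ between $t_1,t_2\in\dom\beta_*$ the point $\alpha(t)\mp r\cdot\bar\alpha'(t)$ lies chronologically between $\beta_*(t_1)$ and $\beta_*(t_2)$, and \ax{CONT}-Bolzano's Theorem then produces $\bar t\in\dom\beta$ with $\big(\beta(\bar t\,)-\alpha(t)\big)\mort\alpha'(t)$; constant Minkowski distance identifies $\beta(\bar t\,)$ with the closed-form point. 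Your proposed detour through $f=\beta_*\circ\beta^{-1}$ and betweenness-monotonicity would also work but is heavier than necessary here.
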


\begin{proof}
It is clear that $\beta_*$ is definable.

To see that $\beta_*$ is a function, let $\langle
t,\vqq\rangle,\langle t,\vpp\rangle\in\beta_*$. Then
$(\vpp-\vqq)\mort\alpha'(t)$. If $\vpp\neq\vqq$, they are
timelike-separated by Lem.~\ref{lem-chord} since
$\vpp,\vqq\in\ran\beta$. Thus, since two timelike vectors cannot be
Minkowski orthogonal, we have that $\vpp=\vqq$. Hence $\beta_*$ is a
function.

Without losing generality, we can assume that the vertical plane that
contains $\ran \alpha\cup \ran \beta$ is the $\txPlane$,
$\alpha'\upp\vet$ and $\alpha''\upp\vex$, see Lems.\ \ref{lem-vmon}
and \ref{lem-accdir}.

Since $\beta$ is at constant Minkowski distance $r$ from $\alpha$, 
\begin{equation}\label{eq-mupar}
\begin{split}
&\beta_*(t)=\alpha(t)+r\cdot\bar\alpha'(t) \text{ iff } \bar\alpha'(t)\upp \big(\beta_*(t)-\alpha(t)\big),\\ 
&\beta_*(t)=\alpha(t)-r\cdot\bar\alpha'(t) \text{ iff } \bar\alpha'(t)\upp \big(\alpha(t)-\beta_*(t)\big)
\end{split}
\end{equation}
if $t\in\dom\beta_*$.

Since $\beta$ is at constant Minkowski distance $r\in\Q^+$ from $\alpha$, we have that $\ran\alpha\cap\ran\beta=\emptyset$.
Hence by Lem.~\ref{lem-twocurve}, we have that the direction of $\beta_*(t)-\alpha(t)$ cannot change.
Thus it is always the same equation in \eqref{eq-mupar} that holds for $\beta_*$.

Since $\alpha$ is twice differentiable, so is $\bar\alpha$.  Thus both
$\alpha+r\cdot\bar\alpha'$ and $\alpha-r\cdot\bar\alpha'$ are
definable differentiable curves.

Now we will show that $\alpha+r\cdot\bar\alpha'$ is a timelike curve
and if $\bar\alpha'(t)\upp \big(\alpha(t)-\beta_*(t)\big)$ for some
$t\in\dom\beta_*$, then $\alpha-r\cdot\bar{\alpha}'$ is also a
timelike curve. It is clear that $(\alpha\pm
r\cdot\bar\alpha')'=\alpha'\pm r\cdot\bar\alpha''$. Let
$t\in\dom\alpha$.
By (5) in Lem.~\ref{lem-accdir}, we have that
$\mu\big(\alpha'(t)+r\cdot\bar\alpha''(t)\big)=\mu\big(\alpha'(t)\big)+r\mu\big(\bar\alpha''(t)\big)$
and
$\mu\big(\alpha'(t)-r\cdot\bar\alpha''(t)\big)=\mu\big(\alpha'(t)\big)-r\mu\big(\bar\alpha''(t)\big)$.
By Thm.~\ref{thm-wp}, we have that $\mu\big(\alpha'(t)\big)=1$.  Thus
$\mu\big((\alpha+ r\cdot\bar\alpha')'(t)\big)>0$. Hence
$\alpha+r\cdot\bar\alpha'$ is a timelike curve. Since
$\alpha'\upp\vet$ and $\alpha''\upp\vex$, we have that
$\alpha''(t)\upp\bar{\alpha}'(t)$ by Lem.~\ref{lem-accdir}. Thus by
assumption (iii) and the fact that $\beta$ is at constant Minkowski
distance $r$ from $\alpha$, we have that $r<-1/\mu(\alpha''(\tau))$
for all $\tau\in\dom\alpha$ if
$\bar\alpha'(t)\upp\big(\alpha(t)-\beta_*(t)\big)$ for some
$t\in\dom\alpha$. Since $\ran \alpha\subseteq \txPlane$, we have that
$\mu(\alpha''(t))=-\mu(\bar\alpha''(t))$. Thus
$\mu(\bar{\alpha}''(t))<1/r$. Consequently, $\mu\big(\alpha'(t)-
r\cdot\bar\alpha''(t)\big)>0$. Hence $\alpha-r\cdot\bar\alpha'$ is
also a timelike curve.

Here we only prove that $\dom\beta_*$ is connected when $\bar\alpha'(t)\upp\big(\alpha(t)-\beta_*(t)\big)$ for some $t\in\dom\beta_*$ because the proof in the other case is almost the same.
Let $t_1,t_2\in \dom\beta_*$, and let $t\in(t_1,t_2)$.
Then $t_1,t_2\in\dom\alpha$, and thus $t\in\dom\alpha$ since $\dom\alpha$ is connected.
Since $\alpha-r\cdot\bar\alpha'$ is a timelike curve and $\alpha'-r\cdot\bar\alpha''\upp\vet$, we have that 
\begin{equation*}
\beta_*(t_1)=\alpha(t_1)-r\cdot\bar{\alpha}'(t_1)\ll\alpha(t)-r\cdot\bar\alpha'(t)
\ll\alpha(t_2)-r\cdot\bar{\alpha}'(t_2)=\beta_*(t_2).
\end{equation*}
Thus by \ax{CONT}-Bolzano's Theorem, there is a $\bar{t}\in\dom\beta$
such that $\big(\beta(\bar{t}\,)-\alpha(t)\big)\mort \alpha'(t)$.
Since $\beta$ is at constant Minkowski distance $r$ from $\alpha$, we
have that $\beta(\bar{t}\,)=\alpha(t)-r\cdot\bar\alpha'(t)$.  Hence
$t\in\dom\beta_*$, as it was required.

Since $\beta_*$ agrees with one of the two timelike curves
$\alpha+r\cdot\bar\alpha'$ and $\alpha-r\cdot\bar\alpha'$ on the
connected set $\dom\beta_*$, we have that $\beta_*$ is also a timelike
curve. Since $\alpha''\upp\vex$ and $\alpha\upp\vet$ we have that
$\alpha''\upp\bar{\alpha}'$. Therefore, by derivation of the equations
of \eqref{eq-mupar}, we have that the derivative of $\beta_*$ is what
was stated in \eqref{eq-muder}.
\end{proof}

\chapter{Why do we insist on using FOL for foundation?}
\label{chp-whyfol}

In this chapter we are going to give a detailed explanation why
FOL is the best logic to be used in foundational works, such as
this one.

\section{On the purposes of foundation}\label{sec-purpfound}

The main purpose of foundation is to get a deeper understanding of
fundamental concepts of a theory by stating axioms about them and
studying the relationship between the axioms and their consequences. There
are three main kinds of question to ask in the course of foundation:
\begin{itemize}
\item What are the consequences of the given axioms? 
\item What axioms are responsible for a certain theorem?
\item How do statements independent from the theory relate to one another?
\end{itemize}

The first one is a usual question of ordinary axiomatic
mathematics. The other two are new kinds of question in foundational
thinking and reverse mathematics.  The third one is meaningful only in
the case of incomplete theories; but there are a lot of incomplete
theories, e.g., any consistent axiom system containing arithmetic is
incomplete by G\"odel's first incompleteness theorem. Moreover, it is
usually reasonable to weaken a complete theory to make it possible to
ask this third type of question. For example, to facilitate studying
the role of the axiom of parallels, Euclid's complete axiom system of
geometry was weakened to an incomplete one. These three kinds of
question are studied in the hope that they will lead to a more refined
and deeper understanding of the fundamental concepts and assumptions
of the given theory. For more details on the role and importance of
foundational thinking, see, e.g., \cite[Introduction]{pezsgo} and
\cite{FriFOM1}.

\section{The success story of foundation in mathematics}
Experience shows that foundational thinking does lead to deeper
understanding. For example, in geometry it clarified the
status of the axiom of parallels and led to the discovery of
hyperbolic geometry. It has been shown by foundation that this axiom is
independent from the other basic assumptions of Euclidean geometry.

Foundation also eliminated Russel's antinomy from set theory and thus
from mathematics. It helped to gain a deeper understanding of many
statements of set theory by providing many other statements which are
weaker, equivalent or stronger according to some axiom system of set
theory, such as the Zermelo--Fraenkel set theory (ZF).\index{ZF} For
example, the axiom of choice is equivalent to Zermelo's well-ordering
theorem, Zorn's lemma and the existence of basis in every vector
space; the Baire Category Theorem, Stone's representation Theorem and
the Banach-Tarski paradox are some of its many consequences; and the
statement ``every subset of real numbers $\R$ is Lebesgue measurable''
is stronger than the negation of the axiom of choice. These results tell us more about
what it means to postulate the axiom of choice or its negation.  And
there are lots of other statements of set theory which are
investigated in this way.

Second-order arithmetic%
\footnote{Let us note that the name second-order arithmetic is
  misleading since it is a two-sorted FOL theory, i.e., it is
  a FOL theory which studies two kinds of individual: sets and
  numbers. }  is also a good example of the successfulness of
foundational thinking.  The main goal of second-order arithmetic is to
investigate how strong a set existence axiom is needed to prove
certain theorems of mathematics, such as the Bolzano-Weierstrass
Theorem or K\"onig's Lemma. For more details, see, e.g.,
Simpson~\cite{simpson}.
 
The examples above show that foundational thinking has been fruitful
in many fields of mathematics. Hence it seems to be a good idea to
apply it in a wider range; for example, in certain fields of physics,
such as relativity theory as suggested by Harvey Friedman~\cite{FriFOM2}, for instance.

There are lots of interesting assumptions, statements and questions of
relativity theory (both special and general), such as the
possibility/impossibility of faster than light motion, the twin
paradox, gravitational time dilation or the existence of closed
timelike curves (i.e., the possibility of time travel), to mention
only a few. There is much hope that foundation will help to clarify
and understand the statuses of these statements and questions as well
as the concepts related to them. This is one of the many reasons why
the relatively large group led by Andr\'eka Hajnal and Istv\'an
N\'emeti have devoted so much effort and enthusiasm to providing
foundation for spacetime theories. See \cite[pp.144 footnote
  137]{leszabo} for a physicist's reflection on some of that.

\section{Choosing a logic for foundation}

To provide logical foundation of any field of science, we have to
choose a formal logic. In the following sections we show that our
choosing FOL is the best possible choice in several
senses. To do so, we compare it to other logics from different
aspects.

Since we would like to treat the physical world as a possible model of
our theory in certain physical interpretation, we need a logic with
semantical consequence relation. Even after this restriction, there
are a great many different logics which we could use for axiomatic
foundation. The two most popular candidates are FOL and
(standard or full) second-order logic. The main difference between
them is that in second-order logic it is possible to quantify over
$n$-ary relations while in FOL we can quantify just over
individuals.

Because of its great expressive power, it would be convenient to use
second-order logic. However, as it will be showed in the forthcoming
paragraphs, its great expressive power is rather a disadvantage.

A main problem with second-order logic is that it contains tacit
assumptions about sets. That is so because unary relations and subsets
are essentially the same things. Hence if we use second-order logic,
we tacitly build set theory into our theory, and that generates
several problems. On the other hand, FOL does not
contain any assumptions about sets.\footnote{Of course, to prove
  nontrivial theorems about FOL, we need some basic set
  theory as a metatheory. However, that does not contradict the fact
  that FOL is free of any hidden assumptions about
  sets.}

V\"a\"an\"anen in \cite{vaananen} says: ``First-order set theory and
second-order logic are not radically different: the latter is a major
fragment of the former.'' In \cite[\S Set theory in Sheep's
 Clothing]{quine}, Quine also argues that second-order logic is none other
than a set theory in disguise. So if we do not want to be burdened (or loaded) by any
hidden assumptions about sets, we cannot use second-order logic for
foundation. The same argument applies to standard higher-order logic and type theory.

\section{Completeness}

Completeness is also a fundamental property of the logic we
choose for foundation since without it we cannot have control over the true
statements in the models of our axioms. FOL is complete
by G{\"o}del's completeness theorem, but second-order logic is
not, see, e.g., \cite[\S IX.1.]{ETF}. That means that the
semantical consequence relation of second-order logic is vague, which
by itself is enough to exclude second-order logic from the list of
possible logics for foundation.

Let us, however, dwell on the vagueness of the semantical consequence
relation of second-order logic. Not just there is no sound and
complete system of derivation rules for second-order logic, but the
set of G{\"o}del numbers of second-order logic validities is not
definable by any second-order logic formula (in the standard model of
arithmetic), see \cite[Thm.41C]{enderton}. Hence it is not just not
recursively enumerable, but it is not at any level of the arithmetical
hierarchy of FOL definable sets of numbers.

The complexity of second-order validities in a language containing one
binary relation symbol is also very high since it cannot be defined by any
higher-order logic formula in the language of Peano arithmetic, and a
 formula of complexity $\Pi_2$ is needed to define it in the language of set theory,
see V{\"a}{\"a}n{\"a}nen~\cite{vaananen}. These results show that the
validity relation of second-order logic is too blurred and vague for
our purposes.

In contrast, the set of FOL validities is recursively enumerable, see
\cite[\S X. Prop.1.6]{ETF}, and the set of consequences of any
recursive enumerable FOL theory is recursively enumerable by
\cite[Thm.35I]{enderton} and \cite[Thm.15.1]{monk}.

\section{Absoluteness}

Naturally, we would like to choose logic $\LL$ such that its semantical
consequence relation ($\models_\LL$) is as independent from set theory
as possible. This property of a logic is called {\it
  absoluteness}. Absoluteness of a logic roughly means that the truth
or falsity of $\mathfrak{M}\models_\LL\varphi$ does not depend on the
entire set theoretical universe, only on the sets required to exist by
some fixed list of axioms (e.g., ZF or a fragment of ZF) and on the
transitive closures of the sets $\mathfrak{M}$ and
$\varphi$ under discussion. For exact definition, see, e.g.,
\cite{barwise}, \cite{vaananen85}.

Let us now see some examples that show what can happen if we use a
non-absolute logic, such as second-order logic. We can formulate the
continuum hypothesis (CH) in second-order logic, see,
e.g., \cite{ETF}, \cite{mosterin},
\cite{Sain}.  Let $\varphi_{CH}$ be a second-order formula expressing
CH and let $\models_2$ be the semantical consequence relation of
second-order logic.

\begin{example}
The answer to the simple question whether $\R\models_2 \varphi_{CH}$
or $\R\not\models_2 \varphi_{CH}$ holds, depends on the model of set
theory we are working in. So it is {\it unknowable}. Moreover, this
dependence is so strong that the answer may alter by moving from the
set theoretical universe $V$ we work in to a transitive submodel of $V$.
\end{example}
\noindent
Let us now see a more general example.
\begin{example}
Let $\varphi_{\infty}$ be a formula of second-order logic expressing
that its model contains infinitely many elements; it is not difficult to write up
such a formula, see \cite[\S IX. 1.3]{ETF}.  Let $\psi$ be the
following formula of second-order logic: $\varphi_{\infty} \rightarrow
\varphi_{CH}$. Then for any infinite structure $\mathfrak{M}$, the
question whether $\mathfrak{M}\models_2 \psi$ or
$\mathfrak{M}\not\models_2 \psi$ holds is also {\it unknowable}.
\end{example}

On the basis of the many independent statements of set theory, we can generate
a great many {\it unknowable} sentences of second-order logic.

Let us note here that a statement being {\it unknowable} and being
independent from a theory does not mean the same. {\it Unknowability}
of a statement means that its validity depends on what class model of
the metatheory we are working in.  So {\it unknowability} is highly
undesirable, while independence is not problematic at all. Moreover, in
foundations, it is useful to study incomplete theories, see
Section~\ref{sec-purpfound}. Hence independence can be useful.

The examples above show that absoluteness is a desired property of a
logic used for foundation. It is important to note that the above 
situations cannot occur in FOL because it is
absolute in a strong sense, i.e., it is absolute in relation to the
Kripke--Platek set theory\footnote{KP consists only the axioms of
  extensionality, foundation, pair, union, and the separation and
  collection schemas restricted to formulas containing only bounded
  quantifiers, see, e.g., \cite{Barwise72} } (KP), which is
considerably weaker than ZF, see, e.g., \cite[Example 2.1.3
]{vaananen85}.

\section{Categoricity}
\label{sec-cat}

First of all let us note that, in logical foundation, the fewer axioms
a theory contains the better it is; so categoricity (and even
completeness) of an axiom system is not a desired property.  Moreover,
searching for strong (e.g., categorical or complete) axiom systems is a
fallacy in foundation of a physical theory since in physics we do not really
know whether an axiom is true or not, we just presume so.

Nevertheless, it is often considered as a great advantage of
second-order logic that it is possible to axiomatize something
categorically within it, i.e., it can capture structures up to
isomorphism. 
The following examples will show that categoricity is rather a
disadvantage as it can obscure things we are interested in.

\begin{example} 
Let us consider the ``nice'' (finite and categorical) second-order
axiomatization $\mathbf{RCF}_2$ of real numbers. Since
$\mathbf{RCF}_2$ is categorical, there is only one model of it ($\R$).
Now we can think that we have captured what we wanted and nothing
else.  However, if we take a closer look, we will see that we can ask
many {\it unanswerable} questions about $\R$. For example, since CH is
independent from set theory, we do not know whether there is or there is not an
uncountable subset $H$ of $\R$ such that there is no bijection between
$H$ and $\R$.  That is inconvenient because there is only one model of
$\mathbf{RCF}_2$. So either $\R\models_2 \varphi_{CH}$ or $\R\models_2
\lnot\varphi_{CH}$ must be valid but we cannot know which one. At
first glance it is not clear at all how a concrete yes-or-no question
can exist without a definite answer? The problem results from the fact
that we have captured one $\R$ in each set theory model. However,
since there are several models of set theory, we have several $\R$'s,
too.
\end{example}

Let us now see another example which shows that we can lose important
information about the model we intend to capture if we use
second-order logic.

\begin{example}
In second-order logic, thanks to its expressive power, we can
formulate an axiom that states that if CH is true, there is an isomorphism
between its model and $\N$, and if CH is false, its model is
isomorphic to the ordered ring of integers ($\Z$):
\begin{equation*}
(\varphi_{CH}\rightarrow \mathfrak{M}\cong
  \N)\land(\lnot\varphi_{CH}\rightarrow \mathfrak{M}\cong \Z).
\end{equation*}
So the axiom system containing the above formula only is categorical, yet
it is {\it unknowable} whether there is a least element of the
structure which is captured up to isomorphism.
\end{example} 

By the trick of the above example we can provide many categorical
axiom systems where some very basic properties (e.g.,
finiteness/infiniteness) about the unique model are {\it unknowable},
see Andr\'eka--Madar\'asz--N\'emeti~\cite[\S Why FOL?]{pezsgo}.

These examples show that categoricity is not at all as good a thing as
it seems to be, and sometimes a non-categorical FOL
axiomatization can provide more information about its several models
than a categorical second-order logic axiomatization can about its
unique model. Second-order logic only makes us believe that we have
one particular object in hand, but in fact, we have many.

\section{Henkin semantics of second-order logic}
\label{sec-henkin}
There are also other semantics of second-order logic in addition to
standard (or full) semantics where all the relations are
present. Henkin further generalized standard semantics and introduced
such ones in which just some of the relations (but at least all the
definable ones) are present such that these relations satisfy certain
requirements, see Henkin~\cite{henkin}.  V\"a\"an\"anen in
\cite{vaananen} argues that if second-order logic is used for
foundation, we cannot meaningfully ask which semantics is being
used. That is so because eventually everything boils down to writing
proofs in the computable inference system of the logic used. So the
standard version of second-order logic cannot be used for foundation,
only the generalized Henkin second-order logic is suitable for this
purpose. The Henkin second-order logic is actually a theory of
many-sorted FOL, so one can only pretend using standard second-order
logic for foundation. Furthermore, when mathematicians are apparently
using higher-order logic, they are actually using Henkin higher-order
one.
 
In the present approach Henkin higher-order logic is considered
absolutely acceptable. It has a completeness theorem and is absolute. So we do not
hesitate to use it when needed. Hence higher-order logic tools are
acceptable and available for us if they are treated with appropriate
caution.

\section{Our choice of logic in the light of Lindstr\" om's Theorem}

So far we have mainly argued for choosing a complete and absolute
logic for foundation, such as FOL. Thus second-order
logic and hence any higher-order logic is too strong for our
aims. However, there are many model-theoretic logics which are
stronger than FOL but weaker than second-order
logic, e.g., weak second-order logic, infinitary logics or
logics with generalized quantifiers, etc. Can any of these logics be
good for our purpose?  So our question can be restated as follows: Is there
any complete and absolute model-theoretic logic stronger than
FOL?\footnote{For the exact definition of  comparing
  the strength of two logics, see \cite{ebbinghaus}.}  To answer this
question, let us recall two properties of abstract model-theoretic
logics. A  logic is {\it compact} iff every set of sentences $\Sigma$
of the logic has a model if all finite subsets of $\Sigma$ have
models. A  logic has the {\it L\"owenheim--Skolem property} iff every
 sentences $\varphi$ of the logic has a countable model if $\varphi$ has
a model.  There is a well-known theorem of Lindstr\"om that
characterizes FOL as the strongest model-theoretic logic
with these properties, see, e.g., \cite{flum}.
\begin{thm}[Lindstr\"om]
FOL is the strongest compact model-theoretic logic with
L\"owenheim--Skolem property.
\end{thm}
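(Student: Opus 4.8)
The plan is to recall and reconstruct the classical proof of Lindstr\"om's Theorem, which establishes that no abstract logic properly extending FOL can retain both compactness and the (downward) L\"owenheim--Skolem property. I would work in the framework of abstract model-theoretic logics $\LL$, where $\LL$ is required to satisfy the usual regularity conditions (closure under the Boolean connectives and relativization/renaming of symbols, the isomorphism property, and the occurrence/reduct property), and where ``$\LL$ is at least as strong as FOL'' means every FOL-definable class of structures is $\LL$-definable. The goal is to show: if $\LL$ is compact and has the L\"owenheim--Skolem property, then every $\LL$-sentence is logically equivalent to an FOL-sentence.

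The key device is the notion of \emph{Ehrenfeucht--Fra\"iss\'e games} and the associated back-and-forth equivalence $\equiv_n$ (``equivalence up to quantifier rank $n$''). The crucial classical fact is that for a finite relational vocabulary, $\equiv_n$ has only finitely many classes, each of which is FOL-definable by a single ``Hintikka sentence'' of quantifier rank $n$. So it suffices to prove that every $\LL$-sentence $\varphi$ is preserved under $\equiv_n$ for some $n$ — i.e., if $\mathfrak{A}\models\varphi$ and $\mathfrak{A}\equiv_n\mathfrak{B}$ then $\mathfrak{B}\models\varphi$ — because then $\varphi$ is a (finite) disjunction of Hintikka sentences. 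The proof of this preservation property is the heart of the argument and proceeds by contradiction: assuming $\varphi$ separates, for every $n$, some pair $\mathfrak{A}_n\models\varphi$, $\mathfrak{B}_n\models\neg\varphi$ with $\mathfrak{A}_n\equiv_n\mathfrak{B}_n$, I would code up the $\equiv_n$-game play as a single structure and write down an $\LL$-theory (using compactness to combine all the $n$-approximations, with an auxiliary linear order to index the rounds) describing two substructures satisfying $\varphi$ and $\neg\varphi$ respectively, linked by a winning strategy for the duplicator in an infinite-length game. Compactness yields a model of this theory; then the L\"owenheim--Skolem property is applied to shrink it to a countable model in which, by a standard argument, the infinite back-and-forth system forces the two substructures to be \emph{isomorphic} (this is where countability is essential — a countable back-and-forth system that is ``rich enough'' induces an isomorphism). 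But then one substructure satisfies $\varphi$ and the other satisfies $\neg\varphi$ while being isomorphic, contradicting the isomorphism property of $\LL$.

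The steps, in order, are: (1) fix the axioms/regularity conditions on $\LL$ and reduce to a finite relational vocabulary; (2) recall the finiteness of $\equiv_n$ and the existence of FOL Hintikka sentences characterizing each $\equiv_n$-class; (3) show that reducing the main theorem to the claim ``every $\LL$-sentence is $\equiv_n$-invariant for some $n$''; (4) assume the negation and extract the separating sequence $(\mathfrak{A}_n,\mathfrak{B}_n)$; (5) build the combined $\LL$-theory encoding the game boards plus strategy plus an ordering parameter; (6) apply compactness to get a model, then L\"owenheim--Skolem to get a countable one; (7) extract the genuine isomorphism from the infinite back-and-forth and derive the contradiction.

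The main obstacle, and the place where real care is needed, is step (5)–(6): the faithful encoding of ``$\mathfrak{A}_n\equiv_n\mathfrak{B}_n$ for all $n$ simultaneously'' as a single first-order (hence $\LL$-expressible) theory whose models give \emph{infinite} back-and-forth systems after passing to a countable elementary-ish substructure. The subtlety is that compactness only gives us, a priori, arbitrarily long \emph{finite} games, and one must use a nonstandard ordering element (an element of the index order that sits above all the standard naturals) to force the existence of an infinite descending play; then countability is exactly what converts that infinite play into a full isomorphism. Handling the bookkeeping so that $\LL$'s limited closure properties (we only have what the abstract axioms grant, not full second-order power) still suffice to write the theory — in particular using relativization to carve out the two substructures inside one ambient model — is the technically delicate part. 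Everything else (the $\equiv_n$ machinery, Hintikka sentences, the final isomorphism argument) is routine once this encoding is set up correctly.
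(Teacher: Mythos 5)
The thesis does not prove this theorem at all: it is quoted as a classical result with a pointer to the literature (the Ebbinghaus--Flum treatment in \emph{Model-Theoretic Logics}), so there is no in-paper proof to compare against. Your reconstruction is precisely the standard Lindstr\"om argument found in that reference --- reduction to a finite relational vocabulary, finiteness of the $\equiv_n$-classes and their Hintikka sentences, reduction of the theorem to $\equiv_n$-invariance of each $\LL$-sentence, and the compactness-plus-L\"owenheim--Skolem contradiction via a coded back-and-forth system with a nonstandard index --- and the sketch is essentially correct, including your identification of the encoding step as the delicate one. One technical point to watch: the L\"owenheim--Skolem property is stated here (and in the abstract-logic framework) for \emph{single} sentences, whereas your step (6) applies it to the model of a whole theory produced by compactness; in the standard proof one therefore compresses everything into a single $\LL$-sentence first (the finitely many $\LL$-sentences involved are conjoined, and the infinitely many first-order approximations are replaced by the single first-order statement that the coded linear order admits arbitrarily long descending partial-isomorphism chains, the nonstandard element then being obtained inside the countable model rather than before shrinking). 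With that reordering the argument goes through as you describe.
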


However, Lindstr\"om's theorem does not answer our question by itself
since we have required two different properties (absoluteness and
completeness) of a logic to judge it {\it suitable
  for foundation}. A theorem of V\"a\"an\"anen's comes to our aid, see
Cor.2.2.3 in \cite{vaananen85}.

\begin{thm}[V\"a\"an\"anen]
Every absolute model-theoretic logic has the L\"owenheim--Skolem property.
\end{thm}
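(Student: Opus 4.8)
The plan is to run the downward L\"owenheim--Skolem theorem of set theory through an elementary submodel together with its Mostowski collapse, and then use the absoluteness of $\LL$ to transport the resulting statement back to the real universe. Recall that ``$\LL$ is absolute'' means (see \cite{vaananen85}, \cite{barwise}) that, for a suitable base set theory $T_0$ --- KP, or a weak fragment of $\mathrm{ZFC}$, suffices --- the satisfaction relation $\mathfrak{M}\models_\LL\varphi$ is $\Delta_1$ over $T_0$ in the parameters $\mathfrak{M}$ and $\varphi$; in particular it is absolute, both upward and downward, between transitive models of $T_0$, and we may choose $T_0$ so that $H(\theta)\models T_0$ for every regular uncountable $\theta$. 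Let $\varphi$ be an $\LL$-sentence that has a model; as is standard when establishing the L\"owenheim--Skolem property we may assume the vocabulary of $\varphi$ is countable, so that $\varphi$ is a hereditarily countable object. First I would fix a regular uncountable $\theta$ large enough that $\varphi$ and some model $\mathfrak{M}$ of $\varphi$ both lie in $H(\theta)$.

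Next, by $\Delta_1^{T_0}$-absoluteness between $H(\theta)$ and $V$, the true statement ``$\mathfrak{M}\models_\LL\varphi$'' holds in $H(\theta)$, so $H(\theta)\models$ ``$\exists\,\mathfrak{A}\ (\mathfrak{A}\models_\LL\varphi)$''. I would then take a countable elementary submodel $N\prec H(\theta)$ with $\varphi\in N$ and $\mathrm{trcl}(\{\varphi\})\subseteq N$, the latter being arrangeable precisely because $\varphi$ is hereditarily countable. Let $\pi\colon N\to W$ be the transitive collapse. Since $\pi$ is an isomorphism, $W$ is a countable transitive model, $\pi(\varphi)=\varphi$ (because $\mathrm{trcl}(\{\varphi\})\subseteq N$), and $W\models$ ``$\exists\,\mathfrak{A}\ (\mathfrak{A}\models_\LL\varphi)$''. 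Pick $\mathfrak{M}'\in W$ with $W\models$ ``$\mathfrak{M}'\models_\LL\varphi$''. Being an element of the countable transitive set $W$, the structure $\mathfrak{M}'$ is hereditarily countable, hence has a countable universe.

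Finally, $W$ is a transitive model of $T_0$, so by $\Delta_1^{T_0}$-absoluteness the statement ``$\mathfrak{M}'\models_\LL\varphi$'' passes from $W$ to $V$; thus $\mathfrak{M}'$ is a genuine countable model of $\varphi$. Since $\varphi$ was an arbitrary satisfiable $\LL$-sentence, $\LL$ has the L\"owenheim--Skolem property, which is what we wanted.

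The part that needs real care --- rather than the routine elementary-submodel manipulation above --- is the absoluteness bookkeeping: pinning down the base theory $T_0$ so that it both carries the collapse argument and holds in cofinally many $H(\theta)$, and verifying that the official definition of ``model-theoretic logic'' (with its regularity requirements on how vocabularies, occurrence, and substitution behave) really does make $\mathfrak{M}\models_\LL\varphi$ expressible and verifiable by a $\Delta_1^{T_0}$ formula, so that both the descent (from $V$ down to $H(\theta)$, then to $N$, then to $W$) and the ascent (from $W$ back to $V$) are licensed by the assumed absoluteness. The only other fussy point is the reduction to a hereditarily countable sentence and vocabulary, which is exactly what makes the collapse fix $\varphi$; for logics admitting class-sized vocabularies one restricts attention to set-sized --- indeed hereditarily countable --- fragments, as is conventional for this property. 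I do not expect any genuinely new idea beyond this interplay of the set-theoretic L\"owenheim--Skolem theorem with the $\Delta_1$-definability of $\models_\LL$.
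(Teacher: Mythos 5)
The thesis offers no proof of this theorem at all --- it is quoted from V\"a\"an\"anen \cite{vaananen85} (Cor.~2.2.3) purely as a citation --- so your argument can only be measured against the literature, not against anything in the paper. The set-theoretic skeleton you use is the standard one and is deployed correctly: reflect ``$\exists\,\mathfrak{A}\ (\mathfrak{A}\models_\LL\varphi)$'' into $H(\theta)$ for a large regular $\theta$, take a countable $N\prec H(\theta)$, collapse to a countable transitive $W\models T_0$, note that any witness $\mathfrak{M}'\in W$ has countable universe because $W$ is transitive and countable, and push ``$\mathfrak{M}'\models_\LL\varphi$'' back to $V$ via the $\Sigma_1$ half of the $\Delta_1^{T_0}$ definition of $\models_\LL$. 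All of those transfers are licensed exactly as you say.

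The genuine gap is the sentence ``we may assume the vocabulary of $\varphi$ is countable, so that $\varphi$ is a hereditarily countable object.'' Neither reduction is free, and the second one carries the entire weight of the conclusion: hereditary countability of $\varphi$ is precisely what lets you choose $N$ with $\mathrm{trcl}(\{\varphi\})\subseteq N$, hence what makes the Mostowski collapse fix $\varphi$; without it $\pi(\varphi)\neq\varphi$ and your $\mathfrak{M}'$ is a countable model of $\pi(\varphi)$, not of $\varphi$. For an abstract logic this hypothesis can genuinely fail even over a finite vocabulary: $L_{\infty\omega}$ has $\Delta_1$ syntax and $\Delta_1$ satisfaction (it is the paradigmatic absolute logic, indeed the maximal one by Barwise's theorem, which is Thm.~2.2.2 right next to the cited corollary), and over the single binary symbol $<$ it contains, for each ordinal $\alpha$, a sentence $\sigma_\alpha$ whose models are exactly the well-orders of type $\alpha$; the satisfiable sentence $\sigma_{\omega_1}$ has no countable model. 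So the literal statement ``every satisfiable sentence has a countable model'' cannot be proved without a smallness hypothesis on sentences --- it is false as stated for at least one absolute logic. What your argument actually establishes, correctly, is the relativized downward L\"owenheim--Skolem theorem: every satisfiable $\varphi$ of an absolute logic has a model of cardinality at most $|\mathrm{trcl}(\varphi)|+\aleph_0$ (take $N$ of that size containing $\mathrm{trcl}(\{\varphi\})$), which specializes to a countable model exactly when $\varphi$ is hereditarily countable. That relativized form is the precise content of the cited result; you flagged the hereditary-countability step as merely ``fussy'' and ``conventional,'' but it is the one hypothesis that separates the true theorem from a false one, so it needs to be stated as a hypothesis (or built into the definition of the logics under comparison), not waved through.
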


\noindent
Since completeness implies compactness, by putting Lindstr\"om's and
V\"a\"an\"anen's theorems together, we get the following:

\begin{cor}
FOL is the strongest abstract model-theoretic logic which is complete and absolute.
\end{cor}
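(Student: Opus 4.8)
The statement to prove is: FOL is the strongest abstract model-theoretic logic which is complete and absolute. The plan is to derive this directly by combining the two theorems cited immediately above (Lindstr\"om's theorem and V\"a\"an\"anen's theorem), together with the elementary observation that completeness implies compactness. So this is a ``glue'' proof rather than one requiring substantive new machinery; the main work is to make sure the quantifiers in the phrase ``strongest $\ldots$ which is complete and absolute'' are unpacked correctly.

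First I would fix notation: let $\LL$ be an arbitrary abstract model-theoretic logic that is both complete and absolute; the goal is to show $\LL \le \mathrm{FOL}$ in the sense of logical strength (so that FOL is maximal, hence ``strongest,'' among all such logics, since FOL itself is trivially complete by G\"odel's completeness theorem and absolute as noted earlier in the excerpt). The key steps, in order, are: (1) Observe that completeness of $\LL$ yields a sound and complete recursive (or at least recursively enumerable) derivation system, and from soundness-plus-completeness one gets that $\LL$ is \emph{compact} --- this is the standard argument that if every finite subset of a set $\Sigma$ of $\LL$-sentences has a model, then $\Sigma$ is syntactically consistent (a proof of a contradiction would use only finitely many premises, all of which lie in some finite consistent subset), hence by completeness $\Sigma$ has a model. (2) Apply V\"a\"an\"anen's theorem to conclude that $\LL$, being absolute, has the L\"owenheim--Skolem property. (3) Now $\LL$ is a compact abstract model-theoretic logic with the L\"owenheim--Skolem property, so Lindstr\"om's theorem applies and gives $\LL \le \mathrm{FOL}$. (4) Finally, note FOL itself is complete and absolute (the former by G\"odel, the latter by the strong absoluteness relative to Kripke--Platek mentioned in the Absoluteness section), so FOL genuinely belongs to the class under discussion and is therefore its greatest element --- i.e., the strongest such logic.

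The step I expect to be the only real point requiring care is step (1), the implication ``complete $\Rightarrow$ compact.'' In an \emph{abstract} model-theoretic logic one must be slightly cautious about what ``complete'' means: the intended reading (consistent with the excerpt's usage, e.g.\ ``FOL has a complete inference system'') is that there is an effective proof calculus that is sound and complete for the semantic consequence relation $\models_\LL$. Given that, the compactness argument is the textbook one and goes through verbatim. If instead one reads ``complete'' more weakly (e.g.\ merely recursive enumerability of validities), the argument needs the additional hypothesis that the calculus is sound as well, which is implicit in calling it a ``complete inference system''; I would state this explicitly at the start of the proof. Everything else is a direct citation: no diagonalization, no model construction, just the chain $\text{complete}\Rightarrow\text{compact}$, $\text{absolute}\Rightarrow\text{L\"owenheim--Skolem}$, and then Lindstr\"om's maximality theorem. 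I would also remark, for completeness of the exposition, that this corollary sharpens Lindstr\"om's theorem in exactly the direction relevant to foundations: it replaces the somewhat technical pair (compactness, L\"owenheim--Skolem) by the two properties (completeness, absoluteness) that were argued in the preceding sections to be the genuinely desirable ones for a logic used in foundational work, thereby closing the chapter's argument that FOL is the optimal choice.
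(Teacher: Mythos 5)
Your proposal is correct and follows exactly the route the paper takes: the paper derives the corollary in one line by noting that completeness implies compactness and then combining Lindstr\"om's and V\"a\"an\"anen's theorems. Your additional care about what ``complete'' means for an abstract logic and the check that FOL itself lies in the class are reasonable elaborations of the same argument, not a different approach.
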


This corollary  implies that FOL is the strongest abstract model-theoretic logic suitable for foundation. 

\addcontentsline{toc}{chapter}{Index}
\input{phd.ind}

\addcontentsline{toc}{chapter}{References}
\bibliography{myphd}
\bibliographystyle{plain}

\end{document}